\newtheorem{theorem}{Theorem}[section]
\newtheorem*{conjecture*}{Conjecture}
\newtheorem*{theorem*}{Theorem}
\newtheorem{proposition}{Proposition}[subsection]
\newtheorem{corollary}[proposition]{Corollary}
\newtheorem*{corollary*}{Corollary}
\newtheorem{lemma}[proposition]{Lemma}
\newtheorem{remark}[proposition]{Remark}
\newtheorem{assumption}[proposition]{Assumption}
\newtheorem{bigremark}{Remark}[section]
\numberwithin{equation}{section}
\newcommand{\Dk}{\mathfrak{D}^{\mathbf{k}}}
\newcommand{\Dkone}{\mathfrak{D}^{\mathbf{k}_1}}
\newcommand{\Dktwo}{\mathfrak{D}^{\mathbf{k}_2}}
\newcommand{\Fstar}{\accentset{\scalebox{.6}{\mbox{\tiny $(0)$}}}{\underaccent{\scalebox{.8}{\mbox{\tiny $k$}}}{\mathcal{F}}}{}^*}
\newcommand{\Fstarll}{\accentset{\scalebox{.6}{\mbox{\tiny $(0)$}}}{\underaccent{\scalebox{.8}{\mbox{\tiny $\ll\mkern-6mu k$}}}{\mathcal{F}}}{}^*}
\newcommand{\Epstar}{\accentset{\scalebox{.6}{\mbox{\tiny $(p)$}}}{\underaccent{\scalebox{.8}{\mbox{\tiny $k$}}}{\mathcal{E}}}{}^*}
\newcommand{\Ezerostar}{\accentset{\scalebox{.6}{\mbox{\tiny $(0)$}}}{\underaccent{\scalebox{.8}{\mbox{\tiny $k$}}}{\mathcal{E}}}{}^*}
\newcommand{\Ezerostarll}{\accentset{\scalebox{.6}{\mbox{\tiny $(0)$}}}{\underaccent{\scalebox{.8}{\mbox{\tiny $\ll \mkern-6mu k$}}}{\mathcal{E}}}{}^*}
\newcommand{\intEdelta}{\int^*\accentset{\scalebox{.6}{\mbox{\tiny $(-1-\delta)$}}}{\underaccent{\scalebox{.8}{\mbox{\tiny $k$}}}{\mathcal{E}}}}
\newcommand{\intEdeltall}{\int^*\accentset{\scalebox{.6}{\mbox{\tiny $(-1-\delta)$}}}{\underaccent{\scalebox{.8}{\mbox{\tiny $\ll \mkern-6mu k$}}}{\mathcal{E}}}}
\newcommand{\intEpone}{\int^*\accentset{\scalebox{.6}{\mbox{\tiny $(p-1)$}}}{\underaccent{\scalebox{.8}{\mbox{\tiny $k$}}}{\mathcal{E}}}}
\newcommand{\Jpk}{\accentset{\scalebox{.6}{\mbox{\tiny $(p)$}}}{\underaccent{\scalebox{.8}{\mbox{\tiny $k$}}}{J}}}
\newcommand{\Jp}{\accentset{\scalebox{.6}{\mbox{\tiny $(p)$}}}{J}}
\newcommand{\Jzero}{\accentset{\scalebox{.6}{\mbox{\tiny $(0)$}}}{J}}
\newcommand{\Kzero}{\accentset{\scalebox{.6}{\mbox{\tiny $(0)$}}}{K}}
\newcommand{\Kpk}{\accentset{\scalebox{.6}{\mbox{\tiny $(p)$}}}{\underaccent{\scalebox{.8}{\mbox{\tiny $k$}}}{K}}}
\newcommand{\Kp}{\accentset{\scalebox{.6}{\mbox{\tiny $(p)$}}}{K}}
\newcommand{\Hap}{\accentset{\scalebox{.6}{\mbox{\tiny $(p)$}}}{H}}
\newcommand{\Hazero}{\accentset{\scalebox{.6}{\mbox{\tiny $(0)$}}}{H}}
\newcommand{\Hapk}{\accentset{\scalebox{.6}{\mbox{\tiny $(p)$}}}{\underaccent{\scalebox{.8}{\mbox{\tiny $k$}}}{H}}}
\newcommand{\Fzero}{\accentset{\scalebox{.6}{\mbox{\tiny $(0)$}}}{\mathcal{F}}}
\newcommand{\Fp}{\accentset{\scalebox{.6}{\mbox{\tiny $(p)$}}}{\mathcal{F}}}
\newcommand{\Fpprime}{\accentset{\scalebox{.6}{\mbox{\tiny $(p')$}}}{\mathcal{F}}}
\newcommand{\Fzerok}{\accentset{\scalebox{.6}{\mbox{\tiny $(0)$}}}{\underaccent{\scalebox{.8}{\mbox{\tiny $k$}}}{\mathcal{F}}}}
\newcommand{\Fpk}{\accentset{\scalebox{.6}{\mbox{\tiny $(p)$}}}{\underaccent{\scalebox{.8}{\mbox{\tiny $k$}}}{\mathcal{F}}}}
\newcommand{\Fpprimek}{\accentset{\scalebox{.6}{\mbox{\tiny $(p')$}}}{\underaccent{\scalebox{.8}{\mbox{\tiny $k$}}}{\mathcal{F}}}}
\newcommand{\Fpkminusone}{\accentset{\scalebox{.6}{\mbox{\tiny $(p)$}}}{\underaccent{\scalebox{.8}{\mbox{\tiny $k\mkern-6mu -\mkern-6mu 1$}}}{\mathcal{F}}}}
\newcommand{\Fonekminustwo}{\accentset{\scalebox{.6}{\mbox{\tiny $(1)$}}}{\underaccent{\scalebox{.8}{\mbox{\tiny $k\mkern-6mu -\mkern-6mu 2$}}}{\mathcal{F}}}}
\newcommand{\Fzerokminusthree}{\accentset{\scalebox{.6}{\mbox{\tiny $(0)$}}}{\underaccent{\scalebox{.8}{\mbox{\tiny $k\mkern-6mu -\mkern-6mu 3$}}}{\mathcal{F}}}}
\newcommand{\Fzerokminusfour}{\accentset{\scalebox{.6}{\mbox{\tiny $(0)$}}}{\underaccent{\scalebox{.8}{\mbox{\tiny $k\mkern-6mu -\mkern-6mu 4$}}}{\mathcal{F}}}}
\newcommand{\Fzerokminussix}{\accentset{\scalebox{.6}{\mbox{\tiny $(0)$}}}{\underaccent{\scalebox{.8}{\mbox{\tiny $k\mkern-6mu -\mkern-6mu 6$}}}{\mathcal{F}}}}
\newcommand{\Ffancyzero}{\accentset{\scalebox{.6}{\mbox{\tiny $(0)$}}}{\mathfrak{F}}}
\newcommand{\Ffancyp}{\accentset{\scalebox{.6}{\mbox{\tiny $(p)$}}}{\mathfrak{F}}}
\newcommand{\Ffancyzerok}{\accentset{\scalebox{.6}{\mbox{\tiny $(0)$}}}{\underaccent{\scalebox{.8}{\mbox{\tiny $k$}}}{\mathfrak{F}}}}
\newcommand{\Ffancyonek}{\accentset{\scalebox{.6}{\mbox{\tiny $(1)$}}}{\underaccent{\scalebox{.8}{\mbox{\tiny $k$}}}{\mathfrak{F}}}}
\newcommand{\Ffancypk}{\accentset{\scalebox{.6}{\mbox{\tiny $(p)$}}}{\underaccent{\scalebox{.8}{\mbox{\tiny $k$}}}{\mathfrak{F}}}}
\newcommand{\Ep}{\accentset{\scalebox{.6}{\mbox{\tiny $(p)$}}}{\mathcal{E}}}
\newcommand{\Epprime}{\accentset{\scalebox{.6}{\mbox{\tiny $(p')$}}}{\mathcal{E}}}
\newcommand{\Epk}{\accentset{\scalebox{.6}{\mbox{\tiny $(p)$}}}{\underaccent{\scalebox{.8}{\mbox{\tiny $k$}}}{\mathcal{E}}}}
\newcommand{\Xpk}{\accentset{\scalebox{.6}{\mbox{\tiny $(p)$}}}{\underaccent{\scalebox{.8}{\mbox{\tiny $k$}}}{\mathcal{X}}}}
\newcommand{\Epprimekprime}{\accentset{\scalebox{.6}{\mbox{\tiny $(p')$}}}{\underaccent{\scalebox{.8}{\mbox{\tiny {$k'$}}}}{\mathcal{E}}}}
\newcommand{\Epprimek}{\accentset{\scalebox{.6}{\mbox{\tiny $(p')$}}}{\underaccent{\scalebox{.8}{\mbox{\tiny {$k$}}}}{\mathcal{E}}}}
\newcommand{\Xpprimekprime}{\accentset{\scalebox{.6}{\mbox{\tiny $(p')$}}}{\underaccent{\scalebox{.8}{\mbox{\tiny {$k'$}}}}{\mathcal{X}}}}
\newcommand{\Efancypk}{\accentset{\scalebox{.6}{\mbox{\tiny $(p)$}}}{\underaccent{\scalebox{.8}{\mbox{\tiny $k$}}}{\mathfrak{E}}}}
\newcommand{\Etwominusdelk}{\accentset{\scalebox{.6}{\mbox{\tiny $(2-\delta)$}}}{\underaccent{\scalebox{.8}{\mbox{\tiny $k$}}}{\mathcal{E}}}}
\newcommand{\Xtwominusdelk}{\accentset{\scalebox{.6}{\mbox{\tiny $(2-\delta)$}}}{\underaccent{\scalebox{.8}{\mbox{\tiny $k$}}}{\mathcal{X}}}}
\newcommand{\Xtwominusdelkminusone}{\accentset{\scalebox{.6}{\mbox{\tiny $(2-\delta)$}}}{\underaccent{\scalebox{.8}{\mbox{\tiny $k\mkern-6mu -\mkern-6mu 1$}}}{\mathcal{X}}}}
\newcommand{\Xtwominusdelkminustwo}{\accentset{\scalebox{.6}{\mbox{\tiny $(2-\delta)$}}}{\underaccent{\scalebox{.8}{\mbox{\tiny $k\mkern-6mu -\mkern-6mu 2$}}}{\mathcal{X}}}}
\newcommand{\Xtwominusdelkminusthree}{\accentset{\scalebox{.6}{\mbox{\tiny $(2-\delta)$}}}{\underaccent{\scalebox{.8}{\mbox{\tiny $k\mkern-6mu -\mkern-6mu 3$}}}{\mathcal{X}}}}
\newcommand{\Edeltaminusonek}{\accentset{\scalebox{.6}{\mbox{\tiny $(\delta-1)$}}}{\underaccent{\scalebox{.8}{\mbox{\tiny $k$}}}{\mathcal{E}}}}
\newcommand{\Efancytwominusdelk}{\accentset{\scalebox{.6}{\mbox{\tiny $(2-\delta)$}}}{\underaccent{\scalebox{.8}{\mbox{\tiny $k$}}}{\mathfrak{E}}}}
\newcommand{\Efancytwominusdelkminusone}{\accentset{\scalebox{.6}{\mbox{\tiny $(2-\delta)$}}}{\underaccent{\scalebox{.8}{\mbox{\tiny $k\mkern-6mu -\mkern-6mu 1$}}}{\mathfrak{E}}}}
\newcommand{\Efancytwominusdelkminustwo}{\accentset{\scalebox{.6}{\mbox{\tiny $(2-\delta)$}}}{\underaccent{\scalebox{.8}{\mbox{\tiny $k\mkern-6mu -\mkern-6mu 2$}}}{\mathfrak{E}}}}
\newcommand{\Ffancytwominusdelk}{\accentset{\scalebox{.6}{\mbox{\tiny $(2-\delta)$}}}{\underaccent{\scalebox{.8}{\mbox{\tiny $k$}}}{\mathfrak{F}}}}
\newcommand{\Efancyonekminustwo}{\accentset{\scalebox{.6}{\mbox{\tiny $(1)$}}}{\underaccent{\scalebox{.8}{\mbox{\tiny $k\mkern-6mu -\mkern-6mu 2$}}}{\mathfrak{E}}}}
\newcommand{\Efancyzerokminusthree}{\accentset{\scalebox{.6}{\mbox{\tiny $(0)$}}}{\underaccent{\scalebox{.8}{\mbox{\tiny $k\mkern-6mu -\mkern-6mu 3$}}}{\mathfrak{E}}}}
\newcommand{\Efancyonekminusfour}{\accentset{\scalebox{.6}{\mbox{\tiny $(1)$}}}{\underaccent{\scalebox{.8}{\mbox{\tiny $k\mkern-6mu -\mkern-6mu 4$}}}{\mathfrak{E}}}}
\newcommand{\Efancyzerokminussix}{\accentset{\scalebox{.6}{\mbox{\tiny $(0)$}}}{\underaccent{\scalebox{.8}{\mbox{\tiny $k\mkern-6mu -\mkern-6mu 6$}}}{\mathfrak{E}}}}
\newcommand{\Eerrorminusone}{\accentset{\scalebox{.8}{\mbox{\tiny $\xi$}}}{\underaccent{\scalebox{.8}{\mbox{\tiny $ -\mkern-3mu 1$}}}{\mathcal{E}}}}
\newcommand{\Eerrorkminusone}{\accentset{\scalebox{.8}{\mbox{\tiny $\xi$}}}{\underaccent{\scalebox{.8}{\mbox{\tiny $k\mkern-6mu -\mkern-6mu 1$}}}{\mathcal{E}}}}
\newcommand{\Eminusone}{\underaccent{\scalebox{.8}{\mbox{\tiny $ -\mkern-3mu 1$}}}{\mathcal{E}}}
\newcommand{\Epminusone}{\accentset{\scalebox{.6}{\mbox{\tiny $(p\mkern-6mu-\mkern-6mu1)$}}}{\mathcal{E}}}
\newcommand{\Epminusonek}{\accentset{\scalebox{.6}{\mbox{\tiny $(p\mkern-6mu-\mkern-6mu1)$}}}{\underaccent{\scalebox{.8}{\mbox{\tiny $k$}}}{\mathcal{E}}}}
\newcommand{\Xtwominusdeltalesslessk}{\accentset{\scalebox{.6}{\mbox{\tiny $(2\mkern-6mu-\mkern-6mu\delta)$}}}{\underaccent{\scalebox{.8}{\mbox{\tiny $\ll \mkern-6mu k$}}}{\mathcal{X}}}}
\newcommand{\Eoneminusdelk}{\accentset{\scalebox{.6}{\mbox{\tiny $(1-\delta)$}}}{\underaccent{\scalebox{.8}{\mbox{\tiny $k$}}}{\mathcal{E}}}}
\newcommand{\Eoneminusdelkminustwo}{\accentset{\scalebox{.6}{\mbox{\tiny $(1-\delta)$}}}{\underaccent{\scalebox{.8}{\mbox{\tiny $k\mkern-6mu -\mkern-6mu 2$}}}{\mathcal{E}}}}
\newcommand{\Eoneminusdelkminusfour}{\accentset{\scalebox{.6}{\mbox{\tiny $(1-\delta)$}}}{\underaccent{\scalebox{.8}{\mbox{\tiny $k\mkern-6mu -\mkern-6mu 4$}}}{\mathcal{E}}}}
\newcommand{\Efancyonek}{\accentset{\scalebox{.6}{\mbox{\tiny $(1)$}}}{\underaccent{\scalebox{.8}{\mbox{\tiny $k$}}}{\mathfrak{E}}}}
\newcommand{\Efancyonekminusone}{\accentset{\scalebox{.6}{\mbox{\tiny $(1)$}}}{\underaccent{\scalebox{.8}{\mbox{\tiny $k\mkern-6mu -\mkern-6mu 1$}}}{\mathfrak{E}}}}
\newcommand{\Efancyzerokminustwo}{\accentset{\scalebox{.6}{\mbox{\tiny $(0)$}}}{\underaccent{\scalebox{.8}{\mbox{\tiny $k\mkern-6mu -\mkern-6mu 2$}}}{\mathfrak{E}}}}
\newcommand{\Eonek}{\accentset{\scalebox{.6}{\mbox{\tiny $(1)$}}}{\underaccent{\scalebox{.8}{\mbox{\tiny $k$}}}{\mathcal{E}}}}
\newcommand{\Xonek}{\accentset{\scalebox{.6}{\mbox{\tiny $(1)$}}}{\underaccent{\scalebox{.8}{\mbox{\tiny $k$}}}{\mathcal{X}}}}
\newcommand{\Epminusonekminusone}{\accentset{\scalebox{.6}{\mbox{\tiny $(p\mkern-6mu-\mkern-6mu1)$}}}{\underaccent{\scalebox{.8}{\mbox{\tiny $k\mkern-6mu -\mkern-6mu 1$}}}{\mathcal{E}}}}
\newcommand{\Xpminusonekminustwo}{\accentset{\scalebox{.6}{\mbox{\tiny $(p\mkern-6mu-\mkern-6mu1)$}}}{\underaccent{\scalebox{.8}{\mbox{\tiny $k\mkern-6mu -\mkern-6mu 2$}}}{\mathcal{X}}}}
\newcommand{\Eoneminusdelkminusone}{\accentset{\scalebox{.6}{\mbox{\tiny $(1-\delta)$}}}{\underaccent{\scalebox{.8}{\mbox{\tiny $k\mkern-6mu -\mkern-6mu 1$}}}{\mathcal{E}}}}
\newcommand{\Eonekminusone}{\accentset{\scalebox{.6}{\mbox{\tiny $(1)$}}}{\underaccent{\scalebox{.8}{\mbox{\tiny $k\mkern-6mu -\mkern-6mu 1$}}}{\mathcal{E}}}}
\newcommand{\Edeltaminusonekminusone}{\accentset{\scalebox{.6}{\mbox{\tiny $(\delta-1)$}}}{\underaccent{\scalebox{.8}{\mbox{\tiny $k\mkern-6mu -\mkern-6mu 1$}}}{\mathcal{E}}}}
\newcommand{\Eonekminusfour}{\accentset{\scalebox{.6}{\mbox{\tiny $(1)$}}}{\underaccent{\scalebox{.8}{\mbox{\tiny $k\mkern-6mu -\mkern-6mu 4$}}}{\mathcal{E}}}}
\newcommand{\Xonekminusone}{\accentset{\scalebox{.6}{\mbox{\tiny $(1)$}}}{\underaccent{\scalebox{.8}{\mbox{\tiny $k\mkern-6mu -\mkern-6mu 1$}}}{\mathcal{X}}}}
\newcommand{\Xonekminusfour}{\accentset{\scalebox{.6}{\mbox{\tiny $(1)$}}}{\underaccent{\scalebox{.8}{\mbox{\tiny $k\mkern-6mu -\mkern-6mu 4$}}}{\mathcal{X}}}}
\newcommand{\Xonekminusfive}{\accentset{\scalebox{.6}{\mbox{\tiny $(1)$}}}{\underaccent{\scalebox{.8}{\mbox{\tiny $k\mkern-6mu -\mkern-6mu 5$}}}{\mathcal{X}}}}
\newcommand{\Epminusonekminustwo}{\accentset{\scalebox{.6}{\mbox{\tiny $(p\mkern-6mu-\mkern-6mu1)$}}}{\underaccent{\scalebox{.8}{\mbox{\tiny $k\mkern-6mu -\mkern-6mu 2$}}}{\mathcal{E}}}}
\newcommand{\Eonekminustwo}{\accentset{\scalebox{.6}{\mbox{\tiny $(1)$}}}{\underaccent{\scalebox{.8}{\mbox{\tiny $k\mkern-6mu -\mkern-6mu 2$}}}{\mathcal{E}}}}
\newcommand{\Xonekminustwo}{\accentset{\scalebox{.6}{\mbox{\tiny $(1)$}}}{\underaccent{\scalebox{.8}{\mbox{\tiny $k\mkern-6mu -\mkern-6mu 2$}}}{\mathcal{X}}}}
\newcommand{\Epminusonekminusthree}{\accentset{\scalebox{.6}{\mbox{\tiny $(p\mkern-6mu-\mkern-6mu1)$}}}{\underaccent{\scalebox{.8}{\mbox{\tiny $k\mkern-6mu -\mkern-6mu 3$}}}{\mathcal{E}}}}
\newcommand{\Epminusoneminusone}{\accentset{\scalebox{.6}{\mbox{\tiny $(p\mkern-6mu-\mkern-6mu1)$}}}{\underaccent{\scalebox{.8}{\mbox{\tiny $-1$}}}{\mathcal{E}}}}
\newcommand{\Epkminusone}{\accentset{\scalebox{.6}{\mbox{\tiny $(p)$}}}{\underaccent{\scalebox{.8}{\mbox{\tiny $k\mkern-6mu -\mkern-6mu 1$}}}{\mathcal{E}}}}
\newcommand{\Xpkminusone}{\accentset{\scalebox{.6}{\mbox{\tiny $(p)$}}}{\underaccent{\scalebox{.8}{\mbox{\tiny $k\mkern-6mu -\mkern-6mu 1$}}}{\mathcal{X}}}}
\newcommand{\Etwominusdelkminusone}{\accentset{\scalebox{.6}{\mbox{\tiny $(2-\delta)$}}}{\underaccent{\scalebox{.8}{\mbox{\tiny $k\mkern-6mu -\mkern-6mu 1$}}}{\mathcal{E}}}}
\newcommand{\Etwominusdelkminustwo}{\accentset{\scalebox{.6}{\mbox{\tiny $(2-\delta)$}}}{\underaccent{\scalebox{.8}{\mbox{\tiny $k\mkern-6mu -\mkern-6mu 2$}}}{\mathcal{E}}}}
\newcommand{\Etwominusdelkminusfour}{\accentset{\scalebox{.6}{\mbox{\tiny $(2-\delta)$}}}{\underaccent{\scalebox{.8}{\mbox{\tiny $k\mkern-6mu -\mkern-6mu 4$}}}{\mathcal{E}}}}
\newcommand{\Ezero}{\accentset{\scalebox{.6}{\mbox{\tiny $(0)$}}}{\mathcal{E}}}
\newcommand{\Ezerok}{\accentset{\scalebox{.6}{\mbox{\tiny $(0)$}}}{\underaccent{\scalebox{.8}{\mbox{\tiny $k$}}}{\mathcal{E}}}}
\newcommand{\Xzerok}{\accentset{\scalebox{.6}{\mbox{\tiny $(0)$}}}{\underaccent{\scalebox{.8}{\mbox{\tiny $k$}}}{\mathcal{X}}}}
\newcommand{\Xzeroplusk}{\accentset{\scalebox{.6}{\mbox{\tiny $(0+)$}}}{\underaccent{\scalebox{.8}{\mbox{\tiny $k$}}}{\mathcal{X}}}}
\newcommand{\Ezerokminusone}{\accentset{\scalebox{.6}{\mbox{\tiny $(0)$}}}
{\underaccent{\scalebox{.8}{\mbox{\tiny $k\mkern-6mu -\mkern-6mu 1$}}}{\mathcal{E}}}}
\newcommand{\Ezerokminustwo}{\accentset{\scalebox{.6}{\mbox{\tiny $(0)$}}}
{\underaccent{\scalebox{.8}{\mbox{\tiny $k\mkern-6mu -\mkern-6mu 2$}}}{\mathcal{E}}}}
\newcommand{\Ezerokminusthree}{\accentset{\scalebox{.6}{\mbox{\tiny $(0)$}}}
{\underaccent{\scalebox{.8}{\mbox{\tiny $k\mkern-6mu -\mkern-6mu 3$}}}{\mathcal{E}}}}
\newcommand{\Ezerokminusfour}{\accentset{\scalebox{.6}{\mbox{\tiny $(0)$}}}
{\underaccent{\scalebox{.8}{\mbox{\tiny $k\mkern-6mu -\mkern-6mu 4$}}}{\mathcal{E}}}}
\newcommand{\Ezerokminusfive}{\accentset{\scalebox{.6}{\mbox{\tiny $(0)$}}}
{\underaccent{\scalebox{.8}{\mbox{\tiny $k\mkern-6mu -\mkern-6mu 5$}}}{\mathcal{E}}}}
\newcommand{\Ezerokminussix}{\accentset{\scalebox{.6}{\mbox{\tiny $(0)$}}}
{\underaccent{\scalebox{.8}{\mbox{\tiny $k\mkern-6mu -\mkern-6mu 6$}}}{\mathcal{E}}}}
\newcommand{\Efancyzerokminusone}{\accentset{\scalebox{.6}{\mbox{\tiny $(0)$}}}
{\underaccent{\scalebox{.8}{\mbox{\tiny $k\mkern-6mu -\mkern-6mu 1$}}}{\mathfrak{E}}}}
\newcommand{\Xzerokminusone}{\accentset{\scalebox{.6}{\mbox{\tiny $(0)$}}}
{\underaccent{\scalebox{.8}{\mbox{\tiny $k\mkern-6mu -\mkern-6mu 1$}}}{\mathcal{X}}}}
\newcommand{\Xzeropluskminusone}{\accentset{\scalebox{.6}{\mbox{\tiny $(0+)$}}}
{\underaccent{\scalebox{.8}{\mbox{\tiny $k\mkern-6mu -\mkern-6mu 1$}}}{\mathcal{X}}}}
\newcommand{\Xzerokminustwo}{\accentset{\scalebox{.6}{\mbox{\tiny $(0)$}}}
{\underaccent{\scalebox{.8}{\mbox{\tiny $k\mkern-6mu -\mkern-6mu 2$}}}{\mathcal{X}}}}
\newcommand{\Xzerokminusthree}{\accentset{\scalebox{.6}{\mbox{\tiny $(0)$}}}
{\underaccent{\scalebox{.8}{\mbox{\tiny $k\mkern-6mu -\mkern-6mu 3$}}}{\mathcal{X}}}}
\newcommand{\Xzerokminussix}{\accentset{\scalebox{.6}{\mbox{\tiny $(0)$}}}
{\underaccent{\scalebox{.8}{\mbox{\tiny $k\mkern-6mu -\mkern-6mu 6$}}}{\mathcal{X}}}}
\newcommand{\Xzerokminusseven}{\accentset{\scalebox{.6}{\mbox{\tiny $(0)$}}}
{\underaccent{\scalebox{.8}{\mbox{\tiny $k\mkern-6mu -\mkern-6mu 7$}}}{\mathcal{X}}}}
\newcommand{\Xonekminusthree}{\accentset{\scalebox{.6}{\mbox{\tiny $(1)$}}}
{\underaccent{\scalebox{.8}{\mbox{\tiny $k\mkern-6mu -\mkern-6mu 3$}}}{\mathcal{X}}}}
\newcommand{\Efancyzero}{\accentset{\scalebox{.6}{\mbox{\tiny $(0)$}}}
{\mathfrak{E}}}
\newcommand{\Efancyp}{\accentset{\scalebox{.6}{\mbox{\tiny $(p)$}}}
{\mathfrak{E}}}
\newcommand{\Efancyzerok}{\accentset{\scalebox{.6}{\mbox{\tiny $(0)$}}}{\underaccent{\scalebox{.8}{\mbox{\tiny $k$}}}
{\mathfrak{E}}}}
\newcommand{\Ezerolesslessk}{\accentset{\scalebox{.6}{\mbox{\tiny $(0)$}}}{\underaccent{\scalebox{.8}{\mbox{\tiny 
$\ll\mkern-6mu k$}}}{\mathcal{E}}}}
\newcommand{\Xzerolesslessk}{\accentset{\scalebox{.6}{\mbox{\tiny $(0)$}}}{\underaccent{\scalebox{.8}{\mbox{\tiny 
$\ll\mkern-6mu k$}}}{\mathcal{X}}}}
\newcommand{\Xzeropluslesslessk}{\accentset{\scalebox{.6}{\mbox{\tiny $(0+)$}}}{\underaccent{\scalebox{.8}{\mbox{\tiny 
$\ll\mkern-6mu k$}}}{\mathcal{X}}}}
\newcommand{\Xonelesslessk}{\accentset{\scalebox{.6}{\mbox{\tiny $(1)$}}}{\underaccent{\scalebox{.8}{\mbox{\tiny 
$\ll\mkern-6mu k$}}}{\mathcal{X}}}}
\newcommand{\Xplesslessk}{\accentset{\scalebox{.6}{\mbox{\tiny $(p)$}}}{\underaccent{\scalebox{.8}{\mbox{\tiny 
$\ll\mkern-6mu k$}}}{\mathcal{X}}}}
\newcommand{\Ezerominusoneminusdelta}{\accentset{\scalebox{.6}{\mbox{\tiny $(-1\mkern-6mu-\mkern-6mu\delta)$}}}{\mathcal{E}}}
\newcommand{\Ezerominusoneminusdeltak}{\accentset{\scalebox{.6}{\mbox{\tiny $(-1\mkern-6mu-\mkern-6mu\delta)$}}}{\underaccent{\scalebox{.8}{\mbox{\tiny $k$}}}{\mathcal{E}}}}
\newcommand{\Ezerominusthreeminusdeltakminusone}{\accentset{\scalebox{.6}{\mbox{\tiny $(-3\mkern-6mu-\mkern-6mu\delta)$}}}{\underaccent{\scalebox{.8}{\mbox{\tiny $k\mkern-6mu -\mkern-6mu 1$}}}{\mathcal{E}}}}
\newcommand{\Ezerominusoneminusdeltakminusone}{\accentset{\scalebox{.6}{\mbox{\tiny $(-1\mkern-6mu-\mkern-6mu\delta)$}}}{\underaccent{\scalebox{.8}{\mbox{\tiny $k\mkern-6mu -\mkern-6mu 1$}}}{\mathcal{E}}}}
\newcommand{\Ezerominusoneminusdeltakminustwo}{\accentset{\scalebox{.6}{\mbox{\tiny $(-1\mkern-6mu-\mkern-6mu\delta)$}}}{\underaccent{\scalebox{.8}{\mbox{\tiny $k\mkern-6mu -\mkern-6mu 2$}}}{\mathcal{E}}}}
\newcommand{\Ezerominusoneminusdeltakminusthree}{\accentset{\scalebox{.6}{\mbox{\tiny $(-1\mkern-6mu-\mkern-6mu\delta)$}}}{\underaccent{\scalebox{.8}{\mbox{\tiny $k\mkern-6mu -\mkern-6mu 3$}}}{\mathcal{E}}}}
\newcommand{\Ezerominusoneminusdeltakminusfour}{\accentset{\scalebox{.6}{\mbox{\tiny $(-1\mkern-6mu-\mkern-6mu\delta)$}}}{\underaccent{\scalebox{.8}{\mbox{\tiny $k\mkern-6mu -\mkern-6mu 4$}}}{\mathcal{E}}}}
\newcommand{\Ezerominusoneminusdeltakminussix}{\accentset{\scalebox{.6}{\mbox{\tiny $(-1\mkern-6mu-\mkern-6mu\delta)$}}}{\underaccent{\scalebox{.8}{\mbox{\tiny $k\mkern-6mu -\mkern-6mu 6$}}}{\mathcal{E}}}}
\newcommand{\Ezerominusoneminusdeltakminusseven}{\accentset{\scalebox{.6}{\mbox{\tiny $(-1\mkern-6mu-\mkern-6mu\delta)$}}}{\underaccent{\scalebox{.8}{\mbox{\tiny $k\mkern-6mu -\mkern-6mu 7$}}}{\mathcal{E}}}}
\newcommand{\Ezerominusoneminusdeltakminuseight}{\accentset{\scalebox{.6}{\mbox{\tiny $(-1\mkern-6mu-\mkern-6mu\delta)$}}}{\underaccent{\scalebox{.8}{\mbox{\tiny $k\mkern-6mu -\mkern-6mu 8$}}}{\mathcal{E}}}}
\newcommand{\Ezerominusoneminusdeltalesslessk}{\accentset{\scalebox{.6}{\mbox{\tiny $(-1\mkern-6mu-\mkern-6mu\delta)$}}}{\underaccent{\scalebox{.8}{\mbox{\tiny $\ll\mkern-6mu k$}}}{\mathcal{E}}}}
\newcommand{\gslash}{\slashed{g}}
\newcommand{\nablaslash}{\slashed{\nabla}}
\DeclareMathAlphabet\mathbfcal{OMS}{cmsy}{b}{n}
\makeatletter \@addtoreset{equation}{section}  \makeatother
\title{Quasilinear wave equations on  asymptotically flat spacetimes  \\ with applications to Kerr  black holes}
\author[$\ddag$ $\S$]{Mihalis Dafermos}
\author[$\dag$ $*$]{Gustav Holzegel}
\author[$\ddag$]{Igor Rodnianski}
\author[$\dag$]{Martin Taylor}
\affil[$\dag$]{\small Imperial College London,
Department of Mathematics,
South~Kensington~Campus,~London~SW7~2AZ,~United~Kingdom\vskip.2pc \ }
\affil[$\ddag$]{\small Princeton University, Department of Mathematics, Fine~Hall,~Washington~Road,~Princeton,~NJ~08544,~United~States~of~America\vskip.2pc \ }
\affil[$\S$]{\small University of Cambridge, Department of Pure Mathematics and Mathematical
Statistics, Wilberforce~Road,~Cambridge~CB3~0WA,~United~Kingdom\vskip.2pc \ }
\affil[$*$]{\small Westf\"alische Wilhelms-Universit\"at M\"unster,
Mathematisches~Institut, Einsteinstrasse~62~48149~M\"unster,~Bundesrepublik~Deutschland}
\date{26 September 2024} 
\begin{document}

\maketitle

\begin{abstract}
We prove global existence and decay for small-data solutions to
a class of quasilinear wave equations on a wide variety of asymptotically flat
spacetime backgrounds, allowing in particular for the presence of horizons, ergoregions and
 trapped null geodesics, and including as a special case the 
Schwarzschild and very slowly rotating $|a|\ll M$ Kerr family of black holes
 in general relativity.  There are two distinguishing aspects of our approach.
The first aspect is its dyadically localised nature: The nontrivial part of the analysis is  reduced entirely
  to time-translation invariant $r^p$-weighted estimates, in
  the spirit of~\cite{DafRodnew}, to be applied on dyadic time-slabs which for large $r$ are outgoing.
 Global existence and decay then both immediately follow by elementary iteration on consecutive such time-slabs,
  without  further global bootstrap.
The second, and more fundamental, aspect is our direct use of a ``black box'' linear inhomogeneous energy estimate on exactly stationary metrics,
together with a  novel  but elementary physical space top order identity that need not capture the structure of trapping and is robust
to perturbation.
In the specific example of Kerr black holes, the required linear inhomogeneous estimate can then be quoted directly from the literature~\cite{partiii},
while the additional top order physical space identity can be shown easily in many cases
 (we include in the Appendix a proof for the Kerr case $|a|\ll M$, which can in fact be understood in this context simply as a perturbation
 of Schwarzschild).
In particular, the approach  circumvents  the need either for  producing
a purely  physical space identity capturing  trapping  or for a careful analysis of the commutation
properties of frequency projections with the wave operator of time-dependent metrics.
\end{abstract}

  \tableofcontents

\section{Introduction}
We consider here  quasilinear equations of the form
\begin{equation}
\label{theequationzero}
\Box_{g(\psi,x)}\psi = N(\partial\psi, \psi, x)
\end{equation}
on a $4$-dimensional manifold $\mathcal{M}$, where $g(\psi,x)$ and $N(\partial\psi,\psi,x)$ are
appropriate nonlinear terms and 
where $g(0,x)=g_0(x)$ defines a stationary asymptotically flat Lorentzian metric on $\mathcal{M}$
foliated by a suitable family of hypersurfaces $\Sigma(\tau)$, for $\tau\ge0$, which for large $r$ are outgoing null.
In this paper, we prove the  following:
\begin{theorem*}
For equations~\eqref{theequationzero}, under appropriate assumptions on 
the background spacetime $(\mathcal{M}, g_0)$ and on the nonlinearities 
$g(\psi,x)$ and $N(\partial\psi,\psi,x)$ to be described below,
we have 
\begin{itemize}
\item
{\bf Global existence of small data solutions.} Solutions $\psi$ arising from smooth  initial data
on $\Sigma_0$, sufficiently small 
as measured in a suitable weighted Sobolev energy, exist globally on $\mathcal{M}$.
\item
{\bf Orbital stability.} Under the above assumptions,
the above weighted energy flux through $\Sigma(\tau)$
is uniformly bounded by a constant
times its initial value.
\item
{\bf Asymptotic stability.} Under the above assumptions, a suitable  lower order unweighted
energy flux through $\Sigma(\tau)$ decays inverse polynomially in $\tau$
(implying also pointwise inverse polynomial decay for~$\psi$).
\end{itemize}
\end{theorem*}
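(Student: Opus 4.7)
The plan is to establish all three conclusions through a single dyadic iteration, following the philosophy outlined in the abstract and avoiding any global bootstrap. Let $\tau_n = 2^n \tau_0$ and denote by $\mathcal{D}_n$ the dyadic time-slab bounded by $\Sigma(\tau_n)$ and $\Sigma(\tau_{n+1})$. On each such slab I view~\eqref{theequationzero} as a linear inhomogeneous equation on the fixed stationary background $(\mathcal{M},g_0)$,
\[
\Box_{g_0}\psi \;=\; \bigl(\Box_{g_0}-\Box_{g(\psi,x)}\bigr)\psi + N(\partial\psi,\psi,x) \;=:\; F[\psi],
\]
whose right-hand side is at least quadratic in $\psi$ and its derivatives. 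A standard local well-posedness result on each slab, together with the quantitative estimate derived below, is iterated forward in $n$; smallness of data propagates because the loss per slab is quadratic in the already-small solution norm. No continuity argument in $\tau$ is required: the quasilinear problem on $\mathcal{D}_n$ is treated essentially as a succession of linear problems on $(\mathcal{M},g_0)$.

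On the stationary background I would invoke the hypothesized ``black-box'' linear inhomogeneous energy estimate (in the Kerr case, the one quoted from~\cite{partiii}) to bound a degenerate-at-trapping bulk norm of $\psi$ on $\mathcal{D}_n$ by the energy flux through $\Sigma(\tau_n)$ plus a suitable norm of $F[\psi]$ on $\mathcal{D}_n$. Independently, I would use the $r^p$-weighted multiplier estimates in the spirit of \cite{DafRodnew}, which, because $\Sigma(\tau)$ is outgoing null in the asymptotic region, live entirely on each dyadic slab and involve no global-in-$\tau$ analysis. Interpolating the $r^p$-hierarchy across consecutive slabs produces the pigeonhole gain of one power of $\tau$ per $p$-step; iterating from the initial value of $p$ permitted by the data space then yields polynomial decay of the unweighted flux through $\Sigma(\tau_n)$, and hence the orbital and asymptotic stability statements. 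Pointwise decay follows by commutation with a finite number of Killing-type vector fields and Sobolev embedding on $\Sigma(\tau)$.

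To close the scheme at top order I would use the novel physical-space identity advertised in the abstract: a commuted energy identity which, unlike the lower-order integrated local energy decay, does not need to capture trapping, and therefore survives arbitrary small stationary or non-stationary perturbation of $g_0$ by the quasilinear deformation $g(\psi,x)-g_0$. Its role is precisely to dominate the principal part of $F[\psi]$ at top order, which cannot be handled by the black-box estimate alone since the latter is stated only on the unperturbed metric. Once this top-order estimate is in hand, lower orders follow by elementary $r^p$-commutation on each slab and control of $N$ by Sobolev embedding in the small-data regime, and in the Kerr $|a|\ll M$ regime the required top-order identity is expected to follow by a perturbative argument off Schwarzschild (the content of the Appendix).

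The main obstacle, and the conceptual heart of the approach, is the existence and robustness of this top-order identity. Traditional approaches on Kerr rely on frequency projections in $t$ and along the horizon Killing generator to disentangle trapping from superradiance, but such projections interact very poorly with the wave operator of the non-stationary metric $g(\psi,x)$ and typically force a global-in-$\tau$ bootstrap. The point of the present strategy is to relinquish sharp capture of trapping at the top order, accepting a one-derivative loss which is acceptable precisely because the nonlinearity is tame at top order, in exchange for a purely physical-space identity derived using only vector-field multipliers and commutators on the stationary $g_0$. Verifying that this identity, combined with the black-box bulk estimate and the $r^p$-hierarchy, closes uniformly across all dyadic slabs and without reopening a global bootstrap is the central technical step of the entire argument.
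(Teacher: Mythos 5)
Your proposal correctly identifies the broad architecture (dyadic iteration on slabs of length $L_i = 2^i$, $r^p$-weighted hierarchy plus pigeonhole for decay, black-box estimate combined with a physical-space identity), but there are two substantive errors in the mechanism that would prevent the scheme from closing as written.

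First, you have the derivative-loss bookkeeping backwards. You describe the physical-space identity as ``accepting a one-derivative loss which is acceptable precisely because the nonlinearity is tame at top order.'' In the paper it is exactly the other way around: the physical-space identity is the estimate that does \emph{not} lose derivatives, because the current $J^{V,w,q,\varpi}[g(\psi,x),\psi]$ is evaluated on the metric $g(\psi,x)$ of the actual solution and applied directly to $\Box_{g(\psi,x)}\psi = N$, so no second-derivative quasilinear commutators appear as inhomogeneities. That is the whole point of assuming it arises from a genuine divergence identity. It is the \emph{black-box} estimate, applied to~\eqref{inhomogformintro} on the fixed background $g_0$ with $F$ containing $(\Box_{g_0}-\Box_{g(\psi,x)})\psi$, that loses a derivative; consequently it is used at order $k-1$ while the physical-space identity is used at order $k$. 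What the physical-space identity sacrifices is not differentiability but \emph{bulk coercivity}: its bulk term degenerates on the set $\{\rho\ne 1\}$ (a set containing the trapped region) and produces a dangerous lower-order error term supported on $\{\xi\ne 0\}$. The combination closes because $\xi$ is constrained to vanish wherever the black-box degeneration function $\chi$ fails to equal $1$; hence the $\xi$-error from the top-order identity is dominated by the $\chi$-weighted bulk that the lower-order black-box estimate controls. This is the fundamental relation~\eqref{fundamentalrelationhere}, and your proposal nowhere articulates why the two estimates can be chained together rather than producing a circular loss.

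Second, you claim ``No continuity argument in $\tau$ is required: the quasilinear problem on $\mathcal{D}_n$ is treated essentially as a succession of linear problems on $(\mathcal{M},g_0)$.'' This is not how the paper proceeds and, more importantly, it cannot work: the physical-space identity must be applied on the perturbed metric $g(\psi,x)$, not on $g_0$, which is incompatible with treating the slab as a linear problem on $g_0$. The paper does rely on a bootstrap/continuity argument, but localized \emph{within} each dyadic slab: a set $\mathfrak{B}\subset(\tau_0,\tau_0+L]$ is defined by a time-translation-invariant bootstrap hypothesis (essentially $\mathcal{X}^{(0)}_{\ll k}\lesssim\varepsilon L^{-1}$), local well-posedness plus the hierarchy shows $\mathfrak{B}$ is nonempty, open, and closed, and only then does one recover estimates strong enough to iterate to the next slab. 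Eliminating this in-slab continuity argument would leave no way to justify applying Proposition~\ref{stabofenergyid} and its corollaries across the slab, since those require the pointwise smallness that the bootstrap propagates. The correct statement is that no \emph{global-in-$\tau$} bootstrap is needed, not that no bootstrap appears at all.
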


One possible motivation for studying~\eqref{theequationzero} 
is as an illustrative model problem for issues relating to the
 nonlinear stability  of the spacetimes $(\mathcal{M},g_0)$, when these themselves
are solutions to
the celebrated
 Einstein equations of general relativity. 
 An important example of  such spacetimes $(\mathcal{M},g_0)$ allowed by our theorem
 is provided by the Schwarzschild family of metrics for $M>0$,  and the more general
  Kerr family (parametrised by $a$ and $M$) in the very slowly rotating regime $|a|\ll M$.
  For a discussion of these spacetimes and the stability problem in general relativity,
see~\cite{dhrtplus}. 

 Among other features, the Schwarzschild and Kerr spacetimes
 exhibit \emph{trapped null geodesics}, along which energy
 can concentrate over large time-scales. Moreover, these are \emph{asymptotically flat}
 spacetimes, meaning that linear waves $\psi$ decay like $\sim r^{-1}$ along outgoing null cones.
This then constrains the decay of $\psi$ in the near region to  also be at  best  only inverse polynomial. (In 
the above black hole examples, this slow decay in the near region is
 already a linear effect due to scattering off of far away curvature
 associated to the nontrivial  spacetime mass at infinity; in Minkowski space, this slow decay in the near region
  arises due to purely nonlinear scattering effects, even for compactly supported initial data.)
 It is the combination of these two specific analytical difficulties---trapped null geodesics
 and the relatively slow decay in the near region necessitated by asymptotic flatness---of the black hole stability
 problem (and related problems) which we
 wish to capture with our assumptions in the present paper.
It is for this reason that we include both the nonlinear term
 implicit in the expression $\Box_{g(\psi,x)}\psi$ on the left hand side of~\eqref{theequationzero}  (the ``quasilinearity''),
 as this is the most dangerous term in the vicinity of
 trapping, as well as the nonlinear term $N(\partial\psi,\psi,x)$ on the right hand side (the ``semilinearity''),
 as this term models the true null structure at infinity of the Einstein equations, when the latter
 are written in appropriate geometric gauges. 
 To avoid inessential complications, we will in fact assume 
 that $g(\psi,x)=g_0$ for
 large $r$ and that $N(\partial\psi,\psi,x)$ satisfies a generalised version of the null condition~\cite{KlNull}.

In the case where $(\mathcal{M},g_0)$ is Minkowski space, a version of the above theorem follows from
classical work of~\cite{KlNull,christonull}, and there have been many further amplifications over the years,
particularly in the context of the  obstacle problem, see e.g.~\cite{metsog}.
Concerning specifically the Schwarzschild and  very slowly rotating $|a|\ll M$ Kerr setting, 
versions of the above theorem have been shown previously  in various special cases; see 
already~\cite{MR3082240} in the semilinear and~\cite{lindblad2016global, lindtoh} in the quasilinear
case, as well as the recent~\cite{lindblad2022weak} (the latter three works concerning slightly different classes of equations, but satisfying the
weak null condition~\cite{LindRodweak}). For axisymmetric solutions to certain semilinear equations 
on Kerr in the full subextremal range $|a|<M$, see~\cite{stogin2017nonlinear}.
See also~\cite{Pasqualotto:2017rkh} for a related physically
motivated quasilinear problem and~\cite{hintzvasyglobal, mavrogiannis} for the study
of~\eqref{theequationzero} on 
the non-asymptotically flat Schwarzschild--de Sitter and Kerr--de Sitter black hole
backgrounds, where the cosmological constant $\Lambda$ is positive and decay of $\psi$ is in fact exponential.
For  work specifically connected to the related problem of stability of these spacetimes
themselves under the Einstein equations, see already
Section~\ref{discussionofeinstein}.

In comparison to previous related work, there are two distinguishing features of the present approach.

The first distinguishing feature is our purely
dyadic framework: 
Rather than relying on a global bootstrap based on time-weighted norms,
the  argument is  entirely reduced to $r^p$-weighted but \emph{time-translation invariant}  
estimates to be applied in spacetime slabs
of time length $L$ which are outgoing null  for large $r$. 
Global existence (and decay) is then inferred by proceeding iteratively in time in
consecutive
slabs of length $L= 2^i$, in the spirit of~\cite{DafRodnew}. 
No bootstrap is necessary for the iteration itself, and to estimate the $i$'th slab, no information
is necessary to remember about the past other than information on the data of the slab itself.
In this sense, the argument is truly dyadically localised.
This yields a more streamlined proof even restricted to the semilinear case. (In fact, for both the semilinear and quasilinear cases,
if $(\mathcal{M},g_0)$ is
a suitably small perturbation of Minkowski space, 
the time-translation invariant estimate can be applied directly without iteration, 
and the approach proves global existence without explicit reference to time decay, cf.~the recent~\cite{faccimetcalfe}. 
In our formalism we shall always assume $g_0$ to be stationary, though we note that results can also be obtained on
non-stationary perturbations of Minkowski space~\cite{Yang2013}.)

The second, and more fundamental, feature  is our direct use of a ``black box'' linear inhomogeneous energy estimate  (which in
applications captures  both complicated trapping phenomena as well as low frequency obstructions) on the exactly stationary
background,
together with a physical-space top-order identity which may be applied directly to the quasilinear
equation~\eqref{theequationzero} and is in fact completely insensitive to trapping (and, when it holds,  robust
to perturbation of the metric $g_0$).
In the example of Kerr, in fact for the full subextremal case $|a|<M$, our black box assumption follows
directly from~\cite{partiii}, while 
we show explicitly how to retrieve the additional top-order physical space estimate in the case of $|a|\ll M$
(which in this context, can in fact be thought of simply as a perturbation of Schwarzschild)
in  Appendix~\ref{howto}.
The correct formulation of the companion estimate to be used in connection with~\cite{partiii} 
 for the full subextremal case  $|a|<M$ will appear elsewhere. 
(We emphasise that it is the phenomenon of \emph{superradiance} which is the primary difficulty in producing
this identity, not
the complicated structure of trapping per se; for instance,
 the necessary top order physical space identity can be shown for general 
 stationary spacetimes without horizons or ergospheres, irrespective of the structure of trapping.)
The approach thus does not depend on whether or not there exists
a purely physical-space based identity capturing  trapping (something very fragile!)~nor 
does it require a careful analysis of the commutation
properties of frequency projections with the wave operator $\Box_{g(\psi,x)}$ of
the time-dependent metric $g(\psi,x)$ corresponding to the actual solution $\psi$
(something quite technical, which was successfully done, however, for instance in~\cite{lindtoh}).

We emphasise that the role of the companion physical-space  estimate is purely in order to deal with quasilinear terms.
When our method is restricted to the semilinear case, i.e.~when $g(\psi,x)=g_0$ identically in~\eqref{theequationzero}, then
the black box linear inhomogeneous energy estimate can be applied
on its own. In particular, in view of~\cite{partiii},
our main theorem immediately
applies to semilinear equations on Kerr in the full subextremal range $|a|<M$.

We note that the direct appeal to a linear ``black box'' statement is similar  in spirit
to~\cite{metsog} for instance, where results on quasilinear equations outside obstacles in Minkowski space
were studied under an exponential decay assumption concerning the linear problem.

The remainder of this introduction is structured as follows:
In 
Section~\ref{introassumpgzero} below,
we will discuss in more detail the assumptions we make on the background $(\mathcal{M}, g_0)$, 
introducing both the black box linear inhomogeneous estimate and the additional physical space identity,
followed by the assumptions on the nonlinearity in Section~\ref{intrononlinearassump}. 
We will then sketch our proof of the above theorem
in Section~\ref{introproof}, introducing our purely dyadic approach.
Finally, we shall end  in Section~\ref{introcomparison} with a general discussion, giving various extensions
of the method, describing in more detail
the  relation with the nonlinear stability problem for black holes, and 
comparing in particular with the case of backgrounds modelled on Schwarzschild--de Sitter and Kerr--de Sitter
spacetimes and with the recent work~\cite{mavrogiannis}.

\subsection{Assumptions on $(\mathcal{M}, g_0)$: ``black box'' estimates and physical space identities}
\label{introassumpgzero}

We will make assumptions directly on the geometry of $(\mathcal{M},g_0)$ and on properties
of the linear inhomogeneous wave equation
\begin{equation}
\label{linearhomogeq}
\Box_{g_0}\psi =F
\end{equation}
on the fixed background $g_0$.

\subsubsection{Geometric assumptions on $(\mathcal{M},g_0)$}
\label{purelygeometric}
The purely geometric assumptions on $(\mathcal{M},g_0)$ will include assumptions concerning a suitable notion of asymptotic flatness,
that of stationarity (existence of a Killing field $T$ which is timlike near infinity), a smooth $T$-invariant strictly
positive function $r\ge r_0$, which behaves like the Euclidean area-radius as $r\to \infty$, and
the existence of a $T$-translation invariant foliation of $\mathcal{M}$ by $\Sigma(\tau)$ for $\tau\ge0$
hypersurfaces which are spacelike for $r\le R$ and ``outgoing'' null  for $r\ge R$. (In the case of 
Minkowski space, which will be the most basic example, we note that the function $r$ will only coincide with the usual radial coordinate for large values.)

To encompass the black hole cases of interest, 
we will allow 
for a possibly empty \emph{spacelike} future boundary $\mathcal{S}=\{r=r_0\}$
of $\mathcal{M}$, in which case we will also assume the presence of a Killing horizon $\mathcal{H}^+$
at $r=r_{\rm Killing}$ for an $r_0<r_{\rm Killing}$,  whose  surface
gravity will be required to be positive and whose Killing generator $Z$ need not coincide with $T$ (in 
which case however $Z$ will be required to lie in the span of $T$ and an additional assumed Killing vector field $\Omega_1$).
We will furthermore require that for $r>r_{\rm Killing}$, $T$ (or more generally the span of $T$ and $\Omega_1$, if the latter is assumed Killing) is timelike.
This will incorporate Schwarzschild and Kerr black holes in the full subextremal range $|a|<M$.

With minor modifications, we could also allow $\mathcal{S}$ to be a timelike boundary along which  suitable 
boundary conditions are imposed, and this would allow one to consider waves outside of obstacles, 
cf.~\cite{metsog}.

We will denote by $\mathcal{R}(\tau_0,\tau_1)$ spacetime ``slabs'' 
$\mathcal{R}(\tau_0,\tau_1)=\cup_{\tau_0\le \tau\le \tau_1} \Sigma(\tau) = J^-(\Sigma(\tau_1))\cap J^+(\Sigma(\tau_0))$.
The region $r\ge R$ will also be foliated by $T$-translation invariant  ``ingoing'' null 
hypersurfaces~$\underline{C}_v$.

\begin{figure}
\centering{
\def\svgwidth{15pc}
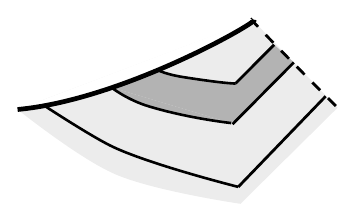}
\caption{The underlying spacetime $(\mathcal{M},g_0)$ in the case $\mathcal{S}\ne\emptyset$ 
with hypersurfaces and regions depicted}\label{introdiagfigure}
\end{figure}

Refer to~Figure~\ref{introdiagfigure} and already to Section~\ref{backgroundgeom} for a detailed discussion.

\subsubsection{Assumptions on~$\Box_{g_0}\psi =F$}
\label{rearranged}

In this section, we discuss the assumptions which we make at the level of the equation~\eqref{linearhomogeq}.

\paragraph{Basic degenerate integrated local energy estimate.}
Our fundamental ``black box'' assumption at the level of solutions of the inhomogeneous linear equation~\eqref{linearhomogeq} will be the validity of an integrated local energy estimate
\begin{equation}
\label{simplifiedestimate}
\mathcal{F}(v), \quad
 \mathcal{E}(\tau) +c \int_{\tau_0}^{\tau_1}{}^\chi\Ezerominusoneminusdelta'(\tau') d\tau' 
 +c \int_{\tau_0}^{\tau_1}\Eminusone'(\tau') d\tau' 
\leq \lambda \mathcal{E}(\tau_0)+ C\int_{\mathcal{R}(\tau_0,\tau_1)}
\left|\left(V^\mu_0\partial_\mu \psi+w_0 \psi\right) F\right|+C\int_{\mathcal{R}(\tau_0,\tau_1)} F^2,
\end{equation}
where $\mathcal{E}(\tau)$ denotes an energy flux through the hypersurface $\Sigma(\tau)$,
$\mathcal{F}(v)$ denotes an energy flux through $\underline{C}_v \cap \mathcal{R}(\tau_0,\tau_1)$,
 $\lambda\ge1$, $C>0$, $c>0$ are constants, and $V_0$, $w$ are a fixed vector field and function, respectively,
 and $\tau_0\leq\tau\leq\tau_1$ are arbitrary.
 In Minkowski space, such estimates go back to Morawetz~\cite{morawetz1968time}.
The energy flux  $\mathcal{E}(\tau)$ will control all first order derivatives of $\psi$ on the spacelike part of $\Sigma(\tau)$ while
it will only control all tangential derivatives
on the null parts of $\Sigma(\tau)$; similarly, $\mathcal{F}(v)$ will control all tangential derivatives 
on~$\underline{C}_v$. The fluxes will also control  a zeroth order term with weight $r^{-2}$.
The two additional terms on the left hand side are as follows: The quantity
${}^\chi\Ezerominusoneminusdelta'(\tau')$  denotes an integral over $\Sigma(\tau')$ controlling all first order derivatives
of $\psi$, whose density
is multiplied by $\chi r^{-1-\delta}$, where
 $0\le \chi\le 1$ denotes a function which is allowed to degenerate but must satisfy $\chi=1$ in a
  region to be discussed below, and
 where $\delta>0$ can be taken to be  an arbitrarily small constant which will be fixed throughout.
The quantity $\Eminusone'(\tau')$ denotes simply the zeroth order
quantity 
\[
\Eminusone'(\tau):=\int_{\Sigma(\tau)} r^{-3-\delta}\psi^2.
\] 
The subscript $-1$ indicates that it is of order $1$ less
in differentiability than the other energies considered.
We recall that in our setup, what we define to be $r$ satisfies $r\ge r_0>0$, and thus
$r$-weights are only relevant as $r\to\infty$.

In the case where $\mathcal{M}$ has no spacelike boundary, i.e.~$\mathcal{S}=\emptyset$,
and $T$ is globally timelike, our only assumption on $\chi$ will be
that $\chi=1$ for large $r$. 
In the case where $\mathcal{S}\neq\emptyset$,  then we will require that $\chi=1$ for $r\le r_2$
for some $r_2>r_{\rm Killing}$. 
See already Section~\ref{basicblackboxestimatesec} for the detailed assumptions.

Let us note that estimate~\eqref{simplifiedestimate}, with $\chi$ satisfying the above properties 
has indeed been shown in~\cite{partiii} 
in the Kerr case for the full subextremal range of parameters $|a|<M$. 
(See  already Theorem~\ref{trueforKerr}.)
One essential feature in deriving~\eqref{simplifiedestimate} in Kerr is the fact
that the  trapped null geodesics of~$g_0$, corresponding to photons in bound orbits around the black hole, are unstable in a suitable sense.
We note that one can show in general that
estimate~\eqref{simplifiedestimate} in the above form \emph{cannot} hold in the presence of
stable trapping. 
(Thus,~\eqref{simplifiedestimate} in particular constrains properties of geodesic flow which can be expressed
purely geometrically in terms of $g_0$.) Similarly, the possibility of an estimate~\eqref{simplifiedestimate}
with $\chi=1$ in a neighbourhood of $\mathcal{H}^+$ is related to the non-degenerate property
of the horizon of subextremal Kerr and its celebrated local \emph{red-shift effect}.
See~\cite{Mihalisnotes}.

\paragraph{Physical space energy identities.}
One way to try to prove~\eqref{simplifiedestimate} is through a
physical space energy \emph{identity} arising from integrating
the divergence identity
\begin{equation}
\label{introphysspace}
\nabla^{\mu}J^{V,w,q,\varpi}_{\mu}[\psi] = K^{V,w,q}[\psi] + (V^\mu\partial_\mu \psi) F+w\psi F
\end{equation}
for well chosen  currents $J^{V,w,q,\varpi}[\psi,g_0]$ (associated to a vector field $V$,
a scalar function $w$, a one-form $q$ and a two-form $\varpi$)
with (degenerate) coercivity properties for the arising bulk and boundary terms.
In this case, if one defines 
\[
\mathfrak{E}(\tau)=\int_{\Sigma(\tau)} J_\mu^{V,w,q,\varpi}[\psi]n^\mu_{\Sigma(\tau)}, \qquad
\mathfrak{F}(v)=\int_{\underline{C}_v\cap\mathcal{R}(\tau_0,\tau_1)} J_\mu^{V,w,q,\varpi}[\psi]n^\mu_{\underline{C}_v},
\]
then one
has $\mathfrak{E}(\tau)\sim \mathcal{E}(\tau)$, $\mathfrak{F}(v)\sim \mathcal{F}(v)$, and one can 
reexpress~\eqref{simplifiedestimate} with  $\mathfrak{E}(\tau)$, $\mathfrak{E}(\tau_0)$ replacing 
$\mathcal{E}(\tau)$, $\mathcal{E}(\tau_0)$, respectively, but
with
$\lambda$ in~\eqref{simplifiedestimate} now equal to  $1$.
This is precisely the situation in the Schwarzschild case, where appropriate coercive purely physical space identities of the form~\eqref{introphysspace}
can be deduced from~\cite{dafrod2007note, MR2563798}.

If~\eqref{simplifiedestimate} indeed holds as a consequence of the coercivity properties of a divergence
 identity~\eqref{introphysspace},
then this situation appears  very advantageous for nonlinear applications.
The advantage of such coercive
identities~\eqref{introphysspace}  is that they are robust and easily amenable to \emph{direct}
application to the quasilinear~\eqref{theequationzero}. This is because
the latter can be viewed as an inhomogeneous wave equation
\emph{with respect to the wave operator $\Box_{g(\psi,x)}$ corresponding to the solution itself},
and both the existence of energy identities like~\eqref{introphysspace} and their coercivity properties  are stable to passing from $g_0$ to $g(\psi,x)$,
i.e.~still hold for $J^{V,w,q,\varpi}[g(\psi,x),\psi]$. 

If one only has estimate~\eqref{simplifiedestimate} 
as a ``black box'', i.e.~not proven via a physical-space identity~\eqref{introphysspace},
one may still of course apply the estimate to~\eqref{theequationzero} by rewriting~\eqref{theequationzero} 
as an inhomogeneous equation with respect
to $\Box_{g_0}$:
\begin{equation}
\label{inhomogformintro}
\Box_{g_0}\psi =  ( \Box_{g_0}-   \Box_{g(\psi,x) })\psi +  N(\partial\psi, \psi, x).
\end{equation}
Applied in this manner, however,
the resulting estimate \emph{loses derivatives}, in view of the quasilinear second order terms on the right hand side
of~\eqref{inhomogformintro}.  (Note in contrast that in the purely semilinear case, where $g(\psi,x)=g_0$, there is no such
loss in estimating~\eqref{inhomogformintro}, and we may base our argument
 entirely on~\eqref{simplifiedestimate}. 
See already Section~\ref{semilinearcase}.)
At first glance, this loss would appear
to present a fundamental difficulty.

\paragraph{An auxiliary physical space estimate.}
Here we come to the   key observation of the present approach:
To avoid the loss of derivatives described above, one does \underline{not} need that~\eqref{simplifiedestimate} be proven
via a  physical space identity~\eqref{introphysspace}; one only needs
a much weaker (from the coercivity point of view) estimate arising from~\eqref{introphysspace}
which need not imply~\eqref{simplifiedestimate} but can be used \emph{in conjunction} with~\eqref{simplifiedestimate}.

Specifically, our main assumption is that, in addition to~\eqref{simplifiedestimate},
 one has a physical space identity~\eqref{introphysspace} with
(a) coercive boundary terms and with (b) bulk terms which are only assumed to be nonnegative \emph{at highest order}, 
i.e.~up to lower order ``error'' terms. Importantly, however, these allowed lower order error terms must moreover be supported
entirely in the region where the degeneration function $\chi$ of~\eqref{simplifiedestimate} satisfies $\chi=1$, i.e.~where the estimate
of~\eqref{simplifiedestimate} is in fact non-degenerate. 
The  identity~\eqref{introphysspace} applied to~\eqref{linearhomogeq}, on its own, 
gives rise thus to an estimate of the form
\begin{equation}
\label{simplifiedestimatewithA}
\mathfrak{F}(v), \quad
 \mathfrak{E}(\tau) +c \int_{\tau_0}^{\tau_1}{}^{\rho}\Ezerominusoneminusdelta'(\tau') d\tau'
+ c \int_{\tau_0}^{\tau_1}{}^{\rho}\Eminusone'(\tau') d\tau'
\leq \mathfrak{E}(\tau_0)+  A \int_{\tau_0}^{\tau_1}\Eerrorminusone'(\tau') d\tau'
+      \int_{\mathcal{R}(\tau_0,\tau_1)}
\left|\left(V^\mu \partial_\mu \psi + w\psi\right) F\right|,
\end{equation}
for all $\tau\in [\tau_0,\tau_1]$ and sufficiently large $v$,
where ${}^\rho\Ezerominusoneminusdelta'$ is defined similarly to ${}^\chi\Ezerominusoneminusdelta'$ but with
degeneration function $\rho$ which in general vanishes on a bigger set than $\chi$. 
The zeroth order term ${}^{\rho}\Eminusone'$ on the left hand side also degenerates, unlike the 
analogous term in~\eqref{simplifiedestimate}.
The presence of the term
multiplying the new constant $A$ on the right hand side is necessary in view of the fact that nonnegativity is only
assumed for the highest order terms.
Here, 
$\Eerrorminusone'$ is a zeroth order energy whose density is supported only in the support of
a function~$\xi(r)$
\[
\Eerrorminusone'(\tau):=\int_{\Sigma(\tau)} \xi (r) \psi^2.
\]
Importantly, in view of our comments above, we will require that $\xi$ vanishes 
identically  where $\chi$ of~\eqref{simplifiedestimate} degenerates, i.e.~$\xi=0$ where $\chi\ne 1$.
We will also require $\xi$ to vanish for large $r$, and, in the case where $\mathcal{S}\ne\emptyset$, in the region $r\le r_2$.

In addition, we will make on $\rho$ the same asymptotic non-vanishing assumption that we 
we made previously on $\chi$, namely that $\rho=1$ for large $r$, and, in the presence of a nonempty spacelike boundary $\mathcal{S}\ne \emptyset$,
that $\rho=1$ in $r\le r_2$.
See already Section~\ref{caseiiiidentity} for the detailed assumptions and  Figure~\ref{supportfigs} 
for a depiction of the supports of $\chi$, $\rho$ and $\xi$.

We note that in the case where
$\mathcal{S}=\emptyset$ and $T$ is globally timelike, one can in fact easily construct a current
satisfying (a) and (b) and leading to~\eqref{simplifiedestimatewithA}, \emph{irrespective of the structure of trapping and the validity
of an estimate of the form~\eqref{simplifiedestimate}}.
We will
show in Appendix~\ref{howto} that for the very slowly rotating Kerr case $|a|\ll M$, although there is no globally
timelike $T$, there is a similar elementary construction, based essentially only
on the fact that superradiance is effectively governed by a small parameter and the 
ergoregion is confined to a part of spacetime  which can be understood using only the red-shift 
effect (cf.~the original treatment of the boundedness problem on Kerr~\cite{DafRodKerr}). 
In general, however, the main difficulty in proving~\eqref{simplifiedestimatewithA} is ensuring the boundary
coercivity~(a).

Though on its own, estimate~\eqref{simplifiedestimatewithA} 
is clearly weaker than~\eqref{simplifiedestimate},
its robustness to direct application to~\eqref{theequationzero} allows
one to avoid the above loss of derivatives, when~\eqref{simplifiedestimatewithA} is used
in conjunction with~\eqref{simplifiedestimate} applied to~\eqref{inhomogformintro}. 
Before turning to describe how this is done in the context of the proof,
we will in fact have to slightly strengthen our assumed estimates~\eqref{simplifiedestimate} and~\eqref{simplifiedestimatewithA}
as follows. 

\paragraph{Extending to $r^p$ weighted estimates.}
First of all, we will need to extend~\eqref{simplifiedestimate} and~\eqref{simplifiedestimatewithA}
to $r^p$-weighted estimates, for $0\le p<2$, of the type first introduced in~\cite{DafRodnew}:
\begin{equation}
\label{simplifiedestimatepweight}
 \Ep(\tau) + \Fp(v)+ \int_{\tau_0}^{\tau_1}{}^\chi\Epminusone'(\tau') d\tau' 
 + \int_{\tau_0}^{\tau_1}\Epminusoneminusone'(\tau') d\tau' 
\lesssim  \Ep(\tau_0)+\text{\rm inhomogeneous terms},
\end{equation}
\begin{equation}
\label{simplifiedestimatewithApweight}
\Ffancyp(v), \qquad
\Efancyp(\tau) +c \int_{\tau_0}^{\tau_1}{}^{\rho}\Epminusone'(\tau') d\tau' +c \int_{\tau_0}^{\tau_1}{}^{\rho}\Eminusone'(\tau') d\tau' +
\leq \Efancyp(\tau_0)+  A \int_{\tau_0}^{\tau_1}\Eerrorminusone'(\tau') d\tau'
 +\text{\rm inhomogeneous terms}.
\end{equation}
In the above, $\Ep$, ${}^\chi \Epminusone'$, etc., are  $r^p$,  respectively
$r^{p-1}$, etc., weighted versions of $\mathcal{E}$, ${}^\chi \Ezerominusoneminusdelta'$, etc, 
while $\Efancyp$, $\Ffancyp$ satisfy
$\Efancyp\sim\Ep$, $\Ffancyp\sim \Fp$ but can moreover be represented as the flux term of a current 
\[
\Efancyp(\tau)=\int_{\Sigma(\tau)} \Jp_\mu^{V,w,q,\varpi}[\psi]n^\mu,
\qquad
\Ffancyp(v)=\int_{\underline{C}_v} \Jp_\mu^{V,w,q,\varpi}[\psi]n^\mu,
\]
and we in fact assume that~\eqref{simplifiedestimatewithApweight}, just as~\eqref{simplifiedestimatewithA}, is
the result of a pointwise coercive energy identity~\eqref{introphysspace} associated to the current $\Jp$.

To obtain~\eqref{simplifiedestimatepweight} and~\eqref{simplifiedestimatewithApweight}, given~\eqref{simplifiedestimate} and~\eqref{simplifiedestimatewithA}, it suffices to
assume the existence of currents defined only in the region $r\ge \tilde{R}$
with suitable far away coercivity properties. For maximal generality, we will here simply directly
postulate the existence of such currents as an additional assumption. 
See already Section~\ref{rpsection} for the precise formulation.
This assumption 
can be shown to follow from suitable pointwise asymptotically flat assumptions on the metric $g_0$ and
holds of course in all our examples of interest, where the currents can easily be explicitly constructed.
See Appendix~\ref{section:appedixcurrents}, and also~\cite{Moschnewmeth} for an even more general setting.

\paragraph{Extending to higher order estimates.}
Secondly, we will need, through suitable commutations,  
to extend~\eqref{simplifiedestimate}, \eqref{simplifiedestimatewithA}, \eqref{simplifiedestimatepweight} and~\eqref{simplifiedestimatewithApweight}
to higher order statements.
We will distinguish two energies
\[
\Epk, \qquad  \Efancypk
\]
at $k$'th order. 

The energy $\Epk$ is the sum of the usual energy, with $r^p$ weights $0\le p<2$,
and with commutation vector fields $\widetilde{\mathfrak{D}}^k$ up to order $k$, \emph{spanning the entire tangent space}.  
The set  will include the vector fields $\Omega_i$, $i=1,\ldots, 3$, which for large $r$ correspond to the usual 
rotational vector fields.
(Note that the latter will be the only commutation vector fields which are  $r$-weighted.)
This is a fundamental energy with respect to which
initial data can be measured and with respect to which suitable Sobolev inequalities hold.

The energy $\Efancypk$, on the other hand, 
denotes the precise energy flux of the current leading to~\eqref{introphysspace} applied
moreover  with  a new set  $\mathfrak{D}^k$ of commutation vector fields, which will include $T$ (and $\Omega_1$, if this is assumed to be Killing), but for which \emph{all non-Killing vectors are cut off to vanish
in a suitable region of finite $r$}.
(See already 
Sections~\ref{commutatordefsec}--\ref{mastercurrentsandfluxdef} for the definition of these commutation vector fields and energies.)
Nonetheless, by our geometric assumptions and elliptic estimates, we will have 
that the energies
\begin{equation}
\label{similarlabel}
\Efancypk\sim \Epk
\end{equation}
 are equivalent for solutions of $\Box_{g_0}\psi=0$. 
 
 To extend our estimates to higher order, we will need some properties of the commutation
 errors arising from $[\Box_{g_0}, \mathfrak{D}^k]$. 
 Let us note that
in the case where $\mathcal{S}=\emptyset$ and $T$ is globally timelike,
these errors are only supported in the asymptotic region of large $r$ and can be controlled if the metric
is suitably asymptotically flat. For maximum generality, we will formulate the relevant
asymptotic assumption directly in terms of pointwise decay bounds for these commutators.
This will include all examples of interest. See already Section~\ref{farawaycommutationerror}.

In the  case where $\mathcal{S}\ne\emptyset$, there will be 
additional commutation errors $[\Box_{g_0}, \mathfrak{D}^k]$ arising in  a neighbourhood of the boundary $\mathcal{S}$ including 
the Killing horizon $\mathcal{H}^+$. Key to their control is the assumption of positive surface gravity of $\mathcal{H}^+$,
discussed in Section~\ref{purelygeometric}, and the inclusion among the  collection $\mathfrak{D}$  of a well-chosen 
vector field $Y$ which is null and transversal to $\mathcal{H}^+$. As first shown in~\cite{Mihalisnotes},
commutation by $Y$ generates
a term with a good sign at $\mathcal{H}^+$ (see already Proposition~\ref{fromDR} of Section~\ref{commutationerrorsnearhor}), 
and this fact, together with elliptic estimates in $r>r_{\rm Killing}$, an enhanced red-shift estimate near $\mathcal{H}^+$ and enhanced positivity in the black hole interior
up to $\mathcal{S}$ 
(see already Propositions~\ref{enhancedprop}  and~\ref{enhancedinteriorprop}
of Section~\ref{enhanced}), can be used to absorb the errors arising from commutation
and to extend the estimates to higher order. We note that the above-mentioned good sign  is yet another manifestation of the local red-shift
at the horizon $\mathcal{H}^+$ associated to the positivity of the surface gravity.
 To understand how all commutation errors can indeed be absorbed,
 see already Section~\ref{globalcontrolsection}.
 
The final  extended estimates on~\eqref{linearhomogeq} 
 resulting from~\eqref{simplifiedestimate} and \eqref{simplifiedestimatewithA} take
 the form
 \begin{equation}
\label{simplifiedestimatepweighthigherorder}
 \Epk(\tau) +\Fpk+ \int_{\tau_0}^{\tau_1}{}^\chi\Epminusonek'(\tau') d\tau' 
 + \int_{\tau_0}^{\tau_1}\Epminusonekminusone'(\tau') d\tau' \lesssim \Ep(\tau_0)+\text{\rm inhomogeneous terms},
\end{equation}
\begin{equation}
\label{simplifiedestimatewithApweighthigherorder}
\Ffancypk(v), \quad \Efancypk(\tau) +c \int_{\tau_0}^{\tau_1}{}^{\rho}\Epminusonek'(\tau') d\tau'
+c \int_{\tau_0}^{\tau_1}{}^{\rho}\Ezerominusthreeminusdeltakminusone'(\tau') d\tau'
\leq \Efancypk(\tau_0)+  A \int_{\tau_0}^{\tau_1}\Eerrorkminusone'(\tau') d\tau'
   +\text{\rm inhomogeneous terms},
\end{equation}
where again~\eqref{simplifiedestimatewithApweighthigherorder} derives
from the pointwise coercivity properties of the energy identity~\eqref{introphysspace} associated to a current $\Jpk$,
whose  flux terms are  precisely given by $\Efancypk$, $\Ffancypk$.
See already Section~\ref{finalhighestimlinsec} for the precise form of the estimates and inhomogeneous terms.

\subsection{Assumptions on the nonlinearities $g(\psi, x)$ and $N(\partial\psi,\psi,x)$}
\label{intrononlinearassump}

We now turn to the nonlinear equation~\eqref{theequationzero}. 
In formulating assumptions on the allowed nonlinearities, we are here largely motivated 
by geometric formulations of the Einstein equations, as in~\cite{dhrtplus}, which produce
``almost decoupled'' equations similar to~\eqref{theequationzero}, which  satisfy an appropriate
form of the null condition. We emphasise, however, that the resulting equations in such a formulation
do not constitute a pure system
of wave equations in the form~\eqref{theequationzero}, and the quasilinear structure is mitigated
through coupling with transport equations. Thus, in this context, one should really only think
of~\eqref{theequationzero} as an indicative model equation.
For more specific discussion of
the Einstein equations, see already Section~\ref{discussionofeinstein}.

In addressing the combined difficulties of the interaction of quasilinear terms, trapped null geodesics, and merely polynomial decay at the level of  model scalar equations of the form~\eqref{theequationzero},  the most natural  simplest setting is   to  require
that the \emph{quasilinear} term $g(\psi,x)-g_0(x)$ be supported entirely in the region $r\le R$,
and   to impose on the \emph{semilinear} term $N(\partial\psi,\psi,x)$ 
a generalised version of the null condition~\cite{KlNull}.
This 
 also allows us to use the exact null hypersurfaces of the background $(\mathcal{M},g_0)$
without additional
complications, making the null hierarchical structure of  the estimates clearer. 
(In the case of the Einstein equations, in our framework, these would be replaced by  null hypersurfaces
of the actual dynamic spacetime $(\mathcal{M},g)$, for instance associated to a double null gauge.)  We note, however, that
use of exact null hypersurfaces is in no way fundamental;  one can always replace these
with suitable hyperboloidal hypersurfaces, for instance, using the setup of~\cite{Moschnewmeth}.

For maximum generality, rather than
attempt a general algebraic definition of the null condition
for the semilinear terms $N(\partial\psi,\psi,x)$ of~\eqref{theequationzero},
we will formalise a ``null condition assumption'' in the form of
a time-translation translation invariant $r^p$ weighted estimate for the inhomogeneous terms
of~\eqref{simplifiedestimatepweighthigherorder}--\eqref{simplifiedestimatewithApweighthigherorder}
when applied to~\eqref{theequationzero},
restricted entirely to the asymptotic region
$r\ge R$.

To describe the condition, it will be convenient to introduce ``master'' energies
\begin{equation}
\label{defforintroone}
{}^\rho \Xpk(\tau_0,\tau_1):= 
\sup_v\Fzerok(v)+\sup_{\tau_0\le\tau\le \tau_1}\Epk(\tau)+\int_{\tau_0}^{\tau_1}\left( 
 {}^\rho \Epminusonek'(\tau)+ {}^{\rho} \Ezerominusthreeminusdeltakminusone'(\tau) \right) d\tau
  \,\, 
\end{equation}
for all
$\delta\le p\le 2-\delta$.
In the case $p=0$, which is anomalous, we will define
\begin{equation}
\label{defforintrotwo}
\begin{aligned}
{}^\rho \Xzerok(\tau_0,\tau_1)&:= 
\sup_v\Fzerok(v)+\sup_{\tau_0\le\tau\le \tau_1}\Ezerok(\tau)+\int_{\tau_0}^{\tau_1}\left( {}^\rho \Ezerominusoneminusdeltak'(\tau) +
{}^{\rho} \Ezerominusthreeminusdeltakminusone'(\tau) \right)
d\tau \,\, , \\
{}^\rho \Xzeroplusk(\tau_0,\tau_1)&:= 
\sup_v\Fzerok(v)+\sup_{\tau_0\le\tau\le \tau_1}\Ezerok(\tau)+\int_{\tau_0}^{\tau_1} \left(  {}^{\rho} \Edeltaminusonek' (\tau')+ 
{}^{\rho}\Ezerominusthreeminusdeltakminusone'(\tau')
			 						\right)
d\tau \,\, .
\end{aligned}
\end{equation}
We may define similar master energies ${}^\chi\Xpk$, etc., with $\chi$ replacing $\rho$, 
and where  extra non-degenerate lower order terms $\Epminusonek'$, $\Ezerominusoneminusdeltakminusone'$
are added to the integrands on the right hand side of~\eqref{defforintroone} and~\eqref{defforintrotwo}, respectively,
and finally $\Xpk$ where $\rho$ is removed.
Let us note that
\[
{}^\rho \Xpk\lesssim {}^\chi \Xpk\lesssim \Xpk, \qquad \Xpkminusone\lesssim {}^\chi \Xpk.
\]

Specifically, our null condition assumption
is that in any spacetime slab $\mathcal{R}(\tau_0,\tau_1)$,
the far away  (i.e.~$r\ge R$) contribution of the inhomogeneous terms arising from $N(\partial\psi,\psi,x)$ in the 
estimates~\eqref{simplifiedestimatepweighthigherorder},~\eqref{simplifiedestimatewithApweighthigherorder} 
can be bounded by the expressions
\begin{equation}
\label{nullcondassumpintro}
{}^\rho \Xpk(\tau_0,\tau_1) \sqrt{\Xzerolesslessk(\tau_0,\tau_1)} + \sqrt{{}^\rho \Xpk(\tau_0,\tau_1)}\sqrt{{}^\rho \Xzerok(\tau_0,\tau_1)}\sqrt{\Xplesslessk(\tau_0,\tau_1)},
\qquad   {}^\rho\Xzeroplusk(\tau_0,\tau_1) \sqrt{ \Xzeropluslesslessk(\tau_0,\tau_1)}
\end{equation}
for $\delta\le p\le 2-\delta$ and $p=0$, respectively, and where
all energies  may in fact be restricted to the region $r\ge 8R/9$, where $\rho=1$.  (Note how  the $p=0$ weighted estimate is 
anomalous, in that the nonlinear terms require the boundedness of a bulk term associated to the energy identity of a higher ($p>0$) 
weight so as to be estimated.) See already Section~\ref{formofnullcondsec} for the precise formulation of the assumption.

The
assumption that  one can bound the relevant inhomogeneous terms arising from $N(\partial\psi,\psi,x)$ by~\eqref{nullcondassumpintro} isolates precisely that aspect of the usual null condition
which is relevant for global existence in our method.
We will then show in Appendix~\ref{nonlineartermsatinf} that our assumption indeed holds in particular for
the usual nonlinearities  allowed by the classical null condition~\cite{KlNull} and includes also
 the class of semilinear
 terms $N(\partial\psi,\psi,x)$ on  Kerr  studied previously in~\cite{MR3082240}.

\subsection{The dyadically localised approach to global existence: sketch of the  proof of the main theorem}
\label{introproof}

We now turn to the proof of the main theorem and the other novel aspect
of the present work, namely the replacement of a global bootstrap built on global decay-in-time estimates
with dyadic iteration in consecutive
spacetime slabs based entirely on dyadically localised, $r^p$ weighted---but time-translation invariant---estimates.

We sketch the argument here. For details, see already Section~\ref{estimatehier}.
In our discussion below, we will  focus directly  on our main case of interest, referred to later in the paper
as case (iii), where the black box inhomogeneous estimate~\eqref{simplifiedestimate} has nontrivial degeneration
and indeed does not arise
from a physical space identity, and thus must be used in conjunction with the auxiliary
 identity~\eqref{simplifiedestimatewithA}.  We emphasise, however, that the dyadic approach described here
is already novel in the case, referred to later in the paper as case (ii),
where the black box estimate~\eqref{simplifiedestimate} again has degeneration but is actually a consequence of a  physical space identity (as in Schwarzschild, see the discussion following~\eqref{introphysspace}).
Finally, in the case where the black box estimate~\eqref{simplifiedestimate} derives from a physical space identity and is
moreover non-degenerate (i.e.~$\chi=1$ identically), referred to in the paper as case (i), then dyadic iteration is in fact unnecessary, and the approach reduces to a completely elementary time-translation invariant
estimate.
(This latter case applies for instance when $g_0$ is a small stationary
perturbation of Minkowski space.)
\emph{In the paper, we will provide self-contained treatments of all three cases so the reader can
compare with the more traditional approach.}

\subsubsection{The $r^p$ weighted estimate hierarchy on a spacetime slab of time-length $L$}
\label{allabouthierarchy}

In the dyadic approach, the key element is an appropriate time-translation invariant $r^p$ weighted hierarchy on a slab
of time length $L$. The hierarchy  will be formed by combining the $r^p$ and higher $k$ order
estimates associated to~\eqref{simplifiedestimate} and~\eqref{simplifiedestimatewithA}
with various choices of $p$ weights and differentiability order $k$.
The estimates will depend only on a time-translation invariant bootstrap assumption,
to be retrieved within the slab, concerning only lower order  energy quantities.

Specifically, from the assumptions of Section~\ref{introassumpgzero} and~\ref{intrononlinearassump}, 
on any spacetime slab $\mathcal{R}(\tau_0,\tau_0+L)$ 
of time-length $L\ge 1$, then, we obtain
for $p=2-\delta$ and $p=1$, and for $0<\delta<\frac1{10}$,
the following hierarchy of estimates:
\begin{align}
\nonumber
\Ffancypk(v),\quad
\Efancypk(\tau),\quad
{}^\rho \Xpk(\tau_0,\tau)
&\leq \Efancypk(\tau_0) 
+\boxed{A\int_{\tau_0}^\tau \Eerrorkminusone'}+   C\,
{}^\rho \Xpk(\tau_0,\tau) \sqrt{\Xzerolesslessk(\tau_0,\tau)} +C\, \sqrt{{}^\rho \Xpk(\tau_0,\tau)}\sqrt{{}^\rho \Xzerok(\tau_0,\tau)}\sqrt{\Xplesslessk(\tau_0,\tau)}
\\
\label{highes}
&\qquad
+\boxed{C\sup_{\tau_0\le \tau'\le\tau}\Ezerok(\tau')
\sqrt{\Xzerolesslessk(\tau_0,\tau)} \sqrt{L}},
\end{align}
\begin{equation}
\begin{aligned}
\label{secondhigh}
{}^\chi \Xpkminusone (\tau_0,\tau) &\lesssim \Epkminusone(\tau_0) +  
{}^\rho \Xpkminusone(\tau_0,\tau) \sqrt{\Xzerolesslessk(\tau_0,\tau)} + \sqrt{{}^\rho \Xpkminusone(\tau_0,\tau)}\sqrt{{}^\rho \Xzerokminusone(\tau_0,\tau)}\sqrt{\Xplesslessk(\tau_0,\tau)}
 \\
&
\qquad +\boxed{\sup_{\tau_0\le \tau'\le\tau}\Ezerok(\tau')}
\sqrt{\Xzerolesslessk(\tau_0,\tau)} \sqrt{L}.
\end{aligned}
\end{equation}
The estimates are in fact
 contingent on an appropriate  time-translation invariant bootstrap assumption 
on the lower order energy 
\begin{equation}
\label{contingent}
\Xplesslessk(\tau_0,\tau) \lesssim \varepsilon
\end{equation}
for $p=0$.
The hierarchy~\eqref{highes}--\eqref{secondhigh} descends also to  
$p=0$, but where 
${}^\rho\Xzeroplusk(\tau_0,\tau) \sqrt{ \Xzeropluslesslessk(\tau_0,\tau)}$ and ${}^\rho\Xzeropluskminusone(\tau_0,\tau) \sqrt{ \Xzeropluslesslessk(\tau_0,\tau)}$ replace 
the sum of the third to last and second to last terms
of~\eqref{highes} and~\eqref{secondhigh}, respectively.
(This anomaly, referred to already immediately after~\eqref{nullcondassumpintro},  is a fundamental aspect of the estimates.)
See already Section~\ref{hierarchiiisec}.

Estimate~\eqref{highes} arises from applying the physical space identity
associated to the current $\Jpk$, which in the linear case led to~\eqref{simplifiedestimatewithApweighthigherorder},
\emph{directly} to
the $k$ times commuted equation~\eqref{theequationzero}, 
i.e.~it arises from the divergence identity~\eqref{introphysspace} associated to the  current $\Jpk[g(\psi,x), \psi]$,
where $g(\psi,x)$ replaces $g_0$ covariantly in the definition of~$\Jpk$. (The energies $\Efancypk$ and $\Ffancypk$ in~\eqref{highes} now in fact
denote the exact fluxes arising from these currents; in view of the bootstrap assumption~\eqref{contingent}, one can again show the
equivalence~\eqref{similarlabel} for solutions of~\eqref{theequationzero}.)
As discussed already in Section~\ref{rearranged}, it follows that~\eqref{highes}
 does not ``lose derivatives'', as is
clear from examining the order of differentiability of the terms on the right hand side. 
We remark that the first boxed term in~\eqref{highes} reflects the analogous term in~\eqref{simplifiedestimatewithApweighthigherorder}. 
On the right hand side of~\eqref{highes}, we recognise the bound for the far away contribution of the nonlinear term~\eqref{nullcondassumpintro}  from our assumption
capturing the null
condition, while
the second boxed term in~\eqref{highes}
is necessary to control the nonlinear terms on the set where $\rho$ degenerates,
for there, this boxed term cannot be absorbed in the  spacetime bulk term on the left hand side.
(The bad explicit
dependence on the length $L$ arises because one must take the supremum over $\tau$
for the energy of the highest order terms there.)

Estimate~\eqref{secondhigh}, on the other hand, 
arises from applying the ``black box'' estimate~\eqref{simplifiedestimatepweighthigherorder}
to the nonlinear equation written
in the form~\eqref{inhomogformintro}, i.e.~now thought of simply as an inhomogeneous equation with respect to the
background $g_0$.
The boxed term in~\eqref{secondhigh}, which is $k$'th order, reflects the loss of derivatives due
to the quasilinearity discussed already
in Section~\ref{rearranged}.

The estimates can in principle close because, in view of our restrictions on the support of $\xi$, we have the fundamental
relation
\begin{equation}
\label{fundamentalrelationhere}
\Eerrorkminusone' \lesssim  {}^\chi \Ezerominusoneminusdeltakminusone' +\Ezerominusoneminusdeltakminustwo'
\end{equation}
and thus we may bound the term
\[
A\int_{\tau_0}^\tau \Eerrorkminusone'(\tau')d\tau'  \lesssim {}^\chi \Xzerokminusone(\tau_0,\tau),
\]
which  is in turn bounded by the term on the left hand side of~\eqref{secondhigh} for any $p$.

\subsubsection{Global existence in a slab}

The first task is to show global existence in a given slab $\mathcal{R}(\tau_0,\tau_0+L)$ for arbitrary $L\ge 1$,
given suitable ($L$-dependent) smallness assumptions at $\Sigma(\tau_0)$.

We first note that the estimate~\eqref{highes} alone can be used to easily show \emph{local} existence,
by which we mean existence
in an entire smaller slab of the form~$\mathcal{R}(\tau_0,\tau_0+\epsilon)$, provided that
some $\Epk \lesssim \varepsilon_0$ for sufficiently high $k$.
This is in fact true also for $p=0$.
(Note, that  this ``local''  result already uses non-trivially the null condition
assumption of Section~\ref{intrononlinearassump}.
This is because our foliation $\Sigma(\tau)$ is outgoing null for $r\ge R$. Equations~\eqref{theequationzero} 
not satisfying some version of the null condition
can fail to yield solutions in $\mathcal{R}(\tau_0,\tau_0+\epsilon)$ for any $\epsilon>0$, no matter
how small the data are.)
Moreover, we show that smallness of a suitably high $\Epk$ norm defines a continuation
criterion. 

In view of the above, for  global existence, it suffices to show that, under suitable assumptions
at $\tau=\tau_0$, the quantity
$\Epk$, for some $p\in \{0\} \cup [\delta, 2-\delta]$ and suitably high $k$ remains globally small in the slab.
Examining the hierarchy, 
for global existence in a slab $\mathcal{R}(\tau_0,\tau_0+L)$ of length $L$,
it seems necessary to have at least
\begin{equation}
\label{bootmovoedw}
\Xzerolesslessk(\tau_0,\tau) \lesssim\varepsilon L^{-1}.
\end{equation}
Indeed, in view of the nonlinear terms,
the $L^{-1}$ factor in~\eqref{bootmovoedw} represents  a natural threshold,
as $s=1$ represents the minimum value of $s$ for which $\sqrt{L^{-s}}\sqrt{L} \lesssim 1$, allowing
the quantities in a fixed slab to be easily bounded by their initial values.

Assuming
\begin{equation}
\label{forglobexistence}
\Eonek(\tau_0) \lesssim \varepsilon_0, 
\qquad  \Ezerokminustwo(\tau_0) \lesssim L^{-1}\varepsilon_0 , 
\end{equation}
which of course imply, for $\varepsilon_0\ll \varepsilon$, both~\eqref{contingent} (for $p=1$) and~\eqref{bootmovoedw} at $\tau=\tau_0$,
 an easy continuity
argument, with~\eqref{contingent} and~\eqref{bootmovoedw} taken as bootstrap assumptions,
shows that if estimates~\eqref{forglobexistence}
hold for $\tau=\tau_0$, then the solution indeed exists globally in the entire slab $\mathcal{R}(\tau_0,\tau_0+L)$.

See already Section~\ref{globpropthree} for the details of the proof.
Note that our above appeal to~\eqref{contingent} and~\eqref{bootmovoedw} is in fact the only instance that a bootstrap assumption must be used in the proof
of our main theorem. 

\subsubsection{Improved estimates and the pigeonhole principle}

Once global existence within a slab has been established, one can improve a posteriori the estimates given 
stronger assumptions on initial data. Anticipating dyadic iteration, one seeks a set of initial estimates at $\tau=\tau_0$ with the property that they are
\emph{exactly recoverable} at the top of the slab $\tau_0+L$, with suitable redefinition of $\varepsilon_0$ and with
$L$ replaced by $2L$.

Re-examining the hierarchy~\eqref{highes}--\eqref{secondhigh}, we show that  
for an appropriate large constant $\alpha$, and any $L\ge 1$, $\tau_0\ge 1$,
then for $\hat\epsilon_0$ sufficiently small,
given for example the stronger estimates
\begin{equation}
\label{iteratintro}
\Efancyonek(\tau_0)\leq\hat\varepsilon_0 , \qquad
\Efancytwominusdelkminustwo(\tau_0)\leq\hat\varepsilon_0, \qquad \Efancyzerokminustwo(\tau_0)\leq \hat\varepsilon_0\alpha L^{-1}, \qquad \Efancyonekminusfour(\tau_0)\leq\hat\varepsilon_0 \alpha L^{-1+\delta},
\qquad \Efancyzerokminussix(\tau_0)\leq \hat\varepsilon_0\alpha^{2} L^{-2+\delta},
\end{equation}
then~\eqref{iteratintro} holds also at the final time $\tau_0+L$  where
 $\tau_0$, $L$ and $\hat\varepsilon_0$ are now however replaced by 
\begin{equation}
\label{toinduct}
\tau_0+L, \qquad 2L, \qquad  \hat\varepsilon_0(1+\alpha  L^{-\frac14}),
\end{equation}
respectively.
Note that the statement that the last three inequalities of~\eqref{iteratintro}  hold at $\tau_0+L$ with $2L$ in place of $L$
is a \emph{stronger} smallness assumption 
than that assumed for these quantities at $\tau_0$, signifying that these quantities have in fact decayed, and relies on  a pigeonhole argument as in~\cite{DafRodnew},
based on the fundamental relation that, for $p-1\ge 0$, the \emph{bulk} integrand 
 $\Epminusonekminustwo'$ of the $p$-weighted identity~\eqref{secondhigh} at order $k-1$ (when integrated, included in the ${}^\chi \Xpminusonekminustwo$
 master energy)
 controls the \emph{boundary} term 
of the $p-1$ weighted identity at arbitrary order $k$, denoted without the prime:
\begin{equation}
\label{comparison}
\Epminusonek' \gtrsim \Epminusonek.
\end{equation}

Briefly, the pigeonhole principle is applied as follows: The integral of the left hand side of~\eqref{comparison} for $p=1$ and $k$
replaced by $k-2$ can be shown to be bounded 
$\lesssim\hat\varepsilon_0$, whence we obtain the existence of a $\tau=\tau''$ slice
for which the right hand side $\Ezerokminustwo(\tau'')$  is bounded  $\lesssim \hat\varepsilon_0L^{-1}$,
where the $L^{-1}$ factor arises from the length of the interval.
We can then propagate this to $\tau_0+L$, and use that $\alpha \gg 1$ to arrange that we have the precise new
version of the third inequality of~\eqref{iteratintro} at $\tau_0+L$, i.e.~for $\Efancyzerokminustwo(\tau_0+L)$.
Similarly, the integral of the left hand side  of~\eqref{comparison} for $p=2-\delta$ and $k$ replaced by $k-4$
can be shown to be bounded $\lesssim\hat\varepsilon_0$, whence 
we obtain the existence of a $\tau=\tau''$ slice
for which the right hand side $\Eoneminusdelkminusfour(\tau'')$  is bounded $\lesssim\hat\varepsilon_0L^{-1}$,
 the $L^{-1}$ factor again arising from the length of the interval.
Interpolating with the boundedness statement $\Etwominusdelkminusfour(\tau'')\lesssim \hat\varepsilon_0$ we obtain
that on this slice  $\Eonekminusfour(\tau'')\lesssim\hat\varepsilon_0 L^{-1+\delta}$.
Again, we can then propagate this to $\tau_0+L$, and use that $\alpha\gg1$  to arrange that we have the precise
new version of the fourth inequality of~\eqref{iteratintro} at $\tau_0+L$, i.e.~for $\Efancyonekminusfour(\tau_0+L)$. 
Finally, a similar argument, applying~\eqref{comparison} for
$p=1$ and $k-6$, yields the new version of the
 last inequality of~\eqref{iteratintro}. 
 
 Note that 
to obtain the new versions of the first two inequalities of~\eqref{iteratintro} at $\tau_0+L$, which refer to quantities that  do not
decay with respect to their initial values, it 
is important that the constant
appearing in the first term on the right hand side of~\eqref{highes} 
is indeed $1$.

\subsubsection{Iteration over consecutive  spacetime slabs of dyadic time-length $L_i=2^i$}

Let us assume that our initial data on $\tau_0:=1$ satisfy
\[
\Eonek(\tau_0)+  \Etwominusdelkminustwo(\tau_0) \leq\varepsilon_0 
\]
for sufficiently small $\varepsilon_0$. It follows that~\eqref{iteratintro} is satisfied for $\tau_0=1$, $L:=L_0=1$ and appropriate $\hat\varepsilon_0$.

Examining closely~\eqref{iteratintro} and~\eqref{toinduct}, we see that we have now obtained a series of estimates
which, in addition to being sufficient to obtain existence in $\mathcal{R}(\tau_0, \tau_0+L)$, are moreover
preserved under dyadic iteration. 
We
simply iterate the above statement on consecutive spacetime slabs of dyadic time-length $L_i=2^i$,
 defining $\tau_{i+1}=\tau_i+L_i=2^i$, and $\hat\varepsilon_{i+1}=
\hat\varepsilon_i(1+\alpha  L_i^{-\frac14})$.
Note that $\hat\varepsilon_{i+1}\lesssim \hat\varepsilon_0\lesssim \varepsilon_0$ for all $i$. 

This immediately yields global existence in $\mathcal{R}(\tau_0,\infty) = \cup_{i\ge 0}\mathcal{R}(\tau_i,\tau_{i+1})$,
the top order boundedness statement
\begin{equation}
\label{finalestimates}
\Eonek(\tau)+ \Etwominusdelkminustwo(\tau)  \lesssim\varepsilon_0
\end{equation}
and the decay estimate 
\[
\Ezerokminussix(\tau_i)\lesssim \varepsilon_0\tau^{-2+\delta}.
\]
Thus, our result is a true orbital stability statement at highest order, in addition to yielding asymptotic stability.

\subsection{Discussion}
\label{introcomparison}

We end with some additional discussion.

\subsubsection{The semilinear case}
\label{semilinearcase}
We have already remarked that in the semilinear case, i.e.~the case where $g(\psi,x)=g_0$ identically and the equation
takes the form
\begin{equation}
\label{semil}
\Box_{g_0} \psi = N(\partial\psi,\psi, x),
\end{equation}
there
is no loss of derivatives when estimating~\eqref{semil} as a linear inhomogeneous equation. Thus, 
one can base the entire argument directly on the black box estimate~\eqref{simplifiedestimate}, without the need
for the auxiliary physical space based estimate~\eqref{simplifiedestimatewithA}, provided
of course that the appropriate  asymptotic flatness
and commutation assumptions are made so that~\eqref{simplifiedestimate} can be extended to the higher order
weighted~\eqref{simplifiedestimatepweighthigherorder}.
See already Remark~\ref{bigremarkonsemi} for the modified statement  appropriate
in this case and a guide to its proof. 
In particular, our theorem applies directly to the purely semilinear version of~\eqref{theequationzero} 
on Kerr in the full subextremal range $|a|<M$. 

\subsubsection{Additional examples and extensions}

We collect here some additional examples to which our results apply to, as well as various
natural future extensions.

\paragraph{Other spacetimes $(\mathcal{M},g_0)$ satisfying our assumptions.}
We have already specifically mentioned  the examples where $(\mathcal{M},g_0)$ is
Minkowski space, Schwarzschild and Kerr, but let us remark that our 
black box assumption~\eqref{simplifiedestimate} holds in fact
for the full  subextremal  Kerr--Newman family~\cite{civinthesis}. 
We also note that combining the  use of an energy current construction similar to  Appendix~\ref{howto}
with~\cite{wunschzworski}, 
one can show that~\eqref{simplifiedestimate}  holds
for \emph{arbitrary}  stationary perturbations $(\mathcal{M},g_0)$ of Schwarzschild, provided that they are close to
Schwarzschild in a suitably regular norm. Since
we have already remarked that the physical space based estimate~\eqref{simplifiedestimatewithA} holds also for such spacetimes, in fact under much weaker regularity assumptions, this means that
our main theorem indeed applies in this case.
Thus, for very slowly rotating $|a|\ll M$ Kerr spacetimes, one sees that the special additional structure of Kerr (like 
the various manifestations of Carter separability, the  Killing tensor, etc.)~has no significance
for the problem, and the spacetimes can just be understood as a
very small stationary  perturbation of Schwarzschild with a Killing horizon. \emph{It is only in the   case  $|a|\sim M$
where true Kerr properties are of any relevance.} A treatment of 
this case, reducing directly to the black box estimate proven in~\cite{partiii}, will appear elsewhere.

\paragraph{The axisymmetric case.} 
In the case where we assume that the metric $g_0$ admits an additional Killing field~$\Omega_1$, one may also 
restrict to nonlinear equations of the form~\eqref{theequationzero} 
which moreover preserve this symmetry, i.e.~with the property that if the data are preserved by $\Omega_1$,
then so is the solution. Let us call such equations and solutions axisymmetric. Under the geometric assumptions of this paper,
for axisymmetric equations~\eqref{theequationzero} and restricted to axisymmetric solutions, one may in fact easily derive
the existence of the necessary auxiliary physical space identity yielding~\eqref{simplifiedestimatewithA}, just as in Schwarzschild.
This in particular applies to Kerr--Newman in the full subextremal case.
Thus, in view of~\cite{partiii, civinthesis} which obtain~\eqref{simplifiedestimate}, the main theorem of the paper applies also in these settings.
We leave the details as an exercise for the reader.
Note that in this case, if one prefers,
one may quote the original~\cite{DafRodsmalla} for~\eqref{simplifiedestimate} for axisymmetric solutions,
in place of the more general~\eqref{simplifiedestimate} obtained for all solutions in~\cite{partiii}.

\paragraph{Potentials.}
Our method applies also if a suitably decaying
$T$-independent potential $\mathcal{V}(x)$ is added to~\eqref{theequationzero},
i.e.~for
\[
\Box_{g(\psi,x)}\psi -\mathcal{V}(x)\psi  =  N(\partial\psi, \psi, x),
\]
provided that both the black box assumption~\eqref{simplifiedestimate} and the physical space 
based~\eqref{simplifiedestimatewithA} hold for solutions of the inhomogeneous version of
the new linearised equation
equation $\Box_{g_0}\psi -\mathcal{V}(x)\psi=F$ in place of~\eqref{linearhomogeq}.
We note that in the case where $T$ is globally timelike and $\mathcal{V}$ decays suitably and is nonnegative, 
then, one can always obtain also the analogue of~\eqref{simplifiedestimatewithA}.  The same statement applies for any such potential in the slowly
rotating Kerr case $|a|\ll M$.

\paragraph{Systems.} For notational convenience, we have restricted here to scalar equations. The considerations generalise readily to systems,
for which a generalised null condition can again be formulated in terms of bounds of the nonlinear terms by~\eqref{nullcondassumpintro}; 
this now includes many examples. See also the discussion of the weak null condition in Section~\ref{wnc} below.

\paragraph{The obstacle problem.} We have already remarked that, with a mild adaptation of the setup,
 one can also consider the case where the boundary
 $\mathcal{S}$ is in fact timelike, imposing
now also suitable boundary conditions on $\mathcal{S}$.
This is for instance the setting for the classical obstacle problem. The analogue of~\eqref{simplifiedestimate} 
has indeed been proven in the nontrapping case (with $\chi=1$ identically, i.e.~without degeneration), but also, with suitable degeneration functions $\chi$,
for many examples with nontrivial hyperbolic trapping (see for example~\cite{lafontaine}).
We note  again that the additional physical space estimate~\eqref{simplifiedestimatewithA} 
can always be retrieved in this case. Thus, this situation can also be incorporated in our framework.

\paragraph{Extremal black holes.}
Extremal black holes do not satisfy our assumptions on $(\mathcal{M},g_0)$, already
because the stationary vector field $T$ will not  be tangential to the boundary $\mathcal{S}$, but
more seriously, because $\chi$ must degenerate at the black hole horizon~\cite{SbierskiGauss}, where $T$ will not
be timelike. Nonetheless, a version of~\eqref{simplifiedestimate} 
has been proven in the extremal Reissner--Nordstr\"om
case~\cite{aretakisstability} (and in the extremal Kerr case, under the assumption of axisymmetry~\cite{Aretakis}), with an additional
hierarchy of degeneration associated to the horizon, and
nonlinear stability for semilinear equations has been proven~\cite{nonlinearextreme}, where, however,
an additional null structural condition is
required at the horizon, in full analogy with the situation at null infinity. It would be interesting to reformulate this latter work in the dyadic
setup used here. This is related to the nonlinear stability problem of extremal black holes.
See the discussion in~\cite{dhrtplus} and Conjecture IV.2 therein.

\paragraph{The asymptotically AdS case.}
Finally, although our assumptions are modelled on the asymptotically flat setting, one could also
try to reformulate things in the asymptotically anti-de Sitter case, appropriate for solutions of the Einstein equations with negative cosmological constant $\Lambda<0$ (for a discussion of the $\Lambda>0$
case see already Section~\ref{compwithlambdapos} below). Here, one must also impose boundary conditions
\emph{at infinity}, since infinity itself can be thought of as an asymptotic \emph{timelike} boundary. We note that under
reflecting boundary conditions, there exist periodic (and thus non-decaying)
solutions of the wave equation on pure AdS, while general solutions  on
the Schwarzschild--anti de Sitter and  Kerr--anti de Sitter case~\cite{holzegel2013decay, HolzSmulevici}, again
under such boundary conditions, decay
only inverse logarithmically. 
(This is due to stable photon orbits repeatedly reflecting off
of infinity only to return later having been repelled centrifugally by the black hole.)
Based on this lack of decay, pure AdS has in fact been proven to be nonlinearly \underline{unstable} under
reflecting boundary conditions as a solution to various Einstein--matter systems, where the problem
can be studied already under spherical symmetry~\cite{moschidis2018newproof}.
Thus, if one hopes to show nonlinear stability for an equation of the type~\eqref{theequationzero} on
asymptotically AdS
backgrounds, it is
more natural to consider \emph{dissipative} boundary conditions like those considered 
in~\cite{holzegeletalasmyptot}.
Note that on Kerr--AdS, 
one expects to indeed satisfy the analogue of~\eqref{simplifiedestimatewithA} with the help 
 of the Hawking--Reall Killing vector field, provided that the parameters satisfy the Hawking--Reall bound. 
 Thus, given a version of~\eqref{simplifiedestimate}, nonlinear
 stability for a suitable class of equations~\eqref{theequationzero} 
should be   tractable for all such Kerr--AdS parameters. 

\subsubsection{The dyadic approach vs.~the traditional approach}

The dyadic approach followed here is of course closely related to the more traditional approach.
Our $L$-weighted smallness translates easily into $t$-decay assumptions, and  if one wishes,
one can rewrite
the argument using a bootstrap with $t$-decaying norms.
We believe, however, that the dyadic localisation of the argument both makes the proof
more modular and  may serve to better
identify possible future refinements with respect to regularity and decay. 
One sees clearly, for instance,
that the  $L^{-1}$ smallness assumptions necessary for global existence
\emph{within a slab} are manifestly weaker than the $L^{-2+\delta}$  smallness necessary
to improve and iterate (in fact one may weaken this to $L^{-1-\epsilon}$).  Since bootstrap is only used to show existence \emph{within a slab}, this already 
simplifies the argument considerably.
We also note that the $L^{-1-\epsilon}$  threshold corresponds to pointwise decay $t^{-\frac12-\frac\epsilon2}$,
considerably weaker than the $t^{-1-\epsilon}$
``improved decay'' which is often invoked for
global existence and stability for~\eqref{theequationzero}. 
Proving sharper decay is of course an extremely important problem in itself (see~\cite{Angelopoulos:2016moe, hintz2022sharp, angelopoulos2021latetime, kehrberger2022case})
but is not necessary (or even particularly helpful) for nonlinear stability.

Let us also point out that for problems with gauge invariance,
like the Einstein equations,  dyadic localisation also provides a convenient opportunity to refresh the gauge.
This suggests an alternative approach to the global teleological normalisations of 
gauge  done in~\cite{holzstabofschw, dhrtplus}.

\subsubsection{Applications to the  Teukolsky equation and derived equations
 and to the nonlinear stability of black holes}
 \label{discussionofeinstein}

As discussed in Section~\ref{intrononlinearassump}, one motivation for the precise assumptions on
the nonlinearities of~\eqref{theequationzero} made here
is viewing these as a model
for understanding the Einstein equations,
when the latter are considered under suitable geometrically defined gauges.  Such gauges were
first  exploited analytically in the monumental proof of the nonlinear stability of Minkowski space~\cite{CK}.

The geometric-gauge approach to black hole stability was taken up in~\cite{holzstabofschw}, where
the problem was studied  in double null gauge, and the full linear stability of Schwarzschild was first proven. 
(See~\cite{Chr} for an introduction to
double null gauge, including a discussion of the characteristic initial value problem and a derivation of all equations.)
The setup is of course intimately
connected to the Newman--Penrose formalism~\cite{newmanpenrose}.  In the   system resulting from
linearising the reduced Einstein equations in double null gauge around Schwarzschild, 
the gauge invariant quantities are determined
by the extremal curvature components $\alpha$ and $\underline\alpha$, which each satisfy
the Teukolsky equation,  first derived  in this setting by~\cite{bardeen1973} (and generalised
to Kerr in~\cite{teukolsky1973}).
The transport equations satisfied by the residual gauge dependent quantities, on the other hand, are
always already \emph{linearly} coupled to the gauge invariant ones.  
To analyse first the gauge invariant quantities,~\cite{holzstabofschw} introduced
a pair of novel physical space  quantities $P$ and $\underline{P}$, related to $\alpha$ and $\underline\alpha$ 
by second order weighted null differential operators, but satisfying the more tractable Regge--Wheeler
equation (an equation first derived in a different context in~\cite{Regge}), 
which, unlike the Teukolsky equation, could be understood by the exact same methods as
the linear wave equation. In particular, an analogue
of~\eqref{simplifiedestimate} was proven, via a  physical space identity, for $P$ and $\underline{P}$, and this led to 
boundedness and decay through the hierarchical structure of the system, first for the quantities $P$ and $\underline{P}$ 
themselves, then for 
the original gauge-independent quantities $\alpha$ and $\underline\alpha$, and then, after teleological
normalisation of the double null gauge, of all residual gauge dependent quantities as well. 
(For a complete scattering theory for this system, see~\cite{masaood2020scattering, masaood2022scattering}. For other approaches to the gauge in Schwarzschild under linear theory, see~\cite{benomio2022new} and the references discussed in Section~\ref{wnc} below.) 
The origin of this relation
between Teukolsky and Regge--Wheeler
goes back to the fixed frequency transformation theory of~\cite{Chandraschw}.
On the basis of this linear theory, the full nonlinear stability of the Schwarzschild family, without symmetry,
was  proven
in our~\cite{dhrtplus}. (For previous nonlinear results for Schwarzschild under symmetry assumptions, see~\cite{maththeory, DRPrice} in the case of the Einstein-scalar field system under spherical symmetry, 
and~\cite{holzbiax,klainerman2020global} for vacuum, the former in the higher dimensional case under biaxial Bianchi symmetry, and the latter under polarised axisymmetry, reducing to $2+1$ dimensions, the first such result beyond $1+1$ dimensional reductions.)
Note that since Schwarzschild is a co-dimension $3$ subfamily of Kerr, 
nonlinear asymptotic stability of Schwarzschild  refers to constructing the full (teleologically defined) codimension $3$ set of initial data which asymptote to Schwarzschild in the future.

A generalisation of the quantities $P$ and $\underline{P}$ of~\cite{holzstabofschw} to the slowly
rotating Kerr case $|a|\ll M$
was  given independently by~\cite{DHRteuk, Ma:2017yui2}.
The equations satisfied by these quantities are now however (weakly) coupled to the quantities
satisfying the Teukolsky equation.  The works~\cite{DHRteuk, Ma:2017yui2} both show an analogue
of~\eqref{simplifiedestimate} for this system, not via a physical space identity however but
 based on the framework for frequency localised estimates on Kerr introduced in~\cite{DafRodsmalla}. 
These results were followed by full linear stability statements for the gauge dependent
quantities in various gauges~\cite{Andersson:2019dwi, Hafner:2019kov} (the work~\cite{Hafner:2019kov} considers
in fact harmonic gauge, cf.~the discussion in Section~\ref{wnc}).
The estimates of~\cite{DHRteuk, Ma:2017yui2} have recently been reformulated by~\cite{giorgietal}
in the language of the higher order physical space commutation by 
the Carter tensor first introduced in~\cite{AndBlue}. The work~\cite{giorgietal} then  uses this, among other ingredients,   in
the context of the formalism~\cite{giorgi2020general} to obtain
nonlinear stability results for the very slowly
rotating Kerr case $|a|\ll M$, culminating an impressive series of preprints.
For  generalisations to the  Einstein--Maxwell system
see~\cite{giorgi, giorgi2021carter, apetroaie2022instability}.

 The full subextremal Kerr case $|a|< M$  is  more subtle, as it cannot be understood as a small
 perturbation of Schwarzschild. Remarkably,  however,
an  analogue of~\eqref{simplifiedestimate} for this system has been obtained in the full subextremal case~\cite{RitaShlap, RitaShlap2},
using the already highly non-trivial mode stability results~\cite{Whiting, SRT, daCosta:2019muf}
(see also~\cite{doi:10.1063/1.4991656}), but, also, non-trivial new high frequency 
structure with no apparent precise
analogue in the context of the wave equation. To exploit frequency analysis, the proof of~\cite{RitaShlap}
 uses a version of the frequency localisation framework of~\cite{partiii}.
(In connection with Teukolsky equation, we also note~\cite{ma2023sharp, millet2023optimal} 
for precise asymptotics
in the special case of smooth compactly supported data. For a discussion of what are the
appropriate initial assumptions on data for the Teukolsky equation in various physical settings, see~\cite{Gajic_2022}.)

In view of the method introduced in the present paper, it should be clear 
that there is absolutely nothing  to fear in these type of frequency 
localisations for nonlinear applications, provided that they are  indeed used to prove a spacetime localised statement which 
can be expressed in the form~\eqref{simplifiedestimate}. The results of~\cite{DHRteuk, Ma:2017yui2}  and~\cite{RitaShlap} 
can thus in principle be used \emph{directly} for the nonlinear problem, in fact, as ``black box'' results. 
Thus, in our view, given the breakthrough of~\cite{RitaShlap},
the technical complications  in extending the nonlinear Schwarzschild analysis of~\cite{dhrtplus}  
to the full subextremal case  $|a|<M$ of Kerr
may not be as severe as one might naively have thought.

\subsubsection{Harmonic gauge and the weak null condition}
\label{wnc}

In our discussion in Section~\ref{discussionofeinstein} we have focussed above primarily on ``geometric'' gauges,
but as is well known, 
another way to relate~\eqref{theequationzero} to the Einstein equations is via the harmonic gauge
condition (see for instance~\cite{LindRodAnn}, where the nonlinear stability of Minkowski space is proven
in this gauge). 
The resulting reduced equations (for $\psi^{\mu\nu}=g^{\mu\nu}-g^{\mu\nu}_0$)
produce  additional complicated linear terms, and moreover fail to satisfy the null condition, although
they do satisfy the so-called \emph{weak null condition} introduced in~\cite{LindRodweak}.
Note that a linear stability result has been given for this system
in the Schwarzschild case~\cite{Johnson:2018yci} (see also~\cite{2018arXiv180303881H}), and as mentioned earlier, also
on very slowly rotating Kerr  $|a|\ll M$ in~\cite{Hafner:2019kov}.

In order to model the Einstein equations in harmonic gauge,
 classes of equations~\eqref{theequationzero} satisfying the weak null condition 
have  been studied in the recent~\cite{lindblad2016global, lindtoh, lindblad2022weak}.
Though we do not here implement our method in this latter setting,
nonetheless, we emphasise that our analysis
can in principle be extended to equations or systems
 satisfying the weak null condition, following~\cite{Keir:2018qzh}, under an appropriate
  black box assumption for the linearisation.

\subsubsection{Comparison with the $\Lambda>0$ case}
\label{compwithlambdapos}

Finally, it is interesting to compare with the $\Lambda>0$ case.
Here, the analogue of the Schwarzschild and Kerr black holes are the 
 Schwarzschild--de Sitter and Kerr--de Sitter family.  See~\cite{Mihalisnotes}
for a discussion of their geometry.
The study of decay for semilinear and quasilinear
equations of the type~\eqref{theequationzero} on Kerr--de Sitter was pioneered 
by Hintz and Vasy~\cite{hintzvasyglobal}, eventually leading to their groundbreaking
proof of the nonlinear stability of very slowly rotating ($|a|\ll M,\Lambda$)
 Kerr--de Sitter in harmonic gauge. The proof appeals to extensive machinery from
microlocal analysis, with an elaborate compactification of spacetime, and with a Nash--Moser iteration.
For a more recent approach
replacing  Nash--Moser iteration with a global bootstrap, see~\cite{fang}.

Returning to the model equation~\eqref{theequationzero}, an elementary  new
 method was recently introduced by Mavrogiannis~\cite{mavrogiannis, mavrogiannis_2022} 
to treat nonlinear stability on Schwarzschild--de Sitter and Kerr--de Sitter backgrounds.
In~\cite{mavrogiannis}, the entire analysis is reduced
to a ``black box'' estimate for the linear problem on timeslabs of some \emph{fixed} length $L$.
The required black box linear estimate, however, is not just the analogue of the (degenerate)
integrated local energy estimate~\eqref{simplifiedestimate}  (first proven
 in this context in~\cite{Dafermos:2007jd}), 
but a higher order 
refinement of~\eqref{simplifiedestimate} which is \emph{relatively
non-degenerate}, i.e.~where the bulk and boundary term have identical degeneration function and
 thus again are comparable. 
This is possible through a commutation with a
globally defined well chosen vector field orthogonal to the  photon sphere and vanishing at the horizons.
This is an analogue of an energy originally constructed in 
the Schwarzschild case in~\cite{holzegel2020note}. (In the Kerr--de Sitter case, both the degenerate
integrated local energy decay statement and the accompanying relatively non-degenerate statement
are now proven using frequency localisation in a framework similar to~\cite{partiii}.)
Exponential decay is then a derived statement which follows from iterating a suitable estimate
on consecutive slabs, each now of fixed length $L$.

In comparison to the asymptotically flat case studied in the present paper, 
it is noteworthy that in the Kerr--de Sitter case, one does 
\emph{not}  require an additional non-trivial physical space-based ingredient, analogous to~\eqref{simplifiedestimatewithA},
 other than this new, relatively non-degenerate version of the black box linear estimate~\eqref{simplifiedestimate}. 
From the time-slab localised point of view of the two works, the reason for the difference is clear:
Since estimates in~\cite{mavrogiannis} have been reduced to a fixed time scale $L$, 
the role of~\eqref{simplifiedestimatewithA} is essentially provided by \emph{Cauchy stability},
a completely soft statement.

 From this point of view, the difference between
the Schwarzschild and Kerr case on the one hand and their de Sitter analogues on the other is more than simply one between
polynomial and exponential decay, but is one between \emph{dyadically localised} and \emph{truly local}.
In this approach, it is this fundamentally local nature
of the analysis in the de Sitter case that renders 
the \emph{nonlinear} aspects of   stability problems on such de Sitter backgrounds to be essentially soft.

\subsection{Outline of the paper}
We end with an outline of the remainder of the paper.

In Section~\ref{backgroundgeom} we shall describe the geometric assumptions on the background
spacetime, followed in Section~\ref{waveequassump} by the assumptions on properties of the linear inhomogeneous
equation~\eqref{linearhomogeq}.
We shall then introduce in Section~\ref{prelimsec} the class of
nonlinear equations~\eqref{theequationzero}  that we shall consider, 
 stating in particular
a local well posedness theorem and continuation criterion. We shall give the precise statement of the main theorem
in Section~\ref{maintheoremsection}.  The proof will then be carried
out in Section~\ref{estimatehier}.

The paper also contains three appendices.  In Appendix~\ref{howto}, we show how to 
obtain~\eqref{simplifiedestimatewithA} from a physical space identity in the very slowly rotating Kerr case.
In Appendix~\ref{section:appedixcurrents}, we will show how to define the far away currents encoding the $r^p$ method
necessary to extend the estimates~\eqref{simplifiedestimate},~\eqref{simplifiedestimatewithA}
to~\eqref{simplifiedestimatepweight},~\eqref{simplifiedestimatewithApweight}.  
Finally, in Appendix~\ref{nonlineartermsatinf}, we show that our null condition assumption encoded
by the bounds~\eqref{nullcondassumpintro} indeed includes the classical null condition~\cite{KlNull}
and also the more general class of semilinear terms considered on Kerr
in~\cite{MR3082240}. 
In Appendix~\ref{theinhomo}, 
we show explicitly how to obtain the inhomogeneous estimate~\eqref{simplifiedestimate}  
in the Kerr case from the homogeneous estimates of~\cite{partiii}.
Together, these appendices show that our main theorem holds in particular
for a wide class of equations of the form~\eqref{theequationzero} on very slowly rotating $|a|\ll M$
Kerr backgrounds (and, for the semilinear case $g=g_0$, in the full subextremal range $|a|<M$).

\subsection{Acknowledgements}
MD acknowledges support through NSF grant DMS-2005464. GH acknowledges support by the Alexander von Humboldt Foundation in the framework of the Alexander von Humboldt Professorship endowed by the Federal Ministry of Education and Research and ERC Consolidator Grant 772249. 
IR acknowledges support through NSF grants DMS-2005464 and a Simons Investigator Award.
MT acknowledges support through Royal Society Tata University Research Fellowship URF\textbackslash R1\textbackslash 191409.

\section{Geometric assumptions on the background spacetime}
\label{backgroundgeom}

We consider a manifold $\mathcal{M}$ with a background metric $g_0$
satisfying certain assumptions. 
In this section, we collect the assumptions which concern directly the
geometry of $(\mathcal{M},g_0)$. 
The assumptions here will not be the most general possible but are sufficient to include
the main examples of interest. They can be easily further generalised in various directions if desired.
Some of the assumptions are in principle redundant but we have not attempted to derive them from
the most minimal considerations.
We note already
that the assumptions of this section
are modelled on (and are satisfied by---see Section~\ref{seeexamples}!)~the basic cases
of Minkowski space and the (extended) exterior regions of Schwarzschild and subextremal $|a|<M$ Kerr spacetimes.

\subsection{Underlying differential structure and the positive function $r$}
\label{thatwhichunderlies}
For definiteness, we consider the underlying differential structure of $\mathcal{M}$
to be given by $\mathbb R^{4}_{(x^0,x^1,x^2,x^3)}$ or alternatively by
the manifold with boundary $\mathbb R^4\setminus \{ |x^1|^2+|x^2|^2+|x^3|^2< r_0^2\}$.
(The black hole examples in fact correspond to the latter case.) In this latter case, let us define
\begin{equation}
\label{rdefhere}
r: =\sqrt{(x^1)^2+(x^2)^2+(x^3)^2}.
\end{equation}
In the former case where the underlying manifold is $\mathbb R^4$, let us pick an arbitrary $r_0>0$
and define 
\begin{equation}
\label{rdefherealt}
r: = f\left(\sqrt{(x^1)^2+(x^2)^2+(x^3)^2}\right).
\end{equation}
where $f:[0,\infty)\to [r_0,\infty)$ is a smooth function with $f'(v)>0$ such that
$f(z)=\sqrt{r_0^2+z^2}$ near $z=0$ and $f(z)=z$  for $z\ge 2r_0$. 
Thus, in all cases $r$ is a smooth positive function on $\mathcal{M}$
satisfying $r\ge r_0>0$. 
Let us also fix a large $R\ge 20r_0$.

We may also define ambient spherical coordinates in the usual way by the relation
\begin{equation}
\label{standardspherical}
(x^1,x^2,x^3) =(r\cos\varphi\sin\vartheta, r\sin\varphi\sin\vartheta, r\cos\vartheta).
\end{equation}

In the case of non-empty boundary, we will denote by  $\Omega_1=\partial_\varphi$, $\Omega_2$, and $\Omega_3$
the standard rotation vector fields associated to 
the ambient spherical coordinates $(x^0,r,\vartheta,\varphi)$. These are globally regular vector fields.

In the case where the underlying structure is $\mathbb R^{4}$, the above vector fields would only be regular in $r>r_0$.
In this case then, we will use the notation $\Omega_i$, $i=1,\ldots, 3$ for the above standard vector fields multiplied by $\omega(r)$, for a smooth cutoff function 
satisfying  $\omega(r)=1$ for $r\ge R/2$,
and $\omega(r)=0$ in a neighbourhood of $r_0$.
These are again globally regular vector fields.

\subsection{The Lorentzian metric $g_0$, time orientation and the causality of the boundary}
We assume that $g_0$ is a time oriented Lorentzian metric on $\mathcal{M}_0$
such that the coordinate vector field $\partial_{x^0}$ is future directed timelike for $r\ge R/6$.

In the case where $\mathcal{M}$ is a manifold with boundary $\mathcal{S}=\{r=r_0\}$, we assume that 
the boundary $\mathcal{S}$ is spacelike and the future directed normal points out of the spacetime.

\subsection{The stationary Killing field $T$}
Denoting now by $T$ the coordinate vector field 
\[
T=\partial_{x^0}
\]
with respect to the ambient coordinates $(x^0,x^1,x^2,x^3)$,
we assume that $T$ is Killing with respect to the metric $g_0$.
It follows by the previous assumptions
that the ambient function $r$ above is $T$-invariant
and that $T$  is future directed
timelike in the region $r\ge R/6$. Let us further assume that  $g(T,T)\to -1$ as $r\to \infty$.

We will denote by $\phi_\tau$ the $1$-parameter group of diffeomorphisms generated by $T$.

In the case where $\mathcal{S}=\emptyset$, it is natural to already assume that $T$ is globally
timelike, in view of the Friedman instability~\cite{mosxfried} and its evanescent analogue~\cite{keirevan} which applies
in the marginal case, both of which would be incompatible with assumptions we will make later on
concerning the behaviour of waves.

In the case where $\mathcal{S}\neq\emptyset$, we notice that $T$ cannot be globally timelike as $T$
is tangential to and thus  spacelike on $\mathcal{S}=\{r=r_0\}$. 
In this case we will need to assume the existence of a Killing horizon.

\subsection{The Killing horizon $\mathcal{H}^+$}
\label{Killinghorizonsection}

If $\mathcal{S}\ne\emptyset$ we will assume the following further properties:

We assume that there exists an $r_{\rm Killing}$ such that
the hypersurface $\mathcal{H}^+:=\{r=r_{\rm Killing}\}$ is a Killing horizon with future directed null generator $Z=T$, or more generally, with future directed null generator $Z$ in the span 
of $T$ and~$\Omega_1$, in which case we will assume that $\Omega_1$ is globally Killing.  
We will denote by $Z$ the globally defined Killing field given by this combination of $T$ and $\Omega_1$. 

We will assume furthermore that $\mathcal{H}^+$ has strictly positive surface gravity, i.e.~$\nabla_ZZ=\kappa(\vartheta,\varphi) Z$ for some $\kappa>0$
and we will assume that  $T$, or more generally the span of $T$ and $\Omega_1$, is timelike for $r>r_{\rm Killing}$.

Finally, in the case $\mathcal{S}\ne \emptyset$ we will assume that the vector field $\nabla r$ is future directed timelike in the region $r<r_{\rm Killing}$.
Note that $\nabla r$ on $\mathcal{H}^+=\{r=r_{\rm Killing}\}$ will be null and in the direction of $Z$.

\subsection{Foliations, subregions and volume forms}
\subsubsection{The foliation $\Sigma(\tau)$}
\label{thefoliationsection}

We will assume $\mathcal{M}$  admits a  hypersurface $\Sigma_0$ with $2$-dimensional 
corner at $\Sigma_0\cap \{r=R\}$, such that
$\Sigma_0\cap\{r \le R\}$ is strictly spacelike and $\Sigma_0\cap\{r\ge R\}$ is null
and $\Sigma_0$ is transversal to $T$.

We assume that, for $r'\ge R/2$, $\Sigma_0$ is transversal to the hypersurface $\{r=r'\}$, and that
$\Omega_1$, $\Omega_2$, $\Omega_3$ 
 are tangent to $\Sigma_0\cap \{r \ge R/2\}$.
In particular the $2$-dimensional space
$T_p(\Sigma_0\cap \{r \le R\}) \cap T_p(\Sigma_0\cap \{ r\ge R\})$
should coincide with the span of these vectors.

If $\mathcal{S}\ne\emptyset$, we assume that
$\Sigma_0$ is transversal to the hypersurface $\mathcal{S}$, and
$\Sigma_0 \cap\mathcal{S}$ is  diffeomorphic to the $2$-sphere.
We assume finally that the vector field $Z$ of Section~\ref{Killinghorizonsection} is orthogonal to $\Sigma_0\cap \mathcal{H}^+$,
while $\Omega_1$, $\Omega_2$, $\Omega_3$ are tangent to  $\Sigma_0\cap \mathcal{H}^+$.

We assume that
$\Sigma_0$ separates $\mathcal{M}$ into two connected components and that
$\Sigma_0$ is a past Cauchy hypersurface for $J^+(\Sigma_0)$. 

Denoting by $\Sigma(\tau):=(\phi_\tau)_*(\Sigma_0)$ we assume
\[
J^+(\Sigma_0) =\cup_{\tau\ge 0} \Sigma(\tau),
\]
where $J^+$ denotes causal future in $\mathcal{M}$ with respect to the metric $g_0$.

Clearly $\{\Sigma(\tau)\}_{\tau\in \mathbb R}$ defines a foliation of $\mathcal{M}$
and thus defines globally on $\mathcal{M}$ a Lipschitz function $\tau$, which is smooth separately on $r\le R$ and
$r\ge R$. 

Note that by our assumptions above, for $r\ge R/2$, we have that $\tau=\tau(x^0, r)$ and the triple $(r,\vartheta,\varphi)$
represent a smooth coordinate system on $\Sigma(\tau)\cap \{r \ge R\}$  (modulo the spherical degeneration).

We will denote by $L$ a smooth choice of  future directed null generator of $\Sigma_0\cap \{r \ge R\}$ normalised to satisfy
the constraint
$g(L,T)\sim -1$. By translation invariance, this extends to a smooth vector field on all of $\{r \ge R\}$ in the direction
of the null generator of $\Sigma(\tau)$.
(Note that we also use $L$ to denote a general length parameter; in practice these two notations will not interfere with one another.)

\subsubsection{The regions $\mathcal{R}(\tau_0,\tau_1)$}
Let us denote
\[
\mathcal{R}(\tau_0,\tau_1) :=\cup_{\tau_0\tau\le \tau_1} \Sigma(\tau).
\]
We shall refer to such regions as spacetime slabs. We will also use the notation
$\mathcal{R}(\tau_0,\infty) :=\cup_{\tau \ge \tau_0} \Sigma(\tau)$. Note that 
$\mathcal{R}(\tau_0,\infty) = J^+(\Sigma_0)$. 

We shall denote
\[
\mathcal{S}(\tau_0,\tau_1):= \mathcal{S}\cap \mathcal{R}(\tau_0,\tau_1).
\]

\subsubsection{The ingoing cones $\underline{C}_v$ and truncated regions}
We also assume that the region $r\ge R$ is foliated by translation invariant ``ingoing'' null cones
$\underline{C}_v$ parameterised by a smooth function $v$ defined on $r\ge R$, increasing towards the future, which may moreover
be expressed as $v(x^0, r)$. In particular, the vector fields $\Omega_i$ are tangent to $\underline{C}_v$.
We again define  a smooth future directed null generator $\underline{L}$ of $\underline{C}_v$ normalised
by
the constraint $g(\underline{L},T)\sim-1$, and translation invariant;  this defines a smooth vector field on $r\ge R$.

Let us assume moreover that $g(\underline{L},L)\sim -1$ globally in $r\ge R$.

Note that, under the above assumptions, $r$ is constant on $\underline{C}_v \cap \Sigma(\tau)$, and
the future boundary of $\underline{C}_v$ is the set $\{r=R\}\cap \Sigma(\tau(v))$, where this relation defined $\tau(v)$.

If $\tau_0\le \tau_1 \le \tau(v)$,  
we shall denote
\[
\mathcal{R}(\tau_0,\tau_1,v) :=\mathcal{R}(\tau_0,\tau_1)\setminus\bigcup_{\tilde{v}> v} \underline{C}_{\tilde{v}}
\]
and
\[
\Sigma(\tau,v):=\Sigma(\tau)\setminus\bigcup_{\tilde{v}> v} \underline{C}_{\tilde{v}} .
\]

The spacetime region $\mathcal{R}(\tau_0,\tau_1,v)$ is a compact subset of spacetime
with past boundary  $\Sigma(\tau_0,v)$ 
and future boundary $\mathcal{S}(\tau_0,\tau_1)\cup \Sigma(\tau_1,v) \cup 
 \underline{C}_{v}$.

\subsubsection{Comparison of volume forms}
\label{volumecomparisonsection}

We will assume that in the region $r\ge R$,  writing the volume form of $(\mathcal{M},g)$ as
\[
dV_{\mathcal{M}} =\upsilon(r, \vartheta,\varphi) d\tau\,  r^2 dr  \sin\vartheta d\vartheta d\varphi
\]
and, for $\Sigma(\tau)\cap \{r \ge R \}$, with the choice $L$ for the normal, as
\[
dV_{\Sigma(\tau)\cap \{r\ge R\} } := \tilde\upsilon(r,\vartheta,\varphi) r^2 dr \sin\vartheta d\vartheta d\varphi 
\] 
and for  $\underline{C}_v$, with the choice $\underline{L}$ for the normal, as
\[
dV_{ \underline{C}_v } := \tilde{\tilde\upsilon}(r,\vartheta,\varphi) r^2 dr \sin\vartheta d\vartheta d\varphi  ,
\]
then  
\[
\upsilon \sim \tilde\upsilon \sim \tilde{\tilde\upsilon} \sim 1.
\]

With this assumption,  it follows by the coarea formula and 
compactness that 
globally, the volume
form of $(\mathcal{M},g_0)$ is related to the volume form of $\Sigma(\tau)$
\[
dV_{\mathcal{M}} \sim d\tau \, dV_{\Sigma(\tau)},
\]
where $\sim$ is interpreted for $4$-forms in the obvious sense.

Note that when volume forms are omitted from integrals, the above induced volume forms from the metric $g_0$ will be 
understood, unless otherwise noted.

\subsection{Other vector fields}
It will be useful to extend the vector fields defined above to a global set of vector fields which span the tangent space
$T_x\mathcal{M}$ for all $x\in \mathcal{M}$.

\subsubsection{The global extensions of the  vector fields $L$, $\underline{L}$ and $\Omega_i$}
For notational convenience, in the case where $\mathcal{S}= \emptyset$,
let us define $\Omega_4=(1-\omega(r))\partial_{x^1}$, $\Omega_5=(1-\omega(r))\partial_{x^2}$, $\Omega_6=(1-\omega(r))\partial_{x^3}$,
where $\omega$ is as in Section~\ref{thatwhichunderlies},
and let us define $L$ and $\underline{L}$ to be translation invariant extensions of the vector fields defined already with the property
that 
$L$, $\underline{L}$ and $\Omega_1,\ldots, \Omega_6$ span the tangent space globally.
Note that this is easy to satisfy in general, and moreover we are not requiring that $L$ and $\underline{L}$ be null,
and the $\Omega_i$, $i=1,\ldots, 6$ are not required to be tangential to the ambient spheres for $r\le R/2$.
(See already Section~\ref{Minkowexam}.)

In the case where $\mathcal{S}\ne\emptyset$ we will define $L$ and $\underline{L}$ to be smooth $\phi_\tau$-invariant
 extensions of $L$ and $\underline{L}$ to $r_0\le r\le R$, with the property that 
 $L$, $\underline{L}$  everywhere span the  same plane spanned by the coordinate vector fields $\partial_{x^0}$ and $\partial_r$ of
 ambient spherical coordinates $(x^0,r,\vartheta,\varphi)$.
 This again can easily seen to be possible. Note again that we are not requiring these vector fields to be globally null.

\subsubsection{The notation $|\slashed\nabla\psi|^2$}
We define the notation 
\[
|\slashed\nabla \psi|^2 := \sum_{i} r^{-2} |\Omega_i\psi|^2.
\]
We note that under our assumptions, the above expression is comparable to the induced gradient squared
on the spacelike spheres $\Sigma(\tau)\cap \{r=r'\}$  in the region $r\ge R/2$.

In view, however, of our spanning assumptions, in both the case $\mathcal{S}=\emptyset$ and $\mathcal{S}\ne\emptyset$, the expression
\[
|L\psi|^2+|\underline L\psi|^2+ |\slashed\nabla \psi|^2 
\]
will always be a translation invariant
coercive expression on first derivatives of $\psi$, similar (by compactness) to any other such coercive
expression in $r\le R$, while  in the region $r\ge R$ it will have the right weights in $r$ so as to be a suitable
reference point for natural energy expressions.

\subsubsection{The one-forms $\varrho^L$, $\varrho^{\underline{L}}$, $\varrho^1$, $\varrho^2$ and $\varrho^3$}
\label{threeoneformssec}

We will need to introduce a set of weighted spanning $1$-forms on the sphere so as to properly measure weighted boundedness of forms.
Define one forms $\varrho^L$, $\varrho^{\underline{L}}$, and $\varrho^1$, $\varrho^2$, $\varrho^3$ on the far region 
$\mathcal{M} \cap \{ r \ge R\}$ as follows.

The region $\mathcal{M} \cap \{ r \ge R\}$ can be written as the product manifold $\mathcal{M} \cap \{ r \ge R\} = \mathbb{R} \times [R,\infty) \times S^2$. 
 Let $\pi \colon \mathcal{M} \cap \{ r \ge R\} \to \mathbb{R} \times [R,\infty)$ denote the canonical projection.  Let $\varrho^L$, $\varrho^{\underline{L}}$ be defined by $\varrho^L = \pi^* \tilde{\varrho}^L$ and $\varrho^{\underline{L}} = \pi^* \tilde{\varrho}^{\underline{L}}$, where $\tilde{\varrho}^L, \tilde{\varrho}^{\underline{L}}$ is the dual coframe of $\pi_* L, \pi_* \underline{L}$ on $\mathbb{R} \times [R,\infty)$.  It follows that
\[
	\varrho^L(L) = \varrho^{\underline{L}}(\underline{L}) = 1,
	\qquad
	\varrho^{L} (\underline{L}) = \varrho^{\underline{L}}(L) = \varrho^{L}(\Omega_i) = \varrho^{\underline{L}}(\Omega_i) = 0,
	\qquad
	i=1,2,3.
\]

For all smooth functions $\psi$, one can then write
\[
	d \psi = L\psi \, \varrho^L + \underline{L} \psi \, \varrho^{\underline{L}} + \slashed{d} \psi,
\]
where $\slashed{d} \psi (L) = \slashed{d} \psi(\underline{L}) = 0$.

For each $i=1,2,3$, there is a polar coordinate system $(\vartheta^i,\varphi^i)$ on $S^2$, associated to the corresponding rotation vector field $\Omega_i$, with the property that, in the $(x^0,r,\vartheta^i,\varphi^i)$ coordinate system for $\mathcal{M} \cap \{ r \ge R\}$, one has $\partial_{\varphi^i} = \Omega_i$.  Such a $(\vartheta^i,\varphi^i)$ coordinate system on $S^2$ is unique up to a choice of meridian.  Define then, for $i=1,2,3$,
\[
	\varrho^1 = r \sin \vartheta^1 d\varphi^1,
	\qquad
	\varrho^2 = r \sin \vartheta^2 d\varphi^2,
	\qquad
	\varrho^3 = r \sin \vartheta^3 d\varphi^3.
\]
There exist (non-unique, but universal) bounded functions ${a^i}_j(\vartheta,\varphi)$ for $i,j=1,2,3$, such that, for any smooth function 
$\psi \colon \mathcal{M} \cap \{ r \ge R\} \to \mathbb{R}$,
\begin{equation} \label{eq:dpsidualframe}
	d \psi = L\psi \, \varrho^L + \underline{L} \psi \, \varrho^{\underline{L}} + r^{-1} {a^i}_j(\vartheta,\varphi) \, \Omega_i \psi \, \varrho^j.
\end{equation}

\subsubsection{The vector field $Y$}
\label{Yvectorfieldsec}
In the case $\mathcal{S}\ne\emptyset$, 
we  may define $Y$ to be a $\phi_\tau$ invariant vector field such that $Y$ is future directed null on $\mathcal{H}^+$ and transversal to $\mathcal{H}^+$
and orthogonal to $\mathcal{H}^+\cap\Sigma_0$, $Y$ is supported in $r\le r_1+(r_2-r_1)/2$, 
for some $r_2>r_1>r_{\rm Killing}$,
and such that $Z$, $Y$, $\Omega_i$ span the tangent space
in $r\le r_1+(r_2-r_1)/4$.

 The existence of such a vector field in a neighbourhood of a Killing horizon
follows from~\cite{Mihalisnotes}.

In the case where $\mathcal{S}=\emptyset$, since we are assuming that $T$ is globally timelike,
we may simply set $Y=0$.

\subsection{Examples: Minkowski, Schwarzschild and Kerr}
\label{seeexamples}

We note that Schwarzschild and Kerr in the full subextremal black hole range of parameters $|a|< M$
satisfy the assumptions of this section with an appropriate definition of the underlying differential structure.

\subsubsection{Minkowski}
\label{Minkowexam}
For the Minkowski case, we consider the underlying manifold to be $\mathbb R^4$, i.e.~without boundary, and
 we define the metric to be the familiar expression
 \[
 g_0 =-(dx^0)^2 + (dx^1)^2+(dx^2)^2+(dx^3)^2.
 \]
We define $r_0:=1$ and this determines the function $r$. 
Let us distinguish this from $\tilde{r}:=\sqrt{(x^1)^2+(x^2)^2+(x^3)^2}$ which we may think of
as a function $\tilde{r}(r)$.
Fixing any $R\ge 20$, we note that $u=t-\tilde{r}$, $v=t+\tilde{r}$ are null coordinates.
We may define $L=\partial_u$, $\underline{L}=\partial_v$, with respect to coordinates $(u,v,\vartheta,\varphi)$,
 in the region $r\ge 2$.

We may take $\Sigma_0 =(\{t=0\}\cap \{r\le R\})\cup (\{  v=\tilde{r}(R)\} \cap\{ r\ge R\})$
and $\underline{C}_v$ defined by the level sets of $v$. 

Note that $r/\tilde{r}\sim 1$ for $r\ge r_0$ and that 
the spacetime volume forms is $\tilde{r}^2 \sin\vartheta d\tilde{r} d\tau d\vartheta d\varphi$,
while the volume form on $\Sigma_0\cap \{r\ge R\} $ and 
$\underline{C}_v$  is $\tilde{r}^2\sin\vartheta d\tilde{r} d\vartheta d\varphi$
with our choice of $L$ and $\underline{L}$.

Note also in this case, we define $\Omega_4=\eta(r) \partial_{x^1}$, $\Omega_5=\eta(r) \partial_{x^2}$
and $\Omega_6=\eta(r) \partial_{x^3}$  where $\eta$ is a cutoff vanishing with $\eta=1$ for $r\le \frac14$ and $\eta=0$ for $r \ge\frac12 $,
and we can extend $L$ and $\underline L$ simply by $L=(1-\eta)\partial_u+ \eta\partial_v$, 
$\underline{L}=(1-\eta)\partial_v$. (We may then define the cutoff $\omega$ of Section~\ref{thatwhichunderlies} so that $\omega=1$ 
for $r\ge \frac14$ and $\omega=0$
for $r\le \frac18$.)
We have of course $Y=0$ in this case.

\subsubsection{Schwarzschild}
For the Schwarzschild case, 
given a real parameter $M>0$, we may define
$r_0:=(2-\delta_1) M$ for sufficiently small  $0<\delta_1\ll M$, denote $x^0$ by $t^*$,
let $r$ be defined by~\eqref{theequationzero} and standard spherical coordinates $(\vartheta,\varphi)$ on $\mathbb R^3$ 
by~\eqref{standardspherical}
and define the metric
$g_M$ to be
\[
-(1-2M/r) (dt^*)^2 +4M/r dr dt^* + (1+2M/r)dr^2+r^2(d\vartheta^2+\sin^2\vartheta d\varphi^2)
\]
We define the vector fields $\Omega_i$ to be the standard Killing vector fields associated to  spherical symmetry
and we define $T=\partial_{t^*}$ to be the coordinate vector field with respect to the above coordinates.

Note that $r=2M$ is a Killing horizon with null generator $T$ and positive surface gravity, and the hypersurfaces
$r=r'$ for $r'\in [r_0,2M)$ are indeed spacelike.

We may define the function $t$ in the region $r>2M$ by $t= t^* -2M\log(r-2M)$.
We note that in the coordinates $(t,r,\theta,\phi)$, the metric takes the familiar form
\[
-(1-2M/r) dt^2+ (1-2M/r)^{-1}dr^2+ r^2(d\vartheta^2+\sin^2\vartheta d\varphi^2).
\]
We may define
$r^*=r+2M\log (r-2M)$ and we may define coordinates $u$ and $v$ in $r>2M$ by $u=t-r^*$, $v=t+r^*$.

We may fix  $R\ge 20 r_0$ and define now $L=\partial_v$ and $\underline{L}=\partial_u$ to be the coordinate vector fields
with respect to  $(u,v,\theta,\phi)$ coordinates in $r\ge R$. We may extend these to  be globally
defined linearly independent translation invariant vector fields in the span of the coordinate vector fields $\partial_{t*}$ and $\partial_r$.

We then may define $\Sigma_0 =(\{t^*=0\}\cap \{r\le R\})\cup (\{  v=R\} \cap\{ r\ge R\})$  and this satisfies all the required transversality properties, etc.   

Finally, we may define for instance $Y$ to be a translation invariant  vector field in the
span of  $\partial_{t*}$ and $\partial_r$ which is null and future directed and satisfies $g(\partial_{t*}, Y)=-2$
at $r=2M$, such that $Y$ is supported entirely in $r\le r_1+(r_2-r_1)/2$, with $r_1:=(2+\delta_1)M$.

\subsubsection{Kerr}
\label{kerrsubsubsection}
To put Kerr in our preferred form, one uses a combination of
Kerr star coordinates (based on Boyer--Lindquist coordinates) and double null coordinates. We leave the details to the reader, but note
the following. 

Given subextremal parameters $|a|<M$, and defining $r_+=M+\sqrt{M^2-a^2}$,
we set $r_0=r_+-\delta_1$ for a small $\delta_1$. 

One may define the ambient differential structure so that the $r$ of~\eqref{rdefhere} will coincide with the Boyer--Lindquist $r$ of Appendix~\ref{howto} in the region $r\le R/3$,
while for $r\ge R/2$ it will coincide with the coordinate $r_*$ of Appendix~\ref{section:Kerrcase}. The coordinate
$v$ coincides with the double null $v$ of Appendix~\ref{section:Kerrcase}, and 
$\Sigma(\tau)\cap \{r\ge R\}$ will be a hypersurface of constant $u$, where again $u$
is as defined in Appendix~\ref{section:Kerrcase}.

Further, one may set things up so that $\Omega_1=\partial_\varphi$ is the axisymmetric Killing field.

Let us note finally that extremal  Kerr (corresponding to $|a|=M$) cannot
in fact be put into our preferred form already because of our
requirement that $T$ be tangential to $\mathcal{S}$ and thus spacelike. Note also that the Killing horizon of extremal Kerr 
has zero surface gravity, which would also contradict our assumptions of Section~\ref{Killinghorizonsection}.

\subsection{Table of $r$-parameters}
\label{tableofrparameters}
We collect finally a list of important $r$-values in increasing order. Some will only be introduced later
in the paper:
\begin{center}
\begin{tabular}{ |c|c| } 
\hline
$r_0$ & $r\ge r_0$; moreover if $\mathcal{S}\ne\emptyset$, then $\mathcal{S}=\{r=r_0\}$ \\
\hline
$r_{\rm Killing}$ & $\mathcal{H}^+=\{r=r_{\rm Killing}\}$ a Killing horizon with positive surface gravity;\\
&  span of $T$ and $\Omega_1$ is timelike for $r> r_{\rm Killing}$ \\
\hline
$r_1$  & parameter related to the vector field $Y$  if $\mathcal{S}\ne\emptyset$ \\
\hline
$r_1+(r_2-r_1)/4$ &  $Z$, $Y$, $\Omega_i$ span the tangent space for $r_0\le r_1+(r_2-r_1)/4$\\
\hline
$r_1+(r_2-r_1)/2$ &  commutation vector fields $\mathfrak{D}$  all Killing for $r_1+(r_2-r_1)/2\le r\le R/2$\\
\hline
$r_2$ &  $\rho=1$, $\chi=1$ for $r\le r_2$\\
\hline
$R/6$ &   $T$ timelike for $r\ge R/6$ \\
\hline
$R/4$ &   $\rho=1$, $\chi =1$ for $r\ge R/4$ \\
\hline
$R/2$ &  commutation vector fields $\mathfrak{D}$  all Killing for $r_1+(r_2-r_1)/2\le r\le R/2$\, ; \\
& $g=g_0$ for $r\ge R/2$ \\
\hline
$8R/9$ &   the generalised null condition assumption concerns $r\ge 8R/9$  \\
\hline
$R$ &   $\Sigma(\tau)\cap \{r\ge R\}$ is null,  $\underline{C}_v\subset \{r\ge R\}$ \\
\hline
$\tilde{R}$ &   parameter related to positivity properties of far-away currents \\
\hline
\end{tabular}
\end{center}

The parameters $r_{\rm Killing}$, $r_1$, $r_2$ only occur if $\mathcal{S}\ne \emptyset$.

\section{Assumed identities and estimates for $\Box_{g_0}\psi=F$}
\label{waveequassump}

Our fundamental assumptions in this paper are connected with the behaviour of  solutions of the linear
inhomogeneous equation~\eqref{linearhomogeq} on the exactly stationary background $g_0$.

\subsection{Constants and parameters}
\label{constantsandparameterssec}
Before stating assumptions, we make a remark concerning constants and parameters.

Given a spacetime $(\mathcal{M},g_0)$ satisfying the assumptions of Section~\ref{backgroundgeom},
we will consider the parameters of Section~\ref{tableofrparameters} as fixed.
Let us also  fix once and for all a 
\begin{equation}
\label{fixeddelta}
0<\delta<\frac1{10}.
\end{equation}

We will use $k$ to denote integers $\ge0$ which will parametrise number of derivatives.

In inequalities, 
we will denote by $C$ and $c$ generic positive constants depending only on (a) $(\mathcal{M},g_0)$
(with the choice of $r$-parameters), (b) the above choice of $\delta$, and (c)
if there is $k$-dependence
in the relevant statement, also on $k$.
(We use $C$ for large constants and $c$ for small constants.)

For nonnegative quantities $A$ and $B$, the notation 
\[
A\lesssim B
\]  
means $A\le CB$, while 
\[
A\sim B
\]
means $cB\le A\le CB$.

The reader should be aware to distinguish between $\leq$ and $\lesssim$,
as both will appear!

For a discussion of additional smallness parameters depending also on the nonlinearity, see already Section~\ref{explanofsmall}.

\subsection{Basic degenerate integrated local energy estimate}
\label{basicblackboxestimatesec}

As discussed in the introduction, our basic assumption will be that of a (degenerate)
spacetime-localised integrated local energy decay statement for the inhomogeneous equation~\eqref{linearhomogeq}.

The statement is that
a certain energy flux on $\Sigma(\tau)$
plus a
 nonnegative bulk are controlled by an initial energy flux on $\Sigma(\tau_0)$ and a spacetime
integral over the slab $\mathcal{R}(\tau_0,\tau)$  relating to the inhomogeneous term $F$. The controlled
 bulk integral is allowed to degenerate where a certain degeneration function $\chi= \chi(r)$
vanishes, but is assumed to control a zeroth order term without degeneration (with decaying weight at infinity).

Let us thus define
\begin{equation}
\label{chidef}
\chi:[r_0,\infty)\to [0,1]
\end{equation}
to be a function such that $\chi=1$ for $r\ge R/4$ and $r\le r_2$.

The assumed estimate is:
\begin{equation}
\begin{aligned}
\label{inhomogeneous}
\sup_{v: \tau\le \tau(v)}
\Fzero(v,\tau_0,\tau),\qquad& \Ezero(\tau) + c \Ezero_{\mathcal{S}}(\tau) 
+ c\int_{\tau_0}^{\tau}{}^\chi\Ezerominusoneminusdelta'(\tau')
d\tau' +c \int_{\tau_0}^{\tau} \Eminusone' (\tau') d\tau' \\
&\leq \lambda \Ezero(\tau_0)+ C \int_{\mathcal{R}(\tau_0,\tau)}\left|(V_0^\mu\partial_\mu \psi +w_0\psi)   F\right| +C \int_{\mathcal{R}(\tau_0,\tau)}F^2,
\end{aligned}
\end{equation}
for some constants $\lambda\ge 1$,  $C\ge1$, $0<c<1$,
and where
$V_0$ is a fixed vector field and $w_0$ a fixed function satisfying
\begin{equation}
\label{heresomeboundssat}
|g_0(V_0,L)| \lesssim 1, \qquad  |g_0(V_0,\underline{L})|\lesssim 1, \qquad  \sum |g_0(V_0,\Omega_i)|^2  \lesssim 1, \qquad
|w_0| \lesssim r^{-1}.
\end{equation}
Here, the unprimed energies are defined by 
\begin{align}
\label{firstdef}
\Ezero(\tau)&:=  \int_{\Sigma(\tau)} |L\psi|^2+|\slashed\nabla\psi|^2 + \iota_{r\le R} |\underline{L}\psi|^2+r^{-2}\psi^2,\\
\label{extradef}
\Ezero_{\mathcal{S}}(\tau)&:=  \int_{\mathcal{S}(\tau_0,\tau)} |L\psi|^2+|\slashed\nabla\psi|^2 + |\underline{L}\psi|^2+\psi^2,\\
\label{seconddef}
\Fzero(\tau_0,\tau,v)&: = \int_{\underline{C}_v\cap \mathcal{R}(\tau_0,\tau)} |\underline{L}\psi|^2+|\slashed\nabla\psi|^2+r^{-2}\psi^2,
\end{align}
while the primed energies are defined by
\begin{align}
\nonumber
{}^\chi\Ezerominusoneminusdelta'(\tau)&:=
 \int_{\Sigma(\tau)} 
  r^{-1-\delta}\chi(r) (|L\psi|^2+|\underline{L}\psi|^2+|\slashed\nabla\psi|^2),\\
\label{noteherelower}
\Eminusone'(\tau)&:=\int_{\Sigma(\tau)} r^{-3-\delta}\psi^2.
\end{align}
(The $'$ notation will be used in general to denote energies that naturally appear in bulk terms.)
The estimate~\eqref{inhomogeneous} is to hold for all smooth $\psi$ such that the right hand side
of~\eqref{inhomogeneous} is finite.  In the above, we already see the~$\delta$ fixed in~\eqref{fixeddelta}.
For future reference let us also define the quantity
\begin{align}
\label{refertothislaternow}
\Ezerominusoneminusdelta'(\tau)&:=
 \int_{\Sigma(\tau)} 
r^{-1-\delta} (|L\psi|^2+|\underline{L}\psi|^2+|\slashed\nabla\psi|^2) +r^{-3-\delta}\psi^2.
\end{align}
Note that if $\chi=1$ identically, then ${}^\chi\Ezerominusoneminusdelta'+ \Eminusone' =\Ezerominusoneminusdelta'$.
We note that all expressions defined above are $T$-invariant.

We distinguish the constants $C$ and $\lambda$ in~\eqref{inhomogeneous} to highlight the significance
of the case 
$\lambda=1$,  when~\eqref{inhomogeneous} is derived from a
suitable energy identity and the energies are replaced by exact fluxes.
See already Section~\ref{caseidiscussion}.

Note that in cases (i) and (ii), we shall
see that~\eqref{inhomogeneous} holds, and in fact one may drop the $\int F^2$ term
on the right hand side.

In fact, in all of the above cases more precise estimates than~\eqref{inhomogeneous} are available with
respect to what can actually be controlled by the right hand side; we shall not need to make use of
this here.

Again, the assumptions are motivated by our model cases of Minkowski space, Schwarzschild
and sub-extremal Kerr black hole exteriors.
The estimate~\eqref{inhomogeneous} indeed holds in the case of Kerr  in the full subextremal range $|a|<M$ (this is the statement of Theorem~\ref{trueforKerr} of Section~\ref{theinhomo}).
We note, however, the following remark:
\begin{remark}
\label{weakerassumption}
In our arguments in subsequent sections,  we will in fact only use the statement that 
follows from~\eqref{inhomogeneous} by
partitioning the middle term on the right hand side into the regions $r\le R$ and $r\ge R$,
and
 applying Cauchy--Schwarz to the former, replacing thus the full middle term
 with the expression
 \begin{equation}
\label{replacedterm}
\sqrt{
\int_{\mathcal{R}(\tau_0,\tau_1)\cap \{ r\le R \}} |L\psi|^2+|\underline L\psi|^2 +|\slashed\nabla\psi|^2 
+r^{-2}|\psi|^2}
\sqrt{\int_{\mathcal{R}(\tau_0,\tau_1)\cap \{ r\le R \}} F^2 
}
+
 \int_{\mathcal{R}(\tau_0,\tau)\cap \{r\ge R\}}\left|(V_0^\mu\partial_\mu \psi +w_0\psi)   F\right| .
\end{equation}
We could thus alternatively consider this weaker statement, i.e.~\eqref{inhomogeneous} but with~\eqref{replacedterm} replacing the middle term on the right hand side,
as our main assumption.  We prefer, however, to
 keep our general assumption in the form~\eqref{inhomogeneous} because it is more
compact and we do not know of a physical example where this weaker
assumption holds but our original assumption does not.  On the other hand,
it is often easier to prove the weaker assumption
directly than to prove~\eqref{inhomogeneous}. This is indeed the 
case  for Kerr in the full subextremal range $|a|<M$, where in 
  Section~\ref{theinhomo} we will in fact 
only give a proof of this weaker version.
\end{remark}

\subsection{Physical space identities on a general Lorentzian metric}
\label{letsgetphysical0}
As explained in the introduction, for nonlinear applications to~\eqref{theequationzero}, it is most convenient 
when~\eqref{inhomogeneous} is the result of a \underline{physical space energy} identity~\eqref{introphysspace}
for solutions $\psi$ of~\eqref{linearhomogeq}. Thus, we will recall some general properties of such identities here.

\subsubsection{Definition of energy currents}
Physical space energy  identities can be associated to a quadruple $(V,w,q,\varpi)$ where $V^\mu$ is a vector field on $\mathcal{M}$, $w$ is a scalar function, $q_\mu$ is a one-form and $\varpi_{\mu\nu}$ is a two-form.
Given such a quadruple, a general Lorentzian metric $g$ and a suitably regular function $\psi$, we define
\begin{align}
\label{generalJdef}
J^{V,w,q,\varpi}_\mu [g, \psi] &:= T_{\mu\nu} [g,\psi] V^\nu + w  \psi \partial_\mu \psi + q_\mu \psi^2  
+ * d \big( \psi^2 \varpi \big)_{\mu}
\\
\label{generalKdef}
K^{V,w,q}[g, \psi]  &:=  \pi^{V}_{\mu\nu} [g]  T^{\mu\nu}[g, \psi]  +  \nabla^\mu w \psi \partial_\mu \psi +w\nabla^\mu\psi \partial_\mu \psi +  \nabla^\mu q_\mu \psi^2 + 2\psi q_\mu g^{\mu\nu}
\partial_\nu \psi \\
\label{generalHdef}
H^{V,w}[\psi] &:= V^\mu\partial_\mu \psi + w\psi , 
\end{align}
where
\[
T_{\mu\nu} [g,\psi] =  \partial_\mu\psi\partial_\nu\psi-\frac12 g_{\mu\nu} g^{\alpha\beta}\partial_\alpha\psi\partial_\beta\psi,
\qquad \pi^X_{\mu\nu} [g] = \frac12 (\mathcal{L}_X g)_{\mu\nu} = \frac12(\nabla_\mu X_\nu+\nabla_\nu X_\mu),
\]
and where $* \colon \Lambda^3 \mathcal{M} \to \Lambda^1 \mathcal{M}$ denotes the Hodge star operator.

Let us note already that it is often more natural to parametrise choices of currents in a slightly different way,
as twisted currents~\cite{HOLZEGEL20142436}, for instance, given a scalar function, one may define a twisted energy momentum
tensor $\tilde{T}_{\mu\nu}[g,\psi]$ according to~\eqref{twistedT} and then consider for instance currents of the form
$\tilde{J}^V_\mu [g,\psi]:=\tilde{T}_{\mu\nu}[g,\psi] V^\nu$, etc. 
Such a current can always be rewritten as~\eqref{generalJdef} for some $w$, $q$, $\varpi$. 

We note that the additional $\varpi$ component in~\eqref{generalJdef} arises
naturally for twisted currents and  is useful in generating positive zeroth order flux terms on the boundary.

\subsubsection{The divergence identity}
With the above definitions, one can compute the following identity for a general function $\psi$:
\begin{equation}
\label{dividentitynotintegrated}
\nabla_{g}^\mu J^{V,w,q, \varpi}_{\mu} [g,\psi] = K^{V,w,q}[g,\psi ] + H^{V,w} [\psi] \Box_{g} \psi  .
\end{equation}

In particular, for solutions of the covariant wave equation $\Box_g\psi=0$, one obtains a divergence relation
between the currents $J$ and $K$ which both depend only on the $1$-jet of $\psi$. 
See~\cite{christodoulou2016action} for a discussion of the classification of currents with this property.
Note that $\varpi$ does not contribute to the bulk current $K$, and neither $\varpi$ nor $q$ contribute
to $H$, which moreover is independent of the metric $g$.

The significance of identity~\eqref{dividentitynotintegrated} is that it can be integrated in a spacetime
region bounded by homologous  hypersurfaces to obtain a relation between boundary fluxes
of $J$ and a bulk integral of $K$. We give the  form of this relation below in the special case
of the region $\mathcal{R}(\tau_0,\tau_1,v)\subset \mathcal{M}$.

\subsubsection{The integrated identity on $\mathcal{R}(\tau_0,\tau_1,v)$}

For a solution of~\eqref{linearhomogeq}, identity~\eqref{dividentitynotintegrated}
upon integration in $\mathcal{R}(\tau_0,\tau_1,v)$ yields
\begin{align}
\nonumber
\int_{\Sigma(\tau_1,v)} J[\psi]\cdot n +\int_{\mathcal{S}(\tau_0,\tau_1)}J[\psi]\cdot n +
\int_{\underline{C}_v(\tau_0,\tau_1)}J[\psi]\cdot n +
\int_{\mathcal{R}(\tau_0,\tau_1,v)} K[\psi] \\
\label{energyidentity}
=
\int_{\Sigma(\tau_0,v)} J[\psi]\cdot n-\int_{\mathcal{R}(\tau_0,\tau_1,v)} H[\psi]  F,
\end{align}
where the normals and volume forms are with respect to the metric $g_0$ according
to the convention of Section~\ref{volumecomparisonsection}.

Identity~\eqref{energyidentity} will be useful when it satisfies suitable \underline{positivity properties for its bulk and boundary terms}. 

\subsection{Assumed unweighted first order physical space identities: cases (i)--(iii)}
\label{letsgetphysical}

As discussed already in the introduction, we shall \underline{not} in general require that~\eqref{inhomogeneous} is the result of
an integrated divergence identity~\eqref{energyidentity} associated to currents with a pointwise coercivity property.
We shall, however, require, \emph{in addition} to assuming~\eqref{inhomogeneous},
 the existence of currents generating an identity~\eqref{energyidentity} with much 
weaker non-negativity properties, properties which in particular are insensitive to the presence and structure of possible trapping.

It is indeed useful to see first, however,
how the existence of a physical space    proof of~\eqref{inhomogeneous}
via an identity of type~\eqref{energyidentity}  simplifies the considerations.
Thus, we shall distinguish two simpler cases, to be called case (i) and (ii),
to be discussed in Sections~\ref{caseidiscussion} and~\ref{caseiidiscussion} below, where 
indeed~\eqref{inhomogeneous} is proven via an identity~\eqref{energyidentity},
(with case (i) corresponding to the even simpler setting where there is no degeneration
at all in the coercivity).

The most general case, case (iii), which represents the main goal of this paper,
will be discussed in Section~\ref{caseiiiidentity}.
We will introduce some helpful common notation in Section~\ref{sumaryonetwothree}. 

Finally, in Section~\ref{enhanced}, for the case $\mathcal{S}\ne\emptyset$,
we will provide a family of currents with enhanced red-shift control at $\mathcal{H}^+$ and in the black hole interior, parametrised
by a parameter $\varsigma$,
which will be useful for obtaining higher order estimates in Section~\ref{higherorderestimates}.

\subsubsection{Case (i)}
\label{caseidiscussion}

The simplest case to consider is when~\eqref{inhomogeneous} indeed
follows from a physical space energy identity~\eqref{energyidentity},
and when moreover, there is in fact no degeneration in the estimate~\eqref{inhomogeneous}, i.e.~the
function $\chi$ of~\eqref{chidef} satisfies $\chi=1$ identically.

That is to say,  we assume that there exists a $T$-invariant quadruple $(V,w,q,\varpi)$,
where $V$ is a vector field, $w$ is a scalar function, $q$ is a one form and $\varpi$ is a two form, satisfying
\begin{equation}
\begin{aligned}
\label{seedboundedness}
|g_0(V,L)| \lesssim 1, \qquad  |g_0(V,\underline{L})|\lesssim 1, \qquad  \sum |g_0(V,\Omega_i)|^2  \lesssim 1, \\
|w| \lesssim r^{-1}, \qquad
|q_\mu L^\mu| \lesssim r^{-2}, \qquad |q_\mu \underline{L}^\mu|\lesssim r^{-2}, \\
	\vert (*\, d\varpi)_{\mu} L^{\mu} \vert \lesssim r^{-2},
	\qquad
	\vert (*\, d\varpi)_{\mu} \underline{L}^{\mu} \vert \lesssim r^{-2},\\
	\vert *(\varrho^{L} \wedge \varpi)_{\mu} L^{\mu} \vert \lesssim r^{-1},
	\quad
	\vert *(\varrho^{L} \wedge \varpi)_{\mu} \underline{L}^{\mu} \vert \lesssim r^{-1},\\
	\vert *(\varrho^{\underline{L}} \wedge \varpi)_{\mu} L^{\mu} \vert \lesssim r^{-1},
	\quad
	\vert *(\varrho^{\underline{L}} \wedge \varpi)_{\mu} \underline{L}^{\mu} \vert \lesssim r^{-1},
\\
	\vert *(\varrho^{i} \wedge \varpi)_{\mu} L^{\mu} \vert \lesssim r^{-1},
	\quad
	\vert *(\varrho^{i} \wedge \varpi)_{\mu} \underline{L}^{\mu} \vert \lesssim r^{-1},
	\qquad
	i = 1,2,3,
\end{aligned}
\end{equation}
 such that,
defining $J^{V,w,q,\varpi}[\psi]$, $K^{V,w,q}$  by~\eqref{generalJdef}--\eqref{generalKdef}, 
the energy identity~\eqref{energyidentity} corresponding to these
currents has the following properties: 
(a) the boundary terms of~\eqref{energyidentity} on $\Sigma(\tau)$ are coercive,
(b) the remaining boundary terms nonnegative,
and (c)  the  bulk term $K$ is nonnegative and coercive, with no degeneration, except ``at infinity'',
where standard derivatives are in general only
controlled with weight $r^{-1-\delta}$. 

More precisely, we assume the following pointwise bulk coercivity relation
\begin{equation}
\begin{aligned}
\label{insymbolsi}
K^{V,w,q}[\psi] &\gtrsim r^{-1-\delta}\left((L\psi)^2+(\underline L\psi)^2 +|\slashed\nabla\psi|^2\right)+r^{-3-\delta}\psi^2,
\end{aligned}
\end{equation}
and pointwise boundary coercivity relations:
\begin{equation}
\begin{aligned}
\label{insymbolsiboundary}
\qquad J^{V,w,q,\varpi}_\mu[\psi] n_{\Sigma(\tau)}^\mu &\gtrsim (L\psi)^2 +|\slashed\nabla \psi|^2+\iota_{r\le R}( \underline{L}\psi)^2+r^{-2}\psi^2,\\
J^{V,w,q,\varpi}_\mu[\psi] n_{\underline{C}_v}^\mu &\gtrsim (\underline{L}\psi)^2 +|\slashed\nabla \psi|^2 +r^{-2}\psi^2
 ,\\
J^{V,w,q,\varpi}_\mu[\psi] n_{\mathcal{S}}^\mu &\gtrsim \left( (L\psi)^2+(\underline L\psi)^2 +|\slashed\nabla\psi|^2+\psi^2 \right) .
\end{aligned}
\end{equation}
Here  the normals are of course taken with respect to the metric $g_0$.

Let us note that the boundary coercivity statement on $\Sigma(\tau)$ can only possibly hold if $V$ is timelike on $\Sigma(\tau)$.

Defining
\begin{equation}
\begin{aligned}
\label{frakturenergies}
\Efancyzero(\tau)&:= \int_{\Sigma(\tau)}  J^{V,w,q,\varpi}_\mu[\psi] n_{\Sigma(\tau)}^\mu , \qquad
\Efancyzero_{\mathcal{S}}(\tau ) := \int_{\mathcal{S}} J^{V,w,q,\varpi}_\mu[\psi] n_{\mathcal{S}}^\mu \, ,\\
\Ffancyzero(v,\tau_0,\tau_1) &:=\int_{\underline{C}_v\cap \mathcal{R}(\tau_0,\tau_1)} J^{V,w,q,\varpi}_\mu[\psi] n_{\underline{C}_v}^\mu,
\end{aligned}
\end{equation}
it follows from~\eqref{insymbolsiboundary} that
\begin{equation}
\label{theyrelessthansimilar}
\Ezero(\tau)\lesssim \Efancyzero(\tau), 
\qquad 
\Ezero_{\mathcal{S}}(\tau) \lesssim \Efancyzero_{\mathcal{S}}(\tau),
\qquad  \Fzero(v,\tau_0,\tau_1)\lesssim \Ffancyzero(v, \tau_0,\tau_1).
\end{equation}
From~\eqref{seedboundedness}, 
it follows that in addition to the coercivity~\eqref{insymbolsiboundary}, we have the corresponding boundedness
\begin{equation}
\begin{aligned}
\label{infacttheyresim}
J^{V,w,q,\varpi}_\mu[\psi] n_{\Sigma(\tau)}^\mu &\lesssim (L\psi)^2 +|\slashed\nabla \psi|^2+\iota_{r\le R}( \underline{L}\psi)^2+r^{-2}\psi^2,\\
J^{V,w,q,\varpi}_\mu[\psi] n_{\underline{C}_v}^\mu &\lesssim (\underline{L}\psi)^2 +|\slashed\nabla \psi|^2 +r^{-2}\psi^2 ,\\
J^{V,w,q,\varpi}_\mu[\psi] n_{\mathcal{S}}^\mu &\lesssim  (L\psi)^2+(\underline L\psi)^2 +|\slashed\nabla\psi|^2+\psi^2  ,
\end{aligned}
\end{equation}
and thus
\begin{equation}
\label{theyresimilar}
\Ezero(\tau)\sim \Efancyzero_r(\tau), \qquad 
\Ezero_{\mathcal{S}}(\tau) \sim \Efancyzero_{\mathcal{S}}(\tau),
\qquad  \Fzero(v,\tau_0,\tau_1)\sim \Ffancyzero(v,\tau_0,\tau_1).
\end{equation}
Note that to estimate the boundary terms arising from the two-form
$\varpi$, we have used~\eqref{eq:dpsidualframe} and the fact that
\[
	*\, d(\psi^2 \varpi)
	=
	2 \psi *(d \psi \wedge \varpi) + \psi^2 * d \varpi.
\]

Under the above assumptions,
in the notation of Section~\ref{basicblackboxestimatesec} (recall now~\eqref{refertothislaternow}), the 
identity~\eqref{energyidentity}
 gives rise
to the estimate:
\begin{equation}
\begin{aligned}
\label{firstinstance}
\sup_{v: \tau\le \tau(v)}
\Ffancyzero(v,\tau_0,\tau)   ,\qquad& \Efancyzero(\tau) + \Efancyzero_{\mathcal{S}}(\tau)+c \int_{\tau_0}^{\tau_1}\Ezerominusoneminusdelta'(\tau')d\tau' 
\leq \Efancyzero(\tau_0)+ \int_{\mathcal{R}(\tau_0,\tau_1)}
\left|H[\psi] F\right|.
\end{aligned}
\end{equation}
We note that, in view of~\eqref{theyrelessthansimilar},~\eqref{theyresimilar}
and the fact that we may express
\[
H[\psi] = V^\mu\partial_\mu\psi +w\psi,
\]
 this gives~\eqref{inhomogeneous}, for some $\lambda\ge 1$ and
 without degeneration, i.e.~with $\chi=1$, and with $V_0=V$ and $w_0=w$,  and where 
  including the final
 $\int F^2$ term in~\eqref{inhomogeneous} is here unnecessary.
The point of expressing the estimate in terms of the fraktur energies~\eqref{frakturenergies} is that~\eqref{firstinstance} is
a sharper  statement than~\eqref{inhomogeneous}, corresponding to $\lambda=1$,
which will be useful for us.

Let us note immediately that Minkowski space itself, but also suitably small stationary perturbations of the Minkowski metric, satisfy
the assumptions of this section (see Appendix~\ref{section:appedixcurrents}). More generally, given a metric $g_0$
as in Section~\ref{backgroundgeom} and a $T$-invariant quadruple $(V,w,q,\varpi)$ satisfying~\eqref{seedboundedness},
whose energy currents $J^{V,w,q, \varpi}[g_0,\psi]$, $K^{V,w,q}[g_0,\psi]$ 
satisfy the above coercivity properties~\eqref{insymbolsi},~\eqref{insymbolsiboundary} and~\eqref{infacttheyresim},
it is clear that $J^{V,w,q, \varpi}[g_\epsilon,\psi]$, $K^{V,w,q}[g_\epsilon,\psi]$ retain
the coercivity properties~\eqref{insymbolsi},~\eqref{insymbolsiboundary} and~\eqref{infacttheyresim} for
any stationary perturbation $g_\epsilon$ of $g_0$ satisfying the assumptions of Section~\ref{backgroundgeom}, 
sufficiently close to $g$, such that $g=g_\epsilon$ in $r\ge R$. Thus, we see that when,
 as in the present section, estimate~\eqref{inhomogeneous} 
is proven via~\eqref{firstinstance},
and there
is no degeneration, i.e.~$\chi=1$, then the estimate~\eqref{inhomogeneous} 
can immediately  be inferred to be stable to suitably small perturbations
of the metric~$g_0$.

Unfortunately, however, for most of our examples of spacetimes $(\mathcal{M},g_0)$ of interest, it turns out that
the assumptions of the present section cannot in fact hold.
Specifically, if $(\mathcal{M},g_0)$ contains  trapped null geodesics, then one can show
that~\eqref{inhomogeneous} cannot hold with $\chi=1$ identically, and thus,
no quadruple $(V,w,q,\varpi)$ as above can exist satisfying~\eqref{insymbolsi}; 
see~\cite{SbierskiGauss}. 
In particular, the assumptions of this section do not encompass
 the black hole cases of interest like Schwarzschild or Kerr.

\subsubsection{Case (ii)}
\label{caseiidiscussion}

The next simplest case is when estimate~\eqref{inhomogeneous} again
follows from the coercivity properties of a suitable physical space energy identity~\eqref{energyidentity},
but where the degeneration function~$\chi$ of~\eqref{chidef} is now non-trivial, potentially
vanishing on some set.

More precisely, we assume that
there exists a $T$-invariant quadruple $(V,w,q,\varpi)$, 
again bounded in the sense of~\eqref{seedboundedness}, but now satisfying the following relaxed coercivity properties.
Defining currents $J^{V,w,q,\varpi}$, $K^{V,w,q}$  by~\eqref{generalJdef}--\eqref{generalKdef}, we again
assume the boundary coercivity properties~\eqref{insymbolsiboundary} on $J^{V,w,q,\varpi}$, but we weaken
the bulk coercivity assumption on $K^{V,w,q}$  to
\begin{equation}
\label{insymbolsii}
K^{V,w,q}[\psi] \gtrsim \chi(r) r^{-1-\delta}\left((L\psi)^2+(\underline L\psi)^2 +|\slashed\nabla\psi|^2\right)+r^{-3-\delta}\psi^2,
\end{equation}
where $\chi$ is the function~\eqref{chidef} in Section~\ref{basicblackboxestimatesec}.

We define the fraktur energies again by~\eqref{frakturenergies}, and we note that again we 
have~\eqref{theyrelessthansimilar}, and, in view of~\eqref{seedboundedness}, also~\eqref{infacttheyresim}, and thus  \eqref{theyresimilar}.
Under the above assumptions, the identity~\eqref{energyidentity} gives rise 
to the estimate
\begin{equation}
\begin{aligned}
\label{firstinstancebutnot}
\sup_{v: \tau\le \tau(v)}\Ffancyzero(v,\tau_0,\tau),    \qquad &\Efancyzero(\tau)  + \Efancyzero_{\mathcal{S}}(\tau)+c \int_{\tau_0}^{\tau}{}^\chi\Ezerominusoneminusdelta'(\tau')
d\tau' +c \int_{\tau_0}^{\tau} \Eminusone' (\tau') d\tau' \\
&\leq \Efancyzero(\tau_0)+ \int_{\mathcal{R}(\tau_0,\tau)}
\left| H[\psi] F\right|.
\end{aligned}
\end{equation}
We note again that, in view of~\eqref{theyrelessthansimilar} and~\eqref{theyresimilar},
 estimate~\eqref{firstinstancebutnot} indeed implies~\eqref{inhomogeneous}, for some $\lambda\ge 1$,
 with $V_0=V$ and $w_0=w$
  and where 
  including the final
 $\int F^2$ term in~\eqref{inhomogeneous} is here unnecessary.
As with for~\eqref{firstinstance} of case (i), the
point of expressing the estimate in terms of the fraktur energies~\eqref{frakturenergies} is that~\eqref{firstinstance} is
a sharper  statement than~\eqref{inhomogeneous}, corresponding to $\lambda=1$,
which will be useful for us.

We note that a current-defining quadruple $(V,w,q,\varpi)$ satisfying the properties of this section indeed exists for
the Schwarzschild metric and can be constructed from the 
considerations in~\cite{dafrod2007note, MR2563798}.   
Note that a pre-requisite for even the degenerate bulk coercivity property~\eqref{insymbolsii}
is that any trapped null geodesics be ``unstable'' in a suitable sense (cf.~the Schwarzschild--AdS case with reflective boundary conditions at infinity,
where there exist stably trapped null geodesics~\cite{HolzSmulevici}). 

In the Kerr case for all $|a|\ne0$, even
though all trapped null geodesics are again unstable, 
and even though estimate~\eqref{inhomogeneous} is true (by~\cite{partiii}),
one can show that no quadruple $(V,w,q,\varpi)$ can give rise to currents satisfying
the coercivity properties of this section (see~\cite{Alinhacpert}).  
(For a higher order current defined using second order operators which gives an analogue of the coercivity properties here
in the $|a|\ll M$ case,
see however Andersson--Blue~\cite{AndBlue}.)

One of the main motivations of this paper is to show that, from a suitable point of view, not only is a purely physical
space proof of~\eqref{inhomogeneous} unnecessary for nonlinear applications, but such a proof would only result in a minor and inessential simplification of the argument. We now turn to our main case of interest,
case (iii).

\subsubsection{Case (iii)}
\label{caseiiiidentity}
The most general case we wish to consider in this paper
is where the estimate~\eqref{inhomogeneous} is assumed as a ``black box'', i.e.~it is \underline{not}
necessarily
the consequence of the coercivity properties of some more fundamental physical  space identity~\eqref{energyidentity}.
We note for instance that~\eqref{inhomogeneous} indeed holds when $(\mathcal{M},g_0)$ is Kerr, in fact for the full subextremal range
$|a|<M$ of parameters~\cite{partiii}, but as discussed above,   it
 does not arise (see~\cite{Alinhacpert}) from a current as in case (ii).

In this most general case, however, in addition to assuming~\eqref{inhomogeneous},
we will still make a further assumption on the existence of an auxiliary pair of currents 
$J^{V,w,q,\varpi}$, $K^{V,w,q}$,
 whose
energy identity will \underline{not} in general imply~\eqref{inhomogeneous} but rather will be used \emph{in combination} with~\eqref{inhomogeneous}. This auxiliary current will have the following properties:
The bulk $K$ current will be nonnegative in a neighbourhood of the set $\{\chi\ne 1\}$, where $\chi$
is the function~\eqref{chidef} of Section~\ref{basicblackboxestimatesec} appearing in~\eqref{inhomogeneous}. In nontrivial applications, $K$ will
often vanish identically in this region, making it completely insensitive to the possible presence and nature of trapping. 
Where $\chi=1$, on the other hand, the bulk $K$ current will be  assumed nonnegative only modulo lower order terms,
provided these lower order terms are supported in the region $r_2\le r\le R/2$.  Finally, for $r\ge R$, the bulk current $K$ will control the terms
familiar from cases (i) and (ii).

We now lay out the assumptions in detail. 

We will define functions
\begin{equation}
\label{rhoxidef}
\rho : [r_0, \infty): \to [0,1], \qquad \xi: [r_0,\infty)\to [0,1]
\end{equation}
 such that 
$\rho=1$ for $r\ge R/4$ and $r\le r_2$,
and such that 
\begin{equation}
\label{xivanishing}
\xi=0\,\,\,  {\rm in}\,\,\, \{\chi\ne 1\}\cup \{r\le r_2\} \cup \{r\ge R/4\}, 
\end{equation}
where $\chi$ is the function~\eqref{chidef} in 
Section~\ref{basicblackboxestimatesec}, appearing
in the assumed estimate~\eqref{inhomogeneous}.
(In nontrivial applications, $\rho$ will vanish identically in a neighbourhood
containing the set where $\chi\ne 1$.) The supports of the various functions is indicated in Figure~\ref{supportfigs}.
We note finally that we lose no generality in the present paper in always taking $\rho$, $\chi$, $\xi$ to be indicator functions of appropriate sets,
 but the sharpest estimates one could prove would often involve degeneration on sets of positive codimension.

\begin{figure}
\centering{
\def\svgwidth{15pc}
\begingroup%
  \makeatletter%
  \providecommand\color[2][]{%
    \errmessage{(Inkscape) Color is used for the text in Inkscape, but the package 'color.sty' is not loaded}%
    \renewcommand\color[2][]{}%
  }%
  \providecommand\transparent[1]{%
    \errmessage{(Inkscape) Transparency is used (non-zero) for the text in Inkscape, but the package 'transparent.sty' is not loaded}%
    \renewcommand\transparent[1]{}%
  }%
  \providecommand\rotatebox[2]{#2}%
  \newcommand*\fsize{\dimexpr\f@size pt\relax}%
  \newcommand*\lineheight[1]{\fontsize{\fsize}{#1\fsize}\selectfont}%
  \ifx\svgwidth\undefined%
    \setlength{\unitlength}{112.20245361bp}%
    \ifx\svgscale\undefined%
      \relax%
    \else%
      \setlength{\unitlength}{\unitlength * \real{\svgscale}}%
    \fi%
  \else%
    \setlength{\unitlength}{\svgwidth}%
  \fi%
  \global\let\svgwidth\undefined%
  \global\let\svgscale\undefined%
  \makeatother%
  \begin{picture}(1,0.53020137)%
    \lineheight{1}%
    \setlength\tabcolsep{0pt}%
    \put(0,0){\includegraphics[width=\unitlength,page=1]{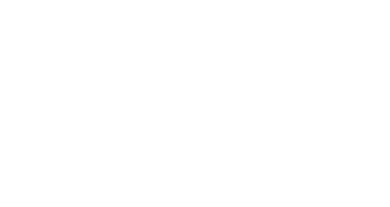}}%
    \put(0.036468,0.40333805){\color[rgb]{0,0,0}\makebox(0,0)[lt]{\lineheight{0}\smash{\begin{tabular}[t]{l}$\chi$\end{tabular}}}}%
    \put(0,0){\includegraphics[width=\unitlength,page=2]{therhos.pdf}}%
    \put(0.03552442,0.32292233){\color[rgb]{0,0,0}\makebox(0,0)[lt]{\lineheight{0}\smash{\begin{tabular}[t]{l}$\rho$\end{tabular}}}}%
    \put(0.03419278,0.24435112){\color[rgb]{0,0,0}\makebox(0,0)[lt]{\lineheight{0}\smash{\begin{tabular}[t]{l}$\xi$\end{tabular}}}}%
    \put(0.1098509,0.12340938){\color[rgb]{0,0,0}\makebox(0,0)[lt]{\lineheight{0}\smash{\begin{tabular}[t]{l}$r_0$\end{tabular}}}}%
    \put(0.23507469,0.1217792){\color[rgb]{0,0,0}\makebox(0,0)[lt]{\lineheight{0}\smash{\begin{tabular}[t]{l}$r_2$\end{tabular}}}}%
    \put(0.69052936,0.11917541){\color[rgb]{0,0,0}\makebox(0,0)[lt]{\lineheight{0}\smash{\begin{tabular}[t]{l}$R/4$\end{tabular}}}}%
    \put(0,0){\includegraphics[width=\unitlength,page=3]{therhos.pdf}}%
  \end{picture}%
\endgroup%
}
\caption{The supports of $\chi$, $\rho$ and $\xi$}\label{supportfigs}
\end{figure}

We  assume the existence of 
a $T$-invariant quadruple $(V,w,q,\varpi)$ satisfying the boundedness
properties~\eqref{seedboundedness},
 such that the associated current
$J^{V,w,q,\varpi}[\psi]$ still satisfies~\eqref{insymbolsiboundary},  but
where the bulk coercivity assumption~\eqref{insymbolsii} on $K^{V,w,q}[\psi]$
is now further relaxed to
\begin{equation}
\label{insymbolsiii}
K^{V,w,q}[\psi] + \tilde{A}\xi(r)\psi^2 \gtrsim  \rho (r) r^{-1-\delta} \left((L\psi)^2+(\underline L\psi)^2 +|\slashed\nabla\psi|^2\right)+\rho(r)r^{-3-\delta}\psi^2,
\end{equation}
where $\tilde{A}\ge 0$ is a possibly large constant. 

With this current, we define the fraktur energies again by~\eqref{frakturenergies}, and we again  
have~\eqref{theyrelessthansimilar}, and, in view of~\eqref{seedboundedness}, also~\eqref{infacttheyresim}, and thus  \eqref{theyresimilar}.

In view of the relaxed coercivity properties, the identity~\eqref{energyidentity}
gives rise
 to
\begin{equation}
\begin{aligned}
\label{firstinstanceforkerr}
\sup_{v: \tau\le \tau(v)}
\Ffancyzero(v, \tau_0,\tau), \qquad &\Efancyzero(\tau) + \Efancyzero_{\mathcal{S}}(\tau) +c\int_{\tau_0}^{\tau} {}^{\rho} \Ezerominusoneminusdelta'(\tau')
 +{}^\rho \Eminusone'(\tau')d\tau'
\\
 &\leq \Efancyzero(\tau_0) 
+  A \int_{\tau_0}^{\tau'}\Eerrorminusone'(\tau') d\tau'+  \int_{\mathcal{R}(\tau_0,\tau)}
\left| H[\psi] F\right|,
\end{aligned}
\end{equation}
for an $A\ge 0$,
where $ {}^{\rho} \Ezerominusoneminusdelta'(\tau)$ is defined in analogy with 
${}^{\chi}  \Ezerominusoneminusdelta'(\tau)$, i.e.
\[
{}^\rho \Ezerominusoneminusdelta'(\tau) := \int_{\Sigma(\tau)} 
  r^{-1-\delta}\rho(r) (|L\psi|^2+|\underline{L}\psi|^2+|\slashed\nabla\psi|^2) , \qquad
  {}^\rho \Eminusone'(\tau): = \int_{\Sigma(\tau)}  r^{-3-\delta}\rho(r)\psi^2
  \]
and
\begin{equation}
\label{analogyerror}
\Eerrorminusone'(\tau) := \int_{\Sigma(\tau)} \xi(r)\psi^2.
\end{equation}
Again, we note that it is expressing things with respect to the fraktur energies which allows the constant to be exactly $1$ in the 
above
estimate~\eqref{firstinstanceforkerr}, and this fact will be useful for us.

The existence of currents $J^{V,w,q,\varpi}$, $K^{V,w,q}$ satisfying the above assumptions
can indeed be shown for Kerr in the range  $|a| \ll M$ (and in fact for general stationary suitably small perturbations of Schwarzschild 
satisfying the assumptions of Section~\ref{backgroundgeom} and appropriate assumptions at infinity).
See Appendix~\ref{howto}.
Note that the estimate~\eqref{firstinstanceforkerr} is manifestly weaker than~\eqref{inhomogeneous}. The point, as discussed
in the introduction, is that, being derived from the (relaxed) coercivity properties of~\eqref{insymbolsiii} applied
to~\eqref{dividentitynotintegrated}, estimate~\eqref{firstinstanceforkerr}, 
or more properly, the identity~\eqref{dividentitynotintegrated} itself, can be applied directly to~\eqref{theequationzero}.
See already Section~\ref{stabilityofenergyidentitiessec}.

\subsubsection{Summary of the unweighted assumptions for cases (i), (ii) and (iii)}
\label{sumaryonetwothree}

So as to not have to always refer to separate formulas in the distinct cases (i), (ii) and (iii), we summarise
the assumptions in a way which can be subsequently interpreted for all cases simultaneously.

In case (i), we set $\tilde\rho=\rho=\chi=1$ and $A=\tilde{A}=0$.

In case (ii), we set $\chi$ to be the  function~\eqref{chidef} appearing
in both~\eqref{inhomogeneous} and~\eqref{insymbolsii}, and we set
$\rho=\chi$, $\tilde\rho=1$ and $A=\tilde{A}=0$.

Finally, in case (iii), we set $\chi$ to be the function~\eqref{chidef} of Section~\ref{basicblackboxestimatesec}
appearing in~\eqref{inhomogeneous}, 
we let $\rho$ and $\xi$ to be the functions~\eqref{rhoxidef} 
and the constants $A$ and $\tilde{A}$ to be as in Section~\ref{caseiiiidentity},
and we set $\tilde\rho=\rho$.

Our assumptions, applicable for all cases, are (a)  that~\eqref{inhomogeneous} holds and (b) that there exist
$T$-invariant $(V,w,q,\varpi)$,
satisfying the bounds~\eqref{seedboundedness}
such that,
defining the currents $J^{V,w,q,\varpi}[\psi]$, $K^{V,w,q}[\psi]$  and $H^{V,w}[\psi]$ by~\eqref{generalJdef}--\eqref{generalHdef}, we have
the bulk coercivity property
\begin{equation}
\label{bulkunweightedcoercivity}
K^{V,w,q}[\psi] + \tilde{A}\xi(r)\psi^2 \gtrsim  \rho (r) r^{-1-\delta} \left((L\psi)^2+(\underline L\psi)^2 +|\slashed\nabla\psi|^2+r^{-3}\psi^2\right) +\tilde\rho(r)r ^{-3-\delta}\psi^2
\end{equation}
and the boundary coercivity properties
\begin{equation}
\begin{aligned}
\label{boundunweightedcoercivity}
\qquad J^{V,w,q,\varpi}_\mu[\psi] n_{\Sigma(\tau)}^\mu &\gtrsim (L\psi)^2 +|\slashed\nabla \psi|^2+\iota_{r\le R}( \underline{L}\psi)^2+r^{-2}\psi^2,\\
J^{V,w,q,\varpi}_\mu[\psi] n_{\underline{C}_v}^\mu &\gtrsim (\underline{L}\psi)^2 +|\slashed\nabla \psi|^2 +r^{-2}\psi^2,
\\
J^{V,w,q,\varpi}_\mu[\psi] n_{\mathcal{S}}^\mu &\gtrsim (L\psi)^2+(\underline L\psi)^2 +|\slashed\nabla\psi|^2+\psi^2.
\end{aligned}
\end{equation}

With these currents, we define the fraktur energies again by~\eqref{frakturenergies}, and we again  
have~\eqref{theyrelessthansimilar}, and, in view of~\eqref{seedboundedness}, also~\eqref{infacttheyresim}, and thus  \eqref{theyresimilar}.

The energy identity~\eqref{energyidentity}
gives rise then to
\begin{equation}
\begin{aligned}
\label{unifiedhere}
\sup_{v: \tau\le \tau(v)}
\Ffancyzero(v,\tau_0,\tau), \qquad& \Efancyzero(\tau) + \Efancyzero_{\mathcal{S}}(\tau)+c\int_{\tau_0}^{\tau} {}^{\rho} \Ezerominusoneminusdelta'(\tau')d\tau'
+c\int_{\tau_0}^{\tau} {}^{\tilde\rho} \Eminusone'(\tau')d\tau'\\
&
 \leq \Efancyzero(\tau_0) 
+  A \int_{\tau_0}^{\tau}\Eerrorminusone'(\tau') d\tau'+  \int_{\mathcal{R}(\tau_0,\tau)}
\left|H[\psi] F\right|,
\end{aligned}
\end{equation}
where we define
\[
{}^{\tilde\rho} \Eminusone'(\tau) := \int_{\Sigma(\tau)} 
 \tilde\rho (r)r^{-3-\delta} \psi^2 .
\]

We emphasise again that in cases (i) and (ii) the statement that~\eqref{inhomogeneous} holds  need not be taken
as an independent assumption, as~\eqref{inhomogeneous} in fact 
follows from~\eqref{unifiedhere} with the above definitions in these two cases.

\subsubsection{An enhanced redshift  current and enhanced positivity in the black hole interior}
\label{enhanced}

In the case $\mathcal{S}\ne\emptyset$, for the purpose of higher order estimates to be considered in Section~\ref{higherorderestimates},
we will need to enhance the positivity near $\mathcal{H}^+$.

We first state the following proposition which allows us to introduce an arbitrary largeness factor $\varsigma$ in front of some of the terms of our coercivity estimate:
\begin{proposition}
\label{enhancedprop}
Under the assumptions of Section~\ref{sumaryonetwothree},  there exists a constant $c>0$ such that the following holds.

Given arbitrary $\varsigma\ge 1$, there exist parameters $r_0\le r_0'(\varsigma)< r_{\rm Killing} < r_1(\varsigma)\le r_1$,
and a translation invariant vector field $V'_\varsigma$ and one form $q'_\varsigma$, 
such that, defining $J^{V'_\varsigma,w,q'_\varsigma,\varpi}[\psi]$, $K^{V'_\varsigma,w,q'_\varsigma}[\psi]$, $H^{V'_\varsigma,w}[\psi]$
as in Section~\ref{sumaryonetwothree}, then
\begin{equation}
\label{enhancedpositivity}
K^{V'_\varsigma, w, q'_\varsigma}[\psi] \ge  c (Y\psi)^2 + c\varsigma \left( \sum_{i=1}^3 (\Omega_i\psi)^2  +  (T\psi)^2 +\psi^2 \right)
\end{equation}
in $r_0'(\varsigma)\le r\le r_1(\varsigma)$,  and  all properties of Section~\ref{sumaryonetwothree} hold for these currents 
with implicit constants in~\eqref{seedboundedness},~\eqref{bulkunweightedcoercivity},~\eqref{boundunweightedcoercivity},~\eqref{infacttheyresim} which may be taken independent of $\varsigma$.
Finally, $V'_\varsigma=V$, $q'_\varsigma=q$ for $r\ge r_1$.
\end{proposition}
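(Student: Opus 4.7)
The plan is to construct $V'_\varsigma$ and $q'_\varsigma$ as modifications of $V$ and $q$ that differ from $V$ and $q$ only inside a shrinking neighborhood of $\mathcal{H}^+$ of size $O(1/\varsigma)$, built from the transversal null vector field $Y$ of Section~\ref{Yvectorfieldsec}. The basic mechanism exploits two complementary geometric facts: positive surface gravity at $\mathcal{H}^+$ (which produces, via the celebrated red-shift calculation of~\cite{Mihalisnotes}, a pointwise sign in $\pi^Y_{\mu\nu} T^{\mu\nu}[\psi]$ in a neighborhood of $\mathcal{H}^+$), and timelikeness of $\nabla r$ in the interior $\{r < r_{\rm Killing}\}$, which renders $T[\psi](\nabla r, \cdot)$ coercive on future causal directions.

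First I would fix a $T$-invariant cutoff profile $\chi_\varsigma(r)$ with $|\chi_\varsigma| \le 1$ uniformly in $\varsigma$ but $|\chi_\varsigma'(r)| \sim \varsigma$ on a shell of width $\sim 1/\varsigma$ around $r_{\rm Killing}$, and with $\mathrm{supp}\,\chi_\varsigma \subset [r_0'(\varsigma), r_1(\varsigma)]$ with $r_0'(\varsigma), r_1(\varsigma) \to r_{\rm Killing}$. I would then set
\[
  V'_\varsigma := V + \chi_\varsigma(r)\,Y, \qquad q'_\varsigma := q + \chi_\varsigma(r)\, \widetilde{q},
\]
for a bounded, $\varsigma$-independent one-form $\widetilde{q}$ (to be chosen below). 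A direct computation gives
\[
  \pi^{V'_\varsigma}_{\mu\nu} \;=\; \pi^V_{\mu\nu} + \chi_\varsigma\, \pi^Y_{\mu\nu} + \tfrac{1}{2}\chi_\varsigma'(r)\bigl(\partial_\mu r\, Y_\nu + \partial_\nu r\, Y_\mu\bigr),
\]
and a corresponding formula for the divergence of $q'_\varsigma$ producing a $\chi_\varsigma'(r)\, g(\nabla r, \widetilde{q})$ contribution to the $\psi^2$ coefficient in $K^{V'_\varsigma,w,q'_\varsigma}$. The universal $c(Y\psi)^2$ term comes from the standard term $\chi_\varsigma \pi^Y \cdot T[\psi]$ at $\mathcal{H}^+$; this is the red-shift term and its coefficient depends only on $\kappa$. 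The enhanced $c\varsigma$ coefficient on $(T\psi)^2 + \sum_i(\Omega_i\psi)^2 + \psi^2$ comes from the $\chi_\varsigma'$-cross-term above: in the interior shell the contraction $\chi_\varsigma'\, T^{\mu\nu}[\psi]\, \partial_\mu r\, Y_\nu$ is, by timelikeness of $\nabla r$ and future-causality of $Y$, a positive multiple of $|\nabla\psi|^2$ with coefficient $\sim \varsigma$; since $T,Y,\Omega_i$ span the tangent space in the relevant region (by the assumption in Section~\ref{Yvectorfieldsec}), this controls the three tangential squares, and a parallel sign analysis of $\chi_\varsigma' g(\nabla r, \widetilde{q}) \psi^2$ with $\widetilde{q}$ chosen proportional to $Y^\flat$ yields the $c\varsigma \psi^2$ term. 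One handles the exterior shell $\{r_{\rm Killing}<r<r_1(\varsigma)\}$ analogously, with the roles of red-shift and cutoff derivative switched; the $(Y\psi)^2$ coefficient remains $\varsigma$-independent.

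The remaining properties are mostly bookkeeping. Boundedness~\eqref{seedboundedness} and the upper bounds~\eqref{infacttheyresim} are immediate since $|\chi_\varsigma|$ and $|\widetilde{q}|$ are controlled by universal constants. The boundary coercivity~\eqref{boundunweightedcoercivity} on $\Sigma(\tau)$ is preserved because adding $\chi_\varsigma Y$ to $V$ only adds a nonnegative contribution $\chi_\varsigma T[\psi](Y, n_{\Sigma(\tau)})$ (both $Y$ and $n_{\Sigma(\tau)}$ being future causal near $\mathcal{H}^+$), and on $\mathcal{S}$ the modification is compatible with the analogous positivity because $\nabla r$ is timelike there; on $\underline{C}_v$ the support of $\chi_\varsigma$ is disjoint from these surfaces. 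Finally, the global bulk coercivity~\eqref{bulkunweightedcoercivity} is preserved because outside $[r_0'(\varsigma),r_1(\varsigma)]$ we have $V'_\varsigma = V, q'_\varsigma=q$, and inside this region all the additional contributions are nonnegative modulo lower-order error terms which can be absorbed in the $\tilde{A}\xi\psi^2$ term (noting $\xi=0$ in a neighborhood of $\mathcal{H}^+$ by~\eqref{xivanishing} requires care, but the new error terms are controlled by the large positive bulk). The main obstacle is the delicate sign/scale bookkeeping in this last step: the $\chi_\varsigma'$-cross terms in $\pi^{V'_\varsigma}\cdot T[\psi]$ have indefinite sign away from the regions where $\nabla r$ is either timelike or null, so the cutoff profile must be carefully arranged to ensure that the positive $\varsigma$-scale contribution dominates the sign-indefinite remainders throughout the shell---and that one does not accidentally spoil the coercivity of $(Y\psi)^2$ with a universal (not $\varsigma$-dependent) constant.
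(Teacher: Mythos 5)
The proposal contains a genuine gap in the mechanism it proposes for the $c\varsigma$-amplification, and it cannot be repaired as written.

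\textbf{The cross term cannot control $(T\psi)^2$.} You attribute the enhanced $c\varsigma\bigl((T\psi)^2 + \sum_i(\Omega_i\psi)^2 + \psi^2\bigr)$ coefficient to the term $\chi_\varsigma'(r)\, T^{\mu\nu}[\psi]\,\partial_\mu r\,Y_\nu = \chi_\varsigma'\, T(\nabla r, Y)$, asserting that it is ``a positive multiple of $|\nabla\psi|^2$'' in the interior shell. This is false. At $\mathcal{H}^+$ both $\nabla r$ and $Y$ are future-directed \emph{null}, and $\nabla r$ is proportional to the Killing generator $Z$; a direct computation in the null frame $(Y,Z,E_A)$ with $g(Y,Z)=-2$, $g(Y,Y)=g(Z,Z)=0$ gives
\begin{equation*}
  T(\nabla r, Y)\Big|_{\mathcal{H}^+} = a\, T(Z,Y) = a\Bigl[(Z\psi)(Y\psi) + g^{\alpha\beta}\partial_\alpha\psi\partial_\beta\psi\Bigr] = a\,|\slashed\nabla\psi|^2,
\end{equation*}
with \emph{vanishing} coefficient on $(Z\psi)^2$ and hence on $(T\psi)^2$. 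Just inside $\mathcal{H}^+$, where $\nabla r$ becomes timelike, the coercivity of $T(\nabla r, Y)$ in the conjugate-null direction ($Z\psi$) is proportional to $r_{\rm Killing}-r \sim 1/\varsigma$ in the shell, so $\chi_\varsigma' \cdot T(\nabla r, Y)$ contributes $(Z\psi)^2$ only with an $O(1)$ coefficient — not the required $c\varsigma$. In addition, since $\chi_\varsigma$ must return to zero by $r=r_1$, it has $\chi_\varsigma' < 0$ in the exterior portion $(r_{\rm Killing}, r_1(\varsigma)]$, where $T(\nabla r, Y)$ still equals $a|\slashed\nabla\psi|^2 + O(1/\varsigma)$; this produces a negative term of size $\sim -\varsigma|\slashed\nabla\psi|^2$ that the bounded red-shift term $\chi_\varsigma\,\pi^Y\cdot T$ cannot absorb. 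The ``swap the roles'' remedy you gesture at does not exist.

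\textbf{The actual mechanism.} The enhancement does \emph{not} come from a scaled cutoff derivative but from the free normalization parameter in the red-shift vector field, which is the point of Theorem~7.1 of~\cite{Mihalisnotes}. Choosing $Y$ so that $\nabla_Y Y = -\sigma(Y+Z)$ on $\mathcal{H}^+$ (with $\sigma\sim\varsigma$) makes $\pi^Y(Y,Y) = g(\nabla_Y Y, Y) = 2\sigma$. Since $T^{YY} = (g^{YZ})^2 T_{ZZ} = \tfrac14(Z\psi)^2$, the contraction $\pi^Y(Y,Y)T^{YY} = \tfrac{\sigma}{2}(Z\psi)^2$ supplies precisely the amplified $(Z\psi)^2$ — and hence $(T\psi)^2$ — coefficient, while the corresponding $\pi^Y(Y,Z)T^{YZ}$ amplifies $|\slashed\nabla\psi|^2$. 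Crucially, the $(Y\psi)^2$ coefficient comes from $\pi^Y(Z,Z)T^{ZZ} = 2\kappa\cdot\tfrac14(Y\psi)^2$ and is fixed by the surface gravity, independently of $\sigma$. The neighborhood of validity then shrinks like $1/\sigma$ because the $O(\sigma)$ coefficients degenerate away from $\mathcal{H}^+$; this is where the $r_0'(\varsigma),r_1(\varsigma)\to r_{\rm Killing}$ behaviour really originates, not from a cutoff width. With $V'_\varsigma$ built on this $\sigma$-parametrized $Y$, the vector field remains uniformly bounded (only its covariant derivative grows like $\sigma$), giving the $\varsigma$-independent constants in~\eqref{seedboundedness} and~\eqref{boundunweightedcoercivity} as asserted.
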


\begin{proof} 
This is clear by examining the proof of Theorem~7.1 of~\cite{Mihalisnotes}. 
\end{proof}

Using the above and the timelike character of $\nabla r$ in the ``black hole interior'' region $r<r_{\rm Killing}$, 
we may further modify our current to obtain the following enhanced positivity in $r_0\le r\le r_1(\varsigma)$, in particular, all the way up to $\mathcal{S}$,
at the expense of a lower order term. The resulting modified current will be used for higher order estimates:

\begin{proposition}
\label{enhancedinteriorprop}
Under the assumptions of Section~\ref{sumaryonetwothree}, there exist constants $C>0$, $c>0$  such that
the following holds. 

Given arbitrary $\varsigma \ge 1$, let $V'_\varsigma$, $q'_\varsigma$ be as given in Proposition~\ref{enhancedprop}.
Then  there exists a translation invariant  quadruple $(V_{\varsigma}, w_{\varsigma}, q_{\varsigma}, \varpi_{\varsigma})$ 
with $V_{\varsigma}=V'_{\varsigma}$, $w_{\varsigma}=w$,  $q_{\varsigma}=q'$,  $\varpi_{\varsigma}=\varpi$ for $r\ge  r_{\rm Killing}$, 
and a positive function $\lambda_\varsigma(r)$, 
such that, defining $J^{V_{\varsigma},w_{\varsigma} ,q_{\varsigma},\varpi_{\varsigma}}[\psi]$, $K^{V_{\varsigma},w_\varsigma,q_{\varsigma}}[\psi]$, 
$H^{V_{\varsigma},w_\varsigma}[\psi]$
as in Section~\ref{sumaryonetwothree},   the following enhanced positivity (modulo lower order terms):
\begin{eqnarray}
\nonumber
K^{V_{\varsigma}, w_{\varsigma}, q_{\varsigma}}[\psi] &\ge& c  \varsigma \lambda_{\varsigma}(r)  \left( \sum_{i=1}^3  ( \Omega_i\psi)^2 +(T\psi)^2   + |r_{\rm Killing}-r|(Y\psi)^2\right)\\
&&\qquad
\label{enhancedherefinal}
 + c\lambda_\varsigma(r)    (Y\psi)^2 -C\varsigma \lambda_\varsigma(r)   \psi^2
 \end{eqnarray}
holds in $r_0\le r\le r_{\rm Killing}$. 

Moreover,
all boundedness and coercivity properties of the currents of Section~\ref{sumaryonetwothree} hold for these currents
as well with constants independent of $\varsigma$ in the region $r\ge r_{\rm Killing}$, 
while in the region $r< r_{\rm Killing}$, the properties
 of Section~\ref{sumaryonetwothree} again hold for these currents but with~\eqref{bulkunweightedcoercivity} replaced by~\eqref{enhancedherefinal},  
and with coercivity bounds that now however depend in general on $\varsigma$. 
In particular, we have
\begin{equation}
\label{Hbound}
|H^{V_{\varsigma},w_{\varsigma} }[\psi]| \leq C \lambda_\varsigma(r)  \left( |Y\psi| + \sum_{i=1}^3 |\Omega_i\psi| +|T\psi| \right) +
C\lambda_\varsigma(r) \psi^2.
\end{equation}
\end{proposition}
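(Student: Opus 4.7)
The plan is to obtain $(V_\varsigma, w_\varsigma, q_\varsigma, \varpi_\varsigma)$ by leaving the quadruple of Proposition~\ref{enhancedprop} unchanged for $r \ge r_{\rm Killing}$ and adding in the interior $r_0 \le r < r_{\rm Killing}$ a correction multiplier proportional to $-\nabla r$, exploiting the assumption from Section~\ref{Killinghorizonsection} that $\nabla r$ is future-directed timelike in $r < r_{\rm Killing}$ and null in the direction of $Z$ on $\mathcal{H}^+$. Concretely, I would take
\[
V_\varsigma := V'_\varsigma - \varsigma\, h_\varsigma(r)\,\nabla r,
\qquad
w_\varsigma := w + \varsigma\, \tilde h_\varsigma(r),
\qquad
q_\varsigma := q'_\varsigma,
\qquad
\varpi_\varsigma := \varpi,
\]
with $h_\varsigma(r) \ge 0$ smooth, supported in $r_0 \le r \le r_{\rm Killing}$, strictly decreasing, and vanishing to sufficiently high order at $r_{\rm Killing}$ so that (a) $V_\varsigma$ matches $V'_\varsigma$ smoothly across $\mathcal{H}^+$ and (b) the existing red-shift positivity~\eqref{enhancedpositivity} is not degraded in its region of validity $r_0'(\varsigma) \le r \le r_1(\varsigma)$. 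The small zeroth-order function $\tilde h_\varsigma$ is chosen to cancel divergence-type errors from the change in $w$, and $\lambda_\varsigma(r)$ is defined by $\lambda_\varsigma(r) := -h_\varsigma'(r)\cdot |\nabla r|^2/(r_{\rm Killing} - r)$, smoothly extended across $\mathcal{H}^+$.

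The main computation uses the identity $\pi^{-\varsigma h_\varsigma \nabla r}_{\mu\nu} = -\varsigma h_\varsigma'(r)\nabla_\mu r\,\nabla_\nu r - \varsigma h_\varsigma(r)\nabla_\mu\nabla_\nu r$, so that the new deformation tensor contribution to $K^{V_\varsigma,w_\varsigma,q_\varsigma}[\psi]$ reads
\[
-\varsigma h_\varsigma'(r)\,T_{\mu\nu}[\psi]\,\nabla^\mu r\,\nabla^\nu r \;-\; \varsigma h_\varsigma(r)\,(\nabla^\mu\nabla^\nu r)\,T_{\mu\nu}[\psi].
\]
Introducing an orthonormal frame $(e_0,e_1,e_2,e_3)$ with $e_0 = \nabla r / |\nabla r|$ (available in the interior since $\nabla r$ is timelike), the algebraic identity $T_{\mu\nu}[\psi] X^\mu X^\nu = \tfrac{1}{2} |X|^2 \sum_\alpha (e_\alpha \psi)^2$ for timelike $X$ shows that the first term equals $-\tfrac{1}{2}\varsigma h_\varsigma'(r)|\nabla r|^2 \sum_\alpha (e_\alpha\psi)^2$, which is positive definite in all frame derivatives of $\psi$ since $-h_\varsigma' > 0$ and $|\nabla r|^2 > 0$ in the interior. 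Reinterpreting frame derivatives in terms of $Y$, $T$, $\Omega_i$, and using that $T$ and $\Omega_i$ are tangent to the spacelike level sets of $r$ while $Y$ has a component along $\nabla r$ suppressed by $|\nabla r| \sim (r_{\rm Killing} - r)^{1/2}$ near $\mathcal{H}^+$, one obtains precisely the positivity structure in~\eqref{enhancedherefinal}: a factor $\lambda_\varsigma(r)$ multiplying $(T\psi)^2 + \sum(\Omega_i\psi)^2$ and a factor $\lambda_\varsigma(r)|r_{\rm Killing} - r|$ multiplying $(Y\psi)^2$. The non-degenerate term $c\lambda_\varsigma(r)(Y\psi)^2$ (without the $\varsigma$-factor) is inherited from the underlying red-shift positivity of $V'_\varsigma$ and extended across $r_0 \le r \le r_0'(\varsigma)$ via an appropriate matching on the support of $h_\varsigma$. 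The Hessian term $-\varsigma h_\varsigma (\nabla^\mu\nabla^\nu r)T_{\mu\nu}[\psi]$, together with the unchanged $q_\varsigma$ and adjusted $w_\varsigma$ contributions, is absorbed by Cauchy--Schwarz into a small fraction of the positive part, producing the tolerated zeroth-order error $-C\varsigma\lambda_\varsigma(r)\psi^2$.

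The remaining assertions are straightforward. For $r \ge r_{\rm Killing}$ there is nothing to verify; the constants remain independent of $\varsigma$ because the quadruple is unchanged there. In $r < r_{\rm Killing}$, the boundary positivity~\eqref{boundunweightedcoercivity} on $\Sigma(\tau)$ and on $\mathcal{S}$ is preserved because $-\varsigma h_\varsigma \nabla r$ is timelike with an orientation compatible with the future-directed normals, so the added contribution to $J^{V_\varsigma}_\mu n^\mu$ has a favourable sign, and the pointwise bound~\eqref{Hbound} on $H^{V_\varsigma, w_\varsigma}[\psi] = V_\varsigma^\mu \partial_\mu \psi + w_\varsigma \psi$ follows by expanding $\nabla r$ in the spanning set $\{Y, Z, \Omega_i\}$ valid in $r \le r_1 + (r_2 - r_1)/4$ per Section~\ref{Yvectorfieldsec}, together with $|\nabla r| \lesssim 1$. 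The principal difficulty is the calibration of $h_\varsigma$: it must vanish rapidly enough at $\mathcal{H}^+$ that the existing red-shift positivity of $V'_\varsigma$ continues to dominate in a full neighbourhood of $\mathcal{H}^+$, yet grow large enough in the deep interior to produce the required $\varsigma$-enhanced coercivity down to $\mathcal{S}$, with a lower-order error that can be packaged cleanly as $-C\varsigma\lambda_\varsigma(r)\psi^2$ without introducing spurious weights at $r_0$ or at $\mathcal{H}^+$.
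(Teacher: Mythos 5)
Your proposal is correct in spirit—deform the current in the interior by exploiting the timelike character of $\nabla r$ there—but the construction is additive, $V_\varsigma = V'_\varsigma - \varsigma h_\varsigma(r)\nabla r$, whereas the paper's is multiplicative, $V_\varsigma = \lambda_\varsigma(r) V'_\varsigma$, $w_\varsigma = \lambda_\varsigma w$, $q_\varsigma = \lambda_\varsigma q' - *(d\lambda_\varsigma\wedge\varpi)$, $\varpi_\varsigma = \lambda_\varsigma\varpi$ with $\lambda_\varsigma(r) = e^{-\varsigma(r_{\rm Killing}-r)}$ in the bulk of the interior. The multiplicative route is not cosmetic, and your proposal has two gaps whose resolution is precisely what it buys.

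The principal gap is your claim that the cross-term $-\varsigma h_\varsigma'(r)\, T_{\mu\nu}\nabla^\mu r\, \nabla^\nu r$ ``gives precisely the positivity structure of~\eqref{enhancedherefinal}''. Write $\nabla r = a Z + b Y$ near $\mathcal{H}^+$ with (by the null alignment of $\nabla r$ with $Z$ on $\mathcal{H}^+$ and $g(\nabla r,\nabla r) \sim -(r_{\rm Killing}-r)$) $a\sim 1$, $b\sim (r_{\rm Killing}-r)$. Using $T(Z,Z) = (Z\psi)^2$ and $T(Z,Y) = \tfrac12|\slashed\nabla\psi|^2$ one finds
\begin{equation*}
T(\nabla r,\nabla r) = a^2(Z\psi)^2 + ab\,|\slashed\nabla\psi|^2 + b^2(Y\psi)^2,
\end{equation*}
so the coefficient on the angular derivatives is $ab \sim (r_{\rm Killing}-r)$, degenerating at $\mathcal{H}^+$. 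Since $T$ and $\Omega_1$ lie in the span of $Z$ and the angular directions (with nontrivial angular part when the Killing generator $Z\ne T$), it follows that $T(\nabla r,\nabla r)$ only controls $(T\psi)^2 + \sum(\Omega_i\psi)^2$ with a factor that vanishes like $(r_{\rm Killing}-r)$, not non-degenerately as~\eqref{enhancedherefinal} requires. The paper's cross-term $\tfrac{d\lambda_\varsigma}{dr}T_{\mu\nu}\nabla^\mu r\, V'_\varsigma{}^\nu$ avoids this because $V'_\varsigma$ retains a transversal $Y$-component of order one on $\mathcal{H}^+$, so $T(Z,Y) = \tfrac12|\slashed\nabla\psi|^2$ enters with a non-degenerate coefficient. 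Your argument ``$Y$ has a component along $\nabla r$ suppressed by $|\nabla r|$'' conflates the $(Y\psi)^2$ degeneration (which you account for) with the simultaneous degeneration of the angular control (which you do not). The correct statement is that near $\mathcal{H}^+$ the desired estimate must come from the unchanged $K^{V'_\varsigma}$ via~\eqref{enhancedpositivity}, not from the new cross-term; you acknowledge this for the non-degenerate $(Y\psi)^2$ piece but not for the $\varsigma$-enhanced tangential pieces.

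A second, cleaner gap: your definition $\lambda_\varsigma(r) = -h_\varsigma'(r)|\nabla r|^2/(r_{\rm Killing}-r)$ vanishes at $\mathcal{H}^+$, since $h_\varsigma$ vanishes there to high order while $|\nabla r|^2/(r_{\rm Killing}-r)$ is bounded. But $V_\varsigma = V'_\varsigma$ at $\mathcal{H}^+$, which is not zero, so $|H^{V_\varsigma,w_\varsigma}[\psi]| \sim |V'_\varsigma\psi| \not\lesssim \lambda_\varsigma(r)(\cdots)$ near the horizon and~\eqref{Hbound} fails. This bound is used downstream (Proposition~\ref{finalcoercivityprop}) to absorb commutation errors, so it cannot be sacrificed. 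In the paper's multiplicative ansatz $\lambda_\varsigma(r_{\rm Killing})=1$ and $|V_\varsigma| \sim \lambda_\varsigma|V'_\varsigma|$, so the magnitude of $V_\varsigma$ automatically tracks $\lambda_\varsigma$ everywhere, and~\eqref{Hbound} comes for free.
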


\begin{proof}
Given $\varsigma$, let $r_0'(\varsigma)$ be as given by Proposition~\ref{enhancedprop}.
Let us define $\lambda_\varsigma(r)$ to be a smooth positive function such that $\lambda_\varsigma(r)=1$ for $r\ge r_{\rm Killing}$,
\begin{equation}
\label{mainpartofdef}
 \lambda_\varsigma(r) = e^{-\varsigma (r_{\rm Killing}-r)}
 \end{equation}
for $  \{r_0 \le  r\le r_{\rm Killing} \} \cap \left( \{ \varsigma \ge (r_{\rm Killing}-r)^{-1}\}\cup \{ r \le r_0'(\varsigma)\}\right)$, and 
\[
0\le \frac{d\lambda_\varsigma (r) }{dr}\le 2\varsigma \lambda_\varsigma(r).
\]

We define now
\[
V_\varsigma :=\lambda_\varsigma(r) V'_{\varsigma}, \qquad w_\varsigma:=\lambda_\varsigma(r) w , \qquad q_\varsigma :=\lambda_\varsigma(r) q -*(d\lambda_\varsigma \wedge \varpi), \qquad
\varpi_\varsigma =\lambda_\varsigma(r)\varpi .
\]
We note that under these definitions  $J^{V_{\varsigma},w_{\varsigma} ,q_{\varsigma},\varpi_{\varsigma}}[\psi] =
\lambda_\varsigma(r) J^{V'_{\varsigma}, w, q'_{\varsigma}, \varpi}[\psi]$, and thus the positivity properties of the boundary currents are preserved,
but with constants that now depend on $\varsigma$.

We have that
\[
\nabla^\mu V_{\varsigma}^\nu  = \nabla^\mu (\lambda_\varsigma(r) ) V'_{\varsigma}{}^\nu
+ \lambda_\varsigma(r)  \nabla^\mu V'_{\varsigma}{}^\nu\\ =\frac{d \lambda_\varsigma(r)}{dr}   \nabla^\mu r V'_{\varsigma}{}^\nu 
+ \lambda_\varsigma(r)  \nabla^\mu V'_{\varsigma}{}^\nu
\]
and thus
\begin{align*}
K^{V_{\varsigma}^\nu}[\psi]+w_\varsigma \nabla^\mu\psi \partial_\mu \psi &=
T_{\mu\nu} \frac12 (\nabla^\mu V_\varsigma^\nu +\nabla^\nu V_\varsigma^\mu)+w_\varsigma \nabla^\mu\psi \partial_\mu \psi  \\
& =
\frac{d \lambda_\varsigma(r)}{dr}  T_{\mu\nu} \nabla^\mu r V'_{\varsigma}{}^\nu 
+\lambda_\varsigma(r) (T_{\mu\nu}\frac12( \nabla^\mu V'_{\varsigma}{}^\nu +\nabla^\nu V'_{\varsigma}{}^\mu) + w  \nabla^\mu\psi \partial_\mu \psi )\\
&\ge c 
\frac{d\lambda_\varsigma(r)}{dr}   \left( \sum_{i=1}^3  ( \Omega_i\psi)^2 +(T\psi)^2   + |r_{\rm Killing}-r|(Y\psi)^2\right)\\
&\qquad
 +c \lambda_\varsigma(r)  \left(     (Y\psi)^2 +   \sum_{i=1}^3 (\Omega_i\psi)^2 + (T\psi)^2 \right)\\
 &\ge c \varsigma  \lambda_\varsigma(r)  \left( \sum_{i=1}^3  ( \Omega_i\psi)^2 +(T\psi)^2   + |r_{\rm Killing}-r|(Y\psi)^2\right)
 +c \lambda_\varsigma(r)    (Y\psi)^2  .
\end{align*}
For the first inequality, we are using the fact that $\nabla r$ is a smooth vector field which is
future directed null on $\mathcal{H}^+$ and future directed timelike in $r<r_{\rm Killing}$, by our assumptions from Section~\ref{Killinghorizonsection},
as well as the fact that  
\begin{equation}
\label{alsopos}
T_{\mu\nu}[\psi] \frac12 (\nabla^\mu V'_{\varsigma}{}^\nu +\nabla^\nu V'_{\varsigma}{}^\mu) + w  \nabla^\mu\psi \partial_\mu \psi  \ge c   \left(     (Y\psi)^2 +   \sum_{i=1}^3 (\Omega_i\psi)^2 + (T\psi)^2 \right),
\end{equation}
which follows because coercivity of the  current $K^{V'_\varsigma,w,q'_{\varsigma}}$ asserted in~\eqref{enhancedpositivity}
requires in particular coercivity of its highest order terms.
Note that the second inequality above  clearly follows from the first wherever~\eqref{mainpartofdef} holds. On the other hand, 
wherever~\eqref{mainpartofdef} does not hold, 
we again obtain the second inequality in view of the definition of $\lambda_{\varsigma}(r)$, since 
in this region, the enhanced positivity~\eqref{enhancedpositivity} applies, allowing us to 
put extra $\varsigma$ factors on the second two terms of~\eqref{alsopos}, as well as 
 the bound $\varsigma (r_{\rm Killing}-r ) \le 1$.
 
On the other hand,
\begin{align*}
|K^{w_\varsigma,q_\varsigma}[\psi] -w_\varsigma \nabla^\mu\psi \partial_\mu \psi |&=  |\nabla^\mu w_\varsigma \psi \partial_\mu \psi  +  \nabla^\mu (q_\varsigma)_\mu \psi^2 + 2\psi (q_\varsigma)_\mu g^{\mu\nu}
\partial_\nu \psi| \\
&\le C\left( \frac{d\lambda_\varsigma(r)}{dr}  +\lambda_\varsigma(r)\right )  |\psi|  \left( |Y\psi| + \sum_{i=1}^3 |\Omega_i\psi| +|T\psi| \right)\\
&\qquad +C\left( \frac{d\lambda_\varsigma(r)}{dr}  +\lambda_\varsigma(r)\right)  \psi^2+ C\left( \frac{d\lambda_\varsigma(r)}{dr}  +C\lambda_\varsigma(r)\right)| \psi|   \left( |Y\psi| + \sum_{i=1}^3 |\Omega_i\psi| +|T\psi| \right),
\end{align*}
whence we deduce
\[
|K^{w_\varsigma,q_\varsigma}[\psi-w_\varsigma \nabla^\mu\psi \partial_\mu \psi]  | \le \frac12 (K^{V_{\varsigma}^\nu}[\psi]+w_\varsigma \nabla^\mu\psi \partial_\mu \psi ) +   C \varsigma \lambda_\varsigma(r) \psi^2 .
\]

On the other hand,
\begin{align*}
|H^{V_{\varsigma}, w_{\varsigma}}[\psi]|	&=	 |V_{\varsigma}^\nu \partial_\nu \psi |=  |\lambda_{\varsigma}(r)  V'_{\varsigma}{}^\nu \psi
+w_\varsigma(r)\psi|\\
						&\le   C\lambda_{\varsigma}(r)  \left( |Y\psi| + \sum_{i=1}^3 |\Omega_i\psi| +|T\psi| \right)+C\lambda_\varsigma(r)|\psi|,
\end{align*}
giving~\eqref{Hbound}.
The statement~\eqref{enhancedherefinal} now follows since $K^{V_\varsigma, w_\varsigma,q_\varsigma} = K^{V_\varsigma}+w_\varsigma \nabla^\mu\psi \partial_\mu \psi+K^{w_\varsigma,q_\varsigma}-w_\varsigma \nabla^\mu\psi \partial_\mu \psi$ and the above bounds.
\end{proof}

\subsection{The $r^p$ hierarchy}
\label{rpsection}

For nonlinear applications, we will need to extend our estimates to suitable weighted estimates
satisfying the $r^p$ hierarchy~\cite{DafRodnew}.

In a suitable framework,
very general assumptions on stationary metrics $g_0$ allowing one to apply the $r^p$
 hierarchy are contained in~\cite{Moschnewmeth}.
Here, let us note that this class was shown in particular to  contain Minkowski space, Schwarzschild and Kerr in the full range of
parameters.

So as not to translate to the setup of~\cite{Moschnewmeth}, however, 
rather than formulate the most general asymptotically flat assumptions which we will allow in terms of $g_0$, 
it will be convenient to make the assumptions 
\emph{directly} in terms of coercivity properties of appropriate
currents defined in a region $r\ge \tilde{R} \ge R$ for some large $\tilde{R}$.

To give our precise assumption, let us first define
$\accentset{\scalebox{.6}{\mbox{\tiny $(p)$}}}{{V}}^{\circ}_{\rm far}:=r^pL$ and let
$\accentset{\scalebox{.6}{\mbox{\tiny $(p)$}}}{{w}}^{\circ}_{\rm far}$, $\accentset{\scalebox{.6}{\mbox{\tiny $(p)$}}}{{q}}^{\circ}_{\rm far}$,
$\accentset{\scalebox{.6}{\mbox{\tiny $(p)$}}}{{\varpi}}^{\circ}_{\rm far}$ be as defined in Appendix~\ref{subsec:pcurrentMink}.
We assume then that for any $\delta \le p \le 2-\delta$,
there exists a $T$-invariant quadruple $(\accentset{\scalebox{.6}{\mbox{\tiny $(p)$}}}V_{\rm far}, \accentset{\scalebox{.6}{\mbox{\tiny $(p)$}}}w_{\rm far},\accentset{\scalebox{.6}{\mbox{\tiny $(p)$}}}q_{\rm far}, 
\accentset{\scalebox{.6}{\mbox{\tiny $(p)$}}}\varpi_{\rm far})$, 
with $\accentset{\scalebox{.6}{\mbox{\tiny $(p)$}}}V_{\rm far}=\accentset{\scalebox{.6}{\mbox{\tiny $(p)$}}}{{V}}^{\circ}_{\rm far} +\tilde{V}_{\rm far}$ a vector field,  $\accentset{\scalebox{.6}{\mbox{\tiny $(p)$}}}w_{\rm far}=\accentset{\scalebox{.6}{\mbox{\tiny $(p)$}}}{{w}}^{\circ}_{\rm far}+\tilde{w}_{\rm far}$ a scalar function, 
$\accentset{\scalebox{.6}{\mbox{\tiny $(p)$}}}q_{\rm far}=\accentset{\scalebox{.6}{\mbox{\tiny $(p)$}}}{{q}}^{\circ}_{\rm far} +\tilde{q}_{\rm far}$ a one form,
and  $\accentset{\scalebox{.6}{\mbox{\tiny $(p)$}}}\varpi_{\rm far}=\accentset{\scalebox{.6}{\mbox{\tiny $(p)$}}}{{\varpi}}^{\circ}_{\rm far} +\tilde{\varpi}_{\rm far}$ 
a two-form,
defined on $r\ge \tilde{R}$,
satisfying 
\begin{equation}
\begin{aligned}
\label{seedfarboundedness}
 |g(\tilde{V}_{\rm far},L)| \lesssim 1, \qquad  |g(\tilde{V}_{\rm far},\underline{L})|\lesssim 1, \qquad  \sum |g(\tilde{V}_{\rm far},\Omega_i)|^2  \lesssim 1,\\
|  \tilde{w}_{\rm far} | \lesssim r^{-1}, 
\qquad |L^\mu \tilde{q}_{\rm far}{}_{\mu}| \lesssim r^{-2}, \qquad |\underline{L}^\mu \tilde{q}_{\rm far}{}_{\mu}| \lesssim r^{-2}, \\
	\vert (*\, d\tilde\varpi_{\rm far})_{\mu} L^{\mu} \vert \lesssim r^{-2},
	\qquad
	\vert (*\, d\tilde\varpi_{\rm far})_{\mu} \underline{L}^{\mu} \vert \lesssim r^{-2},\\
	\vert *(\varrho^{L} \wedge \tilde\varpi_{\rm far})_{\mu} L^{\mu} \vert \lesssim r^{-1},
	\quad
	\vert *(\varrho^{L} \wedge \tilde\varpi_{\rm far})_{\mu} \underline{L}^{\mu} \vert \lesssim r^{-1},\\
	\vert *(\varrho^{\underline{L}} \wedge \tilde\varpi_{\rm far})_{\mu} L^{\mu} \vert \lesssim r^{-1},
	\quad
	\vert *(\varrho^{\underline{L}} \wedge \tilde\varpi_{\rm far})_{\mu} \underline{L}^{\mu} \vert \lesssim r^{-1},
\\
	\vert *(\varrho^{i} \wedge \tilde\varpi_{\rm far})_{\mu} L^{\mu} \vert \lesssim r^{-1},
	\quad
	\vert *(\varrho^{i} \wedge \tilde\varpi_{\rm far})_{\mu} \underline{L}^{\mu} \vert \lesssim r^{-1},
	\qquad
	i = 1,2,3\, ,
\end{aligned}
\end{equation}
such that, defining the associated currents $\Jp_{\rm far}$, $\Kp_{\rm far}$, by \eqref{generalJdef},~\eqref{generalKdef} 
these satisfy the weighted
bulk coercivity properties:
  \begin{equation}
  \begin{aligned}
\label{insymbolsiiwithpweightKfar}
\Kp_{\rm far}[\psi] &\gtrsim     r^{p-1}\left( (r^{-1}L(r\psi))^2 +(L\psi)^2 + |\slashed\nabla \psi|^2 \right) 
 + r^{-1-\delta}(\underline L\psi)^2 +r^{p-3}\psi^2
 \end{aligned}
\end{equation}
and the weighted boundary coercivity property
\begin{equation}
\begin{aligned}
\label{insymbolsiiwithpweightJfar}
 \Jp_{\rm far}{}_\mu[\psi] n_{\Sigma(\tau)}^\mu &\gtrsim r^p( r^{-1}L(r\psi))^2 +r^{\frac{p}2}(L\psi)^2 +|\slashed\nabla \psi|^2
+r^{\frac{p}2-2}\psi^2 ,\\
\Jp_{\rm far}{}_\mu[\psi] n_{\underline{C}_v}^\mu &\gtrsim (\underline{L}\psi)^2 +r^p|\slashed\nabla \psi|^2 +r^{p-2}\psi^2.
\end{aligned} 
\end{equation}

We note that the $r^{p-1} (r^{-1}L(r\psi))^2$ term is redundant  in~\eqref{insymbolsiiwithpweightKfar} 
as it can be estimated pointwise from
$r^{p-1}(L\psi)^2$ and $r^{p-3}\psi^2$. We retain it to compare with the boundary term~\eqref{insymbolsiiwithpweightJfar} where it is necessary
to retain explicitly 
the $(r^{-1}L(r\psi))^2$ term, as it is not controlled by the terms $r^{\frac{p}2}(L\psi)^2$ and $r^{\frac{p}2-2}\psi^2$.

See Appendix~\ref{section:appedixcurrents} for the construction of such a current on Minkowski space and a broad
class of spacetimes with suitable asymptotic flatness assumptions at infinity, including Schwarzschild
and Kerr in the full subextremal range $|a|<M$.

Let us note immediately that given such currents satisfying the far away coercivity assumptions~\eqref{insymbolsiiwithpweightKfar} and \eqref{insymbolsiiwithpweightJfar}, 
and given  $(V,w,q,\varpi)$ as in Section~\ref{sumaryonetwothree},
then by defining a suitable cutoff function
 $\zeta(r)$ with $\zeta=0$ for $r\le \tilde{R}$  and $\zeta(r)=1$ for $r\ge \tilde{R}+1$, and introducing a small fixed parameter
 $e>0$, and defining the $T$-invariant vector field, functions and forms
 \begin{equation}
 \label{thetriple}
 \accentset{\scalebox{.6}{\mbox{\tiny $(p)$}}}V= V+e\zeta(r) \accentset{\scalebox{.6}{\mbox{\tiny $(p)$}}}V_{\rm far} , \qquad
  \accentset{\scalebox{.6}{\mbox{\tiny $(p)$}}}w = w+e\zeta  \accentset{\scalebox{.6}{\mbox{\tiny $(p)$}}}w_{\rm far}, \qquad 
   \accentset{\scalebox{.6}{\mbox{\tiny $(p)$}}}q = q+e\zeta  \accentset{\scalebox{.6}{\mbox{\tiny $(p)$}}}q_{\rm far} , \qquad
      \accentset{\scalebox{.6}{\mbox{\tiny $(p)$}}}\varpi = \varpi+e\zeta  \accentset{\scalebox{.6}{\mbox{\tiny $(p)$}}}\varpi_{\rm far} 
 \end{equation}
 one sees immediately that the associated currents  $\Jp$, $\Kp$ satisfy the global (relaxed) weighted
 bulk coercivity assumptions
  \begin{equation}
  \begin{aligned}
\label{insymbolsiiwithpweightK}
\Kp[\psi] + \tilde{A}\xi(r)\psi^2 &\gtrsim    r^{p-1} \rho(r) \left(  (r^{-1}L(r\psi))^2 + (L\psi)^2+ |\slashed\nabla \psi|^2\right) 
 + r^{-1-\delta}\rho (r) (\underline L\psi)^2 +r^{p-3} \tilde\rho(r) \psi^2,
 \end{aligned}
 \end{equation}
 and the global weighted boundary coercivity assumptions
 \begin{equation}
 \begin{aligned}
 \label{insymbolsiiwithpweightJ}
 \Jp{}_\mu[\psi] n_{\Sigma(\tau)}^\mu &\gtrsim r^p (r^{-1}L(r\psi))^2  +
 r^{\frac{p}2}(L\psi)^2+ |\slashed\nabla \psi|^2+\iota_{r\le R}( \underline{L}\psi)^2
+r^{\frac{p}2-2}\psi^2 , \\
\Jp{}_\mu[\psi] n_{\underline{C}_v}^\mu &\gtrsim (\underline{L}\psi)^2 +r^p|\slashed\nabla \psi|^2 +r^{p-2}\psi^2,\\
\Jp_\mu[\psi] n_{\mathcal{S}}^\mu &\gtrsim  (L\psi)^2+(\underline L\psi)^2 +|\slashed\nabla\psi|^2+\psi^2,
\end{aligned} 
\end{equation}
as one can absorb the error terms in the region  $\tilde{R}\le r\le \tilde{R}+1$ arising  from the cutoff by terms
controlled by the coercivity relations~\eqref{bulkunweightedcoercivity} and~\eqref{boundunweightedcoercivity},
provided $e$ is suitably chosen. We define also 
the associated $ \Hap= \accentset{\scalebox{.6}{\mbox{\tiny $(p)$}}}V^\mu\partial_\mu \psi +    \accentset{\scalebox{.6}{\mbox{\tiny $(p)$}}}w$ 
so that the divergence identity~\eqref{dividentitynotintegrated} holds with $\Jp$, $\Kp$ and
$\Hap$.

From the bounds~\eqref{seedfarboundedness} and~\eqref{insymbolsiiwithpweightJ}, we see that we have in fact
 \begin{equation}
 \begin{aligned}
 \label{insymbolsiiwithpweightJequiv}
 \Jp{}_\mu[\psi] n_{\Sigma(\tau)}^\mu &\sim r^p (r^{-1}L(r\psi))^2  +
 r^{\frac{p}2}(L\psi)^2+ |\slashed\nabla \psi|^2+\iota_{r\le R}( \underline{L}\psi)^2
+r^{\frac{p}2-2}\psi^2 , \\
\Jp{}_\mu[\psi] n_{\underline{C}_v}^\mu &\sim (\underline{L}\psi)^2 +r^p|\slashed\nabla \psi|^2 +r^{p-2}\psi^2,\\
\Jp_\mu[\psi] n_{\mathcal{S}}^\mu &\sim  (L\psi)^2+(\underline L\psi)^2 +|\slashed\nabla\psi|^2+\psi^2.
\end{aligned} 
\end{equation}

To state the resulting weighted versions of estimate~\eqref{inhomogeneous} let us introduce some notation.
For $\delta\le p \le 2-\delta$ we define
  \begin{align}
  \label{coeffshereand}
  \Ep(\tau) &: = \Ezero(\tau)+  \int_{\Sigma(\tau)\cap \{r\ge R\} } r^p (r^{-1}L(r\psi))^2+ r^{\frac{p}2}(L\psi)^2
  +r^{\frac{p}2-2}\psi^2, \\
   \label{weightedpflux}
 \Fp(v, \tau_0,\tau) &:= \Fzero(v, \tau_0,\tau) + \int_{\underline{C}_v\cap \mathcal{R}(\tau_0,\tau)} r^p|\slashed\nabla\psi|^2 + r^{p-2}\psi^2,\\
   \label{coeffsheretootoosmaller}
   \Epminusone'(\tau) &: = \Ezerominusoneminusdelta'(\tau)+ \int_{\Sigma(\tau)\cap \{r\ge R\} } r^{p-1}\left( (r^{-1}L(r\psi))^2 
    +  (L\psi)^2 +  |\slashed\nabla \psi|^2\right) +r^{p-3}\psi^2 .
  \end{align}

We also define versions of~\eqref{coeffsheretootoosmaller} where 
the first term is replaced by ${}^\chi \Ezerominusoneminusdelta'(\tau)$ or ${}^\rho \Ezerominusoneminusdelta'(\tau)$.
We will call these ${}^\chi\Epminusone'(\tau)$ and ${}^\rho \Epminusone'(\tau)$.
Note that these latter two expressions do not control a zero'th order term in the region $r<R$.

Note the following properties. For $2-\delta \ge p \ge \delta $,
we have
\begin{equation}
\label{fluxbulkrelation} 
\Ep \gtrsim \Epprime, \, \, \Fp \gtrsim \Fpprime {\rm\ \ for\ } p\ge p'\ge \delta {\rm\ or\ }p'=0, \qquad  
  \Epminusone' \gtrsim \Epminusone  {\rm\ for\ } p\ge 1+\delta, \qquad   \Epminusone' \gtrsim \Ezero  {\rm\ for\ }p\ge 1 .
\end{equation}

Let us also define the fluxes
\begin{equation}
\label{frakturepnergies}
\Efancyp(\tau):= \int_{\Sigma(\tau)}  \Jp_\mu[\psi] n_{\Sigma(\tau)}^\mu,  \qquad
\Efancyp_{\mathcal{S}}(\tau ) := \int_{\mathcal{S}(\tau_0,\tau)}  \Jp_\mu[\psi] n_{\mathcal{S}}^\mu , \qquad
\Ffancyp(v, \tau_0,\tau) :=\int_{\underline{C}_v\cap \mathcal{R}(\tau_0,\tau)} \Jp_\mu[\psi] n_{\underline{C}_v}^\mu.
\end{equation}

We note that by~\eqref{insymbolsiiwithpweightJequiv}, it follows that
\begin{equation}
\label{btermsequiv}
\Efancyp(\tau)\sim \Ep(\tau) , \qquad \Efancyp_{\mathcal{S}}(\tau_0,\tau ) = \Efancyzero_{\mathcal{S}}(\tau_0,\tau)\sim 
\Ezero_{\mathcal{S}}(\tau_0,\tau)
\end{equation}
\begin{equation}
\label{btermsequivag}
\Ffancyp(v, \tau_0,\tau)
\sim  \Fp(v, \tau_0,\tau).
\end{equation}

We may now state the main result of this section:
\begin{proposition}
\label{finaluncommuted}
Let $(\mathcal{M},g_0)$ satisfy the assumptions of Section~2 and Section~\ref{sumaryonetwothree}, and 
let $(V,w,q,\varpi)$ be as in Section~\ref{sumaryonetwothree}. 
(In particular, the estimates~\eqref{inhomogeneous} and~\eqref{unifiedhere} are assumed to hold.)
Assume, for all $0<\delta\le p\le 2-\delta$, the existence of
a quadruple $(\accentset{\scalebox{.6}{\mbox{\tiny $(p)$}}}V_{\rm far}, \accentset{\scalebox{.6}{\mbox{\tiny $(p)$}}}w_{\rm far},\accentset{\scalebox{.6}{\mbox{\tiny $(p)$}}}q_{\rm far},\accentset{\scalebox{.6}{\mbox{\tiny $(p)$}}}\varpi_{\rm far})$ as above 
satisfying~\eqref{seedfarboundedness} and the far away coercivity properties~\eqref{insymbolsiiwithpweightKfar} and~\eqref{insymbolsiiwithpweightJfar}. 

Then for all $0<\delta\le p \le 2-\delta$, the following estimate holds for all
$\tau_0\le \tau$:
\begin{equation}
\begin{aligned}
\label{rpidentherefluxes}
\sup_{v: \tau\le \tau(v)}\Ffancyp(v, \tau_0,\tau),  \qquad   &\Efancyp(\tau) + \Efancyp_{\mathcal{S}}(\tau_0,\tau ) +c \int_{\tau_0}^{\tau}{}^\rho\Epminusone'(\tau')
d\tau' 
+c \int_{\tau_0}^{\tau} {}^{\tilde\rho} \Eminusone' (\tau') d\tau' \\
&\leq  \Efancyp(\tau_0)+  A \int_{\tau_0}^{\tau}\Eerrorminusone'(\tau') +\int_{\mathcal{R}(\tau_0,\tau)}
\left| \Hap[\psi] F\right|
+C \int_{\mathcal{R}(\tau_0,\tau)}F^2
\end{aligned}
\end{equation}
as well as the estimate
\begin{equation}
\begin{aligned}
\label{rpidenthere}
\sup_{v: \tau\le \tau(v)} \Fp(v,\tau_0,\tau)&+  \Ep(\tau) + \Ezero_{\mathcal{S}}(\tau_0,\tau) +\int_{\tau_0}^{\tau}{}^\chi\Epminusone'(\tau')
d\tau' + \int_{\tau_0}^{\tau} \Eminusone' (\tau') d\tau' \\
&\lesssim \Ep(\tau_0)+  \int_{\mathcal{R}(\tau_0,\tau)}
\left(|r^pr^{-1} L (r\psi)|+|\tilde{V}_p^\mu\partial_\mu \psi |+|\tilde{w}_p\psi| \right) |F|
+\int_{\mathcal{R}(\tau_0,\tau)}F^2,
\end{aligned}
\end{equation} 
where $\tilde{V}_p$ is a fixed  vector field and  $\tilde{w}_p$ a fixed function satisfying
\begin{equation}
\label{inhomogboundednessassump}
 |g(\tilde{V}_p,L)| \lesssim 1, \qquad  |g(\tilde{V}_p,\underline{L})|\lesssim 1, \qquad  \sum |g(\tilde{V}_p,\Omega_i)|^2  \lesssim 1, \qquad
| \tilde{w}_p | \lesssim r^{-1} .
\end{equation}
\end{proposition}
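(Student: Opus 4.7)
The strategy is to derive $\eqref{rpidentherefluxes}$ by a direct energy identity argument with the combined current $\eqref{thetriple}$, and then obtain $\eqref{rpidenthere}$ by combining this with the assumed black box estimate $\eqref{inhomogeneous}$.

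For $\eqref{rpidentherefluxes}$, I would first fix the cutoff parameter $e>0$ in $\eqref{thetriple}$ sufficiently small (in a manner depending only on the background geometry) so that the combined quadruple $(\accentset{\scalebox{.6}{\mbox{\tiny $(p)$}}}V, \accentset{\scalebox{.6}{\mbox{\tiny $(p)$}}}w, \accentset{\scalebox{.6}{\mbox{\tiny $(p)$}}}q, \accentset{\scalebox{.6}{\mbox{\tiny $(p)$}}}\varpi)$ indeed satisfies the global coercivity statements $\eqref{insymbolsiiwithpweightK}$ and $\eqref{insymbolsiiwithpweightJ}$: the commutator errors supported in $\tilde R \le r \le \tilde R +1$ coming from $\zeta'$ are lower order and bounded (pointwise) by an $\mathcal{O}(e)$ multiple of the far-away coercivity terms plus an $\mathcal{O}(e)$ multiple of terms of the type controlled by the unweighted $K^{V,w,q}$, which can then be absorbed into the unweighted bulk coercivity $\eqref{bulkunweightedcoercivity}$. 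With the current fixed, I would apply the divergence identity $\eqref{dividentitynotintegrated}$ to a solution of $\eqref{linearhomogeq}$ and integrate over $\mathcal{R}(\tau_0,\tau,v)$ to obtain $\eqref{energyidentity}$ with the superscript $(p)$ currents. The boundary terms on $\Sigma(\tau)$, $\underline{C}_v$, and $\mathcal{S}(\tau_0,\tau)$ are then precisely $\Efancyp(\tau)$, $\Ffancyp(v,\tau_0,\tau)$, $\Efancyp_{\mathcal{S}}$ by $\eqref{frakturepnergies}$, while the bulk $\Kp$ is bounded below via $\eqref{insymbolsiiwithpweightK}$ modulo the $\tilde A \xi \psi^2$ contribution, which yields exactly the $A \int \Eerrorminusone'$ term on the right hand side (absorbing the constant $\tilde A$ into $A$). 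Taking the supremum over $v$ and noting that the $\int F^2$ term enters trivially (it can simply be dropped if not needed) gives $\eqref{rpidentherefluxes}$.

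For $\eqref{rpidenthere}$, the plan is to add to $\eqref{rpidentherefluxes}$ a large multiple of $\eqref{inhomogeneous}$. The role of $\eqref{inhomogeneous}$ is to restore non-degenerate control in the region $r \le R$, where $\chi$ may vanish but is dominated by the $\chi$-weighted integrand of $\eqref{inhomogeneous}$, while in $r \ge R$ all the new $r^p$-weighted contributions on the left hand side come directly from $\eqref{rpidentherefluxes}$ via $\eqref{btermsequiv}$, $\eqref{btermsequivag}$. Since $\xi$ vanishes where $\chi \neq 1$ by $\eqref{xivanishing}$, the error term $A\int \Eerrorminusone'$ on the right hand side of $\eqref{rpidentherefluxes}$ is pointwise bounded by the non-degenerate $\int_{\tau_0}^\tau {}^{\chi}\Ezerominusoneminusdelta'$ integrand produced on the left hand side by $\eqref{inhomogeneous}$ (after being multiplied by a sufficiently large constant), and hence can be absorbed. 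For the inhomogeneous $\Hap F$ term, expand $\Hap = \accentset{\scalebox{.6}{\mbox{\tiny $(p)$}}}V^\mu \partial_\mu\psi + \accentset{\scalebox{.6}{\mbox{\tiny $(p)$}}}w \psi$ using $\accentset{\scalebox{.6}{\mbox{\tiny $(p)$}}}V = V + e\zeta(r^p L + \tilde V_{\rm far})$ and $\accentset{\scalebox{.6}{\mbox{\tiny $(p)$}}}w = w + e\zeta(\accentset{\scalebox{.6}{\mbox{\tiny $(p)$}}}{{w}}^{\circ}_{\rm far} + \tilde w_{\rm far})$; the $r^p L\psi$ contribution is rewritten as $r^{p-1}L(r\psi) - r^{p-1}\psi$, absorbing the zero-order piece into the $\tilde w_p \psi$ contribution using the weights permitted by $\eqref{inhomogboundednessassump}$, and the remaining pieces are collected into a single $\tilde V_p$, $\tilde w_p$ verifying $\eqref{inhomogboundednessassump}$ (here the $\tilde V_0 = V_0$, $w_0$ of $\eqref{heresomeboundssat}$ contribute the $r\le R$ behaviour).

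The main technical point will be correctly tracking the decomposition of $\Hap F$ in the second estimate so that the right hand side of $\eqref{rpidenthere}$ is manifestly bounded by a single $L$-transport term times $|F|$ plus a first order expression satisfying the bounds $\eqref{inhomogboundednessassump}$; the bookkeeping near $r = \tilde R$ (where $\zeta'$ is supported) is straightforward provided $e$ was fixed first, but the necessity of isolating the $r^{p-1}L(r\psi)$ combination — which does not follow directly from $\accentset{\scalebox{.6}{\mbox{\tiny $(p)$}}}V^\mu \partial_\mu \psi$ without use of the additional $\accentset{\scalebox{.6}{\mbox{\tiny $(p)$}}}{{w}}^{\circ}_{\rm far}\psi$ term to convert $r^p L\psi$ into $r^{p-1}L(r\psi)$ modulo lower order — is the only slightly subtle algebraic step. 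All other bulk absorptions use only the pointwise coercivity already established in $\eqref{insymbolsiiwithpweightK}$ together with standard Cauchy--Schwarz, and $\eqref{inhomogeneous}$ is applied in the form explained in Remark~\ref{weakerassumption} to produce the inhomogeneity terms in $r\le R$.
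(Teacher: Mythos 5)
Your overall strategy is the paper's: obtain~\eqref{rpidentherefluxes} from the integrated divergence identity for the current $\Jp$ with its assumed coercivity properties, then add a large multiple of~\eqref{inhomogeneous} to absorb the $A\int\Eerrorminusone'$ term and obtain~\eqref{rpidenthere}. The discussion of building the $p$-weighted current via~\eqref{thetriple} and the bookkeeping of the $\Hap F$ decomposition into $r^pr^{-1}L(r\psi)$ and the remaining $\tilde V_p, \tilde w_p$ pieces simply makes explicit what the paper develops in the paragraph preceding the proposition.

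However, the step where you absorb $A\int\Eerrorminusone'$ contains a genuine error. You claim that the integrand of $\Eerrorminusone'$, which is $\xi(r)\psi^2$ (a zeroth order quantity), is ``pointwise bounded by'' the integrand of ${}^\chi\Ezerominusoneminusdelta'$, which is $r^{-1-\delta}\chi(r)(|L\psi|^2+|\underline L\psi|^2+|\slashed\nabla\psi|^2)$ (a purely first order quantity). There is no such pointwise bound: a zeroth order term cannot be dominated pointwise by a first order one. You also invoke the condition $\xi = 0$ where $\chi\ne 1$ (i.e.~\eqref{xivanishing}); that condition plays no role here. It is needed only for the higher order relation~\eqref{glue} in Section~3.5. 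The paper instead uses the \emph{trivial} observation $\Eerrorminusone' \lesssim \Eminusone'$, which holds because $\xi$ is bounded and supported on a compact region, so that $\xi(r)\psi^2 \lesssim r^{-3-\delta}\psi^2$ pointwise. The absorbing term on the left hand side of~\eqref{inhomogeneous} is therefore $c\int\Eminusone'$, the zeroth order term, not the $\chi$-weighted first order term you name. The conclusion of the step is correct, but the reasoning you give does not support it and the term you identify as doing the absorption does not do it.
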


\begin{remark}
In the case where we replace the middle term of~\eqref{inhomogeneous} 
with~\eqref{replacedterm}, we 
should add the first term of~\eqref{replacedterm} to the right hand side of~\eqref{rpidenthere}.
\end{remark}

\begin{proof}
The estimate~\eqref{rpidentherefluxes} follows immediately from the energy identity~\eqref{energyidentity} 
corresponding
to the current $\Jp$, in view of the properties assumed. 

Note that in cases (i) or (ii), estimate~\eqref{rpidentherefluxes} already implies~\eqref{rpidenthere},
in view also of~\eqref{seedfarboundedness}. In general, to obtain~\eqref{rpidenthere}, we add a large multiple
of  estimate~\eqref{inhomogeneous} to~\eqref{rpidentherefluxes}. This allows us to absorb the term
multiplying $A$ on the right hand side of~\eqref{rpidentherefluxes}
in view of the trivial relation
\[
\Eerrorminusone' \lesssim \Eminusone'.
\]
\end{proof}

For consistency, we will in what follows often denote the quadruple $(V,w,q,\varpi)$ of Section~\ref{sumaryonetwothree}
as $\accentset{\scalebox{.6}{\mbox{\tiny $(0)$}}}V$,  $\accentset{\scalebox{.6}{\mbox{\tiny $(0)$}}}w$,  
   $\accentset{\scalebox{.6}{\mbox{\tiny $(0)$}}}q$,
      $\accentset{\scalebox{.6}{\mbox{\tiny $(0)$}}}\varpi$, and the currents as $\Jzero$, $\Kzero$, $\Hazero$.
      
Finally, in view of Proposition~\ref{enhancedinteriorprop}, given $\varsigma\ge 1$, we may define versions of the above currents where $V_{\varsigma}$, $w_{\varsigma}$,
$q_{\varsigma}$, $\varpi_{\varsigma}$
replace $V$, $w$, $q$, $\varpi$, respectively, in definition~\eqref{thetriple}. 
We will denote these currents as 
\[
\Jp_{\varsigma},\qquad \Kp_{\varsigma}, \qquad \Hap_{\varsigma}.
\]
All boundedness and coercivity inequalities will continue to hold with constants independent of $\varsigma$ in the region~$r\ge r_{\rm Killing}$,
while in the region $r<r_{\rm Killing}$, the resulting $\Kp_\varsigma$ will satisfy the enhanced positivity property~\eqref{enhancedherefinal}, at the expense
of lower order terms. 
(In the region $r<r_{\rm Killing}$, however, we emphasise the dependence of the coercivity constants on $\varsigma$.)

\subsection{Higher order estimates}
\label{higherorderestimates}

As is well known, in applications to nonlinear problems, one must be able to prove
higher order estimates in order for the estimates to close.

\subsubsection{The commutation vector fields $\mathfrak{D}$ and the auxiliary  $\widetilde{\mathfrak{D}}$}
\label{commutatordefsec}

Let $\zeta(r)$ denote a  smooth cutoff function 
such that  $\zeta=1$ for $r\ge 3R/4$ and $\zeta=0$ for $r\le R/2$. If $\mathcal{S}=\emptyset$, let $\hat\zeta=0$.
Otherwise, let $\hat\zeta(r)$ denote a smooth cutoff function 
such that $\hat\zeta=1$ for $r\le r_1+(r_2-r_1)/4$ and $\hat\zeta=0$ for $r\ge r_1+(r_2-r_1)/2$. 
If $\Omega_1$ is Killing, we define $\upsilon=1$, otherwise we set $\upsilon=0$. 
(More generally, we may take $\upsilon=0$ if $T$ coincides with the Killing generator and is timelike in $r>r_{\rm Killing}$.)

We define
\begin{equation}
\label{commutatordefine}
\mathfrak{D}_{1}=T, \qquad \mathfrak{D}_2=\upsilon\Omega_1, \qquad
 \mathfrak{D}_{3}=   Y, \qquad
  \mathfrak{D}_{4} = \zeta L, \qquad
\mathfrak{D}_{5}=\zeta\underline L, \qquad
  \mathfrak{D}_{5+i}= (\zeta+\hat\zeta) \Omega_i, \, i=1,\ldots, 3.
\end{equation}

We will denote ${\bf k}=(k_1,k_2,k_3,k_4, k_5, k_6, k_7, k_8)$, $|{\bf k}|=\sum_{i=1}^{8} k_i$
and 
by $\mathfrak{D}^{{\bf k}}\psi$   the  expression
\begin{equation}
\label{commutators}
\mathfrak{D}^{{\bf k}}\psi = 
 \mathfrak{D}_1^{k_1}\mathfrak{D}_2^{k_2}\cdots \mathfrak{D}_{8}^{k_{8}}\psi.
\end{equation}

Note that the vector fields~\eqref{commutatordefine}
span the tangent space for $r\ge 3R/4$ and, if $\mathcal{S}\ne\emptyset$, in $r\le r_1+(r_2-r_1)/4$,
but not so in general for $r_1+(r_2-r_1)/4\le r\le 3R/4$. It is also useful to have a collection of operators
that do. We thus define $\widetilde{\mathfrak{D}}^{\bf k}$
to denote commutation  strings of operators from the collection
\begin{equation}
\label{commutatortildedefine}
\widetilde{\mathfrak{D}}_{1}=L, \qquad \widetilde{\mathfrak{D}}_{2}=  \underline{L} , \qquad \widetilde{\mathfrak{D}}_{2+i}=  \Omega_i, \qquad i=1,\ldots, 6, 
\end{equation}
i.e.~we define
\begin{equation}
\label{commutatorsalt}
\widetilde{\mathfrak{D}}^{{\bf k}}\psi = 
\widetilde{\mathfrak{D}}_1^{k_1}\widetilde{\mathfrak{D}}_2^{k_2}\cdots \widetilde{\mathfrak{D}}_{8}^{k_8}\psi.
\end{equation}
(Recall that the additional $\Omega_i$, $i=4,5,6$ were introduced in the case $\mathcal{S}=\emptyset$ to ensure the existence of a convenient
globally defined
spanning set of vectors. In the case $\mathcal{S}\ne\emptyset$, we may understand the above formula  with $\Omega_4=\Omega_5=\Omega_6=0$.)

\subsubsection{Assumption on commutation errors in $r\ge  R_k$}
\label{farawaycommutationerror}

As is to be expected, in order to obtain higher order estimates on~\eqref{inhomogeneous},
we will need to strengthen our asymptotic flatness assumption
to a higher order statement on $g_0$.
Again, instead of formulating sufficient conditions in terms of the decay properties of $g_0$ for large $r$,
it will be convenient here to directly assume exactly the statement we shall need, 
in terms of decay properties of  the coefficients appearing in 
$ [\mathfrak{D}^{\bf k},\Box_{g_0}]\psi$.

Our assumption is thus simply the following: For all $k\ge 1$, there exists an $R_k$ such that 
 the following pointwise  bound for $p=0$ and for $\delta \le p \le 2-\delta$ holds in $r\ge R_k$
\begin{equation}
\label{commutatorerrorformula}
\left| \sum_{|{\bf k}|= k} \Hap[\mathfrak{D}^{\bf k}\psi] [\mathfrak{D}^{\bf k},\Box_{g_0}]\psi \right|
\leq\frac1{12}  \sum_{|{\bf k}|= k}
\Kp[{\mathfrak{D}}^{\bf k}\psi]+ C \sum_{|{\bf k}|\le k-1}
\Kp[{\mathfrak{D}}^{\bf k}\psi]
 \end{equation}
 for all smooth functions $\psi$. 
 
 Again, we emphasise that by our conventions of Section~\ref{constantsandparameterssec}, 
 the constant $C$ on the right hand side of~\eqref{commutatorerrorformula} in general
 depends on $k$.  (In actuality, we need only assume~\eqref{commutatorerrorformula} for all $k\le k_{\rm asympt}$ for some sufficiently large
 $k_{\rm asympt}$,  but   the statement
 of our theorem will then be restricted to such $k$.) 

Assumption~\eqref{commutatorerrorformula} is easily seen to be satisfied in all examples of Section~\ref{seeexamples}.
We note that the $k$ dependence of $R_k$ is in general necessary, even
for Minkowski space, as we must take $R_k \to \infty$ as $k \to \infty$.

\subsubsection{The red-shift commutation}
\label{commutationerrorsnearhor}

We recall the following
\begin{proposition}[\cite{Mihalisnotes}]
\label{fromDR}
Let $Y$ be the vector field of Section~\ref{Yvectorfieldsec}, 
and let $\mathcal{H}^+$ be a Killing horizon with positive surface gravity as assumed in 
Section~\ref{Killinghorizonsection},
such that the generator $Z$ lies in the span of $T$ and $\Omega_1$. Then, along $\mathcal{H}^+$, we have 
\[
 [Y^{k},\Box_{g_0}]\psi  = \kappa_k(\vartheta,\varphi) (Y^{k+1}\psi)+ Y^{k}\Box_{g_0}\psi +
  \sum_{|{\bf k}|\le k+1, k_3< k+1}   \alpha_{\bf k} (\vartheta,\phi)({\mathfrak{D}}^{\bf k} \psi) ,
\]
for a $\kappa_k\ge c>0$.
\end{proposition}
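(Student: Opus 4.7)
The plan is to proceed by induction on $k$, with the base case $k=1$ carried out by a direct frame computation at $\mathcal{H}^+$, and the inductive step organized around the algebraic identity
\[
	[Y^{k+1},\Box_{g_0}] = Y\circ [Y^{k},\Box_{g_0}] + [Y,\Box_{g_0}]\circ Y^{k}.
\]

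First, I would set up a null frame $(Z,Y,e_A)$ near $\mathcal{H}^+$, where $e_1,e_2$ are local orthonormal vectors tangent to the cross-sections $\mathcal{H}^+\cap \Sigma(\tau)$. By construction $Y$ is null, transversal, and orthogonal to the $e_A$ on $\mathcal{H}^+$, while $Z$ is null and tangent to $\mathcal{H}^+$ with $g_0(Y,Z)\neq 0$. Thus on $\mathcal{H}^+$ the inverse metric satisfies $g^{ZZ}=g^{YY}=0$, $g^{ZA}=g^{YA}=0$, and $g^{ZY}$ equals a nonzero constant depending on normalisations, so that
\[
	\Box_{g_0}\psi\big|_{\mathcal{H}^+}
	= 2g^{ZY}YZ\psi + g^{AB}e_A e_B\psi + b^\mu\mathfrak{D}_\mu\psi,
\]
where the last term is at most first order in the frame vectors, with coefficients depending only on $(\vartheta,\varphi)$. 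Now for the base case $k=1$, the second-order part of $[Y,\Box_{g_0}]\psi$ equals $-(\mathcal{L}_Y g^{\alpha\beta})\partial_\alpha\partial_\beta\psi$ modulo first-order terms, and the key coefficient of $Y^2\psi$ is $-\mathcal{L}_Y g^{YY}|_{\mathcal{H}^+}$. Using the relation $\nabla_Z Z = \kappa Z$ of Section~\ref{Killinghorizonsection}, together with the fact that $g^{YY}$ vanishes on $\mathcal{H}^+$ and its transverse derivative is controlled by the surface gravity, a direct calculation gives $-\mathcal{L}_Y g^{YY}|_{\mathcal{H}^+} = 2\kappa/|g_0(Y,Z)|^2 > 0$ (up to normalisation conventions). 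All remaining second-order contributions pair with angular or $Z$-derivatives, so they either combine (with the $g^{AB}e_Ae_B\psi$ term) to reproduce $Y\Box_{g_0}\psi$, or fall into $\sum_{|{\bf k}|\leq 2, k_3<2}\alpha_{\bf k}\mathfrak{D}^{\bf k}\psi$; the first-order and zeroth-order terms are trivially of that form. This establishes the case $k=1$ with $\kappa_1=2\kappa/|g_0(Y,Z)|^2 \ge c > 0$.

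For the inductive step, assuming the formula at level $k$, I substitute it into the right-hand side of the algebraic identity above to get
\begin{align*}
	[Y^{k+1},\Box_{g_0}]\psi &= Y\bigl(\kappa_k Y^{k+1}\psi + Y^k\Box_{g_0}\psi + \textstyle\sum_{|{\bf k}|\leq k+1,\,k_3<k+1}\alpha_{\bf k}\mathfrak{D}^{\bf k}\psi\bigr) + [Y,\Box_{g_0}]Y^k\psi,
\end{align*}
and then apply the base case to the auxiliary function $\Phi=Y^k\psi$, obtaining $[Y,\Box_{g_0}]\Phi = \kappa_1 Y\Phi\cdot Y + Y\Box_{g_0}\Phi + \dots$, i.e.~$[Y,\Box_{g_0}](Y^k\psi) = \kappa_1 Y^{k+2}\psi + Y\Box_{g_0}(Y^k\psi) + \sum_{|{\bf k}|\leq 2,k_3<2}\tilde\alpha_{\bf k}\mathfrak{D}^{\bf k}(Y^k\psi)$. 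The top-order term in $Y$ from the first piece is $\kappa_k Y^{k+2}\psi$ and from the second piece is $\kappa_1 Y^{k+2}\psi$, so that $\kappa_{k+1}=\kappa_k+\kappa_1$, which is manifestly strictly positive. The term $Y(Y^k\Box_{g_0}\psi)$ combines with (part of) $Y\Box_{g_0}(Y^k\psi)$ to produce precisely the required $Y^{k+1}\Box_{g_0}\psi$ on the right-hand side (the remainder being commutator errors $[Y,\Box_{g_0}](Y^k\psi)-Y\cdot Y^k\Box_{g_0}\psi$, which by induction at level $1$ and $k$ have $k_3\le k+1$ and drop into the error sum).

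The principal point to verify, and the main obstacle in the argument, is the precise bookkeeping in the error sum: one must check that every term produced by (i) $Y$ differentiating a coefficient $\alpha_{\bf k}(\vartheta,\varphi)$ (which stays along $\mathcal{H}^+$ but produces no extra $Y$-derivative on $\psi$), (ii) $Y$ commuting past a non-$Y$ element $\mathfrak{D}_j$ of the string $\mathfrak{D}^{\bf k}$, and (iii) the base-case error terms applied to $Y^k\psi$, can be rewritten as $\alpha_{\bf k'}\mathfrak{D}^{\bf k'}\psi$ with $|{\bf k'}|\leq k+2$ and $k'_3<k+2$. For (i) and (iii) this is immediate; for (ii), one uses that in the region where the identity is being used, the brackets $[Y,\mathfrak{D}_j]$ with $j\ne 3$ lie in the span of $\{\mathfrak{D}_i\}$ with coefficients depending only on the horizon coordinates, so that each commutation increases $|{\bf k}|$ by at most one while strictly decreasing the number of $Y$-factors in the offending term. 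This closes the induction and yields the claim with $\kappa_{k+1}=(k+1)\kappa_1\ge c>0$.
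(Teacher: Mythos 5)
The paper provides no proof of this proposition; it is cited to~\cite{Mihalisnotes}, so there is no internal argument to compare against. Your proposal is, in outline, the right strategy (a frame computation at the horizon for $k=1$, followed by iteration), and your final coefficient relation $\kappa_{k+1}=\kappa_k+\kappa_1$, hence $\kappa_k=k\kappa_1$, is the expected answer. However, the inductive step as written contains a genuine gap.

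The identity at level $k$ is asserted only \emph{along} $\mathcal{H}^+$. In the inductive step you substitute it into
\[
Y\bigl([Y^{k},\Box_{g_0}]\psi\bigr) = Y\bigl(\kappa_k Y^{k+1}\psi + Y^{k}\Box_{g_0}\psi + \textstyle\sum \alpha_{\bf k}\mathfrak{D}^{\bf k}\psi\bigr),
\]
but $Y$ is, by construction, \emph{transversal} to $\mathcal{H}^+$. An identity $A=B$ that holds only on a hypersurface does not permit one to conclude $YA=YB$ there when $Y$ points off the hypersurface: the two sides may agree on $\mathcal{H}^+$ while having different transverse derivatives. The quantity $Y([Y^{k},\Box_{g_0}]\psi)$ is a perfectly good function (it is an explicit $(k+2)$-th order differential operator applied to $\psi$), but the inductive hypothesis tells you nothing about its restriction to $\mathcal{H}^+$, precisely because you only know what $[Y^{k},\Box_{g_0}]\psi$ equals on $\mathcal{H}^+$, not in a neighbourhood. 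As a consequence the claimed cancellation producing $\kappa_{k+1}=\kappa_k+\kappa_1$, and the reassignment of $Y(Y^k\Box_{g_0}\psi)$ and $Y\Box_{g_0}(Y^k\psi)$ into the single required term $Y^{k+1}\Box_{g_0}\psi$, are not justified by the stated hypothesis.

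The fix is to prove by induction a \emph{stronger} statement, valid in a full neighbourhood of $\mathcal{H}^+$: namely, that there is a smooth function $b_k$ and smooth (not merely angular) error coefficients such that
\[
\Box_{g_0}(Y^k\psi) = Y^k\Box_{g_0}\psi + b_k\, Y^{k+1}\psi + \sum_{|{\bf k}|\le k+1,\ k_3<k+1}\beta_{\bf k}(x)\,\mathfrak{D}^{\bf k}\psi + (\text{terms which vanish on }\mathcal{H}^+),
\]
with $b_k\big|_{\mathcal{H}^+}=\kappa_k$. With this neighbourhood identity in hand the substitution into $Y\Box_{g_0}(Y^k\psi)$ is legitimate, the extra pieces $(Yb_k)Y^{k+1}\psi$ and the $Y$-derivatives of the error coefficients fall into the error sum at level $k+1$, and restricting to $\mathcal{H}^+$ at the very end yields the stated formula with $\kappa_{k+1}=\kappa_k+\kappa_1$. (The base case itself must likewise be established off $\mathcal{H}^+$, tracking the smooth extension of the key coefficient; the identification of $\kappa_1$ with the surface gravity only becomes a sharp equality upon restriction, and the positivity depends on the specific construction of $Y$ described in~\cite{Mihalisnotes}, not merely on its being null and transversal.) Thus the conclusion is likely correct, but as written the inductive step is not valid, and the argument needs the strengthened inductive hypothesis spelled out above.
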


We have
\begin{corollary}
\label{corollaryofnotes}
Let   $\Hap$, $\Hap_\varsigma$ be as defined in Section~\ref{rpsection}.  Then there exist constants $c>0$, $C>0$ (independent of $\varsigma$)
such that along $\mathcal{H}^+$, we have
\begin{equation}
\label{thisistheassumptionimplied}
\Hap [Y^{k}\psi] [Y^{k},\Box_{g_0}]\psi  \geq c (Y^{k+1} \psi)^2 - C \sum_{|{\bf k}|\le k+1, k_3\ne k+1} ({\mathfrak{D}}^{\bf k} \psi)^2.
\end{equation}
\end{corollary}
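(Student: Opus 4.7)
\medskip
\noindent\emph{Plan.} The proof is a direct computation combining Proposition~\ref{fromDR} with the specific structure of the vector field $\accentset{\scalebox{.6}{\mbox{\tiny $(p)$}}}V$ (respectively $V_\varsigma$) at the horizon $\mathcal{H}^+$.

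First, I would unpack $\Hap[Y^k\psi]$ along $\mathcal{H}^+$. By construction (cf.\ \eqref{thetriple} together with the fact that the far-away modification is supported in $r\ge\tilde R\gg r_{\mathrm{Killing}}$), one has $\accentset{\scalebox{.6}{\mbox{\tiny $(p)$}}}V|_{\mathcal{H}^+}=V|_{\mathcal{H}^+}$ and similarly $\accentset{\scalebox{.6}{\mbox{\tiny $(p)$}}}V_\varsigma|_{\mathcal{H}^+}=V_\varsigma|_{\mathcal{H}^+}=V'_\varsigma|_{\mathcal{H}^+}$ (since $\lambda_\varsigma\equiv 1$ for $r\ge r_{\mathrm{Killing}}$, by Proposition~\ref{enhancedinteriorprop}). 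Furthermore, the red-shift construction of Theorem~7.1 of~\cite{Mihalisnotes} that underlies Proposition~\ref{enhancedprop} produces $V'_\varsigma$ of the form
\[
V'_\varsigma\big|_{\mathcal{H}^+}=b(\vartheta,\varphi)\,Y+V_{\mathrm{tang}},\qquad b\ge b_0>0,
\]
where $b_0$ is independent of $\varsigma$ (the $\varsigma$-dependence only rescales the tangential component, not the transversal $Y$-component) and $V_{\mathrm{tang}}$ lies in the span of $Z$ and the $\Omega_i$. Plugging this in and commuting the operators $T,\Omega_i$ through the $Y^k$ at the expense of terms of the form $\mathfrak{D}^{\bf k}\psi$ with $|{\bf k}|\le k+1$ and $k_3\le k$, I obtain on $\mathcal{H}^+$
\[
\Hap[Y^k\psi]=b(\vartheta,\varphi)\,Y^{k+1}\psi+\sum_{|{\bf k}|\le k+1,\ k_3\le k}\beta_{\bf k}(\vartheta,\varphi)\,\mathfrak{D}^{\bf k}\psi,
\]
with coefficients $\beta_{\bf k}$ bounded uniformly in $\varsigma$.

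Second, Proposition~\ref{fromDR} (applied to smooth $\psi$, for which the $Y^k\Box_{g_0}\psi$ contribution can be cleanly separated) gives on $\mathcal{H}^+$
\[
[Y^k,\Box_{g_0}]\psi=\kappa_k(\vartheta,\varphi)\,Y^{k+1}\psi+\sum_{|{\bf k}|\le k+1,\ k_3\le k}\alpha_{\bf k}(\vartheta,\varphi)\,\mathfrak{D}^{\bf k}\psi,
\]
with $\kappa_k\ge c>0$ uniformly. Multiplying the two expressions yields the principal positive term
\[
b(\vartheta,\varphi)\,\kappa_k(\vartheta,\varphi)\,(Y^{k+1}\psi)^2\ \ge\ b_0 c\,(Y^{k+1}\psi)^2
\]
together with cross terms of the shape $(Y^{k+1}\psi)\cdot(\mathfrak{D}^{\bf k}\psi)$ and quadratic error terms $(\mathfrak{D}^{\bf k}\psi)(\mathfrak{D}^{{\bf k}'}\psi)$, each with $k_3,k'_3\le k$.

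Third, to each cross term I apply the weighted arithmetic-geometric mean inequality
\[
\big|(Y^{k+1}\psi)(\mathfrak{D}^{\bf k}\psi)\big|\le \tfrac{b_0 c}{4N}(Y^{k+1}\psi)^2+\tfrac{N}{b_0 c}(\mathfrak{D}^{\bf k}\psi)^2,
\]
choosing $N$ larger than the number of such cross terms, and handle the quadratic error terms trivially by $2|ab|\le a^2+b^2$. Summing, a fixed fraction (say one half) of the principal term survives and all remaining contributions are absorbed into $\sum_{|{\bf k}|\le k+1,\,k_3\ne k+1}(\mathfrak{D}^{\bf k}\psi)^2$, yielding the claim with $c=\tfrac12 b_0 c_k$ and some constant $C$, both independent of~$\varsigma$.

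The main obstacle I anticipate is bookkeeping the $\varsigma$-independence: one must verify that the $Y$-component of $V'_\varsigma$ at $\mathcal{H}^+$ really is bounded below uniformly in $\varsigma$, i.e.\ that the enhancement mechanism in Proposition~\ref{enhancedprop} introduces the large factor $\varsigma$ only in front of the Killing-tangential directions and does not dilute the transversal $Y$-component that powers the red-shift. Once this structural point is extracted from the construction in~\cite{Mihalisnotes}, the rest of the argument is a routine application of Cauchy--Schwarz.
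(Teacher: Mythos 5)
Your decomposition--Cauchy--Schwarz strategy is the natural and correct one, and it is essentially what the paper has in mind (the paper gives no written proof, treating this as immediate from Proposition~\ref{fromDR} and the coercivity of the currents). Two remarks, however.

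First, you flag as the ``main obstacle'' the verification, from the explicit construction in \cite{Mihalisnotes}, that the $Y$-component of $V'_\varsigma$ on $\mathcal{H}^+$ is uniformly bounded below in~$\varsigma$. This is not actually an obstacle, and you need not reopen the construction: Proposition~\ref{enhancedprop} already asserts that the constants in the boundedness property~\eqref{seedboundedness} \emph{and} the boundary coercivity~\eqref{boundunweightedcoercivity} for the $V'_\varsigma$-currents may be taken independent of $\varsigma$. Since at $\mathcal{H}^+$ one has $r\le R$ and~\eqref{boundunweightedcoercivity} demands control of $(\underline{L}\psi)^2$ as well (i.e.\ of the $Y$-derivative), $V'_\varsigma$ is forced to be \emph{uniformly strictly timelike} there, and hence its $Y$-component is bounded below by a $\varsigma$-independent constant. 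Together with $V_\varsigma=V'_\varsigma$ for $r\ge r_{\rm Killing}$ (Proposition~\ref{enhancedinteriorprop}) and $\zeta(r)=0$ on $\mathcal{H}^+$, this gives your $b\ge b_0>0$ directly from the paper's stated uniform bounds, without touching~\cite{Mihalisnotes}. Your phrasing ``the $\varsigma$-dependence only rescales the tangential component'' is also slightly misleading: the $\varsigma$ enters through the deformation tensor and hence the bulk $K$-current, not through the size of $V'_\varsigma$ itself, which~\eqref{seedboundedness} bounds uniformly.

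Second, you sweep the $Y^k\Box_{g_0}\psi$ term appearing in Proposition~\ref{fromDR} under ``can be cleanly separated.'' This needs a sentence: for a general smooth $\psi$ it is a genuine $(k+2)$-th order datum and \emph{cannot} be absorbed by the $\sum_{|{\bf k}|\le k+1,\,k_3\ne k+1}(\mathfrak{D}^{\bf k}\psi)^2$ error on the right of~\eqref{thisistheassumptionimplied}. Since the Corollary is invoked in the proof of Proposition~\ref{finalcoercivityprop} (see~\eqref{labelonthefirst}) to produce a bound with no $\Box_{g_0}\psi$- or $F$-contribution, the intended reading is that this term does not contribute to the commutator expansion used here (it is the $Y^k$-differentiated inhomogeneity, handled separately in the applications), so you should simply note explicitly that this term is excluded from the estimate rather than asserting a ``clean separation'' for arbitrary smooth~$\psi$. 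With those two clarifications, the proof is fine.
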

Again, we emphasise that by our conventions of Section~\ref{constantsandparameterssec}, 
 the constants $c$ and $C$ on the right hand side of~\eqref{thisistheassumptionimplied} in general
 depend on $k$.

\subsubsection{Divergence identity for the higher order master currents}
\label{mastercurrentsandfluxdef}

We define the following positive signature function for $1\le |{\bf k}|\le k$
\begin{equation}
\begin{aligned}
\label{signaturedefinition}
\sigma({\bf k},k)&= \sigma_{12}(k) {\rm\ if\ } k_1+k_2= |{\bf k}| \\
		&=1 {\rm\ otherwise}\\
\end{aligned}
\end{equation}
where $\sigma_{12}$ will be chosen later,  such that
moreover $\sigma_{12}\ge1$.

We define an additional signature function $\varsigma(k)$, for $k\ge 1$ also to be determined later, which will be used
to select the parameter of Proposition~\ref{enhancedinteriorprop} to be used in the currents for higher order estimates.

Let us finally fix a   positive  function 
\begin{equation}
\label{varkappadef}
\varkappa({\bf k}, k)=\varkappa(|{\bf k}|, k) = (\varkappa_0(k))^{1-|{\bf k}|}
\end{equation}
 for a
$\varkappa_0(k) \ge 1$  to be determined later.

Given the above commutation vector fields~\eqref{commutatordefine}, and the notation~\eqref{commutators},
we may now define currents
\begin{equation}
\begin{aligned}
\label{JpkdefKpkdef}
 \Jpk[\psi]&:=\varkappa(0,k) \Jp[\psi]+  \sum_{1\le |{\bf k}|\leq k} \varkappa ({\bf k},k) \sigma({\bf k},k ) \Jp_{\varsigma(k)} [\mathfrak{D}^{\bf k}\psi], \\
 \Kpk[\psi]&:=\varkappa(0,k) \Kp[\psi]+  \sum_{1\le |{\bf k}|\leq k} \varkappa ({\bf k},k) \sigma({\bf k},k ) \Kp_{\varsigma(k)} [ \mathfrak{D}^{\bf k}\psi],
 \end{aligned}
\end{equation}
and given a collection $G_{k}=\{G_{\bf k}\}_{|{\bf k}|\le k} $ of scalar functions, the current
\begin{equation}
\label{Hapkdef}
 \Hapk[\psi]\cdot G_k:=
 \varkappa_0(k)\Hap[\psi]G_0+
 \sum_{1\le |{\bf k}|\leq k}\varkappa ({\bf k},k) \sigma({\bf k}, k) \Hap_{\varsigma( k)}[\mathfrak{D}^{\bf k}\psi]G_{\bf k} ,
\end{equation}
where $\Jp$, $\Kp$ and $\Hap$, as well as  $\Jp_{\varsigma}$, $ \Kp_{\varsigma}$ and  $\Hap_{\varsigma}$, 
 are as defined in Section~\ref{rpsection}.

If $\psi$ satisfies the inhomogeneous equation~\eqref{inhomogeneous},
then $\mathfrak{D}^{\bf k}\psi$ satisfies the equation
\[
\Box_{g_0} (\mathfrak{D}^{\bf k}\psi )= [\mathfrak{D}^{\bf k},\Box_{g_0}]\psi + \mathfrak{D}^{\bf k}F
\]
and the currents satisfy  
\begin{equation}
\label{commutedenergyidentity}
\nabla^\mu \Jpk{}_\mu[\psi] =  \Kpk[\psi] +  \Hapk[\psi]\cdot \{ [\mathfrak{D}^{\bf k},\Box_{g_0}]\psi\} + 
\Hapk[\psi]\cdot \{ \mathfrak{D}^{\bf k}F\},
\end{equation}
which can again be integrated in region $\mathcal{R}(\tau_0,\tau_1,v)$ to yield
\begin{align}
\nonumber
\int_{\Sigma(\tau_1,v)} \Jpk[\psi]\cdot n +\int_{\mathcal{S}(\tau_0,\tau_1)}\Jpk[\psi]\cdot n +
\int_{\underline{C}_v(\tau_0,\tau_1)}\Jpk[\psi]\cdot n +
\int_{\mathcal{R}(\tau_0,\tau_1,v)} \Kpk[\psi] \\
\label{energyidentityhigh}
=
\int_{\Sigma(\tau_0,v)} \Jpk[\psi]\cdot n
-\int_{\mathcal{R}(\tau_0,\tau_1,v)} 
 \Hapk[\psi]\cdot \{ [\mathfrak{D}^{\bf k},\Box_{g_0}]\psi\}
-\int_{\mathcal{R}(\tau_0,\tau_1,v)} \Hapk[\psi]\cdot \{ \mathfrak{D}^{\bf k}F\}.
\end{align}

Note that
\begin{equation}
\label{wherevanishes}
 \Hapk[\psi]\cdot \{ [\mathfrak{D}^{\bf k},\Box_{g_0}]\psi\} =0
 \end{equation}
 in $r_1+(r_2-r_1)/2\le r\le R/2$.

Finally, if 
\begin{equation}
\label{nowcentredvarkappa}
\varkappa_0(k)\gg \varsigma(k)
\end{equation}
is sufficiently large, then we notice that $\Kpk[\psi]\ge 0$ even for $r\le r_{\rm Killing}$, 
and in fact in $r_0\le r\le r_1(\varsigma)$, we have
\begin{align}
\label{noticelower}
&\Kpk[\psi]  \ge c\varkappa_0(k)  \Kp[\psi]\\
\nonumber
&\qquad+c \sum_{1\le |{\bf k}| \le k} \varkappa({\bf k} ,k)   \varsigma \lambda_{\varsigma}(r)  \left( \sum_{i=1}^3  ( \Omega_i\mathfrak{D}^{\bf k} \psi)^2 +(T\mathfrak{D}^{\bf k}\psi)^2   + |r_{\rm Killing}-r|(Y\mathfrak{D}^{\bf k}\psi)^2\right)
 + \varkappa({\bf k} ,k)   \lambda_\varsigma(r)    (Y\mathfrak{D}^{\bf k}\psi)^2,
\end{align}
where we have absorbed the lower order term in~\eqref{enhancedherefinal} with the wrong sign by the largeness of $\varkappa_0(k) = \varkappa({|\bf k}|,k)^{-1}\varkappa(|{\bf k}|-1,k)$ and the positivity of $\Kp[\psi]$, and we have dropped the factor $\sigma({\bf k}, k)$ using that $\sigma\ge 1$.
We will always assume~\eqref{nowcentredvarkappa} in what follows so that~\eqref{noticelower} holds.

Defining
\begin{equation}
\begin{aligned}
\label{frakturepknowenergies}
\Efancypk(\tau)&:= \int_{\Sigma(\tau)}  \Jpk_\mu[\psi] n_{\Sigma(\tau)}^\mu,  \qquad
\Efancypk_{\mathcal{S}}(\tau_0,\tau ) := \int_{\mathcal{S}(\tau_0, \tau)}  \Jpk_\mu[\psi] n_{\mathcal{S}}^\mu , \\
\Ffancypk (v,\tau_0,\tau) &:=\int_{\underline{C}_v\cap \mathcal{R}(\tau_0,\tau)} \Jpk_\mu[\psi] n_{\underline{C}_v}^\mu,
\end{aligned}
\end{equation}
note that by  construction
\[
\Efancypk(\tau) \ge 0, \qquad \Efancypk_{\mathcal{S}}(\tau_0,\tau )\ge 0, \qquad \Ffancypk(v,\tau_0,\tau) \ge 0,
\]
and thus, from~\eqref{energyidentityhigh} 
we have in particular 
\begin{align}
\nonumber
\sup_{v:\tau_1\le \tau(v)}
\Ffancypk(v,\tau_0,\tau_1) +
\Efancypk(\tau)  + \Efancypk_{\mathcal{S}}(\tau_0,\tau )  +
\int_{\mathcal{R}(\tau_0,\tau_1)} \Kpk[\psi]& \leq\Efancypk(\tau_0) 
-\int_{\mathcal{R}(\tau_0,\tau_1)\cap \{r\le r_2\} } 
 \Hapk[\psi]\cdot \{ [\mathfrak{D}^{\bf k},\Box_{g_0}]\psi\}\\
 \nonumber
 &\qquad
+\int_{\mathcal{R}(\tau_0,\tau_1)\cap \{r\ge R/2\} } 
 | \Hapk[\psi]\cdot \{ [\mathfrak{D}^{\bf k},\Box_{g_0}]\psi\}|\\
 &\qquad
 \label{globalidentone}
+\int_{\mathcal{R}(\tau_0,\tau_1)} |\Hapk[\psi]\cdot \{ \mathfrak{D}^{\bf k}F\}|,
\end{align}
provided the terms on the right hand side of the above inequalities  are suitably integrable.

\subsubsection{Elliptic estimates for $\Box_{g_0}\psi=F$}
\label{ellipticlinear}

Since in the region $r\le 3R/4$ our commutation operators $\mathfrak{D}^{\bf k}$ do not necessarily span the tangent space, we will need
to also invoke elliptic estimates. 
These  will allow one to estimate, for solutions of $\Box_{g_0}\psi=F$, 
all highest order derivatives  $\widetilde{\mathfrak{D}}^{\bf k}\psi$ from only  derivatives $\mathfrak{D}^{\bf k}\psi$ with respect to 
our commutation operators together with appropriate terms involving $F$. 
These estimates require integration over hypersurfaces  $\Sigma(\tau)$ or spacetime regions $\mathcal{R}(\tau_0,\tau_1)$, 
and rely on the fact
that the commutation operators  $\mathfrak{D}_i$ span an appropriate timelike direction.

For estimates at order $k$, we will in fact more generally need spacetime elliptic estimates in  $r\le 9R_k/8$, despite
the fact that the  $\mathfrak{D}^{\bf k}$ span the tangent space as long as $r\ge 3R/4$, 
in order to absorb commutation error
terms  where the formula~\eqref{commutatorerrorformula} does not yet apply.   

We have the following proposition:
\begin{proposition}
\label{ellipticprop}
Let $(\mathcal{M},g_0)$ satisfy the assumptions of Section~\ref{backgroundgeom}.
Let $\psi$ be a solution of the inhomogeneous equation~\eqref{inhomogeq} in $\mathcal{R}(\tau_0,\tau_1)$
and let $\tau_0\le \tau'\le \tau_1$.  
Then for all $k\ge 1$ and for all $r_{\rm Killing}< r'_-<r'<r''<r''_+\le R$,
\begin{equation}
\label{ellipticestimnoY}
\int_{\Sigma(\tau')\cap \{ r' \le r\le r'' \}  } \sum_{|{\bf  k}|\le k+1}
(\widetilde{\mathfrak{D}}^{\bf {{k}}}\psi)^2\\
 \lesssim 
\int_{\Sigma(\tau')\cap \{r'_- \le r\le  r''_+\} } \sum_{1\le |{\bf k}|\le k+1, k_1+k_2\ge |{\bf k}|-1} (\mathfrak{D}^{\bf k}\psi)^2 
+\sum_{|{\bf  k}|\le 1}
(\widetilde{\mathfrak{D}}^{\bf{k}}\psi)^2
+\sum_{|{\bf  k}|\le k-1} (\widetilde{\mathfrak{D}}^{\bf  k}F)^2,
\end{equation}
where here $\lesssim$ depends on the choice of $r'_-<r'<r''<r_+''$. 
We also have the estimate, for all $r' \le r_1$,
\begin{equation}
\label{ellipticestimnorestriction}
\int_{\Sigma(\tau')\cap \{r\ge r' \}  } \sum_{|{\bf  k}|\le k+1}
(\widetilde{\mathfrak{D}}^{\bf {{k}}}\psi)^2\\
 \lesssim 
\int_{\Sigma(\tau')\cap \{r\ge r' \} } \sum_{|{\bf k}|\le  k+1} (\mathfrak{D}^{\bf k}\psi)^2 
+\sum_{|{\bf  k}|\le k-1} (\widetilde{\mathfrak{D}}^{\bf  k}F)^2.
\end{equation}

Note that the analogous statements to~\eqref{ellipticestimnoY}, \eqref{ellipticestimnorestriction} 
with integration on $\mathcal{R}(\tau_0,\tau_1) \cap \{r'\le r \le r''\}$, etc.,
follows immediately in view of the coarea formula.

In fact, even without assuming $r''_+\le R$, we still have the spacetime elliptic estimate:
\begin{equation}
\label{ellipticestimnoYspacetime}
\int_{\mathcal{R}(\tau_0,\tau_1)\cap \{ r' \le r\le r'' \}  } \sum_{|{\bf  k}|\le k+1}
(\widetilde{\mathfrak{D}}^{\bf {{k}}}\psi)^2\\
 \lesssim 
\int_{\mathcal{R}(\tau_0,\tau_1)\cap \{r'_- \le r\le  r''_+\} } \sum_{1\le |{\bf k}|\le k+1, k_1+k_2\ge |{\bf k}|-1} (\mathfrak{D}^{\bf k}\psi)^2 
+\sum_{|{\bf  k}|\le 1}
(\widetilde{\mathfrak{D}}^{\bf{k}}\psi)^2
+\sum_{|{\bf  k}|\le k-1} (\widetilde{\mathfrak{D}}^{\bf  k}F)^2,
\end{equation}
where again $\lesssim$ depends on the choice of $r'_-<r'<r''<r_+''$.

\end{proposition}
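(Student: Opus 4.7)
The plan is to prove Proposition~\ref{ellipticprop} by iterated interior elliptic regularity on the slices $\Sigma(\tau')$, exploiting the timelike character of $\mathrm{span}(T,\upsilon\Omega_1)$ on $\{r>r_{\mathrm{Killing}}\}$ together with the fact that $T$ and (when $\upsilon=1$) $\Omega_1$ are Killing, and hence commute with $\Box_{g_0}$.

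For the base case ($k=1$) of~\eqref{ellipticestimnoY}, in adapted local coordinates $(x^0,\upsilon x^1, y^i)$ on $\{r>r_{\mathrm{Killing}}\}$ with $T=\partial_{x^0}$ and (if $\upsilon=1$) $\Omega_1=\partial_{x^1}$, one decomposes
\[
\Box_{g_0}\psi = Q_{T,\Omega_1}(\psi) + \mathcal{L}_{\Sigma}\psi + \text{(first-order)},
\]
where $Q_{T,\Omega_1}$ is a second-order operator in $T,\Omega_1$ alone and $\mathcal{L}_{\Sigma}$ is a second-order operator on $\Sigma(\tau')$ which is uniformly elliptic transverse to the orbits of $\mathrm{span}(T,\upsilon\Omega_1)$; this transverse ellipticity is precisely the statement that removing a timelike $2$-plane from $g_0$ leaves a positive-definite form. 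Rearranging as $\mathcal{L}_{\Sigma}\psi = F - Q_{T,\Omega_1}\psi - (\text{l.o.t.})$ and applying a standard interior Garding estimate with radial cutoff $\eta(r)$ supported in $[r'_-, r''_+]$ and identically $1$ on $[r',r'']$ yields the base case. For higher $k$, since $T$ and (when $\upsilon=1$) $\Omega_1$ are Killing, the function $T^j\Omega_1^l\psi$ solves $\Box_{g_0}(T^j\Omega_1^l\psi) = T^j\Omega_1^l F$; applying the base estimate to each such function for $j+l\le k-1$ and summing produces~\eqref{ellipticestimnoY} at order $k$, with source terms combining to $\sum_{|{\bf k}|\le k-1}(\widetilde{\mathfrak{D}}^{\bf k}F)^2$ as claimed. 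The condition $k_1+k_2\ge |{\bf k}|-1$ on the right-hand side is in fact weaker than what the argument produces (the iteration outputs terms with $k_1+k_2=|{\bf k}|$), and the extra slack costs nothing.

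For~\eqref{ellipticestimnorestriction}, the key observation is that in the near-horizon region $r \le r_1+(r_2-r_1)/4$ the collection $\{Z, Y, \Omega_i\}$ spans the tangent space by Section~\ref{Yvectorfieldsec}, and since $Z\in\mathrm{span}(T,\upsilon\Omega_1)$, each of these vector fields lies in $\mathfrak{D}$; hence $\widetilde{\mathfrak{D}}^{\bf k}\psi$ can be expressed pointwise in terms of $\mathfrak{D}^{\bf j}\psi$ for $|{\bf j}|\le |{\bf k}|$ without any elliptic argument. A partition of unity in $r$ combining this pointwise bound with~\eqref{ellipticestimnoY} applied in $r \ge r_1$ yields~\eqref{ellipticestimnorestriction}. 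The spacetime estimate~\eqref{ellipticestimnoYspacetime}, which drops the restriction $r''_+\le R$, follows by integrating~\eqref{ellipticestimnoY} in $\tau'\in[\tau_0,\tau_1]$ via the coarea formula of Section~\ref{volumecomparisonsection}, combined with the fact that for $r\ge 3R/4$ the collection $\mathfrak{D}$ already spans the tangent space, so the elliptic step is vacuous there and the constraint on $r''_+$ can be removed. The main technical obstacle is the careful bookkeeping of lower-order and commutator terms $[\mathfrak{D}_i,\mathfrak{D}_j]\psi$ generated at each inductive stage, which must all be absorbed into the allowed first-order $\widetilde{\mathfrak{D}}\psi$ or lower-order $\mathfrak{D}^{\bf k}\psi$ terms; since the implicit constants are permitted to depend on the choice $r'_-<r'<r''<r''_+$ through the cutoff $\eta$, this bookkeeping poses no conceptual difficulty.
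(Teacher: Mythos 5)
Your arguments for~\eqref{ellipticestimnoY} and~\eqref{ellipticestimnorestriction} are essentially the paper's: the base elliptic estimate on $\Sigma(\tau')$ is powered by the timelike character of $\mathrm{span}(T,\upsilon\Omega_1)$ in $r>r_{\rm Killing}$, higher orders follow by commuting with the Killing fields and iterated elliptic regularity (a bookkeeping point: you will also need to commute through transverse $\widetilde{\mathfrak{D}}$ directions, not just $T^j\Omega_1^l$, to reach order $k+1$; each such step spawns terms violating the constraint $k_1+k_2\ge|{\bf k}|-1$, which must then be reabsorbed by an earlier stage of the induction — but this is standard and the paper also leaves it implicit), and~\eqref{ellipticestimnorestriction} follows because $\mathfrak{D}$ spans the tangent space near the horizon and for $r\ge R/2$.

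There is, however, a genuine gap in your argument for~\eqref{ellipticestimnoYspacetime}. You assert that because $\mathfrak{D}$ spans the tangent space for $r\ge 3R/4$, ``the elliptic step is vacuous there and the constraint on $r''_+$ can be removed.'' This is not correct. Spanning gives the unconstrained pointwise bound $(\widetilde{\mathfrak{D}}^{\bf k}\psi)^2\lesssim\sum_{|{\bf j}|\le|{\bf k}|}(\mathfrak{D}^{\bf j}\psi)^2$, but the right-hand side of~\eqref{ellipticestimnoYspacetime} carries the structural constraint $k_1+k_2\ge|{\bf k}|-1$ (at most one index outside $\{T,\upsilon\Omega_1\}$), and this is \emph{not} a consequence of spanning; it must still be earned from the equation. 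The genuine obstruction is that the slices $\Sigma(\tau)$ become null for $r\ge R$, so the slice-wise elliptic estimate~\eqref{ellipticestimnoY} is simply unavailable there, and integrating~\eqref{ellipticestimnoY} via the coarea formula only gets you as far as $r''_+\le R$. The paper's proof of the general case $r''_+>R$ is a different and unavoidable argument: it combines the coarea-integrated elliptic estimate in $r\le R$ with a \emph{spacetime} identity in $r\ge R$ obtained by squaring the wave equation and integrating by parts along the outgoing null cones $\Sigma(\tau)\cap\{r\ge R\}$, and then arranges the resulting boundary terms on $\{r=R\}$ to cancel up to a term involving $\int_{\{r=R\}} T\psi\,\slashed\Delta\psi$ which is itself controllable by the right-hand side. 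This integration-by-parts step is missing from your proposal.
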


\begin{proof}
Estimate~\eqref{ellipticestimnoY} is a standard elliptic estimate, using the fact that the span of
$T, \Omega_1$ always  contains a timelike direction
in the region $r\ge r'_{-} > r_{\rm Killing}$. Estimate~\eqref{ellipticestimnorestriction} then follows from the previous in view
of the fact that  the $\mathfrak{D}^k$
span the tangent space for $r\le r_1$ and $r\ge R/2$.

As we have remarked, in 
the case $r''_+\le R$, estimate~\eqref{ellipticestimnoYspacetime} follows from~\eqref{ellipticestimnoY} by the coarea formula. 
It can be obtained more generally, even if $R\le r''$ or $R\le r''_+$ by a suitable integration by parts argument. We sketch this here for 
the case $k=1$.  Let us consider the most difficult case where $R< r'' <r''_+$. One first does the usual elliptic estimates on $\Sigma(\tau) \cap \{r\le R\}$, then
integrating over $\tau$,
to estimate the left hand side of~\eqref{ellipticestimnoYspacetime} integrated over $\mathcal{R}(\tau_0,\tau_1)\cap \{r \le R\}$ from the right hand
side, but with an additional boundary term on $r=R$. Squaring the inhomogeneous wave equation in $r\ge R$ 
and integrating by parts along the null cone $\Sigma(\tau)\cap \{r \ge R\}$, one may again bound the left hand side of~\eqref{ellipticestimnoYspacetime}, integrated over
$\mathcal{R}(\tau_0, \tau_1)\cap \{R\le r\le r''\}$, from the right hand side and an additional boundary term on $r=R$. These boundary
terms can be arranged to cancel modulo a term of the form $\int_{\mathcal{R}(\tau_0,\tau_1)\cap\{r=R\}}  T\psi \slashed\Delta \psi$,
where $\slashed\Delta$ denotes the Laplacian on $\Sigma(\tau)\cap \{r=R\}$, and this term can in turn be related
to a bulk integral which can again be controlled by the right hand side of~\eqref{ellipticestimnoYspacetime}.
\end{proof}

\subsubsection{Global control of the commutation errors}
\label{globalcontrolsection}
We may now give the final statement allowing for the global
control of the commutation error terms in the identity~\eqref{commutedenergyidentity}.

\begin{proposition}
\label{finalcoercivityprop}
Under the assumptions of Section~\ref{farawaycommutationerror},
 we may choose our weight functions $\sigma_{12}(k)$ in the definition~\eqref{signaturedefinition}, $\varkappa_0(k)$
 in the definition~\eqref{varkappadef}  and  $\varsigma(k)$,
so that
the following holds:

For all $k\ge 1$, there exists $r_{\rm Killing}<r_1(k)\le r_1$, such that, 
for all
 $\psi$ satisfying the inhomogeneous equation~\eqref{inhomogeq} in $\mathcal{R}(\tau_0,\tau_1)$, the following estimates hold:
\begin{align}
  \label{commuterrorone}
-  \Hapk[\psi]\cdot \{ [\mathfrak{D}^{\bf k},\Box_{g_0}]\psi\} 
&\leq
 \frac13\Kpk[\psi]
 , \qquad r_0\le r \le r_1(k)
 \\
  \label{commuterroronenhalf}
\int_{\mathcal{R}(\tau_0,\tau_1)\cap \{ r_1(k) \le r_2 \}  }  
| \Hapk[\psi]\cdot \{ [\mathfrak{D}^{\bf k},\Box_{g_0}]\psi\}| 
 &\leq
 \int_{\mathcal{R} (\tau_0,\tau_1)\cap \{r_{\rm Killing} \le r \le r_2\} }\frac13\Kpk[\psi]+ C\sum_{|{\bf  k}|\le k-1} |\widetilde{\mathfrak{D}}^{\bf  k}F|^2,\\
 \label{commuterrortwo}
  \Hapk[\psi]\cdot \{ [\mathfrak{D}^{\bf k},\Box_{g_0}]\psi\} 
 &=0, \qquad r_1+\frac12(r_2-r_1) \le r \le R/2\\
 \label{commuterrorthree}
 \int_{\mathcal{R}(\tau_0,\tau_1)\cap \{ R/2 \le r \le R_k  \}  }  
| \Hapk[\psi]\cdot \{ [\mathfrak{D}^{\bf k},\Box_{g_0}]\psi\} |
 &\leq
 \int_{\mathcal{R} (\tau_0,\tau_1)\cap \{ R/4 \le r\le 9R_k/8 \} }\frac13\Kpk[\psi]+C\sum_{|{\bf  k}|\le k-1} |\widetilde{\mathfrak{D}}^{\bf  k}F|^2,
 \\
 \label{commuterrorfour}
| \Hapk[\psi]\cdot \{ [\mathfrak{D}^{\bf k},\Box_{g_0}]\psi\} |
 &\leq
\frac1{10}\Kpk[\psi], \qquad r\ge  R_k.
\end{align}
\end{proposition}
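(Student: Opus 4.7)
The proof proceeds by treating each of the five estimates separately, moving from the simplest (regions where the commutator vanishes) to the most delicate (the horizon region). The choice of the weight functions $\sigma_{12}(k)$, $\varkappa_0(k)$, and $\varsigma(k)$ will be made inductively: first I fix $\varsigma(k)$ to ensure the enhanced redshift absorbs cross-terms at $\mathcal{H}^+$, then $\sigma_{12}(k)$ and $\varkappa_0(k)$ are chosen large so that leading-order contributions dominate errors involving fewer differentiations.

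\textbf{Step 1: Estimate \eqref{commuterrortwo}.} This is essentially a triviality. In the region $r_1+\tfrac12(r_2-r_1) \le r \le R/2$, the cutoffs $\hat\zeta$ and $\zeta$ vanish so that $\mathfrak{D}_3,\mathfrak{D}_4,\mathfrak{D}_5,\mathfrak{D}_6,\mathfrak{D}_7,\mathfrak{D}_8$ vanish identically there, while $\mathfrak{D}_1=T$ and $\mathfrak{D}_2=\upsilon\Omega_1$ are Killing globally. Hence $[\mathfrak{D}^{\bf k},\Box_{g_0}]\psi=0$ pointwise.

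\textbf{Step 2: Estimate \eqref{commuterrorfour}.} In $r\ge R_k$ the cutoffs satisfy $\zeta=1$ and $\hat\zeta=0$, so $\mathfrak{D}^{\bf k}$ involves only $T,\Omega_1,L,\underline{L}$ and the $\Omega_i$. Moreover $\Hap_{\varsigma(k)}=\Hap$ there. Summing the assumed pointwise estimate \eqref{commutatorerrorformula} at each order $|{\bf k}|\le k$ against the weight $\varkappa(|{\bf k}|,k)\sigma({\bf k},k)$, the leading $k$-th order bound gives $\tfrac{1}{12}\sum_{|{\bf k}|=k}\varkappa({\bf k},k)\sigma({\bf k},k)\Kp[\mathfrak{D}^{\bf k}\psi]$, and the lower-order terms of \eqref{commutatorerrorformula} at level $|{\bf k}|=j$ are absorbed by $\tfrac{1}{12}\Kp[\mathfrak{D}^{\bf k'}\psi]$ at $|{\bf k}'|=j-1$ once $\varkappa_0(k)$ is chosen sufficiently large (the ratio $\varkappa(j-1,k)/\varkappa(j,k)=\varkappa_0(k)$ dominates the constant $C(k)$). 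Totaling, the left-hand side is bounded by $\tfrac{1}{10}\Kpk[\psi]$, as claimed.

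\textbf{Step 3: Estimate \eqref{commuterrorthree}.} In $R/2\le r\le R_k$ we are in a compact region bounded away from $\mathcal{H}^+$. The commutator $[\mathfrak{D}^{\bf k},\Box_{g_0}]\psi$ is a sum of terms each involving at most $k+1$ derivatives of $\psi$ in $\widetilde{\mathfrak{D}}$, with smooth coefficients. Bound $|\Hapk[\psi]\cdot\{[\mathfrak{D}^{\bf k},\Box_{g_0}]\psi\}|$ pointwise by $C\sum_{|{\bf k}'|\le k+1}(\widetilde{\mathfrak{D}}^{{\bf k}'}\psi)^2$ and invoke the spacetime elliptic estimate \eqref{ellipticestimnoYspacetime} with $r'=R/4$, $r''=9R_k/8$ (and slightly enlarged $r'_-$, $r''_+$). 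Under the bootstrap that $\sigma_{12}(k)$ is chosen so that the top-order terms with $k_1+k_2=|{\bf k}|$ get enhanced coercivity absorbing the extra factor in \eqref{ellipticestimnoYspacetime}, the right-hand side of the elliptic estimate is bounded by $\tfrac13\int_{\mathcal{R}\cap\{R/4\le r\le 9R_k/8\}}\Kpk[\psi]+C\sum_{|{\bf k}|\le k-1}|\widetilde{\mathfrak{D}}^{\bf k}F|^2$.

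\textbf{Step 4: Estimates \eqref{commuterrorone} and \eqref{commuterroronenhalf} (the main obstacle).} This is the delicate step. Choose $r_1(k)$ slightly larger than $r_{\rm Killing}$, and split $r_0\le r\le r_1(k)$ into a narrow collar of $\mathcal{H}^+$ and the interior $r<r_{\rm Killing}$. Near $\mathcal{H}^+$, decompose $\mathfrak{D}^{\bf k}=Y^{k_3}\widetilde{\mathfrak{D}}$ with $\widetilde{\mathfrak{D}}$ a string in $T,\Omega_1$ (and other cut-off vectors that vanish here). For the $Y^{k_3}$-commutation component, Corollary~\ref{corollaryofnotes} furnishes a pointwise lower bound $c(Y^{k_3+1}\psi)^2$ with error terms of order $|{\bf k}|+1$ but with strictly fewer $Y$ derivatives. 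These errors are precisely the top-order terms of $\Kp_{\varsigma(k)}[\mathfrak{D}^{{\bf k}''}\psi]$ with $k''_1+k''_2=|{\bf k}''|$ multiplied by $\sigma_{12}(k)$, which we choose large enough to dominate the (universal) constants in Corollary~\ref{corollaryofnotes}. Moving into $r<r_{\rm Killing}$, the enhanced positivity \eqref{enhancedherefinal} of Proposition~\ref{enhancedinteriorprop} provides a bulk term $c\varsigma(k)\lambda_{\varsigma(k)}(r)\bigl(\sum(\Omega_i\mathfrak{D}^{\bf k}\psi)^2+(T\mathfrak{D}^{\bf k}\psi)^2+|r_{\rm Killing}-r|(Y\mathfrak{D}^{\bf k}\psi)^2\bigr)+c\lambda_{\varsigma(k)}(r)(Y\mathfrak{D}^{\bf k}\psi)^2-C\varsigma(k)\lambda_{\varsigma(k)}(r)\psi^2$. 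The commutator $[\mathfrak{D}^{\bf k},\Box_{g_0}]\psi$ is expanded, using the fact that $\mathfrak{D}_4,\ldots,\mathfrak{D}_8$ vanish for $r\le r_1$, producing at most top-order $Y$-terms controlled via \eqref{Hbound} and the enhanced $Y$-coercivity, plus lower-order cross terms involving $\psi$. Picking $\varsigma(k)$ sufficiently large (depending on the commutator coefficients, which are smooth on $[r_0,r_1]$ and hence bounded) guarantees that these cross-terms are absorbed into the $\varsigma(k)$-weighted positive bulk, yielding the factor $\tfrac13$. In the intermediate region $r_1(k)\le r\le r_2$, the span of $\mathfrak{D}^{\bf k}$ is full, $T$ and $\Omega_1$ remain Killing, and the Killing horizon is at positive distance; here I use the spatial elliptic estimate \eqref{ellipticestimnoY} (or its spacetime counterpart) together with \eqref{Hbound} to trade tangential derivatives for the $\Kpk$-controlled ones, absorbing all terms except the $F$-contribution which produces the $C\sum|\widetilde{\mathfrak{D}}^{\bf k}F|^2$ bulk term on the right-hand side of \eqref{commuterroronenhalf}.

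The principal difficulty is to ensure consistency: the three parameters $\varsigma(k),\sigma_{12}(k),\varkappa_0(k)$ must be fixed in the right order so that at no stage does an ``absorbed" error reappear with a worse constant. The hierarchy $\varsigma(k)$ (chosen first, to tame the redshift commutator and interior cross-terms), then $\sigma_{12}(k)$ (to beat the Corollary~\ref{corollaryofnotes} constants and the elliptic constants of Step~3), then $\varkappa_0(k)\gg\varsigma(k)\sigma_{12}(k)$ (which among other things ensures \eqref{nowcentredvarkappa} so that \eqref{noticelower} holds, and controls lower-order residues), achieves this.
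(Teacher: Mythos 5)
The overall structure is right and Steps 1--3, treating~\eqref{commuterrortwo},~\eqref{commuterrorfour} and~\eqref{commuterrorthree}, are essentially correct and in line with the paper. Step~4, however, has a genuine gap in the near-horizon collar.

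You claim the error terms produced by Corollary~\ref{corollaryofnotes} ``are precisely the top-order terms of $\Kp_{\varsigma(k)}[\mathfrak{D}^{{\bf k}''}\psi]$ with $k''_1+k''_2=|{\bf k}''|$ multiplied by $\sigma_{12}(k)$,'' and that enlarging $\sigma_{12}(k)$ absorbs them. This is not the case. The error in Corollary~\ref{corollaryofnotes} is $C \sum_{|{\bf k}|\le k+1,\, k_3\ne k+1}(\mathfrak{D}^{\bf k}\psi)^2$, which contains terms such as $(Y^k\psi)^2$ and $(TY^k\psi)^2$, for which $k_3=k$ and hence $k_1+k_2 < |{\bf k}|$. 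By definition~\eqref{signaturedefinition}, $\sigma({\bf k},k)=\sigma_{12}(k)$ only when $k_1+k_2=|{\bf k}|$; it equals $1$ for these $Y$-dominant terms, so enlarging $\sigma_{12}(k)$ gives no extra coercivity against them. Moreover this error is of the same order $k+1$ as the good term $(Y^{k+1}\psi)^2$, so you also cannot fall back on the $\varkappa_0(k)$-gap.

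The mechanism that actually closes this step is different: Corollary~\ref{corollaryofnotes} is stated \emph{along} $\mathcal{H}^+$, and extending it to a neighborhood of $\mathcal{H}^+$ puts a factor $|r-r_{\rm Killing}|$ in front of the top-order error. The paper restricts $r_1(k)=r_1(\varsigma(k))$ so that $|r-r_{\rm Killing}|\le\varsigma^{-1}(k)$ on $[r_0,r_1(k)]$, and then absorbs via the $\varsigma(k)$-enhanced positivity~\eqref{noticelower}, yielding an overall $C\varsigma^{-1/2}(k)\Kpk$, which is then forced $\le\tfrac13\Kpk$ by fixing $\varsigma(k)$. In the paper's proof of~\eqref{commuterrorone}, the factor $\sigma\ge1$ is merely dropped; $\sigma_{12}(k)$ enters only through the spacetime elliptic estimates used for~\eqref{commuterroronenhalf} and~\eqref{commuterrorthree}. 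Your parameter ordering ($\varsigma$, then $\sigma_{12}$, then $\varkappa_0$) is correct; the issue is that you have attached $\sigma_{12}(k)$ to the wrong estimate.
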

\begin{proof}

We first prove~\eqref{commuterrorfour}. Here we will use~\eqref{commutatorerrorformula}.  
We have that the left hand side of~\eqref{commuterrorfour} is bounded in $r\ge R_k$ by
\begin{align*}
\left| \Hapk[\psi]\cdot \{ [\mathfrak{D}^{\bf k},\Box_{g_0}]\psi\}_{|{\bf k}|\le k} \right|
&= \left| \sum_{|{\bf k}|\le k} \varkappa({\bf k}, k) \Hap[\mathfrak{D}^{\bf k} \psi] [\mathfrak{D}^{\bf k},\Box_{g_0}]\right|\\
&\leq   \frac1{12}  \sum_{|{\bf k}|\le  k}\sigma({\bf k},k)\varkappa({\bf k},k) 
\Kp[{\mathfrak{D}}^{\bf k}\psi]+  C \sum_{|{\bf k}|\le k-1}\varkappa(|{\bf k}|+1, k) \sigma({\bf k},k)
\Kp[{\mathfrak{D}}^{\bf k}\psi]
 \\
 &\leq   \frac1{12}  \sum_{|{\bf k}|\le  k}\sigma({\bf k},k)\varkappa({\bf k},k) 
\Kp[{\mathfrak{D}}^{\bf k}\psi]\\
&\qquad +  C \sum_{|{\bf k}|\le k-1}(\varkappa(|{\bf k}|+1, k) \varkappa^{-1} (|{\bf k}|, k)) \varkappa (|{\bf k}|, k)\sigma({\bf k},k)
\Kp[{\mathfrak{D}}^{\bf k}\psi]
 \\
&\leq \frac1{12} \Kpk[\psi] +\frac1{100} \Kpk[\psi]\leq \frac1{10} \Kpk[\psi]
\end{align*}
provided that  $\varkappa({\bf k},k)$ is defined so that
\begin{equation}
\label{varkappaconstraint}
\varkappa_0^{-1} (k) \ll 1.
\end{equation}

To prove~\eqref{commuterrorone}, we note that in the region $r_0\le r\le r_1$, we may estimate:
\begin{align}
\label{labelonthefirst}
- & \Hapk[\psi]\cdot \{ [\mathfrak{D}^{\bf k},\Box_{g_0}]\psi\}  
= 
-  \sum_{1\le |{\bf k}|\leq k, k_3<k}\varkappa (|{\bf k}|,k) \Hap_{\varsigma( k)}[\mathfrak{D}^{\bf k}\psi] \,
[\mathfrak{D}^{\bf k},\Box_{g_0}]\psi
 - \varkappa( k,k)  \Hap_{\varsigma(k)} [Y^k\psi] [Y^k,\Box_{g_0}]\psi  \\
 \nonumber
&\le C\varsigma(k)^{\frac12} \lambda_{\varsigma(k)} (r) 
 \sum_{1\le |{\bf k}|\le k} \varkappa(|{\bf k}|,k) \left( (T{\mathfrak{D}}^{\bf k} \psi)^2 +\sum_{i=1}^3 (\Omega_i{\mathfrak{D}}^{\bf k} \psi)^2 +
  ({\mathfrak{D}}^{\bf k} \psi)^2\right)\\
  & \qquad +
 C\varsigma(k)^{-\frac12}  \lambda_{\varsigma(k)} (r)  \sum_{1\le |{\bf k}|\le k+1}  \varkappa(|{\bf k}|-1,k)({\mathfrak{D}}^{\bf k} \psi)^2   \\
 \nonumber
 &\qquad
 - c \lambda_{\varsigma(k)} (r)  \varkappa(k,k) 
 (Y^{k+1}\psi )^2  +C \lambda_{\varsigma(k)} (r)  |r-r_{\rm Killing}| \varkappa(k,k) \sum_{1\le |{\bf k}|\le k+1}({\mathfrak{D}}^{\bf k} \psi)^2 .
\end{align}
Here,  we have applied~\eqref{thisistheassumptionimplied} from 
Corollary~\ref{corollaryofnotes},  and
the bounds on $\Hap_{\varsigma(k)}$  following from~\eqref{Hbound} of Proposition~\ref{enhancedinteriorprop}. 
Note the dependence on $\varsigma(k)$ through
$\lambda_{\varsigma(k)}(r)$,
and note that the terms in 
 the commutation identity not present on $\mathcal{H}^+$ appear with an extra $(r-r_{\rm Killing})$ factor. 
(We emphasise again the conventions from Section~\ref{constantsandparameterssec} that constants $C$, $c$, also depends on $k$!)

We now have     that  given $\varsigma=\varsigma(k)$, from~\eqref{noticelower}, it follows that
 in the region $r_0\le r\le r_1(\varsigma)$,
 provided that~\eqref{nowcentredvarkappa} holds,
we have 
\begin{align*}
\varsigma(k)^{\frac12} \lambda_{\varsigma(k)} (r) 
   \sum_{1\le |{\bf k}|\le k}  \varkappa(|{\bf k}|, k)  \left( (T{\mathfrak{D}}^{\bf k} \psi)^2 +\sum_{i=1}^3 (\Omega_i{\mathfrak{D}}^{\bf k} \psi)^2 +
  ({\mathfrak{D}}^{\bf k} \psi)^2\right)  
\le  C \varsigma^{-\frac12 }(k) \Kpk[\psi] .
  \end{align*}
\begin{align*}
\varsigma^{-\frac12}(k)  \lambda_{\varsigma(k)} (r)  \sum_{1\le |{\bf k}|\le k+1}  \varkappa(|{\bf k}|-1)({\mathfrak{D}}^{\bf k} \psi)^2
\le  C \varsigma^{-\frac12 }(k) \Kpk[\psi] .
  \end{align*}
  
  Finally, again provided~\eqref{nowcentredvarkappa} holds, and provided we further restrict $r_1(\varsigma)$ such that $0<r_1(\varsigma)-r_{\rm Killing}\leq \varsigma^{-1}$,
   we have
  \begin{align*}
 \lambda_{\varsigma(k)} (r)  |r-r_{\rm Killing}| \varkappa(k,k) \sum_{|{\bf k}|\le k+1}({\mathfrak{D}}^{\bf k} \psi)^2 \le  C\varsigma^{-1}(k)  \Kpk[\psi]  .
 \end{align*}

It follows that, for $r_0\le r\le r_1(\varsigma)$, we have 
 \begin{equation}
 \label{thisyields}
- \Hapk[\psi]\cdot \{ [\mathfrak{D}^{\bf k},\Box_{g_0}]\psi\}  
 \le   C \varsigma^{-\frac12}(k) \Kpk[\psi].
 \end{equation}
  Now we fix $\varsigma=\varsigma(k)$ to be
  sufficiently large such that $C\varsigma^{-\frac12}(k) \le \frac13$. 
  This yields~\eqref{commuterrorone}.

Since  $\varsigma(k)$ is now fixed, as well as $r_1(k):=r_1(\varsigma(k))$,
according to our conventions, the dependence on these parameters may now be absorbed into the constants $C$, $c$, etc.

To prove~\eqref{commuterroronenhalf},
we estimate
\begin{align*}
\sum _{|{\bf k}|\le k} \int_{\mathcal{R}(\tau_0,\tau_1)\cap \{r_1(k) \le r\le  r_2\} } &|  \Hapk[\psi]\cdot \{ [\mathfrak{D}^{\bf k},\Box_{g_0}]\psi\} |\\
&\le
C \int_{\mathcal{R} (\tau_0,\tau_1)\cap \{ r_1(k)  \le r\le r_2\} }
 \sum_{|{\bf k}|\le k+1}  \varkappa(|{\bf k}|-1, k) (\widetilde{\mathfrak{D}}^{\bf k}\psi)^2  \\
 &\le C \int_{\mathcal{R} (\tau_0,\tau_1)\cap \{ r_{\rm Killing}  \le r\le r_2\} }
 \sum_{1\le |{\bf k}|\le k+1, k_1+k_2\ge |{\bf k}|-1}  \varkappa(|{\bf k}|-1, k) ({\mathfrak{D}}^{\bf k}\psi)^2\\
 &\qquad\qquad +   \sum_{|{\bf k}|\le 1}(\widetilde{\mathfrak{D}}^{\bf k}\psi)^2  
 +C \sum_{|{\bf  k}|\le k-1} (\widetilde{\mathfrak{D}}^{\bf  k}F)^2 \\
  &\le C \int_{\mathcal{R} (\tau_0,\tau_1)\cap \{ r_{\rm Killing}  \le r\le r_2\} }
\sigma_{12}^{-1}(k) \Kpk[\psi] +   C\varkappa^{-1}(0,k)\Kpk[\psi]
 +C \sum_{|{\bf  k}|\le k-1} (\widetilde{\mathfrak{D}}^{\bf  k}F)^2. 
\end{align*}
(We emphasise again the conventions from Section~\ref{constantsandparameterssec} that constants $C$ also depends on $k$.)
We have used the spacetime version of estimate~\eqref{ellipticestimnoY} of Proposition~\ref{ellipticprop} in the above (i.e.~\eqref{ellipticestimnoYspacetime}), with 
$r'_-:= r_{\rm Killing} +\frac12  (r_1(k)-r_{\rm Killing})$, $r''_+:= r_1+3(r_2-r_1)/4$, $r'= r_1(k)$, $r''= r_1+(r_2-r_1)/2$, where
we also use that the integrand on the left hand side is nonzero only in $r'\le r\le r''$.

Requiring now $\sigma_{12}(k)\gg 1$ to be sufficiently large and  $\varkappa(0,k)=\varkappa_0(k)\gg 1$ to be sufficiently large, we obtain~\eqref{commuterroronenhalf}.

The proof of~\eqref{commuterrorthree} is analogous to~\eqref{commuterroronenhalf},
and also constrains $\sigma_{12}(k)$, $\varkappa_0(k)$ to be sufficiently large. 
This finally fixes $\sigma_{12}(k)$ and $\varkappa_0(k)$.
Note here we use~\eqref{ellipticestimnoYspacetime}
with $r''= R_k$ and $r''_+=9R_k/8$ and in general we may have $9R_k/8\ge R$. 

Identity~\eqref{commuterrortwo} follows immediately from the definition of the commutation
vector fields $\mathfrak{D}$.
\end{proof}

In what follows, we shall consider $\varsigma$, $\sigma$ and $\varkappa$ fixed so as to satisfy the above Proposition.
Thus, from now on, dependence of constants on $\varsigma$, $\sigma$ and $\varkappa$ will be absorbed in the $C$ and
$\lesssim$ notation.
We emphasise again that, according to our conventions from Section~\ref{constantsandparameterssec}, 
all our constants $c$, $C$ will in general depend on $k$. 

We have the following immediate corollary of Proposition~\ref{finalcoercivityprop}:
\begin{corollary}
\label{globalonesided}
Let $k\ge 1$ and let
 $\psi$ be a solution of the inhomogeneous equation~\eqref{inhomogeq} in $\mathcal{R}(\tau_0,\tau_1)$.
Then
we have the global (one-sided)  bound:
\begin{equation}
\begin{aligned}
  \label{finalcommuter}
\int_{\mathcal{R} (\tau_0,\tau_1)  \cap \{r\le r_2\} }  &
-  \Hapk[\psi]\cdot \{ [\mathfrak{D}^{\bf k},\Box_{g_0}]\psi\} 
+\int_{\mathcal{R} (\tau_0,\tau_1)  \cap \{r\ge r_2\} }  
| \Hapk[\psi]\cdot \{ [\mathfrak{D}^{\bf k},\Box_{g_0}]\psi\} |\\
 &\leq
 \int_{\mathcal{R} (\tau_0,\tau_1)\cap\{r\le r_2\} \cap  \{r\ge R/4\} }\frac12\Kpk[\psi]+ 
  C\sum_{|{\bf  k}|\le k-1} |\widetilde{\mathfrak{D}}^{\bf  k}F|^2.
 \end{aligned}
\end{equation}
\end{corollary}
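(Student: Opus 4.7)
The plan is to obtain~\eqref{finalcommuter} as a direct packaging of Proposition~\ref{finalcoercivityprop}: partition the slab $\mathcal{R}(\tau_0,\tau_1)$ into a finite union of $r$-subregions, apply the corresponding estimate from~\eqref{commuterrorone}--\eqref{commuterrorfour} on each piece, and sum. No further commutator analysis is required.

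Concretely, I would split $\mathcal{R}(\tau_0,\tau_1)$ according to $r$ into
(a) $\{r_0\le r\le r_1(k)\}$, (b) $\{r_1(k)\le r\le r_2\}$, (c) $\{r_2\le r\le R/2\}$, (d) $\{R/2\le r\le R_k\}$, and (e) $\{r\ge R_k\}$. The first integral on the left of~\eqref{finalcommuter} lives on $\{r\le r_2\}=(a)\cup(b)$; on (a) I integrate the pointwise bound~\eqref{commuterrorone}, which crucially gives $-\Hapk[\psi]\cdot\{[\mathfrak{D}^{\bf k},\Box_{g_0}]\psi\}\le \tfrac13\Kpk[\psi]$ (this is where the one-sided character of the statement is used, since $\Kpk$ need not control $|\Hapk\cdot\{\ldots\}|$ pointwise in the black hole interior), and on (b) I apply the already-integrated~\eqref{commuterroronenhalf}. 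The second integral on the left of~\eqref{finalcommuter} lives on $\{r\ge r_2\}=(c)\cup(d)\cup(e)$; on (c) the integrand vanishes identically by~\eqref{commuterrortwo} (since $r_2>r_1+\tfrac12(r_2-r_1)$), on (d) one applies~\eqref{commuterrorthree}, and on (e) one integrates the pointwise~\eqref{commuterrorfour}.

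Adding the five contributions produces a right-hand side consisting of the required error term $C\sum_{|{\bf k}|\le k-1}|\widetilde{\mathfrak{D}}^{\bf k}F|^2$ (coming from~\eqref{commuterroronenhalf} and~\eqref{commuterrorthree}) and a sum of $\Kpk[\psi]$-integrals supported in $\{r_0\le r\le r_2\}\cup\{R/4\le r\le 9R_k/8\}\cup\{r\ge R_k\}\subset\{r\le r_2\}\cup\{r\ge R/4\}$, which is exactly the coercive region appearing on the right of~\eqref{finalcommuter}. The only point requiring a line of bookkeeping is that the coefficients in front of $\Kpk$ add to at most $\tfrac12$: the only $r$-overlap occurs on $\{r_{\rm Killing}\le r\le r_1(k)\}$ and $\{R_k\le r\le 9R_k/8\}$, where two of the constants from Proposition~\ref{finalcoercivityprop} stack. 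This is harmless---one re-inspects the proof of Proposition~\ref{finalcoercivityprop} and simply tightens the $\tfrac1{12}$ and $\tfrac13$ constants there (which only forces slightly larger choices of $\sigma_{12}(k),\varkappa_0(k),\varsigma(k)$, constrained by~\eqref{nowcentredvarkappa} and~\eqref{varkappaconstraint}). I do not anticipate any genuine obstacle, since all analytic work has been done in Proposition~\ref{finalcoercivityprop}; the corollary is essentially a patching statement.
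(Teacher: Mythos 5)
Your proof coincides with the paper's approach: both simply patch together~\eqref{commuterrorone}--\eqref{commuterrorfour} over the indicated $r$-regions and sum. Your bookkeeping is in fact more careful than the paper's one-line proof, which attributes the $\frac12$ solely to the overlap of~\eqref{commuterrorthree} and~\eqref{commuterrorfour} on $\{R_k\le r\le 9R_k/8\}$ (where the constants stack to $\frac13+\frac1{10}<\frac12$), but is silent about the overlap on $\{r_{\rm Killing}\le r\le r_1(k)\}$ that you identify, between the pointwise~\eqref{commuterrorone} and the right-hand side of~\eqref{commuterroronenhalf}, where the constants as literally stated stack to $\frac13+\frac13=\frac23>\frac12$. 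You are correct that, as Proposition~\ref{finalcoercivityprop} is actually written, a naive sum yields coefficient $\frac23$, and one obtains $\frac12$ only after tightening the $\frac13$'s in~\eqref{commuterrorone}--\eqref{commuterroronenhalf}, exactly as you propose, by taking $\varsigma(k)$, $\sigma_{12}(k)$, $\varkappa_0(k)$ larger (always permissible, and compatible with~\eqref{nowcentredvarkappa}--\eqref{varkappaconstraint}); you should not in fact need to touch the assumed $\frac1{12}$ of~\eqref{commutatorerrorformula}, since the far-region overlap already sits below $\frac12$. The only cosmetic issue is that your clause asserting the coefficients ``add to at most $\frac12$'' reads as a fact before you observe the stack to $\frac23$; it would read more cleanly if the tightening were announced upfront. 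Substantively: same route as the paper, with a bookkeeping step that the paper's own proof elides.
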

\begin{proof}
Note that the presence of $\frac12$ instead of $\frac13$ is due to the fact that both estimates~\eqref{commuterrorthree}
and~\eqref{commuterrorfour} borrow from the bulk of the region $r\ge R_k$. Note that restricted to the region $r\ge R$, if desired, we may of course
replace the $\widetilde{\mathfrak{D}}$ commutation on the last term on the right hand side by $\mathfrak{D}$.
\end{proof}

\subsubsection{The higher order energy notation}
\label{higherorderenergynot}

We are now ready to derive the higher order weighted estimates which will allow
us in particular to address nonlinear problems.
We will turn to the estimates themselves in Section~\ref{finalhighestimlinsec}.
In the meantime, let us introduce some notation below.

We proceed to summarise the definitions:
Given $k\ge 1$,  $1\le\tau_0\le \tau$, $v$,
and a spacetime function $\psi$, we define the following energies. 

In general, our fundamental  energies without degeneration functions $\chi$ or $\rho$
will be defined with $\widetilde{\mathfrak{D}}^{\bf k}$ as commutation vector fields, i.e.~they will contain all derivatives at the relevant
order.
Specifically, we define first the unweighted energies:
\begin{align}
\label{firstdefk}
\Ezerok(\tau)&:=  \sum_{|{\bf k}|\leq k} 
   \int_{\Sigma(\tau)} (L\widetilde{\mathfrak{D}}^{\bf k}\psi)^2+|\slashed\nabla \widetilde{\mathfrak{D}}^{\bf k}\psi|^2 + \iota_{r\le R} (\underline{L}\widetilde{\mathfrak{D}}^{\bf k}\psi)^2+r^{-2}(\widetilde{\mathfrak{D}}^{\bf k}\psi)^2,\\
\label{extradefk}
\Ezerok_{\mathcal{S}}(\tau_0,\tau)&:= \sum_{|{\bf k}|\leq k} 
  \int_{\mathcal{S}(\tau_0,\tau)} (L\widetilde{\mathfrak{D}}^{\bf k}\psi)^2+|\slashed\nabla\psi|^2 +
   (\underline{L}\widetilde{\mathfrak{D}}^{\bf k}\psi)^2+(\widetilde{\mathfrak{D}}^{\bf k}\psi)^2,\\
\label{seconddefk}
\Fzerok(v, \tau_0,\tau)&: = \sum_{|{\bf k}|\leq k}
   \int_{\underline{C}_v\cap \mathcal{R}(\tau_0,\tau)} (\underline{L}\widetilde{\mathfrak{D}}^{\bf k}\psi)^2
   +(\slashed\nabla\widetilde{\mathfrak{D}}^{\bf k}\psi)^2+r^{-2}(\widetilde{\mathfrak{D}}^{\bf k}\psi)^2,\\
\Ezerominusoneminusdeltak'(\tau)&:= \sum_{|{\bf k}|\leq k}
\int_{\Sigma(\tau)} 
r^{-1-\delta} \left((L\widetilde{\mathfrak{D}}^{\bf k}\psi)^2+(\underline{L}\widetilde{\mathfrak{D}}^{\bf k}\psi)^2
+|\slashed\nabla\widetilde{\mathfrak{D}}^{\bf k}\psi|^2\right)+r^{-3-\delta}(\widetilde{\mathfrak{D}}^{\bf k}\psi)^2.
\end{align}
We now define (in analogy with~\eqref{coeffshereand}--\eqref{coeffsheretootoosmaller}) the higher order $p$-weighted energies for $\delta\le p\le 2-\delta$:
  \begin{align}
  \label{coeffshereandkorder}
  \Epk(\tau) &: = \Ezerok(\tau)+  \sum_{|{\bf k}|\leq k} \int_{\Sigma(\tau)\cap \{r\ge R\} } r^p (r^{-1}L(r\widetilde{\mathfrak{D}}^{\bf k}\psi))^2 + r^{\frac{p}2}(L\widetilde{\mathfrak{D}}^{\bf k}\psi)^2
  +r^{\frac{p}2-2}(\widetilde{\mathfrak{D}}^{\bf k}\psi)^2, \\
     \label{weightedpfluxkorder}
 \Fpk(v,\tau_0,\tau) &:= \Fzerok(v) +     \sum_{|{\bf k}|\leq k}  \int_{\underline{C}_v\cap \mathcal{R}(\tau_0,\tau)} r^p|\slashed\nabla\widetilde{\mathfrak{D}}^{\bf k}\psi|^2 + r^{p-2}(\widetilde{\mathfrak{D}}^{\bf k}\psi)^2,\\
      \label{coeffsheretootoosmallerkorder}
   \Epminusonek'(\tau) &: = \Ezerominusoneminusdeltak'(\tau)+
    \sum_{|{\bf k}|\leq k}  \int_{\Sigma(\tau) } r^{p-1} \left( (r^{-1}L(r\widetilde{\mathfrak{D}}^{\bf k}\psi))^2
     + (L\widetilde{\mathfrak{D}}^{\bf k}\psi)^2 
     +      |\slashed\nabla\widetilde{\mathfrak{D}}^{\bf k} \psi|^2\right)
  +r^{p-3}(\widetilde{\mathfrak{D}}^{\bf k}\psi)^2 .
  \end{align}
Note the important relation:
\begin{equation}
\label{higherorderfluxbulkrelation}
\Epk \gtrsim \Epprimek, \, \, \Fpk \gtrsim \Fpprimek {\rm\ \ for\ } p\ge p'\ge \delta {\rm\ or\ }p'=0, \qquad  
  \Epminusonek' \gtrsim \Epminusonek  {\rm\ for\ } p\ge 1+\delta, \qquad   \Epminusonek' \gtrsim \Ezerok  {\rm\ for\ }p\ge 1 ,
\end{equation}
representing the higher order analogue of~\eqref{fluxbulkrelation}

In contrast, we define the higher order energies carrying the degenerations
functions $\chi$, $\rho$ and $\tilde\rho$ in terms of the $\mathfrak{D}$ commutators:
\begin{align}
{}^\chi\Ezerominusoneminusdeltak'(\tau)&:= \sum_{|{\bf k}|\leq k} \int_{\Sigma(\tau)}
  r^{-1-\delta}\chi(r) \left( (L\mathfrak{D}^k\psi)^2+(\underline{L}\mathfrak{D}^k\psi)^2+|\slashed\nabla\mathfrak{D}^k\psi|^2\right),\\
  {}^\rho\Ezerominusoneminusdeltak'(\tau)&:= \sum_{|{\bf k}|\leq k} \int_{\Sigma(\tau)}
    r^{-1-\delta}\rho(r) \left(  (L\mathfrak{D}^k\psi)^2+(\underline{L}\mathfrak{D}^k\psi)^2+|\slashed\nabla\mathfrak{D}^k\psi|^2\right),\\
{}^{\tilde\rho} \Ezerominusthreeminusdeltakminusone'(\tau) &:=  \sum_{|{\bf k}|\leq k} \int_{\Sigma(\tau)} 
 \tilde\rho (r)r^{-3-\delta} (\mathfrak{D}^k\psi)^2 
 \end{align}
and then
\begin{align}
   \label{coeffsheretootookorderchi}
 {}^\chi  \Epminusonek'(\tau) &: = {}^\chi\Ezerominusoneminusdeltak'(\tau)+  \sum_{|{\bf k}|\leq k} \int_{\Sigma(\tau)\cap \{r\ge R\} } r^{p-1} \left( (r^{-1}L(r\widetilde{\mathfrak{D}}^{\bf k}\psi))^2
     + (L\widetilde{\mathfrak{D}}^{\bf k}\psi)^2 
     +      |\slashed\nabla\widetilde{\mathfrak{D}}^{\bf k} \psi|^2\right)
  +r^{p-3}(\widetilde{\mathfrak{D}}^{\bf k}\psi)^2 ,\\
     \label{coeffsheretootookorderrho}
 {}^\rho  \Epminusonek'(\tau) &={}^\rho\Ezerominusoneminusdeltak'(\tau)+ \sum_{|{\bf k}|\leq k} \int_{\Sigma(\tau)\cap \{r\ge R\} } r^{p-1} \left( (r^{-1}L(r\widetilde{\mathfrak{D}}^{\bf k}\psi))^2
     + (L\widetilde{\mathfrak{D}}^{\bf k}\psi)^2 
     +      |\slashed\nabla\widetilde{\mathfrak{D}}^{\bf k} \psi|^2\right)
  +r^{p-3}(\widetilde{\mathfrak{D}}^{\bf k}\psi)^2 .
 \end{align}
 (Note that in the integrals over $r\ge R$,  it would not matter whether we use $\widetilde{\mathfrak{D}}$ or $\mathfrak{D}$ as these would here define
 directly comparable energies.) 
 
  In analogy with~\eqref{analogyerror}
 we define
 \begin{align}
  \Eerrorkminusone'(\tau)& := \sum_{|{\bf k}|\leq k} \int_{\Sigma(\tau)} \xi(r)(\mathfrak{D}^k\psi)^2.
\end{align}
 We note the fundamental relation
 \begin{equation}
 \label{glue}
   \Eerrorkminusone' \lesssim  {}^\chi  \Ezerominusoneminusdeltakminusone' + \Ezerominusoneminusdeltakminustwo' 
 \end{equation}
 which follows from our constraints on the support of $\xi$ and the degeneration of $\chi$.

\begin{proposition}
\label{somerelationshere}
For a general smooth function $\psi$, we have the following relations between energies
\begin{equation}
\label{firstrelatsone}
\Efancypk(\tau)\lesssim \Epk(\tau),
\end{equation}
\begin{equation}
\label{firstrelatstwo}
\Efancypk_{\mathcal{S}}(\tau)=\Efancyzerok_{\mathcal{S}}(\tau) \sim \Epk_{\mathcal{S}}(\tau), \qquad
 \Ffancypk(v,\tau_0,\tau)\sim \Fpk(v,\tau_0,\tau)
\end{equation}
\begin{equation}
\begin{aligned}
\label{firstrelatsthree}
\int_{\mathcal{R}(\tau_0,\tau_1)} \Kpk[\psi] +\sum_{|{\bf k}|\leq k} \tilde{A}\xi(r)(\mathfrak{D}^k\psi)^2 &\gtrsim \int_{\tau_0}^{\tau_1}   
{}^\rho \Epminusonek'(\tau')d\tau' + \int_{\tau_0}^{\tau_1} {}^{\tilde\rho} \, \Ezerominusthreeminusdeltakminusone' (\tau') d\tau' , \qquad & 2-\delta\ge p \ge \delta\\
\int_{\mathcal{R}(\tau_0,\tau_1)} \Kpk [\psi]  + \sum_{|{\bf k}|\leq k} \tilde{A}\xi(r)(\mathfrak{D}^k\psi)^2 &\gtrsim \int_{\tau_0}^{\tau_1}    
{}^\rho \Ezerominusoneminusdeltak'(\tau')d\tau' + \int_{\tau_0}^{\tau_1} {}^{\tilde\rho} \, \Ezerominusthreeminusdeltakminusone' (\tau') d\tau', \qquad  &p=0.
\end{aligned}
\end{equation}
For $\psi$ a solution of the inhomogeneous equation $\Box_{g_0}\psi=F$, we have
\begin{equation}
\label{witheliptone}
 \Epk(\tau) \lesssim \Efancypk(\tau) +\int_{\Sigma(\tau)\cap \{ r\le R\}  }\sum_{|{\bf  k}|\le k-1} (\widetilde{\mathfrak{D}}^{\bf  k}F)^2.
\end{equation}
If $\chi=1$ and $\tilde\rho=1$ identically (as in case (i)), 
then 
\begin{align}
\label{withelipttwo}
 \Epminusonek'(\tau)   &\lesssim {}^{\chi} \Epminusonek'(\tau)+  {}^{\tilde\rho} \, \Ezerominusthreeminusdeltakminusone'(\tau)
+ \int_{\Sigma(\tau) \cap \{ r\le R\}  }\sum_{|{\bf  k}|\le k-1} (\widetilde{\mathfrak{D}}^{\bf  k}F)^2,
\quad
2-\delta\ge p\ge \delta \\
\label{witheliptthree}
\Ezerominusoneminusdeltak'(\tau) &\lesssim  {}^{\chi} \Ezerominusoneminusdeltak'(\tau) 
+  {}^{\tilde\rho}\, \Ezerominusthreeminusdeltakminusone'(\tau)
+\int_{\Sigma(\tau) \cap \{ r\le R\}  }\sum_{|{\bf  k}|\le k-1} (\widetilde{\mathfrak{D}}^{\bf  k}F)^2, 
\end{align}
and if $\tilde\rho=1$ identically (i.e.~as in cases (i) and (ii)), then
\begin{align}
\label{witheliptfour}
\Epminusonekminusone'(\tau) &\lesssim  
{}^\rho\Epminusonek'(\tau)+
{}^{\tilde\rho} \Ezerominusthreeminusdeltakminusone'(\tau) +\int_{\Sigma(\tau)\cap \{ r \le R\}  }\sum_{|{\bf  k}|\le k-2} (\widetilde{\mathfrak{D}}^{\bf  k}F)^2, \quad
2-\delta\ge p\ge \delta \\
\label{witheliptfive}
\Ezerominusoneminusdeltakminusone'(\tau)    &\lesssim 
{}^\rho\Ezerominusoneminusdeltak'(\tau)+
 {}^{\tilde\rho} \Ezerominusthreeminusdeltakminusone'(\tau) +\int_{\Sigma(\tau)\cap \{ r \le R\}  }\sum_{|{\bf  k}|\le k-2} (\widetilde{\mathfrak{D}}^{\bf  k}F)^2.
\end{align}
\end{proposition}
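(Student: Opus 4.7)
The plan is to establish the seven displays by splitting them into three groups: (a) the direct pointwise bounds \eqref{firstrelatsone}--\eqref{firstrelatsthree}, which follow purely from the coercivity/boundedness of the currents defined in Section~\ref{rpsection} together with the enhanced positivity of Proposition~\ref{enhancedinteriorprop}; (b) the reverse energy bound \eqref{witheliptone}, which requires a single application of the elliptic estimate of Proposition~\ref{ellipticprop}; and (c) the trapping/degeneration trade-offs \eqref{witheliptwo}--\eqref{witheliptfive}, which combine elliptic theory with the hypotheses $\chi\equiv1$ or $\tilde\rho\equiv1$.

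First I would treat \eqref{firstrelatsone}--\eqref{firstrelatsthree}.  For \eqref{firstrelatsone}, apply the pointwise boundedness implied by \eqref{insymbolsiiwithpweightJequiv} to each $\mathfrak{D}^{\bf k}\psi$ with $|{\bf k}|\le k$ and sum with the weights $\varkappa({\bf k},k)\sigma({\bf k},k)$ from \eqref{JpkdefKpkdef}; since each $\mathfrak{D}_i$ is a smooth linear combination of $\widetilde{\mathfrak{D}}_j$'s with globally bounded coefficients, the $|\mathfrak{D}^{\bf k+1}\psi|^2$ expressions on the right are controlled by $\sum_{|{\bf j}|\le k+1}|\widetilde{\mathfrak{D}}^{\bf j}\psi|^2$, giving exactly $\Epk(\tau)$. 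The boundary equivalences \eqref{firstrelatstwo} are proven identically, noting the key fact that on $\underline{C}_v\subset\{r\ge R\}$ the set $\{\zeta L,\zeta\underline{L},(\zeta+\hat\zeta)\Omega_i\}\subset\mathfrak{D}$ is a spanning frame and equivalent to $\widetilde{\mathfrak{D}}$; and on $\mathcal{S}$ the vectors $Y$, $Z$, $\Omega_i$ in $\mathfrak{D}$ span the tangent space (by the construction in Section~\ref{Yvectorfieldsec}), yielding two-sided control. For the bulk lower bound \eqref{firstrelatsthree}, apply the pointwise coercivity \eqref{insymbolsiiwithpweightK} to each $\mathfrak{D}^{\bf k}\psi$; in $r\ge r_{\rm Killing}$ this is immediate, while in the black hole interior $r<r_{\rm Killing}$ we substitute the enhanced lower bound \eqref{enhancedherefinal} from Proposition~\ref{enhancedinteriorprop}, whose only obstruction is the term $-C\varsigma\lambda_\varsigma\psi^2$. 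With $\varsigma(k)$, $\varkappa_0(k)$, $\sigma_{12}(k)$ fixed as in Proposition~\ref{finalcoercivityprop} (recall \eqref{nowcentredvarkappa}), the weighted sum \eqref{noticelower} absorbs this negative term inductively on commutator order, and the $\tilde A\xi\psi^2$ contribution on the left of \eqref{firstrelatsthree} captures the remaining positive tolerance.

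Next, for \eqref{witheliptone} I would partition $\Sigma(\tau)$ into three zones. In the far zone $r\ge 3R/4$ the cutoff $\zeta=1$, so $\mathfrak{D}$ contains $L$, $\underline L$, $\Omega_i$ and spans the tangent space; equivalently for $r\le r_1+(r_2-r_1)/4$ where $\hat\zeta=1$ and $\mathfrak{D}$ contains $Y$, $Z$, $\Omega_i$. On both zones \eqref{insymbolsiiwithpweightJequiv} gives the two-sided equivalence $\Jpk\cdot n\sim$ (sum of $r$-weighted squares of $\widetilde{\mathfrak{D}}^{\bf k+1}\psi$), so $\Epk|_{\rm zone}\lesssim\Efancypk|_{\rm zone}$ with no $F$ error. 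On the intermediate zone $r_1+(r_2-r_1)/4\le r\le 3R/4$, which lies in a compact $r$-interval strictly inside $(r_{\rm Killing},R)$, apply \eqref{ellipticestimnoY} of Proposition~\ref{ellipticprop} successively from order $0$ up to order $k$, choosing a nested sequence $r'_-<r'<r''<r''_+$ slightly enlarged at each iteration. At each step $j$, the integral of $(\widetilde{\mathfrak{D}}^{\bf j+1}\psi)^2$ is controlled by the integral of $\mathfrak{D}^{\bf j+1}$-derivatives (now in the enlarged zone, still bounded by $\Efancypk$) plus the previously-controlled lower-order $\widetilde{\mathfrak{D}}$-terms, plus $\sum_{|{\bf j}|\le k-1}(\widetilde{\mathfrak{D}}^{\bf j}F)^2$, producing precisely the $F$-term appearing on the right of \eqref{witheliptone}.

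Finally, for \eqref{witheliptwo}--\eqref{witheliptfive} the strategy is similar in structure. Since $\chi=\tilde\rho=1$ (cases (i)) or $\tilde\rho=1$ (cases (i)--(ii)), the degeneration function no longer vanishes pointwise except on the set where $\rho$ may vanish in case (ii). The $r\ge R$ far-region contribution appears on both sides in identical weighted form, so we reduce to comparing $\mathfrak{D}$-energies versus $\widetilde{\mathfrak{D}}$-energies on a bounded region: elliptic estimate \eqref{ellipticestimnoY} trades these at the cost of the $\widetilde{\mathfrak{D}}^{\bf k-1}F$ or $\widetilde{\mathfrak{D}}^{\bf k-2}F$ terms on the right of \eqref{witheliptwo}--\eqref{witheliptfive}. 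The genuinely delicate point, which will be the main obstacle, is \eqref{witheliptfour}--\eqref{witheliptfive}: on the compact trapping region where $\rho=0$ the first term on the right provides no direct control of $\mathfrak{D}^{\bf k-1}$-derivatives, and one must combine the zeroth-order term ${}^{\tilde\rho}\Ezerominusthreeminusdeltakminusone'$ (which is fully non-degenerate thanks to $\tilde\rho=1$) with Poincar\'e-type estimates in the compact region, using that $T$ and $\Omega_1$ are Killing and commute through. Concretely, each derivative counted by $k_1+k_2$ in the right-hand side of \eqref{ellipticestimnoY} is a $T$ or $\Omega_1$ derivative that commutes with $\Box_{g_0}$, allowing one to rewrite $T^{k_1}\Omega_1^{k_2}\widetilde{\mathfrak{D}}_j\psi=\widetilde{\mathfrak{D}}_j T^{k_1}\Omega_1^{k_2}\psi$ modulo lower-order commutators, and then iterate Poincar\'e together with the elliptic estimate to bound the bulk on the trapping set by the non-degenerate zeroth-order term at order $k-1$. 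Verifying the closure of this iteration, with careful tracking of constants so as not to interfere with the preceding structural identities, is the one nontrivial step; everything else is a routine application of the currents and elliptic regularity.
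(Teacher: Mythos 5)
Your overall strategy---pointwise coercivity/boundedness of the currents for \eqref{firstrelatsone}--\eqref{firstrelatsthree}, elliptic estimates for the remaining inequalities---is the same as the paper's (the paper's proof is a two-line citation to the coercivity assumptions and to the elliptic estimate \eqref{ellipticestimnorestriction}). Parts (a) and (b) of your argument are essentially fine, though somewhat more laboured than necessary: the paper invokes \eqref{ellipticestimnorestriction} directly, which already incorporates the spanning of $\mathfrak{D}$ in $r\le r_1$ and $r\ge R/2$, so the nested-interval iteration of \eqref{ellipticestimnoY} that you describe is not needed.

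Part (c) is where you go astray, and the error is a conceptual one worth flagging. You identify \eqref{witheliptfour}--\eqref{witheliptfive} as a ``genuinely delicate point'' requiring an iteration of ``Poincar\'e-type estimates'' in the region where $\rho$ vanishes, arguing that the first term on the right gives no control there and that one must bootstrap up from the zeroth-order term. This misreads the orders involved. The term ${}^{\tilde\rho}\Ezerominusthreeminusdeltakminusone'$ sums over $|\mathbf{k}|\le k$ and so already controls $\int (\mathfrak{D}^{\mathbf{k}}\psi)^2$ for $k$ applications of $\mathfrak{D}$; meanwhile the left-hand side $\Epminusonekminusone'$, restricted to $r\le R$, involves $L\widetilde{\mathfrak{D}}^{\mathbf{k}}\psi$, $\underline{L}\widetilde{\mathfrak{D}}^{\mathbf{k}}\psi$, etc., for $|\mathbf{k}|\le k-1$, i.e.\ again at most $k$ total derivatives. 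The trade is therefore \emph{order-preserving}: it converts control of $\mathfrak{D}^{\le k}\psi$ into control of $\widetilde{\mathfrak{D}}^{\le k}\psi$, not an increase in derivative count. Poincar\'e inequalities---which go from derivatives to lower-order terms---play no role and cannot help here. What is needed is precisely a single application of \eqref{ellipticestimnorestriction} (with $k$ replaced by $k-1$), which converts the $\mathfrak{D}$-control on the right into the $\widetilde{\mathfrak{D}}$-control on the left at the cost of the $\sum_{|\mathbf{k}|\le k-2}(\widetilde{\mathfrak{D}}^{\mathbf{k}}F)^2$ term, exactly as it appears in the statement. There is no iteration to close and no delicate constant-tracking; the hard work is front-loaded into Proposition~\ref{ellipticprop}, and once that is in hand the present proposition is routine.
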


\begin{proof}
The relations~\eqref{firstrelatsone}--\eqref{firstrelatsthree}
follow immediately from our coercivity and boundedness assumptions on the currents
while the inequalities~\eqref{witheliptone}--\eqref{witheliptfive} follow easily from the elliptic 
estimate~\eqref{ellipticestimnorestriction}
of Proposition~\ref{ellipticprop}.
\end{proof}

We note the following immediate corollary of the above proposition:
\begin{corollary}
\label{corollaryforhomog}
Let $\psi$ be a solution of $\Box_{g_0}\psi=0$ in $\mathcal{R}(\tau_0,\tau_1)$.
Then calligraphic and fraktur energies on $\Sigma(\tau)$ 
are equivalent
\[
\Epk(\tau) \sim \Efancypk(\tau).
\]
If $\chi=1$ and $\tilde\rho=1$ identically (as in case (i)), then
\[
 \Epminusonek'(\tau)    \sim  {}^{\chi} \Epminusonek'(\tau)+
 {}^{\tilde\rho} \, \Ezerominusthreeminusdeltakminusone'(\tau), \qquad \Ezerominusoneminusdeltak'(\tau) \sim  
 {}^{\chi} \Ezerominusoneminusdeltak'(\tau)+{}^{\tilde\rho} \, \Ezerominusthreeminusdeltakminusone'(\tau) .
\]
If $\tilde\rho=1$ identically (i.e.~as in cases (i) and (ii)), then
\[
\Epminusonekminusone'(\tau)\lesssim {}^\rho\Epminusonek'(\tau)+  {}^{\tilde\rho} \Ezerominusoneminusdeltakminusone'(\tau) , \qquad
\Ezerominusoneminusdeltakminusone'(\tau)  \lesssim {}^\rho\Ezerominusoneminusdeltak'(\tau)+     {}^{\tilde\rho} \Ezerominusthreeminusdeltakminusone'(\tau) .
\] 
\end{corollary}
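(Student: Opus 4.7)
The plan is to obtain all three equivalences/bounds as direct specialisations of Proposition~\ref{somerelationshere} to the case $F=0$; no new estimate is needed beyond setting the inhomogeneity to zero in the elliptic control terms. The only minor point requiring attention is bookkeeping of which of the statements~\eqref{witheliptone}--\eqref{witheliptfive} is available under which of the assumptions on $\chi$ and $\tilde\rho$.

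First, for the equivalence $\Epk(\tau)\sim\Efancypk(\tau)$, note that the inequality $\Efancypk(\tau)\lesssim\Epk(\tau)$ is provided unconditionally (for general smooth $\psi$) by~\eqref{firstrelatsone}. For the reverse direction, apply~\eqref{witheliptone} of Proposition~\ref{somerelationshere}; since $\psi$ solves $\Box_{g_0}\psi=0$, the inhomogeneous term $\int_{\Sigma(\tau)\cap\{r\le R\}}\sum_{|\mathbf{k}|\le k-1}(\widetilde{\mathfrak{D}}^{\mathbf{k}}F)^2$ vanishes, yielding $\Epk(\tau)\lesssim\Efancypk(\tau)$ and hence the desired equivalence.

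Next, in case (i) where $\chi=1$ and $\tilde\rho=1$ identically, apply~\eqref{withelipttwo} and~\eqref{witheliptthree} with $F=0$ to obtain
\[
\Epminusonek'(\tau)\lesssim {}^{\chi}\Epminusonek'(\tau)+{}^{\tilde\rho}\Ezerominusthreeminusdeltakminusone'(\tau),
\qquad
\Ezerominusoneminusdeltak'(\tau)\lesssim {}^{\chi}\Ezerominusoneminusdeltak'(\tau)+{}^{\tilde\rho}\Ezerominusthreeminusdeltakminusone'(\tau).
\]
The reverse inequalities are trivial, since ${}^{\chi}\Epminusonek'\le\Epminusonek'$, ${}^{\chi}\Ezerominusoneminusdeltak'\le\Ezerominusoneminusdeltak'$, and ${}^{\tilde\rho}\Ezerominusthreeminusdeltakminusone'\lesssim\Ezerominusthreeminusdeltakminusone'\lesssim\Epminusonek'$ (where the last bound uses that $p\ge\delta$ so the bulk density at order $k-1$ is controlled).

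Finally, in cases (i) and (ii) where $\tilde\rho=1$, the two bounds for $\Epminusonekminusone'$ and $\Ezerominusoneminusdeltakminusone'$ are precisely~\eqref{witheliptfour} and~\eqref{witheliptfive} with the inhomogeneous term dropped, since $\Box_{g_0}\psi=0$. No step is a genuine obstacle: the entire content of the corollary is that Proposition~\ref{somerelationshere}'s elliptic correction terms are absent for homogeneous solutions, so what was a one-sided bound in the general case becomes, after combining with the trivial reverse inequality, an equivalence.
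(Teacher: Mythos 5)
Your proof is correct and follows exactly the intended route: the paper itself dispatches this corollary with the single remark that it "follows immediately from the above proposition," and your argument fills in precisely that — specialising Proposition~\ref{somerelationshere} to $F=0$ and combining with the trivial reverse pointwise bounds (from $\chi,\rho,\tilde\rho\le 1$ and from the fact that the $\mathfrak{D}$ operators are controlled, weight by weight, by the $\widetilde{\mathfrak{D}}$). Two small remarks on the bookkeeping: your aside "where the last bound uses that $p\ge\delta$" is unnecessary — $\Ezerominusthreeminusdeltakminusone'$ is already dominated by the $r^{-3-\delta}$-weighted zeroth-order part of $\Ezerominusoneminusdeltak'\subset\Epminusonek'$ irrespective of the value of $p$; and your statement that the last two bounds "are precisely"~\eqref{witheliptfour}--\eqref{witheliptfive} with $F=0$ is slightly imprecise, since~\eqref{witheliptfour} has the smaller term $\Ezerominusthreeminusdeltakminusone'$ where the corollary states $\Ezerominusoneminusdeltakminusone'$; but the proposition's bound is the stronger one, so the corollary follows a fortiori. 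Neither point is a gap.
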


\subsubsection{The final higher order estimates}
\label{finalhighestimlinsec}

We conclude with our higher order version of Proposition~\ref{finaluncommuted}.

\begin{proposition}
Under the assumptions of Proposition~\ref{finaluncommuted}, 
let us assume the additional assumptions of  Section~\ref{farawaycommutationerror}.

Fix $k\ge 1$.
Then for all $0<\delta\le p \le 2-\delta$ and for all
$\tau_0\le \tau\le \tau_1$ we have the following statement.

Let
$\psi$ be a solution of the inhomogeneous equation~\eqref{inhomogeq} in $\mathcal{R}(\tau_0,\tau_1)$.
Then 
\begin{align}
\nonumber
\sup_{v:\tau\le \tau(v)}
\Ffancyp(v,\tau_0,\tau), & \qquad   \Efancypk(\tau) + \Efancyzerok_{\mathcal{S}}(\tau_0,\tau) + c \int_{\tau_0}^{\tau}{}^\rho\Epminusonek'(\tau')
d\tau' +c \int_{\tau_0}^{\tau} {}^{\tilde\rho} \, \Ezerominusthreeminusdeltakminusone' (\tau') d\tau' \\
\nonumber
&\leq  \Efancypk(\tau_0)+  A \int_{\tau_0}^{\tau}\Eerrorkminusone'(\tau') +\int_{\mathcal{R}(\tau_0,\tau)}\left|\Hapk[\psi]\cdot  \{ \mathfrak{D}^{\bf k}F\}\right|
+C\int_{\mathcal{R}(\tau_0,\tau)}\sum_{|{\bf k}| \le k}({\mathfrak{D}}^{\bf k}F)^2\\
\label{rpidentherefluxescommuted}
&\qquad+C\int_{\mathcal{R}(\tau_0,\tau)\cap\{r\le R\}}\sum_{|{\bf k}|\le k-1}({\widetilde{\mathfrak{D}}}^{\bf k}F)^2
\end{align}
as well as the estimate
\begin{align}
\nonumber
\sup_{v:\tau\le \tau(v)}\Fpk(v,\tau_0,\tau)&+ \Epk(\tau) +\Ezerok_{\mathcal{S}}(\tau_0,\tau)+ \int_{\tau_0}^{\tau}{}^\chi\Epminusonek'(\tau')
d\tau' + \int_{\tau_0}^{\tau}  \Epminusonekminusone'(\tau') d\tau' \\
\nonumber
&\lesssim  \Epk(\tau_0) +\int_{\mathcal{R}(\tau_0,\tau)}
\sum_{|{\bf k}|\le k}\left(|V^\mu_p({\mathfrak{D}}^{\bf k}\psi)_\mu|+|w_p{\mathfrak{D}}^{\bf k}\psi| \right)|{\mathfrak{D}}^{\bf k}F|+\int_{\mathcal{R}(\tau_0,\tau)}\sum_{|{\bf k}| \le k}({\mathfrak{D}}^{\bf k} {F})^2\\
\label{rpidenbboxcommuted}
&\qquad+C \int_{\mathcal{R}(\tau_0,\tau)\cap \{r\le R\}}\sum_{|{\bf k}| \le k-1}({\widetilde{\mathfrak{D}}}^{\bf k} F )^2
+\int_{\Sigma(\tau)\cap\{r\le R\}}\sum_{|{\bf k}|\le k-1}(\widetilde{\mathfrak{D}}^{\bf k}F)^2.
\end{align}
For $p=0$, identical statements hold with $p-1$ replaced by $-1-\delta$.
\end{proposition}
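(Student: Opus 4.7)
The proof splits into two steps: Step 1 establishes~\eqref{rpidentherefluxescommuted} directly from the $\Jpk$ divergence identity, and Step 2 derives~\eqref{rpidenbboxcommuted} by applying the uncommuted Proposition~\ref{finaluncommuted} to each $\mathfrak{D}^{\bf k}\psi$ and closing with Proposition~\ref{ellipticprop}. For Step 1 I would integrate the divergence identity~\eqref{commutedenergyidentity} over $\mathcal{R}(\tau_0,\tau_1,v)$ to obtain the commuted analogue of~\eqref{energyidentityhigh}. The boundary contributions on $\Sigma(\tau)$, $\underline{C}_v$ and $\mathcal{S}(\tau_0,\tau_1)$ are precisely the nonnegative fraktur fluxes~\eqref{frakturepknowenergies}. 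The commutator term $\int |\Hapk[\psi]\cdot\{[\mathfrak{D}^{\bf k},\Box_{g_0}]\psi\}|$ is absorbed via Corollary~\ref{globalonesided}: half of the spacetime bulk $\int\Kpk[\psi]$ is consumed while the remainder stays coercive and produces, via~\eqref{firstrelatsthree}, the ${}^\rho\Epminusonek'$ and ${}^{\tilde\rho}\Ezerominusthreeminusdeltakminusone'$ bulks on the left-hand side, modulo the $\tilde{A}\xi\psi^2$ defect which becomes the $A\int\Eerrorkminusone'$ term on the right. The inhomogeneity $\Hapk[\psi]\cdot\{\mathfrak{D}^{\bf k}F\}$ passes directly into the third term of~\eqref{rpidentherefluxescommuted}, and the $(\widetilde{\mathfrak{D}}^{\bf k}F)^2$ source term on the right is generated by Corollary~\ref{globalonesided} combined with the spacetime elliptic estimate~\eqref{ellipticestimnoYspacetime}.

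\textbf{Step 2.} For~\eqref{rpidenbboxcommuted} I would apply the uncommuted estimate~\eqref{rpidenthere} of Proposition~\ref{finaluncommuted} to each $\mathfrak{D}^{\bf k}\psi$, viewed as a solution of
\[
\Box_{g_0}(\mathfrak{D}^{\bf k}\psi) = [\mathfrak{D}^{\bf k},\Box_{g_0}]\psi + \mathfrak{D}^{\bf k}F,
\]
and sum over $|{\bf k}|\le k$ using the weights $\sigma({\bf k},k)\varkappa({\bf k},k)$ fixed in Proposition~\ref{finalcoercivityprop}. The commutator contributions on the right-hand side are controlled by Proposition~\ref{finalcoercivityprop}, the far-away assumption~\eqref{commutatorerrorformula} and---for the remaining lower-order pieces---the spacetime elliptic estimate~\eqref{ellipticestimnoYspacetime}, producing the $\sum_{|{\bf k}|\le k}(\mathfrak{D}^{\bf k}F)^2$ and $\sum_{|{\bf k}|\le k-1}(\widetilde{\mathfrak{D}}^{\bf k}F)^2$ spacetime source terms. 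To upgrade from control of $\mathfrak{D}^{\bf k}\psi$ to control of $\widetilde{\mathfrak{D}}^{\bf k}\psi$ as required by the definitions~\eqref{coeffshereandkorder}--\eqref{coeffsheretootoosmallerkorder}, I would apply the slicewise elliptic estimate~\eqref{ellipticestimnoY} of Proposition~\ref{ellipticprop}, which also generates the boundary term $\int_{\Sigma(\tau)\cap\{r\le R\}}\sum_{|{\bf k}|\le k-1}(\widetilde{\mathfrak{D}}^{\bf k}F)^2$ on the right. Adding a large multiple of this to Step~1 gives the combined estimate; the residual $A\int\Eerrorkminusone'$ is absorbed into the left-hand side $\int\Epminusonekminusone'$ via the glue relation~\eqref{glue} together with the trivial bound $\Ezerominusoneminusdeltakminustwo' \lesssim \Epminusonekminusone'$. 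The $p=0$ case is handled identically except that $r^{-1-\delta}$ replaces $r^{p-1}=r^{-1}$ throughout, as forced by the $p=0$ branch of~\eqref{insymbolsiiwithpweightK}.

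\textbf{Main obstacle.} The principal technical difficulty is the bookkeeping of commutator errors across the three regions on which $\mathfrak{D}^{\bf k}$ behaves very differently. In the interior/near-horizon region $r_0\le r\le r_1(k)$, the $Y$-commutation produces a good sign at $\mathcal{H}^+$ via Corollary~\ref{corollaryofnotes} but generates lower-order terms that only the $\varsigma$-enhanced coercivity of Proposition~\ref{enhancedinteriorprop} can absorb. In the intermediate region $r_1(k)\le r\le R_k$, the family $\mathfrak{D}$ fails to span the tangent space, and one must exchange derivative types via the elliptic estimate~\eqref{ellipticestimnoYspacetime} at the cost of lower-order $F$ terms. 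In the asymptotic region $r\ge R_k$, the pointwise decay assumption~\eqref{commutatorerrorformula} delivers direct absorption. Fortunately these three regions have already been reconciled through the carefully coordinated weight hierarchy $\varkappa_0(k)\gg\varsigma(k)\gg 1$ and $\sigma_{12}(k)\gg 1$ chosen in Proposition~\ref{finalcoercivityprop}, so what remains is essentially verifying that after summation the constants match those claimed in the statement.
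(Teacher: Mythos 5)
Your Step~1 follows the paper closely: integrate the divergence identity for $\Jpk$, use Corollary~\ref{globalonesided} to absorb commutator errors into half the bulk $\Kpk$, and read off the nonnegative fraktur boundary terms and the $\rho$-degenerate bulks via Proposition~\ref{somerelationshere}. That part is fine.

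Your Step~2, however, has a genuine gap. You propose to apply the black box~\eqref{rpidenthere} directly to each $\mathfrak{D}^{\bf k}\psi$, control the resulting commutator errors via Proposition~\ref{finalcoercivityprop} and~\eqref{commutatorerrorformula}, and then absorb the residual $A\int\Eerrorkminusone'$ by ``adding a large multiple'' and invoking the glue relation~\eqref{glue}. Neither step is available as written. First, Proposition~\ref{finalcoercivityprop} and Corollary~\ref{globalonesided} bound the $\Hapk$-weighted commutator terms against $\Kpk$, and their smallness relies essentially on the bespoke structure of the $\Jpk$ currents: the redshift good-sign of Corollary~\ref{corollaryofnotes}, the $\varsigma$-enhanced interior positivity of Proposition~\ref{enhancedinteriorprop}, and the large hierarchy $\varkappa_0(k)\gg\varsigma(k)$, $\sigma_{12}(k)\gg 1$ chosen precisely so that the $\frac13$-fractions of $\Kpk$ close the estimate. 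None of this transfers to the black box multiplier $\tilde V_p$ of~\eqref{rpidenthere}; the black box commutator error is, at best, bounded by a \emph{large} constant times the $\Jpk$ bulk $\int{}^\rho\Epminusonek'$, not by a small fraction of anything the black box itself produces. Indeed the paper's Remark~\ref{remarkforseminlinear} is explicit that the route via~\eqref{rpidentherefluxescommuted} was taken precisely because the commutation assumptions are stated for the global currents, not the black box.

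Second, the ``add a large multiple and glue'' absorption is circular. Schematically, Step~1 gives $X_1+cY_1\le D_1 + A\int\Eerror$ with $Y_1={}^\rho\Epminusonek'$-type bulk, while your black box application gives $X_2+cY_2\lesssim D_2+ C''\,Y_1$ where $C''$ is the (large) commutator constant. Via glue, $\Eerror\lesssim Y_2$. To combine you need an $M$ with $McY_2$ dominating $AC'Y_2$ on the one hand, but $cY_1$ dominating $MC''Y_1$ on the other; this forces $AC'/c < M < c/C''$, which fails whenever $A C' C''\gtrsim c^2$, and $A$ is a priori large. There is no freedom to make $C''$ small because it comes from the intermediate region $r_1<r<R_k$, where the far-away assumption~\eqref{commutatorerrorformula} does not apply.

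The step you are missing is the paper's Killing-commutation intermediary. On $\mathrm{supp}(\xi)$ the only surviving $\mathfrak{D}$'s are $T$ and $\upsilon\Omega_1$, which are Killing. Applying the black box~\eqref{rpidenthere} to $T^{\tilde k}\psi$ and $\upsilon\Omega_1^{\tilde k}\psi$ therefore produces \emph{no} commutator error whatsoever, and hence gives control of $\int\Eminusone'[T^{\tilde k}\psi]$ with $\mathcal{E}(\tau_0)$ and $F$-data alone. Since $\Eerrorkminusone'\lesssim\sum_{\tilde k\le k}\Eminusone'[T^{\tilde k}\psi]+\Eminusone'[\upsilon\Omega_1^{\tilde k}\psi]$, this breaks the circle: $A\int\Eerror$ is bounded independently, Step~1 then yields full control of the ${}^\rho\Epminusonek'$ and $\Ezerominusthreeminusdeltakminusone'$ bulks, and only then is the black box applied to general $\mathfrak{D}^{\bf k}\psi$, with the commutator terms absorbed into the already-controlled $\Jpk$ bulks. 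Your elliptic upgrades to $\widetilde{\mathfrak{D}}$ energies at the end are then as in the paper.
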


\begin{remark}
\label{theextratermremarkfirst}
In the case where we replace the middle term of~\eqref{inhomogeneous} 
with~\eqref{replacedterm}, we 
should add 
\[
\sum_{|{\bf k}|\le k} \sqrt{
\int_{\mathcal{R}(\tau_0,\tau)\cap \{ r\le R \}} |L{\mathfrak{D}}^{\bf k}\psi|^2+|\underline L{\mathfrak{D}}^{\bf k}\psi|^2 +|\slashed\nabla{\mathfrak{D}}^{\bf k}\psi|^2 
+r^{-2}|{\mathfrak{D}}^{\bf k}\psi|^2}
\sqrt{\int_{\mathcal{R}(\tau_0,\tau_1)\cap \{ r\le R \}} ({\mathfrak{D}}^{\bf k}F)^2 }
\]
 to the right hand side of~\eqref{rpidenbboxcommuted}.
\end{remark}
 
\begin{proof}
We first prove~\eqref{rpidentherefluxescommuted}.
This  follows from~\eqref{globalidentone}
and applying
Proposition~\ref{finalcoercivityprop} (in the form of Corollary~\ref{globalonesided})
and the properties of Section~\ref{higherorderenergynot}.

To prove~\eqref{rpidenbboxcommuted}, let us first
commute the equation by $T^{\tilde k}$ (and $\Omega_1^{\tilde{k}}$, if assumed Killing) 
and apply the black box estimate~\eqref{rpidenthere} to $T^{\tilde k}\psi$ (and $\Omega_1^{\tilde k}\psi$)
for $\tilde{k}\le k$.
This gives
\begin{equation}
\begin{aligned}
\label{rpidenthereTk}
\sup_v \Fp[T^{\tilde k}\psi] (v,\tau_0,\tau)&+  \Ep[T^{\tilde k}\psi](\tau) + \Ep_{\mathcal{S}}[T^{\tilde k}\psi](\tau) 
+\int_{\tau_0}^{\tau}{}^\chi\Epminusone'[T^{\tilde k}\psi](\tau')
d\tau + \int_{\tau_0}^{\tau} \Eminusone'[T^{\tilde k}\psi] (\tau') d\tau' \\
&\lesssim \Ep[T^{\tilde k}\psi](\tau_0)+  \int_{\mathcal{R}(\tau_0,\tau)}
\left( |r^pr^{-1} L (rT^{\tilde k}\psi)|+|\tilde{V}_p^\mu\partial_\mu T^{\tilde k}\psi |+|\tilde{w}_p\psi| \right) |T^{\tilde k}F|\\
&\qquad
+\int_{\mathcal{R}(\tau_0,\tau)}(T^{\tilde k}F)^2,
\end{aligned}
\end{equation} 
and similarly with $\Omega_1^{\tilde{k}}\psi$.

Now note that in the support of $\xi$, all $\mathfrak{D}$ vanish except for $T$ (and $\Omega_1$), whence we
manifestly have 
\[
  A \int_{\tau_0}^{\tau}\Eerrorkminusone'(\tau')d\tau'
   \lesssim \sum_{\tilde k\le k} \int_{\tau_0}^{\tau}\Eminusone'[T^{\tilde k}\psi](\tau')+\Eminusone'[\upsilon\Omega_1^{\tilde k}\psi](\tau'),
  \]
where we recall that $\upsilon=1$ only if $\Omega_1$ is Killing.
It follows that 
the left hand side of~\eqref{rpidentherefluxescommuted} is bounded by the right hand side 
of~\eqref{rpidenbboxcommuted}.

Now apply  the black box estimate~\eqref{rpidenthere} to $\mathfrak{D}^{\bf k}\psi$, and note that
we can bound the terms arising from $[\Box_{g_0} , \mathfrak{D}^{\bf k}]\psi$ by 
 $c \int_{\tau_0}^{\tau}{}^\rho\Epminusonek'(\tau') +\Ezerominusthreeminusdeltakminusone' (\tau') d\tau'$ 
and the spacetime term involving $\widetilde{\mathcal{D}}^{\bf k}F$ 
on the right hand side of~\eqref{rpidenbboxcommuted}.

We thus have that~\eqref{rpidenthereTk} holds with $\mathfrak{D}^{\bf k}\psi$ in place of $T^k\psi$.

We  apply our elliptic estimate of Proposition~\ref{somerelationshere} to bound~$\Epk(\tau)$ from $\Efancypk(\tau)$,
generating the last term on the right hand side of~\eqref{rpidenbboxcommuted}.

Finally, by~\eqref{ellipticestimnorestriction} of Proposition~\ref{ellipticprop}, we may estimate
\[
 \int_{\tau_0}^{\tau}  \Epminusonekminusone'(\tau') d\tau' \lesssim  \int_{\tau_0}^{\tau}\left(  {}^\rho \Epminusonekminusone'(\tau') + \Ezerominusthreeminusdeltakminusone' (\tau')  \right)d\tau'
 + \int_{\mathcal{R}(\tau_0,\tau)\cap \{r\le R\}}\sum_{|{\bf k}| \le k-1} |{\widetilde{\mathfrak{D}}}^{\bf k} F|^2.
\]
\end{proof}

\begin{remark}
\label{remarkforseminlinear}
Note that for convenience we have used~\eqref{rpidentherefluxescommuted} to obtain~\eqref{rpidenbboxcommuted}. 
This was because
our precise commutation assumptions were stated with respect to the global currents.
If one does not assume estimate~\eqref{unifiedhere}, and thus one does not have~\eqref{rpidentherefluxescommuted},
one may still obtain~\eqref{rpidenbboxcommuted} under the assumption of asymptotic flatness of Sections~\ref{rpsection} 
and~\ref{farawaycommutationerror},
where we insert simply the far-away currents $\Jp_{\rm far}$, $\Kp_{\rm far}$ in~\eqref{commutatorerrorformula}. 
(We also use that, from the assumptions of Section~\ref{backgroundgeom}, we can still define currents as in 
Section~\ref{enhanced} giving enhanced positivity
near $\mathcal{H}^+$ and in the black hole interior.)
Thus, in particular, estimate~\eqref{rpidenbboxcommuted} holds in the Kerr case in the full subextremal range $|a|<M$.
\end{remark}

\subsubsection{Sobolev inequalities and interpolation of $p$-weighted energies}
We end this section recording two easy statements about the energies we have defined.

In anticipation of studying nonlinear equations, one will need to estimate lower order pointwise quantities from higher order energies by
Sobolev inequalities. 
 We have the following

\begin{proposition}
\label{sobolevforfunctions}
Let $\psi$ be a smooth function on a neighbourhood of $\Sigma(\tau)$.
Then  we have
\begin{equation}
\label{simplesobolev}
\sum_{|{\bf k}|\le k-3}
\sup_{x\in \Sigma(\tau) \cap \{r \le R\} }(\widetilde{\mathfrak{D}}^k \psi(x))^2 \lesssim
\min\{ 
\Ezerominusoneminusdeltak'(\tau), \Ezerok(\tau) \}.
\end{equation}
\end{proposition}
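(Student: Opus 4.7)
The plan is to apply the standard Sobolev embedding $H^2\hookrightarrow L^\infty$ on the three-dimensional hypersurface $\Sigma(\tau)$ localised to $\{r\le R\}$, exploiting that the collection $\widetilde{\mathfrak{D}}=\{L,\underline{L},\Omega_1,\ldots,\Omega_6\}$ globally spans the tangent space of $\mathcal{M}$ by the construction in Section~\ref{backgroundgeom}. The reason for the loss of exactly three derivatives in the statement (from $k$ down to $k-3$) is that the embedding $H^s\hookrightarrow L^\infty$ in dimension three requires $s>3/2$, so it suffices to have two more derivatives in $L^2$ than the order we wish to control pointwise, with one derivative to spare ($k-1<k$).

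First I would fix $|{\bf k}|\le k-3$ and apply Sobolev embedding on the compact three-dimensional set $\Sigma(\tau)\cap\{r\le R+1\}$ (this slight enlargement avoids any subtlety at the corner of $\Sigma(\tau)\cap\{r\le R\}$; it is harmless since $\psi$ is assumed smooth on a neighbourhood of $\Sigma(\tau)$). This yields
\[
\sup_{x\in \Sigma(\tau)\cap\{r\le R\}}|\widetilde{\mathfrak{D}}^{\bf k}\psi(x)|^2\lesssim \sum_{|\beta|\le 2}\int_{\Sigma(\tau)\cap\{r\le R+1\}}|Y^\beta\widetilde{\mathfrak{D}}^{\bf k}\psi|^2,
\]
where $Y_1,Y_2,Y_3$ is any smooth local frame on $\Sigma(\tau)$. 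Since the vector fields in $\widetilde{\mathfrak{D}}$ span $T\mathcal{M}$ pointwise, each $Y_j$ can be written as a bounded smooth linear combination of the $\widetilde{\mathfrak{D}}_i$ (evaluated along $\Sigma(\tau)$); applying this and commuting vector fields at the cost of lower-order terms, $Y^\beta\widetilde{\mathfrak{D}}^{\bf k}\psi$ may be expressed as a sum (with bounded coefficients) of $\widetilde{\mathfrak{D}}^{{\bf k}'}\psi$ for $|{\bf k}'|\le|\beta|+|{\bf k}|\le 2+(k-3)=k-1$. Hence
\[
\sup_{x\in \Sigma(\tau)\cap\{r\le R\}}|\widetilde{\mathfrak{D}}^{\bf k}\psi(x)|^2\lesssim\sum_{|{\bf k}'|\le k-1}\int_{\Sigma(\tau)\cap\{r\le R+1\}}|\widetilde{\mathfrak{D}}^{{\bf k}'}\psi|^2.
\]

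Finally, I would bound this by each of the two candidate energies. On the compact region $\{r\le R+1\}$ we have $r_0\le r\le R+1$, so the weights $r^{-1-\delta},r^{-3-\delta}$ in $\Ezerominusoneminusdeltak'$ and the weight $r^{-2}$ in $\Ezerok$ are pinched between positive constants. Together with the fact that $L,\underline{L},\Omega_1,\ldots,\Omega_6$ span, it follows that both $\Ezerominusoneminusdeltak'(\tau)$ and $\Ezerok(\tau)$ dominate $\sum_{|{\bf k}'|\le k}\int_{\Sigma(\tau)\cap\{r\le R+1\}}|\widetilde{\mathfrak{D}}^{{\bf k}'}\psi|^2$; for $\Ezerok$ one uses the indicator $\iota_{r\le R}$ to include $\underline{L}\widetilde{\mathfrak{D}}^{\bf k}\psi$, and the zeroth-order $r^{-2}(\widetilde{\mathfrak{D}}^{\bf k}\psi)^2$ contribution. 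Since only $k-1<k$ derivatives of $\psi$ appear in the middle expression above, the minimum may be taken, yielding the claim.

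I do not anticipate any genuine obstacle; this is a routine Sobolev inequality, and the mild care required is only in handling the corner/boundary of $\Sigma(\tau)\cap\{r\le R\}$ and in translating between the intrinsic frame on $\Sigma(\tau)$ and the ambient spanning set $\widetilde{\mathfrak{D}}$, both of which are standard.
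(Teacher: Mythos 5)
Your proof is correct and follows essentially the same route as the paper, which simply applies the standard Sobolev embedding on the spacelike region $\Sigma(\tau)\cap\{r\le R\}$ and notes that both energies restricted to $\{r\le R\}$ dominate the relevant $H^2$-norm of $\widetilde{\mathfrak{D}}^{\bf k}\psi$; the verification that both energies control $\sum_{|{\bf k}'|\le k}\int(\widetilde{\mathfrak{D}}^{{\bf k}'}\psi)^2$ there (via the zeroth-order terms $r^{-2}(\widetilde{\mathfrak{D}}^{\bf k}\psi)^2$ and $r^{-3-\delta}(\widetilde{\mathfrak{D}}^{\bf k}\psi)^2$, whose weights are bounded below on a compact $r$-range) is exactly the argument. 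One small remark: your enlargement to $\{r\le R+1\}$ in order to sidestep "the corner" actually does the opposite of what you intend -- it brings the genuine corner of $\Sigma(\tau)$ at $r=R$ (where the spacelike piece meets the outgoing null piece) into the interior of the Sobolev domain, whereas staying on $\{r\le R\}$, as the paper does, keeps the domain a smooth spacelike hypersurface-with-boundary; either version works since Sobolev embeddings hold on Lipschitz domains and everything is compact, but the non-enlarged version is the cleaner one.
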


\begin{proof}
We note that the left hand side of~\eqref{simplesobolev} is in fact bounded by the energy restricted to $\Sigma\cap \{r\le R\}$, which is why
we may choose either of the two quantities on the right hand side~\eqref{simplesobolev}.
\end{proof}

Weighted Sobolev inequalities will also hold globally on $\Sigma(\tau)$, and in practice these are used to estimate nonlinearities in the region near infinity. 
Because this use is  incorporated in our assumption on the null condition (see already Section~\ref{formofnullcondsec}), we 
shall not need to state such inequalities, although, in practice, 
they will appear in the context of verifying the assumptions of 
Section~\ref{formofnullcondsec}. See already Appendix~\ref{nonlineartermsatinf}.

Finally, we have the following easy 
interpolation result:
\begin{proposition}
\label{interpolationprop}
For $\delta$ as fixed in~\eqref{fixeddelta}, one has 
the following interpolation inequalities
\begin{align}
\label{interpolationstatementnewback}
\Eonek (\tau)
&\lesssim 
  \left( \, \, \Eoneminusdelk  (\tau)\right)^{1-\delta} \left(\, \, \Etwominusdelk (\tau)\right)^{\delta},\\
  \label{interpolationstatementbulk}
 \Edeltaminusonek' (\tau) &\lesssim  \left( \, \, \Ezerominusoneminusdeltak'(\tau) \right)^{\frac{1-\delta}{1+\delta}} \left( \, \, \Ezerok'(\tau) \right)^{\frac{2\delta}{1+\delta}},\\
\label{interpolationwithk}
 \int_{\tau_0}^{\tau_0+\frac12} \Ezerokminusone(\tau) &\lesssim \sqrt{ \int_{\tau_0}^{\tau_0+\frac12} \Ezerok(\tau)}\sqrt{ \int_{\tau_0}^{\tau_0+\frac12} \Ezerokminustwo(\tau)}.
\end{align}
\end{proposition}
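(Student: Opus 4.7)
All three inequalities follow from Hölder's inequality applied either pointwise to the integrands (for the $r^p$-weight interpolations) or through integration by parts (for the Sobolev-order interpolation). I'll sketch each in turn.

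For \eqref{interpolationstatementnewback}, the plan is to note that $p=1$ lies strictly between $p_{0}=1-\delta$ and $p_{2}=2-\delta$, with $1=(1-\delta)(1-\delta)+\delta(2-\delta)$, so that also $\tfrac12 = (1-\delta)\tfrac{1-\delta}{2}+\delta\tfrac{2-\delta}{2}$. For each term appearing in the integrand defining $\Eonek$ in $\{r\ge R\}$, I would split the weight accordingly: for example,
\[
r^{1}\bigl(r^{-1}L(r\widetilde{\mathfrak{D}}^{\mathbf{k}}\psi)\bigr)^{2}
= \bigl(r^{1-\delta}(\cdot)^{2}\bigr)^{1-\delta}\bigl(r^{2-\delta}(\cdot)^{2}\bigr)^{\delta},
\]
and similarly for $r^{1/2}(L\widetilde{\mathfrak{D}}^{\mathbf{k}}\psi)^{2}$ and $r^{-3/2}(\widetilde{\mathfrak{D}}^{\mathbf{k}}\psi)^{2}$. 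Then applying Hölder's inequality on $\Sigma(\tau)\cap\{r\ge R\}$ with conjugate exponents $\tfrac1{1-\delta}$ and $\tfrac1{\delta}$ gives the stated bound on the weighted portion of $\Eonek$. The unweighted portion $\Ezerok$ is common to both $\Eoneminusdelk$ and $\Etwominusdelk$ (up to equivalent constants) and hence is absorbed using the trivial bound $a \leq a^{1-\delta}a^{\delta}$.

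For \eqref{interpolationstatementbulk}, the key algebraic identities are
\[
\delta-1 = \tfrac{1-\delta}{1+\delta}(-1-\delta) + \tfrac{2\delta}{1+\delta}\cdot 0,
\qquad
\delta-3 = \tfrac{1-\delta}{1+\delta}(-3-\delta) + \tfrac{2\delta}{1+\delta}\cdot(-2),
\]
which express the weights appearing in $\Edeltaminusonek'$ as geometric means of weights appearing in $\Ezerominusoneminusdeltak'$ and $\Ezerok$ (I read $\Ezerok{}'$ as $\Ezerok$). The plan is to pointwise apply Hölder with exponents $\tfrac{1+\delta}{1-\delta}$ and $\tfrac{1+\delta}{2\delta}$ to each integrand, treating the term $(r^{-1}L(r\widetilde{\mathfrak{D}}^{\mathbf{k}}\psi))^{2}$ by the pointwise inequality $(r^{-1}L(r\phi))^{2}\lesssim (L\phi)^{2}+r^{-2}\phi^{2}$ so that it reduces to integrands already handled. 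The part of $\Edeltaminusonek'$ coming from $\Ezerominusoneminusdeltak'$ is bounded trivially by $(\Ezerominusoneminusdeltak')^{(1-\delta)/(1+\delta)}(\Ezerominusoneminusdeltak')^{2\delta/(1+\delta)}$, which is dominated by the right-hand side.

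For \eqref{interpolationwithk}, the plan is a Gagliardo--Nirenberg type interpolation between $H^{k-2}$ and $H^{k}$ to control $H^{k-1}$ at each fixed $\tau$, followed by Cauchy--Schwarz in $\tau$. Concretely, for any $|\mathbf{k}|\leq k-1$ and any first-order derivative $\widetilde{\mathfrak{D}}_{j}$, one integrates by parts on $\Sigma(\tau)$:
\[
\int_{\Sigma(\tau)}(\widetilde{\mathfrak{D}}_{j}\widetilde{\mathfrak{D}}^{\mathbf{k}}\psi)^{2}
= -\int_{\Sigma(\tau)}\widetilde{\mathfrak{D}}^{\mathbf{k}}\psi\cdot (\widetilde{\mathfrak{D}}_{j})^{*}\widetilde{\mathfrak{D}}_{j}\widetilde{\mathfrak{D}}^{\mathbf{k}}\psi
+\text{boundary},
\]
followed by Cauchy--Schwarz to obtain $\Ezerokminusone(\tau)\lesssim \sqrt{\Ezerok(\tau)}\sqrt{\Ezerokminustwo(\tau)}$; then Cauchy--Schwarz in $\tau\in[\tau_{0},\tau_{0}+\tfrac12]$ yields the claim. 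The main technical point will be handling the boundary terms arising from integration by parts on $\Sigma(\tau)$: at $\mathcal{S}$ and at $\partial(\Sigma(\tau)\cap\{r\le R\})\cap \{r=R\}$ these are controlled by the definition of $\Ezerok$, while at infinity (on the null portion $\{r\ge R\}$) one can instead carry out the integration by parts spacetime-wise over $\mathcal{R}(\tau_{0},\tau_{0}+\tfrac12)$ using the coarea formula, exchanging one $L$-derivative against one $\underline{L}$-derivative via the wave operator only in the region $r\le R$ where all derivatives are present in $\Ezerok$, and using only the genuinely tangential $L$ and $\slashed{\nabla}$ derivatives on the null portion. The lower-order commutator errors generated when $\widetilde{\mathfrak{D}}_{j}$ is moved past $\widetilde{\mathfrak{D}}^{\mathbf{k}}$ contribute terms of lower order in $|\mathbf{k}|$ and can be absorbed inductively. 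This boundary/commutator bookkeeping is the only mildly non-routine step; the algebraic heart of the argument is pointwise Hölder in the first two cases and Cauchy--Schwarz in the third.
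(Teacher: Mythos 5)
Your treatment of \eqref{interpolationstatementnewback} is correct: applying H\"older term by term to the three weighted summands of the $r\ge R$ integral, with the exponent split coming from $1=(1-\delta)^{2}+\delta(2-\delta)$ (the same identity gives the $r^{p/2}$ and $r^{p/2-2}$ weights), together with the trivial bound on the common part $\Ezerok$, is precisely the intended elementary argument.

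For \eqref{interpolationstatementbulk}, however, the misreading you flag in parentheses --- taking $\Ezerok'$ to be the unprimed flux $\Ezerok$ --- is not innocent and in fact breaks your argument. In the paper's notation $\Ezerok'$ is the bulk energy $\Epminusonek'$ of \eqref{coeffsheretootoosmallerkorder} evaluated at $p=1$ (so that the superscript $p-1$ equals $0$); by that definition $\Ezerok'$ contains $\Ezerominusoneminusdeltak'$ as a summand, so the domination $\Ezerominusoneminusdeltak'\le\Ezerok'$ that your ``trivial bound'' step needs holds automatically. With the flux $\Ezerok$ in its place the domination is false: $\Ezerominusoneminusdeltak'$ carries the term $r^{-1-\delta}(\underline{L}\widetilde{\mathfrak{D}}^{\bf k}\psi)^{2}$ over all of $\Sigma(\tau)$, whereas by \eqref{firstdefk} the flux controls $(\underline{L}\widetilde{\mathfrak{D}}^{\bf k}\psi)^{2}$ only in $r\le R$ (note the factor $\iota_{r\le R}$), since for $r\ge R$ the hypersurface $\Sigma(\tau)$ is outgoing null and $\underline{L}$ is the transversal direction there. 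Once $\Ezerok'$ is read correctly as the $p=1$ bulk, the rest of your argument --- pointwise H\"older with conjugate exponents $\tfrac{1+\delta}{1-\delta}$, $\tfrac{1+\delta}{2\delta}$ and the reduction $(r^{-1}L(r\phi))^{2}\lesssim(L\phi)^{2}+r^{-2}\phi^{2}$ --- does close; the weight identities you list are correct, and the analogous identity for $|\slashed\nabla\widetilde{\mathfrak{D}}^{\bf k}\psi|^{2}$ also checks out.

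For \eqref{interpolationwithk}, your plan is in the right spirit, but the claim that the boundary terms from integration by parts on a fixed slice $\Sigma(\tau)$ are ``controlled by the definition of $\Ezerok$'' is not right as stated: those are integrals over codimension-two sets such as $\Sigma(\tau)\cap\mathcal{S}$ and $\Sigma(\tau)\cap\{r=R\}$, whereas $\Ezerok$ is an integral over the full hypersurface. In the end the inequality is nothing more than the standard two-sided Sobolev interpolation $\|\cdot\|^{2}_{H^{k-1}(\Omega)}\lesssim\|\cdot\|_{H^{k}(\Omega)}\|\cdot\|_{H^{k-2}(\Omega)}$ applied over the spacetime slab $\Omega=\mathcal{R}(\tau_{0},\tau_{0}+\tfrac12)$ (with the appropriate $r$-weights in $r\ge R$), for instance via an extension operator on the compact part $r\le R$; working on the slab rather than slice by slice avoids both the boundary-term problem and the commutator bookkeeping you mention.
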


\begin{proof} 
The proof of these inequalities is standard and is left to the reader.
\end{proof}

\section{Quasilinear equations: preliminaries, the null condition and local existence} 
\label{prelimsec}

We assume throughout that
$(\mathcal{M}, g_0)$ satisfies the assumptions of
Sections~\ref{backgroundgeom} and~\ref{waveequassump} (for cases (i), (ii) or (iii)).
We will introduce in this section the class of quasilinear equations to be considered in this 
paper, and derive some preliminary results which will be used in the proof of our main theorem.

\subsection{The class of equations}
\label{classstruggle}

\label{classofequations}
We will consider solutions $\psi$ to quasilinear equations  of the form
\begin{equation} 
\label{theequation}
\Box_{g(\psi, x)} \psi = N^{\mu\nu}(\psi,x)\partial_\mu\psi\,\partial_\nu \psi,
\end{equation}
where 
\begin{equation}
\label{thenonlinearityfunctions}
g: \mathbb R\times \mathcal{M} \to T^*\mathcal{M}\otimes T^*\mathcal{M}, \qquad
N:\mathbb R\times \mathcal{M}\to T \mathcal{M}\otimes T \mathcal{M}
\end{equation}
are such that $\pi\circ g(\psi,x)=x$, $\pi\circ N(\psi, x)=x$
where $\pi : T \mathcal{M}\otimes T \mathcal{M}\to \mathcal{M}$, $\pi : T^*\mathcal{M}\otimes T^*\mathcal{M}\to \mathcal{M}$
denote the canonical projections, and 
 such that 
 $g(0,x)=g_0(x)$ for all $x$ while $g(\cdot,x)=g_0(x)$ for $r(x)\ge R/2$, and $N$ and $g$ are smooth maps. 

We will assume moreover that, for each $k$,
$\partial_{{\bf x}}^{{\bf k}}\partial_\xi^s g^{\alpha\beta}(\xi, x)$ and 
$\partial_{{\bf x}}^{{\bf k} } \partial_\xi^s N^{\alpha\beta}(\xi, x)$ are
uniformly bounded for all $|{\bf k}|+s\le k$, all $r\le R$ and $|\xi|\le 1$.
Here $\partial_{{\bf x}}^{{\bf k}}$ denote multi-indices with respect to the ambient Cartesian coordinates
of Section~\ref{thatwhichunderlies}.

For $N^{\mu\nu}(\psi,x)$, we will eventually
need in addition to assume some version of the null condition.
We will only introduce this in Section~\ref{formofnullcondsec} (see already Assumption~\ref{nullcondassumphere}).
Let us note that our assumption on the support of $g-g_0$ is merely so as not to deal with
formulations of the null condition for the quasilinear part.

We will sometimes view the equation in~\eqref{theequation} as an inhomogeneous equation
on a fixed $g_0$ background, i.e.~we will write it as 
\begin{equation}
\label{equationwithloss}
\Box_{g_0} \psi = N^{\mu\nu}(\psi,x)\partial_\mu\psi\,\partial_\nu \psi + (\Box_{g_0}-\Box_{g(\psi,x)})\psi,
\end{equation}
and similarly its commuted versions.
We may thus apply estimates to the inhomogeneous equation
\begin{equation}
\label{inhomogeq}
\Box_{g_0} \psi = F.
\end{equation}

\subsection{Smallness parameters}
\label{explanofsmall}

Starting in this section, we shall introduce smallness parameters (i.e.~parameters related to making smallness assumptions on solutions), 
denoted using the symbol $\varepsilon$ and various subscripts, e.g.~$\varepsilon_{\rm prelim}$, $\varepsilon_{\rm local}$. 
Unless otherwise noted, these 
will  depend only on $(\mathcal{M},g_0)$, 
on the nonlinearities of~\eqref{theequation} defined by~\eqref{thenonlinearityfunctions}
and, in general, on $k$, if there is $k$ dependence in the statement.

When these smallness parameters depend on an additional quantity, this will be indicated in parenthesis, e.g.~$\hat\varepsilon_{\rm slab}(\alpha)$.

Note finally that our convention on constants denoted $C$, $c$, etc.~remains the same as set in Section~\ref{constantsandparameterssec}, i.e.~these will \underline{not} 
depend on~\eqref{thenonlinearityfunctions}.

\subsection{Energy currents for $\Box_{g(\psi,x)}$ and the stability of coercivity properties}
\label{thisisprimitive}

Let $\psi$  denote a solution of~\eqref{theequation}
on a domain $\mathcal{R}(\tau_0,\tau_1)$. 
Because we shall use energy identities connected with $\Box_{g(\psi,x)}$, we shall
require certain basic smallness  assumptions on $\psi$ which ensure that the causal nature
of relevant hypersurfaces is retained and that induced volume forms of $g$ and $g_0$
are comparable.

In the sections to follow,  the main smallness parameter we will consider will be  $\varepsilon_{\rm prelim}>0$.
We emphasise that according to our conventions of Section~\ref{explanofsmall},  $\varepsilon_{\rm prelim}>0$ will in general depend
on the nonlinearity~\eqref{thenonlinearityfunctions}  (and
 also on $k$). 
The parameter  $\varepsilon_{\rm prelim}>0$ can be taken  as fixed everywhere in the paper, but note that
its smallness  is constrained in multiple propositions whose preambles  refer to its existence.

As we shall see, the propositions in this section will always refer  to solutions  satisfying
\begin{equation}
\label{mostprimitive}
\sum_{|{\bf k}|\le 1}  (\widetilde{\mathfrak{D}}^{\bf k}  \psi)^2 \leq \sqrt{\varepsilon}
\end{equation}
in $r\le R$ for $0<\varepsilon\le \varepsilon_{\rm prelim}$. 
(We recall here that for $r(x)\ge R$, we have $g(\psi,x)=g_0$.)
Eventually,~\eqref{mostprimitive} will  be the consequence of a stronger estimate (see already~\eqref{basicbootstrap}).

\subsubsection{Currents for $\Box_{g(x,\psi)}\psi=F$ and the stability of coercivity properties}
\label{stabilityofenergyidentitiessec}

Let us for the moment consider more generally the equation
\begin{equation}
\label{BoxgF}
\Box_{g(\psi,x)}\psi =F
\end{equation}
for arbitrary $F$, where $\psi$ satisfies~\eqref{mostprimitive} for sufficiently small $\varepsilon$.

We may define now the currents
\begin{equation}
\label{psidepcur}
\Jpk[g(\psi,x),\psi], \qquad \Kpk[g(\psi,x),\psi],
\end{equation}
again by the expressions~\eqref{JpkdefKpkdef}, where the constituent~\eqref{generalJdef},~\eqref{generalKdef} 
are defined with the same quadruples~\eqref{thetriple} as before, 
but now with $g=g(\psi,x)$ replacing $g_0$. These currents
satisfy~\eqref{energyidentity},
with respect now to the normals and volume forms of the metric $g$, and where 
 $\Hapk[\psi]$ is defined  by~\eqref{generalHdef}. (Notice that the definition of $\Hapk[\psi]$  does not depend on the
metric.) In the region $r\ge R$ of course, all currents coincide with their $g_0$ versions,
since $g=g_0$ in this region.

We have the following
\begin{proposition}
\label{stabofenergyid}
There exists an $\varepsilon_{\rm prelim}>0$ such that
the following statement holds.

Let $\psi$ be a solution of~\eqref{BoxgF} in $\mathcal{R}(\tau_0,\tau_1)$ 
satisfying~\eqref{mostprimitive} for $0<\varepsilon\le \varepsilon_{\rm prelim}$. Then $x\mapsto g(\psi(x),x)$ defines a Lorentzian metric 
on $\mathcal{R}(\tau_0,\tau_1)$ and  the identity~~\eqref{energyidentity} holds in $\mathcal{R}(\tau_0,\tau_1,v)$, for all $v$ such that $\tau_1\le \tau(v)$,
where
the coefficients, normals and volume forms
are $\varepsilon^{1/4}$ close to those of the currents~\eqref{psidepcur} corresponding to $\Box_{g_0}\phi=F$.

In particular, there exist constants $C$, $c$, such that, for $p=0$ or $\delta\le p\le 2-\delta$, the corresponding 
coercivity properties~\eqref{insymbolsiiwithpweightJequiv} 
are retained for the currents~\eqref{psidepcur}, while~\eqref{insymbolsiiwithpweightK} is replaced by 
\begin{eqnarray}
\nonumber
\Kp[g(\psi,x),\psi] + \tilde{A}\xi(r)\psi^2 &\ge   c  r^{p-1} \rho(r) \left( ({L}\psi)^2 + |\slashed\nabla \psi|^2\right) 
 + c r^{-1-\delta}\rho (r) (\underline L\psi)^2 + c r^{-3} \tilde\rho(r) \psi^2\\
 \label{perturbedcoercive}
 &\qquad -C\varepsilon^{1/4}\,\iota_{\{r\le R\}\cap \{\rho \le C\varepsilon^{1/4} <\frac12\} } 
 \left( (L\psi)^2 +  (\underline L\psi)^2 + |\slashed\nabla \psi|^2 +\psi^2 \right) .
\end{eqnarray} 
In particular, we still have~\eqref{insymbolsiiwithpweightK}  
in $r\ge R$, and also in $\{ r \le R \} \cap \{ \rho \ge C\varepsilon^{1/4} \}.$

Given $\varsigma=\varsigma(k)$ as fixed previously, 
then the above applies also for the currents $\Jp_\varsigma[g(\psi,x),\psi]$, $\Kp_\varsigma[g(\psi,x),\psi]$ in the region $r\ge r_{\rm Killing}$, 
while in the region~$r\le r_1(\varsigma)$,
 the coercivity properties of Propositions~\ref{enhancedprop} and~\ref{enhancedinteriorprop} hold for these currents.
\end{proposition}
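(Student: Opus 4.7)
The plan is to reduce everything to an algebraic perturbation argument on a compact region. Since $g(\psi,x) = g_0(x)$ whenever $r(x) \geq R/2$, every one of the currents $\Jpk[g(\psi,x),\psi]$, $\Kpk[g(\psi,x),\psi]$, and their $\varsigma$-analogues agrees identically with the $g_0$-version on $r \geq R/2$. Thus the entire question reduces to the compact region $\{r \leq R/2\} \cap \mathcal{R}(\tau_0,\tau_1)$, where the smoothness of $\xi \mapsto g(\xi,x)$ (together with the smoothness assumed in Section~\ref{classstruggle}) and the pointwise bound~\eqref{mostprimitive} yield
\[
	\sup_{x \in \mathcal{R}(\tau_0,\tau_1) \cap \{r \le R\}} \Big( |g(\psi,x) - g_0(x)| + |\partial_{\mathbf{x}}(g(\psi,x) - g_0(x))| \Big) \lesssim \varepsilon^{1/4},
\]
once $\varepsilon_{\rm prelim}$ is taken small enough that $\sqrt{\varepsilon}^{1/2} \lesssim \varepsilon^{1/4}$ dominates.

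First I would verify that $g(\psi,x)$ is itself a Lorentzian metric with the same time orientation as $g_0$, and that the causal character of each of $\Sigma(\tau)$, $\underline{C}_v$ and $\mathcal{S}$, as well as their unit normals and induced volume forms, are all $\varepsilon^{1/4}$-close to the corresponding $g_0$-data. This is immediate from $C^0$-closeness, since being Lorentzian (resp.~timelike or spacelike) is an open condition. The identity~\eqref{energyidentity} for the currents~\eqref{psidepcur} then follows from Stokes' theorem applied with respect to $g(\psi,x)$ to the pointwise divergence relation~\eqref{dividentitynotintegrated}, which holds algebraically for any Lorentzian metric; one then replaces the $g$-normals and $g$-volume forms by their $g_0$-counterparts at the cost of $O(\varepsilon^{1/4})$ relative errors.

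Next I would check boundary coercivity. For each boundary of $\mathcal{R}(\tau_0,\tau_1,v)$, the flux $\Jp[g(\psi,x),\psi] \cdot n$ is a quadratic form in the $1$-jet of $\psi$ whose coefficients depend smoothly on $g^{\mu\nu}$, on $(V,w,q,\varpi)$, and on the unit normal. By~\eqref{insymbolsiiwithpweightJequiv} applied to $g_0$, this quadratic form is strictly coercive against the right-hand sides with fixed implicit constants. Strict coercivity being an open condition at the level of coefficients, the $O(\varepsilon^{1/4})$-perturbation of $g$ preserves both the coercivity and the upper bound, yielding~\eqref{insymbolsiiwithpweightJequiv} for the currents~\eqref{psidepcur}. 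The same argument, using that the weights $\sigma_{12}(k)$, $\varkappa_0(k)$, $\varsigma(k)$ of Section~\ref{mastercurrentsandfluxdef} have been fixed once and for all independently of $\psi$, transfers the conclusion to the higher-order currents $\Jpk$.

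The main—and only delicate—step is the bulk coercivity~\eqref{perturbedcoercive}. The deformation tensor $\pi^V[g(\psi,x)]$ differs from $\pi^V[g_0]$ by first derivatives of $g(\psi,x) - g_0$, contributing at worst $O(\varepsilon^{1/4})$ times $((L\psi)^2 + (\underline{L}\psi)^2 + |\slashed{\nabla}\psi|^2 + \psi^2)$; analogous statements hold for the lower-order terms in $K^{V,w,q}$. On the set $\{r \leq R\} \cap \{\rho(r) \geq C\varepsilon^{1/4}\}$, the background coercivity~\eqref{insymbolsiiwithpweightK} beats these errors (for $C$ chosen large relative to the implicit constants), so we retain the full positivity up to halving the constant $c$. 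On the complementary set $\{r \leq R\} \cap \{\rho(r) \leq C\varepsilon^{1/4}\}$, the background bulk term is too degenerate to absorb the perturbation, and the error is kept as the explicit negative term in~\eqref{perturbedcoercive}. The hardest part will be keeping track of the uniformity of these constants across $p \in [\delta, 2-\delta] \cup \{0\}$ and across the commutator order $k$—but since the currents of Section~\ref{rpsection} are built from a fixed finite list of quadruples $(\accentset{\scalebox{.6}{\mbox{\tiny $(p)$}}}{V}, \accentset{\scalebox{.6}{\mbox{\tiny $(p)$}}}{w}, \accentset{\scalebox{.6}{\mbox{\tiny $(p)$}}}{q}, \accentset{\scalebox{.6}{\mbox{\tiny $(p)$}}}{\varpi})$ depending continuously on $p$ and the parameter $\varsigma(k)$ is fixed, the uniformity is automatic on the compact parameter ranges involved. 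The enhanced-positivity statements of Propositions~\ref{enhancedprop} and~\ref{enhancedinteriorprop} for $\Jp_{\varsigma}$, $\Kp_{\varsigma}$ in $r \leq r_1(\varsigma)$ are again strict pointwise coercivity statements on a compact set, and thus survive the $O(\varepsilon^{1/4})$-perturbation by the same open-condition argument.
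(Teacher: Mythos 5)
Your proposal is correct and takes essentially the same approach as the paper, whose proof is simply "This is clear from the properties of tensor identities" with a remark about the Lorentzian condition; you have spelled out the reduction to $r\le R/2$, the open-condition argument for boundary coercivity, and the split by $\rho\gtrless C\varepsilon^{1/4}$ for the bulk, which is exactly what is implicit there.
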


\begin{proof}
This is clear from the properties of tensor identities. Note how  the smallness constraint on $\varepsilon_{\rm prelim}$  provided by this
proposition
depends of course on the map $g$ of~\eqref{thenonlinearityfunctions} and is needed even simply to ensure for $g$ to be Lorentzian and for the inverse metric to be well defined.
\end{proof}

Note that since  $\rho=\tilde\rho=1$  in case (i), the bound~\eqref{perturbedcoercive} implies
that the full coercivity applies in that case. (Note that if $\rho$ is a step function valued in $\{0,1\}$, then $\{\rho\ge C\varepsilon^{1/4}\}=\{\rho\ge \frac12\} =\{\rho=1\}$ and $\{\rho \le C\varepsilon^{1/4}\} =\{\rho =0\}$.)

\begin{corollary}
\label{refertothisnow}
Under the assumption of Proposition~\ref{stabofenergyid}, we have the coercivity statement 
\[
\Kpk[g(\psi,x),\psi] \ge \frac89 \Kpk[g_0,\psi]
\]
in $\{r\le r_2\}\cup \{r\ge R/4\}$  and an analogue of the coercivity statement of~\eqref{finalcommuter} holds  in the form
\begin{equation}
\begin{aligned}
  \label{finalcommuterperturbed}
\int_{\mathcal{R} (\tau_0,\tau_1)  \cap \{r\le r_2\} }  &
-  \Hapk[\psi]\cdot \{ [\mathfrak{D}^{\bf k},\Box_{g_0}]\psi\} 
+\int_{\mathcal{R} (\tau_0,\tau_1)  \cap \{r\ge r_2\} }  
| \Hapk[\psi]\cdot \{ [\mathfrak{D}^{\bf k},\Box_{g_0}]\psi\} |\\
 &\leq
 \int_{\mathcal{R} (\tau_0,\tau_1)\cap\{r\le r_2\} \cap  \{r\ge R/4\} }\frac23\Kpk[g(\psi,x), \psi]
  +C\sum_{|{\bf  k}|\le k-1} (\widetilde{\mathfrak{D}}^{\bf  k}(  F+ (\Box_{g_0}-\Box_{g(\psi,x)})\psi))^2.
 \end{aligned}
\end{equation}
We have moreover the following version of~\eqref{firstrelatsthree},
\begin{equation}
\begin{aligned}
\label{firstrelatsthreeperturbed}
\int_{\mathcal{R}(\tau_0,\tau_1)} \Kpk[g(\psi,x),\psi] +\sum_{|{\bf k}|\leq k} \tilde{A}\xi(r)(\mathfrak{D}^k\psi)^2 &\gtrsim \int_{\tau_0}^{\tau_1}   
{}^\rho \Epminusonek'(\tau')d\tau' + \int_{\tau_0}^{\tau_1} {}^{\tilde\rho} \, \Ezerominusthreeminusdeltakminusone' (\tau') d\tau' , \qquad & 2-\delta\ge p \ge \delta\\
\int_{\mathcal{R}(\tau_0,\tau_1)} \Kpk [g(\psi,x), \psi]  + \sum_{|{\bf k}|\leq k} \tilde{A}\xi(r)(\mathfrak{D}^k\psi)^2 &\gtrsim\int_{\tau_0}^{\tau_1}    
{}^\rho \Ezerominusoneminusdeltak'(\tau')d\tau' + \int_{\tau_0}^{\tau_1} {}^{\tilde\rho} \, \Ezerominusthreeminusdeltakminusone' (\tau') d\tau', \qquad  &p=0.
\end{aligned}
\end{equation}
\end{corollary}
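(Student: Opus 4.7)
The three assertions transfer facts already established for the unperturbed current $\Kpk[g_0,\psi]$ (namely Corollary~\ref{globalonesided} and the integrated coercivity~\eqref{firstrelatsthree} of Proposition~\ref{somerelationshere}) to the $\psi$-dependent current $\Kpk[g(\psi,x),\psi]$. The engine is Proposition~\ref{stabofenergyid}, which supplies both the modified pointwise coercivity~\eqref{perturbedcoercive} and the fact that the coefficients of $\Jp,\Kp$ for $g(\psi,x)$ are $\varepsilon^{1/4}$-close to those for $g_0$. Throughout, I would shrink $\varepsilon_{\rm prelim}$ as needed so that $C\varepsilon^{1/4}$ is small enough for the various absorptions.

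\emph{Claim 1.} The region $\{r\le r_2\}\cup\{r\ge R/4\}$ is precisely where $\rho=\tilde\rho=1$ by~\eqref{rhoxidef}, and where $\xi=0$ by~\eqref{xivanishing}. For $\varepsilon<\varepsilon_{\rm prelim}$ small enough that $C\varepsilon^{1/4}<1/2$, the indicator $\iota_{\{r\le R\}\cap\{\rho\le C\varepsilon^{1/4}<1/2\}}$ in the error term of~\eqref{perturbedcoercive} vanishes identically here, so $\Kp[g(\psi,x),\psi]$ enjoys the full coercive lower bound of $\Kp[g_0,\psi]$ with constants degraded only by a factor $1-O(\varepsilon^{1/4})$; equivalently, writing $\Kp[g(\psi,x),\psi]=\Kp[g_0,\psi]+O(\varepsilon^{1/4})\cdot(\mathrm{1\text{-}jet}^2+\psi^2)$ and using the $g_0$ coercivity to absorb the perturbation yields $\Kp[g(\psi,x),\psi]\ge \tfrac{8}{9}\Kp[g_0,\psi]$. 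The same argument is applied componentwise to $\Kp_{\varsigma(k)}[g(\psi,x),\mathfrak{D}^{\bf k}\psi]$ for $1\le|{\bf k}|\le k$, invoking the last paragraph of Proposition~\ref{stabofenergyid} together with Proposition~\ref{enhancedinteriorprop}, and the weighted sum in the definition~\eqref{JpkdefKpkdef} delivers the assertion for $\Kpk$.

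\emph{Claim 2.} Observe that $\psi$ also solves $\Box_{g_0}\psi=\tilde F$ with inhomogeneity $\tilde F := F + (\Box_{g_0}-\Box_{g(\psi,x)})\psi$. Applying the already-established linear Corollary~\ref{globalonesided} to $\psi$ with $\tilde F$ in place of $F$ bounds
\[
\text{LHS of \eqref{finalcommuterperturbed}} \le \int_{\mathcal{R}(\tau_0,\tau_1)\cap\{r\le r_2\}\cap\{r\ge R/4\}}\tfrac12\Kpk[g_0,\psi] \;+\; C\sum_{|{\bf k}|\le k-1}|\widetilde{\mathfrak{D}}^{\bf k}\tilde F|^2.
\]
Since the integration region $\{r\le r_2\}\cap\{r\ge R/4\}$ sits inside $\{r\le r_2\}\cup\{r\ge R/4\}$, Claim~1 yields pointwise $\tfrac12\Kpk[g_0,\psi]\le \tfrac{9}{16}\Kpk[g(\psi,x),\psi]<\tfrac{2}{3}\Kpk[g(\psi,x),\psi]$, and substituting produces~\eqref{finalcommuterperturbed}.

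\emph{Claim 3 and the main obstacle.} I would start from~\eqref{firstrelatsthree} and decompose $\Kpk[g(\psi,x),\psi]=\Kpk[g_0,\psi]+(\Kpk[g(\psi,x),\psi]-\Kpk[g_0,\psi])$, bounding the difference pointwise by $C\varepsilon^{1/4}$ times a quadratic expression in $\widetilde{\mathfrak{D}}^{\bf k'}\psi$, $|{\bf k}'|\le k+1$, supported in $r\le R/2$. The target integrands ${}^\rho\Epminusonek'$ and ${}^{\tilde\rho}\Ezerominusthreeminusdeltakminusone'$ are supported in $\{\rho=1\}$, $\{\tilde\rho=1\}$, both containing $\{r\le r_2\}\cup\{r\ge R/4\}$. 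On $\{\rho=1\}\cap\{r\le R/2\}$ the $g_0$ coercivity pointwise dominates the quadratic expression, so the $O(\varepsilon^{1/4})$ error is absorbed into a fraction of $\int\Kpk[g_0,\psi]$ for $\varepsilon_{\rm prelim}$ small. The main obstacle is absorbing the residual error on $\{\rho=0\}\cap\{r\le R/2\}$, a compact region in $(r_2,R/4)$ where the $g_0$ current produces no positive lower bound on derivatives: here one must exploit the $\tilde{A}\xi\psi^2$ term on $\{\chi=1\}$ and the global $\psi^2$-control coming from $\tilde\rho=1$ in cases~(i) and~(ii), combined with the elliptic estimate~\eqref{ellipticestimnorestriction} (translating $\widetilde{\mathfrak{D}}$-derivatives into $\mathfrak{D}$-derivatives plus $F$-terms) and a further shrinking of $\varepsilon_{\rm prelim}$. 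In case~(i), where $\rho\equiv 1$, this obstacle disappears entirely.
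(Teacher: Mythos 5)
Your treatment of the first two claims is correct and matches the paper's intent. For the pointwise bound $\Kpk[g(\psi,x),\psi]\ge\tfrac89\Kpk[g_0,\psi]$, you correctly note that on $\{r\le r_2\}\cup\{r\ge R/4\}$ one has $\rho=1$, so the indicator in the error term of~\eqref{perturbedcoercive} vanishes, and the $O(\varepsilon^{1/4})$ perturbation of the coefficients is absorbed into the full coercive bulk of~$\Kpk[g_0,\psi]$, componentwise in the sum~\eqref{JpkdefKpkdef}. For~\eqref{finalcommuterperturbed}, you correctly observe that the left hand side is a statement purely about $g_0$-commutators, so one applies Corollary~\ref{globalonesided} with $\tilde F=F+(\Box_{g_0}-\Box_{g(\psi,x)})\psi$ and then converts $\tfrac12\Kpk[g_0,\psi]$ to $\tfrac23\Kpk[g(\psi,x),\psi]$ via Claim~1 (indeed $\tfrac12\cdot\tfrac98=\tfrac9{16}<\tfrac23$). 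This is exactly the content of the paper's one-sentence proof.

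The difficulty you flag in Claim 3 is a genuine one, and your proposed resolution does not close it. On the gap region $\{\rho=0\}\cap\{r\le R/2\}$, the perturbation of the bulk current is $-C\varepsilon^{1/4}$ times a \emph{first-order} quadratic in $\widetilde{\mathfrak{D}}\psi$ (the last line of~\eqref{perturbedcoercive}), whereas the terms you propose to absorb it with --- $\tilde A\xi\psi^2$ on $\{\chi=1\}$ and the $\tilde\rho\,r^{-3}\psi^2$ bulk --- are \emph{zeroth order} and cannot control this error. The elliptic estimate~\eqref{ellipticestimnorestriction} does not rescue the argument either: it requires integration over hypersurfaces or spacetime regions and produces $\widetilde{\mathfrak{D}}$-derivatives controlled by $\mathfrak{D}$-derivatives of the \emph{same} order, so it cannot manufacture a smallness factor that permits a pointwise absorption on $\{\rho=0\}$ from only $\{\rho=1\}$ data. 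Read literally, therefore, the estimate~\eqref{firstrelatsthreeperturbed} does not follow pointwise the way~\eqref{firstrelatsthree} does, and the residual first-order error on the gap region is not absorbable into the left hand side. In fact, the paper's own use of the perturbed coercivity --- in the proof of Proposition~\ref{higherorderestimquasi} --- does not claim absorption there: the error is carried explicitly on the right hand side as the boxed term on line~\eqref{failureofcoercivity}, which is supported exactly in $\{r\le R\}\cap\{\rho\le C\sqrt\varepsilon\}$ and weighted by a smallness factor that is later controlled via Proposition~\ref{nearprop} and the bootstrap. So the correct reading is that~\eqref{firstrelatsthreeperturbed} holds modulo precisely such an error term, and that is how it is deployed. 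Your honest flagging of the obstacle is therefore the right instinct; the miss is in suggesting it can be resolved by absorption rather than by carrying the error to the estimate where it is controlled by the nonlinear smallness.
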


\begin{proof}
Note that we have retained $g_0$ on the left hand side of~\eqref{finalcommuterperturbed}, 
so this is simply a statement about the stability of coercivity properties of the first term on the right hand side.
\end{proof}

\subsubsection{Relations of energy fluxes} 
In view of the above we will define a new set of energy quantities, $\Ffancyzerok(g)$, $\Efancypk(g)$, 
etc., defined in parallel with those of $\Ffancyzerok$, $\Efancypk$ 
of Section~\ref{higherorderenergynot} (to be denoted below as  $\Ffancyzerok(g_0)$, $\Efancypk(g_0)$), 
but where the flux is defined
with respect to the divergence identity with respect to $g=g(\psi,x)$, i.e.~we define
\begin{equation}
\begin{aligned}
\label{newfrakturenergies}
\Efancypk(g, \tau):= \int_{\Sigma(\tau)}  \Jpk_\mu[g, \psi] n(g)_{\Sigma(\tau)}^\mu , \quad
\Efancypk_{\mathcal{S}}(g, \tau):= \int_{\mathcal{S}}  \Jpk_\mu[g, \psi] n(g)_{\mathcal{S}}^\mu , \\
\Ffancypk(g, v, \tau_0,\tau) :=\int_{\underline{C}_v\cap \mathcal{R}(\tau_0, \tau)} \Jpk_\mu[g, \psi] n(g)_{\underline{C}_v}^\mu, \tau_0\le \tau \le \tau(v)
\end{aligned}
\end{equation}
where the omitted volume forms above are here understood with respect to $g=g(\psi,x)$.

\begin{corollary}
We have that under the assumptions of Proposition~\ref{stabofenergyid},
for all $\tau_0\le \tau\le \tau_1$, $v$ such that $\tau_1\le \tau(v)$,
\[
\qquad \Efancypk(g,\tau)\sim \Efancypk(g_0,\tau) \lesssim \Epk(\tau),
\qquad \Efancyzero_{\mathcal{S}}(g, \tau) = \Efancyp_{\mathcal{S}}(g, \tau)\sim \Efancypk_{\mathcal{S}}(g_0, \tau)
 =\Efancyzerok_{\mathcal{S}}(g_0, \tau) \sim \Ezerok_{\mathcal{S}}(\tau),
\]
\[
\Ffancypk(g,v, \tau_0, \tau)= \Ffancypk(g_0,v,\tau_0,\tau) \sim \Fpk(v,\tau_0,\tau),
\]
in fact
\[
\Efancyzerok(g) \leq (1+C\varepsilon^{\frac14})  \Efancyzerok (g_0) , \qquad
\Efancyzerok(g_0)  \leq (1+C\varepsilon^{\frac14})  \Efancyzerok(g) ,
\]
\begin{equation}
\label{comparethemlater}
\Epk(\tau) \lesssim \Efancypk[g](\tau) + \int_{\Sigma(\tau)\cap \{r_1 \le r  \le R\} } \sum_{|{\bf  k}|\le k-1}
(\widetilde{\mathfrak{D}}^{\bf  k} (F+ (\Box_{g_0}-\Box_{g(\psi,x)})\psi))^2.
\end{equation}
\end{corollary}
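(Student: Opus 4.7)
\emph{Plan.} The decisive structural observation is that the quasilinear modification of the metric is localised: the hypothesis $g(\psi,x)=g_0$ for $r \ge R/2$ means that every object built covariantly from $g$ (the current $\Jpk[g,\psi]$ together with the normals and induced volume forms of the various hypersurfaces) coincides \emph{identically} with its $g_0$-counterpart in $\{r \ge R/2\}$. Since $\underline{C}_v \subset \{r \ge R\}$ by the geometric assumptions of Section~\ref{backgroundgeom}, this already delivers the identity
\[
\Ffancypk(g, v, \tau_0,\tau) = \Ffancypk(g_0, v, \tau_0,\tau) ,
\]
and the equivalence with $\Fpk(v,\tau_0,\tau)$ is then precisely the content of~\eqref{firstrelatstwo}. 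All remaining comparisons are genuinely non-trivial only in the compact regions $\Sigma(\tau) \cap \{r \le R/2\}$ and $\mathcal{S}(\tau_0,\tau)$.

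\emph{Comparison of the $\Sigma(\tau)$ and $\mathcal{S}$ fluxes.} On these regions I would split the integrals into $\{r \ge R/2\}$ (where they agree exactly) and $\{r \le R/2\}$ (which contains $\mathcal{S}$ since $r_0 < R/2$). The quadruple $(V,w,q,\varpi)$ defining the currents is $T$-invariant and metric-independent; by Proposition~\ref{stabofenergyid} the coefficients of $\Jpk$, which are algebraic in $g$, together with the induced normals $n(g)$ and volume forms, all differ from their $g_0$-counterparts by at most $C\varepsilon^{1/4}$ pointwise. Integrating pointwise control of these differences and bounding the resulting integrand by the $g_0$-integrand produces
\[
\bigl| \Efancypk(g,\tau) - \Efancypk(g_0,\tau) \bigr| \le C \varepsilon^{1/4}\, \Efancypk(g_0,\tau),
\]
and similarly on $\mathcal{S}$. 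Taking $\varepsilon_{\rm prelim}$ small enough that $C\varepsilon^{1/4} \le \tfrac12$ gives both the two-sided equivalences and the sharper $(1+C\varepsilon^{1/4})$ inequalities. On $\mathcal{S}$, boundedness of $r$ makes the $p$-weight irrelevant, which gives the stated identity $\Efancyp_{\mathcal{S}}(g,\tau)=\Efancyzero_{\mathcal{S}}(g,\tau)$; the equivalence with $\Ezerok_{\mathcal{S}}(\tau)$ is then simply~\eqref{firstrelatstwo} combined with the comparison just obtained. The bound $\Efancypk(g_0,\tau) \lesssim \Epk(\tau)$ is the content of~\eqref{firstrelatsone}.

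\emph{Recovering $\Epk$ from $\Efancypk(g)$.} For the final inequality~\eqref{comparethemlater} the strategy is to apply the $g_0$-elliptic estimate~\eqref{witheliptone} to $\psi$ regarded as a solution of the inhomogeneous equation
\[
\Box_{g_0}\psi = F + \bigl(\Box_{g_0} - \Box_{g(\psi,x)}\bigr)\psi,
\]
which gives
\[
\Epk(\tau) \lesssim \Efancypk(g_0,\tau) + \int_{\Sigma(\tau)\cap \{r\le R\}} \sum_{|{\bf k}|\le k-1} \bigl(\widetilde{\mathfrak{D}}^{\bf k}\bigl(F + (\Box_{g_0}-\Box_{g(\psi,x)})\psi\bigr)\bigr)^2 .
\]
One then uses the comparison of the previous paragraph to replace $\Efancypk(g_0,\tau)$ by $\Efancypk(g,\tau)$. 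To reduce the region of integration from $\{r\le R\}$ to $\{r_1 \le r\le R\}$, I would use that in $\{r \le r_1\}$ the collection $\mathfrak{D}$ already spans the tangent space (by the definition in Section~\ref{commutatordefsec} together with the assumption on $Y$), so any $\widetilde{\mathfrak{D}}^{\bf k}\psi$ there may be estimated pointwise by $\mathfrak{D}^{\bf k}\psi$; applying~\eqref{ellipticestimnorestriction} with $r'\le r_1$ then absorbs the corresponding contribution into $\Efancypk(g_0,\tau)$. The one technical point to watch is the matching of the two elliptic estimates~\eqref{ellipticestimnoY} and~\eqref{ellipticestimnorestriction} of Proposition~\ref{ellipticprop} so as to both bound $\widetilde{\mathfrak{D}}^{\bf k}\psi$ from $\mathfrak{D}^{\bf k}\psi$ in $\{r\le r_1\}$ and track the error $F$-terms supported in $\{r_1\le r\le R\}$; this is routine given the explicit structure of $\mathfrak{D}$ there.
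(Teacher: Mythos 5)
Your proof is correct and follows the argument the paper clearly intends (the corollary is stated without a written proof, being viewed as an immediate consequence of Proposition~\ref{stabofenergyid}, the definitions~\eqref{newfrakturenergies}, and the elliptic estimate~\eqref{witheliptone}). The structure you lay out is right: exact identity of the $\underline{C}_v$-fluxes because $g=g_0$ for $r\ge R/2$; pointwise $\varepsilon^{1/4}$-closeness of coefficients, normals and volume forms on the compact set $\{r\le R/2\}$ from Proposition~\ref{stabofenergyid}, with coercivity used to absorb the error; and then~\eqref{witheliptone} applied to $\psi$ viewed as a solution of $\Box_{g_0}\psi = F + (\Box_{g_0}-\Box_{g(\psi,x)})\psi$, followed by a swap of $\Efancypk(g_0,\tau)$ for $\Efancypk(g,\tau)$.

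Two small imprecisions. First, the exact identity $\Efancyp_{\mathcal{S}}(g,\tau)=\Efancyzero_{\mathcal{S}}(g,\tau)$ is not a consequence of $r$ being bounded on $\mathcal{S}$ (that would only give $\sim$): it holds because the $p$-dependent part of the quadruple in~\eqref{thetriple} carries the cutoff $\zeta(r)$, which vanishes identically on $\{r\le\tilde{R}\}\supset\mathcal{S}$, so the $p$-current and the $p=0$ current agree literally at $\mathcal{S}$; this is the content of~\eqref{btermsequiv}. Second, in the last step, invoking~\eqref{ellipticestimnorestriction} ``with $r'\le r_1$'' does not by itself shrink the domain of the $F$-error, since the $F$-integral there runs over all of $\{r\ge r'\}$. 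The correct mechanism is the one you in fact name: because $\mathfrak{D}$ spans the tangent space in $\{r\le r_1+(r_2-r_1)/4\}$ (and in $\{r\ge 3R/4\}$, where $\zeta=1$), the integrand of $\Epk$ is there pointwise controlled by the integrand of $\Efancypk$ with no elliptic estimate, and hence no $F$-error, needed at all; the estimate~\eqref{ellipticestimnoY} with $r'_-\ge r_1$ and $r''_+\le R$ is only required in the intermediate band, and it is this which produces the $F$-term supported in $\{r_1\le r\le R\}$.
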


As always, the significance of expressing the above with respect to the energies $\Efancypk(g)$, etc.,  is that the constants in our nonlinear
inequalities will be exactly~$1$.

Note that in all  integrals below with volume form omitted, we will continue to use the volume form induced by $g_0$ unless otherwise noted.
(We shall only use the volume form induced by $g(\psi,x)$ when applying the divergence identity involving~\eqref{newfrakturenergies}.
The corresponding volume forms are of course equivalent under the assumptions of Proposition~\ref{stabofenergyid},
but  again one must distinguish where appropriate so as to obtain the exact constant.)

\subsection{Higher order estimates for the quasilinear equation}
\label{hoeforquas}

Using the above, we  finally 
obtain the following:

\begin{proposition}
\label{higherorderestimquasi}
For all $k\ge 1$.
 There exists an $\varepsilon_{\rm prelim}>0$ such that
the following statement holds.

Let $0<\delta\le p \le 2-\delta$,
$\tau_0\le \tau_1$ and
let $\psi$ be a solution of~\eqref{theequation} in $\mathcal{R}(\tau_0,\tau_1)$ 
satisfying~\eqref{mostprimitive} 
for $0<\varepsilon\le \varepsilon_{\rm prelim}$.  Then for all $\tau\in [\tau_0,\tau_1]$, we have
\begin{align}
\nonumber
\sup_{v:\tau\le \tau(v)} &\Ffancypk(v,\tau_0,\tau),  \qquad   \Efancypk(\tau) + \Efancypk_{\mathcal{S}}(\tau_0,\tau) 
+c \int_{\tau_0}^{\tau}{}^\rho\Epminusonek'(\tau')
d\tau' +c \int_{\tau_0}^{\tau} {}^{\tilde\rho} \, \Ezerominusthreeminusdeltakminusone' (\tau') d\tau'\\
\nonumber
&\leq  \Efancypk(\tau_0)+  A \int_{\tau_0}^{\tau}\Eerrorkminusone'(\tau')d\tau'\\
\label{firstofthenonlinear}
&\qquad+\int_{\mathcal{R}(\tau_0,\tau)}
\left|\Hapk[\psi]\cdot  \{ \mathfrak{D}^{\bf k}(N^{\alpha\beta}(\psi,x)\partial_\alpha\psi\partial_\beta \psi)\}\right|\\
\label{toexpand}
&\qquad
+\int_{\mathcal{R}(\tau_0,\tau)\cap \{r\le R\}}\left|\Hapk[\psi]\cdot  \{ [\Box_{g(\psi,x)}-\Box_{g_0},\mathfrak{D}^{\bf k}]\psi\}\right|
\\
\nonumber
&\qquad
+C \int_{\mathcal{R}(\tau_0,\tau)}\sum_{|{\bf k}| \le k} 
({{\mathfrak{D}}}^{\bf k} (N^{\alpha\beta}(\psi,x)\partial_\alpha\psi\partial_\beta \psi))^2+
C\int_{\mathcal{R}(\tau_0,\tau)\cap \{r\le R\}} \sum_{|{\bf k}| \le k} \left( [\Box_{g(\psi,x)}-\Box_{g_0},\mathfrak{D}^{\bf k}] \psi
\right)^2
\\
\label{nonlinearrpidentherefluxescommuted}
&\qquad+
C \int_{\mathcal{R}(\tau_0,\tau)\cap \{r\le R\}}\sum_{|{\bf k}| \le k-1} (
\widetilde{\mathfrak{D}}^{\bf k} 
(N^{\alpha\beta}(\psi,x)\partial_\alpha\psi\,\partial_\beta \psi+(\Box_{g_0}-\Box_{g(\psi,x)})\psi))^2\\
\label{failureofcoercivity}
&\qquad+C\boxed{\varepsilon^{1/4}} \int_{\mathcal{R}(\tau_0,\tau)\cap \{r\le R\} \cap \{\rho\le C\sqrt{\varepsilon}\}}
\sum_{|{\bf {k}}|\le k} \left( (L(\widetilde{\mathfrak{D}}^{\bf k} \psi))^2
+(\underline{L}(\widetilde{\mathfrak{D}}^{\bf k}\psi))^2+|\slashed\nabla(\widetilde{\mathfrak{D}}^{{\bf k}}\psi)|^2\right)+\psi^2
\end{align}
as well as the estimate
\begin{align}
\label{whyanumberidontknow}
\sup_{v:\tau\le \tau(v)} & \Fpkminusone(v, \tau_0,\tau)+ \Epkminusone(\tau) + \Ezerokminusone_{\mathcal{S}}(\tau_0,\tau)
+\int_{\tau_0}^{\tau}{}^\chi\Epminusonekminusone'(\tau')
d\tau' + \int_{\tau_0}^{\tau} \Epminusonekminustwo'(\tau') d\tau' \\
\label{addedalabel}
&\lesssim  \Epkminusone(\tau_0) +\int_{\mathcal{R}(\tau_0,\tau)}
\sum_{|{\bf k}|\le k-1}(|V^\mu_p\partial_\mu ({\mathfrak{D}}^{\bf k}\psi) |+|w_p{\mathfrak{D}}^{\bf k}\psi| )|{\mathfrak{D}}^{\bf k}
( N^{\alpha\beta}(\psi,x)\partial_\alpha\psi\,\partial_\beta \psi + (\Box_{g_0}-\Box_{g(\psi,x)})\psi)|
\\
\label{nonlinearrpidenbboxcommuted}
&\qquad+ \int_{\mathcal{R}(\tau_0,\tau)}\sum_{|{\bf k}| \le k-1} \left(  {{\mathfrak{D}}}^{\bf k} 
(N^{\alpha\beta}(\psi,x)\partial_\alpha\psi\,\partial_\beta \psi + (\Box_{g_0}-\Box_{g(\psi,x)})\psi ) \right)^2 \\
\label{ellipticone}
&\qquad+ \int_{\mathcal{R}(\tau_0,\tau)\cap \{r\le R\}}\sum_{|{\bf k}| \le k-2} \left({\widetilde{\mathfrak{D}}}^{\bf k} ( N^{\alpha\beta}(\psi,x)\partial_\alpha\psi\,\partial_\beta \psi + (\Box_{g_0}-\Box_{g(\psi,x)})\psi   )\right)^2\\
\label{elliptictwo}
&\qquad
+\int_{\Sigma(\tau)\cap\{r\le R\}}\sum_{|{\bf k}|\le k-2} \left(
\widetilde{\mathfrak{D}}^{\bf  k}(N^{\alpha\beta}(\psi,x)\partial_\alpha\psi\,\partial_\beta \psi + (\Box_{g_0}-\Box_{g(\psi,x)})\psi)
\right)^2.
\end{align}
Moreover, for $p=0$, identical statements hold with $-1-\delta$ replacing $p-1$.
\end{proposition}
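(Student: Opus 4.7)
The plan is to obtain the two estimates by running in parallel the two linear arguments already developed in Section~\ref{higherorderestimates}, but with the quasilinear structure absorbed in one of two distinct ways. The first estimate will follow by applying the divergence identity~\eqref{commutedenergyidentity} for the \emph{perturbed} currents $\Jpk[g(\psi,x),\psi]$, $\Kpk[g(\psi,x),\psi]$ defined in~\eqref{psidepcur}, which, crucially, is covariantly attached to the metric $g=g(\psi,x)$ actually appearing on the left of~\eqref{theequation}. The second, derivative-losing estimate will follow by viewing~\eqref{theequation} as the linear inhomogeneous equation~\eqref{equationwithloss} on the background $g_0$ and applying the black box estimate~\eqref{rpidenbboxcommuted} at order $k-1$, with inhomogeneity $F=N^{\alpha\beta}\partial_\alpha\psi\,\partial_\beta\psi+(\Box_{g_0}-\Box_{g(\psi,x)})\psi$.

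For the first estimate, I would first commute~\eqref{theequation} by $\mathfrak{D}^{\bf k}$ in the form
\[
\Box_{g(\psi,x)}(\mathfrak{D}^{\bf k}\psi) = \mathfrak{D}^{\bf k}\bigl(N^{\alpha\beta}\partial_\alpha\psi\,\partial_\beta\psi\bigr) + [\mathfrak{D}^{\bf k},\Box_{g(\psi,x)}]\psi,
\]
split the commutator as $[\mathfrak{D}^{\bf k},\Box_{g(\psi,x)}]\psi=[\mathfrak{D}^{\bf k},\Box_{g_0}]\psi+[\mathfrak{D}^{\bf k},\Box_{g(\psi,x)}-\Box_{g_0}]\psi$, and integrate the resulting divergence identity in $\mathcal{R}(\tau_0,\tau,v)$ with volume forms and normals taken with respect to $g(\psi,x)$. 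The boundary fluxes are then exactly $\Efancypk(g,\cdot)$, $\Ffancypk(g,\cdot)$, which under~\eqref{mostprimitive} (for $\varepsilon\le \varepsilon_{\rm prelim}$) are related to $\Efancypk(g_0,\cdot)$, $\Ffancypk(g_0,\cdot)$ by factors $1+O(\varepsilon^{1/4})$ by Proposition~\ref{stabofenergyid}; this is where the exact constant $1$ in front of $\Efancypk(\tau_0)$ is preserved. The $[\mathfrak{D}^{\bf k},\Box_{g_0}]\psi$ commutator error is absorbed by the perturbed bulk coercivity~\eqref{firstrelatsthreeperturbed}--\eqref{finalcommuterperturbed} of Corollary~\ref{refertothisnow}, producing the $F^2$-type remainders that feed~\eqref{nonlinearrpidentherefluxescommuted}. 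The quasilinear commutator $[\mathfrak{D}^{\bf k},\Box_{g(\psi,x)}-\Box_{g_0}]\psi$ is supported entirely in $\{r\le R/2\}$ (since $g=g_0$ for $r\ge R/2$), giving the term~\eqref{toexpand}, and the failure of coercivity for $\Kpk[g(\psi,x),\psi]$ on the set $\{r\le R\}\cap\{\rho\le C\varepsilon^{1/4}\}$ documented in~\eqref{perturbedcoercive} is exactly what forces the boxed $\varepsilon^{1/4}$ error~\eqref{failureofcoercivity} to the right-hand side. The $A\int \Eerrorkminusone'$ term is inherited from the identity, exactly as in the linear case.

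For the second estimate, I would apply the $k-1$ order black-box estimate~\eqref{rpidenbboxcommuted} directly to~\eqref{equationwithloss}, substituting $F\mapsto N^{\alpha\beta}\partial_\alpha\psi\,\partial_\beta\psi+(\Box_{g_0}-\Box_{g(\psi,x)})\psi$; this immediately yields~\eqref{whyanumberidontknow}--\eqref{elliptictwo} since the various $\widetilde{\mathfrak{D}}^{\bf k}F$ and $\mathfrak{D}^{\bf k}F$ terms of~\eqref{rpidenbboxcommuted} reproduce the terms~\eqref{addedalabel}--\eqref{elliptictwo}, and the elliptic upgrade on $\{r\le R\}$ of Proposition~\ref{ellipticprop} furnishes~\eqref{elliptictwo}. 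Note that the loss of one derivative (from $k$ to $k-1$) is exactly what allows $(\Box_{g_0}-\Box_{g(\psi,x)})\psi$, which is second order in $\psi$, to be treated as a forcing term. The case $p=0$ requires no modification except replacing $p-1$ with $-1-\delta$ in the anomalous bulk weight, mirroring the linear statement.

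The main technical obstacle lies not in the bookkeeping of the right-hand sides, which is essentially linear, but in the careful justification that \emph{every} coercivity property used in Section~\ref{higherorderestimates} to handle the $[\mathfrak{D}^{\bf k},\Box_{g_0}]\psi$ commutator is quantitatively preserved when $g_0$ is replaced by $g(\psi,x)$. Concretely, one must verify: (a) the enhanced redshift/interior currents of Propositions~\ref{enhancedprop}--\ref{enhancedinteriorprop} retain~\eqref{enhancedherefinal} for the perturbed metric, so that the $Y$-commutation identity of Corollary~\ref{corollaryofnotes} can still absorb top-order commutation errors near $\mathcal{H}^+$; (b) the elliptic estimates of Proposition~\ref{ellipticprop} used in~\eqref{commuterroronenhalf},~\eqref{commuterrorthree} can be run on either $g_0$ or $g(\psi,x)$ with the same constants up to $O(\varepsilon^{1/4})$; and (c) the far-away coercivity and commutator bound~\eqref{commutatorerrorformula} are untouched because $g=g_0$ for $r\ge R/2$. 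Items (a) and (b) are exactly Corollary~\ref{refertothisnow} applied at successive orders, and the smallness constraint on $\varepsilon_{\rm prelim}$ in the statement is dictated by requiring that the $-C\varepsilon^{1/4}$ deficit in~\eqref{perturbedcoercive} never overwhelms the $\tfrac12$ factor left after the commutator absorption in~\eqref{finalcommuterperturbed}; outside the bad set $\{\rho\le C\varepsilon^{1/4}\}$ this is automatic, and inside it one pays exactly the boxed error~\eqref{failureofcoercivity}.
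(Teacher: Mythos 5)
Your proposal is correct and follows essentially the same route as the paper: commute with $\mathfrak{D}^{\bf k}$, split $[\mathfrak{D}^{\bf k},\Box_{g(\psi,x)}]\psi$ into the $g_0$ commutator (absorbed via Corollary~\ref{refertothisnow}, producing the $\widetilde{\mathfrak{D}}^{\bf k}F$-type remainders of line~\eqref{nonlinearrpidentherefluxescommuted}) plus the quasilinear commutator supported in $\{r\le R\}$ giving~\eqref{toexpand}, integrate the perturbed-current divergence identity so that the failure of coercivity~\eqref{perturbedcoercive} yields the boxed term~\eqref{failureofcoercivity}; then apply~\eqref{rpidenbboxcommuted} at order $k-1$ to~\eqref{equationwithloss} for the second estimate. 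The only quibbles are a sign slip in your commuted equation (immaterial since everything is taken in absolute value) and a slight mischaracterisation of line~\eqref{elliptictwo}, which is already present in~\eqref{rpidenbboxcommuted} rather than supplied separately.
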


\begin{remark}
\label{theextratermremarksecond}
In the case where we replace the middle term of~\eqref{inhomogeneous} 
with~\eqref{replacedterm}, we 
should add 
\begin{align*}
\sum_{|{\bf k}|\le k-1} \sqrt{
\int_{\mathcal{R}(\tau_0,\tau)\cap \{ r\le R \}} |L{\mathfrak{D}}^{\bf k}\psi|^2+|\underline L{\mathfrak{D}}^{\bf k}\psi|^2 +|\slashed\nabla{\mathfrak{D}}^{\bf k}\psi|^2 
+r^{-2}|{\mathfrak{D}}^{\bf k}\psi|^2}\,\, \cdot \\
\qquad \cdot \, \sqrt{\int_{\mathcal{R}(\tau_0,\tau_1)\cap \{ r\le R \}} {\mathfrak{D}}^{\bf k}( N^{\alpha\beta}(\psi,x)\partial_\alpha\psi\,\partial_\beta \psi + (\Box_{g_0}-\Box_{g(\psi,x)})\psi)^2 }
\end{align*}
 to the right hand side of~\eqref{addedalabel}. Cf.~Remark~\ref{theextratermremarkfirst}.
\end{remark}

\begin{proof}
Note the term on line~\eqref{toexpand} arose by expanding the term appearing in the actual identity as follows:
\begin{eqnarray*}
\int_{\mathcal{R}(\tau_0,\tau)\cap \{r\le R\}}  \Hapk[\psi]\cdot  \{ [\Box_{g(\psi,x)},\mathfrak{D}^{\bf k}]\psi\}
&=&
\int_{\mathcal{R}(\tau_0,\tau)\cap \{r\le R\}}\Hapk[\psi]\cdot  \{ [\Box_{g(\psi,x)}-\Box_{g_0},\mathfrak{D}^{\bf k}]\psi\}\\
&&\qquad +
\int_{\mathcal{R}(\tau_0,\tau)\cap \{r\le R\}}\Hapk[\psi]\cdot  \{ [\Box_{g_0},\mathfrak{D}^{\bf k}]\psi\}
\end{eqnarray*}
and then bringing the second term to the left hand side to use 
Corollary~\ref{refertothisnow} and absorb it in the bulk.  (Recall that in the region $r\ge R$, we have $g=g_0$.)
The term on line~\eqref{nonlinearrpidentherefluxescommuted} 
arose from our application of elliptic estimates to $\psi$, considering
it as a solution of~\eqref{equationwithloss}. 
The term on line~\eqref{failureofcoercivity} arose from the term on the last line 
of~\eqref{perturbedcoercive}.
We note that restricted to $r\ge R$, we may replace $\widetilde{\mathfrak{D}}$ commutation with 
$\mathfrak{D}$ commutation.

The inequality~\eqref{whyanumberidontknow}--\eqref{elliptictwo}, on the other hand,
simply arises from applying the estimate~\eqref{rpidenbboxcommuted} to the nonlinear equation
written in the form~\eqref{equationwithloss}. Note that we have applied it at one order less, i.e.~with $k-1$
in place of $k$, in view of the fact that the right hand side is of order $k$.
\end{proof}

\subsection{Summed norm notation and the lower order smallness assumption}
In addition to the primitive  assumption~\eqref{mostprimitive}, 
in the context of our proof, 
we will need  to introduce stronger a priori
energy smallness assumptions on $\psi$ in our region $\mathcal{R}(\tau_0,\tau_1)$ under consideration.
In the context of the proof of the main theorem, this will appear as a bootstrap assumption. 
This will ensure that higher order nonlinear terms are indeed
absorbable.

We first explain some additional notation.
We define the following summed quantities for $\delta\le p \le 2-\delta$:
\begin{eqnarray*}
\Xpk(\tau_0,\tau_1)					&:=&		\sup_{\tau'\in[\tau_0,\tau_1]} \Epk (\tau')+\sup_{v:\tau_1\le v(\tau)} \Fpk(v,\tau_0,\tau_1)
			+ \int_{\tau_0}^{\tau_1}
			 \,\, \Epminusonek'(\tau') 
			 d\tau' ,\\
{}^\rho \Xpk(\tau_0,\tau_1)			&:=& \sup_{\tau'\in[\tau_0,\tau_1]} \Epk(\tau')+\sup_{v:\tau_1\le \tau(v)}  \Fpk(v,\tau_0,\tau_1)
			 +\int_{\tau_0}^{\tau_1}\left( \,\,    {}^{\rho} \Epminusonek'(\tau')+ {}^{\tilde\rho}\,\Ezerominusthreeminusdeltakminusone'
			 										(\tau')	\right)d\tau'	,	\\
{}^\chi \Xpk(\tau_0,\tau_1)				&:=&  \sup_{\tau'\in[\tau_0,\tau_1]}\Epk (\tau')+\sup_{v:\tau_1\le \tau(v)} \Fpk(v,\tau_0,\tau_1) 
			+\int_{\tau_0}^{\tau_1}\left( \, \,   {}^{\chi} \Epminusonek' (\tau') + \Epminusonekminusone' (\tau') \right)d\tau' .
\end{eqnarray*}
For $p=0$ we first define the analogous quantities, where $p-1$ is replaced by $-1-\delta$:
\begin{eqnarray*}
\Xzerok(\tau_0,\tau_1) 				&:=&   \sup_{\tau'\in[\tau_0,\tau_1]}\Ezerok (\tau')+\sup_{v:\tau_1\le \tau(v)} \Fzerok(v,\tau_0,\tau_1)
			 +\int_{\tau_0}^{\tau_1}  \, \Ezerominusoneminusdeltak'(\tau') d\tau' ,\\
 {}^\rho \Xzerok(\tau_0,\tau_1) 			&:=&  \sup_{\tau'\in[\tau_0,\tau_1]} \Ezerok (\tau')+\sup_{v:\tau_1\le \tau(v)}  \Fzerok(v,\tau_0,\tau_1)
 			+\int_{\tau_0}^{\tau_1}  \left(\,\, {}^{\rho} \Ezerominusoneminusdeltak' (\tau') + 
								{}^{\tilde\rho}\,\Ezerominusthreeminusdeltakminusone'	(\tau')  \right)d\tau' , \\
{}^\chi \Xzerok(\tau_0,\tau_1)			&:=&	  \sup_{\tau'\in[\tau_0,\tau_1]}\Ezerok (\tau')+\sup_{v:\tau_1\le \tau(v)} \Fzerok(v,\tau_0,\tau_1)
			+\int_{\tau_0}^{\tau_1} \left(\,\,  {}^{\chi} \Ezerominusoneminusdeltak'(\tau') 
								+ \Ezerominusoneminusdeltakminusone'(\tau')\right) d\tau'.
\end{eqnarray*}
Because $p=0$ is anomalous, however, we will need in addition the following stronger energies which will appear \emph{on the right hand side} of $p=0$ estimates:
\begin{eqnarray*}
\Xzeroplusk(\tau_0,\tau_1) 			&:=&  \sup_{\tau'\in[\tau_0,\tau_1]}\Ezerok (\tau')+\sup_{v:\tau_1\le \tau(v)} \Fzerok(v,\tau_0,\tau_1) 
			+\int_{\tau_0}^{\tau_1}    \, \Edeltaminusonek'(\tau') d\tau' , 			\\
 {}^\rho \Xzeroplusk(\tau_0,\tau_1) 		&:=&   \sup_{\tau'\in[\tau_0,\tau_1]}\Ezerok (\tau')+\sup_{v:\tau_1\le \tau(v)} \Fzerok(v,\tau_0,\tau_1)
 			 +\int_{\tau_0}^{\tau_1}  \left(  {}^{\rho} \Edeltaminusonek' (\tau')+ {}^{\tilde\rho}\,\Ezerominusthreeminusdeltakminusone'(\tau')
			 						\right) d\tau' , \\
{}^\chi \Xzeroplusk(\tau_0,\tau_1)		&:=&  \sup_{\tau'\in[\tau_0,\tau_1]}\Ezerok (\tau')+\sup_{v:\tau_1\le \tau(v)} \Fzerok(v,\tau_0,\tau_1) 
			+\int_{\tau_0}^{\tau_1}   \left( {}^{\chi} \Edeltaminusonek'(\tau') + \Edeltaminusonekminusone'(\tau')\right) d\tau'.
\end{eqnarray*}
Note the general properties that $p'\ge p$, $k'\ge k$ 
implies $\Xpprimekprime\gtrsim \Xpk$, ${}^\chi\Xpprimekprime\gtrsim {}^\chi \Xpk$,
${}^\rho \Xpprimekprime\gtrsim {}^\rho \Xpk$, while
\begin{equation}
\label{therelationweknow}
\Xpkminusone \lesssim {}^\chi \Xpk.
\end{equation}

We will use the notation $\ll \mkern-6mu k$ to denote some particular positive integer, depending on $k$,
which may be different on different instances of our use of the notation,
such that $\ll \mkern-6mu k \le k$, and in fact, $\ll \mkern-6mu k$ is ``much less than $k$'', 
 provided $k$ is sufficiently
large. In particular, for all positive integers $n$, 
we assume there exists a $k(n)$ for which $k\ge k(n)$ implies $\ll \mkern-6mu k \le k-n$.

We have the following:
\begin{proposition}
\label{improvetheprimitive}
Let $k\ge 4$ be sufficiently large. There exists an $\varepsilon_{\rm prelim}>0$ such that for all  $0<\varepsilon\le \varepsilon_{\rm prelim}$, the following holds.

Let $\psi$ satisfy
\begin{equation}
\label{basicbootstrap}
\Xplesslessk \leq\varepsilon
\end{equation}
 in  $\mathcal{R}(\tau_0,\tau_1)$, for $p=0$ (or  some $\delta \le p\le 2-\delta$),
where $\ll\mkern-6mu k \ge 4$. 
Then the following improved version of~\eqref{mostprimitive} holds:
\begin{equation}
\label{mostprimitiveimproved}
\sup_{r \le R}\sum_{|{\bf k}|\le 1}  (\widetilde{\mathfrak{D}}^{\bf k}  \psi)^2 \lesssim \varepsilon \leq \sqrt{\varepsilon}
\end{equation}
\end{proposition}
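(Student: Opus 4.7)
The plan is to deduce the pointwise bound directly from the Sobolev embedding already established in Proposition~\ref{sobolevforfunctions}, combined with the definition of the master energy $\Xplesslessk$. Since the assumption controls sufficiently many derivatives, a single slice-wise Sobolev inequality will suffice, and no further analysis of the equation is needed.

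More concretely, I would first apply Proposition~\ref{sobolevforfunctions} on each hypersurface $\Sigma(\tau)$, with $k$ replaced by a suitable $\ll\mkern-6mu k$. This yields, for every $\tau_0\leq\tau\leq\tau_1$,
\[
\sum_{|{\bf k}|\le \ll\mkern-6mu k-3}\sup_{x\in \Sigma(\tau)\cap\{r\le R\}}(\widetilde{\mathfrak{D}}^{\bf k}\psi(x))^2\lesssim \Ezerolesslessk(\tau).
\]
Since $\ll\mkern-6mu k\ge 4$ by hypothesis, we have $\ll\mkern-6mu k-3\ge 1$, so the left-hand side controls all terms with $|{\bf k}|\le 1$ appearing in~\eqref{mostprimitiveimproved}.

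Next, I would relate the right-hand side to the bootstrap quantity $\Xplesslessk$. By the definitions of the master energies, together with the relations $\Epk\gtrsim \Ezerok$ (valid for either $p=0$ or $\delta\le p\le 2-\delta$, since $\Epk$ is built by \emph{adding} weighted terms to $\Ezerok$ as in~\eqref{coeffshereandkorder}), we have
\[
\sup_{\tau\in[\tau_0,\tau_1]}\Ezerolesslessk(\tau)\;\lesssim\; \sup_{\tau\in[\tau_0,\tau_1]}\Eplesslessk(\tau)\;\le\; \Xplesslessk\;\le\; \varepsilon,
\]
where in the last step we invoked~\eqref{basicbootstrap}. Combining the two displayed inequalities gives the quantitative bound $\sum_{|{\bf k}|\le 1}(\widetilde{\mathfrak{D}}^{\bf k}\psi)^2\lesssim \varepsilon$ uniformly in $r\le R$ and in $\tau\in [\tau_0,\tau_1]$.

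Finally, one simply chooses $\varepsilon_{\rm prelim}$ small enough so that $C\varepsilon\le \sqrt{\varepsilon}$ for all $0<\varepsilon\le\varepsilon_{\rm prelim}$, where $C$ is the implicit constant from the previous step; e.g.\ $\varepsilon_{\rm prelim}:=\min\{1/C^2,\varepsilon_{\rm prelim}^{\text{prev}}\}$ suffices, where $\varepsilon_{\rm prelim}^{\text{prev}}$ denotes any previously fixed smallness threshold one wishes to respect. There is no real obstacle here: the content is entirely Sobolev on a fixed slice, with the time-supremum handled by the definition of $\Xplesslessk$. The only point requiring minor care is ensuring that $\ll\mkern-6mu k-3\ge 1$, which is exactly the hypothesis $\ll\mkern-6mu k\ge 4$, and is why the proposition insists that $k$ be ``sufficiently large.''
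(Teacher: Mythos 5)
Your argument is correct and is precisely the paper's proof, which simply says that \eqref{mostprimitiveimproved} follows immediately from the Sobolev inequality~\eqref{simplesobolev}; you have usefully unpacked the chain $\Ezerolesslessk(\tau) \lesssim \Eplesslessk(\tau) \leq \Xplesslessk \leq \varepsilon$ and the role of $\ll\mkern-6mu k\ge 4$ in ensuring that one more derivative is controlled pointwise. The only cosmetic point is that the chain ``$\lesssim\varepsilon\le\sqrt{\varepsilon}$'' in the statement implicitly absorbs the implicit constant exactly as you describe via the choice of $\varepsilon_{\rm prelim}\le 1/C^2$.
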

\begin{proof}
This follows of course immediately from the Sobolev inequality~\eqref{simplesobolev}.
\end{proof}

In the context of the proof of the main theorem,
inequality
\eqref{basicbootstrap} will be introduced as a
bootstrap assumption, and Proposition~\ref{improvetheprimitive} will be applied to 
retrieve the assumption~\eqref{mostprimitive},
necessary for the 
results of Sections~\ref{thisisprimitive}--\ref{hoeforquas}. Note that with the assumption~\eqref{basicbootstrap},
we may replace the boxed $\varepsilon^{1/4}$ in~\eqref{failureofcoercivity} with $\sqrt{\varepsilon}$.

We note that in what follows, restrictions on $k$ sufficiently large  will always include the condition $\ll\mkern-6mu k \ge 4$.

\subsubsection{Comparability of  $\mathfrak{E}$ and $\mathcal{E}$ energies}
\label{nonlinearelliptic}

We note first the following:
\begin{proposition}
\label{useithere}
Let $k\ge 4$ be sufficiently large. There exists an $\varepsilon_{\rm prelim}>0$ such that  the following holds.

Let  $\psi$ be a solution of~\eqref{theequation} in $\mathcal{R}(\tau_0,\tau_1)$ 
satisfying~\eqref{basicbootstrap} for $p=0$ and some $0<\varepsilon\le \varepsilon_{\rm prelim}$. 
Then setting $F= N^{\mu\nu}(\psi,x)\partial_\mu\psi\,\partial_\nu \psi + (\Box_{g_0}-\Box_{g(\psi,x)})\psi$,
we have
\begin{equation}
\label{toabsorbit}
\int_{\Sigma (\tau')\cap \{ r\le R\}  } \sum_{|{\bf k}|\le k-1}
(\widetilde{\mathfrak{D}}^{\bf k} F)^2 \lesssim
\min\left\{ \, \Ezerolesslessk (\tau')   \, ,\,\, \Ezerominusoneminusdeltalesslessk'(\tau')  \right\}  \,\Ezerok(\tau') \lesssim \varepsilon\, 
 \Ezerok(\tau').
\end{equation}
\end{proposition}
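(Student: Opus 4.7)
The plan is to expand $\widetilde{\mathfrak{D}}^{\bf k} F$ using the Leibniz rule together with the Faà di Bruno formula, so as to distribute the derivatives between two factors: one carrying ``few'' derivatives (bounded pointwise via the Sobolev inequality~\eqref{simplesobolev}) and the other carrying ``many'' derivatives (estimated in $L^2$ by the unweighted energy $\Ezerok$). Since the entire estimate is restricted to $\{r\le R\}$, we may freely use that $g(\psi,x)=g_0$ for $r\ge R/2$, so $(\Box_{g_0}-\Box_{g(\psi,x)})\psi$ is compactly supported in $r\le R/2$, where everything is uniformly comparable.

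First I would observe that, schematically,
\[
F = N^{\mu\nu}(\psi,x)\partial_\mu\psi\,\partial_\nu\psi + \big(g_0^{\mu\nu}(x)-g^{\mu\nu}(\psi,x)\big)\partial^2_{\mu\nu}\psi + \Gamma(\psi,x,\partial\psi),
\]
where $\Gamma$ is a smooth function of $\psi$, $x$, $\partial\psi$ collecting the first-order Christoffel-type differences. Using that $g(0,x)=g_0(x)$, a Taylor expansion gives $g^{\mu\nu}(\psi,x)-g_0^{\mu\nu}(x)=\psi\,\tilde h^{\mu\nu}(\psi,x)$ with $\tilde h$ smooth and $x$-uniformly bounded together with all its derivatives in $\psi$ on $|\psi|\le 1$ (which holds pointwise in $r\le R$ by~\eqref{mostprimitiveimproved}). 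Thus every term in $F$ is at least quadratic in $\{\psi,\partial\psi,\partial^2\psi\}$, with one of the two factors being of order at most $1$ in $\partial$.

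Applying $\widetilde{\mathfrak{D}}^{\bf j}$ with $|{\bf j}|\le k-1$ and expanding via Leibniz/Faà di Bruno, each resulting term has the schematic form
\[
A(\psi,x)\,\widetilde{\mathfrak{D}}^{{\bf j}_1}\psi\cdots\widetilde{\mathfrak{D}}^{{\bf j}_s}\psi,\qquad s\ge 2,\qquad |{\bf j}_1|+\cdots+|{\bf j}_s|\le k+1,
\]
with $A$ a smooth function uniformly bounded (with all derivatives) under $|\psi|\le 1$, and with at least one of the $|{\bf j}_i|\le \lfloor(k+1)/2\rfloor\le k-3$ once $k$ is chosen sufficiently large (this is where $\ll\mkern-6mu k\ge 4$ is used). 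Place the highest-order factor, of order $\le k+1$ (so $\le k$ derivatives of a field already counted in $\Ezerok$ or, if second derivatives appear, controlled by an elliptic estimate through Proposition~\ref{ellipticprop} applied to $\widetilde{\mathfrak{D}}^{\bf k}\psi$ and the equation itself; in any case, integrated in $L^2$ on $\Sigma(\tau')\cap\{r\le R\}$ it is bounded by $\Ezerok(\tau')$), and place the remaining low-order factors in $L^\infty$. By Proposition~\ref{sobolevforfunctions}, each low-order factor $\widetilde{\mathfrak{D}}^{{\bf j}_i}\psi$, $|{\bf j}_i|\le \ll\mkern-6mu k$, satisfies
\[
\sup_{\Sigma(\tau')\cap\{r\le R\}}|\widetilde{\mathfrak{D}}^{{\bf j}_i}\psi|^2 \lesssim \min\{\Ezerolesslessk(\tau'),\Ezerominusoneminusdeltalesslessk'(\tau')\}.
\]
Multiplying and integrating over $\Sigma(\tau')\cap\{r\le R\}$ yields the first inequality of~\eqref{toabsorbit}, absorbing all but one Sobolev factor into $\sqrt{\Xplesslessk}$-type quantities, hence into $\sqrt\varepsilon$ by~\eqref{basicbootstrap}, and keeping exactly one pointwise factor of the form $\min\{\Ezerolesslessk,\Ezerominusoneminusdeltalesslessk'\}$. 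The final smallness then follows from~\eqref{basicbootstrap}.

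The main obstacle is purely bookkeeping: choosing $\ll\mkern-6mu k$ large enough so that in every Leibniz term arising from $\widetilde{\mathfrak{D}}^{\bf j} F$, at most one factor carries more than $\ll\mkern-6mu k$ derivatives, so that all remaining factors admit Sobolev embedding, and handling the second-order derivative in the quasilinear piece (where the naive count gives $|{\bf j}|+2\le k+1$) by either trading it for lower-order terms via the equation or by using the ambient-frame elliptic bound of Proposition~\ref{ellipticprop}; both are standard given the structural assumption $g(\psi,x)-g_0=\psi\,\tilde h$ and the uniform control of $\tilde h$ and $N$ together with all $\psi$-derivatives under $|\psi|\lesssim\sqrt\varepsilon$ supplied by Proposition~\ref{improvetheprimitive}. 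The last inequality in~\eqref{toabsorbit} is then immediate from~\eqref{basicbootstrap}, since $\Ezerolesslessk(\tau')\le\Xzerolesslessk(\tau_0,\tau_1)\le\varepsilon$.
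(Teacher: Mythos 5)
Your proposal is correct and is essentially the approach the paper has in mind (the paper's own proof is a one-liner appealing to the Leibniz rule together with the Sobolev inequality~\eqref{simplesobolev} and the structural assumptions of Section~\ref{classstruggle}). One small remark: the hedge about needing Proposition~\ref{ellipticprop} to handle the second-order derivative in the quasilinear piece is unnecessary, since $\Ezerok$ already controls $k+1$ derivatives of $\psi$ in $L^2$ on $\Sigma(\tau')\cap\{r\le R\}$ (the definition~\eqref{firstdefk} sums over $|{\bf k}|\le k$ and then applies one further derivative $L$, $\underline{L}$, or $\slashed\nabla$, and the $\widetilde{\mathfrak{D}}$ span the tangent space in $r\le R$); thus the worst-case factor $\widetilde{\mathfrak{D}}^{k-1}\partial^2\psi$ is directly controlled without recourse to the equation or elliptic estimates.
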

\begin{proof}
This is standard in view of our assumptions on $N$ and $g(\psi,x)$ from Section~\ref{classstruggle},
and can be proven using the Sobolev inequality~\eqref{simplesobolev} 
on $\Sigma(\tau')\cap\{r\le R\}$.
\end{proof}

We may now obtain the following result, which can be viewed as a corollary of 
Proposition~\ref{somerelationshere}.
\begin{corollary}
\label{nomoreroman}
Let $k\ge 4$ be sufficiently large. Then
 there exists an $\varepsilon_{\rm prelim}>0$ such that
the following statement holds.

Let $\psi$ be a solution of~\eqref{theequation} in $\mathcal{R}(\tau_0,\tau_1)$ 
satisfying~\eqref{basicbootstrap} for $p=0$ and some $0 <\varepsilon\le \varepsilon_{\rm prelim}$.
We have the analogue of Corollary~\ref{corollaryforhomog}:
The calligraphic and fraktur (when applicable) energies are comparable, i.e.~
\[
\Epk(\tau)\sim  \Efancypk(\tau) (g)  ,
\]
for all $\tau_0\le \tau\le \tau_1$.
Moreover, if $\chi=1$ and $\tilde\rho=1$ identically (as in case (i)), then
\[
 \Epminusonek'(\tau)    \sim  {}^{\chi} \Epminusonek'(\tau)+
 {}^{\tilde\rho} \, \Ezerominusthreeminusdeltakminusone'(\tau), \qquad \Ezerominusoneminusdeltak'(\tau) \sim  {}^{\chi} \Ezerominusoneminusdeltak'(\tau)+{}^{\tilde\rho} \, \Ezerominusthreeminusdeltakminusone'(\tau) 
\]
and thus
\[
{}^\chi \Xpk(\tau_0,\tau_1) \sim  {}^\rho\Xpk (\tau_0,\tau_1) \sim   \Xpk (\tau_0,\tau_1).
\]
If $\tilde\rho=1$ identically, $\rho=\chi$, (i.e.~as in cases (i) and (ii)), then
\[
{}^\chi\Epminusonek'(\tau)+ \Epminusonekminusone' (\tau)\sim{}^\rho\Epminusonek'(\tau) +  {}^{\tilde\rho} \Ezerominusthreeminusdeltakminusone' (\tau) , \qquad
{}^\chi\Ezerominusoneminusdeltak'(\tau) +\Ezerominusoneminusdeltakminusone'(\tau)  \sim {}^\rho\Ezerominusoneminusdeltak'(\tau) + 
 {}^{\tilde\rho} \Ezerominusthreeminusdeltakminusone'(\tau) 
\] 
and thus 
\[
{}^\chi \Xpk(\tau_0,\tau_1) \sim {}^\rho\Xpk (\tau_0,\tau_1).
\]
\end{corollary}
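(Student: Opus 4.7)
The proof is a direct nonlinear adaptation of Corollary~\ref{corollaryforhomog}, combining the elliptic estimates of Proposition~\ref{somerelationshere}, the nonlinear source bound of Proposition~\ref{useithere}, and the stability of coercivity from Proposition~\ref{stabofenergyid}. I first use the bootstrap~\eqref{basicbootstrap} together with the Sobolev inequality (Proposition~\ref{improvetheprimitive}) to retrieve the primitive smallness~\eqref{mostprimitive}, which legitimises all applications of the results from Sections~\ref{thisisprimitive}--\ref{hoeforquas}.

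For the first equivalence $\Epk(\tau)\sim \Efancypk(g,\tau)$, I view $\psi$ as a solution of the inhomogeneous linear equation $\Box_{g_0}\psi=F$ in the sense of~\eqref{equationwithloss}, with $F := N^{\mu\nu}(\psi,x)\partial_\mu\psi\,\partial_\nu\psi + (\Box_{g_0}-\Box_{g(\psi,x)})\psi$, and apply~\eqref{witheliptone}:
\[
\Epk(\tau)\lesssim \Efancypk(g_0,\tau) + \int_{\Sigma(\tau)\cap\{r\le R\}}\sum_{|{\bf k}|\le k-1}(\widetilde{\mathfrak{D}}^{\bf k}F)^2.
\]
By Proposition~\ref{useithere}, the final integral is $\lesssim \varepsilon\,\Ezerok(\tau)\le \varepsilon\,\Epk(\tau)$; taking $\varepsilon_{\rm prelim}$ sufficiently small, this is absorbed into the left-hand side and yields $\Epk(\tau)\lesssim \Efancypk(g_0,\tau)$. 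The stability of tensorial identities (Proposition~\ref{stabofenergyid}) gives $\Efancypk(g_0,\tau)\sim \Efancypk(g,\tau)$ up to a factor $1+C\varepsilon^{1/4}$, while the reverse inequality $\Efancypk(g,\tau)\lesssim \Epk(\tau)$ is provided by the same proposition. Combining these yields $\Epk(\tau)\sim \Efancypk(g,\tau)$.

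For the additional pointwise equivalences in case~(i), I apply the identical absorption scheme to the elliptic inequalities~\eqref{withelipttwo} and~\eqref{witheliptthree}; since $\chi\equiv 1$ and $\tilde\rho\equiv 1$ in that case, the $F$-dependent error terms are of the same form and are again dominated by $\varepsilon\,\Epk$ using Proposition~\ref{useithere}. For the analogous statements valid in cases~(i) and~(ii), I instead invoke~\eqref{witheliptfour} and~\eqref{witheliptfive} at one order lower on the source side, and handle the source term in the same way. The equivalences for the summed norms $\Xpk$, ${}^\rho\Xpk$, ${}^\chi\Xpk$ then follow by taking suprema in $\tau$ and integrating in $\tau'\in[\tau_0,\tau_1]$; after integration the source-squared bulk integrals are still controlled by $\varepsilon$ times the relevant summed norm, hence are absorbable by the same choice of $\varepsilon_{\rm prelim}$.

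The only genuine point to watch is the bookkeeping of derivative orders: the $k$-th order elliptic estimate requires $k-1$ derivatives of $F$, and $F$ is quadratic in $\psi$ and its derivatives with coefficients depending on $\psi$, so Proposition~\ref{useithere} must produce a bound that factors out a lower-order, bootstrap-controlled norm (the $\Xzerolesslessk$ factor). This is exactly what is ensured by taking $k$ sufficiently large so that $\ll\mkern-6mu k\le k-n$ for adequate $n$, and by imposing the bootstrap~\eqref{basicbootstrap} at the lower order $\ll\mkern-6mu k$. Beyond this, the argument is a straightforward perturbation of the linear corollary.
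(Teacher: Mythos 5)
Your proposal is correct and follows essentially the same route as the paper: the paper's proof is a one-liner stating that the result follows from the elliptic estimates~\eqref{witheliptone}--\eqref{witheliptfive} applied with $F=N^{\mu\nu}(\psi,x)\partial_\mu\psi\,\partial_\nu\psi + (\Box_{g_0}-\Box_{g(\psi,x)})\psi$, absorbing the inhomogeneous term via~\eqref{toabsorbit} for sufficiently small $\varepsilon_{\rm prelim}$, and your argument unpacks exactly these steps (adding the explicit appeal to Proposition~\ref{stabofenergyid} for the $g$ vs.\ $g_0$ comparison and noting the derivative count, both of which are implicit in the paper).
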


\begin{proof}
This follows from~\eqref{witheliptone}--\eqref{witheliptfive}, with $F$ defined as in Proposition~\ref{useithere},
absorbing the inhomogeneous term using the estimate~\eqref{toabsorbit}, for sufficiently small $\varepsilon_{\rm prelim}$.
\end{proof}

\subsection{Estimates on the nonlinear terms in the near region}
\label{additionaltermsec}

\begin{proposition}
\label{nearprop}
Let $k\ge 4$ be sufficiently large. 
There exists an $\varepsilon_{\rm prelim}>0$ such that  the following holds.

Let $\psi$ be a solution of~\eqref{theequation} in $\mathcal{R}(\tau_0,\tau_1)$ 
satisfying~\eqref{basicbootstrap}
for $p=0$ and some $0<\varepsilon \le\varepsilon_{\rm prelim}$.
Then   all  integrals on lines~\eqref{firstofthenonlinear}--\eqref{failureofcoercivity},
\underline{restricted to $\mathcal{R} (\tau_0,\tau_1)\cap \{r \le R\}$}, 
may be estimated by
\begin{align}
\label{inopposition}
\ldots &\lesssim \int_{\tau_0}^\tau  \Ezerok(\tau') \sqrt{\,\,\Ezerominusoneminusdeltalesslessk'(\tau')  \,\, }d\tau' ,
\end{align}
while the integrals on lines~\eqref{addedalabel}--\eqref{nonlinearrpidenbboxcommuted},
again \underline{restricted to $\mathcal{R}(\tau_0,\tau_1)\cap \{r \le R\}$}, 
with or without the extra term of Remark~\ref{theextratermremarksecond},
may be similarly estimated by
\begin{align}
\label{remarktheloss}
\ldots &\lesssim \int_{\tau_0}^{\tau} \Ezerok(\tau') \sqrt{\,\,\Ezerominusoneminusdeltalesslessk'(\tau')  \,\, }d\tau' .
\end{align}
\end{proposition}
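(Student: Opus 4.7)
The plan is to handle every term in~\eqref{firstofthenonlinear}--\eqref{failureofcoercivity} by the same scheme: expand by Leibniz, apply spatial Cauchy--Schwarz on each slice $\Sigma(\tau')\cap\{r\le R\}$, put the (at most one) top-order factor in $L^2$ to be controlled by $\sqrt{\Ezerok(\tau')}$, and put all the remaining lower-order factors in $L^\infty$ via the Sobolev estimate of Proposition~\ref{sobolevforfunctions}. Since we assume~\eqref{basicbootstrap} at $p=0$, Proposition~\ref{improvetheprimitive} supplies~\eqref{mostprimitiveimproved}, and Proposition~\ref{sobolevforfunctions} gives, for all $|{\bf k}|\le \ll\mkern-6mu k -3$,
\[
  \sup_{\Sigma(\tau')\cap\{r\le R\}}(\widetilde{\mathfrak{D}}^{\bf k}\psi)^2 \lesssim \Ezerominusoneminusdeltalesslessk'(\tau')\lesssim \varepsilon.
\]
The smoothness assumptions of Section~\ref{classofequations} then ensure that all factors of $N^{\alpha\beta}$, $\partial_\xi N$, $g(\psi,x)-g_0$, etc., and their derivatives up to order $\ll\mkern-6mu k -3$, are uniformly bounded on $r\le R$.

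A generic Leibniz term of $\mathfrak{D}^{\bf k}(N^{\alpha\beta}(\psi,x)\partial_\alpha\psi\partial_\beta\psi)$ with $|{\bf k}|\le k$ takes the schematic form $a(\psi,x)\prod_j\widetilde{\mathfrak{D}}^{{\bf j}_j}\psi\cdot\widetilde{\mathfrak{D}}^{{\bf m}_1}\partial\psi\cdot\widetilde{\mathfrak{D}}^{{\bf m}_2}\partial\psi$, with $\sum |{\bf j}_j|+|{\bf m}_1|+|{\bf m}_2|\le k$ and $a$ bounded. Since $\ll\mkern-6mu k$ may be chosen so that $2(\ll\mkern-6mu k-3)>k$, at most one factor has effective derivative order $>\ll\mkern-6mu k-3$; place this one (if present) in $L^2$ and all others in $L^\infty$. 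For line~\eqref{firstofthenonlinear}, applying spatial Cauchy--Schwarz against $\Hapk[\psi]$ (whose $L^2$ norm is $\lesssim\sqrt{\Ezerok(\tau')}$ by definition~\eqref{Hapkdef} and the bounds of Section~\ref{rpsection}) yields per slice
\[
\int_{\Sigma(\tau')\cap\{r\le R\}}|\Hapk[\psi]|\cdot|\mathfrak{D}^{\bf k}(\ldots)|\lesssim \sqrt{\Ezerok(\tau')}\cdot\sqrt{\Ezerok(\tau')}\cdot\Ezerominusoneminusdeltalesslessk'(\tau')\lesssim \sqrt{\varepsilon}\,\Ezerok(\tau')\sqrt{\Ezerominusoneminusdeltalesslessk'(\tau')},
\]
using the bootstrap bound $\Ezerominusoneminusdeltalesslessk'(\tau')\lesssim\varepsilon$ to trade one power of the low-order quantity for $\sqrt{\varepsilon}$. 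Integrating in $\tau'$ produces~\eqref{inopposition}. The commutator $[\Box_{g(\psi,x)}-\Box_{g_0},\mathfrak{D}^{\bf k}]\psi$ on line~\eqref{toexpand} is a sum of terms involving one factor of $g-g_0$ (or its derivatives) together with derivatives of $\psi$ of total order $\le k+1$; since $g(\psi,x)-g_0$ vanishes at $\psi=0$, $|g-g_0|\lesssim|\psi|$, and the same Leibniz-plus-Sobolev splitting gives the same bound. The same argument yields the quadratic terms on the following two lines, and also the term with $\widetilde{\mathfrak{D}}$-commutation (where we use the global spanning of $\widetilde{\mathfrak{D}}$ in $r\le R$). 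For the additional term of Remark~\ref{theextratermremarksecond}, the first Cauchy--Schwarz factor is bounded by $\sqrt{\Xzerok}\lesssim\sqrt{\varepsilon}$ and the second by the same Leibniz argument applied with one fewer derivative.

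The boxed $\varepsilon^{1/4}$ factor on line~\eqref{failureofcoercivity} is the only mildly delicate point. Under~\eqref{basicbootstrap} it may be improved to $\sqrt{\varepsilon}$, but $\sqrt{\varepsilon}$ alone is too weak to bound by $\sqrt{\Ezerominusoneminusdeltalesslessk'(\tau')}$ pointwise in $\tau'$. However, tracing the origin of this factor back to Proposition~\ref{stabofenergyid}, it arises from a pointwise bound $|g(\psi,x)-g_0|\lesssim|\psi|$ on the metric perturbation, which the Sobolev bound above sharpens to $|\psi|\lesssim\sqrt{\Ezerominusoneminusdeltalesslessk'(\tau')}$ per slice. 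Replacing the crude $\varepsilon^{1/4}$ accordingly yields exactly the desired per-slice factor $\sqrt{\Ezerominusoneminusdeltalesslessk'(\tau')}\cdot\Ezerok(\tau')$ after integration in space. The second bound~\eqref{remarktheloss} is obtained identically, with the only difference being that one works at order $k-1$ for the highest-order factor.

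The main obstacle is essentially bookkeeping: one must carefully verify, for every Leibniz expansion arising in the various terms (including the mixed $\mathfrak{D}$ and $\widetilde{\mathfrak{D}}$ commutators, the quadratic squared terms, and the commutator $[\Box_{g(\psi,x)}-\Box_{g_0},\mathfrak{D}^{\bf k}]$ which produces nested $\psi$-derivatives in its coefficients), that the threshold $\ll\mkern-6mu k$ is large enough (roughly $\ll\mkern-6mu k\gtrsim k/2+C$) so that at most one factor per term exceeds the Sobolev order $\ll\mkern-6mu k-3$, and to handle uniformly the case where no factor is top-order (in which case all factors go into $L^\infty$ and one extracts an extra $\sqrt{\varepsilon}$ from the bootstrap to produce the required structure). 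This fixes the choice of $\ll\mkern-6mu k$ and thereby the required lower bound on $k$.
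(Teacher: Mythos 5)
Your proposal follows exactly the approach the paper intends (the paper's own proof is a one-line reference to the Sobolev inequality~\eqref{simplesobolev} and to the argument used for~\eqref{toabsorbit}): Leibniz expansion, one factor in $L^2$ controlled by $\sqrt{\Ezerok(\tau')}$, the rest in $L^\infty$ via~\eqref{simplesobolev}, together with the bootstrap~\eqref{basicbootstrap} to control the Faà di Bruno coefficients and to sharpen the boxed $\varepsilon^{1/4}$ on line~\eqref{failureofcoercivity} to the slice-dependent quantity $\sqrt{\Ezerominusoneminusdeltalesslessk'(\tau')}$. The identification of the origin of the $\varepsilon^{1/4}$ as the pointwise bound $|g(\psi,x)-g_0|\lesssim|\psi|$, and its slicewise improvement by Sobolev, is exactly the point the paper makes. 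One algebraic slip: in your intermediate display the per-slice bound should read $\sqrt{\Ezerok(\tau')}\cdot\sqrt{\Ezerok(\tau')}\cdot\sqrt{\Ezerominusoneminusdeltalesslessk'(\tau')}$ (a single lower-order factor in $L^\infty$ in the worst case, when the coefficient $N^{\alpha\beta}(\psi,x)$ contributes no derivatives), not the full power $\Ezerominusoneminusdeltalesslessk'(\tau')$; as written your first $\lesssim$ overclaims, though the final bound it is chained to is still valid (and in fact the trade for $\sqrt{\varepsilon}$ is then unnecessary, since the correct per-slice bound is already the target). The remaining verifications (the count ensuring at most one factor exceeds the Sobolev threshold $\ll\mkern-6mu k-3$, the order $k+1$ of the commutator $[\Box_{g(\psi,x)}-\Box_{g_0},\mathfrak{D}^{\bf k}]$, the loss of one derivative for~\eqref{remarktheloss}) are all correctly identified as bookkeeping, and are consistent with the paper's conventions on $\ll\mkern-6mu k$.
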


\begin{proof}
This is standard given the assumptions in Section~\ref{classstruggle} and can be proven using the Sobolev inequality~\eqref{simplesobolev} 
on $\Sigma(\tau)\cap\{r\le R\}$, cf.~the proof of~\eqref{toabsorbit} which in fact already bounds several of the terms.
 (Note that the boxed $\varepsilon^{1/4}$ in~\eqref{failureofcoercivity}
may clearly now be replaced by the quantity $\sqrt{\,\,\Ezerominusoneminusdeltalesslessk'(\tau')}$ within the integral.)
\end{proof}

We remark again that, as opposed to~\eqref{inopposition},
 the control~\eqref{remarktheloss} 
loses a derivative, as the energy on the left hand side of~\eqref{addedalabel} is 
$k-1$'th order while the right hand side of~\eqref{remarktheloss}
is  of $k$'th order.

\subsection{Assumptions for the nonlinearity in the far region: the ``null'' condition}
\label{formofnullcondsec}

We may now state our
 specific assumption capturing the null condition for the semilinear terms.

Rather than formulate algebraically the null condition in terms of the form of $N$, for maximum generality we will make our basic assumption directly 
at the level of an estimate for  terms on the right hand side of the inequalities of~Proposition~\ref{higherorderestimquasi}.

Our assumptions will only depend on the region $r\ge 8R/9$, so we will need some notation to denote the restriction of energies, etc., to this region.
Given $r_*\ge r_0$ and  $v$,  by
\[
\Epk_{r_*,v}, \qquad  \Xpk_{r_*,v}, \qquad {}^\rho \Xpk_{r_*,v}, \qquad \ldots
 \]
we shall mean  the expressions defined as before, but where all domains of integration are restricted to the region $\mathcal{R}(\tau_0,\tau_1,v)\cap \{r\ge r_*\}$. 
We may thus define these expressions for functions $\psi: \mathcal{R}(\tau_0,\tau_1,v)\cap \{r\ge r_*\}\to \mathbb R$. 
Similarly, the above expressions without the $v$ subscript, e.g.~$\Epk_{r_*}$, will be defined where all domains of intergation
are restricted to $\mathcal{R}(\tau_0,\tau_1)\cap \{r\ge r_*\}$. These can then be defined for functions
$\psi: \mathcal{R}(\tau_0,\tau_1)\cap \{r\ge r_*\}\to \mathbb R$ . 
Let us note that
\begin{equation}
\label{implicationhere}
r_*\ge 8R/9 \implies \Xpk_{r_*,v}={}^\chi \Xpk_{r_*,v}={}^\rho \Xpk_{r_*,v}
\end{equation}
and similarly for the quantities  without the $v$ subscript.

\begin{assumption}[Null condition for semilinear terms]
\label{nullcondassumphere}
There exists a $k_{\rm null}$, and for all $k\ge k_{\rm null}$, 
 an $\varepsilon_{\rm null}>0$, such that the following holds for all $\tau_0\le \tau_1$, $v$.

Let $\psi$ be a smooth function defined on $\mathcal{R}(\tau_0,\tau_1,v)\cap \{ r\ge 8R/9\}$ 
 satisfying the bound
 \begin{equation}
 \label{basicbootstrapinnullcond}
 \Xzerolesslessk_{8R/9, v} \leq \varepsilon
 \end{equation}
  for
some $0<\varepsilon\leq \varepsilon_{null}$. 
Then for all $\delta\le p\le 2-\delta$,
\begin{align}
\nonumber
\int_{\mathcal{R}(\tau_0,\tau_1,v)\cap \{r\ge R\}}&\sum_{|{\bf k}|\le k}(|r^pr^{-1}L(r{\mathfrak{D}}^{\bf k}\psi)|+|L{\mathfrak{D}}^{\bf k}\psi|
+|\underline{L}{\mathfrak{D}}^{\bf k}\psi|+|r^{-1}\slashed\nabla{\mathfrak{D}}^{\bf k}\psi|+ |r^{-1}{\mathfrak{D}}^{\bf k}\psi| )  |{\mathfrak{D}}^{\bf k}
( N^{\alpha\beta}(\psi,x)\partial_\alpha\psi\,\partial_\beta \psi) |\\
\nonumber
&\qquad\qquad\qquad\qquad\qquad
+ (\mathfrak{D}^{\bf k} ( N^{\alpha\beta}(\psi,x)\partial_\alpha\psi\,\partial_\beta \psi))^2 \\
\label{nullcondassump}
&\lesssim
 \Xpk_{\frac{8R}9,v}(\tau_0,\tau_1) \sqrt{\Xzerolesslessk_{\frac{8R}9,v}(\tau_0,\tau_1)} + \sqrt{\Xpk_{\frac{8R}9,v}(\tau_0,\tau_1)}\sqrt{\Xzerok_{\frac{8R}9,v}(\tau_0,\tau_1)}\sqrt{\Xplesslessk_{\frac{8R}9,v}(\tau_0,\tau_1)},
\end{align}
while, corresponding to $p=0$ we have
\begin{align}
\nonumber
\int_{\mathcal{R}(\tau_0,\tau_1,v)\cap \{r\ge R\}}&\sum_{|{\bf k}|\le k}(|r^{-1}L(r{\mathfrak{D}}^{\bf k}\psi)|
+|L{\mathfrak{D}}^{\bf k}\psi|
+|\underline{L}{\mathfrak{D}}^{\bf k}\psi|+|r^{-1}\slashed\nabla{\mathfrak{D}}^{\bf k}\psi|+ |r^{-1}{\mathfrak{D}}^{\bf k}\psi| )  |{\mathfrak{D}}^{\bf k}
( N^{\alpha\beta}(\psi,x)\partial_\alpha\psi\,\partial_\beta \psi) |\\
\nonumber
&\qquad\qquad\qquad\qquad\qquad
+ (\mathfrak{D}^{\bf k} ( N^{\alpha\beta}(\psi,x)\partial_\alpha\psi\,\partial_\beta \psi))^2 \\
\label{nullcondassumptwo}
 &\lesssim 
 \Xzeroplusk_{\frac{8R}9,v} (\tau_0,\tau_1) \sqrt{ \Xzeropluslesslessk_{\frac{8R}9,v}(\tau_0,\tau_1)}.
\end{align}
\end{assumption}

We note the following
\begin{proposition}
\label{itholdsthenull}
Assumption~\ref{nullcondassumphere}  holds for equations~\eqref{theequation} 
where $g_0$ is Minkowski and where the semilinear terms $N$ satisfy
the classical null condition of Klainerman~\cite{KlNull}, and more generally
when $g_0$ is the Kerr metric and 
the semilinear terms $N$ belong to the class considered by Luk~\cite{MR3082240}. 
\end{proposition}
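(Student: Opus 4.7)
The plan is to verify Assumption~\ref{nullcondassumphere} directly by exploiting the null form structure of the semilinear term, working entirely in the region $r\ge 8R/9$ where $g=g_0$ and the double null foliation is available. The starting point is the standard observation that Klainerman's null condition forces $N^{\alpha\beta}(\psi,x)\partial_\alpha\psi\,\partial_\beta\psi$ to vanish to the appropriate order when contracted with two copies of $L$ (the generator of the outgoing null cones). Concretely, after decomposing with respect to the null frame $(L,\underline{L},e_A)$, one can schematically write
\[
N^{\alpha\beta}\partial_\alpha\psi\,\partial_\beta\psi =  (L\psi)(\underline{L}\psi) + (L\psi)(r^{-1}\slashed{\nabla}\psi) + (r^{-1}\slashed{\nabla}\psi)(r^{-1}\slashed{\nabla}\psi)+\text{(lower-order in }\psi\text{)},
\]
with coefficients that are uniformly bounded, $T$-invariant (and, on Kerr, also $\Omega_1$-invariant in the asymptotic region). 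Crucially, the term $(\underline{L}\psi)^2$, which is the only product one cannot afford to estimate with the available decay, is absent. Luk's class on Kerr admits an essentially identical decomposition after one accounts for the $a$-dependent corrections to the asymptotic null frame, which decay in $r$ sufficiently fast to be treated as perturbations.

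The first step I would carry out is to commute by a string $\mathfrak{D}^{\bf k}$ with $|{\bf k}|\le k$ and check that this null structure is preserved modulo acceptable errors. Since in $r\ge 8R/9$ all commutators $\mathfrak{D}_i\in\{T,\upsilon\Omega_1,L,\underline{L},\Omega_i\}$ are either Killing for $g_0$ or preserve the null decomposition up to lower-order terms in $r^{-1}$, the Leibniz expansion of $\mathfrak{D}^{\bf k}\bigl(N^{\alpha\beta}\partial_\alpha\psi\,\partial_\beta\psi\bigr)$ yields a sum of trilinear expressions of schematic form $(\text{coefficient})\cdot\mathfrak{D}^{{\bf k}_1}\psi\cdot\mathfrak{D}^{{\bf k}_2}\psi\cdot\mathfrak{D}^{{\bf k}_3}\psi$ (the third factor coming from the $\psi$-dependence of $N$) with $|{\bf k}_1|+|{\bf k}_2|+|{\bf k}_3|\le k$, in each of which at least one factor carries a good derivative $L$ or $r^{-1}\slashed{\nabla}$.

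Next I would execute the trilinear estimate by the standard dyadic $r$-decomposition together with weighted Sobolev on the spheres $\Sigma(\tau)\cap\{r=r'\}$ (of radius $\sim r'$). For the two factors that one places in $L^2_{\tau,r,\omega}$, the $r^p$ weights are split so that one factor contributes a $\sqrt{\Xpk}$ with the good derivative absorbing the top $r^p$-weight (this is where the absence of $(\underline{L}\psi)^2$ is essential), and the other contributes either $\sqrt{\Xzerok}$ or $\sqrt{\Xplesslessk}$ depending on which factor carries the top-order derivative count. The remaining factor, always of order $\le \ll\mkern-6mu k$, is placed in $L^\infty$ via the $S^2$-Sobolev inequality and the smallness hypothesis~\eqref{basicbootstrapinnullcond}, yielding the factor $\sqrt{\Xzerolesslessk_{8R/9,v}}$ in the first summand of~\eqref{nullcondassump} and the product form in the second summand (when the top derivatives are distributed across two factors). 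The quadratic term $(\mathfrak{D}^{\bf k}(N\,\partial\psi\,\partial\psi))^2$ is estimated by the same Cauchy--Schwarz/Sobolev scheme after one integration in $r$.

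The main obstacle---and the reason a dedicated appendix is warranted---is bookkeeping of the weights at the borderline $p=0$ case, which accounts for the second inequality~\eqref{nullcondassumptwo}. Here the natural $r^{-1-\delta}$ bulk weight from $\Xzerok$ is insufficient to absorb a product $(L\psi)(\underline{L}\psi)$ after Cauchy--Schwarz against $r^{-1}\mathfrak{D}^{\bf k}\psi$, so one must use the enhanced $\Xzeroplusk$ energy (with bulk weight $r^{\delta-1}$) on the top-order factor; this is precisely the $\delta$-margin that allows the Hardy-type inequality to close, and it is here that the structural restriction to genuine null forms (as opposed to the weak null condition) is decisive. Once this is verified, the Kerr/Luk case follows by absorbing the $a/r$ corrections to the null frame into the already established estimates, as in~\cite{MR3082240}.
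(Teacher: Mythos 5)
Your proposal captures the correct structural ideas and is in the same spirit as the paper's proof: decompose $N^{\alpha\beta}\partial_\alpha\psi\partial_\beta\psi$ in the null frame, observe that the null condition forbids the bad product $(\underline{L}\psi)^2$, propagate this structure through $\mathfrak{D}^{\bf k}$-commutation, estimate the resulting trilinear terms with one lower-order factor put in $L^\infty$ and the other two in $L^2$ after a careful $r$-weight split, and observe that the $p=0$ case is anomalous and needs the $\delta$-margin of $\Xzeroplusk$. However, the execution you sketch elides precisely the technical ingredients that constitute the appendix's content, and in one place it is inexact. The pointwise bound you invoke is \emph{not} an $S^2$-Sobolev alone: the sphere Sobolev recovers an $r^{-1}$ but still has an $L^2(S^2)$ quantity that must be integrated in some direction. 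The paper's key tool (Proposition~\ref{prop:Sobolevin}) is a Sobolev inequality along the \emph{incoming cones} $\underline{C}_v$, giving $r|\partial_u\mathfrak{D}^{{\bf k}_1}\psi|\lesssim (1+(\tau_1-\tau_0)^{-1/2})\sqrt{\Fstar}$ and $r^{s}|\partial_v\mathfrak{D}^{{\bf k}_2}\psi|\lesssim (1+(\tau_1-\tau_0)^{-1/2})\cdot(\text{flux/bulk})^{1/2}$; the $u$-integration that makes the right side finite is exactly what brings the flux energy in. The factor $(1+(\tau_1-\tau_0)^{-1/2})$ blows up on short slabs, forcing a case split into $\tau_1-\tau_0\geq 1$ (where one absorbs the loss against a $\tau$-integration) and $\tau_1-\tau_0<1$ (where one uses a boundary energy instead of a bulk); your scheme does not account for this. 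Separately, the corner region $\{v'\leq v(R,\tau_0+R)\}$ near $\{u=\tau_0\}\cap\{r=R\}$ is singular for the cone Sobolev and has to be handled by a different Sobolev on $\Sigma(\tau)$ together with the boundedness of $r$ there (Proposition~\ref{prop:Sobolevout} and~\eqref{eq:Rcornerbound}). Finally, no dyadic decomposition in $r$ appears in the paper's argument; the weight splitting is carried out directly in the integrals, and what actually needs care is which of $\sqrt{\Fstar}$, $\sqrt{\Epstar}$, $\sqrt{\intEpone'}$ each factor lands in (e.g.~the $r^{p-1}|\partial_v(r\mathfrak{D}^{\bf k}\psi)|$ multiplier must pair off against the $r^{p-3}$ bulk, not the boundary term). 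These are not cosmetic omissions: the case split and the corner region are where the constant in Assumption~\ref{nullcondassumphere} would otherwise degenerate with $L$, which would break the dyadic iteration in Section~\ref{estimatehier}.
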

\begin{proof}
See Appendix~\ref{nonlineartermsatinf}.
\end{proof}

\begin{remark}
\label{alreadyreplace}
Let us note that for $\psi$ defined  on the slab  $\mathcal{R}(\tau_1,\tau_2)\cap \{r \ge 8R/9\}$, 
from the trivial bound $\Xpk_{r_*,v}\leq   \Xpk_{r_*}$ we may drop the $v$ subscripts.
For $\psi$ defined globally on 
  the slab  $\mathcal{R}(\tau_1,\tau_2)$, then 
 in view of~\eqref{implicationhere} and the trivial bounds
 ${}^\rho \Xpk_{r_*}\leq {}^\rho \Xpk$,  etc.,
we may replace the $\frac{8M}9,v$ subscripts in all the $\Xpk$, etc., energies on the right hand side  with  
the $\rho$ superscript, i.e.~replace $\Xpk_{\frac{8M}9,v}$, etc.,  with ${}^\rho \Xpk$, etc.
Also, for such $\psi$, we may clearly replace~\eqref{basicbootstrapinnullcond} with assumption~\eqref{basicbootstrap}.
\end{remark}

\subsection{The final estimates}

Putting everything together we have the following:

\begin{proposition} 
\label{finalpropest}
Consider $(\mathcal{M},g_0)$ 
satisfying the assumptions of Sections~\ref{backgroundgeom} and~\ref{waveequassump}
(for cases (i), (ii) or (iii))
and equation~\eqref{theequation} satisfying the assumptions
of Section~\ref{classofequations} and~\ref{formofnullcondsec}. 

Then for all $k\ge 4$ sufficiently large,  there exist
constants $C>0$ (also implicit in the $\lesssim$ below), $c>0$ 
and an $\varepsilon_{\rm prelim}>0$, 
 such that the following is true.

Let $\tau_0\le \tau_1$ and 
let $\psi$ be a solution of~\eqref{theequation} in $\mathcal{R}(\tau_0,\tau_1)$ 
satisfying~\eqref{basicbootstrap} with $p=0$ for some $0<\varepsilon\le \varepsilon_{\rm prelim}$.
Then for $\delta \le p \le 2-\delta$, one has the estimates
\begin{align}
\nonumber
\sup_{v:\tau\le \tau(v)} \Ffancyp(v,\tau_0,\tau),&\quad
c \, {}^\rho \Xpk(\tau_0,\tau_1), \quad
  \Efancypk(\tau_1)\\
 \nonumber
  &\leq \Efancypk(\tau_0) +
A\int_{\tau_0}^{\tau_1}\Eerrorkminusone'(\tau')d\tau'\\
\nonumber
&\qquad+
C\, \Xpk_{\frac{8R}9}(\tau_0,\tau_1) \sqrt{\Xzerolesslessk_{\frac{8R}9}(\tau_0,\tau_1)}  + C\,\sqrt{\Xpk_{\frac{8R}9}(\tau_0,\tau_1)}\sqrt{\Xzerok_{\frac{8R}9}(\tau_0,\tau_1)}\sqrt{\Xplesslessk_{\frac{8R}9}(\tau_0,\tau_1)}
\\
 \label{addtermoneherewithflux}
 &\qquad + C\int_{\tau_0}^{\tau_1}  \Ezerok(\tau') \sqrt{\,\,\Ezerominusoneminusdeltalesslessk'(\tau')  \,\, }d\tau' \\
 \nonumber
{}^\chi \Xpkminusone(\tau_0,\tau_1) &\lesssim \Epkminusone(\tau_0) +
 \Xpkminusone_{\frac{8R}9}(\tau_0,\tau_1) \sqrt{\Xzerolesslessk_{\frac{8R}9}(\tau_0,\tau_1)} 
 + \sqrt{\Xpkminusone_{\frac{8R}9}(\tau_0,\tau_1)}\sqrt{\Xzerokminusone_{\frac{8R}9}(\tau_0,\tau_1)}
 \sqrt{\Xplesslessk_{\frac{8R}9}(\tau_0,\tau_1)}
 \\
\label{addtermoneherechi}
 &\qquad+
\int_{\tau_0}^{\tau_1}  \Ezerok(\tau') \sqrt{\,\,\Ezerominusoneminusdeltalesslessk'(\tau')  \,\, }d\tau' 
\end{align}
while for $p=0$, one has the estimates
\begin{align}
\nonumber
\sup_{v:\tau\le \tau(v)} \Ffancyzero(v,\tau_0,\tau), \quad &
c \, {}^\rho \Xzerok(\tau_0,\tau_1), \quad \Efancyzerok(\tau_1) \leq \Efancyzerok(\tau_0) +
A\int_{\tau_0}^{\tau_1}\Eerrorkminusone'(\tau')d\tau'+
C\, \Xzeroplusk_{\frac{8R}9} (\tau_0,\tau_1) \sqrt{ \Xzeropluslesslessk_{\frac{8R}9}(\tau_0,\tau_1)}
\\
 \label{addtermoneherewithfluxzero}
 &\qquad \qquad\qquad\qquad \qquad\qquad+ C\int_{\tau_0}^{\tau_1}  \Ezerok(\tau') \sqrt{\,\,\Ezerolesslessk(\tau')  \,\, }d\tau', \\
 \label{addtermoneherechizero}
{}^\chi \Xzerokminusone(\tau_0,\tau_1) &\lesssim \Ezerokminusone(\tau_0) +
 \Xzeroplusk_{\frac{8R}9} (\tau_0,\tau_1) \sqrt{ \Xzeropluslesslessk_{\frac{8R}9}(\tau_0,\tau_1)}
+
\int_{\tau_0}^{\tau_1}  \Ezerok(\tau') \sqrt{\,\,\Ezerolesslessk(\tau')  \,\, }d\tau' .
\end{align}

Finally, in both the case $p=0$ and the case $\delta\le p\le 2-\delta$, one has 
the alternative estimate
\begin{equation}
\label{Afd}
 \Xpk(\tau_0,\tau_1)  \lesssim \Epk (\tau_0)  +
  (1+\tau_1-\tau_0)\Xpk(\tau_0,\tau_1) \sqrt{{}\Xzerolesslessk(\tau_0,\tau_1)},
\end{equation}
depending on $\tau_1-\tau_0$.
\end{proposition}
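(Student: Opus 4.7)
The plan is to combine the quasilinear higher order estimates of Proposition~\ref{higherorderestimquasi} with the near-region estimate of Proposition~\ref{nearprop} and the null-condition assumption of Section~\ref{formofnullcondsec}, using Corollary~\ref{nomoreroman} to reconcile the calligraphic and fraktur energies under the bootstrap~\eqref{basicbootstrap}. First, I would fix $k\ge 4$ sufficiently large so that $\ll\mkern-6mu k \ge \max\{k_{\rm null},4\}$ and so that the smallness thresholds from Propositions~\ref{stabofenergyid},~\ref{improvetheprimitive},~\ref{useithere},~\ref{nearprop} and Assumption~\ref{nullcondassumphere} are all simultaneously in force for $\varepsilon\le\varepsilon_{\rm prelim}$. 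By Proposition~\ref{improvetheprimitive}, \eqref{basicbootstrap} implies~\eqref{mostprimitive}, so all machinery of Section~\ref{prelimsec} applies.

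For~\eqref{addtermoneherewithflux}, I would start from~\eqref{firstofthenonlinear}--\eqref{failureofcoercivity} in Proposition~\ref{higherorderestimquasi}, applied with the actual quasilinear current $\Jpk[g(\psi,x),\psi]$, so that the coefficient of the initial flux is exactly $1$. Split each spacetime integral on the right into $\{r\le R\}$ and $\{r\ge R\}$. In the inner region I invoke Proposition~\ref{nearprop} to bound everything by $\int_{\tau_0}^{\tau_1}\Ezerok\sqrt{\Ezerominusoneminusdeltalesslessk'}\,d\tau'$, which is the last line of~\eqref{addtermoneherewithflux}. In the outer region, since $g=g_0$ and the commutator term in~\eqref{toexpand} vanishes, only the semilinear contributions~\eqref{firstofthenonlinear} and the $F^2$ term in~\eqref{nonlinearrpidentherefluxescommuted} survive, and these are precisely the quantities bounded by Assumption~\ref{nullcondassumphere} via~\eqref{nullcondassump} (using Remark~\ref{alreadyreplace} to replace the $\frac{8R}{9},v$-restricted norms by the global ones and $\Xpk_{\frac{8R}9}={}^\rho\Xpk_{\frac{8R}9}$ by~\eqref{implicationhere}). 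Using Corollary~\ref{nomoreroman} on the left, we get the coercivity of $c\,{}^\rho\Xpk$ and the flux control, yielding~\eqref{addtermoneherewithflux}.

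For~\eqref{addtermoneherechi}, I would apply~\eqref{whyanumberidontknow}--\eqref{elliptictwo} at order $k-1$, viewing~\eqref{theequation} as an inhomogeneous equation for $\Box_{g_0}$ as in~\eqref{equationwithloss}. The $r\le R$ contributions on lines~\eqref{addedalabel}--\eqref{elliptictwo}, together with the optional term of Remark~\ref{theextratermremarksecond}, are all absorbed via~\eqref{remarktheloss} of Proposition~\ref{nearprop} into $\int \Ezerok\sqrt{\Ezerominusoneminusdeltalesslessk'}\,d\tau'$; note that this is where the loss of one derivative is paid (the right hand side carries $\Ezerok$ rather than $\Ezerokminusone$). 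The $r\ge R$ contributions are again absorbed using~\eqref{nullcondassump} of Assumption~\ref{nullcondassumphere}, now applied at order $k-1$, producing the middle two terms of~\eqref{addtermoneherechi}; since ${}^\chi\Xpkminusone$ is estimated via $\lesssim$ (not $\le$) one does not need to track constants precisely here, and Corollary~\ref{nomoreroman} allows one to pass between ${}^\chi$, ${}^\rho$ and plain $\mathcal{X}$-energies on the right. The $p=0$ estimates~\eqref{addtermoneherewithfluxzero} and~\eqref{addtermoneherechizero} are obtained by the identical scheme but using the anomalous bound~\eqref{nullcondassumptwo} instead of~\eqref{nullcondassump}, which produces the $\Xzeroplusk_{8R/9}\sqrt{\Xzeropluslesslessk_{8R/9}}$ combination; the presence of $\Edeltaminusonek'$ rather than $\Ezerominusoneminusdeltak'$ on the right reflects the gap between the $p=0$ bulk term on the left and the bulk term the nonlinearity requires to be estimable.

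The alternative bound~\eqref{Afd} is obtained by discarding the null structure altogether. The quasilinear identity of Proposition~\ref{higherorderestimquasi} still gives an estimate for $\Xpk$ in terms of $\Epk(\tau_0)$ plus spacetime integrals of products of $\mathfrak{D}^{\bf k}\psi$ against nonlinear terms. Now I would estimate these by H\"older in $\tau$ and Sobolev~\eqref{simplesobolev} directly on each slice, using $(\tau_1-\tau_0)^{1/2}$ factors from the time integration. Combined with Corollary~\ref{nomoreroman} this yields~\eqref{Afd}; no null structure (and in particular no restriction $r\ge 8R/9$) is needed, hence the linear-in-$L$ loss. The main obstacle throughout the proof is bookkeeping: one must carefully arrange for the coefficient of $\Efancypk(\tau_0)$ in~\eqref{addtermoneherewithflux} and~\eqref{addtermoneherewithfluxzero} to be exactly $1$ (as this is needed later for the dyadic iteration), while the $\lesssim$ estimates~\eqref{addtermoneherechi} and~\eqref{addtermoneherechizero} are more forgiving; and one must ensure that each application of Proposition~\ref{nearprop} and Assumption~\ref{nullcondassumphere} is at an order consistent with the $\ll\mkern-6mu k$ convention so that the Sobolev embeddings actually close.
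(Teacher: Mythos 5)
Your proof follows precisely the same route as the paper's, which is in fact a single sentence: ``This follows from Proposition~\ref{higherorderestimquasi}, Corollary~\ref{nomoreroman}, Proposition~\ref{nearprop} and Assumption~\ref{nullcondassumphere}.'' Your reconstruction of the details is essentially correct: the derivation of~\eqref{addtermoneherewithflux} from~\eqref{firstofthenonlinear}--\eqref{failureofcoercivity} with the inner/outer split, the observation that the coefficient $1$ of $\Efancypk(\tau_0)$ must be preserved, the use of~\eqref{whyanumberidontknow}--\eqref{elliptictwo} at order $k-1$ for the black-box estimate~\eqref{addtermoneherechi} with the one-derivative loss paid through~\eqref{remarktheloss}, and the role of~\eqref{nullcondassumptwo} in the $p=0$ anomaly, all match the intended argument.

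The one place you go astray is the derivation of~\eqref{Afd}. You assert that ``no null structure (and in particular no restriction $r\ge 8R/9$) is needed,'' proposing to control the far-field nonlinear terms purely by H\"older in $\tau$ and the Sobolev inequality~\eqref{simplesobolev}. That cannot work: \eqref{simplesobolev} only controls $\psi$ pointwise in the bounded region $r\le R$, and without an $r$-weighted far-field estimate of the type furnished by Assumption~\ref{nullcondassumphere} the spacetime integrals of the semilinear terms near null infinity do not even have the correct $r$-weights to be bounded by the $\mathcal{X}$-norms appearing in~\eqref{Afd}. The paper's proof cites Assumption~\ref{nullcondassumphere} for \emph{all} five estimates, including~\eqref{Afd}. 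The correct mechanism behind the $(1+\tau_1-\tau_0)$ loss is not the abandonment of the null condition, but rather the passage from ${}^\rho\Xpk$ (respectively ${}^\chi\Xpk$) on the left of~\eqref{addtermoneherewithflux} (resp.~\eqref{addtermoneherechi}) to the full non-degenerate $\Xpk$: one fills in the bulk integral over the compact region where $\rho$ or $\chi$ degenerate by the coarea bound $\int_{\tau_0}^{\tau_1}\int_{\{\rho<1\}\cup\{\chi<1\}}|\partial\mathfrak{D}^{\mathbf{k}}\psi|^2\lesssim(\tau_1-\tau_0)\sup_{\tau}\Epk(\tau)$, then estimates $\sup_\tau\Epk(\tau)$ by the flux bound already obtained. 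It is this, together with the $\sqrt{\tau_1-\tau_0}$ factor from H\"older in~\eqref{remarktheloss}, that produces the $(1+\tau_1-\tau_0)$ factor, with the null condition still actively used for the $r\ge R$ contribution.
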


\begin{proof}
This follows from Proposition~\ref{higherorderestimquasi}, Corollary~\ref{nomoreroman}, Proposition~\ref{nearprop} and 
Assumption~\ref{nullcondassumphere}.
\end{proof}

\begin{remark}
Note how the case $p=0$ is anomalous in that one sees $0+$ on the right hand side 
of~\eqref{addtermoneherewithfluxzero}--\eqref{addtermoneherechizero}.
It is this fact which will require us to use $p>0$ weights to close the 
global estimates~\eqref{addtermoneherewithflux}--\eqref{addtermoneherechi},
even in the case (i). On the other hand,~\eqref{Afd} is sufficient to show local existence 
even using only the $p=0$ weight.
\end{remark}

\begin{remark}
In using the above we shall always simply replace terms like
$\Xpk_{\frac{8R}9}(\tau_0,\tau_1)$, etc., by ${}^\rho \Xpk(\tau_0,\tau_1)$, etc., according to Remark~\ref{alreadyreplace}. We have kept the terms in the above form
simply to highlight their origin in the estimate of Assumption~\ref{nullcondassumphere}.
\end{remark}

\begin{remark}
\label{semiremsemi}
In the case where we only assume the black box inequality~\eqref{inhomogeneous} and 
the assumptions of asymptotic flatness of Sections~\ref{rpsection} 
and~\ref{farawaycommutationerror} as interpreted in Remark~\ref{remarkforseminlinear} (necessary
to obtain~\eqref{rpidenbboxcommuted})
and moreover, the equation~\eqref{theequation} is semilinear, i.e.~$g(x,\psi)=g_0$ for all $x\in\mathcal{M}$,
then we obtain the statement of Proposition~\ref{finalpropest}  without inequalities~\eqref{addtermoneherewithflux} and~\eqref{addtermoneherewithfluxzero} 
and where~\eqref{addtermoneherechi} and~\eqref{addtermoneherechizero} are
replaced by
\begin{align}
\nonumber
 {}^\chi \Xpk(\tau_0,\tau_1) &\lesssim \Epk(\tau_0) +
 \Xpk_{\frac{8R}9}(\tau_0,\tau_1) \sqrt{\Xzerolesslessk_{\frac{8R}9}(\tau_0,\tau_1)} 
 + \sqrt{\Xpk_{\frac{8R}9}(\tau_0,\tau_1)}\sqrt{\Xzerok_{\frac{8R}9}(\tau_0,\tau_1)}
 \sqrt{\Xplesslessk_{\frac{8R}9}(\tau_0,\tau_1)}
 \\
\label{addtermoneherechinewhere}
 &\qquad+
\int_{\tau_0}^{\tau_1}  \Ezerok(\tau') \sqrt{\,\,\Ezerominusoneminusdeltalesslessk'(\tau')  \,\, }d\tau' 
 ,\\
 \label{addtermoneherechizeroforsemi}
{}^\chi \Xzerok(\tau_0,\tau_1) &\lesssim \Ezerok(\tau_0) +
 \Xzeroplusk_{\frac{8R}9} (\tau_0,\tau_1) \sqrt{ \Xzeropluslesslessk_{\frac{8R}9}(\tau_0,\tau_1)}
+
\int_{\tau_0}^{\tau_1}  \Ezerok(\tau') \sqrt{\,\,\Ezerolesslessk(\tau')  \,\, }d\tau' ,
\end{align}
respectively.
(In particular, equation~\eqref{Afd} also still holds as stated.)
\end{remark}

\subsection{Local well posedness}

Finally, we give a local well posedness statement and an extension criterion.

Since our initial data hypersurfaces $\Sigma(\tau_0)$ are in part null, it will be convenient to assume that our
initial data are smooth. 
We define a smooth initial data set on $\Sigma(\tau_0)$
to be a pair $(\uppsi, \uppsi')$ where $\uppsi$ is  a function on
$\Sigma(\tau_0)$ smooth on 
$\Sigma(\tau_0)\cap\{r\le R\}$ and smooth on $\Sigma(\tau_0)\cap \{r \ge R\}$, and 
$\uppsi'$ is a smooth function on $\Sigma(\tau_0)\cap \{r\le R\}$,
such that moreover  there exists a smooth a function $\Psi$ on $\mathcal{R}(\tau_0,\tau_1)$
for some $\tau_1$ such that $\Psi|_{\Sigma(\tau_0)}=\uppsi$, and $n\Psi|_{\Sigma(\tau_0)\cap\{r\le R\}}=\uppsi'$,
where $n$ denotes the normal to $\Sigma(\tau_0)\cap\{r\le R\}$.

Note that given such initial data on $\Sigma(\tau_0)$ one may define
\begin{equation}
\label{tobedfined}
 \Epk[\uppsi,\uppsi']
\end{equation}
as follows: 

Using the equation~\eqref{theequation} we may
 compute along $\Sigma(\tau_0)$
 the $k+1$-jet of any smooth solution $\psi$ of~\eqref{theequation} such that 
 \begin{equation}
 \label{attainingthedata}
 \psi|_{\Sigma(\tau_0)} =\uppsi, \qquad n\psi|_{\Sigma(\tau_0)\cap \{r\le R\}} = \uppsi'.
 \end{equation}
We may define  $\Epk[\uppsi,\uppsi']$ to equal the usual $\Epk(\tau_0)$ where the derivatives
are interpreted in terms of this formal computation.
Note of course that the expression~\eqref{tobedfined} may well be infinite.

Alternatively, we may express this  as follows. Using the above computation, we may
in fact define a smooth  $\Psi$ as in the previous paragraph, such that 
$\Psi$ moreover satisfies~\eqref{theequation} along $\Sigma(\tau_0)$,
and, 
 for all $|{\bf k}| \le k$,
  $\widetilde{\mathfrak{D}}^{\bf k} \Psi$ satisfies along $\Sigma(\tau_0)$ the equation one obtains by commuting~\eqref{theequation}  by $\widetilde{\mathfrak{D}}^{\bf k}$. 
We may then simply define~\eqref{tobedfined} substituting this $\Psi$ for $\psi$ in the usual expression
  for $\Epk(\tau_0)$.

With this, we can now state our local well posedness:

\begin{proposition}[Local well posedness]
\label{localexistence}
Consider $(\mathcal{M},g_0)$ 
satisfying the assumptions of Sections~\ref{backgroundgeom} and~\ref{waveequassump}
(for cases (i), (ii) or (iii))
and equation~\eqref{theequation} satisfying the assumptions
of Section~\ref{classofequations} and~\ref{formofnullcondsec}. Fix either $p=0$ or 
$\delta \le p\le 2-\delta$.

There exists a positive integer $k_{\rm loc}\ge 4$ such that the following holds. Let $k\ge k_{\rm loc}$.
Then there exists a positive real constant $C>0$ sufficiently large,
a positive real parameter $\varepsilon_{\rm loc}>0$ sufficiently small, and a decreasing positive function $\tau_{\rm exist}:(0,\varepsilon_{\rm loc})\to \mathbb R$
such that for all smooth initial data $(\uppsi, \uppsi')$ on $\Sigma(\tau_0)$ such that
\[
 \Epk[\uppsi,\uppsi'] \leq \varepsilon_0
 \leq \varepsilon_{\rm loc},
\]
there exists a smooth solution  $\psi$
of~\eqref{theequation}  
in $\mathcal{R}(\tau_0,\tau_1)$ for $\tau_1:=\tau_0+\tau_{\rm exist}$ which satisfies
\[
 \Xpk(\tau_0,\tau_1) \leq C \varepsilon_0.
\]
Moreover, for all $\tau_0\le \tau\le \tau_1$,
 any other smooth $\widetilde\psi$ defined on $\mathcal{R}(\tau_0,\tau)$ satisfying~\eqref{theequation} in
 $\mathcal{R}(\tau_0,\tau)$ and attaining the initial data, i.e.~satisfying~\eqref{attainingthedata} (with $\widetilde{\psi}$
replacing $\psi$), coincides with the restriction of $\psi$, i.e.~$\widetilde\psi=\psi|_{\mathcal{R}(\tau_0,\tau)}$.

We have in addition the following propagation of higher order regularity and/or higher weighted estimates. 
Given $p$, $k\ge k_{\rm loc}$, $\varepsilon_0\le \varepsilon_{\rm loc}$, $\tau_0$, $\tau_1$  as above, 
$2-\delta \ge p' \ge  p$ if $p\ge \delta$  and either $2-\delta\ge p'\ge \delta$ or $p'=0$ if $p=0$, and  $k'\ge k$. Then there exists a constant $C(k',\tau_1-\tau_0)$ such that
given $\psi$ as above, 
\[
\Xpprimekprime(\tau_0,\tau_1)
 \leq C(k',\tau_1-\tau_0) \, \Epprimekprime (\tau_0).
\]

\end{proposition}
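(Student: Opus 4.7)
The plan is to combine a standard local existence result for the mixed Cauchy-characteristic initial value problem with the a priori estimate~\eqref{Afd} of Proposition~\ref{finalpropest} to upgrade existence to the precise quantitative form stated. First I would establish smooth solvability on some slab $\mathcal{R}(\tau_0,\tau_0+\epsilon)$ by a Picard-type iteration: starting from the smooth extension $\Psi$ provided by the initial data assumption (so that all corner compatibility conditions at $\Sigma(\tau_0)\cap\{r=R\}$ are automatic), iteratively solve the linear mixed problem
\[
\Box_{g(\psi^{(n)},x)}\psi^{(n+1)} = N^{\mu\nu}(\psi^{(n)},x)\partial_\mu\psi^{(n)}\partial_\nu\psi^{(n)}
\]
on $\mathcal{R}(\tau_0,\tau_0+\epsilon)$ with prescribed data $\psi^{(n+1)}|_{\Sigma(\tau_0)}=\uppsi$, $n\psi^{(n+1)}|_{\Sigma(\tau_0)\cap\{r\le R\}}=\uppsi'$. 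Existence and smoothness for the linear mixed problem (with smooth coefficients and compatible data) is classical, and convergence of the iterates follows from uniform $\Xpk$-bounds obtained from applying~\eqref{Afd} at each step together with contraction estimates at one order lower in regularity.

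Once a smooth solution exists on some small slab, the quantitative bound $\Xpk\leq C\varepsilon_0$ is extracted by a continuity/bootstrap argument driven by~\eqref{Afd}. Fix $C_0$ denoting the implicit constant there and assume the bootstrap hypothesis $\Xpk(\tau_0,\tau_1)\leq 2C_0\varepsilon_0$; choosing $\varepsilon_{\rm loc}$ so that $2C_0\varepsilon_{\rm loc}\leq \varepsilon_{\rm prelim}$ validates the applicability of Proposition~\ref{finalpropest}, and~\eqref{Afd} then gives
\[
\Xpk(\tau_0,\tau_1) \leq C_0\varepsilon_0 + C_0(1+\tau_1-\tau_0)\sqrt{2C_0\varepsilon_0}\,\Xpk(\tau_0,\tau_1).
\]
Defining $\tau_{\rm exist}(\varepsilon_0)$ so that $C_0(1+\tau_{\rm exist})\sqrt{2C_0\varepsilon_0}\leq 1/2$ lets one absorb the nonlinear term and improve the bootstrap to $\Xpk\leq 2C_0\varepsilon_0$. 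Standard continuation (smoothness of $\psi$ and of the equation's coefficients on compacts, together with the bootstrap closing at each step) extends the solution throughout $\mathcal{R}(\tau_0,\tau_0+\tau_{\rm exist})$.

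Uniqueness at fixed regularity follows by applying the same energy framework to the difference $\widetilde\psi-\psi$, which satisfies a linear wave-type equation with coefficients depending on the two solutions and a source quadratic in their jets; since both $\widetilde\psi$ and $\psi$ satisfy the $\Xpk$ bound, a version of~\eqref{Afd} at one derivative lower yields Gr\"onwall control by the (vanishing) initial data of the difference. For propagation of higher regularity and higher $r^p$-weights, I would apply~\eqref{Afd} at order $k'\ge k$ and weight $p'$: since the smallness $\sqrt{\Xzerolesslessk}\lesssim \sqrt{\varepsilon_0}$ is already guaranteed by Step~2 and is independent of $k'$, the nonlinear term on the right of~\eqref{Afd} can again be absorbed, yielding
\[
\Xpprimekprime(\tau_0,\tau_1) \leq C(k',\tau_1-\tau_0)\,\Epprimekprime(\tau_0).
\]

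The main technical obstacle is Step~1, namely the clean setup of the linear mixed Cauchy-characteristic problem near the corner $\Sigma(\tau_0)\cap\{r=R\}$: corner regularity is delicate and typically requires verifying an infinite tower of compatibility conditions. Here this is sidestepped by the very definition of smooth initial data, which \emph{postulates} the existence of a smooth extension $\Psi$ (constructible from the equation by a formal jet computation), so that the compatibility conditions hold automatically. Beyond that, the analytic steps are routine and entirely controlled by the single a priori estimate~\eqref{Afd}.
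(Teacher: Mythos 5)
Your proposal follows the approach the paper has in mind: the paper's own "proof" is the single sentence "This can be easily proven using estimate~\eqref{Afd}. We leave the details to the reader," and your plan (Picard iteration on the mixed Cauchy--characteristic problem, bootstrap/absorption via~\eqref{Afd} for the quantitative bound, energy estimate on differences for uniqueness, and~\eqref{Afd} at higher order for regularity propagation) is exactly the natural way to fill that in.

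Two technical points deserve care, though. First, in Step~1 you invoke~\eqref{Afd} to get uniform bounds on the Picard iterates, but~\eqref{Afd} is stated for solutions of the nonlinear equation~\eqref{theequation}; for the iterates $\psi^{(n+1)}$, which solve a \emph{linear} equation $\Box_{g(\psi^{(n)},x)}\psi^{(n+1)}=N(\partial\psi^{(n)},\ldots)$, one needs the corresponding linear version of the weighted hierarchy (strictly easier, but not literally~\eqref{Afd}). Second, and more substantively, your higher-regularity step asserts that "the nonlinear term on the right of~\eqref{Afd} can again be absorbed" because $\sqrt{\Xzerolesslessk}\lesssim\sqrt{\varepsilon_0}$ is independent of $k'$. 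But there are two hidden $k'$-dependences: (i) the integer $\ll \mkern-3mu k'$ appearing in~\eqref{Afd} at order $k'$ is allowed by the paper's convention to depend on $k'$, so you must check $\ll \mkern-3mu k'\le k$ before concluding that $\Xzerolesslessk'$ is controlled by the Step~2 bound on $\Xpk$; and (ii) by the paper's explicit convention in Section~\ref{constantsandparameterssec} the implicit constant in~\eqref{Afd} depends on $k'$, so the condition $C_{k'}(1+\tau_1-\tau_0)\sqrt{\varepsilon_0}<1/2$ may fail for large $k'$ with $\varepsilon_0$ and $\tau_{\rm exist}$ fixed. The clean resolution is to use the Gronwall rather than absorption structure for the near-region source (bound $\sqrt{\Ezerominusoneminusdeltalesslessk'}$ pointwise by $\sqrt{\varepsilon_0}$ and leave the remaining $\Ezerokprime(\tau')$ factor inside the time integral), together with the observation that the far-region nonlinear estimate has the quadratic-in-top-order factor paired against a genuinely lower-order smallness; this yields $C(k',\tau_1-\tau_0)$ without needing absorption to succeed outright. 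That said, since the paper's statement is precisely that such a $C(k',\tau_1-\tau_0)$ exists (and the quantity is a priori finite for smooth data), these are details that close in a standard way and do not change the structure of your argument.
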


\begin{proof}
This can be easily proven using estimate~\eqref{Afd}.
We leave the details to the reader.
\end{proof}

Note also the following easy corollary, which we will use as an extension criterion:
\begin{corollary}[Continuation criterion]
\label{continuash}
Fix $p=0$ or $\delta\le p\le 2-\delta$, and let $k\ge k_{\rm loc}$ and $\varepsilon_{\rm loc}$
be as in Proposition~\ref{localexistence}.  There exists a constant $C>0$ and an $\epsilon>0$ such that the following is true.

Let  $\tau_f>\tilde\tau_0$ and suppose that there exists a smooth solution $\psi$ of~\eqref{theequation} 
on 
$\cup_{\tilde{\tau}_0<\tau<\tau_f}\mathcal{R}(\tilde\tau_0,\tau)$
such that 
\[
 \Epk(\tau)\leq  \varepsilon_{\rm loc} 
\]
for $k\ge k_{\rm loc}$ and all $\tilde\tau_0\le \tau<\tau_f$. Then, defining $\tau_1:= \tau_f+\epsilon$,
 $\psi$ extends uniquely to a smooth function 
on $\mathcal{R}(\tilde\tau_0,\tau_1)$ satisfying the equation~\eqref{theequation} on this set
and, setting $\tau_0:=\max\{ \tau_f-\epsilon, \tilde{\tau}_0\}$, satisfying the estimate
\begin{equation}
\label{extensionhere}
\Xpk(\tau_0,\tau_1) \leq C\varepsilon_{\rm loc}.
\end{equation}

Moreover, for all $k'\ge k$ and all $2-\delta \ge p'\ge p$, 
there exist constants $C(k')$ such that
\begin{equation}
\label{extensionhigherhigher}
\Xpprimekprime (\tau_0,\tau_1) \leq 
 C(k') \, \Epprimekprime (\tau_0).
\end{equation}
\end{corollary}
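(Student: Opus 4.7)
The plan is to reduce the statement to Proposition~\ref{localexistence} by extracting initial data at a slice $\tau = \tau_0'$ close to $\tau_f$ and propagating forward on a uniform time interval whose length depends only on $\varepsilon_{\rm loc}$.

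First, fix $k\ge k_{\rm loc}$ and define the extension parameter by $\epsilon := \tfrac12 \tau_{\rm exist}(\varepsilon_{\rm loc})$, where $\tau_{\rm exist}$ is the decreasing function supplied by Proposition~\ref{localexistence}. For any $\tau_0'\in[\tilde\tau_0,\tau_f)$, the solution $\psi$ is smooth on a neighbourhood of $\Sigma(\tau_0')$, so restriction provides smooth initial data $(\uppsi,\uppsi') := (\psi|_{\Sigma(\tau_0')},\, n\psi|_{\Sigma(\tau_0')\cap\{r\le R\}})$ satisfying $\Epk[\uppsi,\uppsi'] = \Epk(\tau_0')\le\varepsilon_{\rm loc}$ by assumption. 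Applying Proposition~\ref{localexistence} (with $\tau_0$ there taken to be $\tau_0'$), we obtain a smooth solution $\tilde\psi$ of~\eqref{theequation} on $\mathcal{R}(\tau_0',\tau_0'+\tau_{\rm exist})$ with $\Xpk\le C\varepsilon_{\rm loc}$. By the uniqueness clause of Proposition~\ref{localexistence}, $\tilde\psi$ agrees with $\psi$ on $\mathcal{R}(\tau_0',\tau)$ for every $\tau<\tau_f$ on which $\psi$ is defined.

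Now choose $\tau_0'$ so that $\tau_f-\epsilon\le\tau_0'<\tau_f$ (taking $\tau_0'=\tilde\tau_0$ if $\tau_f-\epsilon\le\tilde\tau_0$); then $\tau_0'+\tau_{\rm exist}=\tau_0'+2\epsilon\ge\tau_f+\epsilon=\tau_1$. The union of $\psi$ with $\tilde\psi|_{\mathcal{R}(\tau_0',\tau_1)}$ is then a well-defined smooth extension of $\psi$ solving~\eqref{theequation} on $\mathcal{R}(\tilde\tau_0,\tau_1)$, and uniqueness within this class follows once more from Proposition~\ref{localexistence}. To obtain~\eqref{extensionhere} in its stated form, one re-invokes Proposition~\ref{localexistence} starting from the slice $\tau_0=\max\{\tau_f-\epsilon,\tilde\tau_0\}$, whose data again satisfy $\Epk(\tau_0)\le\varepsilon_{\rm loc}$; since $\tau_1-\tau_0\le 2\epsilon\le\tau_{\rm exist}(\varepsilon_{\rm loc})$, the local existence estimate directly gives $\Xpk(\tau_0,\tau_1)\le C\varepsilon_{\rm loc}$.

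For the higher-order and higher-weight propagation~\eqref{extensionhigherhigher}, the same argument applies: starting from data on $\Sigma(\tau_0)$ with finite $\Epprimekprime(\tau_0)$, the propagation-of-regularity clause of Proposition~\ref{localexistence} yields $\Xpprimekprime(\tau_0,\tau_1)\le C(k',\tau_1-\tau_0)\,\Epprimekprime(\tau_0)$, and since $\tau_1-\tau_0\le 2\epsilon$ is now a constant depending only on $\varepsilon_{\rm loc}$, the prefactor becomes a pure $C(k')$. There is essentially no obstacle here beyond bookkeeping; the only subtle point is ensuring that the forward extension $\tilde\psi$ constructed from the low-regularity data at $\tau_0'$ automatically inherits whatever higher regularity $\psi$ possesses at $\tau_0'$, which is a consequence of the uniqueness statement together with the propagation-of-regularity clause applied within Proposition~\ref{localexistence}.
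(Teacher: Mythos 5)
The paper declares this an ``easy corollary'' of Proposition~\ref{localexistence} and does not supply a proof, and your argument is precisely the natural one it expects: define $\epsilon$ as half the local existence time $\tau_{\rm exist}(\varepsilon_{\rm loc})$, re-solve from a slice within $\epsilon$ of $\tau_f$, glue via the uniqueness clause, and read off~\eqref{extensionhere}--\eqref{extensionhigherhigher} from the estimates in Proposition~\ref{localexistence} (fixing $\varepsilon_0:=\varepsilon_{\rm loc}$ so the higher-order constant $C(k',\tau_1-\tau_0)=C(k',2\epsilon)$ collapses to a pure $C(k')$). This is correct and matches the intended route; the only slight redundancy is that you invoke Proposition~\ref{localexistence} twice (once to build the extension from some $\tau_0'$, once again from $\tau_0=\max\{\tau_f-\epsilon,\tilde\tau_0\}$ to get the stated estimate), whereas one application from $\tau_0$ already does both jobs.
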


\begin{remark}
\label{remarkonconstantshere}
Recall that by our conventions the constants $C$, $k_{\rm loc}$, $\varepsilon_{\rm loc}$ depend in addition on $k$. 
We may assume that $k_{\rm loc}$ is sufficiently large so that $k\ge k_{\rm loc}$ satisfies the largeness constraint
 of all previous propositions in this Section, and that $\varepsilon_{\rm loc}\le \varepsilon_{\rm prelim}$.
\end{remark}

\begin{remark}
In view of Remark~\ref{semiremsemi}, and the fact that one needs only~\eqref{Afd},
 all the statements of this section also hold in the semilinear case,
under the relaxed assumptions described there.
\end{remark}

\section{The main theorem: global existence and stability}
\label{maintheoremsection}
We may now state the main result of this paper.

\begin{theorem}[Global existence and stability]
\label{themaintheorem}
Consider $(\mathcal{M},g_0)$ 
satisfying the assumptions of Sections~\ref{backgroundgeom} and~\ref{waveequassump}
(for cases (i), (ii) or (iii))
and equation~\eqref{theequation} satisfying the assumptions
of Section~\ref{classofequations} and Assumption~\ref{nullcondassumphere}
of Section~\ref{formofnullcondsec}.

There exists a positive integer $k_{\rm global}\ge k_{\rm local}$ sufficiently large,
such that, given $k\ge k_{\rm global}$, there exists a positive $0<\varepsilon_{\rm global}<\varepsilon_{\rm loc}$ sufficiently small,
and a positive  constant $C>0$ sufficiently large, so that the following holds.

Fix $\tau_0=1$
and consider as in Proposition~\ref{localexistence}
initial data $(\uppsi,\uppsi')$ on $\Sigma(\tau_0)$ for~\eqref{theequation} satisfying
\[
 \Epk[\uppsi,\uppsi']
 \leq\varepsilon_0\leq \varepsilon_{\rm global} ,\qquad
\Eonek[\uppsi,\uppsi']+  \Etwominusdelkminusone[\uppsi,\uppsi']
 \leq\varepsilon_0\leq \varepsilon_{\rm global},\qquad
\Eonek[\uppsi,\uppsi'] +  \Etwominusdelkminustwo[\uppsi,\uppsi']
 \leq\varepsilon_0\leq \varepsilon_{\rm global},
\]
according to case (i), (ii) and (iii) respectively, where in the former case $\delta\le p\le 2-\delta$.
Then the $\psi$ given by Proposition~\ref{localexistence} 
extends to a unique globally defined solution of~\eqref{theequation} in $\mathcal{R}(\tau_0,\infty)$,
satisfying the estimates
\begin{equation}
\label{basicestimate}
{}^\rho\Xpk(\tau_0,\tau) \leq  C\varepsilon_0, 
\qquad{}^\rho\Xonek(\tau_0,\tau)+  {}^\rho\Xtwominusdelkminusone(\tau_0,\tau) \leq  C\varepsilon_0,
\qquad
{}^\rho\Xonek(\tau_0,\tau)+
{}^\rho\Xtwominusdelkminustwo(\tau_0,\tau) \leq  C\varepsilon_0,
\end{equation}
according to cases (i), (ii), (iii), respectively, 
for all $\tau\ge \tau_0$, in particular 
\[
\Epk(\tau) \leq  C\varepsilon_0, 
\qquad \Eonek(\tau)+ \Etwominusdelkminusone(\tau) \leq  C\varepsilon_0,
\qquad
\Eonek(\tau)+
\Etwominusdelkminustwo(\tau) \leq  C\varepsilon_0,
\]
respectively.

The solution will satisfy moreover estimates~\eqref{finalforcasei} in case (i), estimates~\eqref{finalboundedness}
and~\eqref{resumefromhere}--\eqref{weimprovehere} in
case (ii) and estimates~\eqref{finalboundednessforiii}--\eqref{weimproveheretwoforiii} in case (iii).
\end{theorem}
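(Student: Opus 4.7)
\medskip

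\noindent\textbf{Proof proposal.} The plan is to implement the dyadically localised iteration sketched in Section 1.3 of the introduction. Fix the initial slice $\tau_0 = 1$ and, for $i \ge 0$, set $\tau_i = 2^i$ so that $L_i := \tau_{i+1} - \tau_i = 2^i$. Local existence on $\mathcal{R}(\tau_0, \tau_0 + \tau_{\rm exist})$ is supplied by Proposition~\ref{localexistence}, and the extension criterion of Corollary~\ref{continuash} reduces global existence to obtaining a uniform-in-$\tau$ bound on $\Epk(\tau)$ (in case (i)) or on $\Eonek + \Etwominusdelkminusn$ for $n=1,2$ (in cases (ii)--(iii)). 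I will focus on case (iii), which contains the essential new difficulties; cases (i) and (ii) follow by a mild simplification, the former requiring no iteration at all (using only \eqref{Afd}) and the latter requiring iteration but no use of the $\mathfrak{E}$/$\mathcal{E}$ coupling.

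The heart of the matter is a single-slab estimate of the following form: for each $i$, assuming the bootstrap
\begin{equation*}
\Xzerolesslessk(\tau_i, \tau_{i+1}) \le \varepsilon, \qquad 0 < \varepsilon \le \varepsilon_{\rm prelim},
\end{equation*}
I would combine \eqref{addtermoneherewithflux} at orders $k$ and $k-2$ (for $p=1$ and $p=2-\delta$) with \eqref{addtermoneherechi} at one order lower, and \eqref{addtermoneherewithfluxzero}--\eqref{addtermoneherechizero} for the $p=0$ weights, using crucially the relation $\Eerrorkminusone' \lesssim {}^\chi \Ezerominusoneminusdeltakminusone' + \Ezerominusoneminusdeltakminustwo'$ (which holds because $\xi$ vanishes where $\chi \ne 1$) to absorb the $A\int \Eerrorkminusone'$ boxed term on the right-hand side of \eqref{addtermoneherewithflux} into the bulk of the $\mathcal{E}$-estimate \eqref{addtermoneherechi} at one order lower. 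This absorption is precisely the device that circumvents the apparent loss of derivatives in \eqref{addtermoneherechi}; one places the top-order energies in the coercive $\mathfrak{E}$ scheme (no loss) and uses the $\mathcal{E}$ scheme only at lower order where the derivative loss is harmless. A standard continuity argument, together with the continuation criterion, then yields existence on the full slab once $\Epk[\uppsi,\uppsi']$ and the weighted norms are taken smaller than an $L_i$-dependent threshold.

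To recover and iterate, I would show that if the initial data on $\Sigma(\tau_i)$ satisfy
\begin{equation*}
\Efancyonek(\tau_i) \le \hat\varepsilon_i, \quad \Efancytwominusdelkminustwo(\tau_i) \le \hat\varepsilon_i, \quad \Efancyzerokminustwo(\tau_i) \le \hat\varepsilon_i \alpha L_i^{-1}, \quad \Efancyonekminusfour(\tau_i) \le \hat\varepsilon_i \alpha L_i^{-1+\delta}, \quad \Efancyzerokminussix(\tau_i) \le \hat\varepsilon_i \alpha^2 L_i^{-2+\delta},
\end{equation*}
then the analogous inequalities hold at $\Sigma(\tau_{i+1})$ with $L_i \mapsto L_{i+1} = 2L_i$ and $\hat\varepsilon_i \mapsto \hat\varepsilon_{i+1} = \hat\varepsilon_i(1 + \alpha L_i^{-1/4})$. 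The top-order bounds propagate with constant exactly $1$ because they are written in terms of $\mathfrak{E}$ rather than $\mathcal{E}$ and the hierarchy \eqref{addtermoneherewithflux} has a $1$ in front of $\Efancypk(\tau_0)$. The decay-in-$L$ bounds are upgraded by the pigeonhole principle of~\cite{DafRodnew} applied to the relation $\Epminusonek' \gtrsim \Epminusonek$ contained in \eqref{higherorderfluxbulkrelation}: the $p$-weighted bulk term at order $k-2$ bounds the $(p-1)$-weighted flux at order $k-2$, so integrating over the slab of length $L_i$ produces a time $\tau'' \in [\tau_i, \tau_{i+1}]$ on which the $(p-1)$-weighted flux is $\lesssim \hat\varepsilon_i L_i^{-1}$; propagating this slice-bound to $\tau_{i+1}$ via \eqref{addtermoneherewithflux} (or its $p=0$ analogue) and interpolating with the uniform upper-weight bound via Proposition~\ref{interpolationprop} gives the required decay factors. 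Since $\prod_i (1 + \alpha L_i^{-1/4}) < \infty$, the sequence $\hat\varepsilon_i$ stays bounded by $C\varepsilon_0$, establishing \eqref{basicestimate} and the remaining decay statements.

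The main obstacle, and the place where care is needed, is the simultaneous handling of three competing mechanisms in the single-slab estimate: the $A$-error absorption must be compatible with the derivative loss in \eqref{addtermoneherechi}, the anomalous $p=0$ norms on the right-hand side of \eqref{addtermoneherewithfluxzero}--\eqref{addtermoneherechizero} force one to always control a $p>0$ weight concurrently (which is why the hierarchy cannot close at a single weight and both $p=1$ and $p=2-\delta$ must be propagated), and the bootstrap assumption on $\Xzerolesslessk$ must be retrieved at level $\varepsilon \sim L_i^{-1}$ in each slab, which in turn requires the pigeonhole gains to be set up before the bootstrap is closed. Careful bookkeeping of the loss of derivatives at each pigeonhole step (each pigeonhole drops two levels of regularity, so the scheme requires $k$ sufficiently large depending on how many decay factors one wishes to extract) and of the constants $\alpha$ and $\hat\varepsilon_0$ relative to $\varepsilon_{\rm prelim}$ and $\varepsilon_{\rm loc}$ completes the argument.
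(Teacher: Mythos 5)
Your proposal is correct and follows essentially the same dyadic iteration scheme as the paper: the same slab-by-slab hierarchy combining the coercive $\mathfrak{E}$-identity at top order with the black-box $\mathcal{E}$-estimate one order lower, the same absorption of the $A$-error via $\Eerrorkminusone' \lesssim {}^\chi \Ezerominusoneminusdeltakminusone' + \Ezerominusoneminusdeltakminustwo'$, the same pigeonhole-plus-interpolation device to gain a factor of $L^{-1}$ and then $L^{-1+\delta}$ at the cost of two derivatives per step, and the same inductive update $\hat\varepsilon_{i+1} = \hat\varepsilon_i(1 + \alpha L_i^{-1/4})$ with $\prod_i(1+\alpha L_i^{-1/4}) < \infty$. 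Your sketch compresses the bookkeeping (the paper carries the ladder of orders $k, k-1, \ldots, k-7$ explicitly through Propositions 6.3.2--6.3.3 and also propagates a separate $\mathcal{E}$-bound at order $k$ to invoke the equivalence $\Efancypk \sim \Epk$), but the ideas, their sequencing, and the role of each mechanism match the paper's proof of case (iii), with cases (i) and (ii) handled as you indicate.
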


We shall give the proof of this theorem in Section~\ref{estimatehier}.

\begin{bigremark} 
\label{bigremarkonsemi}
Our theorem  also applies
under the relaxed assumptions of Remark~\ref{semiremsemi}, 
where~\eqref{theequation} is however required to be semilinear.  Here we can again distinguish case (i) where $\chi=1$ and
case (ii) where $\chi$ is allowed to degenerate. In case (i), the theorem holds as stated except for~\eqref{finalforcasei}, which
no longer applies. Similarly, in case (ii), the theorem holds as stated except for~\eqref{finalboundedness} which no longer applies.  
The necessary modifications will be collected in a series of remarks in Sections~\ref{easycase} and~\ref{itscaseiinow}. 
(See already Section~\ref{thereisnoalternative} for the completion of case (ii).)
\end{bigremark}

We reiterate (cf.~Propositions~\ref{itholdsthenull},  Theorem~\ref{AppendixTheo} and Theorem~\ref{trueforKerr}) that the above theorem in particular holds for equations~\eqref{theequation} on the Kerr spacetime $(\mathcal{M},g_{a,M})$ for $|a|\ll M$, with quasilinear terms
as described in Section~\ref{classofequations} and the general semilinear terms considered in~\cite{MR3082240}, and, by the above remark,
restricted to  the purely semilinear case,
for the full range of parameters $|a|<M$.

\section{The estimate hierarchies and global existence and decay}
\label{estimatehier}

We
proceed with the proof of Theorem~\ref{themaintheorem} in cases (i), (ii) and (iii). These will
be treated in Sections~\ref{easycase},~\ref{itscaseiinow} 
and~\ref{andnotitsiii}  below, respectively.

Note that we will now suppress some of the cumbersome notation with the following conventions:

In this section, the fraktur energies $\mathfrak{E}$, $\mathfrak{F}$ will everywhere
denote $\mathfrak{E}(g)$, $\mathfrak{F}(g)$, so we will drop explicit reference to $g$ here.
When the region $\mathcal{R}(\tau_0,\tau_1)$ is assumed, we will omit the $\tau_0$ or $(\tau_0,\tau_1)$ arguments 
in various energies, etc.,
and denote $\mathcal{X}(\tau_0,\tau_1)$ by $\mathcal{X}$.

The smallness parameter $\varepsilon_{\rm global}$ will be determined in the context of the proof. 
We will in fact introduce other parameters on the way; these will be in the relation
\begin{equation}
\label{discofeps}
\varepsilon_{\rm global} \le  \hat{\varepsilon}_{\rm slab} \le \varepsilon_{\rm slab} \le \varepsilon_{\rm boot} \le \varepsilon_{\rm loc},
\end{equation}
where parameters $\hat{\varepsilon}_{\rm slab}$,  $\varepsilon_{\rm slab}$ will only appear for cases (ii) and (iii).
The parameter $\varepsilon_{\rm boot}$ will be used in the context of a continuity or bootstrap argument. Let us remark already that 
if the basic bootstrap estimate~\eqref{basicbootstrap} holds for an $\varepsilon\le \varepsilon_{\rm boot}$, then by  Remark~\ref{remarkonconstantshere},
 $\varepsilon$ will satisfy the smallness requirements of all propositions of Section~\ref{prelimsec},   for the regions under consideration.

\subsection{Case (i)}
\label{easycase}

Case (i) is the most elementary case, where, moreover only the minimal nontrivial
$r^p$ weights for $2-\delta \ge p\ge \delta$
are required on data. (For instance, one may fix $p=\delta$.)
We present first the fundamental estimate in Section~\ref{caseonehierarchy} that holds
under the bootstrap assumption~\eqref{basicbootstrap} and then
carry out the global existence proof in Section~\ref{globalexistenceproof}.

The proof will also hold for the semilinear case 
 with the relaxed assumptions of  Remark~\ref{semiremsemi}, if we are in the analogue of case (i), i.e.~if $\chi=1$.
The modifications necessary to treat this case are described in a series of remarks (see already 
Remarks~\ref{remforcaseisemi} and~\ref{caseisemifinal}).

\subsubsection{The fundamental estimate}
\label{caseonehierarchy}

The fundamental estimate is given simply by the following:
\begin{proposition}
\label{ihierprop}
Let $\delta\le p\le 2-\delta$ and let $k$ be sufficiently large, and let us assume the case (i) assumptions.
There exists an $\varepsilon_{\rm boot}>0$ sufficiently small so that the following is true.

Let $\psi$ solve~\eqref{theequation} on $\mathcal{R}(\tau_0,\tau_1)$, satisfying 
moreover~\eqref{basicbootstrap} with our chosen $p$ and with $0<\varepsilon\le \varepsilon_{\rm boot}$.  Then we have:
\begin{equation}
\label{pweighted}
\Xpk \lesssim \Epk(\tau_0) , \qquad \sup_{\tau_0\le \tau\le \tau_1} \Efancypk(\tau) \leq \Efancypk(\tau_0) (1+C\varepsilon^{1/2}).
\end{equation}
\end{proposition}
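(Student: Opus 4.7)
The plan is to apply Proposition~\ref{finalpropest} in case (i), combine it with Corollary~\ref{nomoreroman}, and absorb all nonlinear contributions using the bootstrap hypothesis~\eqref{basicbootstrap}.

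In case (i) the degeneration functions satisfy $\chi\equiv\rho\equiv\tilde\rho\equiv 1$ and the constants $A=\tilde A=0$, so Corollary~\ref{nomoreroman} (valid once $\varepsilon\le\varepsilon_{\rm prelim}$) gives ${}^{\chi}\Xpk\sim{}^{\rho}\Xpk\sim\Xpk$ and $\Epk(\tau)\sim\Efancypk(\tau)$ throughout $\mathcal{R}(\tau_0,\tau_1)$. Applying estimate~\eqref{addtermoneherewithflux} of Proposition~\ref{finalpropest} then yields, for every $\tau\in[\tau_0,\tau_1]$,
\[
\sup_{v}\Ffancyp(v,\tau_0,\tau),\ \ c\,\Xpk(\tau_0,\tau_1),\ \ \Efancypk(\tau) \ \le\ \Efancypk(\tau_0)+C\cdot\mathrm{NL},
\]
where the leading constant on $\Efancypk(\tau_0)$ is exactly $1$ (thanks to the fraktur formulation), the $A$-term vanishes, and $\mathrm{NL}$ denotes the three remaining nonlinear contributions.

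I estimate each of these by $C\sqrt{\varepsilon}\,\Xpk$ using the bootstrap $\Xplesslessk\le\varepsilon$. The algebraic terms $\Xpk_{\frac{8R}{9}}\sqrt{\Xzerolesslessk_{\frac{8R}{9}}}$ and $\sqrt{\Xpk_{\frac{8R}{9}}}\sqrt{\Xzerok_{\frac{8R}{9}}}\sqrt{\Xplesslessk_{\frac{8R}{9}}}$ are dominated by $\sqrt{\varepsilon}\,\Xpk$ via the trivial inclusions $\Xzerok\le\Xpk$ and $\Xzerolesslessk\le\Xplesslessk\le\varepsilon$. For the spacetime integral $\int\Ezerok(\tau')\sqrt{\Ezerominusoneminusdeltalesslessk'(\tau')}\,d\tau'$ I apply Cauchy--Schwarz in $\tau'$, using $\Ezerok(\tau')^{2}\le\sup\Ezerok\cdot\Ezerok(\tau')\le\Xpk\cdot\Ezerok(\tau')$ and $\int\Ezerominusoneminusdeltalesslessk'\,d\tau'\le\Xzerolesslessk\le\varepsilon$, to reduce to controlling $\int\Ezerok\,d\tau'$. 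In case (i) the non-degenerate bulk $\Epminusonek'\subset\Xpk$ dominates $\Ezerok$ globally---directly when $p\ge 1$ via $\Epminusonek'\gtrsim\Ezerok$, and for $\delta\le p<1$ upon noting that the integrand originates from Proposition~\ref{nearprop}, whose bound is in fact localised to $r\le R$ where $\Ezerok|_{r\le R}\lesssim\Ezerominusoneminusdeltak'$ without any $r$-weight loss. This yields $\int\Ezerok\,d\tau'\lesssim\Xpk$ with no factor of $\tau_1-\tau_0$, so the integral is bounded by $\sqrt{\varepsilon}\,\Xpk$.

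Choosing $\varepsilon_{\rm boot}$ small enough that $C\sqrt{\varepsilon_{\rm boot}}\le c/2$ permits absorption of $C\cdot\mathrm{NL}$ into the left-hand side, giving $\Xpk\lesssim\Efancypk(\tau_0)\lesssim\Epk(\tau_0)$, the first bound in~\eqref{pweighted}. For the sharp constant in the second bound, substituting the established $\Xpk\le C\Epk(\tau_0)\sim C\Efancypk(\tau_0)$ back into $\mathrm{NL}$ and using that the coefficient of $\Efancypk(\tau_0)$ is exactly $1$ yields $\Efancypk(\tau)\le\Efancypk(\tau_0)(1+C\sqrt{\varepsilon})$ uniformly in $\tau\in[\tau_0,\tau_1]$. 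The principal technical obstacle is the absorption of the spacetime integral without introducing a $(\tau_1-\tau_0)^{1/2}$ factor; this is precisely what the non-degeneration in case (i) permits, and it is the feature that makes the estimate time-translation invariant, obviating any need for dyadic iteration.
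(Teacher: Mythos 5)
Your proof is correct and follows the same overall strategy as the paper's: combine estimate~\eqref{addtermoneherewithflux} of Proposition~\ref{finalpropest} (with $\rho=\chi=\tilde\rho=1$ and $A=\tilde A=0$ in case (i)), use Corollary~\ref{nomoreroman} to pass between $\mathfrak{E}$ and $\mathcal{E}$ energies, and absorb the nonlinear terms using the bootstrap assumption~\eqref{basicbootstrap}.

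Where you go beyond the paper is in your treatment of the third nonlinear term $\int_{\tau_0}^{\tau_1}\Ezerok(\tau')\sqrt{\Ezerominusoneminusdeltalesslessk'(\tau')}\,d\tau'$. The paper's proof simply asserts that all nonlinear contributions are $\lesssim\Xpk\sqrt{\Xplesslessk}$ and leaves the absorption implicit. For the first two algebraic terms this is immediate, but for the third term the naive Cauchy--Schwarz produces $\sqrt{\sup\Ezerok}\cdot\sqrt{\int\Ezerok\,d\tau'}\cdot\sqrt{\Xzerolesslessk}$, and $\int\Ezerok\,d\tau'$ is dominated by the bulk $\int\Epminusonek'\,d\tau'\subset\Xpk$ only when $p\ge 1$ (via~\eqref{higherorderfluxbulkrelation}); it is not dominated for $\delta\le p<1$, which is precisely the case invoked in the proof of the main theorem ("one may fix $p=\delta$"). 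You correctly identify and resolve this: the term comes from Proposition~\ref{nearprop}, whose estimate concerns integrals over $\{r\le R\}$ where $\Ezerok|_{r\le R}\lesssim\Ezerominusoneminusdeltak'$, and $\int\Ezerominusoneminusdeltak'\le\Xzerok\lesssim\Xpk$ for every $p\ge\delta$. This observation is needed to make the estimate time-translation invariant (no $\sqrt{\tau_1-\tau_0}$ factor) for small $p$, which is the crux of case (i). In short, your argument is the same as the paper's but fills in a step the paper states without justification, and your closing observation about why no $(\tau_1-\tau_0)^{1/2}$ appears is exactly the right way to understand why case (i) requires no dyadic iteration.
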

\begin{proof}
Let~$\varepsilon_{\rm boot}\le \varepsilon_{\rm loc}$.  The assumption of Proposition~\ref{finalpropest} is satisfied in view
of our discussion following~\eqref{discofeps}.
From estimate~\eqref{addtermoneherewithflux} of 
Proposition~\ref{finalpropest}, since in case (i) we have $\rho=1$, $A=0$, $\tilde{A}=0$, it follows that 
\begin{equation}
\label{pweightedproof}
\Xpk \lesssim \Epk(\tau_0) 
+ \Xpk
\sqrt{\Xplesslessk}  ,\qquad \sup_{\tau_0\le \tau\le \tau_1} \Efancypk(\tau)  \leq \Efancypk(\tau_0) 
+ C\Xpk
\sqrt{\Xplesslessk}.
\end{equation}
By our bootstrap assumption~\eqref{basicbootstrap}, we have
\begin{equation}
\label{andtheboot}
\Xplesslessk \leq \varepsilon.
\end{equation}
Thus, possibly requiring $\varepsilon_{\rm boot}$ to be even smaller, we may absorb the nonlinear term in the first inequality
of~\eqref{pweightedproof} to obtain the first inequality of~\eqref{pweighted}.

To obtain the second inequality of~\eqref{pweighted}, we remark that Corollary~\ref{nomoreroman} 
applies to yield in particular
\begin{equation}
\label{heretheyreequi}
\Epk(\tau_0) \sim \Efancypk(\tau_0).
\end{equation}
The second inequality of~\eqref{pweighted} now immediately follows from the 
 the second inequality of~\eqref{pweightedproof}, plugging in the
 first inequality of~\eqref{pweighted}
just established,~\eqref{heretheyreequi}
 and~\eqref{andtheboot}. 
\end{proof}

\begin{remark}
\label{remforcaseisemi}
Under the relaxed assumptions of Remark~\ref{semiremsemi}, 
where~\eqref{theequation} is however required to be semilinear and we are in the analogue of case (i), i.e.~$\chi=1$,
then from~\eqref{addtermoneherechizeroforsemi} we obtain the first inequality of~\eqref{pweightedproof}. Thus, the above proposition
holds as stated for the first inequality of~\eqref{pweighted}.
\end{remark}

\subsubsection{Proof of Theorem~\ref{themaintheorem} in case (i)}
\label{globalexistenceproof}

We now carry out the proof of Theorem~\ref{themaintheorem} proper in case (i).

Let $\delta\le p\le 2-\delta$ be as in the statement and
recall the assumption
\[
 \Epk(\tau_0) \leq \varepsilon_0\leq\varepsilon_{\rm global}
 \]
on initial data, for a sufficiently small $\varepsilon_{\rm global}$ to be determined.

Consider
the set $\mathfrak{B}$ consisting of all $\tau_f>\tau_0$ such that a solution $\psi$ of~\eqref{theequation}
obtaining the data exists on $\mathcal{R}(\tau_0,\tau_f)$,
and such that moreover 
the energy bootstrap assumption~\eqref{basicbootstrap} 
holds on $\mathcal{R}(\tau_0,\tau_1:=\tau_f)$ with our chosen $p$ and where $0< \varepsilon\le \varepsilon_{\rm boot}$ is chosen to 
satisfy
\begin{equation}
\label{satisfyingapp}
1\gg \varepsilon\gg \varepsilon_{\rm global}.
\end{equation}
(The above relation in particular constrains $\varepsilon_{\rm global}$ to be small.)
By the local well posedness statement Proposition~\ref{localexistence},
it follows
that since $k\ge k_{\rm loc}$ and $\varepsilon_0\leq \varepsilon_{\rm global}\leq  \varepsilon_{\rm loc}$,
we have that $\tau_0+\tau_{\rm exist}\in \mathfrak{B}$ and thus $\mathfrak{B}\ne\emptyset$.
Also note that a fortiori, if $\tau_f\in \mathfrak{B}$, then $(\tau_0,\tau_f]\subset \mathfrak{B}$
and thus $\mathfrak{B}$  is manifestly a connected  subset of $(\tau_0,\infty)$.

For any $\tau_1\in \mathfrak{B}$, Proposition~\ref{ihierprop} applies in $\mathcal{R}(\tau_0,\tau_1)$.
The first estimate of~\eqref{pweighted} then yields
\begin{equation}
\label{whatwehavehere}
\Xpk\lesssim \Epk(\tau_0) \lesssim \varepsilon_0.
\end{equation}

If follows that
for sufficiently small $\varepsilon_{\rm glob}$, and all $\varepsilon_0\leq \varepsilon_{\rm glob}$,
the continuation criterion Corollary~\ref{continuash} 
applies to obtain that $\psi$ extends as a solution to~\eqref{theequation}
on some $\mathcal{R}(\tau_0,\tau_f+\epsilon)$, where $\epsilon$ is independent of $\tau_f$, and that, considering now $\tau_1:= \tau_f+\epsilon$,
inequality~\eqref{whatwehavehere} holds also in 
$\mathcal{R}(\tau_0,\tau_1:=\tau_f+\epsilon)$
(with a slightly different implicit constant than that of~\eqref{whatwehavehere}).
(To infer this from~\eqref{extensionhere}, which was an estimate on $\Xpk(\tau_f,\tau_f+\epsilon)$,  note
that $\Xpk(\tau_0,\tau_1)\lesssim \Xpk(\tau_0,\tau_f)+\Xpk(\tau_f,\tau_f+\epsilon)$.)

Finally we note that for $\varepsilon$ satisfying~\eqref{satisfyingapp},
inequality~\eqref{whatwehavehere} in $\mathcal{R}(\tau_0,\tau_1:=\tau_f+\epsilon)$
shows in particular that inequalities~\eqref{basicbootstrap} hold in this set.
It follows by the definition of $\mathfrak{B}$ that  $\tau_f+\epsilon\in \mathfrak{B}$ and thus, given also its connectedness, the set $\mathfrak{B}$ is open.
Since $\epsilon$ does not depend on $\tau_f$, it follows that $\mathfrak{B}$ is also closed, as any limit point $\tau_{\rm limit}$
of $\mathfrak{B}$ in $(\tau_0,\infty)$ satisfies $\tau_{\rm limit} \le \tau_f+\epsilon$ for some $\tau_f \in \mathfrak{B}$.
We have shown that $\mathfrak{B}$ is a non-empty open and closed subset of $(\tau_0,\infty)$ and thus,
$\mathfrak{B}=(\tau_0,\infty)$. Hence  the solution $\psi$ 
exists globally in $\mathcal{R}(\tau_0,\infty)$ and 
satisfies~\eqref{whatwehavehere} in $\mathcal{R}(\tau_0,\tau_1)$ where $\tau_1$ is now
any $\tau_1>\tau_0$. This gives~\eqref{basicestimate}.

Revisiting~\eqref{pweighted} 
we see finally that we have the following more precise global  estimate:
\begin{equation}
\label{finalforcasei}
\Efancypk(\tau) \leq \Efancypk(\tau_0)(1 +C \varepsilon_0^{1/2}).
\end{equation}
This completes the proof.

\begin{remark}
\label{caseisemifinal}
Note that in view of Remark~\ref{remforcaseisemi}, 
the above proof goes through,
except for the last paragraph involving~\eqref{finalforcasei},
 also under the relaxed assumptions of Remark~\ref{semiremsemi}, 
where~\eqref{theequation} is however required to be semilinear and we are in the analogue of case (i), i.e.~$\chi=1$.
Thus, in this case, one indeed obtains the statement as given in Remark~\ref{bigremarkonsemi}. 
\end{remark}

\subsection{Case (ii)}
\label{itscaseiinow}

Case (ii) is the second simplified case which we shall consider.
Unlike case (i), we shall require proving actual $\tau$-decay above a certain threshold in order to close, and this
in turn will require raising $p$ to $p=2-\delta$ in our initial energy assumptions.

The logic of the proof is a simplified version of the scheme we shall use for the general case
and which has already been summarised in the introduction:

\begin{enumerate}
\item
Rather than bootstrap directly $t$-weighted estimates, we formulate the fundamental estimates, depending only on
the bootstrap assumption~\eqref{basicbootstrap},
as a hierarchy of estimates on a spacetime slab of $\tau$-length at most $L$. 
This is the content of Section~\ref{hierarchyfortwo}.

\item
We shall then show by a bootstrap argument, \emph{restricted to such a slab}, how global existence
and estimates on the slab can be proven, with suitable assumptions on the initial data of the slab, which now
involve $L$.  This is the content of Section~\ref{globonellfortwo}.

\item Still restricted to a given slab of length $L$, we will
show that one can in fact a posteriori improve the above estimates
under suitable additional assumptions on the initial data,
and, using a pigeonhole argument,  show moreover an improved estimate
for any  $\Sigma(\tau')$ slice near the top of
the slab.  This is the content of Section~\ref{pigeons}.

\item
Finally, global existence and $\tau$ decay now follow by iterating the above estimates on 
a consecutive sequence of spacetime slabs of dyadic time-length $L_i=2^i$. This is the content of Section~\ref{theiterationfortwo}. 
\end{enumerate}

We shall in addition give an alternative iteration proof in Section~\ref{thereisnoalternative}, which does not use the exact boundedness
statement of the fraktur energies.
The advantage of this alternative proof is that it allows us to treat the semilinear case with the relaxed assumptions of  Remark~\ref{semiremsemi}. 
The modifications necessary to treat this case are described in a series of remarks (see already 
Remarks~\ref{remforcaseiisemi},~\ref{relaxedremark},~\ref{holdsinSemilinear} and~\ref{procremark}).

\subsubsection{The hierarchy of inequalities} 
\label{hierarchyfortwo}

\begin{proposition}
\label{iihierprop}
Let $k$ be sufficiently large and let us assume the case (ii) assumptions. 
There exist constants $C>0$, $c>0$ and an $\varepsilon_{\rm boot}>0$ sufficiently small 
such that the following is true.

Consider a region $\mathcal{R}(\tau_0,\tau_1)$
and a $\psi$ solving~\eqref{theequation} on  $\mathcal{R}(\tau_0,\tau_1)$, satisfying 
moreover~\eqref{basicbootstrap} with $p=0$ and with $0<\varepsilon\le \varepsilon_{\rm boot}$.
Let us assume moreover that 
\[
\tau_1\le \tau_0+L
\]
for some arbitrary $L>0$.
We have the following hierarchy of inequalities on $\mathcal{R}(\tau_0,\tau_1)$:
\begin{eqnarray}
 \label{pweightedii}
\Ffancytwominusdelk(v,\tau_1),\quad
\Efancytwominusdelk(\tau_1) , \quad  c\,{}^\chi\Xtwominusdelk
&\leq& \Efancytwominusdelk(\tau_0)  +C\, \left(  {}^\chi\Xtwominusdelk\sqrt{\Xzerolesslessk} + \sqrt{{}^\chi\Xtwominusdelk}\sqrt{ {}^\chi\Xzerok}
\sqrt{\Xtwominusdeltalesslessk} \right)+
C\, {}^\chi\Xzerok\sqrt{\Xzerolesslessk}\sqrt{L},
  \\
\label{pminusoneweightedii}
\Ffancyonek(v,\tau_1),\quad
\Efancyonek(\tau_1),\quad c\, {}^\chi \Xonek&\leq& \Efancyonek(\tau_0) 
+ C\, {}^\chi \Xonek
\sqrt{\Xonelesslessk} 
 + C\, {}^\chi\Xzerok\sqrt{\Xzerolesslessk}\sqrt{L},   \\
\nonumber
\Ffancyzerok(v,\tau_1),\quad \Efancyzerok(\tau_1), \quad c\,{}^\chi \Xzerok &\leq& 
\Efancyzerok(\tau_0)
+ C\left( \,{}^\chi \Xzerok + (\,{}^\chi \Xzerok)^{\frac{1-\delta}{1+\delta}} (\, {}^\chi \Xonek)^{\frac{2\delta}{1+\delta}} \right)
\sqrt{\Xzerolesslessk+ (\,\Xzerolesslessk)^{\frac{1-\delta}{1+\delta}} (\, \Xonelesslessk)^{\frac{2\delta}{1+\delta}}}\\
\label{nonpweightedii}
&&\qquad 
+ C\, {}^\chi\Xzerok\sqrt{\Xzerolesslessk}\sqrt{L}.
\end{eqnarray}
\end{proposition}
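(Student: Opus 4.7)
The plan is to deduce the three inequalities in \eqref{pweightedii}--\eqref{nonpweightedii} by applying estimate \eqref{addtermoneherewithflux} of Proposition~\ref{finalpropest} at weights $p=2-\delta$ and $p=1$, and estimate \eqref{addtermoneherewithfluxzero} at $p=0$, then simplifying using the structure of case (ii). In case (ii) we have $A=\tilde A=0$, $\tilde\rho=1$ and $\rho=\chi$; the smallness hypothesis of Proposition~\ref{finalpropest} is met provided $\varepsilon_{\rm boot}\leq\varepsilon_{\rm prelim}$. Consequently the $A\int\Eerrorkminusone'$ terms in \eqref{addtermoneherewithflux} and \eqref{addtermoneherewithfluxzero} are absent, and by Corollary~\ref{nomoreroman} the equivalences ${}^\chi \Xpk\sim{}^\rho \Xpk$ hold for each $p\in\{0\}\cup[\delta,2-\delta]$. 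Combining this with Remark~\ref{alreadyreplace} allows one to replace each $_{8R/9}$-localised ${}^\rho$-decorated quantity on the right-hand side of \eqref{addtermoneherewithflux} and \eqref{addtermoneherewithfluxzero} by the corresponding global ${}^\chi$-decorated one.

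The three $\sqrt{L}$ terms appearing on the right-hand sides of \eqref{pweightedii}--\eqref{nonpweightedii} all arise from a single mechanism, namely the anomalous integral $\int_{\tau_0}^{\tau_1}\Ezerok(\tau')\sqrt{\Ezerominusoneminusdeltalesslessk'(\tau')}\,d\tau'$ that appears on the right-hand side of both estimates of Proposition~\ref{finalpropest}. Pulling $\Ezerok$ out as a supremum and applying Cauchy--Schwarz in $\tau'$ using $\tau_1-\tau_0\leq L$ gives
\[
\int_{\tau_0}^{\tau_1}\Ezerok(\tau')\sqrt{\Ezerominusoneminusdeltalesslessk'(\tau')}\,d\tau'\leq\sup_{\tau'}\Ezerok(\tau')\cdot\sqrt{L}\cdot\sqrt{\int_{\tau_0}^{\tau_1}\Ezerominusoneminusdeltalesslessk'(\tau')\,d\tau'}\lesssim{}^\chi\Xzerok\sqrt{\Xzerolesslessk}\sqrt{L},
\]
which is precisely the final term in each of \eqref{pweightedii}--\eqref{nonpweightedii}.

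With these reductions the $p=2-\delta$ estimate \eqref{pweightedii} is then immediate from \eqref{addtermoneherewithflux}: the two nonlinear terms in that estimate directly produce the two bracketed nonlinear terms of \eqref{pweightedii}, and the coercive $c\,{}^\chi\Xtwominusdelk$ on the left arises by combining the bulk integral on the left of \eqref{addtermoneherewithflux} with \eqref{firstrelatsthreeperturbed} and the elliptic equivalences of Corollary~\ref{nomoreroman}. The $p=1$ estimate \eqref{pminusoneweightedii} follows analogously; the two distinct nonlinear terms of \eqref{addtermoneherewithflux} are now combined into a single $C\,{}^\chi\Xonek\sqrt{\Xonelesslessk}$ by using the pointwise comparisons ${}^\chi\Xzerok\lesssim{}^\chi\Xonek$ (since the $p=0$ bulk integrand is dominated in $r\ge R$ by its $p=1$ counterpart, modulo the harmless lower order relation \eqref{therelationweknow}) and $\sqrt{\Xzerolesslessk}\leq\sqrt{\Xonelesslessk}$, together with Young's inequality. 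The exact unit constant in front of $\Efancytwominusdelk(\tau_0)$ and $\Efancyonek(\tau_0)$ propagates directly from \eqref{addtermoneherewithflux}.

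The main obstacle is the anomalous $p=0$ estimate \eqref{nonpweightedii}, whose right-hand side features the stronger energy $\Xzeroplusk$ (containing $\int\Edeltaminusonek'\,d\tau'$) in place of $\Xzerok$. Here one invokes the interpolation \eqref{interpolationstatementbulk} of Proposition~\ref{interpolationprop}, namely
\[
\Edeltaminusonek'(\tau)\lesssim\bigl(\Ezerominusoneminusdeltak'(\tau)\bigr)^{\frac{1-\delta}{1+\delta}}\bigl(\Ezerok'(\tau)\bigr)^{\frac{2\delta}{1+\delta}},
\]
integrating in $\tau'$ and applying H\"older with conjugate exponents $\tfrac{1+\delta}{1-\delta},\tfrac{1+\delta}{2\delta}$ to obtain
\[
\int_{\tau_0}^{\tau_1}\Edeltaminusonek'(\tau')\,d\tau'\lesssim({}^\chi\Xzerok)^{\frac{1-\delta}{1+\delta}}({}^\chi\Xonek)^{\frac{2\delta}{1+\delta}},
\]
where the bulk integrand of ${}^\chi\Xonek$ at $p=1$ dominates $\int\Ezerok'\,d\tau'$ up to lower order terms controlled by Corollary~\ref{nomoreroman}. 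Applying the identical interpolation to the bootstrap quantity $\Xzeropluslesslessk$ reproduces the radical factor on the right of \eqref{nonpweightedii}. Absorbing $c\,{}^\chi\Xzerok$ on the left, combining with the $\sqrt{L}$ treatment above, and using the equivalence $\Efancyzerok\sim\Ezerok$ from Corollary~\ref{nomoreroman} completes the argument. The persistent bookkeeping challenge throughout is carefully tracking the distinctions between undecorated, ${}^\chi$-decorated, and ${}^\rho$-decorated $\Xpk$ quantities at each order, all of which are collapsed via Corollary~\ref{nomoreroman}.
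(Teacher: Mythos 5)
Your proof is correct and follows essentially the same route as the paper: apply \eqref{addtermoneherewithflux} at $p=2-\delta,1$ and \eqref{addtermoneherewithfluxzero} at $p=0$, extract the $\sqrt{L}$ term from the near-region bulk by Cauchy--Schwarz in $\tau'$, invoke Corollary~\ref{nomoreroman} together with Remark~\ref{alreadyreplace} (equivalently \eqref{therelationweknow} and \eqref{implicationhere}) to collapse the ${}^\rho$/${}^\chi$/far-away variants of the master energies in case (ii), and use the interpolation \eqref{interpolationstatementbulk} to handle the anomalous $p=0$ terms. Two inessential remarks: the citation of \eqref{firstrelatsthreeperturbed} is superfluous since the coercive $c\,{}^\rho\Xpk$ is already packaged on the left side of \eqref{addtermoneherewithflux}; and in the $p=1$ step, Young's inequality is unneeded --- the second nonlinear term is bounded directly by $\Xonek\sqrt{\Xonelesslessk}$ via the monotone comparisons $\Xzerok\lesssim\Xonek$, $\Xzerolesslessk\lesssim\Xonelesslessk$.
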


\begin{proof}
Again we recall that if $\varepsilon_{\rm boot}\le \varepsilon_{\rm loc}$, then the assumption of Proposition~\ref{finalpropest} holds. 
The first two inequalities follow from the estimate~\eqref{addtermoneherewithflux} of
 Proposition~\ref{finalpropest}
 applied to $p=2-\delta$, $p=1$
 in view of the fact that 
 $\rho=\chi$,  where we have used
 the relations~\eqref{therelationweknow}
 and~\eqref{implicationhere} to replace the far-away supported nonlinear terms with those displayed above,
 together with the estimate
\[
 \int_{\tau_0}^{\tau_1} \Ezerok (\tau') \sqrt{\,\,\Ezerominusoneminusdeltalesslessk'(\tau')  }d\tau'
 \lesssim \sup_{\tau_0\le \tau\le\tau_1} \Ezerok(\tau)  \sqrt{\int_{\tau_0}^{\tau_1} \Ezerominusoneminusdeltalesslessk'(\tau')d\tau' }
 \, \cdot \, \sqrt{L} \lesssim {}^\chi\Xzerok\sqrt{\Xzerolesslessk}\sqrt{L}.
\]
Note  that for the $p=1$ estimate we retain only the less precise expression ${}^\chi \Xonek
\sqrt{\Xonelesslessk}$ which will be sufficient for our purposes. 

The third inequality follows similarly using estimate~\eqref{addtermoneherewithfluxzero},
where we 
use also the interpolation inequality~\eqref{interpolationstatementbulk}
to obtain
\[
 \Xzeroplusk_{\frac{8R}9} (\tau_0,\tau_1) \sqrt{ \Xzeropluslesslessk_{\frac{8R}9}(\tau_0,\tau_1)}
 \lesssim 
 \left( \,{}^\chi \Xzerok + (\,{}^\chi \Xzerok)^{\frac{1-\delta}{1+\delta}} (\, {}^\chi \Xonek)^{\frac{2\delta}{1+\delta}} \right)
\sqrt{\Xzerolesslessk+ (\,\Xzerolesslessk)^{\frac{1-\delta}{1+\delta}} (\, \Xonelesslessk)^{\frac{2\delta}{1+\delta}}}.
\]
\end{proof}

\begin{remark}
\label{remforcaseiisemi}
We note that inequalities~\eqref{pweightedii}--\eqref{nonpweightedii} imply of course 
\begin{eqnarray}
 \label{pweightediialt}
{}^\chi\Xtwominusdelk
&\lesssim& \Etwominusdelk(\tau_0)  + \left(  {}^\chi\Xtwominusdelk\sqrt{\Xzerolesslessk} + \sqrt{{}^\chi\Xtwominusdelk}\sqrt{ {}^\chi\Xzerok}
\sqrt{\Xtwominusdeltalesslessk} \right)+
 {}^\chi\Xzerok\sqrt{\Xzerolesslessk}\sqrt{L},
  \\
\label{pminusoneweightediialt}
{}^\chi \Xonek&\lesssim& \Eonek(\tau_0) 
+  {}^\chi \Xonek
\sqrt{\Xonelesslessk} 
 +  {}^\chi\Xzerok\sqrt{\Xzerolesslessk}\sqrt{L},   \\
\label{nonpweightediialt}
{}^\chi \Xzerok &\lesssim& 
\Ezerok(\tau_0)
+ \left( \,{}^\chi \Xzerok + (\,{}^\chi \Xzerok)^{\frac{1-\delta}{1+\delta}} (\, {}^\chi \Xonek)^{\frac{2\delta}{1+\delta}} \right)
\sqrt{\Xzerolesslessk+ (\,\Xzerolesslessk)^{\frac{1-\delta}{1+\delta}} (\, \Xonelesslessk)^{\frac{2\delta}{1+\delta}}}
+  {}^\chi\Xzerok\sqrt{\Xzerolesslessk}\sqrt{L}.
\end{eqnarray}
Following the above proof but now using
equations~\eqref{addtermoneherechinewhere} and~\eqref{addtermoneherechizeroforsemi}, 
we may in fact directly deduce the inequalities~\eqref{pweightediialt}--\eqref{nonpweightediialt} 
under the relaxed assumptions of Remark~\ref{semiremsemi}, 
where~\eqref{theequation} is however required to be semilinear and we are in the analogue of case (ii). Thus, the analogue
of Proposition~\ref{iihierprop} holds in that case where~\eqref{pweightedii}--\eqref{nonpweightedii}
 are replaced by~\eqref{pweightediialt}--\eqref{nonpweightediialt}.
\end{remark}

\subsubsection{Global existence in $L$-slabs}
\label{globonellfortwo}

The presence of positive powers of $L$ in~\eqref{pweightedii}--\eqref{nonpweightedii} 
means that our smallness assumptions must involve negative powers of $L$ in order for the
estimates to close.

\begin{proposition}
\label{globproptwo}
Let $k-1\ge k_{\rm loc}$ be sufficiently large and let us assume the case (ii) assumptions.  Then  there exists an
 $0<\varepsilon_{\rm slab}\leq\varepsilon_{\rm loc}$ and a constant $C>0$ implicit in the sign $\lesssim$ below such that the following  is true.

Given arbitrary $L\ge 1$,
 $\tau_0\ge 0$,  $0<\varepsilon_0\leq \varepsilon_{\rm slab}$ and initial data $(\uppsi,\uppsi')$ 
 on $\Sigma(\tau_0)$ as in Proposition~\ref{localexistence}, satisfying moreover
\begin{equation}
\label{assumptionondatahere}
\qquad  \Eonek(\tau_0) \leq \varepsilon_0,
\qquad \Ezerokminusone(\tau_0) \leq \varepsilon_0L^{-1},
\end{equation}
then the unique solution of Proposition~\ref{localexistence} obtaining the data can be extended to 
a $\psi$
defined on the entire  spacetime slab $\mathcal{R}(\tau_0,\tau_0+L)$
satisfying the equation~\eqref{theequation}
and
 the estimates
 \begin{equation}
 \label{newformatestimates}
\qquad {}^\chi  \Xonek  \lesssim \varepsilon_0,
\qquad {}^\chi \Xzerokminusone  \lesssim \varepsilon_0L^{-1}.
 \end{equation}
 \end{proposition}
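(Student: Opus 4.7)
The plan is to run a continuity argument in $\tau_f \in (\tau_0, \tau_0+L]$ coupled to a two-tier bootstrap, using Proposition~\ref{iihierprop} at orders $k$ and $k-1$ to close the estimates. Let $\mathfrak{B} \subset (\tau_0, \tau_0+L]$ consist of those $\tau_f$ for which the solution exists on $\mathcal{R}(\tau_0,\tau_f)$ and the bootstrap inequalities
\begin{equation*}
	{}^\chi \Xonek(\tau_0,\tau_f) \leq A \varepsilon_0,
	\qquad
	{}^\chi \Xzerokminusone(\tau_0,\tau_f) \leq A \varepsilon_0 L^{-1}
\end{equation*}
both hold, with $A$ a large constant to be fixed. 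Local existence makes $\mathfrak{B}$ non-empty, and choosing $\varepsilon_{\rm slab}$ such that $A \varepsilon_{\rm slab} \leq \varepsilon_{\rm boot}$, the chain $\Xzerolesslessk \lesssim \Xzerokminustwo \lesssim {}^\chi \Xzerokminusone \leq A \varepsilon_0 L^{-1} \leq \varepsilon_{\rm boot}$ (valid once $k$ is large enough that $\ll k \leq k-2$, using the general relation $\Xpkminusone \lesssim {}^\chi \Xpk$) supplies the basic bootstrap assumption (\ref{basicbootstrap}) for $p=0$, so Proposition~\ref{iihierprop} is applicable throughout $\mathcal{R}(\tau_0,\tau_f)$.

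The central step is to strictly improve the bootstrap at both tiers. At top order I would apply (\ref{pminusoneweightedii}): the key cancellation $\sqrt{\Xzerolesslessk}\,\sqrt{L} \leq \sqrt{A\varepsilon_0}$, combined with ${}^\chi \Xzerok \leq {}^\chi \Xonek$, controls every nonlinear term by $C A^{3/2} \varepsilon_0^{3/2}$, yielding ${}^\chi \Xonek \leq C_0 \varepsilon_0 + C A^{3/2}\varepsilon_0^{3/2} \leq (A/2)\varepsilon_0$ as soon as $A \geq 2C_0$ and $\varepsilon_{\rm slab}$ is small. At one derivative lower I would apply (\ref{nonpweightedii}) with $k$ replaced by $k-1$; since the two interpolation exponents in (\ref{interpolationstatementbulk}) sum to $1$, plugging the bootstrap bounds into the interpolated factor of $\Xzeroplusk$ produces the $L$-weight $L^{-(1-\delta)/(1+\delta)}$, which after multiplication by the remaining square-root factor becomes $L^{-3(1-\delta)/(2(1+\delta))}$. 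Since $\delta<1/10$ in (\ref{fixeddelta}) ensures $3(1-\delta)/(2(1+\delta))>1$, this contribution is dominated by $C A^{3/2} \varepsilon_0^{3/2} L^{-1}$, and the final $\sqrt{L}$ term is bounded similarly; summing, ${}^\chi \Xzerokminusone \leq C_0 \varepsilon_0 L^{-1} + C A^{3/2}\varepsilon_0^{3/2} L^{-1} \leq (A/2)\varepsilon_0 L^{-1}$ with the same smallness.

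With the bootstrap strictly improved, Corollary~\ref{continuash} extends the solution past any $\tau_f \in \mathfrak{B}$ with the improved bounds intact, so $\mathfrak{B}$ is open (and closed by continuity), hence coincides with $(\tau_0, \tau_0+L]$, giving (\ref{newformatestimates}) with implicit constant $A/2$. The main obstacle in my view is purely a bookkeeping one: the delicate $L$-balance which requires the $\sqrt{L}$-loss from taking the supremum-in-$\tau$ of the top-order energy on the cross term ${}^\chi \Xzerok \sqrt{\Xzerolesslessk}\sqrt{L}$ of each hierarchy identity to be exactly absorbed by the $L^{-1}$-smallness assumed one derivative lower. The asymmetric data hypothesis (\ref{assumptionondatahere}) is precisely what makes this possible with $L$-independent constants; the role of $\delta<1/10$ is to prevent the anomaly at $p=0$, visible as the appearance of $\Xzeroplusk$ via interpolation in (\ref{nonpweightedii}), from degrading the recovered $L^{-1}$ decay below what the hierarchy can absorb.
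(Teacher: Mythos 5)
Your proof is correct, and it takes a genuinely different route from the paper's.  The paper bootstraps only at the lower order $\ll\mkern-6mu k$ with a fixed threshold $\varepsilon$ satisfying $\varepsilon_{\rm slab}\ll\varepsilon\ll 1$ (see $\eqref{newversionadditionalboot}$), which is \emph{not} tied to $\varepsilon_0$.  Because of this, after plugging the bootstrap into $\eqref{nonpweightedii}$ the paper obtains the sublinear inequality $\eqref{lowestinhier}$ in which the unknown ${}^\chi\Xzerokminusone$ appears inside the interpolated factor with no a priori $L^{-1}$-weighted bound, and this forces the iterative improvement via the recursion $\gamma_i = \gamma_{i-1}\left(\tfrac{1-\delta}{1+\delta}\right)^2 + \tfrac12\tfrac{1-\delta}{1+\delta}$ before the sharp power $L^{-1}$ is recovered.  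You instead promote the bootstrap to the top two orders, ${}^\chi\Xonek\leq A\varepsilon_0$ and ${}^\chi\Xzerokminusone\leq A\varepsilon_0 L^{-1}$, with the $\varepsilon_0$-tracked threshold.  This is a strictly stronger hypothesis, and it makes the interpolated factor immediately produce $L^{-(1-\delta)/(1+\delta)}$, so the $L$-powers already add up to more than $1$ in a single pass: the recursion disappears and the improvement closes in one step.  The price you pay is mild: you must verify that the local-existence theorem and the continuation criterion propagate the $L^{-1}$-weighted bound so that $\mathfrak{B}$ is nonempty and the continuity argument closes (your sketch at this point is terse --- the constants in Corollary~\ref{continuash} are not a priori $\leq 2$, so the openness step should invoke continuity of the energy functionals on the extended interval rather than literally claim the improved bounds persist across the extension, as the paper does via $\eqref{extensionhigherhigher}$ and the weaker $\ll\mkern-6mu k$ threshold) --- but this is the standard argument and presents no real difficulty.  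The paper's choice of the weaker $\ll\mkern-6mu k$ bootstrap is presumably made to keep the structure uniform with the subsequent case (iii) analysis; your direct bootstrap is more elementary here and avoids the recursion entirely.

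One small correction: since the coercivity constant $c$ appears on the left of $\eqref{pminusoneweightedii}$ and $\eqref{nonpweightedii}$, the largeness requirement on $A$ is $A\gtrsim C_0/c$ rather than $A\geq 2C_0$; this does not affect the argument.
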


\begin{proof}
Consider the set $\mathfrak{B}\subset (\tau_0,\tau_0+L]$ 
consisting of all $\tau_0+L\ge \tau_f\ge \tau_0$ such that a solution $\psi$ of~\eqref{theequation}
obtaining the data exists on $\mathcal{R}(\tau_0,\tau_f)$ 
and such that
the boostrap assumption~\eqref{basicbootstrap} with $p=1$
and also the additional bootstrap assumption
\begin{equation}
\label{newversionadditionalboot}
\Xzerolesslessk \leq  \varepsilon L^{-1}
\end{equation}
 hold in $\mathcal{R}(\tau_0,\tau_1:=\tau_f)$, 
where
$0<\varepsilon\le \varepsilon_{\rm boot}$ is a small constant satisfying
\begin{equation}
\label{providedprovidedbefore}
1\gg \varepsilon\gg \varepsilon_{\rm slab}.
\end{equation}
(The above relation in particular already constrains $\varepsilon_{\rm slab}$ to be sufficiently small.)

By the local well posedness statement Proposition~\ref{localexistence},
it follows
that since $k-1\ge k_{\rm loc}$ and $\varepsilon_0\leq \varepsilon_{\rm slab}\leq  \varepsilon_{\rm loc}$,
we have that $\tau_0+\tau_{\rm exist}\in \mathfrak{B}$ and thus $\mathfrak{B}\ne\emptyset$, provided that
$\varepsilon$ satisfies~\eqref{providedprovidedbefore}.
Also note that a fortiori, if $\tau_f\in \mathfrak{B}$, then $(\tau_0,\tau_f]\in \mathfrak{B}$
and thus $\mathfrak{B}$  is manifestly a connected  subset of $(\tau_0,\infty)$.

For $\tau_1:=\tau_f\in \mathcal{B}$, Proposition~\ref{iihierprop} applies in $\mathcal{R}(\tau_0,\tau_1)$.
From~\eqref{pminusoneweightedii}   and~\eqref{assumptionondatahere} we obtain
\begin{equation}
\label{middleinhier}
{}^\chi  \Xonek \lesssim \varepsilon_0  + \varepsilon^{1/2} \, {}^\chi  \Xonek,
\end{equation}
where we have used the bootstrap assumptions~\eqref{basicbootstrap} and~\eqref{newversionadditionalboot}.

From~\eqref{nonpweightedii} for $k-1$  and~\eqref{assumptionondatahere}, we obtain
\begin{equation}
\label{lowestinhier}
{}^\chi \Xzerokminusone \lesssim \varepsilon_0 L^{-1} +
 \varepsilon^{1/2} (\,{}^\chi \Xzerokminusone)^{\frac{1-\delta}{1+\delta}} (\, {}^\chi \Xonekminusone)^{\frac{2\delta}{1+\delta}}
L^{- \frac12\frac{1-\delta}{1+\delta}}
+ \varepsilon^{1/2} \, {}^\chi  \Xzerokminusone.
\end{equation}

It follows that for $\varepsilon$ satisfying~\eqref{providedprovidedbefore}, we obtain
\begin{equation}
\label{decoupled}
{}^\chi  \Xtwominusdelk  \lesssim \varepsilon_0, \qquad {}^\chi  \Xzerokminusone  \lesssim \varepsilon_0L^{-\frac12\frac{1-\delta}{1+\delta}}
\end{equation}
and plugging the second inequality of~\eqref{decoupled} into~\eqref{lowestinhier} we obtain
\begin{equation}
\label{preresultforlowest}
{}^\chi  \Xzerokminusone  \lesssim \varepsilon_0 L^{-1} +\varepsilon^{1/2} \varepsilon_0
L^{-\frac12 (\frac{1-\delta}{1+\delta})^2} L^{-\frac12\frac{1-\delta}{1+\delta}}.
\end{equation}
Now  defining 
\[
\gamma_0:= \frac12 \left(\frac{1-\delta}{1+\delta}\right)^2 +\frac12\frac{1-\delta}{1+\delta} ,
\]
we have
\begin{equation}
\label{newcentred}
{}^\chi  \Xzerokminusone  \lesssim \varepsilon_0L^{-\gamma_0},
\end{equation}
whence, plugging~\eqref{newcentred} into~\eqref{lowestinhier}  
we improve to 
\begin{equation}
\label{preresultforlowestag}
{}^\chi  \Xzerokminusone  \lesssim \varepsilon_0 L^{-1} +\varepsilon^{1/2} \varepsilon_0
L^{-\gamma_0  (\frac{1-\delta}{1+\delta})^2} L^{-\frac12\frac{1-\delta}{1+\delta}}.
\end{equation}
Defining $\gamma_i$ iteratively by $\gamma_i= \gamma_{i-1} \left(\frac{1-\delta}{1+\delta}\right)^2 +\frac12\frac{1-\delta}{1+\delta}$,
then, in view of the restriction~\eqref{fixeddelta}, there exists a first $i$ such that $\gamma_i\ge 1$, whence we obtain
\begin{equation}
\label{resultforlowest}
{}^\chi  \Xzerokminusone  \lesssim \varepsilon_0 L^{-1}.
\end{equation}

We may now already apply our continuation criterion Corollary~\ref{continuash}, applied with $p=0$ and $k-1$,
to assert the existence of an $\epsilon$, independent of $\tau_1$, such that now defining $\tau_1:= \min \{ \tau_f+\epsilon, \tau_0+ L\}$,
the solution $\psi$ extends to a smooth solution of~\eqref{theequation}
on $\mathcal{R}(\tau_0,\tau_1)$, and moreover, from~\eqref{extensionhere}, that
\begin{equation}
\label{additionalbootherehereiiwithC}
{}^\chi \Xzerokminusone 
   \lesssim \varepsilon_0 L^{-1}
\end{equation}
holds on $\mathcal{R}(\tau_0,\tau_1)$,
and from~\eqref{extensionhigherhigher}, that
\begin{equation}
\label{alsoextendingthis}
{}^\chi \Xonek 
   \lesssim \varepsilon_0 
\end{equation}
holds on $\mathcal{R}(\tau_0,\tau_1)$.
It follows that~\eqref{newversionadditionalboot} and~\eqref{basicbootstrap} (with $p=1$) hold on $\mathcal{R}(\tau_0,\tau_1)$ and thus
 $\tau_1= \min \{ \tau_f+\epsilon, \tau_0+ L\} \in \mathfrak{B}$.

Since $\epsilon$ is independent of $\tau_f$, and in view also of the connectivity of $\mathfrak{B}$, 
it follows that $\mathfrak{B}$ is a nonempty open and closed subset of $(\tau_0, \tau_0+L]$
and thus $\mathfrak{B}=(\tau_0,\tau_0+L]$ and hence the solution exists in the entire spacetime slab
$\mathcal{R}(\tau_0,\tau_0+L)$.

The estimates~\eqref{decoupled} and~\eqref{resultforlowest} thus hold in the entire spacetime slab. 
This gives~\eqref{newformatestimates}.
\end{proof}

\begin{remark}
\label{relaxedremark}
Examining the proof, it is clear that we have only used~\eqref{pweightediialt}--\eqref{nonpweightediialt} and not the
full~\eqref{pweightedii}--\eqref{nonpweightedii}
of Proposition~\ref{iihierprop}. Thus, in view of the comments of Remark~\ref{remforcaseiisemi},
the proof holds also for the semilinear case under the relaxed assumptions described in Remark~\ref{semiremsemi}.
In general, examining the proof, we may in fact relax the assumption of the first inequality of~\eqref{assumptionondatahere} to the assumption
\begin{equation}
\label{relaxedwithbetaassumption}
\Eonek(\tau_0)\leq \varepsilon_0 L^\beta
\end{equation}
for a sufficiently small $\beta$,
in which case the first inequality of~\eqref{newformatestimates}  is replaced by
\begin{equation}
\label{relaxedwithbetaconclusion}
{}^\chi\Xonek(\tau_0)\leq \varepsilon_0 L^\beta.
\end{equation}
This will again be useful for the semilinear case.
\end{remark}

\subsubsection{The pigeonhole argument}
\label{pigeons}

The above assumptions on initial data are sufficient for global existence in the slab, but are not sufficient
to iterate. For this we shall need  strengthened assumptions.

\begin{proposition}
\label{bettertimeplease}
Under the assumptions of Proposition~\ref{globproptwo},
there exists a constant $C>0$, implicit in the inequalities $\lesssim$ below, a  parameter $\alpha_0\gg 1$
and, for all $\alpha\ge \alpha_0$, a parameter~$\hat{\varepsilon}_{\rm slab}(\alpha)$
such that for all $0<\hat\varepsilon_0 \le\hat{\varepsilon}_{\rm slab}(\alpha)$  the following  holds.

Let us assume that in addition to~\eqref{assumptionondatahere} we have 
\begin{equation}
\label{additassum}
\Efancyonek(\tau_0) \leq \hat\varepsilon_0, \qquad
 \Efancytwominusdelkminusone(\tau_0) \leq \hat\varepsilon_0,  \qquad
 \Efancyzerokminusone(\tau_0) \leq \hat\varepsilon_0 \alpha L^{-1},
 \qquad
   \Efancyonekminustwo(\tau_0) \leq  \hat\varepsilon_0\alpha L^{-1+\delta},
\qquad \Efancyzerokminusthree(\tau_0) \leq \hat\varepsilon_0\alpha^2 L^{-2+\delta}.
\end{equation}
Then the solution $\psi$ of~\eqref{theequation}
on $\mathcal{R}(\tau_0,\tau_0+L)$ given by Proposition~\ref{globproptwo} 
satisfies the additional estimates
\begin{equation}
\label{addedherenotfancy}
\sup_{\tau_0\le \tau\le \tau_0+L} \Eonek(\tau) \leq   \alpha\hat\varepsilon_0,
\end{equation}
\begin{equation}
\label{basicnewformiiagainkaivourgio}
\sup_{\tau_0\le \tau \le \tau_0+L} \Efancyonek(\tau) \leq \hat\varepsilon_0 (1+ \hat\varepsilon_0^{1/4}L^{(-1+\delta)/2}) ,
\end{equation}
\begin{equation}
\label{basicnewformiiagainallokaivourgio}
\sup_{\tau_0\le \tau \le \tau_0+L} 
\Efancytwominusdelkminusone(\tau) \leq \hat\varepsilon_0 (1+ \hat\varepsilon_0^{1/4}L^{-1/2})   , 
\end{equation}
\begin{equation}
\label{kiallobasicnewformiiagain}
{}^\chi \Xonek\lesssim\hat \varepsilon_0, \qquad {}^\chi \Xtwominusdelkminusone\lesssim\hat\varepsilon_0,\qquad
{}^\chi \Xzerokminusone \lesssim \hat\varepsilon_0\alpha L^{-1}, \qquad
{}^\chi \Xonekminustwo \lesssim\hat\varepsilon_0\alpha L^{-1+\delta}, \qquad 
 {}^\chi \Xzerokminusthree
\lesssim\hat\varepsilon_0\alpha^2 L^{-2+\delta}.
\end{equation}

 Moreover, 
for all times $\tau'$ with $L\ge \tau'-\tau_0 \ge L/2$, we have that
\begin{equation}
\label{betterjust}
\Efancyzerokminusone(\tau') \leq \frac12  \hat\varepsilon_0 \alpha L^{-1},
\end{equation}
\begin{equation}
\label{betterone}
\Efancyonekminusone(\tau') \leq \frac12  \hat\varepsilon_0 \alpha L^{-1+\delta},
\end{equation}
\begin{equation}
\label{bettertwo}
\Efancyzerokminustwo(\tau') \leq\frac14  \hat\varepsilon_0  \alpha^2 L^{- 2+\delta}.
\end{equation}
\end{proposition}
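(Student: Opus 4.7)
The plan is to proceed in four stages, all confined to the fixed slab $\mathcal{R}(\tau_0,\tau_0+L)$: a coarse existence step based on Proposition~\ref{globproptwo}; an upgrade of the master norms to the sharp bounds~\eqref{kiallobasicnewformiiagain}; a careful revisit of the hierarchy of Proposition~\ref{iihierprop} to extract the refined fraktur boundedness~\eqref{addedherenotfancy}--\eqref{basicnewformiiagainallokaivourgio}; and finally a pigeonhole step followed by short-slab hierarchy propagation to obtain~\eqref{betterjust}--\eqref{bettertwo}. For stage one, the additional data bounds~\eqref{additassum} together with $\Efancypk\sim\Epk$ on initial data (Corollary~\ref{nomoreroman} restricted to $\tau_0$) imply the hypotheses of Proposition~\ref{globproptwo} at order $k-2$ with $\varepsilon_0 := C\hat\varepsilon_0\alpha^2$; choosing $\hat\varepsilon_{\rm slab}(\alpha) \leq \varepsilon_{\rm slab}/(C\alpha^2)$ ensures the smallness requirement and yields global existence on $\mathcal{R}(\tau_0,\tau_0+L)$ along with coarse bounds ${}^\chi\Xonekminustwo \lesssim \hat\varepsilon_0\alpha^2$ and ${}^\chi\Xzerokminusthree\lesssim\hat\varepsilon_0\alpha^2 L^{-1}$, which in particular guarantee the bootstrap~\eqref{basicbootstrap} required by Proposition~\ref{iihierprop}.

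For the upgrade, I would iterate the hierarchy~\eqref{pweightedii}--\eqref{nonpweightedii} at descending orders. The top-order bound ${}^\chi\Xonek\lesssim\hat\varepsilon_0$ follows from~\eqref{pminusoneweightedii} by absorbing the nonlinear term ${}^\chi\Xonek\sqrt{\Xonelesslessk}$ via smallness and bounding the $\sqrt{L}$ term with the coarse ${}^\chi\Xzerok \leq {}^\chi\Xonek \lesssim \hat\varepsilon_0\alpha$ and $\Xzerolesslessk \lesssim \hat\varepsilon_0\alpha L^{-1}$, which produces an $L$-independent correction $O(\hat\varepsilon_0^{3/2}\alpha^{3/2})$ absorbable under $\hat\varepsilon_0 \ll \alpha^{-3}$; similarly ${}^\chi\Xtwominusdelkminusone\lesssim\hat\varepsilon_0$ is read off~\eqref{pweightedii} at order $k-1$. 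The remaining scalings in~\eqref{kiallobasicnewformiiagain} are obtained by plugging~\eqref{additassum} into~\eqref{pminusoneweightedii} and~\eqref{nonpweightedii} at orders $k-1,k-2,k-3$, with the interpolation term in~\eqref{nonpweightedii} used to close the $\Xzerokminusthree$ estimate. The bound~\eqref{addedherenotfancy} is immediate from ${}^\chi\Xonek \lesssim \hat\varepsilon_0$ and $\Eonek\sim\Efancyonek$ after enlarging the implicit constant into $\alpha\geq \alpha_0$. For the sharp boundedness~\eqref{basicnewformiiagainkaivourgio}--\eqref{basicnewformiiagainallokaivourgio} one returns to the same hierarchy but \emph{keeps} the unit coefficient in front of $\Efancypk(\tau_0)$; substituting the bounds~\eqref{kiallobasicnewformiiagain} into the nonlinear terms produces a correction of the form $\hat\varepsilon_0\cdot\hat\varepsilon_0^{1/4}L^{\theta}$ with $\theta=(-1+\delta)/2$ and $\theta=-1/2$ respectively, once $\hat\varepsilon_{\rm slab}(\alpha)$ is taken small enough to kill the residual $\alpha$-powers.

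The pigeonhole stage is the conceptual heart of the proposition. For each of~\eqref{betterjust},~\eqref{betterone},~\eqref{bettertwo} I would identify the bulk integral whose smallness follows from stages two and three, use pigeonholing on $[\tau_0+L/2,\tau_0+L]$ to locate $\tau''$ on which the corresponding \emph{lower-order} energy is small, and then propagate this smallness to any $\tau'\in[\tau'',\tau_0+L]$ via Proposition~\ref{iihierprop} applied on the short sub-slab $\mathcal{R}(\tau'',\tau')$ of length at most $L/2$. Concretely:~\eqref{betterjust} uses $\int\Ezerokminusone\,d\tau \lesssim {}^\chi\Xonek \lesssim\hat\varepsilon_0$ via $\Epminusonek'\gtrsim\Ezerok$ at $p=1$;~\eqref{betterone} uses $\int\Eoneminusdelkminusone\,d\tau\lesssim{}^\chi\Xtwominusdelkminusone\lesssim\hat\varepsilon_0$ via $\Epminusonek'\gtrsim\Epminusonek$ at $p=2-\delta$, followed by the interpolation~\eqref{interpolationstatementnewback} against $\sup\Efancytwominusdelkminusone\lesssim\hat\varepsilon_0$ to convert $\Eoneminusdelkminusone(\tau'')$ into $\Eonekminusone(\tau'')\lesssim\hat\varepsilon_0 L^{-1+\delta}$; and~\eqref{bettertwo} uses $\int\Ezerokminustwo\,d\tau\lesssim{}^\chi\Xonekminustwo\lesssim\hat\varepsilon_0\alpha L^{-1+\delta}$. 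In each case the propagation generates a correction of the form $C\hat\varepsilon_0\cdot(\text{decay in }L)$ with $C$ independent of $\alpha$, which is absorbed into the $\tfrac12\alpha$, $\tfrac12\alpha$, $\tfrac14\alpha^2$ factors by taking $\alpha_0$ sufficiently large.

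The main obstacle is the bookkeeping of constants in stages two and three, complicated by the fact that the target constants $\tfrac12$ and $\tfrac14$ in~\eqref{betterjust}--\eqref{bettertwo} must be achieved with numerical sharpness. One must untangle the non-commuting dependencies between $\alpha_0$ (determined after the pigeonhole propagation constants are fixed) and $\hat\varepsilon_{\rm slab}(\alpha)$ (which must depend on $\alpha$ through powers like $\alpha^{-3}$ or $\alpha^{-6}$ to absorb the $L$-independent top-order nonlinear corrections): the correct order is to fix $\alpha_0$ first from the pigeonhole step, then choose $\hat\varepsilon_{\rm slab}(\alpha)$ small enough for every subsequent absorption to close. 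A secondary subtlety is that at the top order $k$ one cannot expect $L^{-1}$ decay for ${}^\chi\Xzerok$, so the third term in~\eqref{pminusoneweightedii} must be bounded using only ${}^\chi\Xzerok\leq{}^\chi\Xonek\lesssim\hat\varepsilon_0$, forcing the $L$-independent piece of the correction to be controlled purely by smallness of $\hat\varepsilon_0^{1/2}\alpha^{1/2}$.
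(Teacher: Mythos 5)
Your four-stage plan matches the paper's structure at a high level (coarse existence via Proposition~\ref{globproptwo}, upgrading the master norms, sharp fraktur boundedness, pigeonhole followed by hierarchy propagation), and stages one through three are essentially correct. Stage four, however, has two genuine problems.

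First, you pigeonhole on the interval $[\tau_0+L/2,\tau_0+L]$ and then propagate forward to $\tau'\in[\tau'',\tau_0+L]$. But the proposition requires the conclusion for \emph{all} $\tau'$ with $L/2\le \tau'-\tau_0\le L$, and forward propagation from a $\tau''$ that may itself lie in the second half of the slab cannot reach those $\tau'\in[\tau_0+L/2,\tau'')$ lying before it. The paper instead locates $\tau''\in[\tau_0,\tau_0+L/2]$ (pigeonholing over the first half) precisely so that every $\tau'$ in the target range satisfies $\tau'\ge \tau''$ and forward propagation on $\mathcal{R}(\tau'',\tau')$ applies.

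Second, and more seriously, you mis-count the derivative loss induced by the degeneration of $\chi$ in case (ii). For~\eqref{betterone} you write $\int\Eoneminusdelkminusone\,d\tau\lesssim{}^\chi\Xtwominusdelkminusone$, and for~\eqref{bettertwo} you write $\int\Ezerokminustwo\,d\tau\lesssim{}^\chi\Xonekminustwo$. Neither inequality holds in case (ii): the master norm ${}^\chi\Xpk$ contains the bulk integral $\int({}^\chi\Epminusonek'+\Epminusonekminusone')$, and the top-order piece ${}^\chi\Epminusonek'$ carries the degeneration function $\chi$, so it does \emph{not} control the non-degenerate boundary energy $\Epminusonek$ in the trapped region. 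Only the lower-order, non-degenerate piece $\Epminusonekminusone'$ does (via~\eqref{higherorderfluxbulkrelation}), and this sits one derivative below the order of the master norm. Consequently the pigeonhole starting from ${}^\chi\Xtwominusdelkminusone\lesssim\hat\varepsilon_0$ only produces a slice bound on $\Eoneminusdelkminustwo(\tau'')$ (order $k-2$, not $k-1$), and the pigeonhole starting from ${}^\chi\Xonekminustwo\lesssim\hat\varepsilon_0\alpha L^{-1+\delta}$ only produces a slice bound on $\Ezerokminusthree(\tau'')$ (order $k-3$, not $k-2$). The interpolation step afterwards must therefore be run against $\Etwominusdelkminustwo$ rather than $\Etwominusdelkminusone$, producing $\Eonekminustwo(\tau'')\lesssim\hat\varepsilon_0 L^{-1+\delta}$. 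Your version of the argument would implicitly be treating ${}^\chi\Epminusonek'$ as non-degenerate, which is exactly what cases (ii) and (iii) are designed to avoid; as written, the pigeonhole inequalities you invoke do not follow from the estimates in~\eqref{kiallobasicnewformiiagain}.
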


\begin{proof}
Note that by the statement of the proposition, we are in particular
also assuming a priori~\eqref{assumptionondatahere} for \emph{some} $\varepsilon_0\leq \varepsilon_{\rm slab}$,
since this is included in the assumptions of Proposition~\ref{globproptwo}.
In view now of Corollary~\ref{nomoreroman}, however, for $\alpha_0$ sufficiently large,
if say $\hat{\varepsilon}_{\rm slab}(\alpha)\ll \alpha^{-3} \varepsilon_{\rm slab}$,
it follows from the additional assumptions~\eqref{additassum} that the estimates~\eqref{assumptionondatahere} 
in fact hold with the specific constant $\varepsilon_0:=\hat\varepsilon_0\alpha^3$ for all $\alpha\ge \alpha_0$.

To obtain~\eqref{kiallobasicnewformiiagain} we argue as follows.
We note from the proof of Proposition~\ref{globproptwo} that we have the following system of inequalities
\begin{equation}
\label{newmiddleinhieragain}
{}^\chi  \Xonek \lesssim \hat\varepsilon_0 + \varepsilon_0^{1/2} \, {}^\chi  \Xonek,
\end{equation}
\begin{equation}
\label{highestinhieragain}
{}^\chi  \Xtwominusdelkminusone \lesssim \hat\varepsilon_0 + \varepsilon_0^{1/2} \, {}^\chi  \Xtwominusdelkminusone,
\end{equation}
\begin{equation}
\label{lowestinhieragainallaoxi}
{}^\chi \Xzerokminusone \lesssim \hat\varepsilon_0  \alpha L^{-1} +
(\,{}^\chi \Xzerokminusone)^{\frac{1-\delta}{1+\delta}} (\, {}^\chi \Xonekminusone)^{\frac{2\delta}{1+\delta}}
\sqrt{\Xzerolesslessk+ (\,\Xzerolesslessk)^{\frac{1-\delta}{1+\delta}} (\, \Xonelesslessk)^{\frac{2\delta}{1+\delta}}}
+  \varepsilon_0^{1/2} \,  {}^\chi\Xzerokminusone,
\end{equation}
\begin{equation}
\label{middleinhieragain}
{}^\chi  \Xonekminustwo \lesssim \hat\varepsilon_0 \alpha L^{-1+\delta} + \varepsilon_0^{1/2} \, {}^\chi  \Xonekminustwo,
\end{equation}
\begin{equation}
\label{lowestinhieragain}
{}^\chi \Xzerokminusthree \lesssim \hat\varepsilon_0  \alpha^2  L^{-2+\delta} +
(\,{}^\chi \Xzerokminusthree)^{\frac{1-\delta}{1+\delta}} (\, {}^\chi \Xonekminusthree)^{\frac{2\delta}{1+\delta}}
\sqrt{\Xzerolesslessk+ (\,\Xzerolesslessk)^{\frac{1-\delta}{1+\delta}} (\, \Xonelesslessk)^{\frac{2\delta}{1+\delta}}}
+  \varepsilon_0^{1/2} \,  {}^\chi\Xzerokminusthree,
\end{equation}
where we have used also the initial data assumptions~\eqref{additassum}.

For $\hat\varepsilon_{\rm slab}(\alpha)$ sufficiently small, we have $\varepsilon_0\ll 1$ and thus
we  obtain immediately that 
\begin{equation}
\begin{aligned}
\label{twoandthenathird}
{}^\chi \Xonek\lesssim \hat\varepsilon_0, \qquad
{}^\chi  \Xtwominusdelkminusone  \lesssim \hat \varepsilon_0, \qquad
{}^\chi  \Xonekminustwo  \lesssim \hat \varepsilon_0 \alpha L^{-1+\delta}, \\
{}^\chi \Xzerokminusone \lesssim \hat\varepsilon_0  \alpha L^{-1} +
(\,{}^\chi \Xzerokminusone)^{\frac{1-\delta}{1+\delta}} (\, {}^\chi \Xonekminusone)^{\frac{2\delta}{1+\delta}}
\sqrt{\Xzerolesslessk+ (\,\Xzerolesslessk)^{\frac{1-\delta}{1+\delta}} (\, \Xonelesslessk)^{\frac{2\delta}{1+\delta}}},
\\
{}^\chi \Xzerokminusthree \lesssim \hat\varepsilon_0  \alpha^2  L^{-2+\delta} +
(\,{}^\chi \Xzerokminusthree)^{\frac{1-\delta}{1+\delta}} (\, {}^\chi \Xonekminusthree)^{\frac{2\delta}{1+\delta}}
\sqrt{\Xzerolesslessk+ (\,\Xzerolesslessk)^{\frac{1-\delta}{1+\delta}} (\, \Xonelesslessk)^{\frac{2\delta}{1+\delta}}}.
\end{aligned}
\end{equation}
The first two inequalities of~\eqref{twoandthenathird} yield the first two inequalities of~\eqref{kiallobasicnewformiiagain},
while the third inequality of~\eqref{twoandthenathird}  yields the fourth inequality of~\eqref{kiallobasicnewformiiagain}.
On the other hand,  from the latter inequality of~\eqref{twoandthenathird} we obtain 
\begin{equation}
\label{fromlatterweobt}
{}^\chi \Xzerokminusthree  \lesssim   \hat\varepsilon_0 \alpha^2 L^{-2+\delta} +
\hat\varepsilon_0\varepsilon_0^{1/2}L^{-\frac32\frac{(1-\delta)(1+2\delta)}{1+\delta}}.
\end{equation}

Note that $-\frac32\frac{(1-\delta)(1+2\delta)}{1+\delta} > -2+\delta$.
We may however improve~\eqref{fromlatterweobt} iteratively as follows:
If 
\[
{}^\chi \Xzerokminusthree  \lesssim  \hat\varepsilon_0 \alpha^2  L^{-\gamma},
\]
for $\gamma<2-\delta$,
then 
\[
\Xzerolesslessk \lesssim \hat \varepsilon_0 \alpha^2L^{-\gamma}\lesssim \varepsilon_0 L^{-\gamma},
\]
hence, plugging this  again into the final inequality of~\eqref{twoandthenathird} to estimate the nonlinear term,
we obtain
\begin{equation}
\label{inductivehere}
{}^\chi \Xzerokminusthree  \lesssim   \hat\varepsilon_0 \alpha^2 L^{-2+\delta}  + \hat\varepsilon_0 \varepsilon_0^{1/2} 
L^{-\frac32\frac{(1-\delta)(\gamma+2\delta)}{1+\delta}}.
\end{equation}
Setting $\gamma_0=\frac32\frac{(1-\delta)(1+2\delta)}{1+\delta}$, and defining inductively
$\gamma_i=\frac32\frac{(1-\delta)(\gamma_{i-1}+2\delta)}{1+\delta}$,
we have that there exists a first $i\ge 1$ such that $\frac32\frac{(1-\delta)(\gamma_{i}+2\delta)}{1+\delta} >2-\delta$.
It follows that~\eqref{inductivehere} holds for $\gamma=\gamma_i$, hence
\[
{}^\chi \Xzerokminusthree  \lesssim   \hat\varepsilon_0 \alpha^2 L^{-2+\delta}.
\]
This yields the fifth inequality of~\eqref{kiallobasicnewformiiagain}.
Note that this implies 
\begin{equation}
\label{solow}
\Xzerolesslessk \lesssim \hat \varepsilon_0 \alpha^2L^{-2+\delta}\lesssim \varepsilon_0  L^{-2+\delta} .
\end{equation}
Note also that third inequality of~\eqref{twoandthenathird}  implies
\begin{equation}
\label{solowone}
\Xonelesslessk \lesssim \hat \varepsilon_0 \alpha L^{-1+\delta} \lesssim \varepsilon_0 L^{-1+\delta}.
\end{equation}
Finally, plugging these bounds into the fourth inequality of~\eqref{twoandthenathird} yields the third inequality of~\eqref{kiallobasicnewformiiagain},
completing the proof of~\eqref{kiallobasicnewformiiagain}.

To obtain~\eqref{basicnewformiiagainkaivourgio}, we recall~\eqref{pminusoneweightedii}  from Proposition~\ref{iihierprop}.
This gives, for sufficiently small $\hat\varepsilon_{\rm slab}$,
\begin{eqnarray*}
\sup_{\tau_0\le \tau \le \tau_0+L} \Efancyonek(\tau) &\leq& \Efancyonek(\tau_0) 
+ C\, {}^\chi \Xonek
\sqrt{\Xonelesslessk} 
 + C\, {}^\chi\Xzerok\sqrt{\Xzerolesslessk}\sqrt{L}, \\
 &\leq& \hat\varepsilon_0 + C\hat\varepsilon_0 \varepsilon_0^{1/2} L^{(-1+\delta)/2} +C\hat\varepsilon_0 \varepsilon_0^{1/2} L^{(-2+\delta)/2}L^{1/2}\\
 &\leq& \hat\varepsilon_0(1+\hat\varepsilon_0^{1/4} L^{(-1+\delta)/2})
\end{eqnarray*}
where we have used the first inequality of~\eqref{additassum}, the first inequality of~\eqref{twoandthenathird},  
the estimate~\eqref{solowone} and
the estimate~\eqref{solow} and we have  replaced
$\varepsilon_0^{1/2}$ by $\hat\varepsilon_0^{1/4}$ 
in the penultimate line, sacrificing a quarter power to absorb the resulting $\alpha$ term and the constant $C$.

To obtain~\eqref{basicnewformiiagainallokaivourgio}, we now recall~\eqref{pweightedii} from Proposition~\ref{iihierprop}. 
This gives, for sufficiently small $\hat\varepsilon_{\rm slab}$,  
\begin{eqnarray*}
\sup_{\tau_0\le \tau \le \tau_0+L} 
\Efancytwominusdelkminusone(\tau) 
&\leq&  \Efancytwominusdelkminusone(\tau_0)  +C\, \left(  {}^\chi\Xtwominusdelkminusone\sqrt{\Xzerolesslessk} + 
\sqrt{{}^\chi\Xtwominusdelkminusone}\sqrt{ {}^\chi\Xzerokminusone}
\sqrt{\Xtwominusdeltalesslessk} \right) +
C\, {}^\chi\Xzerokminusone\sqrt{\Xzerolesslessk}\sqrt{L} \\
&\leq&  \hat\varepsilon_0 +C   \hat\varepsilon_0 \varepsilon_0^{1/2} L^{(-2+\delta)/2}
+C\hat\varepsilon_0\alpha^{1/2} \varepsilon_0^{1/2}  L^{-1/2}
+C\hat\varepsilon_0 \alpha \varepsilon_0^{1/2}  L^{(-2+\delta)/2}  \\
&\leq&  \hat\varepsilon_0(1 + \hat\varepsilon_0^{1/4} L^{-1/2}) ,
\end{eqnarray*}
where we have used the second inequality of~\eqref{additassum},
the second and third inequalities of~\eqref{kiallobasicnewformiiagain},
 the estimate~\eqref{solow} and the first inequality of~\eqref{fromlatterweobt}, and we have  replaced
$\varepsilon_0^{1/2}$ by $\hat\varepsilon_0^{1/4}$ 
in the penultimate line, sacrificing a quarter power to absorb the resulting $\alpha$ term and the constant $C$.

To show~\eqref{betterjust}--\eqref{bettertwo}, we first apply the pigeonhole principle as in~\cite{DafRodnew}
to 
the inequality
\[
\int_{\tau_0}^{\tau_0+L} \left( \,
\Ezerokminusone' (\tau')+ 
\Eoneminusdelkminustwo' (\tau')  
+ \alpha^{-1} L^{1-\delta}  \Ezerokminusthree'(\tau')\right) d\tau' \lesssim \hat \varepsilon_0 ,
\]
which, upon addition, follows from  the first, second and fourth 
inequalities of the estimate~\eqref{kiallobasicnewformiiagain} already shown. 
Recalling from~\eqref{higherorderfluxbulkrelation} that for both $p=2-\delta$ and $p=1$, we have
\[
\Epminusonekminustwo' \gtrsim \Epminusonekminustwo, \qquad \Epminusonekminusthree' \gtrsim \Epminusonekminusthree,
\]
we obtain 
that there exists $\tau'' \in [\tau_0,\tau_0+L/2]$, whose precise value depends on the solution, such that
\begin{eqnarray}
\label{withimplicitconstonenewii}
\Ezerokminusone(\tau'') &\lesssim& \hat{\varepsilon}_0 \cdot L^{-1},\\
\label{withimplicitconstone} 
\Eoneminusdelkminustwo(\tau'') &\lesssim& \hat{\varepsilon}_0 \cdot L^{-1},\\
\label{withimplicitconstwo}
\Ezerokminusthree(\tau'')		&\lesssim& 
\hat{\varepsilon}_0 \alpha L^{-1+\delta} \cdot L^{-1}.
\end{eqnarray}
Now in view of the interpolation estimate~\eqref{interpolationstatementnewback} of Proposition~\ref{interpolationprop} we have
\[
\Eonekminustwo(\tau'')\lesssim  \left( \,\, \Eoneminusdelkminustwo(\tau'') \right)^{1-\delta} 
 \left( \,\, \Etwominusdelkminustwo(\tau'')\right)^{\delta}  
 \leq  \left( \,\, \Eoneminusdelkminustwo(\tau'') \right)^{1-\delta} 
 \left( \sup_{\tau\in [\tau_0,\tau_0+L]} \Etwominusdelkminustwo(\tau) \right)^{\delta}  
\]
 and thus
 \begin{equation}
 \label{afterinterpolationhere}
\Eonekminustwo(\tau'')  \lesssim \hat{\varepsilon}_0 L^{-1+\delta}
\end{equation}
where we have used~\eqref{withimplicitconstone} and the estimate
for $\Etwominusdelkminustwo$ contained in~\eqref{kiallobasicnewformiiagain}.

Now
we apply~\eqref{pminusoneweightedii} 
and~\eqref{nonpweightedii}  again, with $\tau''$  in place of $\tau_0$, 
using~\eqref{withimplicitconstonenewii},~\eqref{afterinterpolationhere} and~\eqref{withimplicitconstwo} to bound the initial data,
to obtain that 
for all $\tau_0+L\ge \tau'\ge \tau_0+L/2$, we have
\begin{eqnarray}
\label{withimplicitconstonelaternewforii}
\Efancyzerokminusone(\tau')	&\sim& \Ezerokminusone(\tau') \lesssim \hat{\varepsilon}_0  L^{-1},\\
\label{withimplicitconstonelater} 
\Efancyonekminustwo(\tau')	&\sim& \Eonekminustwo(\tau') \lesssim \hat{\varepsilon}_0  L^{-1+\delta},\\
\label{withimplicitconsttwolater}
\Efancyzerokminusthree(\tau') 	&\sim& \Ezerokminusthree(\tau')\lesssim 
\hat{\varepsilon}_0 \alpha L^{-2+\delta}.
\end{eqnarray}

Thus, in view of the requirement $\alpha \ge \alpha_0 \gg 1$, for sufficiently large $\alpha_0$ we may 
absorb the constants implicit in $\lesssim$ by explicit constants of our choice by adding extra positive $\alpha$
powers to the right hand side of~\eqref{withimplicitconstonelater} and~\eqref{withimplicitconsttwolater}.
In this way,  we obtain the specific 
estimates~\eqref{betterone} and~\eqref{bettertwo}  for all $\tau'\ge\tau''$ and thus in  particular for
all $L\ge \tau'-\tau_0\ge L/2$. In the same way, we also obtain the specific constant of the  estimate
of~\eqref{addedherenotfancy} which will be convenient in our scheme. 
\end{proof}

Let us state an alternative ``relaxed'' version of the above proposition.

\begin{proposition}
\label{bettertimepleaseweaker}
Under the relaxed assumptions for Proposition~\ref{globproptwo} 
described in Remark~\ref{relaxedremark},
there exists a constant $C>0$, implicit in the inequalities $\lesssim$ below, a parameter $\beta>0$ sufficiently small, a  parameter $\alpha_0\gg 1$
and, for all $\alpha\ge \alpha_0$, a parameter~$\hat{\varepsilon}_{\rm slab}(\alpha)$
such that the following  holds.

Given $0<\hat\varepsilon_0 \leq \hat{\varepsilon}_{\rm slab}$,
let us assume in addition  that
\begin{equation}
\label{additassumalt}
\Eonek(\tau_0) \leq \hat\varepsilon_0 L^{\beta}, \quad
 \Etwominusdelkminusone(\tau_0) \leq \hat\varepsilon_0 L^{\beta},  \quad
 \Ezerokminusone(\tau_0) \leq \hat\varepsilon_0 \alpha^{1/2} L^{-1+\beta},
 \quad
   \Eonekminustwo(\tau_0) \leq  \hat\varepsilon_0\alpha L^{-1+\delta+\beta},
\quad \Ezerokminusthree(\tau_0) \leq \hat\varepsilon_0\alpha^2 L^{-2+\delta+\beta}.
\end{equation}
Then the solution $\psi$ of~\eqref{theequation}
on $\mathcal{R}(\tau_0,\tau_0+L)$ given by Proposition~\ref{globproptwo} 
satisfies the additional estimates
\begin{equation}
\label{addedherenotfancyalt}
\sup_{\tau\le \tau\le \tau_0+L} \Eonek(\tau) \leq  \alpha^{\beta}  \hat\varepsilon_0L^\beta ,  \qquad
\sup_{\tau_0\le \tau \le \tau_0+L} 
\Etwominusdelkminusone(\tau) \leq \alpha^{\beta} \hat\varepsilon_0 L^\beta   ,
\end{equation}
\begin{equation}
\label{kiallobasicnewformiiagainalt}
{}^\chi \Xonek \lesssim\hat \varepsilon_0 L^\beta, \quad {}^\chi \Xtwominusdelkminusone\lesssim\hat\varepsilon_0 L^\beta,\quad
{}^\chi \Xzerokminusone \lesssim \hat\varepsilon_0\alpha L^{-1+\beta}, \quad
{}^\chi \Xonekminustwo \lesssim\hat\varepsilon_0\alpha L^{-1+\delta+\beta}, \quad 
 {}^\chi \Xzerokminusthree
\lesssim\hat\varepsilon_0\alpha^2 L^{-2+\delta+\beta}.
\end{equation}

 Moreover, 
for all times $\tau'$ with $L\ge \tau'-\tau_0 \ge L/2$, we have that
\begin{equation}
\label{betterjustalt}
\Ezerokminusone(\tau') \leq  \hat\varepsilon_0 \alpha^{2\beta} L^{-1+\beta },\qquad 
\Eonekminusone(\tau') \leq   \hat\varepsilon_0 \alpha^{2\beta} L^{-1+\delta+\beta},\qquad
\Ezerokminustwo(\tau') \leq  \hat\varepsilon_0  \alpha^{1+2\beta} L^{- 2+\delta+\beta}.
\end{equation}
\end{proposition}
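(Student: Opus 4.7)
The plan is to follow the structure of the proof of Proposition~\ref{bettertimeplease}, but replacing every use of the estimates~\eqref{pweightedii}--\eqref{nonpweightedii} by the relaxed versions~\eqref{pweightediialt}--\eqref{nonpweightediialt} provided by Remark~\ref{remforcaseiisemi}. The essential change is that initial data on the right-hand side now appears as calligraphic (non-fraktur) energies and the leading constant is no longer $1$ but some $C>0$; this forces us to accept an $\alpha^\beta$ loss in~\eqref{addedherenotfancyalt} (rather than the much cleaner near-unit growth~\eqref{basicnewformiiagainkaivourgio}--\eqref{basicnewformiiagainallokaivourgio}) and to demand $\alpha_0$ sufficiently large depending on $\beta$ so that $\alpha^\beta \geq C$ for $\alpha \geq \alpha_0$.

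First I would verify that the hypotheses of the relaxed version of Proposition~\ref{globproptwo}, as formulated in Remark~\ref{relaxedremark} via~\eqref{relaxedwithbetaassumption}--\eqref{relaxedwithbetaconclusion}, are satisfied: the first inequality of~\eqref{additassumalt} is precisely~\eqref{relaxedwithbetaassumption} with $\varepsilon_0 := \hat\varepsilon_0 \alpha^3$ (which is $\leq \varepsilon_{\rm slab}$ provided $\hat\varepsilon_{\rm slab}(\alpha) \ll \alpha^{-3}\varepsilon_{\rm slab}$), and the second inequality of~\eqref{assumptionondatahere} follows from the third inequality of~\eqref{additassumalt} for $L \geq 1$. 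This yields the global existence of $\psi$ on $\mathcal{R}(\tau_0, \tau_0+L)$ together with the preliminary bound ${}^\chi\Xonek \lesssim \varepsilon_0 L^\beta$, so in particular the basic bootstrap~\eqref{basicbootstrap} holds with $\varepsilon \sim \hat\varepsilon_0 \alpha^3 L^\beta$ which can be made $\leq \varepsilon_{\rm boot}$ by taking $\hat\varepsilon_{\rm slab}(\alpha)$ small.

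Next I would set up the system of inequalities analogous to~\eqref{newmiddleinhieragain}--\eqref{lowestinhieragain} by applying the relaxed hierarchy~\eqref{pweightediialt}--\eqref{nonpweightediialt} at appropriate orders (namely at order $k$ for $p=1$; at order $k-1$ for $p=2-\delta$ and $p=0$; at order $k-2$ for $p=1$; and at order $k-3$ for $p=0$), using~\eqref{additassumalt} to bound the initial data. Absorbing the $\varepsilon_0^{1/2}$ nonlinear self-contributions on the left for sufficiently small $\hat\varepsilon_{\rm slab}(\alpha)$ yields the first, second and fourth inequalities of~\eqref{kiallobasicnewformiiagainalt}. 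For the anomalous $p=0$ estimates one must run the same iterative improvement used in the proof of Proposition~\ref{bettertimeplease}: setting $\gamma_i = \tfrac{3}{2} \tfrac{(1-\delta)(\gamma_{i-1} + 2\delta)}{1+\delta}$, the assumption $0<\delta<1/10$ ensures $\gamma_i$ eventually exceeds $2-\delta$, which produces the third and fifth inequalities of~\eqref{kiallobasicnewformiiagainalt}.

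Finally, I would obtain~\eqref{addedherenotfancyalt} and~\eqref{betterjustalt} exactly as in the original proof. Re-applying~\eqref{pweightediialt}--\eqref{pminusoneweightediialt} (rather than the fraktur versions) gives $\sup_\tau \Eonek(\tau) \leq C\hat\varepsilon_0 L^\beta$ for some $C$ independent of $\alpha$, whence taking $\alpha_0$ so large that $\alpha_0^\beta \geq C$ yields~\eqref{addedherenotfancyalt}. For~\eqref{betterjustalt}, the pigeonhole argument applied to the sum
\[
\int_{\tau_0}^{\tau_0+L} \left(\Ezerokminusone'(\tau') + \Eoneminusdelkminustwo'(\tau') + \alpha^{-1} L^{1-\delta-\beta} \Ezerokminusthree'(\tau')\right) d\tau' \lesssim \hat\varepsilon_0 L^\beta
\]
(which follows from~\eqref{kiallobasicnewformiiagainalt} and~\eqref{higherorderfluxbulkrelation}) produces a slice $\tau'' \in [\tau_0, \tau_0+L/2]$ on which $\Ezerokminusone(\tau'')$, $\Eoneminusdelkminustwo(\tau'')$, $\Ezerokminusthree(\tau'')$ are bounded by $\hat\varepsilon_0 L^{-1+\beta}$, $\hat\varepsilon_0 L^{-1+\beta}$ and $\hat\varepsilon_0 \alpha L^{-2+\delta+\beta}$ respectively; interpolation~\eqref{interpolationstatementnewback} gives $\Eonekminustwo(\tau'') \lesssim \hat\varepsilon_0 L^{-1+\delta+\beta}$; then reapplying~\eqref{pminusoneweightediialt}--\eqref{nonpweightediialt} from $\tau''$ forward and absorbing the constants into the $\alpha^{2\beta}$ factor delivers~\eqref{betterjustalt} for all $\tau' \in [\tau_0+L/2, \tau_0+L]$.

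The main obstacle is precisely the loss of the near-unit boundedness that the fraktur-energy identity~\eqref{pweightedii} provided in the quasilinear setting. In the semilinear/relaxed case one only controls the calligraphic energy at the top of the slab by the calligraphic energy on $\Sigma(\tau_0)$ up to a multiplicative $C$, which is why the conclusion~\eqref{addedherenotfancyalt} tolerates a factor $\alpha^\beta$ instead of the cleaner $\alpha$. Ensuring this loss is still compatible with the subsequent dyadic iteration scheme (in that $\prod_i (1+\alpha^\beta L_i^{-1/4})$-type products remain bounded for $\beta$ small) is the reason $\beta$ must be chosen sufficiently small depending only on $\delta$.
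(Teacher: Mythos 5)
Your proposal is essentially correct and follows the very path the paper intends (the paper's own proof is left to the reader as ``similar to Proposition~\ref{bettertimeplease}''): verify the relaxed hypothesis via Remark~\ref{relaxedremark}, run the same system of inequalities using the calligraphic hierarchy~\eqref{pweightediialt}--\eqref{nonpweightediialt} rather than the fraktur-based one, perform the iterative $\gamma_i$-improvement for the anomalous $p=0$ estimates, and close~\eqref{addedherenotfancyalt} by choosing $\alpha_0 \geq C^{1/\beta}$ to convert the unavoidable multiplicative constant into an $\alpha^\beta$ loss. This is precisely the correct substitution, and your observation that the loss of exact constant $1$ (available only through the fraktur fluxes in the quasilinear case) is what forces the $\alpha^\beta$ tolerance is the right insight.

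Two small corrections. First, in the pigeonhole sum the weight on $\Ezerokminusthree'$ should be $\alpha^{-1} L^{1-\delta}$, not $\alpha^{-1} L^{1-\delta-\beta}$: with the latter the weighted integral of $\Ezerokminusthree'$ is bounded by $\hat\varepsilon_0$ but the slice bound comes out as $\Ezerokminusthree(\tau'') \lesssim \hat\varepsilon_0 \alpha L^{-2+\delta+2\beta}$, losing an extra factor $L^\beta$ over what the proposition asserts; the weight $L^{1-\delta}$ makes the weighted integral $\lesssim \hat\varepsilon_0 L^{\beta}$ and the slice bound exactly $\hat\varepsilon_0 \alpha L^{-2+\delta+\beta}$ as required. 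Second, the closing comment about products $\prod_i(1+\alpha^\beta L_i^{-1/4})$ conflates the two iteration schemes: in the alternative dyadic iteration of Section~\ref{thereisnoalternative} with $L_i=\alpha^i$, the $\alpha^\beta$ factor at the top of the slab is absorbed \emph{exactly} into the growth from $L_i^\beta$ to $L_{i+1}^\beta = \alpha^\beta L_i^\beta$, and no convergent-product argument is needed; the smallness of $\beta$ is instead constrained by the a posteriori removal of the $\tau^\beta$ factor and by the interpolation step needing $\delta+\beta<1$ for the hierarchy to remain coherent. Neither issue affects the structure of your argument.
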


\begin{proof}
The proof is similar to that of Proposition~\ref{bettertimeplease} and is left to the reader.
\end{proof}

\begin{remark}
\label{holdsinSemilinear}
The above proposition  can now immediately be seen to also hold in
 the semilinear case, with the relaxed assumptions described in Remark~\ref{semiremsemi}.
\end{remark}

\subsubsection{The iteration: proof of Theorem~\ref{themaintheorem} in case (ii)}
\label{theiterationfortwo}
We may now prove Theorem~\ref{themaintheorem} in case (ii).

We shall proceed iteratively.

We define $\tau_0=1$, $L_0=1$, $L_i = 2^i$, $\tau_{i+1}=\tau_i+L_i$,
and fix $\alpha\ge \alpha_0$  so that the statement of Proposition~\ref{bettertimeplease} holds.
(Note that we shall no longer track the dependence of constants and parameters on $\alpha$ since it is now fixed; one could simply take $\alpha=\alpha_0$.)

For $0<\varepsilon_0 \le \varepsilon_{\rm global}$ and $\varepsilon_{\rm global}$ sufficiently small,
since 
\[
\Eonek(\tau_0)+ \Etwominusdelkminusone(\tau_0) \leq\varepsilon_0,
\]
then in view of Corollary~\ref{nomoreroman}, we have 
\[
\Efancyonek(\tau_0)\lesssim\varepsilon_0, \qquad
\Efancytwominusdelkminusone(\tau_0) \lesssim \varepsilon_0.
\]
Thus, for sufficiently small $\varepsilon_{\rm global}\ll \hat\varepsilon_{\rm slab}$, 
it
follows that there exists a $\hat\varepsilon_0(\varepsilon_0) \sim \varepsilon_0$, 
satisfying moreover  $2\alpha \hat\varepsilon_0 \leq \varepsilon_{\rm slab}$,
such that 
\[
\hat\varepsilon_0\leq \frac12\hat\varepsilon_{\rm slab}
\]
and
\[
\Efancyonek(\tau_0)\leq\hat\varepsilon_0,\qquad 
\Efancytwominusdelkminusone(\tau_0)\leq\hat\varepsilon_0, \qquad
\Efancyzerokminusone(\tau_0)\leq \hat\varepsilon_0 \alpha L^{-1},
 \qquad \Efancyonekminustwo(\tau_0)\leq\hat\varepsilon_0\alpha L^{-1+\delta},
\qquad \Efancyzerokminusthree(\tau_0)\leq \hat\varepsilon_0\alpha^2 L^{-2+\delta}.
\]
Finally, with our restriction on the definition of $\varepsilon_0$,  we may also write
\[
\Eonek(\tau_0)\leq \alpha\hat\varepsilon_0.
\]

In general, given
$\tau_i\ge 1$ defined above, $\hat{\varepsilon}_i\leq \hat{\varepsilon}_{\rm slab}$, and a solution $\psi$
of~\eqref{theequation} on $\mathcal{R}(\tau_0,\tau_i)$
 such that
\begin{equation}
\label{notsofancy}
\Eonek(\tau_i) \leq \alpha \hat\varepsilon_i
\end{equation}
and
\begin{equation}
\label{thosefancyones}
\Efancyonek(\tau_i) \leq \hat\varepsilon_i, \qquad
\Efancytwominusdelkminusone(\tau_i)\leq\hat\varepsilon_i, 
\qquad \Efancyzerokminusone(\tau_i)\leq\hat\varepsilon_i \alpha L^{-1},
\qquad \Efancyonekminusone(\tau_i)\leq\hat\varepsilon_i \alpha 
L_i^{-1+\delta},
\qquad \Efancyzerokminustwo(\tau_i)\leq \hat\varepsilon_i\alpha^2 L_i^{-2+\delta}   ,
\end{equation}
note that by our restrictions on $\hat\varepsilon_{\rm slab}$, the assumptions of Proposition~\ref{globproptwo}
hold with $\alpha^2 \hat\varepsilon_i$ in  place of $\varepsilon_0$ (here we are using~\eqref{notsofancy} to invoke Corollary~\ref{nomoreroman}
to rewrite~\eqref{thosefancyones} in terms of the calligraphic energies; cf.~the
first lines of the proof of Proposition~\ref{bettertimeplease})  and the assumptions of 
Proposition~\ref{bettertimeplease} then apply with $\hat\varepsilon_i$ in  place of $\hat\varepsilon_0$, 
where both propositions are understood now with
 $\tau_i$, $\tau_{i+1}$ in place of $\tau_0$, $\tau_1$.
 It follows that the solution $\psi$ extends to a solution defined also in
$\mathcal{R}_i:=\mathcal{R}(\tau_i,\tau_i+L_i)$ satisfying the estimates
\eqref{addedherenotfancy}--\eqref{kiallobasicnewformiiagain} while for $\tau'=\tau_{i+1}=\tau_i+L_i$ 
we  have in addition~\eqref{betterone}--\eqref{bettertwo}.
We have thus
\begin{align}
\nonumber
\Eonek(\tau_{i+1})&\leq \alpha \hat\varepsilon_i&\leq&\alpha\hat\varepsilon_{i+1} ,\\
\nonumber
\Efancyonek(\tau_{i+1})&\leq \hat\varepsilon_i(1+\hat{\varepsilon}_i^{1/4}L_i^{-(1+\delta)/2})&\leq&\hat\varepsilon_{i+1}  \\
\nonumber
\Efancytwominusdelkminusone(\tau_{i+1})&\leq \hat\varepsilon_i(1+\hat{\varepsilon}_i^{1/4}L_i^{-1/2})&\leq&\hat\varepsilon_{i+1}   , \\
\nonumber
\Efancyonekminusone(\tau_{i+1})&\leq\frac12 \hat\varepsilon_i \alpha L_i^{-1}&\leq &\hat\varepsilon_{i+1}\alpha L_{i+1}^{-1}  , \\
\nonumber
\Efancyonekminustwo(\tau_{i+1})&\leq\frac12 \hat\varepsilon_i \alpha L_i^{-1+\delta}&\leq &\hat\varepsilon_{i+1}\alpha L_{i+1}^{-1+\delta}  , \\
\nonumber
\Efancyzerokminusthree(\tau_{i+1})&\leq \frac14 \hat\varepsilon_i\alpha^2 L_i^{-2+\delta}&\leq & \hat\varepsilon_{i+1} \alpha^2 L_{i+1}^{-2+\delta} ,
\end{align}
as long as 
\begin{equation}
\label{inductdef}
\hat\varepsilon_{i+1}:=\hat\varepsilon_i(1+\hat\varepsilon_i^{1/4} L_i^{-(1+\delta)/2}).
\end{equation}

Now note that for $\varepsilon_{\rm global}$ sufficiently small, the above inductive definition~\eqref{inductdef} 
of $\hat\varepsilon_i$
ensures that $\hat\varepsilon_i\le 2\hat\varepsilon_0$ for all $i$.

It follows that a solution exists in $\mathcal{R}(\tau_0,\infty)=\cup \mathcal{R}(\tau_i,\tau_i+L_i)$
and in each interval the estimates \eqref{kiallobasicnewformiiagain} hold.

We obtain finally that for all $\tau\ge 1$ we have
\begin{equation}
\label{finalboundedness}
\Efancyonek(\tau)\leq \hat\varepsilon_0+C\hat\varepsilon_0^{3/2}, \qquad
\Efancytwominusdelkminusone(\tau)\leq \hat\varepsilon_0+C\hat\varepsilon_0^{3/2}, 
\end{equation}
\begin{equation}
\label{mustimprovethis}
\int_{\tau_0}^\tau {}^\chi\Eoneminusdelkminusone'+ \Eoneminusdelkminustwo'  \lesssim \hat\varepsilon_0\log(\tau+1),
\end{equation}
\begin{equation}
\label{resumefromhere}
\Eonekminustwo(\tau) \lesssim \hat\varepsilon_0\tau^{-1+\delta}, \qquad \Fonekminustwo(v, \tau)\lesssim \hat\varepsilon_0\tau^{-1+\delta} ,
\end{equation}
\begin{equation}
\label{resumefromheretwo}
\int_\tau^\infty {}^\chi\Ezerokminustwo'+ \Ezerokminusthree'  \lesssim \hat\varepsilon_0\tau^{-1+\delta} ,
\end{equation}
\begin{equation}
\label{resumefromherethree}
\Ezerokminusthree(\tau) \lesssim \hat \varepsilon_0\tau^{-2+\delta}, 
\qquad \Fzerokminusthree(v,\tau)\lesssim \hat\varepsilon_0\tau^{-2+\delta} ,
\end{equation}
\begin{equation}
\label{resumefromherefour}
\int_\tau^\infty{}^\chi\Ezerominusoneminusdeltakminusthree'+ 
\Ezerominusoneminusdeltakminusfour'  \lesssim \hat\varepsilon_0\tau^{-2+\delta} .
\end{equation}
Here, by $\mathcal{F}(v,\tau)$ we denote the restriction of the flux on $\underline{C}_v$ to $J^+(\Sigma(\tau))$.

Let us note that we may improve, a posteriori, the inequality~\eqref{mustimprovethis} to 
\begin{equation}
\label{weimprovehere}
\int_{\tau_0}^\infty{}^\chi\Eoneminusdelkminusone'+ \Eoneminusdelkminustwo'  \lesssim \hat\varepsilon_0.
\end{equation}
To show~\eqref{weimprovehere}, for all $\tau\ge \tau_0$,
we reapply the estimates of Proposition~\ref{finalpropest} globally on $\mathcal{R}(\tau_0,\tau)$. To control the
nonlinear error bulk integrals, redecompose the domain in dyadic time slabs and apply the estimates
proven and sum. In view of the estimates,
all these error bulk integrals can be controlled by $C\hat\varepsilon_0^{3/2}$.

\subsubsection{Alternative proof using Proposition~\ref{bettertimepleaseweaker} and the semilinear case}
\label{thereisnoalternative}

We note that we may prove the theorem alternatively using Proposition~\ref{bettertimepleaseweaker}.
We first fix $\alpha\ge \alpha_0$, and now define $L_i: = \alpha^{i}$.
Here our iterative assumption is 
\[
\Eonek(\tau_0) \leq \hat\varepsilon_0 L_i^{\beta}, \quad
 \Etwominusdelkminusone(\tau_0) \leq \hat\varepsilon_0 L_i^{\beta},  \quad
 \Ezerokminusone(\tau_0) \leq \hat\varepsilon_0 \alpha L_i^{-1+\beta},
 \quad
   \Eonekminustwo(\tau_0) \leq  \hat\varepsilon_0\alpha L_i^{-1+\delta+\beta},
\quad \Ezerokminusthree(\tau_0) \leq \hat\varepsilon_0\alpha^2 L_i^{-2+\delta+\beta}.
\]
In view of~\eqref{addedherenotfancyalt} and~\eqref{betterjustalt} and the definition of $L_i$, we see that this is closed under iteration. 
We obtain thus existence in $\mathcal{R}(\tau_0,\infty)$, and at first instance the bounds
\[
\Eonek(\tau)\lesssim  \hat\varepsilon_0 \tau^\beta, \qquad \Etwominusdelkminusone(\tau) \lesssim \hat\varepsilon_0 \tau^\beta ,
\]
\[
 \Ezerokminusone(\tau) \lesssim \hat\varepsilon_0 \tau^{-1+\beta},
 \qquad
   \Eonekminustwo(\tau) \lesssim  \hat\varepsilon_0 \tau^{-1+\delta+\beta},
\qquad \Ezerokminusthree(\tau_0) \lesssim \hat\varepsilon_0\tau^{-2+\delta+\beta}.
\]

These bounds are of course weaker than those of~\eqref{finalboundedness}--\eqref{resumefromherefour}.
The $\tau^\beta$ factors terms may be, however, removed a posteriori, as in the last paragraph of Section~\ref{theiterationfortwo},
by  revisiting the estimates 
globally on $\mathcal{R}(\tau_0,\tau)$, controlling the nonlinear error bulk integrals by redecomposing into dyadic time slabs,
applying the estimates already proven, and summing.

Thus this proof in the end yields finally the  same estimates as before, but where~\eqref{finalboundedness} is replaced by
\begin{equation}
\label{orbitalfinal}
\Ezerok \lesssim \hat\varepsilon_0,  \qquad   \Eonekminustwo(\tau) \lesssim \hat\varepsilon_0.
\end{equation}

\begin{remark}
\label{procremark}
In view of Remark~\ref{holdsinSemilinear}, this proof in particular 
applies to the semilinear case with the relaxed assumptions of Remark~\ref{semiremsemi}, 
giving the result for case (ii) as stated in Remark~\ref{bigremarkonsemi}.
\end{remark}

\begin{remark}
The disadvantage of this alternative proof over the proof given in Section~\ref{theiterationfortwo}  is that to obtain  the fundamental
top-order orbital stability statement~\eqref{orbitalfinal},
one must revisit the estimates globally. We thus believe that the proof in Section~\ref{theiterationfortwo}  better reflects the dyadically localised
philosophy of our method. 
\end{remark}

\subsection{Case (iii)}
\label{andnotitsiii}
We now turn to the most general case we consider, case (iii), where 
we do not assume that~\eqref{inhomogeneous} follows from a physical space identity~\eqref{energyidentity}, but only
assume~\eqref{inhomogeneous} as a black box, together with the weaker physical space identity described
in Section~\ref{caseiiiidentity}.

This case is slightly more involved because we must combine the estimate originating from
 Section~\ref{caseiiiidentity} 
with the estimate originating in Section~\ref{basicblackboxestimatesec}.
This alters a bit the numerology of the number of derivatives we must take, but
the basic  scheme is the same as case (ii).

\subsubsection{The hierarchy of inequalities}
\label{hierarchiiisec}

\begin{proposition}
\label{withthehierarchy}
Let $k$ be sufficiently large and let us assume the case (iii) assumptions. 
There exist constants $C>0$, $c>0$ and an $\varepsilon_{\rm boot}>0$ sufficiently small 
such that the following is true.

Consider a region $\mathcal{R}(\tau_0,\tau_1)$
and a $\psi$ solving~\eqref{theequation} on  $\mathcal{R}(\tau_0,\tau_1)$, satisfying 
moreover~\eqref{basicbootstrap} for $p=0$ and $0< \varepsilon\le \varepsilon_{\rm boot}$.
Let us assume moreover that 
\[
\tau_1\le \tau_0+L
\]
for some arbitrary $L> 0$.
We have the following hierarchy of inequalities:
\begin{eqnarray}
 \label{pweightedtopiii}
\Ffancytwominusdelk(v,\tau_1), \quad\Efancytwominusdelk(\tau_1)  , \quad  c\,{}^\rho\Xtwominusdelk
&\leq& \Efancytwominusdelk(\tau_0)  +   AC\,  {}^\chi\Xzerokminusone  +   C\, \left(  {}^\rho\Xtwominusdelk\sqrt{\Xzerolesslessk} + \sqrt{{}^\rho\Xtwominusdelk}\sqrt{ {}^\rho\Xzerok}
\sqrt{\Xtwominusdeltalesslessk} \right) +
C\, {}^\rho\Xzerok\sqrt{\Xzerolesslessk}\sqrt{L} \, , 
  \\
  \label{pweightednottopiii}
 {}^\chi\Xtwominusdelkminusone
&\lesssim& \Etwominusdelkminusone(\tau_0)  +    {}^\rho\Xtwominusdelkminusone\sqrt{\Xzerolesslessk} + \sqrt{{}^\rho\Xtwominusdelkminusone}\sqrt{ {}^\rho\Xzerokminusone}
\sqrt{\Xtwominusdeltalesslessk}+
 {}^\rho\Xzerok\sqrt{\Xzerolesslessk}\sqrt{L} \, , 
  \\
\label{pminusoneweightedtopiii}
\Ffancyonek(v,\tau_1) , \quad \Efancyonek(\tau_1) ,\quad c\, {}^\rho \Xonek&\leq& \Efancyonek(\tau_0) 
  +  AC\,  {}^\chi\Xzerokminusone        +C\, {}^\rho \Xonek
\sqrt{\Xonelesslessk} 
 + C\, {}^\rho\Xzerok\sqrt{\Xzerolesslessk}\sqrt{L}   \,  ,  \\
 \label{pminusoneweightednottopiii}
{}^\chi \Xonekminusone&\lesssim& \Eonekminusone(\tau_0) 
  +   C\, {}^\rho \Xonekminusone
\sqrt{\Xzerolesslessk} 
 + C\, {}^\rho\Xzerok\sqrt{\Xzerolesslessk}\sqrt{L}   \,    ,  \\
\nonumber
\Ffancyzerok(v,\tau_1),\quad \Efancyzerok(\tau_1) , \quad c\,{}^\rho \Xzerok &\leq& 
\Efancyzerok(\tau_0)
 +AC\,  {}^\chi\Xzerokminusone 
+ C\left( \,{}^\rho \Xzerok + (\,{}^\rho \Xzerok)^{\frac{1-\delta}{1+\delta}} (\, {}^\rho \Xonek)^{\frac{2\delta}{1+\delta}} \right)
\sqrt{\Xzerolesslessk+ (\,\Xzerolesslessk)^{\frac{1-\delta}{1+\delta}} (\, \Xonelesslessk)^{\frac{2\delta}{1+\delta}}}\\
\label{nonpweightedtopiii}
&&\qquad+ C\, {}^\rho\Xzerok\sqrt{\Xzerolesslessk}\sqrt{L}   \,   , 
 \\
\nonumber
{}^\chi \Xzerokminusone &\lesssim& 
\Ezerokminusone(\tau_0)
+  \left( \,{}^\rho \Xzerokminusone + (\,{}^\rho \Xzerokminusone)^{\frac{1-\delta}{1+\delta}} (\, {}^\rho \Xonekminusone)^{\frac{2\delta}{1+\delta}} \right)
\sqrt{\Xzerolesslessk+ (\,\Xzerolesslessk)^{\frac{1-\delta}{1+\delta}} (\, \Xonelesslessk)^{\frac{2\delta}{1+\delta}}}\\
\label{nonpweightednottopiii}
&&\qquad
+  {}^\rho\Xzerok\sqrt{\Xzerolesslessk}\sqrt{L}.
\end{eqnarray}
\end{proposition}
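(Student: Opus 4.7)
The plan is to adapt the proof of Proposition~\ref{iihierprop} to case (iii), where the essential new feature is the presence of the auxiliary lower-order error integral $A\int_{\tau_0}^{\tau_1}\Eerrorkminusone'$ on the right-hand side of the physical space identity, together with the fact that $\rho$ and $\chi$ may no longer be comparable. First I will verify that, for $\varepsilon_{\rm boot}\le \varepsilon_{\rm loc}$, the bootstrap assumption~\eqref{basicbootstrap} with $p=0$ implies via Proposition~\ref{improvetheprimitive} that~\eqref{mostprimitive} holds, so that all propositions in Sections~\ref{thisisprimitive}--\ref{hoeforquas} apply, and in particular the estimates~\eqref{addtermoneherewithflux}--\eqref{addtermoneherechi} and~\eqref{addtermoneherewithfluxzero}--\eqref{addtermoneherechizero} of Proposition~\ref{finalpropest} are available throughout $\mathcal{R}(\tau_0,\tau_1)$.

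For the ``top-line'' fraktur estimates~\eqref{pweightedtopiii},~\eqref{pminusoneweightedtopiii} and~\eqref{nonpweightedtopiii}, I would apply~\eqref{addtermoneherewithflux} at $p=2-\delta$ and $p=1$, respectively~\eqref{addtermoneherewithfluxzero} at $p=0$, and follow the recipe of the proof of Proposition~\ref{iihierprop}: replace the far-away supported nonlinear summands by the $\rho$-weighted master energies via~\eqref{implicationhere} and Remark~\ref{alreadyreplace}, and bound the near-region integral by
\[
\int_{\tau_0}^{\tau_1}\Ezerok(\tau')\sqrt{\Ezerominusoneminusdeltalesslessk'(\tau')}\,d\tau'\lesssim {}^\rho\Xzerok\,\sqrt{\Xzerolesslessk}\,\sqrt{L}
\]
by Cauchy--Schwarz in $\tau'$, with $\sqrt{L}$ coming from the length of the slab (for $p=0$ one interpolates via~\eqref{interpolationstatementbulk} as before to convert $\Xzeroplusk$ into the displayed form). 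The genuinely new ingredient is the $A\int\Eerrorkminusone'$ term: here I use the fundamental relation~\eqref{glue}, namely
\[
\int_{\tau_0}^{\tau_1}\Eerrorkminusone'(\tau')\,d\tau'\lesssim \int_{\tau_0}^{\tau_1}\left({}^{\chi}\Ezerominusoneminusdeltakminusone'(\tau')+\Ezerominusoneminusdeltakminustwo'(\tau')\right)d\tau'\lesssim {}^\chi\Xzerokminusone,
\]
which contributes exactly the $AC\cdot {}^\chi\Xzerokminusone$ term displayed on the right-hand sides. No equivalence between $\rho$- and $\chi$-quantities is used, since the structure of~\eqref{inhomogeneous} only produces $\rho$-weighted bulks on the left.

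For the ``second-line'' estimates~\eqref{pweightednottopiii},~\eqref{pminusoneweightednottopiii} and~\eqref{nonpweightednottopiii}, I would apply the black-box estimates~\eqref{addtermoneherechi},~\eqref{addtermoneherechizero} from Proposition~\ref{finalpropest} \emph{at order $k-1$}, reflecting the loss of one derivative associated to viewing the quasilinear equation as an inhomogeneous linear equation with source $(\Box_{g_0}-\Box_{g(\psi,x)})\psi$. These estimates do not produce an $A$-term, because they come from the ``true'' black-box inequality~\eqref{inhomogeneous} rather than from the auxiliary identity; the nonlinear contributions are again bounded by the far-away null-condition terms plus the $\sqrt{L}$ term from the near-region. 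Here the output is ${}^\chi\Xpkminusone$, not ${}^\rho$, which is crucial: it is precisely this $\chi$-weighted bulk at order $k-1$ that will be used to control the $A$-term in the top-line estimates, closing the hierarchy.

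The main obstacle is bookkeeping, not any new ingredient: one must track carefully that the auxiliary error bulk at order $k$ is, via~\eqref{glue}, dominated by a chi-weighted bulk at order $k-1$, which itself is controlled by a lower-order black-box estimate, so the full system~\eqref{pweightedtopiii}--\eqref{nonpweightednottopiii} is internally consistent. A minor subtlety is the $p=0$ anomaly: the nonlinear near-infinity terms must be estimated using~\eqref{nullcondassumptwo} via the $\Xzeroplusk$ energies, and then~\eqref{interpolationstatementbulk} converts these into the square-root expressions displayed in~\eqref{nonpweightedtopiii} and~\eqref{nonpweightednottopiii}. Once $\varepsilon_{\rm boot}$ is chosen small enough that the prefactor $\sqrt{\Xzerolesslessk}\lesssim \sqrt{\varepsilon}$ can be used to absorb the coercive bulks on the left-hand side in the lower-order estimates, all six inequalities follow.
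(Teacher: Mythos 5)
Your proposal is correct and follows essentially the same route as the paper: apply Proposition~\ref{finalpropest} (estimates \eqref{addtermoneherewithflux}, \eqref{addtermoneherewithfluxzero} for the fraktur lines, \eqref{addtermoneherechi}, \eqref{addtermoneherechizero} for the black-box lines), replace far-away energies with the ${}^\rho$-weighted master energies via Remark~\ref{alreadyreplace}, bound the near-region integral with Cauchy--Schwarz in $\tau'$ to produce the $\sqrt{L}$ factor, and — the key case-(iii) ingredient — invoke the fundamental relation~\eqref{glue} to dominate $A\int \Eerrorkminusone'$ by $AC\,{}^\chi\Xzerokminusone$. One small remark: your closing sentence suggests that $\varepsilon_{\rm boot}$ needs to be small so that $\sqrt{\Xzerolesslessk}\lesssim\sqrt{\varepsilon}$ can be used to absorb the coercive bulks, but no such absorption is part of Proposition~\ref{withthehierarchy} — the inequalities are stated raw, with constants $c$ on the left and $C$ on the right, and absorption is only carried out later in Proposition~\ref{globpropthree}; here $\varepsilon_{\rm boot}\le\varepsilon_{\rm loc}$ is needed only to make Proposition~\ref{finalpropest} applicable.
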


\begin{proof}
Again we recall that if $\varepsilon_{\rm boot}\le \varepsilon_{\rm loc}$, then the assumption of Proposition~\ref{finalpropest} holds. The proposition then  follows  from Proposition~\ref{finalpropest}
as in the proof of Proposition~\ref{iihierprop}, where we have used also the crucial estimate
\[
A\int_{\tau_0}^{\tau_1}\Eerrorkminusone'(\tau')d\tau'  \lesssim 
AC\,  {}^\chi\Xzerokminusone ,
\]
which follows from 
the fundamental relation~\eqref{glue}.
\end{proof}

\subsubsection{Global existence in $L$-slabs}

To show global existence in 
$L$-slabs, we require a minor modification
of the assumptions of Proposition~\ref{globproptwo}.

\begin{proposition}
\label{globpropthree}
Let $k-2\ge k_{\rm loc}$ be sufficiently large and let us assume the case (iii) assumptions.  Then  there exists a positive constant $\varepsilon_{\rm slab}\leq\varepsilon_{\rm loc}$ and a constant $C>0$ implicit in the sign $\lesssim$ below such that the following  is true.

Given arbitrary $L\ge 1$,
 $\tau_0\ge 0$,  $0<\varepsilon_0\leq \varepsilon_{\rm slab}$ and initial data $(\uppsi,\uppsi')$ 
 on $\Sigma(\tau_0)$ as in Proposition~\ref{localexistence}, satisfying moreover
\begin{equation}
\label{assumptionondataherepartiii}
\Eonek(\tau_0) \leq \varepsilon_0, 
\qquad  \Ezerokminustwo(\tau_0) \leq \varepsilon_0L^{-1}, 
\end{equation}
then the unique solution of Proposition~\ref{localexistence} obtaining the data can be extended to 
a $\psi$
defined on the entire  spacetime slab $\mathcal{R}(\tau_0,\tau_0+L)$
satisfying the equation~\eqref{theequation}
and
 the estimates
 \begin{equation}
 \label{newformatestimatesiii}
 {}^\rho  \Xonek + {}^\chi  \Xonekminusone  \lesssim \varepsilon_0,
\qquad {}^\rho \Xzerokminustwo  +{}^\chi \Xzerokminusthree \lesssim \varepsilon_0L^{-1}.
 \end{equation}
\end{proposition}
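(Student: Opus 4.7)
The proof will parallel that of Proposition~\ref{globproptwo} in case (ii), but must accommodate the additional error terms $AC \cdot {}^\chi\Xzerokminusone$, $AC \cdot {}^\chi\Xzerokminusthree$ appearing in the ``$\rho$-type'' estimates of Proposition~\ref{withthehierarchy}, which arise from the auxiliary physical space identity of case (iii). The key structural feature is a nested hierarchy: each $\rho$-estimate at a given order $\ell$ is coupled, via the $A$ error term, to a companion $\chi$-estimate at order $\ell-1$ which itself contains no $A$ term. These coupled pairs may then be closed by exploiting the fundamental relation $\Eerrorkminusone' \lesssim {}^\chi \Ezerominusoneminusdeltakminusone' + \Ezerominusoneminusdeltakminustwo'$ from~\eqref{glue}, which is already built into the $AC\cdot {}^\chi\Xzerokminusone$ form of the hierarchy.

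The plan is to introduce a bootstrap set $\mathfrak{B} \subset (\tau_0,\tau_0+L]$ of $\tau_f$ such that a solution of~\eqref{theequation} attaining the data exists on $\mathcal{R}(\tau_0,\tau_f)$ and satisfies the basic bootstrap~\eqref{basicbootstrap} with $p=0$ and $\ll\mkern-6mu k$ (replaced by an appropriate lower order) together with the auxiliary bootstrap $\Xzerolesslessk \leq \varepsilon L^{-1}$ on $\mathcal{R}(\tau_0,\tau_1:=\tau_f)$, where $\varepsilon_{\mathrm{slab}} \ll \varepsilon \leq \varepsilon_{\mathrm{boot}}$. Local existence (Proposition~\ref{localexistence}) gives $\mathfrak{B}\neq\emptyset$, and connectedness is automatic. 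The goal will be to show $\mathfrak{B}$ is both open and closed in $(\tau_0,\tau_0+L]$ by improving these bootstraps through the hierarchy.

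To improve the bootstrap, I would first work at the $p=1$ and $p=2-\delta$ levels. Applying~\eqref{pminusoneweightednottopiii} at order $k-1$ and using that $\Eonekminusone(\tau_0)\leq \Eonek(\tau_0)\leq \varepsilon_0$ gives, after absorbing nonlinear terms via the bootstrap assumptions, ${}^\chi\Xonekminusone \lesssim \varepsilon_0$; and similarly from~\eqref{nonpweightednottopiii} at order $k-1$, together with the $\gamma_i$-iteration exactly as in the proof of Proposition~\ref{globproptwo} (which exploits the interpolation~\eqref{interpolationstatementbulk} and the bootstrap $\Xzerolesslessk\leq \varepsilon L^{-1}$), one obtains ${}^\chi\Xzerokminusone \lesssim \varepsilon_0$, where the crude bound $\Ezerokminusone(\tau_0)\leq \Eonek(\tau_0)\leq \varepsilon_0$ is used. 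Plugging this bound on ${}^\chi\Xzerokminusone$ into the top-order estimate~\eqref{pminusoneweightedtopiii} then produces
\[
{}^\rho\Xonek \lesssim \varepsilon_0 + AC\, {}^\chi\Xzerokminusone + \varepsilon^{1/2}\,{}^\rho\Xonek + \ldots \lesssim \varepsilon_0,
\]
since $A$ is a fixed constant of the background. This delivers the first part of~\eqref{newformatestimatesiii}.

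For the decaying statement, I would repeat the procedure two orders lower: from~\eqref{nonpweightednottopiii} applied at order $k-3$, using $\Ezerokminusthree(\tau_0)\leq \Ezerokminustwo(\tau_0)\leq \varepsilon_0 L^{-1}$ and the previously established bound on ${}^\chi\Xonekminusthree$, the $\gamma_i$-iteration yields ${}^\chi\Xzerokminusthree \lesssim \varepsilon_0 L^{-1}$. Then~\eqref{nonpweightedtopiii} at order $k-2$, combined with $\Efancyzerokminustwo(\tau_0)\sim \Ezerokminustwo(\tau_0)\leq \varepsilon_0 L^{-1}$ via Corollary~\ref{nomoreroman}, gives
\[
{}^\rho\Xzerokminustwo \lesssim \varepsilon_0 L^{-1} + AC\,{}^\chi\Xzerokminusthree + \text{(nonlinear)} \lesssim \varepsilon_0 L^{-1}.
\]
The improved bounds strictly beat both bootstrap assumptions (for $\varepsilon_{\mathrm{slab}}\ll \varepsilon$), and the continuation criterion of Corollary~\ref{continuash}, applied at order $k-2$ with $p=0$ (using that $k-2\geq k_{\mathrm{loc}}$), together with~\eqref{extensionhigherhigher} for propagating the higher-order bound on ${}^\rho\Xonek$, gives openness of $\mathfrak{B}$; closedness then follows as in case (ii) since the extension length $\epsilon$ is independent of $\tau_f$.

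The principal obstacle is bookkeeping the derivative count rather than any new analytic difficulty: each application of the $\chi$-estimate to control the $A$ error loses one derivative through the relation~\eqref{glue}, and additionally the quasilinear character of~\eqref{equationwithloss} forces the $\chi$-type estimates (derived from the black-box~\eqref{inhomogeneous}) to be applied at one order less than the $\rho$-type estimates (derived directly from the energy identity of Section~\ref{caseiiiidentity}, which is robust to replacing $g_0$ with $g(\psi,x)$). These two separate losses explain the appearance of $k-2$ and $k-3$ rather than simply $k-1$ and $k-2$ as in case (ii), and the corresponding relaxed condition $k-2 \geq k_{\mathrm{loc}}$ in the statement.
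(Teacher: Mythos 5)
Your bootstrap setup, continuity argument, and final derivative count are all correct, but the core step — the order in which you close the estimates — has a genuine gap.

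You claim that applying~\eqref{pminusoneweightednottopiii} alone (together with the bootstrap assumptions) yields ${}^\chi\Xonekminusone\lesssim\varepsilon_0$, and then feed this into~\eqref{pminusoneweightedtopiii} to obtain ${}^\rho\Xonek\lesssim\varepsilon_0$. But the right-hand side of~\eqref{pminusoneweightednottopiii} contains the term ${}^\rho\Xzerok\sqrt{\Xzerolesslessk}\sqrt{L}$, and ${}^\rho\Xzerok$ is a \emph{top-order} quantity (bounded by ${}^\rho\Xonek$ but not by $\Xzerolesslessk$ or any bootstrapped norm). This is precisely the manifestation of the quasilinear loss of derivatives in~\eqref{addedalabel}: the black-box $\chi$-estimate for ${}^\chi\Xpkminusone$ requires the nonlinearity controlled at order $k$. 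The bootstrap gives $\sqrt{\Xzerolesslessk}\sqrt{L}\leq\sqrt{\varepsilon}$, but $\sqrt{\varepsilon}\,{}^\rho\Xzerok$ cannot be absorbed into ${}^\chi\Xonekminusone$ on the left. Conversely, to bound ${}^\rho\Xonek$ from~\eqref{pminusoneweightedtopiii} you need ${}^\chi\Xzerokminusone$. So the two families of estimates are genuinely coupled and cannot be closed sequentially in the order you propose; the same circularity appears in your treatment of~\eqref{nonpweightednottopiii}.

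The paper resolves this by closing both at once: adding~\eqref{pminusoneweightedtopiii} to a sufficiently large constant multiple of~\eqref{pminusoneweightednottopiii}, so that both ${}^\rho\Xonek$ and ${}^\chi\Xonekminusone$ sit on the left-hand side of a single inequality~\eqref{pminusoneweightedaddediii}. The multiple is chosen so that the $c\,{}^\chi\Xonekminusone\gtrsim{}^\chi\Xzerokminusone$ on the left absorbs the $AC\,{}^\chi\Xzerokminusone$ on the right, while the remaining nonlinear term ${}^\rho\Xzerok\sqrt{\Xzerolesslessk}\sqrt{L}\lesssim\sqrt{\varepsilon}\,{}^\rho\Xonek$ is absorbed by the ${}^\rho\Xonek$ now present on the left. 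The same is done for the $p=0$ pair at order $k-2$, producing~\eqref{nonpweightedaddediii}, with the $\gamma_i$-iteration then applied to the resulting \emph{combined} inequality. Your overall plan — bootstrap in a slab, improve, invoke the continuation criterion — is sound, but you must replace the sequential closure by this pairwise-added, simultaneous closure.
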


\begin{proof}
Consider the set $\mathfrak{B}\subset (\tau_0,\tau_0+L]$ 
consisting of all $\tau_0+L\ge \tau_f\ge \tau_0$ such that a solution $\psi$ of~\eqref{theequation}
obtaining the data exists on $\mathcal{R}(\tau_0,\tau_f)$ 
and such that
the boostrap assumption~\eqref{basicbootstrap} (with $p=1$)
and also the additional bootstrap assumption
\begin{equation}
\label{newversionadditionalbootthree}
\Xzerolesslessk \leq  \varepsilon L^{-1}
\end{equation}
 hold in $\mathcal{R}(\tau_0,\tau_1:=\tau_f)$, 
where
$0<\varepsilon\le \varepsilon_{\rm boot}$ 
 is a small constant satisfying
\begin{equation}
\label{providedprovided}
1\gg \varepsilon\gg \varepsilon_{\rm slab}.
\end{equation}
(The above relation in particular already constrains $\varepsilon_{\rm slab}$ to be sufficiently small.)

By the local well posedness statement Proposition~\ref{localexistence},
it follows
that since $k-2\ge k_{\rm loc}$ and $\varepsilon_0\leq \varepsilon_{\rm slab}\leq  \varepsilon_{\rm loc}$,
we have that $\tau_0+\tau_{\rm exist}\subset \mathfrak{B}$ and thus $\mathfrak{B}\ne\emptyset$, provided that
$\varepsilon$ satisfies~\eqref{providedprovided}.
Also note that a fortiori, if $\tau_f\in \mathfrak{B}$, then $(\tau_0,\tau_f]\in \mathfrak{B}$
and thus $\mathfrak{B}$  is manifestly a connected  subset of $(\tau_0,\infty)$.

Proposition~\ref{withthehierarchy} holds for $\mathcal{R}(\tau_0,\tau_1)$ with $\tau_1:=\tau_f$ for any $\tau_f\in \mathfrak{B}$.
Adding the equations~\eqref{pminusoneweightedtopiii}--\eqref{nonpweightednottopiii} pairwise with a suitable constant
so as to absorb the term multiplying~$A$, with $k-2$ replacing $k$ for the latter pair,
 we obtain the
system:
\begin{eqnarray}
\label{pminusoneweightedaddediii}
 {}^\rho \Xonek + {}^\chi \Xonekminusone &\lesssim& \Efancyonek(\tau_0) 
  +   {}^\rho \Xonek
\sqrt{\Xonelesslessk} 
 +  {}^\rho\Xzerok\sqrt{\Xzerolesslessk}\sqrt{L},    \\
\label{nonpweightedaddediii}
{}^\rho \Xzerokminustwo + {}^\chi \Xzerokminusthree &\lesssim& 
\Efancyzerokminustwo(\tau_0)
+\left( \,{}^\rho \Xzerokminustwo + (\,{}^\rho \Xzerokminustwo)^{\frac{1-\delta}{1+\delta}} 
(\, {}^\rho \Xonekminustwo)^{\frac{2\delta}{1+\delta}} \right)
\sqrt{\Xzerolesslessk+ (\,\Xzerolesslessk)^{\frac{1-\delta}{1+\delta}} (\, \Xonelesslessk)^{\frac{2\delta}{1+\delta}}}
+ {}^\rho\Xzerokminustwo\sqrt{\Xzerolesslessk}\sqrt{L}.
\end{eqnarray}

We now obtain that
\begin{equation}
\label{whatweveobtained}
{}^\rho \Xonek + {}^\chi \Xonekminusone \lesssim \varepsilon_0, \qquad
{}^\rho \Xzerokminustwo + {}^\chi \Xzerokminusthree \lesssim \varepsilon_0L^{-1}.
\end{equation}

We may now already apply our continuation criterion Corollary~\ref{continuash}, applied with $p=0$ and $k-2$,
to assert the existence of an $\epsilon$, independent of $\tau_1$, such that now defining $\tau_1:= \min \{ \tau_f+\epsilon, \tau_0+ L\}$,
the solution $\psi$ extends to a smooth solution of~\eqref{theequation}
on $\mathcal{R}(\tau_0,\tau_1)$, 
and moreover, from~\eqref{extensionhere}, that
\begin{equation}
\label{additionalbootherehereiiiwithC}
{}^\chi \Xzerokminusthree+
{}^\rho \Xzerokminustwo
   \lesssim \varepsilon_0 L^{-1}
\end{equation}
holds on $\mathcal{R}(\tau_0,\tau_1)$
and, from~\eqref{extensionhigherhigher}, that
\begin{equation}
\label{alsoextendingthisiii}
{}^\chi \Xonek 
   \lesssim \varepsilon_0 
\end{equation}
holds on $\mathcal{R}(\tau_0,\tau_1)$.
It follows that~\eqref{newversionadditionalbootthree} and~\eqref{basicbootstrap} (with $p=1$) hold on $\mathcal{R}(\tau_0,\tau_1)$ and thus
 $\tau_1= \min \{ \tau_f+\epsilon, \tau_0+ L\} \in \mathfrak{B}$.

Since $\epsilon$ is independent of $\tau_f$, and in view also of the connectivity of $\mathfrak{B}$, 
it follows that $\mathfrak{B}$ is a nonempty open and closed subset of $(\tau_0, \tau_0+L]$
and thus $\mathfrak{B}=(\tau_0,\tau_0+L]$ and hence the solution exists in the entire spacetime slab
$\mathcal{R}(\tau_0,\tau_0+L)$.

The estimates~\eqref{whatweveobtained} thus hold in the entire spacetime slab. 
This gives~\eqref{newformatestimatesiii}.
\end{proof}

\subsubsection{The pigeonhole argument}
\label{pigeonsiii}

The above assumptions on initial data are sufficient for global existence in the slab, but are not sufficient
to iterate. For this we shall need  strengthened assumptions.

\begin{proposition}
\label{bettertimepleaseiii}
Under the assumptions of Proposition~\ref{globpropthree}, 
there exists a constant $C>0$, implicit in the inequalities $\lesssim$ below,
a  parameter $\alpha_0\gg 1$
and, for all $\alpha\ge \alpha_0$, a parameter~$\hat{\varepsilon}_{\rm slab}(\alpha)$
such that for all $0<\hat\varepsilon_0 \le\hat{\varepsilon}_{\rm slab}(\alpha)$ the following  holds.

Let us assume in addition to~\eqref{assumptionondataherepartiii} that we have
\begin{equation}
\label{additiii}
\Efancyonek(\tau_0) \leq \hat\varepsilon_0 ,\qquad
\Efancytwominusdelkminustwo(\tau_0)\leq  \hat\varepsilon_0, \qquad
\Efancyzerokminustwo(\tau_0)\leq \hat\varepsilon_0\alpha L^{-1}, \qquad
\Efancyonekminusfour(\tau_0) \leq  \hat\varepsilon_0\alpha L^{-1+\delta}, 
\qquad
\Efancyzerokminussix(\tau_0) \leq \hat\varepsilon_0\alpha^2 L^{-2+\delta }.
\end{equation}

Then the solution $\psi$ of~\eqref{theequation}
on $\mathcal{R}(\tau_0,\tau_0+L)$ given by Proposition~\ref{globproptwo} 
satisfies the additional estimates
\begin{equation}
\label{addedherenotfancyiii}
\sup_{\tau_0\le \tau\le \tau_0+L} \,  \Eonek(\tau)   \leq   \alpha \hat\varepsilon_0,
\end{equation}
\begin{equation}
\label{basicnewformiiiagain}
\sup_{\tau_0\le \tau\le \tau_0+L} \, \Efancyonek(\tau) \leq \hat\varepsilon_0(1+ \alpha L^{-1/4})  ,    
\end{equation}
\begin{equation}
\label{basicestimateoneiiiagainnottop}
\sup_{\tau_0\le \tau\le \tau_0+L}\,  \Efancytwominusdelkminustwo(\tau) \leq\hat\varepsilon_0(1+ \alpha L^{-1/4})  ,
\end{equation}
\begin{equation}
\label{kiallobasicnewformiiiagain}
{}^\rho \Xonek+ {}^\chi \Xonekminusone \lesssim \hat\varepsilon_0,
\qquad {}^\rho \Xtwominusdelkminustwo+{}^\chi \Xtwominusdelkminusthree \lesssim\hat\varepsilon_0, 
\qquad
{}^\rho \Xzerokminustwo+{}^\chi \Xzerokminusthree \lesssim\hat\varepsilon_0\alpha L^{-1}, 
\qquad {}^\rho \Xonekminusfour+ {}^\chi \Xonekminusfive \lesssim \hat\varepsilon_0 \alpha L^{-1+\delta},
\end{equation}
\begin{equation}
\label{kiallobasicnewformiiiagainlastone}
 {}^\rho \Xzerokminussix + {}^\chi \Xzerokminusseven
\lesssim\hat\varepsilon_0\alpha^2 L^{-2+\delta}.
\end{equation}
  
 Moreover, 
for all times $\tau'$ with $L\ge \tau'-\tau_0 \ge L/2$, we have that
\begin{equation}
\label{betterjustiii}
\Efancyzerokminustwo(\tau') \leq \frac12  \hat\varepsilon_0 \alpha L^{-1},
\end{equation}
\begin{equation}
\label{betteroneiii}
\Efancyonekminusfour(\tau') \leq \frac12  \hat\varepsilon_0 \alpha L^{-1+\delta},
\end{equation}
\begin{equation}
\label{bettertwoiii}
\Efancyzerokminussix(\tau') \leq\frac14  \hat\varepsilon_0  \alpha^2 L^{- 2+\delta}.
\end{equation}

\end{proposition}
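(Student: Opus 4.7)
The strategy mirrors Proposition~\ref{bettertimeplease} of case (ii) but has to be adjusted for the extra $A\cdot{}^\chi\Xzerokminusone$-type terms appearing on the right hand sides of the fraktur estimates~\eqref{pweightedtopiii},~\eqref{pminusoneweightedtopiii},~\eqref{nonpweightedtopiii} in Proposition~\ref{withthehierarchy}. These terms are absorbed by pairing each fraktur ``top'' estimate at order $k'$ with its calligraphic ``non-top'' partner~\eqref{pweightednottopiii},~\eqref{pminusoneweightednottopiii},~\eqref{nonpweightednottopiii} at order $k'-1$, exactly as was done at $p=0$ in~\eqref{pminusoneweightedaddediii}--\eqref{nonpweightedaddediii}, which is why the estimates descend two derivatives at each step of the hierarchy rather than one. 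First I would observe that, since~\eqref{additiii} together with~\eqref{assumptionondataherepartiii} already implies that the hypotheses of Proposition~\ref{globpropthree} hold with $\varepsilon_0$ replaced by a constant multiple of $\hat\varepsilon_0\alpha^3$, by choosing $\hat\varepsilon_{\rm slab}(\alpha)\ll\alpha^{-3}\varepsilon_{\rm slab}$ we obtain existence on $\mathcal{R}(\tau_0,\tau_0+L)$ and may invoke Proposition~\ref{withthehierarchy} freely, together with Corollary~\ref{nomoreroman} to pass between fraktur and calligraphic energies throughout.

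Next I would derive~\eqref{kiallobasicnewformiiiagain} and~\eqref{kiallobasicnewformiiiagainlastone} by feeding the hierarchy of Proposition~\ref{withthehierarchy} at five differentiability levels. Specifically: (a) at order $k$ with $p=1$, add~\eqref{pminusoneweightedtopiii} to a large multiple of~\eqref{pminusoneweightednottopiii} at order $k-1$ to absorb $AC\,{}^\chi\Xzerokminusone$ into ${}^\chi\Xonekminusone$, giving the first estimate of~\eqref{kiallobasicnewformiiiagain} after using the bootstrap~\eqref{basicbootstrap} and the data bound $\Efancyonek(\tau_0)\leq\hat\varepsilon_0$; (b) analogously at order $k-2$ with $p=2-\delta$, using~\eqref{pweightedtopiii}--\eqref{pweightednottopiii}, giving the second; (c) at order $k-2$ with $p=0$, using~\eqref{nonpweightedtopiii}--\eqref{nonpweightednottopiii} and the interpolation~\eqref{interpolationstatementbulk}, giving the third; (d), (e) repeating (a), (c) at orders $k-4$ and $k-6$ respectively, now using the last two data bounds of~\eqref{additiii}. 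The $L^{-1}$, $L^{-1+\delta}$ and $L^{-2+\delta}$ powers are recovered, exactly as in the proof of Proposition~\ref{bettertimeplease}, by the iterative scheme following~\eqref{preresultforlowest}: each step plugs the previous bound on $\Xzerolesslessk$ and $\Xonelesslessk$ back into the interpolation term of the $p=0$ estimate, with the exponent improving by a factor $\frac{1-\delta}{1+\delta}<1$ until it saturates at the desired power.

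The pigeonhole step is then carried out in the spirit of~\cite{DafRodnew}, as in the proof of Proposition~\ref{bettertimeplease}. From the estimates~\eqref{kiallobasicnewformiiiagain}--\eqref{kiallobasicnewformiiiagainlastone} applied in the form of bounds on the bulk integrands via~\eqref{higherorderfluxbulkrelation}, one obtains
\[
\int_{\tau_0}^{\tau_0+L}\left(\Ezerokminustwo'(\tau')+\Eoneminusdelkminusfour'(\tau')+\alpha^{-1}L^{1-\delta}\Ezerokminussix'(\tau')\right)d\tau'\lesssim\hat\varepsilon_0,
\]
so that there exists $\tau''\in[\tau_0,\tau_0+L/2]$ with $\Ezerokminustwo(\tau'')\lesssim\hat\varepsilon_0 L^{-1}$, $\Eoneminusdelkminusfour(\tau'')\lesssim\hat\varepsilon_0 L^{-1}$ and $\Ezerokminussix(\tau'')\lesssim\hat\varepsilon_0\alpha L^{-2+\delta}$. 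Interpolating at $\tau''$ via~\eqref{interpolationstatementnewback} between $\Eoneminusdelkminusfour(\tau'')$ and the already proven bound $\Etwominusdelkminusfour(\tau'')\lesssim\hat\varepsilon_0$ gives $\Eonekminusfour(\tau'')\lesssim\hat\varepsilon_0 L^{-1+\delta}$. Then I would re-run the hierarchy (at orders $k-2$, $k-4$ and $k-6$) starting from the slice $\tau''$ rather than $\tau_0$, to propagate these decayed bounds to every $\tau'$ with $L\geq\tau'-\tau_0\geq L/2$. Taking $\alpha_0$ large enough to swallow the absolute constants implicit in the $\lesssim$ symbols into the factors $\alpha$, $\alpha$ and $\alpha^2$ respectively produces the sharp constants $\tfrac12,\tfrac12,\tfrac14$ of~\eqref{betterjustiii}--\eqref{bettertwoiii}, and the accompanying constant $\alpha$ in~\eqref{addedherenotfancyiii}.

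Finally, for the two ``orbital stability'' statements~\eqref{basicnewformiiiagain}--\eqref{basicestimateoneiiiagainnottop}, where the constant in front of the initial fraktur energy must literally be $1+o(1)$, I would return to the \emph{unadded} fraktur inequalities~\eqref{pminusoneweightedtopiii} and~\eqref{pweightedtopiii}. The crucial points are: (i) the constant in front of $\Efancyonek(\tau_0)$ and $\Efancytwominusdelkminustwo(\tau_0)$ on the right hand sides is exactly~$1$; (ii) the $AC\,{}^\chi\Xzerokminusone$ term is, by the already established~\eqref{kiallobasicnewformiiiagain}, bounded by $C\hat\varepsilon_0\alpha L^{-1}$; (iii) the remaining nonlinear error is bounded, using~\eqref{kiallobasicnewformiiiagain}, by $C\hat\varepsilon_0^{3/2}\alpha^{\sigma} L^{-1/2}$ for some fixed power~$\sigma$; all of these can be absorbed into $\hat\varepsilon_0\cdot\hat\varepsilon_0^{1/4}L^{-1/4}$ if $\hat\varepsilon_{\rm slab}(\alpha)$ is taken sufficiently small in $\alpha$ (sacrificing, as in the proof of Proposition~\ref{bettertimeplease}, a small power of $\hat\varepsilon_0$ to absorb the polynomial factors of~$\alpha$). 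The main obstacle throughout is bookkeeping: ensuring that at every derivative level the $A$-absorption~\eqref{glue} works out and that one has enough derivatives of regularity left at the bottom ($k-7$) for Proposition~\ref{globpropthree} to apply, which is why $k$ must be taken sufficiently large.
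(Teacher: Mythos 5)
Your overall structure — upgrading the data assumptions via Corollary~\ref{nomoreroman}, running the pairwise-added hierarchy at orders $k,k-2,k-4,k-6$, improving the anomalous $p=0$ step iteratively, then pigeonholing and re-running from a good slice $\tau''$ — matches the paper's proof and is correct up to and including the derivation of~\eqref{betterjustiii}--\eqref{bettertwoiii} and~\eqref{addedherenotfancyiii}.

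There is, however, a genuine gap in your final paragraph, the one proving the two ``constant-exactly-$1$'' bounds~\eqref{basicnewformiiiagain}--\eqref{basicestimateoneiiiagainnottop}. You assert that the term $AC\,{}^\chi\Xzerokminusone$ appearing on the right of~\eqref{pminusoneweightedtopiii} is ``by the already established~\eqref{kiallobasicnewformiiiagain} bounded by $C\hat\varepsilon_0\alpha L^{-1}$''. It is not:~\eqref{kiallobasicnewformiiiagain} only gives $L^{-1}$-decay at order $k-3$ (namely ${}^\chi\Xzerokminusthree\lesssim\hat\varepsilon_0\alpha L^{-1}$), together with ${}^\chi\Xonekminusone\lesssim\hat\varepsilon_0$ at order $k-1$ \emph{without} any $L$-decay. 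The data assumption~\eqref{additiii} likewise only starts its $L^{-1}$-decay at order $k-2$, so the $p=0$ black-box estimate at order $k-1$ cannot produce an $L^{-1}$-decaying bound on ${}^\chi\Xzerokminusone$ either. Your proposed absorption into $\hat\varepsilon_0\cdot\hat\varepsilon_0^{1/4}L^{-1/4}$ compounds the problem: the term $AC\,{}^\chi\Xzerokminusone$ is a \emph{linear} error (it carries no $\sqrt{\varepsilon}$ smallness from a nonlinear structure), so its coefficient cannot be made small with $\hat\varepsilon_0$, only with $\alpha$; this is precisely why the statement of~\eqref{basicnewformiiiagain} reads $\hat\varepsilon_0(1+\alpha L^{-1/4})$ and not the case-(ii) form $\hat\varepsilon_0(1+\hat\varepsilon_0^{1/4}L^{-1/4})$.

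The missing device is the short-time averaging combined with the interpolation inequality~\eqref{interpolationwithk}. One replaces ${}^\chi\Xzerokminusone$ on the right of~\eqref{pminusoneweightedtopiii} by the version starting from $\tau_0+\tfrac12$ and controls the initial flux $\int_{\tau_0}^{\tau_0+\frac12}\Ezerokminusone$ by $\sqrt{\int_{\tau_0}^{\tau_0+\frac12}\Ezerok}\sqrt{\int_{\tau_0}^{\tau_0+\frac12}\Ezerokminustwo}$ via~\eqref{interpolationwithk}. Feeding in $\sup_\tau\Ezerok\lesssim\hat\varepsilon_0$ (from the first inequality of~\eqref{kiallobasicnewformiiiagain}, note without the $\alpha$) and $\sup_\tau\Ezerokminustwo\lesssim\hat\varepsilon_0\alpha L^{-1}$ yields the interpolated bound ${}^\chi\Xzerokminusone\lesssim\hat\varepsilon_0\alpha^{1/2}L^{-1/4}$, with the crucial $\alpha^{1/2}$ rather than $\alpha$. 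After multiplication by $AC$ this gives $AC\hat\varepsilon_0\alpha^{1/2}L^{-1/4}\leq\hat\varepsilon_0\alpha L^{-1/4}$ as soon as $\alpha^{1/2}\geq AC$, which is exactly where the requirement $\alpha\geq\alpha_0\gg 1$ is used. The analogous statement~\eqref{basicestimateoneiiiagainnottop} then follows from~\eqref{pweightedtopiii} using the same mechanism at order $k-2$. Without this averaging-plus-interpolation step, the linear $A$-term cannot be brought below the threshold $\hat\varepsilon_0\alpha L^{-1/4}$ required for the iteration of Section~\ref{theiterationforthree} to close.
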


\begin{proof}
The proof follows closely that of Proposition~\ref{bettertimeplease}.

Note again that by the statement of the proposition, we are in particular
also assuming a priori~\eqref{assumptionondataherepartiii} for \emph{some} $\varepsilon_0\leq \varepsilon_{\rm slab}$,
since this is included in the assumptions of Proposition~\ref{globpropthree}.
In view now of  Corollary~\ref{nomoreroman}, for $\alpha_0$ sufficiently large,
if say $\hat{\varepsilon}_{\rm slab}(\alpha)\ll \alpha^{-3} \varepsilon_{\rm slab}$,
it follows from the additional assumptions~\eqref{additiii} that 
the estimates~\eqref{assumptionondataherepartiii} 
in fact hold with the specific constant $\varepsilon_0:=\hat\varepsilon_0\alpha^3$ for all $\alpha\ge \alpha_0$.

We  now revisit~\eqref{pweightedtopiii}--\eqref{nonpweightednottopiii} and add again pairwise.
We now obtain that
\begin{equation}
\label{thefirstthree}
{}^\rho\Xonek + {}^\chi\Xonekminusone \lesssim  \hat\varepsilon_0 , \qquad
{}^\rho\Xtwominusdelkminustwo + {}^\chi\Xtwominusdelkminusthree\lesssim  \hat\varepsilon_0, \qquad
{}^\rho \Xonekminusfour + {}^\chi \Xonekminusfive \lesssim \hat\varepsilon_0\alpha L^{-1+\delta} , 
\end{equation}
\begin{equation}
\label{needstobeimprovedbefore}
{}^\rho\Xzerokminustwo + {}^\chi\Xzerokminusthree\lesssim  \hat\varepsilon_0\alpha L^{-1}+ 
 \left(\,{}^\rho \Xzerokminustwo)^{\frac{1-\delta}{1+\delta}} 
(\, {}^\rho \Xonekminustwo)^{\frac{2\delta}{1+\delta}} \right)
\sqrt{\Xzerolesslessk+ (\,\Xzerolesslessk)^{\frac{1-\delta}{1+\delta}} (\, \Xonelesslessk)^{\frac{2\delta}{1+\delta}}},
\end{equation}
\begin{equation}
\label{needstobeimproved}
{}^\rho \Xzerokminussix + {}^\chi \Xzerokminusseven \lesssim \hat\varepsilon_0\alpha^2 L^{-2+\delta} +\hat\varepsilon_0 \varepsilon_0^{1/2} 
L^{-\frac32\frac{(1-\delta)(\gamma+2\delta)}{1+\delta}}.
\end{equation}

The inequalities~\eqref{thefirstthree} give the first second and fourth inequality of~\eqref{kiallobasicnewformiiiagain}.

Note that $-2+\delta < -\frac32+\delta$.
We may however improve~\eqref{needstobeimproved} iteratively as follows:
If 
\[
{}^\rho \Xzerokminussix + {}^\chi \Xzerokminusseven  \lesssim  \hat\varepsilon_0 \alpha^2  L^{-\gamma},
\]
for $\gamma \le 2-\delta$,
then 
\[
\Xzerolesslessk \lesssim \hat \varepsilon_0 \alpha^2L^{-\gamma} \lesssim \varepsilon_0 L^{-\gamma},
\]
hence, plugging this  again into~\eqref{nonpweightedaddediii},
we obtain

\begin{equation}
\label{inductivehereiii}
{}^\rho \Xzerokminussix + {}^\chi \Xzerokminusseven \lesssim   \hat\varepsilon_0 \alpha^2 L^{-2+\delta} 
 + \hat\varepsilon_0 \varepsilon_0^{1/2} L^{-\frac32\frac{(1-\delta)(\gamma+2\delta)}{1+\delta}}.
\end{equation}
Setting $\gamma_0=\frac32\frac{(1-\delta)(1+2\delta)}{1+\delta}$, and defining inductively
$\gamma_i=\frac32\frac{(1-\delta)(\gamma_{i-1}+2\delta)}{1+\delta}$,
we have that there exists a first $i\ge 1$ such that $\frac32\frac{(1-\delta)(\gamma_{i}+2\delta)}{1+\delta} >2-\delta$.
It follows that~\eqref{inductivehereiii} holds for $\gamma=\gamma_i$, hence
\[
{}^\rho \Xzerokminussix + {}^\chi \Xzerokminusseven  \lesssim   \hat\varepsilon_0 \alpha^2 L^{-2+\delta}.
\]
This yields~\eqref{kiallobasicnewformiiiagainlastone}.
Note that this implies 
\begin{equation}
\label{solowiii}
\Xzerolesslessk \lesssim \hat \varepsilon_0 \alpha^2L^{(-2+2\delta)z}\lesssim \varepsilon_0  L^{-2+\delta} .
\end{equation}
On the other hand, the fourth inequality of~\eqref{kiallobasicnewformiiiagain} implies 
\begin{equation}
\label{solowiiinew}
\Xonelesslessk \lesssim \hat\varepsilon_0 \alpha L^{-1+\delta} \lesssim \varepsilon_0 L^{-1+\delta}.
\end{equation}
We may now infer the third inequality of~\eqref{kiallobasicnewformiiiagain} by plugging in the derived bounds into~\eqref{needstobeimprovedbefore}.
This concludes the proof of~\eqref{kiallobasicnewformiiiagain}.

To show~\eqref{betterjustiii}--\eqref{bettertwoiii}, respectively, we first apply the pigeonhole principle as in~\cite{DafRodnew}
to 
the inequality
\[
\int_{\tau_0}^{\tau_0+L} \left(\, \Ezerokminustwo' (\tau')
+ \Eoneminusdelkminusfour' (\tau')
+ \alpha^{-1} L^{1-\delta} \Ezerokminussix' (\tau')\right) d\tau' \lesssim \hat\varepsilon_0,
\]
which, upon addition, follows from the  inequalities of the estimate~\eqref{kiallobasicnewformiiiagain} already shown. 
Recalling from~\eqref{higherorderfluxbulkrelation} that  we have

\[
\Ezerokminustwo' \gtrsim \Ezerokminustwo, \qquad \Eoneminusdelkminusfour' \gtrsim \Eoneminusdelkminusfour, \qquad
\Ezerokminussix' \gtrsim \Ezerokminussix, 
\]
we obtain 
that there exists $\tau''  \in [\tau_0,\tau_0+L/2]$, whose precise value depends on the solution, such that
\begin{eqnarray}
\label{withimplicitconstoneiiinew}
\Ezerokminustwo(\tau'') &\lesssim& \hat{\varepsilon}_0 \cdot L^{-1},\\
\label{withimplicitconstoneiii}
\Eoneminusdelkminusfour(\tau'') &\lesssim& \hat{\varepsilon}_0  \cdot L^{-1},\\
\label{withimplicitconstwoiii}
\Ezerokminussix(\tau'')	&\lesssim&
\hat{\varepsilon}_0 \alpha L^{-1+\delta} \cdot L^{-1}.
\end{eqnarray}
Now in view of the interpolation estimate~\eqref{interpolationstatementnewback} of Proposition~\ref{interpolationprop}, 
we have
\begin{equation}
\label{reftosecondfactor}
\Eonekminusfour(\tau'')\lesssim  \left( \,\, \Eoneminusdelkminusfour(\tau'') \right)^{1-\delta} 
 \left( \,\, \Etwominusdelkminusfour(\tau'')\right)^{\delta}  
\leq \left( \,\, \Eoneminusdelkminusfour(\tau'') \right)^{1-\delta} 
 \left( \,\, \sup_{\tau_0\le \tau\le \tau_0+L} \Etwominusdelkminusfour(\tau) \right)^{\delta}  
\end{equation}
 and thus
 \begin{equation}
 \label{afterinterpolationhereiii}
\Eonekminusfour(\tau'')  \lesssim \hat{\varepsilon}_0   L^{-1+\delta} ,
\end{equation}
where we have used~\eqref{withimplicitconstoneiii} and
the estimate for the second factor on the right hand side of~\eqref{reftosecondfactor} contained in the second inequality of~\eqref{kiallobasicnewformiiiagain}.

Now
we apply~\eqref{pminusoneweightedaddediii} 
and~\eqref{nonpweightedaddediii}  again, with $\tau''$ in place of $\tau_0$,  
using~\eqref{withimplicitconstoneiiinew}--\eqref{withimplicitconstwoiii} to bound the initial data,
to obtain that 
for all $\tau_0+L\ge \tau'\ge \tau_0+L/2$, we have
\begin{align}
\label{withimplicitconstonelateriiinewhere}
\Efancyonekminustwo(\tau')\sim  \Eonekminustwo(\tau) &\lesssim  \hat{\varepsilon}_0L^{-1},\\
\label{withimplicitconstonelateriii}
\Efancyonekminusfour(\tau')\sim  \Eonekminusfour(\tau) &\lesssim  \hat{\varepsilon}_0  L^{-1+\delta}, \\
\label{withimplicitconsttwolateriii}
\Efancyzerokminussix(\tau') \sim \Ezerokminussix(\tau) &\lesssim 
\hat{\varepsilon}_0 \alpha L^{-2+\delta}.
\end{align}

Thus, in view of the requirement $\alpha \ge \alpha_0 \gg 1$, for sufficiently large $\alpha_0$ we may 
absorb the constants implicit in $\lesssim$ by explicit constants of our choice by adding extra positive $\alpha$
powers to the right hand side of~\eqref{withimplicitconstonelateriii} and~\eqref{withimplicitconsttwolateriii}.
In this way,  we obtain the specific 
estimates~\eqref{betterjustiii},~\eqref{betteroneiii} and~\eqref{bettertwoiii}. In the same way, we also obtain the specific constant of the  estimate
of~\eqref{addedherenotfancyiii}
 which will be convenient in our scheme.

We finally turn to~\eqref{basicnewformiiiagain} and~\eqref{basicestimateoneiiiagainnottop}.
Let us first note that revisiting~\eqref{nonpweightednottopiii}, with a little bit of averaging, we have the bounds
\begin{eqnarray*}
 {}^\chi\Xzerokminusone(\tau_0+\frac12, \tau_0+L) &\lesssim&
\int_{\tau_0}^{\tau_0+\frac12} \Ezerokminusone(\tau)d\tau 
+  \left( \,{}^\rho \Xzerokminusone + (\,{}^\rho \Xzerokminusone^{\frac{1-\delta}{1+\delta}} (\, {}^\rho \Xonekminusone)^{\frac{2\delta}{1+\delta}} \right)
\sqrt{\Xzerolesslessk+ (\,\Xzerolesslessk)^{\frac{1-\delta}{1+\delta}} (\, \Xonelesslessk)^{\frac{2\delta}{1+\delta}}}
+  {}^\rho\Xzerok\sqrt{\Xzerolesslessk}\sqrt{L}\\
&\lesssim &\sqrt{\int_{\tau_0}^{\tau_0+\frac12}\Ezerok(\tau)d\tau }\sqrt{\int_{\tau_0}^{\tau_0+\frac12}\Ezerokminustwo(\tau)d\tau}
+ \hat\varepsilon_0 \varepsilon^{\frac12} L^{-\frac14}\\
&\lesssim &\hat\varepsilon_0 \alpha^{\frac12} L^{-\frac12}+  \hat\varepsilon_0 \varepsilon^{\frac12} L^{-\frac14}\\
&\lesssim &\hat\varepsilon_0 \alpha^{\frac12} L^{-\frac14},
\end{eqnarray*}
where we have used also the interpolation inequality~\eqref{interpolationwithk} (and that $L\ge \frac12$).

We now notice that the above bound clearly holds also for
${}^\chi\Xzerokminusone':= {}^\chi\Xzerokminusone(\tau_0+\frac12, \tau_0+L)+\int_{\tau_0}^{\tau_0+\frac12} \Ezerokminusone(\tau)d\tau $,
For~\eqref{basicnewformiiiagain},  we revisit~\eqref{pminusoneweightedtopiii}, noting that we may replace
${}^\chi\Xzerokminusone$ by ${}^\chi\Xzerokminusone'$ on the right hand side, 
and thus  we may bound 
\begin{eqnarray*}
\sup_{\tau\in [\tau_0,\tau_0+L]} 
\Efancyonek(\tau) 				&\leq& \Efancyonek(\tau_0) 
  +   AC\, {}^\chi\Xzerokminusone'+C\, {}^\rho \Xonek
\sqrt{\Xonelesslessk} 
 + C\, {}^\rho\Xzerok\sqrt{\Xzerolesslessk}\sqrt{L} \\
 &\leq&\hat \varepsilon_0
  +  AC (\hat\varepsilon_0 \alpha^{\frac12} L^{-\frac14}) +C\hat\varepsilon_0\varepsilon_0^{1/2}L^{(-1+\delta)/2}
 + C\hat\varepsilon_0\varepsilon_0^{12} L^{-\frac32+\delta}\\
    &\leq&\hat \varepsilon_0
  + \alpha\hat\varepsilon_0L^{-\frac14} ,
\end{eqnarray*}
and we have used a higher power of $\alpha$
to absorb constants.

Finally, for~\eqref{basicestimateoneiiiagainnottop}, we revisit~\eqref{nonpweightednottopiii} for $k-2$ replacing $k-1$,
to obtain
\begin{eqnarray*}
 {}^\chi\Xzerokminustwo&\lesssim&
 \Ezerokminustwo(\tau_0)
+  \left( \,{}^\rho \Xzerokminustwo + (\,{}^\rho \Xzerokminustwo)^{\frac{1-\delta}{1+\delta}} (\, {}^\rho \Xonekminustwo)^{\frac{2\delta}{1+\delta}} \right)
\sqrt{\Xzerolesslessk+ (\,\Xzerolesslessk)^{\frac{1-\delta}{1+\delta}} (\, \Xonelesslessk)^{\frac{2\delta}{1+\delta}}}
+  {}^\rho\Xzerokminusone\sqrt{\Xzerolesslessk}\sqrt{L}\\
&\lesssim &\hat\varepsilon_0 \alpha L^{-1}
\end{eqnarray*}
whence we have
\begin{eqnarray*}
\sup_{\tau\in [\tau_0,\tau_0+L]} 
\Efancytwominusdelkminusone(\tau)	&\leq& 	
 \Efancytwominusdelkminusone(\tau_0) 
  +   A\, {}^\chi\Xzerokminustwo+   C\, \left(  {}^\rho\Xtwominusdelkminusone\sqrt{\Xzerolesslessk} + \sqrt{{}^\rho\Xtwominusdelkminusone}\sqrt{ {}^\rho\Xzerokminusone}
\sqrt{\Xtwominusdeltalesslessk} \right) +
C\, {}^\rho\Xzerokminusone\sqrt{\Xzerolesslessk}\sqrt{L}\\
 &\le&
\hat\varepsilon_0 + 
 A\hat\varepsilon_0 \alpha^{\frac12} L^{-1}  +C\hat\varepsilon_0\varepsilon_0^{1/2} L^{(-2+\delta)/2} 
 + C\hat\varepsilon_0\varepsilon_0^{1/2} L^{(-2+\delta)/2}L^{1/2}\\
&\leq&\hat\varepsilon_0+\alpha\hat\varepsilon_0L^{-\frac14},
\end{eqnarray*}
where we have again used a higher power of $\alpha$ to absorb all other constants.
\end{proof}

\subsubsection{The iteration: proof of Theorem~\ref{themaintheorem} in case (iii)}
\label{theiterationforthree}
We may now prove Theorem~\ref{themaintheorem} in case (iii).

As in case (ii), we shall proceed iteratively. The proof is a simple modification of 
that of Section~\ref{theiterationfortwo}.

We define $\tau_0=1$, $L_0=1$, $L_i = 2^i$, $\tau_{i+1}=\tau_i+L_i$,
and fix $\alpha\ge \alpha_0$  so that the statement of Proposition~\ref{bettertimepleaseiii} holds,
for instance $\alpha:=\alpha_0$.  Define the parameter
\begin{equation}
\label{dparameterdef}
d:=\Pi_{i=1}^\infty (1+ 2^{-\frac{i}4}\alpha)<\infty.
\end{equation}

(Note that we shall no longer note the dependence of constants and parameters on $\alpha$ since it is now considered fixed.
Thus implicit constants depending on the choice of $\alpha$ will from now on be incorporated in the notations $\sim$ and  $\lesssim$.)

For $0<\varepsilon_0 \le \varepsilon_{\rm global}$ and $\varepsilon_{\rm global}$ sufficiently small,
since 
\begin{equation}
\label{afouedw}
\Eonek(\tau_0) \leq\varepsilon_0, \qquad
\Etwominusdelkminustwo(\tau_0)\leq\varepsilon_0,
\end{equation}
then in view of Corollary~\ref{nomoreroman}, we have 
\[
\Efancyonek (\tau_0) \lesssim \varepsilon_0, \qquad 
\Efancytwominusdelkminustwo(\tau_0) \lesssim \varepsilon_0.
\]
Thus, for sufficiently small $\varepsilon_{\rm global}\ll \hat\varepsilon_{\rm slab}$, 
and all $0<\varepsilon_0\le \varepsilon_{\rm global}$, then if~\eqref{afouedw} is satisfied,
it
follows that there exists a $\hat\varepsilon_0(\varepsilon_0) \sim \varepsilon_0$, 
satisfying 
\begin{equation}
\label{smallenoughhere}
\alpha  \hat\varepsilon_0 \leq \frac1d \varepsilon_{\rm slab}, \qquad \hat\varepsilon_0\leq \frac1d \hat\varepsilon_{\rm slab}
\end{equation}
and
\[
\Efancyonek(\tau_0) \leq \hat\varepsilon_0 ,\qquad
\Efancytwominusdelkminustwo(\tau_0)\leq  \hat\varepsilon_0, \qquad
\Efancyzerokminustwo(\tau_0)\leq \hat\varepsilon_0\alpha L_0^{-1}, \qquad
\Efancyonekminusfour(\tau_0) \leq  \hat\varepsilon_0\alpha L_0^{-1+\delta}, 
\qquad
\Efancyzerokminussix(\tau_0) \leq \hat\varepsilon_0\alpha^2 L_0^{-2+\delta }.
\]
Finally, with our restriction on the definition of $\varepsilon_0$,  we may also write
\[
\Eonek(\tau_0)\leq \alpha \hat\varepsilon_0.
\]

In general, given
$\tau_i\ge 1$ defined above, $\hat{\varepsilon}_i\leq \hat{\varepsilon}_{\rm slab}$, and a solution $\psi$
of~\eqref{theequation} on $\mathcal{R}(\tau_0,\tau_i)$
 such that
\begin{equation}
\label{notsofancyiii}
\Eonek(\tau_i)\leq \alpha \hat\varepsilon_i,
\end{equation}
and
\begin{equation}
\label{thosefancyonesiii}
\Efancyonek(\tau_i) \leq \hat\varepsilon_i ,\qquad
\Efancytwominusdelkminustwo(\tau_i)\leq  \hat\varepsilon_i, \qquad
\Efancyzerokminustwo(\tau_i)\leq \hat\varepsilon_i\alpha L_i^{-1}, \qquad
\Efancyonekminusfour(\tau_i) \leq  \hat\varepsilon_i\alpha L_i^{-1+\delta}, 
\qquad
\Efancyzerokminussix(\tau_i) \leq \hat\varepsilon_i\alpha^2 L_i^{-2+\delta },
\end{equation}
note that by our restrictions on $\hat\varepsilon_{\rm slab}$, the assumptions of Proposition~\ref{globproptwo}
hold with $\alpha^2 \hat\varepsilon_i$ in  place of $\varepsilon_0$ (here we are using~\eqref{notsofancyiii} to invoke Corollary~\ref{nomoreroman}
to rewrite~\eqref{thosefancyones} in terms of the calligraphic energies; cf.~the
first lines of the proof of Proposition~\ref{bettertimepleaseiii})  and the assumptions of 
Proposition~\ref{bettertimepleaseiii} then apply with $\hat\varepsilon_i$ in  place of $\hat\varepsilon_0$, 
where both propositions are understood now with
 $\tau_i$, $\tau_{i+1}$ in place of $\tau_0$, $\tau_1$.
 It follows that the solution $\psi$ extends to a solution defined also in
$\mathcal{R}_i:=\mathcal{R}(\tau_i,\tau_i+L_i)$ satisfying the estimates
\eqref{addedherenotfancyiii}--\eqref{kiallobasicnewformiiiagain} while for $\tau'=\tau_{i+1}=\tau_i+L_i$ 
we  have in addition~\eqref{betteroneiii}--\eqref{bettertwoiii}.
We have thus
\begin{align}
\nonumber
\Eonek(\tau_{i+1})&\leq\hat\varepsilon_i \alpha	&\leq&\hat\varepsilon_{i+1}\alpha ,\\
\nonumber
\Efancyonek(\tau_{i+1})&\leq \hat\varepsilon_i(1+\alpha  L_i^{-\frac14})&\leq&\hat\varepsilon_{i+1}   , \\
\nonumber
\Efancytwominusdelkminustwo(\tau_{i+1})&\leq \hat\varepsilon_i(1+\alpha  L_i^{-\frac14})&\leq&\hat\varepsilon_{i+1}   , \\
\nonumber
\Efancyonekminustwo(\tau_{i+1})&\leq \frac12\hat\varepsilon_i\alpha L_i^{-1}  &\leq&\hat\varepsilon_{i+1} \alpha L_{i+1}^{-1}  , \\
\nonumber
\Efancyonekminusfour(\tau_{i+1})&\leq\frac12 \hat\varepsilon_i \alpha L_i^{-1+\delta}&\leq &\hat\varepsilon_{i+1}\alpha L_{i+1}^{-1+\delta}  , \\
\nonumber
\Efancyzerokminussix(\tau_{i+1})&\leq \frac14 \hat\varepsilon_i\alpha^2 L_i^{-2+\delta}&\leq & \hat\varepsilon_{i+1} \alpha^2 L_{i+1}^{-2+\delta} ,
\end{align}
as long as 
\begin{equation}
\label{inductdefforthree}
\hat\varepsilon_{i+1}:=\hat\varepsilon_i(1+\alpha  L_i^{-\frac14}).
\end{equation}
By our requirement~\eqref{smallenoughhere} and the 
definition~\eqref{dparameterdef} of the parameter $d$, then defining
$\hat\varepsilon_{i+1}$ inductively by~\eqref{inductdefforthree}, it follows
that $\hat\varepsilon_{i+1}\leq \hat\varepsilon_{\rm slab}$ and $\alpha\hat\varepsilon_{i+1}\leq \varepsilon_{\rm slab}$.

It follows that a solution exists in $\mathcal{R}(\tau_0,\infty)=\cup \mathcal{R}(\tau_i,\tau_i+L_i)$
and in each interval the estimates \eqref{kiallobasicnewformiiiagain}--\eqref{kiallobasicnewformiiiagainlastone} hold,
with $L=L_i$.

We obtain finally that for all $\tau\ge 1$ we have (among other estimates)
\begin{equation}
\label{noneedtoimprove}
\Eonek(\tau)\lesssim \varepsilon_0 , 
\end{equation}
\begin{equation}
\label{canalsobeimproved}
\int_{\tau_0}^\tau {}^\rho\Ezerok' + {}^\chi\Ezerokminusone'+ \Ezerokminustwo' 
\lesssim \varepsilon_0 
\log (\tau+1),
\end{equation}
\begin{equation}
\label{finalboundednessforiii}
\Etwominusdelkminustwo(\tau)\lesssim \varepsilon_0, 
\end{equation}
\begin{equation}
\label{resumefromhereforiii}
\Eonekminusfour(\tau) \lesssim \varepsilon_0\tau^{-1+\delta}, \qquad \Fzerokminusfour(v,\tau)\lesssim \varepsilon_0\tau^{-1+\delta},
\end{equation}
\begin{equation}
\label{resumefromheretwoforiii}
\int_\tau^\infty {}^\rho \Ezerokminusfour' +{}^\chi \Ezerokminusfive'+ \Ezerokminussix'  \lesssim \varepsilon_0\tau^{-1+\delta},
\end{equation}
\begin{equation}
\label{resumefromheresevenforiii}
\Ezerokminussix(\tau) \lesssim \varepsilon_0\tau^{-2+\delta}, \qquad \Fzerokminussix(v,\tau)\lesssim \varepsilon_0\tau^{-2+\delta},
\end{equation}
\begin{equation}
\label{resumefromhereeightforiii}
\int_\tau^\infty{}^\rho\Ezerominusoneminusdeltakminussix'+ {}^\chi\Ezerominusoneminusdeltakminusseven'
+\Ezerominusoneminusdeltakminuseight'  \lesssim \varepsilon_0\tau^{-2+\delta}.
\end{equation}

Finally, let us note that we may  improve~\eqref{canalsobeimproved} 
 to
\begin{equation}
\label{weimproveheretwoforiii}
\int_{\tau_0}^\infty {}^\rho \Ezerok'+{}^\chi\Ezerokminusone'+ \Ezerokminustwo' \lesssim \varepsilon_0.
\end{equation}
This follows, for all $\tau$,  by applying 
again both estimates of Proposition~\ref{finalpropest} appropriate to case (iii)
globally in $\mathcal{R}(\tau_0,\tau)$.
One may now re-estimate all nonlinear spacetime integrals arising in the estimates  on dyadic intervals and sum.

\appendix

\section{The  physical space identity on very slowly rotating~Kerr}
\label{howto}

In this appendix, we show  that very slowly rotating Kerr metrics indeed satisfy the assumptions
of Section~\ref{caseiiiidentity}. 
\begin{theorem}
\label{AppendixTheo}
Consider the manifold $(\mathcal{M}, g_{a,M})$ of
Section~\ref{kerrsubsubsection}, where  $g_{a,M}$ denotes the Kerr metric and
$|a|\ll M$.
Then there exist currents $J^{V,w,q,\varpi}$, $K^{V,w,q}$ associated to 
the wave operator $\Box_{g_{a,M}}$, satisfying the assumptions of Section~\ref{caseiiiidentity}.
\end{theorem}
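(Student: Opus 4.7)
The strategy is to view $g_{a,M}$ with $|a| \ll M$ as a small stationary perturbation of the Schwarzschild metric $g_{0,M}$ and to construct $(V,w,q,\varpi)$ by perturbing a Schwarzschild current that realises case (ii). Fix the degeneration function $\chi$ of Section~\ref{basicblackboxestimatesec} to be the one appearing in the Kerr black box estimate, which vanishes on a small neighbourhood of the Kerr trapped set (contained, for $|a| \ll M$, in a fixed neighbourhood of the Schwarzschild photon sphere $r = 3M$), and choose $\rho$ to vanish on a slightly larger open neighbourhood of $\{\chi \neq 1\}$. Take the Schwarzschild seed quadruple $(V_{\rm Sch}, w_{\rm Sch}, q_{\rm Sch}, \varpi_{\rm Sch})$ from \cite{dafrod2007note, MR2563798}, which satisfies the full case (ii) coercivity \eqref{insymbolsii} with $K_{\rm Sch} \ge 0$ pointwise (vanishing only at $r=3M$), and augment $V_{\rm Sch}$ by adding a large multiple $cY$ of the red-shift vector field of Section~\ref{Yvectorfieldsec} near the horizon, ensuring that the augmented vector field $V$ is uniformly future-directed timelike on $\Sigma(\tau) \cap \{r \le R\}$ with $a$-independent margin in a fixed neighbourhood of $\mathcal{H}^+$.

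Next, to prevent perturbation errors from spoiling nonnegativity of the bulk near trapping, modify the seed quadruple by a smooth $r$-cutoff supported outside a fixed $O(1)$ neighbourhood of $r = 3M$ contained in $\{\rho = 1\}$, and evaluate the resulting currents with respect to $g_{a,M}$. The bulk decomposes as $K^{V,w,q}[g_{a,M},\psi] = K^{V,w,q}[g_{0,M},\psi] + a \cdot E_K$, where $E_K$ depends smoothly and $T$-invariantly on $(\partial\psi, \psi)$ and is bilinear. On $\{\rho = 0\}$, $V, w, q, \varpi$ vanish by construction, so $K \equiv 0$ and the required inequality is trivial there. On $\{\rho = 1\}$ the Schwarzschild bulk $K^{V,w,q}[g_{0,M},\psi]$ is coercive in the sense of \eqref{insymbolsii}, and for $|a|$ sufficiently small it dominates $|a \cdot E_K|$, yielding the required coercivity. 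The lower-order $\psi^2$ error terms introduced by the cutoff on the transition annulus (which lies within the support of $\xi$ by the constraints on $\xi$'s support) are absorbed into the $\tilde{A} \xi(r) \psi^2$ term of \eqref{insymbolsiii} by taking $\tilde{A}$ sufficiently large.

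The main technical obstacle is verifying the boundary coercivity \eqref{insymbolsiboundary} on $\Sigma(\tau) \cap \{r \le R\}$ in the ergoregion, where $T$ becomes spacelike and ordinarily contributes with the wrong sign. Following \cite{DafRodKerr}, one exploits that for small $|a|$ the ergoregion is confined to an $O(a^2)$-thin shell just outside $\mathcal{H}^+$, while the augmented $V$ retains an $O(1)$ future-timelike character in a fixed $a$-independent neighbourhood of $\mathcal{H}^+$ thanks to the red-shift contribution $cY$ (which is null and transverse on $\mathcal{H}^+$, and timelike just outside). For $|a|$ sufficiently small the ergoregion lies well within this neighbourhood, ensuring pointwise coercivity of $J^{V,w,q,\varpi}_\mu[g_{a,M},\psi] \, n^\mu_{\Sigma(\tau)}$ as required. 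Boundary coercivity on $\underline{C}_v$ follows from its null character for $g_{a,M}$ combined with the $r^p$-weighted structure of $V$ at large $r$ (where $g_{a,M}$ approaches $g_{0,M}$), while coercivity on $\mathcal{S}$ follows from the timelike character of $\nabla r$ in $r < r_{\rm Killing}$ (Section~\ref{Killinghorizonsection}); both properties are stable under the $O(a)$-perturbation. Finally, the boundedness conditions \eqref{seedboundedness} hold for the Schwarzschild seed and are preserved by the $a$-perturbation by continuity.
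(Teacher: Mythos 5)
Your approach --- viewing $|a|\ll M$ Kerr as a perturbation of Schwarzschild and cutting off a Schwarzschild case (ii) current near $r=3M$ --- is in the right spirit and shares the key mechanism with the paper's proof in Appendix~\ref{howto}: both exploit the smallness of $|a|$ to confine trapping and the ergoregion to fixed $a$-independent annuli, and both invoke the red-shift for horizon positivity. The paper, however, builds the current directly on Kerr. It chooses an explicit Morawetz coefficient $f(r^*)$ that vanishes identically on a fixed interval $R_3$ around trapping and is monotone $C^3$-interpolated outward with $f(1-3M/r)\geq 0$ everywhere; crucially, the divergence identity of Lemma~\ref{lem:divid} pairs the Morawetz multiplier $f\partial_{r^*}$ with the auxiliary Lagrangian weight $w = \frac{1}{2}f'$, which \emph{exactly cancels} the indefinite $g^{\alpha\beta}\partial_\alpha\psi\,\partial_\beta\psi$ contributions and leaves first-order bulk terms that are nonnegative even in the transition annuli $R_2 \cup R_4$ --- only the zeroth-order terms there have a bad sign. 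The paper then adds \emph{uncut-off} global multiples $\Upsilon\tilde{J}^T + \eta\tilde{J}^N$ of the (twisted) stationary and red-shift currents to supply boundary coercivity everywhere.

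Your proposal has two genuine gaps. First, multiplying the entire quadruple $(V,w,q,\varpi)$ by an $r$-cutoff $\zeta(r)$ does \emph{not} produce only $\psi^2$ errors in the bulk, as you claim. The deformation tensor of $\zeta V$ acquires the term $\frac{1}{2}(\partial_\mu\zeta\, V_\nu + \partial_\nu\zeta\, V_\mu)$, which contracted with $T^{\mu\nu}[\psi]$ yields $\zeta'(r)\, T^{r\nu}[\psi]\, V_\nu$, quadratic in first derivatives of $\psi$ and of indefinite sign; cutting off $w$ likewise produces $\zeta'\,w\,\psi\,\partial_r\psi$ cross terms. These first-order errors cannot be absorbed by the $\tilde{A}\xi(r)\psi^2$ allowance of \eqref{insymbolsiii}. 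Second, cutting off the whole quadruple (including whatever $T$-multiplier the case (ii) seed carries) makes the boundary flux $J_\mu n^\mu_{\Sigma(\tau)}$ vanish near $r=3M$, which contradicts the everywhere-coercivity required by \eqref{insymbolsiboundary}.

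The repair --- and this is essentially what the paper does --- is to cut off only the Morawetz part of the multiplier and to \emph{recompute} the auxiliary Lagrangian weight to equal one half of the derivative of the cut-off Morawetz coefficient (not the cut-off of the old weight), thereby preserving the first-order cancellations; and to add a global, uncut-off $T$-multiplier together with a red-shift multiplier, which together furnish the boundary coercivity everywhere via the dominant energy condition. With those modifications made, your argument for handling the $O(a)$-perturbation errors and the ergoregion is sound.
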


Theorem~\ref{AppendixTheo} 
actually holds for general suitably small stationary perturbations of Schwarzschild satisfying the assumptions of Section~\ref{backgroundgeom}
and appropriate decay at infinity. So as not to complicate matters more, here we will simply do explicitly the computation for slowly rotating Kerr $|a|\ll  M$.

The nontrivial computations necessary to produce  $J^{V,w,q,\varpi}$ and $K^{V,w,q}$
all involve the exterior region. As a result, we may work entirely in Boyer--Lindquist coordinates.
Thus, {\bf  in this appendix, $r$ will denote the Boyer--Lindquist coordinate, and \underline{not} the function~\eqref{rdefhere} of Section~\ref{thatwhichunderlies}}. 
We will show explicitly the positivity
properties in the region $r>r_+$, which can be covered globally by such Boyer--Lindquist coordinates. 
The extension of the coercivity properties to the manifold of Section~\ref{kerrsubsubsection} follows softly, without computation.
(See already the end of Section~\ref{addingredshift}).

Our currents will be explicit, but rather than give formulae for the final $(V,w,q,\varpi)$, we will build the current from its natural constituent pieces.
In particular, so as to directly generate positive  $0$'th order terms in the boundary currents, we will make use of the twisted energy momentum tensor
as defined in~\cite{HOLZEGEL20142436}.

This appendix is organised as follows.
We first recall  the Kerr metric in Boyer--Lindquist coordinates in Section~\ref{thekerrmanifold}. 
We shall then review in Section~\ref{twistedsec} the twisted energy momentum tensor and its basic properties. 
In the next two sections we shall build our current as a combination of a ``Morawetz current'', constructed
in Section~\ref{Morawetziden} (vanishing, however, identically in an open set contained trapped null geodesics), and the twisted
stationary current and redshift currents, constructed in Section~\ref{addingredshift}. Section~\ref{auxiliarycomps} contains some auxiliary calculations.

\subsection{The Kerr metric in Boyer--Lindquist coordinates}
\label{thekerrmanifold}

We define
\begin{align}
\Delta = r^2 -2 M r +a^2 \ \ \ , \ \ \ \Sigma=r^2 + a^2 \cos^2 \theta \, .
\end{align}
We recall the $(t,r, \theta,\phi)$ Boyer--Lindquist coordinates and the associated $(t,r^\star, \theta,\phi)$
coordinates (which we will also refer to as Boyer--Lindquist coordinates) with the familiar relation
\[
\frac{dr^\star}{dr} = \frac{r^2+a^2}{\Delta}  \, .
\]
We shall consider these in the domain $\mathbb R \times (r_+,\infty) \times \mathbb S^2$, where $(\theta,\phi)$ are identified with the usual
spherical coordinates of $\mathbb S^2$.
The Kerr metric then has the form
\[
g=g_{a,M} = g_{tt} dt\otimes dt +g_{t\phi} (dt \otimes d\phi + d\phi\otimes dt) + g_{r^\star r^\star} dr^\star\otimes dr^\star + g_{\theta \theta} d\theta\otimes d\theta + g_{\phi \phi} d\phi\otimes d\phi\, , 
\]
with
\begin{align}
g_{tt} &= -\left(1-\frac{2Mr}{\Sigma}\right) \ \ \ , \ \ \ 
 g_{r^\star r^\star} = \frac{\Sigma \Delta}{(r^2+a^2)^2} \ \ \ , \ \ \ g_{\theta \theta} = \Sigma \, , \nonumber \\
 g_{t\phi} &= -\frac{2Mr}{\Sigma} a \sin^2 \theta \ \ \ , \ \ \ \ \ 
 g_{\phi \phi} = \left(r^2+a^2 + \frac{2Mr a^2}{\Sigma} \sin^2 \theta\right) \sin^2\theta \, .
 \end{align}
 We compute the inverse metric components as  
 \begin{align}
g^{tt} &= -\frac{1}{\Delta} \left(r^2+a^2 + \frac{2Mr a^2}{\Sigma} \sin^2 \theta \right) \ \ \ , \ \ \ 
 g^{r^\star r^\star} = \frac{(r^2+a^2)^2} {\Sigma \Delta} \ \ \ , \ \ \ g^{\theta \theta} = \frac{1}{\Sigma} \, , \nonumber \\
 g^{t\phi} &= -\frac{2Mr}{\Sigma \Delta} a  \ \ \ , \ \ \ \ \ 
 g^{\phi \phi} = \frac{\Delta - a^2 \sin^2 \theta}{\Sigma \Delta \sin^2 \theta}  \, .
 \end{align}
 For the determinant in these coordinates we note
 \begin{align} \label{detformula}
 \sqrt{|g|} 
 = \frac{\Delta}{r^2+a^2} \Sigma \sin \theta=g_{r^\star r^\star} \left(r^2+a^2\right) \sin \theta \, .
\end{align}

\subsection{The twisted energy momentum tensor}
\label{twistedsec}

We consider the covariant wave equation $\Box_g \psi = g^{\alpha \beta } \nabla_\alpha \nabla_\beta \psi = 0$ for $g=g_{a,M}$ the Kerr metric. 
Note that this equation can be rewritten as follows:
With the function $\beta=\left(\sqrt{r^2+a^2}\right)^{-1}$, we
 define as in~\cite{HOLZEGEL20142436} the twisted operators
\begin{align}
\tilde{\nabla}_\mu (\cdot) = \beta \nabla_\mu \left(\beta^{-1} \cdot \right) \ \ \ , \ \ \ \tilde{\nabla}^\dagger_\mu (\cdot) = -\beta^{-1} \nabla_\mu \left(\beta \cdot \right)
\end{align}
and rewrite the wave equation as 
\begin{align} \label{waved}
0=\Box_g \psi &= - \tilde{\nabla}_\mu^\dagger \tilde{\nabla}^\mu \psi - \mathcal{V} \psi = 0 \nonumber \\
&=\sqrt{r^2+a^2} \nabla_\mu \left((r^2+a^2)^{-1} \nabla^\mu \left(\psi \sqrt{r^2+a^2}\right)\right) -\mathcal{V} \psi \, 
\end{align}
where
\begin{align}
\mathcal{V} &= -\frac{\Box_g \beta}{\beta} = -\frac{1}{g_{r^\star r^\star}(r^2+a^2)\beta } \partial_{r^\star} \left(  \left(r^2+a^2\right) \partial_{r^\star} \beta \right) \nonumber \\
&=  \frac{1}{\Sigma} \cdot \frac{2Mr^3 +a^2 r (r-4M) +a^4}{(r^2+a^2)^2} =: \frac{1}{\Sigma} \mathcal{V}_0 \, . 
\end{align}

We now define the twisted energy momentum tensor
\begin{align} \label{twistedT}
\tilde{T}_{\mu \nu}[\psi] &= \tilde{\nabla}_\mu \psi \tilde{\nabla}_\nu \psi - \frac{1}{2} g_{\mu \nu} \left( \tilde{\nabla}^\alpha \psi \tilde{\nabla}_\alpha \psi + \mathcal{V} \psi^2 \right) \\
&=\beta^2 \left[ {\nabla}_\mu (\beta^{-1}\psi) {\nabla}_\nu (\beta^{-1}\psi) - \frac{1}{2} g_{\mu \nu} \left( {\nabla}^\alpha (\beta^{-1} \psi) {\nabla}_\alpha (\beta^{-1} \psi) + 
\mathcal{V} \beta^{-2} \psi^2 \right)\right] . \nonumber 
\end{align}

We note that for all $|a|<M$, $\mathcal{V}$ is strictly positive in the exterior, in fact
\begin{equation}
\label{thispositivityhere}
\mathcal{V}\geq  \frac{2Mr(r-M)(r+M)}{\Sigma ( r^2+a^2)^2} \gtrsim r^{-3},
\end{equation}
since $r \ge r_+ > M$,
and $\tilde{T}_{\mu\nu}[\psi]$ satisfies the dominant energy condition, 
i.e.~$\tilde{T}_{\mu\nu}[\psi]\xi^\mu \xi^\nu$ is nonnegative  if $\xi$ is timelike,
and in fact controls coercively first derivatives of $\psi$ as well as $\psi$ itself (the latter with the weight $r^{-3}$).
(Indeed, this direct control of the zero'th order term is our motivation for considering~\eqref{twistedT} in place of the usual $T_{\mu\nu}$.).

From Proposition 3 of~\cite{HOLZEGEL20142436} we infer:
\begin{proposition} \label{prop:holwar}
For $\psi$ a $C^2$ solution of $\Box_g \psi = 0$ and $X$ a smooth spacetime vector field we have the identity
\begin{align}
\nabla^\mu \tilde{J}^X_\mu [\psi] = \tilde{K}^X [\psi] \, , 
\end{align}
where 
\begin{align}
\tilde{J}^X_\mu [\psi ] &= \tilde{T}_{\mu \nu} [\psi] X^\nu \, , \\
\tilde{K}^X [\psi] &= {}^{(X)}\pi^{\mu \nu} \tilde{T}_{\mu \nu} [\psi] + X^\nu \tilde{S}_\nu [\psi] \,  
\end{align}
and
\begin{align}
\tilde{S}_\mu [\psi] = \frac{\tilde{\nabla}^\dagger_\mu (\beta \mathcal{V})}{2\beta} \psi^2 + \frac{\tilde{\nabla}^\dagger_\mu \beta}{2\beta}  \tilde{\nabla}^\nu \psi \tilde{\nabla}_\nu \psi \, .
\end{align}
\end{proposition}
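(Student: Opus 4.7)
The identity is a direct tensorial calculation and should be obtained by differentiating the definition of $\tilde T_{\mu\nu}$ and carefully matching the resulting terms against $\tilde S_\nu[\psi]$ and ${}^{(X)}\pi^{\mu\nu}\tilde T_{\mu\nu}$. My plan is to prove the stronger pointwise identity $\nabla^\mu \tilde T_{\mu\nu}[\psi] = \tilde S_\nu[\psi]$ for any $C^2$ solution of $\Box_g\psi=0$, and then deduce the proposition by contracting with $X^\nu$ and using the symmetry of $\tilde T_{\mu\nu}$ together with the standard splitting
\[
\nabla^\mu(\tilde T_{\mu\nu}X^\nu) \;=\; (\nabla^\mu\tilde T_{\mu\nu})X^\nu + \tilde T_{\mu\nu}\,{}^{(X)}\!\pi^{\mu\nu}.
\]

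First I would compute $\nabla^\mu \tilde T_{\mu\nu}$ term by term from the definition~\eqref{twistedT}. Using the Leibniz rule and the identity $\tilde\nabla_\alpha\psi = \beta\nabla_\alpha(\beta^{-1}\psi)$, together with the fact that the torsion-free Levi-Civita connection makes $\nabla_\nu\nabla_\alpha(\beta^{-1}\psi)$ symmetric in $\nu,\alpha$, the cross terms $\tilde\nabla_\mu\psi\,\nabla^\mu\tilde\nabla_\nu\psi - \tilde\nabla^\alpha\psi\,\nabla_\nu\tilde\nabla_\alpha\psi$ collapse to precisely
\[
\beta^{-1}(\nabla^\mu\beta)\tilde\nabla_\mu\psi\,\tilde\nabla_\nu\psi \;-\; \beta^{-1}(\nabla_\nu\beta)\,\tilde\nabla^\alpha\psi\,\tilde\nabla_\alpha\psi.
\]
The remaining principal term $(\nabla^\mu\tilde\nabla_\mu\psi)\tilde\nabla_\nu\psi$ combines with the first of these to form $\beta^{-1}\nabla^\mu(\beta\tilde\nabla_\mu\psi)\,\tilde\nabla_\nu\psi = -\tilde\nabla^\dagger_\mu\tilde\nabla^\mu\psi\,\tilde\nabla_\nu\psi$, which for solutions of $\Box_g\psi=0$ equals $\mathcal{V}\psi\,\tilde\nabla_\nu\psi$ in view of~\eqref{waved}.

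Next I would expand $-\tfrac12\nabla_\nu(\mathcal{V}\psi^2) = -\tfrac12(\nabla_\nu\mathcal{V})\psi^2 - \mathcal{V}\psi\,\nabla_\nu\psi$ and write $\tilde\nabla_\nu\psi - \nabla_\nu\psi = -\beta^{-1}(\nabla_\nu\beta)\psi$, so that
\[
\mathcal{V}\psi\,\tilde\nabla_\nu\psi - \mathcal{V}\psi\,\nabla_\nu\psi \;=\; -\beta^{-1}(\nabla_\nu\beta)\,\mathcal{V}\psi^2.
\]
Collecting everything I obtain
\[
\nabla^\mu\tilde T_{\mu\nu}[\psi] \;=\; -\bigl[\beta^{-1}(\nabla_\nu\beta)\mathcal{V} + \tfrac12\nabla_\nu\mathcal{V}\bigr]\psi^2 \;-\; \beta^{-1}(\nabla_\nu\beta)\,\tilde\nabla^\alpha\psi\,\tilde\nabla_\alpha\psi.
\]

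Finally I identify the right-hand side with $\tilde S_\nu[\psi]$. Since $\tilde\nabla^\dagger_\mu f = -\beta^{-1}\nabla_\mu(\beta f)$ on scalars, one directly computes $\tilde\nabla^\dagger_\nu\beta = -2\nabla_\nu\beta$ and $\tilde\nabla^\dagger_\nu(\beta\mathcal{V}) = -2\nabla_\nu\beta\cdot\mathcal{V} - \beta\,\nabla_\nu\mathcal{V}$, so that $\tilde\nabla^\dagger_\nu\beta/(2\beta) = -\beta^{-1}\nabla_\nu\beta$ and $\tilde\nabla^\dagger_\nu(\beta\mathcal{V})/(2\beta) = -\beta^{-1}\nabla_\nu\beta\cdot\mathcal{V} - \tfrac12\nabla_\nu\mathcal{V}$. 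These are exactly the coefficients above, giving $\nabla^\mu\tilde T_{\mu\nu} = \tilde S_\nu[\psi]$. There is no genuine obstacle here — the proof is purely algebraic once the bookkeeping is in place; the only point requiring mild care is the conversion between $\nabla^\mu\tilde\nabla_\mu$ and $\tilde\nabla^\dagger_\mu\tilde\nabla^\mu$, which introduces the $\beta^{-1}\nabla\beta$ error terms that turn out to be precisely what is needed to match the $\tilde S_\nu[\psi]$ contribution.
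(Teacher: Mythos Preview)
Your proof is correct: the key identity $\nabla^\mu \tilde T_{\mu\nu}[\psi] = \tilde S_\nu[\psi]$ for solutions of $\Box_g\psi=0$ is established cleanly, and the contraction with $X^\nu$ via the deformation tensor is the standard move. The paper in fact does not give a proof at all but simply cites Proposition~3 of \cite{HOLZEGEL20142436}; your direct computation is precisely the argument that reference contains, so there is no meaningful difference in approach.
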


We also have the Lagrangian identity for an arbitrary spacetime function $w$:
\begin{proposition} \label{prop:lagrangian}
For $\psi$ a $C^2$ solution of $\Box_g \psi = 0$ and $w$ a smooth spacetime function we have the identity
\begin{align}
\nabla^\mu \tilde{J}_\mu^{aux,w}  = \tilde{K}^{aux, w}[\psi]
\end{align}
with
\begin{align}
 \tilde{K}^{aux,w}[\psi]  &:= w \beta^2 \nabla^\alpha (\beta^{-1} \psi) \nabla_\alpha (\beta^{-1} \psi)  +(\beta^{-1}\psi)^2 \left(- \frac{1}{2}  \nabla^\mu \left(\beta^2 \nabla_\mu w\right) + \mathcal{V} w \beta^2 \right)\, ,  \nonumber \\
 \tilde{J}^{aux,w}_\mu &:= w \beta^2 \left( \beta^{-1} \psi \nabla_\mu (\beta^{-1} \psi) \right) - \frac{1}{2} \psi^2 \nabla_\mu w. \nonumber
\end{align}
\end{proposition}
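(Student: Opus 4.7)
The statement is a pure computational identity: given the wave equation for $\psi$, one needs to verify that $\nabla^\mu \tilde J^{aux,w}_\mu$ equals the claimed expression $\tilde K^{aux,w}[\psi]$. My plan is therefore to carry out the divergence directly, after first rewriting both the current and the wave equation in terms of the conjugated variable $\phi := \beta^{-1}\psi$, which is the variable in which everything takes its simplest form.

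The first step is to recast the wave equation. Starting from the identity \eqref{waved}, namely $\sqrt{r^2+a^2}\,\nabla_\mu\bigl((r^2+a^2)^{-1}\nabla^\mu(\sqrt{r^2+a^2}\,\psi)\bigr) = \mathcal V \psi$, one immediately rewrites
\[
	\nabla^\mu\bigl(\beta^2 \nabla_\mu \phi\bigr) \;=\; \mathcal V\, \beta^2 \phi,
\]
which will be the only analytic input; everything else is the product rule. Next I would rewrite the current itself. Using $\psi^2 = \beta^2 \phi^2$, the defining expression becomes
\[
	\tilde J^{aux,w}_\mu \;=\; \beta^2\Bigl( w\,\phi\,\nabla_\mu \phi \;-\; \tfrac12 \phi^2\, \nabla_\mu w \Bigr).
\]

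The second step is the computation of the divergence. Writing $\nabla^\mu \tilde J^{aux,w}_\mu$ as $\nabla^\mu \beta^2$ acting on the bracket plus $\beta^2$ times the divergence of the bracket, and expanding by the product rule, the four resulting terms regroup as
\[
	\nabla^\mu \tilde J^{aux,w}_\mu \;=\; \beta^2 w\, \nabla^\mu \phi\,\nabla_\mu \phi \;+\; w\phi\cdot\nabla^\mu\!\bigl(\beta^2 \nabla_\mu \phi\bigr) \;-\; \tfrac12 \phi^2 \cdot \nabla^\mu\!\bigl(\beta^2 \nabla_\mu w\bigr),
\]
where the crucial cancellation is that the cross terms $\beta^2 \nabla^\mu w\, \phi \nabla_\mu \phi$ appearing from differentiating $w\phi\nabla_\mu\phi$ and from differentiating $\tfrac12 \phi^2 \nabla_\mu w$ have opposite signs and cancel. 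Substituting the twisted wave equation $\nabla^\mu(\beta^2\nabla_\mu\phi) = \mathcal V \beta^2 \phi$ into the middle term then yields
\[
	\nabla^\mu \tilde J^{aux,w}_\mu \;=\; \beta^2 w\, \nabla^\mu \phi\,\nabla_\mu \phi \;+\; \phi^2\!\left(\mathcal V w \beta^2 \;-\; \tfrac12 \nabla^\mu(\beta^2 \nabla_\mu w)\right),
\]
which, after re-expressing $\phi$ as $\beta^{-1}\psi$, is exactly $\tilde K^{aux,w}[\psi]$ as defined.

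There is no real obstacle here: the only mild subtlety is the bookkeeping step of noticing that $\psi^2 = \beta^2 \phi^2$ allows one to absorb the factor $\beta^2$ into the bracket in $\tilde J^{aux,w}$ in a symmetric fashion, so that the two cross terms cancel without leaving behind any stray $\nabla \beta$ factors. Once this observation is made the identity becomes the standard Lagrangian/multiplier identity for the twisted operator $\tilde\nabla$, exactly analogous to Proposition~\ref{prop:holwar}, but with the zeroth order multiplier $w$ replacing a vector field $X$. No smoothness or boundary hypotheses beyond those already stated (namely $\psi \in C^2$ and $w$ smooth, with $\Box_g \psi = 0$) are required.
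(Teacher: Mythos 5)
Your proof is correct and takes essentially the same approach as the paper: both compute the divergence of $\tilde J^{aux,w}_\mu$ directly via the product rule, observe the cancellation of the cross term $\beta^2\nabla^\mu w\,\phi\,\nabla_\mu\phi$, and substitute the twisted form of the wave equation to produce the potential term. The only cosmetic difference is that you introduce $\phi=\beta^{-1}\psi$ explicitly and regroup before differentiating, which makes the cancellation visible at a glance, while the paper leaves the final "rearranging" step implicit.
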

\begin{proof}
This follows from 
\begin{align}
\nabla^\mu \left( w  \left( \psi \beta^{-1}  \beta^2 \nabla_\mu (\beta^{-1} \psi) \right) \right) =& w \beta^2 \nabla^\mu (\beta^{-1} \psi) \nabla^\mu (\beta^{-1} \psi) + w \psi \left(- \tilde{\nabla}_\mu^\dagger \tilde{\nabla}^\mu \psi \right) \nonumber \\
&+ (\nabla^\mu w) \beta^2 \frac{1}{2} \nabla_\mu (\beta^{-1}\psi)^2
\end{align}
after rearranging and inserting (\ref{waved}).
\end{proof} 

\begin{remark} \label{rem:ch}
Note that when $w$ is a function of $r$ only (as will always be the case in the applications below) we have
\[
\nabla^\mu \left(\beta^2 \nabla_\mu w \right) = \frac{1}{\sqrt{g}} \partial_{r^\star} \left(\sqrt{g} g^{r^\star r^\star} \beta^2 \partial_{r^\star} w \right) = \frac{r^2+a^2}{\Delta \Sigma} \partial_{r^\star}^2 w \, ,
\]
a formula which is useful in the computations below.
\end{remark}

\subsection{A Morawetz current vanishing identically in a  neighbourhood of trapping}
\label{Morawetziden}

In this section, we shall produce a current giving the desired bulk positivity properties modulo suitable zero'th order terms, but with additional degeneration
at the horizon, and without the boundary positivity properties. (The horizon degeneration and the boundary positivity properties 
will  be dealt with in Section~\ref{addingredshift}.)

The point about this current is that it will completely vanish in the set $11M/4\le r\le 7M/2$, which contains all trapped null geodesics for $|a|\ll M$.
Thus, \emph{this current is insensitive to the precise nature of the dynamics near trapping}. The requirement of vanishing on such a set will necessarily generate
lower order terms, however, with an unfavourable sign. These too will be supported away from trapping.

Our current will be defined combining those of Proposition~\ref{prop:holwar} and~\ref{prop:lagrangian}, and the
 vector field component of the current will be in the direction of $\partial_{r^\star}$. We begin with a computation (note that $'$ below will denote $\frac{d}{dr^*}$):

\begin{lemma} \label{lem:divid}
For $\beta^{-1} = \sqrt{r^2+a^2}$, $h=\left(1-\frac{3M}{r}\right)\frac{\Delta}{(r^2+a^2)^2}$ and $X=f (r^*) \partial_{r^\star}$ (with $f$ arbitrary) we have for any
 $\updelta \in \mathbb{R}$ the  divergence identity
\begin{align}
\nabla^\mu \left(\tilde{J}^X_\mu [\psi ] + \tilde{J}^{aux,\frac{1}{2}f^\prime}_{\mu} [\psi] - \updelta \cdot \tilde{J}^{aux, f \cdot h}_{\mu} [\psi] \right) \nonumber \\= \tilde{K}^X[\psi] + \tilde{K}^{aux,\frac{1}{2}f^\prime}[\psi] -\updelta \cdot \tilde{K}^{aux, f \cdot h}_{\mu} [\psi] \, ,
\end{align}
where
\begin{align} \label{crosste}
\tilde{J}^X_\mu [\psi ] &+ \tilde{J}_{\mu}^{aux,\frac{1}{2}f^\prime} [\psi] -\updelta \tilde{J}^{aux, f \cdot h}_{\mu} [\psi]  \nonumber \\
&=   \tilde{T}_{\mu \nu} [\psi] X^\nu + f^\prime \beta^2 \left( \beta^{-1} \psi \nabla_\mu (\beta^{-1} \psi) \right) - \frac{1}{2} \psi^2 \nabla_\mu f^\prime \, 
\nonumber \\
&\qquad -\updelta \left[f \cdot h \beta^2 \left( \beta^{-1} \psi \nabla_\mu (\beta^{-1} \psi) \right) - \frac{1}{2} \psi^2 \nabla_\mu f \cdot h\right] 
\end{align} 
and
\begin{align} \label{lookatbulk}
\tilde{K}^{f\partial_{r^\star}}[\psi] &+ \tilde{K}^{aux,\frac{1}{2}f^\prime}[\psi] - \updelta \tilde{K}^{aux, f \cdot h}_{\mu} [\psi] \nonumber \\
&= \beta^2 \left( g^{r^\star r^\star} \partial_{r^\star} f - \updelta f h \right) \left(\partial_{r^\star}(\beta^{-1} \psi) \right)^2 \nonumber \\
&\qquad  +\frac{1}{2} f \beta^2  \sum_{\mu,\nu \neq r^\star} \left( \mathcal{A}^{\mu \nu} - 2 \updelta h g^{\mu \nu} \right) \partial_\mu (\beta^{-1}\psi) \partial_{\nu} (\beta^{-1} \psi) \nonumber \\
&\qquad +\frac{r^2+a^2}{\Delta \Sigma} \Bigg( - \frac{1}{4}   f^{\prime \prime \prime}   -\frac{1}{2} f\partial_{r^\star} \left(\frac{\Delta}{(r^2+a^2)^2} \mathcal{V}_0\right) \nonumber \\
& \ \ \ \ \ \ \ \ \ \ \ \ \ \ \ \ \ + \updelta \left[ \frac{1}{2} \left(f \cdot h\right)^{\prime \prime} -\mathcal{V}_0 f \cdot h \frac{\Delta}{(r^2+a^2)^2}\right]  \Bigg) (\beta^{-1} \psi)^2 
\end{align}
with
\begin{align}
\mathcal{A}^{\mu \nu} =   -\partial_{r^\star} (g^{\mu \nu}) +   g^{\mu \nu} g_{r^\star r^\star} \partial_{r^\star} (g^{r^\star r^\star}) \, .
\end{align}
\end{lemma}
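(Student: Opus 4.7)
The divergence identity in the first display follows immediately by adding the three pointwise divergence identities supplied by Propositions~\ref{prop:holwar} (applied with $X=f\partial_{r^\star}$) and~\ref{prop:lagrangian} (applied with $w=\tfrac12 f'$ and with $w=f\cdot h$), and the formula~\eqref{crosste} for the summed current is nothing but the sum of the three current expressions written out in those propositions. The only nontrivial content is therefore the identification of the bulk expression~\eqref{lookatbulk}, which is a pointwise algebraic identity that does not use the wave equation and can be verified in turn for each of the three contributions.

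The plan is to expand each of the three bulk terms, then collect them by type (the coefficient of $(\partial_{r^\star}(\beta^{-1}\psi))^2$, the coefficient of the transverse derivatives $\partial_\mu(\beta^{-1}\psi)\partial_\nu(\beta^{-1}\psi)$ with $\mu,\nu\ne r^\star$, and the coefficient of $(\beta^{-1}\psi)^2$). For the $\tilde K^{f\partial_{r^\star}}[\psi]$ piece I would use the rewriting $\tilde T_{\mu\nu}[\psi]=\beta^2[\nabla_\mu(\beta^{-1}\psi)\nabla_\nu(\beta^{-1}\psi)-\tfrac12 g_{\mu\nu}(\nabla^\alpha(\beta^{-1}\psi)\nabla_\alpha(\beta^{-1}\psi)+\mathcal V\beta^{-2}\psi^2)]$ together with the fact that $\pi^{\mu\nu}_{f\partial_{r^\star}}$ depends only on $r^\star$, so that in these coordinates
\[
\pi^{\mu\nu}\tilde T_{\mu\nu}[\psi]
=\beta^2\Big(g^{r^\star r^\star}\partial_{r^\star}f-\tfrac12 fg_{r^\star r^\star}\partial_{r^\star}(g^{r^\star r^\star})\Big)(\partial_{r^\star}(\beta^{-1}\psi))^2+\tfrac12 f\beta^2\sum_{\mu,\nu\ne r^\star}\mathcal A^{\mu\nu}\partial_\mu(\beta^{-1}\psi)\partial_\nu(\beta^{-1}\psi)+\ldots,
\]
where the $\mathcal A^{\mu\nu}$ defined in the statement is precisely what arises after combining the $X^\mu\partial_\mu g^{\alpha\beta}$ terms with the trace term from $\tilde T$. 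The remaining $X^\nu\tilde S_\nu[\psi]$ contribution is a pure $r^\star$ derivative of $\beta\mathcal V$ and of $\beta$, which produces respectively a $(\beta^{-1}\psi)^2$ term proportional to $-f\partial_{r^\star}(\tfrac{\Delta}{(r^2+a^2)^2}\mathcal V_0)$ (using $\mathcal V = \Sigma^{-1}\mathcal V_0$ and~\eqref{detformula}) and a $\nabla^\alpha(\beta^{-1}\psi)\nabla_\alpha(\beta^{-1}\psi)$ term whose coefficient combines with the trace of $\tilde T$ to refactor into the claimed form.

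For the two auxiliary pieces I would apply Remark~\ref{rem:ch} directly: for $w=\tfrac12 f'$ (a function of $r^\star$ only),
\[
\tfrac12\nabla^\mu(\beta^2\nabla_\mu w)=\tfrac{r^2+a^2}{4\Delta\Sigma}f''',
\]
which produces the $-\tfrac14 f'''\cdot\tfrac{r^2+a^2}{\Delta\Sigma}$ term in~\eqref{lookatbulk}, while the quadratic part $w\beta^2\nabla^\alpha(\beta^{-1}\psi)\nabla_\alpha(\beta^{-1}\psi)=\tfrac12 f'\beta^2(\cdot)$ combines with the analogous trace from $X^\nu\tilde S_\nu$ to cancel the ``derivative of $g^{r^\star r^\star}$'' piece and leave exactly the coefficient $\beta^2 g^{r^\star r^\star}\partial_{r^\star}f$ on $(\partial_{r^\star}(\beta^{-1}\psi))^2$; meanwhile the $\mathcal V w\beta^2=\tfrac12\mathcal V f'\beta^2$ zeroth order term combines with the analogous term from the Morawetz piece into the $\partial_{r^\star}$ form displayed. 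The $-\updelta$ piece with $w=f\cdot h$ is analogous and contributes the bracketed $\updelta[\tfrac12(fh)''-\mathcal V_0 fh\tfrac{\Delta}{(r^2+a^2)^2}]$ zeroth order term, the $-\updelta fh$ on $(\partial_{r^\star}(\beta^{-1}\psi))^2$, and the $-2\updelta hg^{\mu\nu}$ in the transverse quadratic block.

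The main obstacle, and the only place where genuine computation is required, is the bookkeeping of the $X^\nu\tilde S_\nu[\psi]$ contribution from Proposition~\ref{prop:holwar}: one must verify that $\tfrac{\tilde\nabla^\dagger_{r^\star}\beta}{2\beta}\tilde\nabla^\nu\psi\tilde\nabla_\nu\psi$, rewritten in terms of $\beta^{-1}\psi$ via~\eqref{twistedT}, produces exactly the term $-\tfrac12 fg_{r^\star r^\star}\partial_{r^\star}(g^{r^\star r^\star})$ on $(\partial_{r^\star}(\beta^{-1}\psi))^2$ and the contribution $g^{\mu\nu}g_{r^\star r^\star}\partial_{r^\star}(g^{r^\star r^\star})$ inside the $\mathcal A^{\mu\nu}$ block, and that the analogous rewriting of $\tfrac{\tilde\nabla^\dagger_{r^\star}(\beta\mathcal V)}{2\beta}\psi^2$ together with the $\tfrac{r^2+a^2}{\Delta\Sigma}$ conversion from $\sqrt{|g|}$ in~\eqref{detformula} assembles into the claimed $-\tfrac12 f\partial_{r^\star}(\tfrac{\Delta}{(r^2+a^2)^2}\mathcal V_0)\cdot\tfrac{r^2+a^2}{\Delta\Sigma}$. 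Once these are checked, the collected identity is precisely~\eqref{lookatbulk}.
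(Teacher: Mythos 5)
Your proposal is correct and follows essentially the same route as the paper: the divergence identity and the current formula~\eqref{crosste} are immediate from Propositions~\ref{prop:holwar} and~\ref{prop:lagrangian}, and the content lies in expanding $\tilde K^{f\partial_{r^\star}}[\psi]$ via the deformation-tensor contraction with the twisted energy–momentum tensor (this is precisely the computation the paper defers to Section~\ref{auxiliarycomps} and records in~\eqref{target1}), then adding the Lagrangian bulks with $w=\tfrac12 f'$ and $w=f\cdot h$ using Remark~\ref{rem:ch} and collecting terms. One small imprecision: your displayed coefficient of $(\partial_{r^\star}(\beta^{-1}\psi))^2$ in $\pi^{\mu\nu}\tilde T_{\mu\nu}$ drops a $g^{r^\star r^\star}$ factor and omits the $\partial_{r^\star}f$ and $f\,\sqrt{|g|}^{-1}\partial_{r^\star}\sqrt{|g|}$ pieces coming from the trace; those are indeed later cancelled or absorbed by the $\tfrac12 f'\beta^2\nabla^\alpha(\beta^{-1}\psi)\nabla_\alpha(\beta^{-1}\psi)$ term of the auxiliary current and the $\mathcal V$ bookkeeping, exactly as in the paper's Section~\ref{auxiliarycomps}, so the conclusion stands once the omitted ``$+\ldots$'' terms are tracked carefully.
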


\begin{proof}
We compute (see Section~\ref{auxiliarycomps})
\begin{align} \label{target1}
\tilde{K}^{f\partial_{r^\star}}[\psi] 
&= \beta^2 g^{r^\star r^\star} \partial_{r^\star} f \left(\partial_{r^\star}(\beta^{-1} \psi) \right)^2 \nonumber \\
&\qquad-\frac{1}{2} f \beta^2 \sum_{\mu,\nu \neq r^\star}  \left[ \partial_{r^\star} (g^{\mu \nu}) -   g^{\mu \nu} g_{r^\star r^\star} \partial_{r^\star} (g^{r^\star r^\star})\right]\partial_\mu (\beta^{-1}\psi) \partial_{\nu} (\beta^{-1} \psi) \nonumber \\
&\qquad-\frac{1}{2} \beta^2 \left( \partial_{r^\star} f  \right) g^{\alpha \beta} \partial_\alpha (q^{-1} \psi) \partial_{\beta}(\beta^{-1} \psi)-\frac{1}{2} f\partial_{r^\star} 
(\beta^2 \mathcal{V}) (\beta^{-1} \psi)^2 \nonumber \\
&\qquad -\frac{1}{2} \left( \partial_{r^\star} f + f\frac{1}{\sqrt{g}} \partial_{r^\star} \sqrt{g} \right) \mathcal{V} \psi^2 .
\end{align}
Adding the Lagrangian identity of Proposition~\ref{prop:lagrangian} with $w= \frac{1}{2} f^\prime$ (and using Remark~\ref{rem:ch}) we deduce the result for $\updelta=0$. For arbitrary $\updelta$ we simply add the Lagrangian identity of Proposition \ref{prop:lagrangian} with $w= \left(1-\frac{3M}{r}\right)\frac{\Delta}{(r^2+a^2)} f= f \cdot h$ and group terms.
\end{proof}

We now exploit the divergence identity of Lemma \ref{lem:divid}. For this we first define the following (disjoint) decomposition of the range of the $r$-variable:
\begin{align}
[r_+, \infty) &= R_1 \cup R_2 \cup R_3 \cup R_4 \cup R_5 \nonumber \\
&= \left[r_+, \frac{5}{2} M\right) \cup \left[\frac{5}{2}M, \frac{11}{4}M\right) \cup \left[\frac{11}{4}M, \frac{7}{2}M\right) \cup \left[\frac{7}{2}M, 4M\right) \cup \left[4M,\infty\right) \nonumber .
\end{align}
Note that $\mathcal{M} \cap \{r \in R_3\}$ includes the  region containing all trapped null geodesics  if $\frac{a}{M}$ is suitably small. 

\begin{proposition} \label{prop:choices}
There exists an (explicit) function $f$ such that for all $\frac{|a|}{M}$ sufficiently small we can choose $\updelta>0$ sufficiently small (depending only on $M$) such that the estimate
\begin{align} \label{sigu}
&\tilde{K}^{f \partial_{r^\star}} [\psi] + \tilde{K}^{aux,\frac{1}{2}f^\prime}[\psi] -\updelta \cdot \tilde{K}^{aux, \frac{f \Delta}{(r^2+a^2)^2}}_{\mu} [\psi] \geq \nonumber \\
& \qquad c \mathbbm{1}_{R_1 \cup R_5} \left(\frac{\left(\partial_{t} (\beta^{-1} \psi)\right)^2+ \left(\partial_{r^\star} (\beta^{-1} \psi)\right)^2+ (\beta^{-1} \psi)^2}{r^4}   + \frac{1}{r^5}| \mathring{\nabla} (\beta^{-1} \psi)|^2    \right) \nonumber \\
& \qquad - C \mathbbm{1}_{R_2 \cup R_4} (\beta^{-1} \psi)^2 \,
\end{align}
holds for constants $c$ and $C$ depending only on $M$. Here we have defined the shorthand $| \mathring{\nabla} (\beta^{-1} \psi)|^2 :=  \left(\partial_\theta (\beta^{-1} \psi)\right)^2 +   \frac{1}{\sin^2\theta}   \left(\partial_\phi (\beta^{-1} \psi)\right)^2$.
\end{proposition}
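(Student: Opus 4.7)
My plan is to construct $f=f(r^*)$ explicitly, supported on $R_1 \cup R_2 \cup R_4 \cup R_5$ and vanishing identically on $R_3$, so that the bulk identity of Lemma~\ref{lem:divid} gives the required coercivity away from the photon sphere. I would write $f = F_- + F_+$, where $F_-$ is smooth, non-positive, non-decreasing, supported in $\{r \le 11M/4\}$, with $F_-'>0$ on the interior of $R_1$ and $F_-$ approaching a negative constant at the horizon; and $F_+$ is smooth, non-negative, non-decreasing, supported in $\{r \ge 7M/2\}$, with $F_+'>0$ on $R_5$ and $\partial_{r^*}F_+ \sim r^{-2}$ as $r\to\infty$, so as to produce the $r^{-4}$ weight in (\ref{sigu}). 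With this sign convention, $f\cdot h \le 0$ on its support (because $h$ changes sign at $r=3M$ in Schwarzschild and this persists for $|a|\ll M$), so $-\updelta f h \ge 0$ contributes with the correct sign in the coefficient of $(\partial_{r^*}(\beta^{-1}\psi))^2$.

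On $R_1 \cup R_5$ I would check the radial term: since $\partial_{r^*}f>0$ there, the first line of (\ref{lookatbulk}) already controls $(\partial_{r^*}(\beta^{-1}\psi))^2$ with the claimed weights, even before adding the $\updelta$ correction. For the lower-order $\psi^2$ contribution in the last line, I would choose $F_\pm$ so that $-\tfrac{1}{4}F_\pm'''$ is strictly positive and dominates the potential term $-\tfrac{1}{2}f\partial_{r^*}(\frac{\Delta}{(r^2+a^2)^2}\mathcal{V}_0)$ and the $\updelta$-correction on the interior of $R_1\cup R_5$; the freedom in prescribing convex/concave profiles for $F_\pm'$ is ample enough to arrange this.

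The crux is angular coercivity: I must show that the symmetric bilinear form $\tfrac12 f \beta^2 \sum_{\mu,\nu\ne r^*}(\mathcal{A}^{\mu\nu} - 2\updelta h g^{\mu\nu})\partial_\mu(\beta^{-1}\psi)\partial_\nu(\beta^{-1}\psi)$ dominates $|\mathring\nabla(\beta^{-1}\psi)|^2/r^5$ on $R_1\cup R_5$. I would first carry this out in Schwarzschild, where $\mathcal{A}^{\theta\theta}, \mathcal{A}^{\phi\phi}$ have, by direct computation, the same sign as $f\cdot h$ times the factor $\partial_{r^*}(r^{-2})$, yielding automatic positivity, while $\mathcal{A}^{tt}$ is sign-indefinite and is precisely the component whose wrong sign has to be absorbed by the $-2\updelta h g^{tt}$ correction (using $\operatorname{sgn}(fh)=-1$ on the support of $f$). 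This fixes $\updelta>0$ as a small universal constant depending only on $M$. Passing to Kerr, the new ingredients are an $O(|a|)$ off-diagonal $\mathcal{A}^{t\phi}$-block and $O(a^2)$ modifications of the Schwarzschild entries; these are Cauchy--Schwarz-absorbable against the angular positivity and the positive $(\partial_{r^*}(\beta^{-1}\psi))^2$ term once $\updelta$ is fixed, provided $|a|/M$ is taken sufficiently small in terms of $\updelta$.

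Finally, on the transition regions $R_2\cup R_4$ no first-order positivity can be expected: the cubic term $-\tfrac14 f'''$ is sign-indefinite and is exactly the origin of the loss $-C(\beta^{-1}\psi)^2$ in (\ref{sigu}); the radial and angular derivative contributions there are bounded by compactness of $R_2\cup R_4$ in $r$ and can be absorbed into this loss after Cauchy--Schwarz. On $R_3$ the entire integrand vanishes since $f\equiv 0$. The main obstacle will be the angular coercivity step: one needs a careful, explicit computation of the $\mathcal{A}^{\mu\nu}$ block together with the $\updelta h g^{\mu\nu}$ correction, identifying signs so that the choice of $\updelta$ cures the bad $\mathcal{A}^{tt}$ direction in Schwarzschild without destroying the good angular directions, and so that the $O(|a|)$ Kerr corrections do not spoil this balance.
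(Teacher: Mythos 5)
Your overall strategy matches the paper's (explicit $f$ with $f<0$ near the horizon, $f\equiv 0$ on $R_3$, $f>0$ for large $r$, plus a small $\updelta$-correction), but the sign analysis is wrong in a way that propagates through the argument, and your treatment of the transition regions contains a genuine gap.

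First, the sign of $f\cdot h$. You choose $F_-\leq 0$ on $R_1\cup R_2$ and $F_+\geq 0$ on $R_4\cup R_5$, while $h=(1-3M/r)\Delta/(r^2+a^2)^2$ is negative for $r<3M$ and positive for $r>3M$. So on the support of $f$ you have $f$ and $h$ of the \emph{same} sign, i.e.\ $f\cdot h\geq 0$, not $f\cdot h\leq 0$ as you claim. Consequently $-\updelta f h\leq 0$, which \emph{degrades} the coefficient of $(\partial_{r^\star}(\beta^{-1}\psi))^2$; this does not break the argument (the paper absorbs it by shrinking $\updelta$, using $g^{r^\star r^\star}f'$ as the dominant term), but you offer no such absorption step. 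More importantly, the same error infects your treatment of the $tt$-direction: you say $\mathcal A^{tt}$ is "sign-indefinite" and is "absorbed by the $-2\updelta h g^{tt}$ correction," invoking $\operatorname{sgn}(fh)=-1$. In fact $\mathcal A^{tt}=0$ identically in Schwarzschild (and is only $O(a^2)$ in Kerr), and the $\updelta$-term is not absorbing a bad sign but \emph{creating the only positivity} in that direction; after multiplying by $f$, its sign is governed precisely by $fh\geq 0$, not $fh\leq 0$. Your proposal gets this wrong.

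Second, the transition regions. The right-hand side of \eqref{sigu} has only a $-C\,\mathbbm{1}_{R_2\cup R_4}(\beta^{-1}\psi)^2$ loss; there is no room for negative first-order contributions on $R_2\cup R_4$. You state "no first-order positivity can be expected" there and propose to bound "the radial and angular derivative contributions" by Cauchy--Schwarz and absorb them into the $-C\psi^2$ loss. This cannot work: an inequality of the form $-c|\partial\psi|^2\geq -C\psi^2$ fails for arbitrary $\psi$, and there is no Cauchy--Schwarz step that converts first-order errors into a zeroth-order loss without a compensating positive first-order term. The proposition as stated therefore forces you to show that the \emph{entire} first-order quadratic form is non-negative on $R_2\cup R_4$. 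The paper achieves this by factoring out $f(1-3M/r)\geq 0$ from both the radial coefficient and the angular block $\mathcal B^{\mu\nu}$ and verifying that the residual quadratic form is positive definite for $r\notin R_3$ (the constraints \eqref{fderel} on the interpolating profiles of $f'$ are chosen precisely for this purpose). You give no analogue of this step, and as written your argument does not close on $R_2\cup R_4$.

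Finally, you leave the zeroth-order coercivity on $R_1\cup R_5$ to "ample freedom" in $F_\pm$. This is not automatic: the paper chooses specifically $f=-M/r$ on $R_1$ and $f=1-M/r$ on $R_5$ and explicitly verifies, after multiplying by $(r^2+a^2)/\Delta$, that $-\tfrac14 f'''-\tfrac12 f(\tfrac{\Delta}{(r^2+a^2)^2}\mathcal V_0)'$ plus the $\updelta$-contribution is bounded below by $c/r^4$. Some choices of convex/concave interpolation will not satisfy this; you would need to commit to an explicit profile and carry out the verification.
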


\begin{proof}
In the proof, we let $\tilde{c}$ and $\tilde{C}$ be constants depending only on $M$ (but which might change from line to line).
Starting from (\ref{lookatbulk}) we define $\mathcal{B}^{\mu \nu} = \mathcal{A}^{\mu \nu} - 2 \updelta \left(1-\frac{3M}{r}\right) \frac{ \Delta}{(r^2+a^2)^2}g^{\mu \nu}$ and compute
\begin{align}
\mathcal{B}^{tt} &= 2a^2\cdot \frac{r^3-3Mr^2+a^2 r + a^2 M}{ \Sigma (r^2+a^2)^2}\sin^2 \theta +2\updelta \left(1-\frac{3M}{r}\right) \frac{ r^2+a^2 + \frac{2Mr a^2}{\Sigma} \sin^2 \theta}{(r^2+a^2)^2} , \nonumber \\
\mathcal{B}^{t\phi} &=  \frac{2aM(a^2-3r^2)}{ \Sigma (r^2+a^2)^2} + 2\updelta \cdot a \left(1-\frac{3M}{r}\right) \frac{2Mr}{\Sigma (r^2+a^2)^2} , \nonumber \\
\mathcal{B}^{\phi \phi} &= 2\frac{r^2 (r-3M) + a^2 r \cos[2\theta]+a^2 M}{\Sigma (r^2+a^2)^2 \sin^2 \theta} -2 \updelta \left(1-\frac{3M}{r}\right) \frac{\Delta - a^2 \sin^2\theta}{\Sigma (r^2+a^2)^2 \sin^2\theta} ,  \nonumber \\
\mathcal{B}^{\theta \theta} &= 2\frac{r^2 (r-3M) + a^2 r +a^2 M}{ \Sigma (r^2+a^2)^2 \sin^2 \theta} - 2 \updelta \left(1-\frac{3M}{r}\right) \frac{ \Delta}{\Sigma (r^2+a^2)^2} \, .
\end{align}
From this one sees that for $\frac{|a|}{M}$ sufficiently small we can choose $\updelta$ sufficiently small (depending only on $M$), such that in $\mathcal{M} \cap \{ r \notin R_3\}$ (where $|1-\frac{3M}{r}| \geq \frac{1}{12}$) the estimates
\begin{align} \label{qf1}
\frac{\mathcal{B}^{tt} }{(1-\frac{3M}{r})} \geq \frac{\tilde{c}}{r^2} \ \ \ , \ \ \ \frac{\mathcal{B}^{\phi \phi} }{(1-\frac{3M}{r})} \geq \frac{\tilde{c}}{r^3 \sin^2 \theta} \ \ \ , \ \ \ \frac{\mathcal{B}^{\theta \theta} }{(1-\frac{3M}{r})} \geq \frac{\tilde{c}}{r^3} 
\end{align}
hold for a constant $\tilde{c}$ depending only on $M$. Moreover, we have 
\begin{align} \label{qf2}
\Big| \frac{\mathcal{B}^{t\phi} }{(1-\frac{3M}{r})}\Big| \leq \frac{|a|}{M} \frac{\tilde{C}}{r^4}
\end{align}
We next choose $f : \left[r_+,\infty\right) \rightarrow \mathbb{R}$ to be bounded, monotonically increasing as follows
\begin{equation*}
  f = \left\{
    \begin{array}{rl}
      -\frac{M}{r}& \text{if } r \in R_1,\\
      \textrm{$C^3$ interpolate}& \text{if } r \in R_2,\\
      0 & \text{if } r \in R_3,\\
      \textrm{$C^3$ interpolate}& \text{if } r \in R_4,\\ 
      1-\frac{M}{r} & \text{if } r \in R_5.
    \end{array} \right.
\end{equation*}
Specifically, we do the $C^3$ interpolation such that we have $f$ monotonically increasing and
\begin{align} \label{fderel}
f^\prime \geq \tilde{c} (-f) \ \ \ \ \textrm{ in $R_2$}  \ \ \ \textrm{and} \ \ \ f^\prime \geq \tilde{c} f \ \ \ \ \textrm{ in $R_4$} 
\end{align}
for a fixed constant $\tilde{c}>0$ depending only on $M$. In particular (potentially making $\updelta$ slightly smaller), we can achieve that 
\[
g^{r^\star r^\star} \partial_{r^\star} f - \updelta f \left(1-\frac{3M}{r}\right) \frac{\Delta}{(r^2+a^2)^2} \geq f \left(1-\frac{3M}{r}\right) \frac{\tilde{c}}{r^2}
\]
holds in all of $\mathcal{M} \cap \{ r \notin R_3\}$ for a $\tilde{c}$ depending only on $M$. Since $f\left(1-\frac{3M}{r}\right)$ is globally non-negative and in fact bounded uniformly below by a $\tilde{c}$ in $R_1 \cup R_5$, the estimate~\eqref{sigu} now follows except for the zeroth order term on the right hand side. (For this the only thing to notice is that in
\[
\frac{1}{2}f \beta^2 \left(1-\frac{3M}{r}\right) \left[ \sum_{\mu,\nu \neq r^\star} \frac{\mathcal{B}^{\mu \nu}}{1-\frac{3M}{r}}  \partial_\mu (\beta^{-1}\psi) \partial_{\nu} (\beta^{-1} \psi) \right]
\]
the quadratic form in the square bracket is positive definite by the estimates~\eqref{qf1} and~\eqref{qf2}.)

For the zeroth order term, clearly we only need to establish a lower bound in $R_1 \cup R_5$.
We start with $R_5$, where we have for $f= 1- \frac{M}{r}$ that 
\[
-\frac{1}{2} f^{\prime \prime \prime} - f \left(\frac{\Delta}{(r^2+a^2)^2}V_0\right)^\prime = \frac{3M}{r^4} + \mathcal{O}(r^{-5})
\]
and
\[
\frac{r^2+a^2}{r^2-2Mr+a^2} \left( -\frac{1}{2} f^{\prime \prime \prime} - f \left(\frac{\Delta}{(r^2+a^2)^2}\mathcal{V}_0\right)^\prime \right) = \frac{M}{r^4} \left(3-\frac{2M}{r}-14\frac{M^2}{r^2}\right)  \ \ \ \textrm{if $a=0$.}
\]
Since the left hand side of the second identity is continuous in $a$ and the bracket on the right hand side uniformly bounded below by $\frac{1}{3}$ for $r \geq 3M$, the estimate claimed for the zeroth order term follows for sufficiently small $\frac{|a|}{M}$ after potentially making $\updelta$ smaller (note $ \updelta \frac{r^2+a^2}{r^2-2Mr+a^2}\left[ \frac{1}{2} \left(f \cdot h\right)^{\prime \prime} -\mathcal{V}_0 f \cdot h \frac{\Delta}{(r^2+a^2)^2}\right] \leq \frac{\tilde{C}\updelta}{r^4})$.

Similarly, in $R_1$ we have for $f=-\frac{M}{r}$ the identity
\[
\frac{r^2+a^2}{r^2-2Mr+a^2} \left( -\frac{1}{2} f^{\prime \prime \prime} - f (w\mathcal{V}_0)^\prime \right) = -\frac{M}{r^6} \left(14M^2-14Mr+3r^2\right)  \ \ \ \textrm{if $a=0$.}
\]
The expression on the right hand side is easily shown to be uniformly bounded below by $\frac{M^3}{4r^6}$ for $r \in [9M/5,11M/4]$. Since the expression on the left is in particular continuous in $a$ on $[r_+,11M/4]$ the estimate claimed for the zeroth order term also follows in $R_1$.
\end{proof}

\subsection{Adding the redshift and stationary currents and completing the proof}
\label{addingredshift}

Let $T$ denote the stationary Killing field of Kerr (which in Boyer--Lindquist coordinates is $T=\partial_t$) and $N$  the timelike (including on $\mathcal{H}^+$) redshift vector 
field constructed in Theorem~7.1 of~\cite{Mihalisnotes}. We have that $\tilde{K}^T=0$ while
\begin{align}
\tilde{K}^N[\psi] \geq \tilde{c}  \Big(&\left(\partial_{t} (\beta^{-1} \psi)\right)^2+ \left(\partial_{r^\star} (\beta^{-1} \psi)\right)^2  + | \mathring{\nabla} (\beta^{-1} \psi)|^2 + (\beta^{-1} \psi)^2 \nonumber \\
&+ \left[\frac{r^2+a^2}{\Delta} \left(\partial_t -\partial_{r^\star} + \frac{a}{r^2+a^2} \partial_\phi\right) (\beta^{-1}\psi)\right]^2 \Big) \ \ \textrm{in $r \leq \frac{9}{4}M$} \nonumber
\end{align}
provided $a$ is sufficiently small.  We also have $N=T$ in the complement of $R_1$.

We shall now  complete the proof of Theorem~\ref{AppendixTheo} by adding $\Upsilon \cdot \tilde{J}^T_{\mu} [\psi] + \eta \tilde{J}^N_{\mu} [\psi]$ to the current
of Proposition~\ref{prop:choices},   for constants $\Upsilon>0$ (large) and $\eta>0$ (small).

We consider the divergence identity of Lemma \ref{lem:divid} with $\updelta$ and $f$ now chosen as in Proposition \ref{prop:choices}. 
We  expand the identity as follows for constants $\Upsilon>0$ (large) and $\eta>0$ (small) to be chosen below as follows:
\begin{align} \label{finaldiv}
\nabla^\mu \left(\tilde{J}^X_\mu [\psi ] + \tilde{J}^{aux,\frac{1}{2}f^\prime}_{\mu} [\psi] - \updelta \cdot \tilde{J}^{aux, f \cdot h}_{\mu} [\psi] + \Upsilon \cdot \tilde{J}^T_{\mu} [\psi] + \eta \cdot \tilde{J}^N_{\mu} [\psi] \right) \nonumber \\= \tilde{K}^X[\psi] + \tilde{K}^{aux,\frac{1}{2}f^\prime}[\psi] -\updelta \cdot \tilde{K}^{aux, f \cdot h}_{\mu} [\psi] +
\eta\cdot  \tilde{K}^N[\psi]  \, .
\end{align}
 It is now clear that we can choose $\eta$ sufficiently small (depending only on $M$) such that  for $r \geq \frac{9}{4} M$ we can absorb the term $\eta \tilde{K}^N[\psi]$, which is supported only in $R_1$, by the positivity of $\tilde{K}^X[\psi] + \tilde{K}^{aux,\frac{1}{2}f^\prime}[\psi] -\updelta \cdot \tilde{K}^{aux, f \cdot h}_{\mu} [\psi]$ established in Proposition \ref{prop:choices} to deduce
\begin{align} \label{sigu2}
&\tilde{K}^{f \partial_{r^\star}} [\psi] + \tilde{K}^{aux,\frac{1}{2}f^\prime}[\psi] -\updelta  \tilde{K}^{aux, \frac{f \Delta}{(r^2+a^2)^2}}_{\mu} [\psi]+ \eta \tilde{K}^N[\psi]\geq\\
&+ c \mathbbm{1}_{R_1 \cup R_5} \left(\frac{\left(\partial_{t} (\beta^{-1} \psi)\right)^2+ \left(\partial_{r^\star} (\beta^{-1} \psi)\right)^2+ (\beta^{-1} \psi)^2}{r^4}   + \frac{1}{r^5}| \mathring{\nabla} (\beta^{-1} \psi)|^2    \right) \nonumber \\
&+ c \mathbbm{1}_{R_1}  \left[\frac{r^2+a^2}{\Delta} \left(\partial_t -\partial_{r^\star} + \frac{a}{r^2+a^2} \partial_\phi\right) (\beta^{-1}\psi)\right]^2 - C \mathbbm{1}_{R_2 \cup R_4} |(\beta^{-1} \psi)|^2 \, ,  \nonumber 
\end{align}
where $c$ may be smaller than in~\eqref{sigu} but still depends only on $M$. We finally claim that we can choose $\Upsilon$ sufficiently large in~\eqref{finaldiv} such that 
\begin{align}
&\left( \tilde{J}^X_\mu [\psi ] + \tilde{J}^{aux,\frac{1}{2}f^\prime}_{\mu} [\psi] - \updelta  \tilde{J}^{aux, f \cdot h}_{\mu} [\psi] + \Upsilon  \tilde{J}^T_{\mu} [\psi] + \eta \tilde{J}^N_{\mu} [\psi] \right) n_{\Sigma_{\tau}}^\mu \nonumber \\
&= \boxed{ \tilde{T}_{\mu \nu} [\psi] \left( X^\nu + \Upsilon T^\nu + \eta N^\nu\right) n_{\Sigma_{\tau}}^\mu} + f^\prime \beta^2 \left( \beta^{-1} \psi \nabla_\mu (\beta^{-1} \psi) \right) n_{\Sigma_{\tau}}^\mu \nonumber \\
&\ \ \ - \frac{1}{2} \psi^2 \nabla_\mu f^\prime n_{\Sigma_{\tau}}^\mu
 -\updelta \left[f \cdot h \beta^2 \left( \beta^{-1} \psi \nabla_\mu (\beta^{-1} \psi) \right) - \frac{1}{2} \psi^2 \nabla_\mu f \cdot h\right]n_{\Sigma_{\tau}}^\mu \nonumber \\
 \label{boundarycoercivity}
&\geq \beta^2 \left( |L(\beta^{-1}\psi)|^2 + \iota_{r \leq R} |\underline{L}(\beta^{-1}\psi)|^2 + \frac{1}{r^2} | \mathring{\nabla} (\beta^{-1} \psi)|^2 +
 \frac{1}{r^3} (\beta^{-1} \psi)^2 \right).
\end{align}
For the boxed term, the estimate is an immediate consequence of the 
twisted energy momentum tensor (\ref{twistedT}) satisfying the dominant energy condition, as discussed in Section~\ref{twistedsec}, 
the positivity~\eqref{thispositivityhere}, 
and the fact that the vector field $\Upsilon T + \eta N + X$ is timelike for sufficiently large $\Upsilon$, provided that we then restrict to $|a|$ sufficiently small.
Moreover, for the boxed term the zeroth order term in the estimate scales with $\Upsilon$, i.e.~the larger we choose $\Upsilon$ the larger the zeroth order term becomes. This can be used to absorb the remaining terms using the Cauchy--Schwarz inequality, for $|a|$ sufficiently small depending on this final choice of $\Upsilon$.

The relations~\eqref{sigu2} and~\eqref{boundarycoercivity} give the coercivity properties~\eqref{bulkunweightedcoercivity} and the first of~\eqref{boundunweightedcoercivity},
restricted to the exterior region,
where we define $\rho= \mathbbm{1}_{R_1 \cup R_5}$ and $\xi= \mathbbm{1}_{R_2 \cup R_4}$,
  except that the $r$ decay for the bulk current is not optimised to the $r^{-1-\delta}$ and $r^{-3-\delta}$ weights for the first and zeroth order terms, respectively, and the $r$ decay for the boundary  current is not optimised to the $r^{-2}$ weight for the zeroth order term. (Note 
 that the Boyer--Lindquist $r$ and the $r$ of~\eqref{rdefhere} are comparable for large $r$ values so one can compare directly
 the $r$-decay as if the coordinates were the same.) To remedy this, it suffices to add  $\epsilon J^{\hat\chi V, \hat\chi w, \hat\chi q, \hat\chi \varpi}$, 
 $\epsilon K^{\hat\chi V, \hat\chi w, \hat\chi q}$, respectively to the currents, where $(V,w,q,\varpi)$ here denotes the current of Appendix~\ref{subsec:VcurrentMink},
  $\hat\chi$ is a cutoff supported far away, and $\epsilon$ is sufficiently small. The resulting currents now indeed satisfy~\eqref{bulkunweightedcoercivity} and the first inequality of~\eqref{boundunweightedcoercivity}, restricted to the exterior, with weights as stated. We note finally that for small $|a|$, we may take the $\chi$ in
  the  estimate~\eqref{inhomogeneous} proven
  in~\cite{DafRodsmalla} or~\cite{partiii} to be identically one  outside of a small neighbourhood of $r=3M$. Thus, our $\xi$ indeed satisfies~\eqref{xivanishing}.

The currents trivially may be extended to the slightly larger bigger domain of Section~\ref{kerrsubsubsection}. Note finally that
the additional positivity statements of~\eqref{boundunweightedcoercivity} are easily shown to hold.
Rewriting the current in terms of a single quadruple $(V, w,q,\varpi)$, one easily sees that the boundedness statements~\eqref{seedboundedness} hold as well.
This completes the proof of Theorem~\ref{AppendixTheo}.

\subsection{Computation of the $X$-deformation tensor}
\label{auxiliarycomps}
We collect here some computations which were used in the proof of Theorem~\ref{AppendixTheo}.

We have 
\begin{align}
-2{}^{(X)}\pi^{\mu \nu} = X^\alpha \partial_\alpha (g^{\mu \nu}) - \partial_\alpha X^\mu g^{\alpha \nu}  - \partial_\alpha X^\nu g^{\alpha \mu} \, ,
\end{align}
hence for $X=f(r) \partial_{r^\star}$
\begin{align}
-2{}^{(X)}\pi^{\mu \nu} = f \partial_{r^\star} (g^{\mu \nu}) - \partial_\alpha X^\mu g^{\alpha \nu}  - \partial_\alpha X^\nu g^{\alpha \mu} \, ,
\end{align}
which we can write (using $g^{r^\star r^\star} g_{r^\star r^\star}=1$ for Kerr in Boyer-Lindquist) as 
\begin{align}
-2{}^{(X)}\pi^{\mu \nu} &= f\partial_{r^\star} (g^{\mu \nu}) + f  g^{\mu \nu} g_{r^\star r^\star} \partial_{r^\star} (g^{r^\star r^\star})- f  g^{\mu \nu} g_{r^\star r^\star} \partial_{r^\star} (g^{r^\star r^\star}) \ \ \ \textrm{unless $\mu=\nu=r^\star$} \nonumber \\
-2{}^{(X)}\pi^{r^\star r^\star} &= f  g^{r^\star r^\star} g_{r^\star r^\star} \partial_{r^\star} (g^{r^\star r^\star}) - 2g^{r^\star r^\star} \partial_{r^\star}  f , 
\end{align}
so
\begin{equation}
-2{}^{(X)}\pi^{\mu \nu} = g^{\mu \nu} \cdot f g_{r^\star r^\star} \partial_{r^\star} (g^{r^\star r^\star}) + \left\{
\begin{array}{lr}
 -2g^{r^\star r^\star} \partial_{r^\star} f  &\textrm{if $\mu=\nu=r^\star$}  \\
 f\partial_{r^\star} (g^{\mu \nu}) - f  g^{\mu \nu} g_{r^\star r^\star} \partial_{r^\star} (g^{r^\star r^\star}) & \textrm{otherwise.} \nonumber 
\end{array}
\right.
\end{equation}
We note also 
\begin{align}
tr  {}^{(X)}\pi = g_{\mu \nu} {}^{(X)}\pi^{\mu \nu} &= -\frac{1}{2} g_{\mu \nu}  f \partial_{r^\star} (g^{\mu \nu}) + \partial_{r^\star}  f =  f\frac{1}{\sqrt{g}} \partial_{r^\star} \sqrt{g} + \partial_{r^\star} f.
\end{align}
Therefore
\begin{align}
{}^{(X)}\pi^{\mu \nu} \tilde{T}_{\mu \nu} &= {}^{(X)}\pi^{\mu \nu}\tilde{\nabla}_{\mu} \psi \tilde{\nabla}_{\nu} \psi - \frac{1}{2} (tr  {}^{(X)}\pi) g^{\mu \nu} \tilde{\nabla}_\mu \psi \tilde{\nabla}_{\nu}\psi  - \frac{1}{2} (tr  {}^{(X)}\pi) \mathcal{V} \psi^2 \nonumber \\
&= \beta^2 g^{r^\star r^\star} \partial_{r^\star} f \left(\partial_{r^\star}(\beta^{-1} \psi) \right)^2 \nonumber \\
& \qquad-\frac{1}{2} \beta^2 f \sum_{\mu,\nu \neq r^\star}  \left[ \partial_{r^\star} (g^{\mu \nu}) -   g^{\mu \nu} g_{r^\star r^\star} \partial_{r^\star} (g^{r^\star r^\star})\right]\partial_\mu (\beta^{-1} \psi) \partial_{\nu} (\beta^{-1} \psi) \nonumber \\
&\qquad -\frac{1}{2} \left( \partial_{r^\star} f + f\frac{1}{\sqrt{g}} \partial_{r^\star} \sqrt{g} + f g_{r^\star r^\star} \partial_{r^\star} (g_{r^\star r^\star})^{-1}  \right) g^{\mu \nu} \tilde{\nabla}_\mu \psi \tilde{\nabla}_{\nu}\psi \nonumber \\
&\qquad -\frac{1}{2} \left( \partial_{r^\star} f + f\frac{1}{\sqrt{g}} \partial_{r^\star} \sqrt{g} \right) \mathcal{V} \psi^2.
\end{align}
Noting that the determinant $\sqrt{g} = \frac{\Delta}{r^2+a^2} \Sigma \sin \theta=g_{r^\star r^\star} \left(r^2+a^2\right) \sin \theta$ for Kerr in Boyer--Lindquist $(t,r^\star,\theta,\phi)$,
\begin{align}
{}^{(X)}\pi^{\mu \nu} \tilde{T}_{\mu \nu}
&= \beta^2 g^{r^\star r^\star} \partial_{r^\star} f \left(\partial_{r^\star}(\beta^{-1} \psi) \right)^2 \nonumber \\
&\qquad -\frac{1}{2} \beta^2 f \sum_{\mu,\nu \neq r^\star}  \left[ \partial_{r^\star} (g^{\mu \nu}) -   g^{\mu \nu} g_{r^\star r^\star} \partial_{r^\star} (g^{r^\star r^\star})\right]\partial_\mu (\beta^{-1} \psi) \partial_{\nu} (\beta^{-1} \psi) \nonumber \\
&\qquad-\frac{1}{2} \left( \partial_{r^\star} f + f \frac{2r\Delta}{(r^2+a^2)^2}  \right) g^{\mu \nu} \partial_\mu \psi \partial_{\nu}\psi -\frac{1}{2} \left( \partial_{r^\star} f + f\frac{1}{\sqrt{g}} \partial_{r^\star} \sqrt{g} \right) \mathcal{V} \psi^2 . \nonumber 
\end{align}
We now recall from Proposition \ref{prop:holwar} the definition $\tilde{K}^X [\psi] = {}^{(X)}\pi^{\mu \nu} \tilde{T}_{\mu \nu} [\psi] + X^\nu \tilde{S}_\nu [\psi]$ and compute
\begin{align}
\tilde{K}^X[\psi]
&= \beta^2 g^{r^\star r^\star} \partial_{r^\star} f \left(\partial_{r^\star}(\beta^{-1} \psi) \right)^2 \nonumber \\
&\qquad -\frac{1}{2} f \beta^2 \sum_{\mu,\nu \neq r^\star}  \left[ \partial_{r^\star} (g^{\mu \nu}) -   g^{\mu \nu} g_{r^\star r^\star} \partial_{r^\star} (g^{r^\star r^\star})\right]\partial_\mu (\beta^{-1}\psi) \partial_{\nu} (\beta^{-1} \psi) \nonumber \\
&\qquad-\frac{1}{2} \beta^2 \left( \partial_{r^\star} f + f \frac{2r\Delta}{(r^2+a^2)^2} + f \frac{\partial_{r^\star} (\beta^2)}{\beta^2} \right) g^{\mu \nu} \partial_\mu (\beta^{-1} \psi) \partial_{\nu}(\beta^{-1} \psi) \nonumber \\
&\qquad-\frac{1}{2} f\partial_{r^\star} (\beta^2 \mathcal{V}) (\beta^{-1} \psi)^2 -\frac{1}{2} \left( \partial_{r^\star} f + f\frac{1}{\sqrt{g}} \partial_{r^\star} \sqrt{g} \right) 
\mathcal{V} \psi^2,
\end{align}
which indeed simplifies to~\eqref{target1}.

\section{Energy currents in Minkowski space}
\label{section:appedixcurrents}

In this section the energy currents used to obtain the assumptions of Section~\ref{caseidiscussion} and Section~\ref{rpsection} are described in Minkowski space.  In Section~\ref{subsec:VcurrentMink} a quadruple $(V,w,q,\varpi)$ satisfying the assumptions of Section \ref{caseidiscussion} is introduced, and in 
Section~\ref{subsec:pcurrentMink} a quadruple $(\tilde{V}_{\rm far}, \tilde{w}_{\rm far}, \tilde{q}_{\rm far}, \tilde{\varpi}_{\rm far})$ satisfying the assumptions of 
Section~\ref{rpsection} is introduced.

Recall that, for a given spacetime $(\mathcal{M},g)$ and suitably regular function $\psi \colon \mathcal{M} \to \mathbb{R}$, for a given vector field $V$, a function $w$, a one form $q$ and a two form $\varpi$, the energy current $J^{V,w,q,\varpi}$ takes the form
\[
	J^{V,w,q,\varpi}_\mu [g, \psi] := T_{\mu\nu} [g,\psi] V^\nu + w  \psi \partial_\mu \psi + \psi^2 q_{\mu} + * d \big( \psi^2 \varpi \big)_{\mu}.
\]
Here, for $0\leq k \leq 4$, $* \colon \Lambda^k \mathcal{M} \to \Lambda^{4-k} \mathcal{M}$ denotes the Hodge star operator which satisfies, for all $\alpha, \beta \in \Lambda^k \mathcal{M}$,
\[
	\alpha \wedge * \beta = g(\alpha, \beta) d\mathrm{Vol}_{\mathcal{M}}.
\]
The divergence of $J^{V,w,q,\varpi}$ takes the form
\[
	\nabla^{\mu} J^{V,w,q,\varpi}_\mu [g, \psi]
	=
	K^{V,w,q}[g,\psi ] + H^{V,w} [\psi] \Box_{g} \psi,
\]
where
\begin{align*}
	K^{V,w,q}[g, \psi]
	&
	:=
	\pi^{V}_{\mu\nu} [g]  T^{\mu\nu}[g, \psi]
	+
	\psi \nabla^{\mu} w \nabla_{\mu} \psi
	+
	w \nabla^{\mu} \psi \nabla_{\mu} \psi
	+
	\psi^2 \nabla^{\mu} q_{\mu}
	+
	2 \psi g^{\mu \nu} q_{\mu} \partial_{\nu} \psi,
	\\
	H^{V,w}[\psi]
	&
	:= V^\mu\partial_\mu \psi + w\psi.
\end{align*}

The divergence of a one form $\xi\in \Lambda^1 \mathcal{M}$ can be expressed in terms of $d$ and $*$ by
\[
	\mathrm{Div} \xi = * \, d * \xi.
\]
In particular it follows that, for any two form $\varpi \in \Lambda^2 \mathcal{M}$ and any function $\psi \in C^{\infty}(\mathcal{M})$,
\[
	\mathrm{Div} * d \big( \psi^2 \varpi \big) = - * d d \big( \psi^2 \varpi \big) = 0.
\]
Thus the choice of $\varpi$ in the current $J^{V,w,q,\varpi}_\mu[g, \psi]$ never contributes to the associated $K^{V,w,q}[g, \psi]$ or $H^{V,w}[g, \psi]$.  Moreover $q$ does not contribute to $H^{V,w}[g, \psi]$.

Throughout this section $(\mathcal{M},g_0)$ denotes Minkowski space (see Section \ref{Minkowexam}),
 and
\begin{equation}
\label{newrdef}
	r = \sqrt{(x^1)^2+(x^2)^2+(x^3)^2}
\end{equation}
denotes the standard radial coordinate  and \underline{not}
  the $r$ of~\eqref{rdefhere}.
 (Note that~\eqref{newrdef} was denoted $\tilde{r}$ in Section~\ref{Minkowexam}. Since~\eqref{rdefhere} and~\eqref{newrdef}
 are comparable for large $r$, the associated $r$-weighted coercivity properties will be equivalent.)
Recall the $(t,r,\vartheta, \varphi)$ and the $(u,v,\vartheta, \varphi)$ coordinate systems.  The Minkowski metric takes the form
\[
	g_0
	=
	-dt^2 + dr^2 + r^2( d \vartheta^2 + \sin^2 \vartheta d \varphi^2)
	=
	- dudv + r^2( d \vartheta^2 + \sin^2 \vartheta d \varphi^2).
\]
The following basic properties of Minkowski space are used.  The spacetime volume form of Minkowski space can be written
\[
	d\mathrm{Vol}_{\mathcal{M}}
	=
	r^2 \sin \vartheta \ d t \wedge dr \wedge d \vartheta \wedge d \varphi
	=
	\frac{1}{2} r^2 \sin \vartheta \ d u \wedge dv \wedge d \vartheta \wedge d \varphi,
\]
and so the Hodge star operator $*$ in particular satisfies
\begin{equation} \label{eq:starMink1}
	* \left( r^2 \sin \vartheta \ d r \wedge d \vartheta \wedge d \varphi \right)
	=
	-
	dt,
	\qquad
	* \left( r^2 \sin \vartheta \ d t \wedge d \vartheta \wedge d \varphi \right)
	=
	-
	dr,
\end{equation}
\begin{equation} \label{eq:starMink2}
	* \left( r^2 \sin \vartheta \ d u \wedge d \vartheta \wedge d \varphi \right)
	=
	du,
	\qquad
	* \left( r^2 \sin \vartheta \ d v \wedge d \vartheta \wedge d \varphi \right)
	=
	- dv.
\end{equation}
The normals to the hypersurfaces $\Sigma(\tau)$ and $\underline{C}_{v}$ take the form
\begin{equation} \label{eq:Minknormals}
	n_{\Sigma_{\tau}} = \partial_v + \iota_{r\le R} \partial_u,
	\qquad
	n_{\underline{C}_{v}} = \partial_u.
\end{equation}

\subsection{The $J^{V,w,q,\varpi}$ current}
\label{subsec:VcurrentMink}

In the case that $(\mathcal{M},g_0)$ is Minkowski space, the tuple $(V,w,q,\varpi)$ of Section \ref{caseidiscussion} can be defined as follows.
Consider
\[
	V_1 = T = \partial_u + \partial_v,
	\qquad
	w_1 = 0,
	\qquad
	q_1 = 0,
	\qquad
	\varpi_1=0,
\]
\[
	V_2 = \delta_1 T = \delta_1(\partial_u + \partial_v),
	\qquad
	w_2 =  0,
	\qquad
	q_2 = 0,
	\qquad
	\varpi_2 = -\frac{\delta_1}{2} r^{-1} r^2 \sin \vartheta d\vartheta \wedge d \varphi,
\]
\[
	V_3 = \frac{\delta_2}{2} \left( 1 - \frac{\delta_3}{(1+r)^{\delta}} \right) ( \partial_v - \partial_u),
	\qquad
	w_3 =  \frac{\delta_2}{r} \left( 1 - \frac{\delta_3}{(1+r)^{\delta}} \right),
	\qquad
	(q_3)_{\mu} = - \frac{\partial_{x^{\mu}}w_3}{2},
	\qquad
	\varpi_3=0,
\]
for appropriate
\begin{equation} \label{eq:delta23}
	0 < \delta_3 \ll \delta_2 \ll \delta_1 \ll 1,
\end{equation}
and define
\begin{equation} \label{eq:VwqvarpiMink}
	V = \sum_{i=1}^3 V_i,
	\qquad
	w = \sum_{i=1}^3 w_i,
	\qquad
	q = \sum_{i=1}^3 q_i,
	\qquad
	\varpi = \sum_{i=1}^3 \varpi_i,
\end{equation}
and then, for a given $\psi$, define currents $J^{V_i,w_i,q_i,\varpi_i}_{\mu}[\psi]$ and $J^{V,w,q,\varpi}_{\mu} [\psi]$ by \eqref{generalJdef}. 
These satisfy the boundedness properties~\eqref{seedboundedness}. 
 (Note that the current $J^{V_2,w_2,q_2,\alpha_2}_{\mu}[g_0,\psi]$ can be viewed as arising from contracting the twisted energy momentum tensor, defined in \eqref{twistedT}, with the Killing vector field $\delta_1 T$.)

\begin{proposition}[The $J^{V,w,q,\varpi}$ current in Minkowski space] \label{prop:VcurrentMink}
	With $(V,w,q,\varpi)$ defined as above, if $\delta_1$, $\delta_2$ and $\delta_3$ are chosen according to \eqref{eq:delta23}, the current $J^{V,w,q,\varpi}_\mu [\psi]$ satisfies the coercivity relations \eqref{insymbolsiboundary} and the corresponding $K^{V,w,q} [\psi]$ satisfies the coercivity relation \eqref{insymbolsi}.  More precisely,
	\begin{align}
		J^{V,w,q,\varpi}_{\mu}[\psi] n_{\Sigma_{\tau}}^{\mu}
		&
		\gtrsim
		(\partial_v \psi)^2
		+
		|\slashed\nabla \psi|^2
		+
		r^{-2} \vert \partial_v (r\psi) \vert^2
		+
		\iota_{r\le R} \left( ( \partial_u \psi)^2 + r^{-2} \vert \partial_u (r\psi) \vert^2 \right)
		+
		(1+r)^{-2} \psi^2,
		\label{eq:VcurrentMink1}
		\\
		J^{V,w,q,\varpi}_{\mu}[\psi] n_{\underline{C}_{v}}^{\mu}
		&
		\gtrsim
		(\partial_u \psi)^2
		+
		r^{-2} \vert \partial_u (r\psi) \vert^2
		+
		|\slashed\nabla \psi|^2
		+
		(1+r)^{-2} \psi^2,
		\label{eq:VcurrentMink2}
		\\
		K^{V,w,q}[\psi]
		&
		\gtrsim
		r^{-1}\vert \nablaslash \psi \vert^2
		+
		(1+r)^{-1-\delta}
		\left(
		(\partial_u \psi)^2 
		+
		(\partial_v \psi)^2 
		\right)
		+
		(1+r)^{-3-\delta} \psi^2.
		\label{eq:VcurrentMink3}
	\end{align}
	Moreover the corresponding $H^{V,w}[\psi]$ satisfies
\begin{equation} \label{eq:HVwMink}
	\vert H^{V,w}[\psi] \vert
	\lesssim
	\vert \partial_u \psi \vert
	+
	\vert \partial_v \psi \vert
	+
	\frac{1}{r} \vert \psi \vert.
\end{equation}
\end{proposition}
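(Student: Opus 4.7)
\medskip

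\noindent\textbf{Proof plan.} The strategy is to exploit the linearity of the currents in $(V,w,q,\varpi)$ and treat each of the three pieces $(V_i,w_i,q_i,\varpi_i)$ separately, then sum and use the hierarchy $0<\delta_3\ll \delta_2\ll\delta_1\ll 1$ to absorb indefinite lower-order contributions. Since $V_1=T$ and $V_2=\delta_1 T$ are multiples of the Killing vector field $T=\partial_u+\partial_v$, their deformation tensors vanish identically, so $K^{V_1,0,0}=0$ and $K^{V_2,0,0}=0$. The entire bulk coercivity therefore comes from $(V_3,w_3,q_3)$, and the roles of the other pieces are purely boundary: $V_1$ supplies the standard $T$-flux, while $\varpi_2$ contributes a zeroth-order $r^{-1}\psi^2$ boundary term through the Hodge-star identities \eqref{eq:starMink1}--\eqref{eq:starMink2}.

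\medskip

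\noindent For the boundary fluxes, I would first write $T_{\mu\nu}[g_0,\psi]V_i^\nu n^\mu$ on $\Sigma(\tau)$ and $\underline{C}_v$ using the normals \eqref{eq:Minknormals}; the $V_1$ contribution gives $(\partial_v\psi)^2+|\slashed{\nabla}\psi|^2+\iota_{r\le R}(\partial_u\psi)^2$ (and $(\partial_u\psi)^2+|\slashed\nabla\psi|^2$ on $\underline C_v$). Next, using $\varpi_2=-\tfrac{\delta_1}{2}r\sin\vartheta\, d\vartheta\wedge d\varphi$, compute $d(\psi^2\varpi_2)=-\tfrac{\delta_1}{2}\partial_u(r\psi^2)\,du\wedge d\vartheta\wedge d\varphi-\tfrac{\delta_1}{2}\partial_v(r\psi^2)\,dv\wedge d\vartheta\wedge d\varphi$ (plus a $\sin\vartheta$-differentiated piece proportional to $r\psi^2\,dr\wedge d\vartheta\wedge d\varphi$ whose $*$ is well-controlled); applying \eqref{eq:starMink2} produces cross terms $\partial_v(r\psi)\cdot r^{-1}(r\psi)$ and $\partial_u(r\psi)\cdot r^{-1}(r\psi)$ together with $r^{-2}\psi^2$. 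Completing the square against the $T$-flux (and reserving a small fraction of the latter), this yields the $r^{-2}|\partial_v(r\psi)|^2$, the corresponding $u$-term, and the $(1+r)^{-2}\psi^2$ positivity asserted in \eqref{eq:VcurrentMink1}--\eqref{eq:VcurrentMink2}, provided $\delta_1$ is small enough to absorb the indefinite cross terms. The contribution from $(V_3,w_3,q_3)$ on the boundary is of lower order in $\delta_2$ and is absorbed by the $V_1$-flux.

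\medskip

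\noindent For the bulk coercivity \eqref{eq:VcurrentMink3}, the computation of $K^{V_3,w_3,q_3}$ is the heart of the matter. With $V_3=f(r)\,\partial_r$ for $f=\tfrac{\delta_2}{2}(1-\delta_3(1+r)^{-\delta})$ and $w_3=2f/r$, one has $\pi^{V_3}=f'(dr)^2+\tfrac{f}{r}g_{\slashed{}}$ where $g_{\slashed{}}=r^2(d\vartheta^2+\sin^2\vartheta\,d\varphi^2)$; contracting with $T_{\mu\nu}[\psi]$ produces $f'(\partial_r\psi)^2+\tfrac{f}{r}|\slashed{\nabla}\psi|^2$ plus a trace term. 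The Lagrangian-type choice $w_3=2f/r$ is designed so that the dangerous $(\partial_t\psi)^2-(\partial_r\psi)^2$ trace contribution combines with $w_3\,g^{\alpha\beta}\partial_\alpha\psi\partial_\beta\psi$ to produce the angular coercivity $r^{-1}|\slashed{\nabla}\psi|^2$, which is the standard Morawetz-for-Minkowski identity; the specific choice $(q_3)_\mu=-\tfrac12\partial_\mu w_3$ then arranges that all zeroth-order contributions reduce to $-\tfrac14\Box_{g_0}w_3\cdot\psi^2$, which with our $w_3$ is strictly positive, of order $(1+r)^{-3-\delta}$ when $\delta_3>0$. The $\delta_3$-piece of $f$ is precisely what supplies the strict $(1+r)^{-1-\delta}$ weight on the radial derivatives (from $f'$) that is needed for integrability at infinity, while the main $\delta_2$-piece supplies the angular and zeroth-order weights. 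The main obstacle is to verify that, after all pieces are summed, the indefinite cross terms generated by commuting $V_3$ against the $\varpi_2$-flux and by the non-diagonal parts of $\pi^{V_3}$ in $(u,v)$ coordinates are genuinely dominated by the positive terms; this is ensured by successively choosing $\delta_1$ small to dominate $V_3$-cross terms, then $\delta_2$ small to dominate $\delta_3$-error terms at infinity, then $\delta_3$ small to preserve the sign of $-\tfrac14\Box_{g_0}w_3$.

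\medskip

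\noindent Finally, the bound \eqref{eq:HVwMink} for $H^{V,w}[\psi]=V^\mu\partial_\mu\psi+w\psi$ follows directly from the pointwise bounds $|V^\mu\partial_\mu\psi|\lesssim|\partial_u\psi|+|\partial_v\psi|$ (since each $V_i$ lies in the span of $\partial_u,\partial_v$ with bounded coefficients) and $|w\psi|\lesssim r^{-1}|\psi|$, which holds on $r\ge 1$ from the explicit form of $w_3$ and on $r\le 1$ trivially after the standard regularization that $r$ is truncated at $r_0$ in the setup of Section~\ref{Minkowexam}.
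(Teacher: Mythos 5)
Your proposal follows the same approach as the paper: decompose the quadruple into the three pieces $(V_i,w_i,q_i,\varpi_i)$, use that $V_1=T$ and $V_2=\delta_1 T$ are Killing so $K^{V_1,w_1,q_1}=K^{V_2,w_2,q_2}=0$ and all bulk coercivity comes from $K^{V_3,w_3,q_3}$ via the Morawetz-type cancellation (the paper's displayed formula for $K^{V_3,w_3,q_3}$ shows exactly $|\slashed\nabla\psi|^2$ at order $\delta_2/r$ plus strictly positive $\delta_3$-terms), and use the $T$-flux of $V_1$ together with the twisted $\varpi_2$-contribution (which, as the paper remarks, is exactly $\delta_1 T$ contracted with the $\beta=r^{-1}$ twisted energy-momentum tensor) to obtain boundary positivity, including the $r^{-2}|\partial_v(r\psi)|^2$ and $(1+r)^{-2}\psi^2$ terms. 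The hierarchy $\delta_3\ll\delta_2\ll\delta_1$ is used exactly as you describe.

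Two small computational remarks, since they would trip up a literal verification.  First, with $q_3=-\tfrac12\,dw_3$ the zeroth-order piece in $K$ is $-\tfrac12\Box_{g_0}w_3\,\psi^2$, not $-\tfrac14\Box_{g_0}w_3\,\psi^2$; the sign conclusion (positivity for $\delta_3>0$) is unaffected.  Second, in the standard Minkowski Morawetz identity the cancellation of the $(\partial_t\psi)^2,(\partial_r\psi)^2$ trace terms against $w\,g^{\alpha\beta}\partial_\alpha\psi\partial_\beta\psi$ requires $w=f/r$ when the multiplier is $f\partial_r$; with your stated $w_3=2f/r$ the coefficient of $(\partial_t\psi)^2$ is $-(f/r - f'/2)<0$.  (The paper's displayed formula for $K^{V_3,w_3,q_3}$ is the correct coercive one, consistent with $w_3$ being the coefficient of $\partial_r$ divided by $r$, so this is a factor-of-two issue in the stated definitions that you inherited rather than introduced; the argument goes through once the constant is fixed.)  Finally, $d(\sin\vartheta\,d\vartheta\wedge d\varphi)=0$, so the parenthetical ``$\sin\vartheta$-differentiated piece'' in $d(\psi^2\varpi_2)$ does not actually occur — the full contribution comes from differentiating $r\psi^2$ in the $t$ and $r$ directions, which is the only input to the Hodge-star computation. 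None of these affects the structure of the proof, which is essentially the paper's.
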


\begin{proof}
Recalling \eqref{eq:starMink1}, \eqref{eq:starMink2}, note first that
\[
	* d \big( \psi^2 \varpi_2 \big)
	=
	- \delta_1 \left(\frac{\psi^2}{r^2} + \frac{2}{r} \psi \partial_r \psi \right) dt - \delta_1 \frac{2}{r^2} \psi \partial_t \psi \ x^i dx^i
	=
	\delta_1 \left( - \frac{\psi^2}{2r^2} + \frac{2}{r} \psi \partial_u \psi \right) du
	-
	\delta_1 \left( \frac{\psi^2}{2r^2} + \frac{2}{r} \psi \partial_v \psi \right) dv.
\]
Recall the expressions \eqref{eq:Minknormals} for the normals to the hypersurfaces $\Sigma_{\tau}$ and $\underline{C}_v$.  The fluxes corresponding to $J^{V_i,w_i,q_i,\varpi_i}[\psi]$ satisfy
\[
	J^{V_1,w_1,q_1,\varpi_1}_{\mu}[\psi] n_{\Sigma_{\tau}}^{\mu}
	=
	(\partial_v \psi)^2 + \frac{1}{4} |\slashed\nabla \psi|^2+ \iota_{r\le R}( \partial_u \psi)^2,
	\qquad
	J^{V_1,w_1,q_1,\varpi_1}_{\mu}[\psi] n_{\underline{C}_{v}}^{\mu}
	=
	(\partial_u \psi)^2 + \frac{1}{4} |\slashed\nabla \psi|^2,
\]
\begin{align*}
	J^{V_2,w_2,q_2,\varpi_2}_{\mu}[\psi] n_{\Sigma_{\tau}}^{\mu}
	&
	=
	\frac{\delta_1}{r^2} (\partial_v (r\psi))^2 + \frac{\delta_1}{4r^2} |\slashed\nabla (r\psi)|^2+ \frac{\delta_1}{r^2} \iota_{r\le R}( \partial_u (r\psi))^2
	\\
	&
	=
	\delta_1 J^{V_1,w_1,q_1,\varpi_1}_{\mu}[\psi] n_{\Sigma_{\tau}}^{\mu}
	+
	\delta_1 \left(
	\frac{1}{4r^2} \psi^2
	+
	\frac{1}{r} \psi \partial_v \psi
	+
	\frac{1}{4r^2} \psi^2 \iota_{r\le R}
	-
	\frac{1}{r} \psi \partial_u \psi \iota_{r\le R}
	\right),
	\\
	J^{V_2,w_2,q_2,\varpi_2}_{\mu}[\psi] n_{\underline{C}_{v}}^{\mu}
	&
	=
	\frac{\delta_1}{4r^2} (\partial_u (r\psi))^2 + \frac{\delta_1}{8r^2} |\slashed\nabla (r\psi)|^2
	\\
	&
	=
	\delta_1 J^{V_1,w_1,q_1,\varpi_1}_{\mu}[\psi] n_{\underline{C}_{v}}^{\mu}
	+
	\delta_1 \left(
	\frac{1}{4r^2} \psi^2
	-
	\frac{1}{r} \psi \partial_u \psi
	\right),
	\\
	\vert J^{V_3,w_3,q_3,\varpi_3}_{\mu}[\psi] n_{\Sigma_{\tau}}^{\mu} \vert
	&
	\leq
	\delta_2 C ( (\partial_v \psi)^2 + |\slashed\nabla \psi|^2 + (1+r)^{-2} \vert \psi\vert^2 + \iota_{r\le R}( \partial_u \psi)^2 ),
	\\
	\vert J^{V_3,w_3,q_3,\varpi_3}_{\mu}[\psi] n_{\underline{C}_{v}}^{\mu} \vert
	&
	\leq
	\delta_2 C ( (\partial_u \psi)^2 + |\slashed\nabla \psi|^2 + (1+r)^{-2} \vert \psi\vert^2 ),
\end{align*}
and the bulk terms satisfy
\[
	K^{V_1,w_1,q_1}[\psi] = 0,
	\qquad
	H^{V_1,w_1}[\psi] 
	=
	(\partial_u \psi + \partial_v \psi),
\]
\[
	K^{V_2,w_2,q_2}[\psi] = 0,
	\qquad
	H^{V_2,w_2}[\psi] 
	=
	\delta_1(\partial_u \psi + \partial_v \psi),
\]
\begin{align*}
	K^{V_3,w_3,q_3}[\psi] 
	=
	\
	&
	\delta_2 \left( \frac{1}{r} - \delta_3
	\Big(
	\frac{1}{r(1+r)^{\delta}}
	+
	\frac{\delta}{2(1+r)^{1+\delta}} 
	\Big)
	\right)\vert \nablaslash \psi \vert^2
	\\
	&
	+
	\frac{\delta_2 \delta_3 \delta}{2(1+r)^{1+\delta}}
	\left(
	(\partial_t \psi)^2 
	+
	(\partial_r \psi)^2 
	\right)
	+
	\frac{\delta_2 \delta_3\delta(1+\delta)}{2r(1+r)^{2+\delta}} \psi^2
	,
	\\
	H^{V_3,w_3}[\psi] 
	=
	\
	&
	\frac{\delta_2}{2} \left( 1 - \frac{\delta_3}{(1+r)^{\delta}} \right) ( \partial_v \psi - \partial_u \psi)
	+
	\frac{\delta_2}{r} \left( 1 - \frac{\delta_3}{(1+r)^{\delta}} \right) \psi.
\end{align*}
It thus follows that the coercivity relations \eqref{eq:VcurrentMink1}--\eqref{eq:VcurrentMink3}, along with the property \eqref{eq:HVwMink}, hold if $\delta_1$, $\delta_2$ and $\delta_3$ are chosen according to \eqref{eq:delta23}.
\end{proof}

\subsection{The $J_{\rm far}$ current}
\label{subsec:pcurrentMink}

The quadruples $(\accentset{\scalebox{.6}{\mbox{\tiny $(p)$}}}{{V}}^{\circ}_{\rm far}, \accentset{\scalebox{.6}{\mbox{\tiny $(p)$}}}{{w}}^{\circ}_{\rm far}, \accentset{\scalebox{.6}{\mbox{\tiny $(p)$}}}{{q}}^{\circ}_{\rm far}, \accentset{\scalebox{.6}{\mbox{\tiny $(p)$}}}{{\varpi}}^{\circ}_{\rm far})$ and $(\tilde{V}_{\rm far}, \tilde{w}_{\rm far}, \tilde{q}_{\rm far}, \tilde{\varpi}_{\rm far})$ of Section \ref{rpsection} are defined as follows.  Consider some
\begin{equation} \label{eq:delta6}
	0 <  \delta_6 \ll \delta_5 \ll \delta_4 \ll \delta_3 \ll \delta_2 \ll \delta_1 \ll 1.
\end{equation}
Noting that $-\frac{1}{2} \partial_{x^{\mu}} u dx^{\mu} = - \frac{1}{2} du = (\partial_v)^{\flat}$, define
\[
	\accentset{\scalebox{.6}{\mbox{\tiny $(p)$}}}{V}^{\circ}_{\rm far} = r^p \partial_v,
	\qquad
	\accentset{\scalebox{.6}{\mbox{\tiny $(p)$}}}{w}^{\circ}_{\rm far}
	= 
	\frac{r^{p-1}}{2},
	\qquad
	(\accentset{\scalebox{.6}{\mbox{\tiny $(p)$}}}{q}^{\circ}_{\rm far})_{\mu}
	=
	- \frac{\partial_{x^{\mu}} \accentset{\scalebox{.6}{\mbox{\tiny $(p)$}}}{w}^{\circ}_{\rm far}}{2}
	-
	\Big(
	\frac{p}{4}
	-
	\frac{\delta_4}{2} 
	\Big)
	r^{p-2} \partial_{x^{\mu}} u,
\]
\[
	\accentset{\scalebox{.6}{\mbox{\tiny $(p)$}}}{\varpi}^{\circ}_{\rm far}
	= 
	\Big(
	-
	\frac{r^{p-1}}{4}
	+
	2 \delta_5 r^{\frac{p}{2}-1} 
	\Big)
	r^2 \sin \vartheta d \vartheta \wedge d \varphi.
\]
Define then
\[
	\tilde{V}_{\rm far}
	=
	\frac{1}{\delta_6}V,
	\qquad
	\tilde{w}_{\rm far}
	=
	\frac{1}{\delta_6}w,
	\qquad
	\tilde{q}_{\rm far}
	=
	\frac{1}{\delta_6} q,
	\qquad
	\tilde{\varpi}_{\rm far}
	=
	\frac{1}{\delta_6} \varpi,
\]
where $V$, $w$, $q$, $\varpi$ are defined by \eqref{eq:VwqvarpiMink}, so that
\[
	\accentset{\scalebox{.6}{\mbox{\tiny $(p)$}}}V_{\rm far}
	=
	\accentset{\scalebox{.6}{\mbox{\tiny $(p)$}}}{V}^{\circ}_{\rm far} + \frac{1}{\delta_6}V,
	\qquad
	\accentset{\scalebox{.6}{\mbox{\tiny $(p)$}}}w_{\rm far}
	=
	\accentset{\scalebox{.6}{\mbox{\tiny $(p)$}}}{w}^{\circ}_{\rm far} + \frac{1}{\delta_6}w,
	\qquad
	\accentset{\scalebox{.6}{\mbox{\tiny $(p)$}}}q_{\rm far}
	=
	\accentset{\scalebox{.6}{\mbox{\tiny $(p)$}}}{q}^{\circ}_{\rm far} + \frac{1}{\delta_6}q,
	\qquad
	\accentset{\scalebox{.6}{\mbox{\tiny $(p)$}}}\varpi_{\rm far}
	=
	\accentset{\scalebox{.6}{\mbox{\tiny $(p)$}}}{\varpi}^{\circ}_{\rm far} + \frac{1}{\delta_6}\varpi.
\]

\begin{proposition}[The $J_{\rm far}$ current in Minkowski space] \label{prop:pcurrentMink}
	With $(\accentset{\scalebox{.6}{\mbox{\tiny $(p)$}}}V_{\rm far}, \accentset{\scalebox{.6}{\mbox{\tiny $(p)$}}}w_{\rm far}, \accentset{\scalebox{.6}{\mbox{\tiny $(p)$}}}q_{\rm far},\accentset{\scalebox{.6}{\mbox{\tiny $(p)$}}}\varpi_{\rm far})$ defined as above, if $\delta_1,\ldots,\delta_6$ are chosen according to \eqref{eq:delta6}, the associated currents $\Jp_{\rm far}$, $\Kp_{\rm far}$, defined by \eqref{generalJdef},~\eqref{generalKdef} satisfy the weighted bulk coercivity properties \eqref{insymbolsiiwithpweightKfar} and the weighted boundary coercivity property \eqref{insymbolsiiwithpweightJfar}.  More precisely, for $\delta \leq p \leq 2-\delta$ and $r\geq R$,
	\begin{align}
		\Jp_{\rm far}{}_{\mu}[\psi] n_{\Sigma_{\tau}}^{\mu}
		&
		\gtrsim
		r^{p-2} \vert \partial_v (r \psi) \vert^2
		+
		r^{\frac{p}{2}} (\partial_v \psi)^2
		+
		\vert \nablaslash \psi \vert^2
		+
		r^{\frac{p}{2} - 2} \psi^2,
		\label{eq:pcurrentMink1}
		\\
		\Jp_{\rm far}{}_{\mu}[\psi] n_{\underline{C}_{v}}^{\mu}
		&
		\gtrsim
		(\partial_u \psi)^2
		+
		r^{p} \vert \nablaslash \psi \vert^2
		+
		r^{p-2} \psi^2,
		\label{eq:pcurrentMink2}
		\\
		\Kp_{\rm far}[\psi]
		&
		\gtrsim
		r^{p-3} \vert \partial_v (r \psi) \vert^2
		+
		r^{p-1} \vert \partial_v \psi \vert^2
		+
		r^{p-1} \vert \nablaslash \psi \vert^2
		+
		r^{-1-\delta} \vert \partial_u \psi \vert^2
		+
		r^{p-3} \psi^2.
		\label{eq:pcurrentMink3}
	\end{align}
	Moreover the corresponding $\Hap_{\rm far} [\psi]$ satisfies
\begin{equation} \label{eq:HfarMink}
	\vert \Hap_{\rm far} [\psi] \vert
	\lesssim
	r^{p-1} \vert \partial_v( r \psi) \vert
	+
	\vert \partial_u \psi \vert
	+
	\vert \partial_v \psi \vert
	+
	\frac{1}{r} \vert \psi \vert.
\end{equation}
\end{proposition}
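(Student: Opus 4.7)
\medskip

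\noindent\textbf{Plan for the proof of Proposition~\ref{prop:pcurrentMink}.}
The approach is to split the current additively as
\[
	\Jp_{\rm far}[\psi]
	=
	J^{\accentset{\scalebox{.6}{\mbox{\tiny $(p)$}}}{V}^{\circ}_{\rm far},\accentset{\scalebox{.6}{\mbox{\tiny $(p)$}}}{w}^{\circ}_{\rm far},\accentset{\scalebox{.6}{\mbox{\tiny $(p)$}}}{q}^{\circ}_{\rm far},\accentset{\scalebox{.6}{\mbox{\tiny $(p)$}}}{\varpi}^{\circ}_{\rm far}}[\psi]
	+
	\delta_6^{-1}\, J^{V,w,q,\varpi}[\psi],
\]
and similarly for $\Kp_{\rm far}[\psi]$ and $\Hap_{\rm far}[\psi]$, and then to treat the two summands separately.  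The second summand is handled immediately by Proposition~\ref{prop:VcurrentMink}: its bulk contribution is bounded below by $\delta_6^{-1}$ times the right-hand sides of \eqref{eq:VcurrentMink1}--\eqref{eq:VcurrentMink3}, which in particular furnishes the $r^{-1-\delta}(\partial_u\psi)^2$ bulk term required in \eqref{eq:pcurrentMink3} (this is the \emph{only} source of $\partial_u\psi$ control in the bulk, since $r^p\partial_v$ is purely outgoing), and also supplies enough non-weighted positivity to absorb any sign-indefinite errors generated by the first summand.  The remaining task is therefore to compute the ``$\circ$'' summand by hand and verify that it delivers the weighted parts of \eqref{eq:pcurrentMink1}--\eqref{eq:pcurrentMink3} modulo errors absorbable by the $\delta_6^{-1}$-rescaled Proposition~\ref{prop:VcurrentMink} positivity.

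\medskip

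The ``$\circ$'' computation is the standard $r^p$-method of~\cite{DafRodnew} rephrased in the $(V,w,q,\varpi)$ formalism.  First I would record the deformation tensor of $V^\circ = r^p \partial_v$ in $(u,v,\vartheta,\varphi)$ coordinates (using $\partial_v r = 1/2$ and $\nabla_{\partial_\vartheta}\partial_v = (2r)^{-1}\partial_\vartheta$, etc.) and expand $\pi^{V^\circ}_{\mu\nu}T^{\mu\nu}$, obtaining the standard identity
\[
	\pi^{V^\circ}_{\mu\nu}T^{\mu\nu}[\psi]
	\,=\,
	\tfrac{p}{2} r^{p-1}(\partial_v\psi)^2
	+
	\tfrac{2-p}{4} r^{p-1}|\slashed\nabla\psi|^2,
\]
which is strictly positive for $\delta\le p\le 2-\delta$.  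Next I would expand the Lagrangian piece contributed by $w^\circ, q^\circ$ using Remark~\ref{rem:ch} (so only a single $\partial_{r^*}^2$ acts on $w^\circ$); the specific choice $w^\circ = r^{p-1}/2$ together with $q^\circ = -\tfrac12 dw^\circ - (\tfrac{p}{4}-\tfrac{\delta_4}{2})r^{p-2} du$ is precisely the one that converts $\tfrac{p}{2}r^{p-1}(\partial_v\psi)^2$ into $\tfrac{p}{2}r^{p-3}(\partial_v(r\psi))^2$ upon integration by parts, producing in addition a positive $p r^{p-3}\psi^2$ contribution (after using $\Box_{g_0}(r^{p-1}/2)$ in $(u,v)$-coordinates).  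The extra $\delta_4 r^{p-2} du$-piece of $q^\circ$ produces an additional strictly positive $\delta_4 r^{p-2}\partial_v\psi\cdot \psi/r + \delta_4 r^{p-3}\psi^2$ correction.  Since $\varpi$ is divergence-free, $\varpi^\circ$ contributes nothing to $\Kp_{\rm far}$.  Assembling these pieces yields \eqref{eq:pcurrentMink3} after absorbing small errors into the $\delta_6^{-1}$ positivity.

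\medskip

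For the boundary inequalities \eqref{eq:pcurrentMink1}--\eqref{eq:pcurrentMink2} I would use $n_{\Sigma_\tau} = \partial_v + \iota_{r\le R}\partial_u$ and $n_{\underline C_v} = \partial_u$ from \eqref{eq:Minknormals}.  The $T_{\mu\nu}V^{\circ\,\nu}$ part contributes $r^p(\partial_v\psi)^2$ and $\tfrac14 r^p|\slashed\nabla\psi|^2$ on $\Sigma_\tau$, and $\tfrac14 r^p|\slashed\nabla\psi|^2$ on $\underline C_v$.  The $w^\circ,q^\circ$ contributions yield mixed $\psi\partial_v\psi$ terms, which are combined with $\psi^2$ terms from $\varpi^\circ$ via $*d(\psi^2\varpi^\circ)$; using \eqref{eq:starMink1}--\eqref{eq:starMink2} one finds
\[
	*d(\psi^2\varpi^\circ)
	\,=\,
	(\tfrac{p-2}{4}r^{p-3}\psi^2 - \delta_5(2-p)r^{p/2-2}\psi^2)\,du
	+
	(-\tfrac{p}{4}r^{p-3}\psi^2 + \delta_5 p \, r^{p/2-2}\psi^2)\,dv + \cdots,
\]
where the $r^{p/2-1}$ piece of $\varpi^\circ$ is what produces the $r^{p/2-2}\psi^2$ positive term on $\Sigma_\tau$ required by \eqref{eq:pcurrentMink1}.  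Completing squares to rewrite $r^p(\partial_v\psi)^2 + (\text{mixed})$ as $r^{p-2}|\partial_v(r\psi)|^2 + r^{p/2}(\partial_v\psi)^2 + r^{p/2-2}\psi^2$ then gives \eqref{eq:pcurrentMink1}; the analogous (simpler) computation on $\underline C_v$ with $n = \partial_u$ gives \eqref{eq:pcurrentMink2}.  The pointwise bound \eqref{eq:HfarMink} is immediate from the explicit formulae for $\accentset{\scalebox{.6}{\mbox{\tiny $(p)$}}}{V}_{\rm far}$ and $\accentset{\scalebox{.6}{\mbox{\tiny $(p)$}}}{w}_{\rm far}$ together with \eqref{eq:HVwMink}.

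\medskip

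The main obstacle is purely organisational: ensuring that the various mixed-sign cross terms generated when completing squares to reach the precise weights in \eqref{eq:pcurrentMink1} (namely the interplay between $r^p(r^{-1}L(r\psi))^2$, $r^{p/2}(L\psi)^2$, and $r^{p/2-2}\psi^2$) are either manifestly of the correct sign or absorbable into the $\delta_6^{-1}$-rescaled Proposition~\ref{prop:VcurrentMink} positivity by the hierarchy \eqref{eq:delta6}.  The anomalous role of the intermediate weight $r^{p/2}$ forces the introduction of the $\delta_5 r^{p/2-1}$ correction in $\varpi^\circ$, and care is needed to fix $\delta_5$ (small relative to $\delta_4$, but large relative to $\delta_6$) so that the corresponding cross terms do not destroy the bulk positivity \eqref{eq:pcurrentMink3} while still providing the required boundary $r^{p/2-2}\psi^2$ term.
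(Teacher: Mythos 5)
Your high-level strategy is exactly the one the paper uses: decompose $\Jp_{\rm far}$ additively into the pure $\accentset{\scalebox{.6}{\mbox{\tiny $(p)$}}}{V}^{\circ}_{\rm far}$--current and the $\delta_6^{-1}$ rescaled current from Proposition~\ref{prop:VcurrentMink}, handle the latter by quoting \eqref{eq:VcurrentMink1}--\eqref{eq:VcurrentMink3}, and compute the former by hand.  However, the intermediate identities you quote are incorrect, and the most serious error is structural rather than just arithmetic.

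Your claimed identity
\[
\pi^{V^\circ}_{\mu\nu}T^{\mu\nu}[\psi]
=\tfrac{p}{2}r^{p-1}(\partial_v\psi)^2+\tfrac{2-p}{4}r^{p-1}|\nablaslash\psi|^2
\]
is false; computing from the paper's formula for $(\pi^{\accentset{\scalebox{.6}{\mbox{\tiny $(p)$}}}{V}^{\circ}_{\rm far}})^{\sharp\sharp}$ one gets
\[
\pi^{V^\circ}_{\mu\nu}T^{\mu\nu}[\psi]
=p\,r^{p-1}(\partial_v\psi)^2+2r^{p-1}\partial_u\psi\,\partial_v\psi-\tfrac{p}{4}r^{p-1}|\nablaslash\psi|^2.
\]
This expression is not sign-definite: it contains a cross term $2r^{p-1}\partial_u\psi\,\partial_v\psi$ with no sign, and the angular coefficient is $-p/4$, not $(2-p)/4$.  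The formula you wrote is the one appearing in the classical $r^p$-identity for the rescaled unknown $\phi=r\psi$; the raw deformation contraction for $\psi$ itself is different.  The Lagrangian piece $w^\circ|\nabla\psi|^2=\tfrac{r^{p-1}}{2}\big(-4\partial_u\psi\partial_v\psi+|\nablaslash\psi|^2\big)$ is precisely what cancels the cross term and flips the angular coefficient to $\tfrac{2-p}{4}$; it is therefore indispensable for the sign, not merely a device for converting $(\partial_v\psi)^2$ into $(\partial_v(r\psi))^2$ as your write-up suggests.  Presenting the raw $\pi T$ contraction as already positive hides the one place where the argument could actually fail.

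Two further quoted identities are also incorrect.  First, $\Box_{g_0}(r^{p-1}/2)=\tfrac{p(p-1)}{2}r^{p-3}$ does not ``produce a positive $pr^{p-3}\psi^2$ contribution''; the $-\tfrac12\Box w^\circ$ piece in $K$ is \emph{negative} for $p>1$, and the positive zeroth-order term arises only after combining with the additional $-(\tfrac{p}{4}-\tfrac{\delta_4}{2})r^{p-2}du$ part of $q^\circ$, giving a net coefficient $\tfrac{p}{4}(1-2\delta_4)r^{p-3}$ together with the mixed term $(p-2\delta_4)r^{p-2}\psi\partial_v\psi$ that must be absorbed into the $|\partial_v(r\psi)|^2$ square (cf.\ the paper's \eqref{eq:bulkV4}).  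Second, your formula for $*d(\psi^2\varpi^\circ)$ has the $\psi^2$ coefficients asymmetric between $du$ and $dv$ and with the wrong power of $r$ ($r^{p-3}$ rather than $r^{p-2}$), and it omits the $\psi\partial_u\psi$ and $\psi\partial_v\psi$ terms entirely.  Those mixed terms are exactly what is needed to complete the square and convert $r^p(\partial_v\psi)^2$ into the $r^{p-2}|\partial_v(r\psi)|^2+r^{p/2-2}\psi^2$ combination appearing in \eqref{eq:pcurrentMink1}; without them the boundary coercivity does not go through.  The plan is right, but as written the key computation would not close.
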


\begin{proof}
Recall again \eqref{eq:starMink2} and note that
\begin{align*}
	* \, d (\psi^2 \accentset{\scalebox{.6}{\mbox{\tiny $(p)$}}}{\varpi}^{\circ}_{\rm far})
	=
	\
	&
	\left(
	\Big( \frac{p+1}{8} r^{p-2} - \delta_5 \frac{p+2}{2} r^{\frac{p}{2} - 2} \Big) \psi^2 
	- 
	\Big( \frac{r^{p-1}}{2} - 4 \delta_5 r^{\frac{p}{2} - 1} \Big) \psi \partial_u \psi
	\right) du
	\\
	&
	+
	\left(
	\Big( \frac{p+1}{8} r^{p-2} - \delta_5 \frac{p+2}{2} r^{\frac{p}{2} - 2} \Big) \psi^2
	+
	\Big( \frac{r^{p-1}}{2} - 4 \delta_5 r^{\frac{p}{2} - 1} \Big) \psi \partial_v \psi
	\right) dv.
\end{align*}

Note moreover that
\[
	(\pi^{\accentset{\scalebox{.6}{\mbox{\tiny $(p)$}}}{V}^{\circ}_{\rm far}})^{\sharp \sharp}
	=
	p r^{p-1} \partial_v \otimes \partial_v
	-
	\frac{pr^{p-1}}{2} \big( \partial_u \otimes \partial_v + \partial_v \otimes \partial_u \big)
	+
	\frac{r^{p-3}}{2} \big( \partial_{\vartheta} \otimes \partial_{\vartheta} + \sin^{-2} \vartheta \partial_{\varphi} \otimes \partial_{\varphi} \big)
	,
\]
and
\[
	\nabla^{\mu} (\accentset{\scalebox{.6}{\mbox{\tiny $(p)$}}}{q}^{\circ}_{\rm far})_{\mu}
	=
	\frac{p r^{p-3}}{4}
	-
	\frac{\delta_4 p r^{p-3}}{2}.
\]
The fluxes corresponding to $J^{\accentset{\scalebox{.6}{\mbox{\tiny $(p)$}}}{V}^{\circ}_{\rm far},\accentset{\scalebox{.6}{\mbox{\tiny $(p)$}}}{w}^{\circ}_{\rm far},\accentset{\scalebox{.6}{\mbox{\tiny $(p)$}}}{q}^{\circ}_{\rm far},\accentset{\scalebox{.6}{\mbox{\tiny $(p)$}}}{\varpi}^{\circ}_{\rm far}}[\psi]$ satisfy, for $r \geq R$,
\begin{align}
	\label{eq:outfluxV4}
	J^{\accentset{\scalebox{.6}{\mbox{\tiny $(p)$}}}{V}^{\circ}_{\rm far},\accentset{\scalebox{.6}{\mbox{\tiny $(p)$}}}{w}^{\circ}_{\rm far},\accentset{\scalebox{.6}{\mbox{\tiny $(p)$}}}{q}^{\circ}_{\rm far},\accentset{\scalebox{.6}{\mbox{\tiny $(p)$}}}{\varpi}^{\circ}_{\rm far}}_{\mu}[\psi] n_{\Sigma_{\tau}}^{\mu}
	&
	=
	r^{p-2} \vert \partial_v (r \psi) \vert^2
	-
	\delta_5
	\left( \frac{p+2}{2} r^{\frac{p}{2} - 2} \psi^2 + 4 r^{\frac{p}{2} - 1} \psi \partial_v \psi \right)
	,
	\\
	&
	=
	r^{p-2} \vert \partial_v (r \psi) \vert^2
	+
	\delta_5
	\left( \frac{(2-p)}{2} r^{\frac{p}{2} - 2} \psi^2 - 4 r^{\frac{p}{2} - 2} \psi \partial_v (r\psi) \right)
	,
	\nonumber
	\\
	\label{eq:influxV4}
	J^{\accentset{\scalebox{.6}{\mbox{\tiny $(p)$}}}{V}^{\circ}_{\rm far},\accentset{\scalebox{.6}{\mbox{\tiny $(p)$}}}{w}^{\circ}_{\rm far},\accentset{\scalebox{.6}{\mbox{\tiny $(p)$}}}{q}^{\circ}_{\rm far},\accentset{\scalebox{.6}{\mbox{\tiny $(p)$}}}{\varpi}^{\circ}_{\rm far}}_{\mu}[\psi] n_{\underline{C}_{v}}^{\mu}
	&
	=
	\frac{r^{p}}{4} \vert \nablaslash \psi \vert^2
	+
	\frac{\delta_4}{2} r^{p-2} \psi^2
	-
	\delta_5
	\left( \frac{p+2}{2} r^{\frac{p}{2} - 2} \psi^2 - 4 r^{\frac{p}{2} - 1} \psi \partial_u \psi \right),
\end{align}
and the bulk terms satisfy
\begin{align}
	K^{\accentset{\scalebox{.6}{\mbox{\tiny $(p)$}}}{V}^{\circ}_{\rm far},\accentset{\scalebox{.6}{\mbox{\tiny $(p)$}}}{w}^{\circ}_{\rm far},\accentset{\scalebox{.6}{\mbox{\tiny $(p)$}}}{q}^{\circ}_{\rm far}}[\psi]
	=
	\
	&
	pr^{p-3} \vert \partial_v (r \psi) \vert^2
	+
	\frac{(2-p)r^{p-1}}{4} \vert \nablaslash \psi \vert^2
	+
	\frac{\delta_4 (2-p)}{2} r^{p-3} \psi^2
	- 
	2 \delta_4 r^{p-3} \psi \partial_v (r\psi)
	\label{eq:bulkV4}
	\\
	H^{\accentset{\scalebox{.6}{\mbox{\tiny $(p)$}}}{V}^{\circ}_{\rm far},\accentset{\scalebox{.6}{\mbox{\tiny $(p)$}}}{w}^{\circ}_{\rm far}}[\psi] 
	=
	\
	&
	r^{p-1} \partial_v (r \psi).
\end{align}
In particular, if $\delta_4$ and $\delta_5$ are chosen according to \eqref{eq:delta6} (depending on $p$), then
\begin{align}
	\label{eq:V4coercivity1}
	J^{\accentset{\scalebox{.6}{\mbox{\tiny $(p)$}}}{V}^{\circ}_{\rm far},\accentset{\scalebox{.6}{\mbox{\tiny $(p)$}}}{w}^{\circ}_{\rm far},\accentset{\scalebox{.6}{\mbox{\tiny $(p)$}}}{q}^{\circ}_{\rm far},\accentset{\scalebox{.6}{\mbox{\tiny $(p)$}}}{\varpi}^{\circ}_{\rm far}}_{\mu}[\psi] n_{\Sigma_{\tau}}^{\mu}
	&
	\gtrsim
	r^{p-2} \vert \partial_v (r \psi) \vert^2
	+
	r^{\frac{p}{2}} (\partial_v \psi)^2
	+
	r^{\frac{p}{2} - 2} \psi^2,
	\\
	\label{eq:V4coercivity2}
	J^{\accentset{\scalebox{.6}{\mbox{\tiny $(p)$}}}{V}^{\circ}_{\rm far},\accentset{\scalebox{.6}{\mbox{\tiny $(p)$}}}{w}^{\circ}_{\rm far},\accentset{\scalebox{.6}{\mbox{\tiny $(p)$}}}{q}^{\circ}_{\rm far},\accentset{\scalebox{.6}{\mbox{\tiny $(p)$}}}{\varpi}^{\circ}_{\rm far}}_{\mu}[\psi] n_{\underline{C}_{v}}^{\mu}
	+
	(\partial_u \psi)^2
	&
	\gtrsim
	r^{p} \vert \nablaslash \psi \vert^2
	+
	r^{p-2} \psi^2,
\end{align}
and
\begin{align}
	\label{eq:V4coercivity3}
	K^{\accentset{\scalebox{.6}{\mbox{\tiny $(p)$}}}{V}^{\circ}_{\rm far},\accentset{\scalebox{.6}{\mbox{\tiny $(p)$}}}{w}^{\circ}_{\rm far},\accentset{\scalebox{.6}{\mbox{\tiny $(p)$}}}{q}^{\circ}_{\rm far}}[\psi]
	&
	\gtrsim
	p r^{p-3} \vert \partial_v (r \psi) \vert^2
	+
	(2-p)r^{p-1} \vert \nablaslash \psi \vert^2
	+
	p(2-p)r^{p-1} \vert \partial_v \psi \vert^2 + (2-p) r^{p-3} \psi^2.
\end{align}

It follows from Proposition \ref{prop:VcurrentMink} that the currents $\Jp_{\rm far}$, $\Kp_{\rm far}$ satisfy the weighted bulk coercivity properties \eqref{eq:pcurrentMink1}--\eqref{eq:pcurrentMink3} provided $\delta_1,\ldots,\delta_6$ are chosen according to \eqref{eq:delta6}, and that $\Hap_{\rm far}$ satisfies \eqref{eq:HfarMink}.
\end{proof}

The inequalities \eqref{eq:HVwMink} and \eqref{eq:HfarMink} in particular mean that the $r^{-1} \vert \nablaslash \psi \vert$ term on the left hand side of the assumed inequality~\eqref{nullcondassump} is superfluous in the case that $(\mathcal{M},g_0)$ is Minkowski space.  This term is estimated in the proof of Proposition~\ref{itholdsthenull} nonetheless in order to illustrate how it is estimated in the case of Kerr.

\section{Verifying the null condition assumption}
\label{nonlineartermsatinf}

In this section the proof of Proposition~\ref{itholdsthenull} is given.  Proposition~\ref{itholdsthenull} can be more precisely stated as follows.

\begin{proposition} \label{prop:appendixbmain}
	Assumption~\ref{nullcondassumphere} holds for the classical null condition of Klainerman~\cite{KlNull} on Minkowski space and more generally the class of equations on Kerr considered in Luk~\cite{MR3082240}.

More precisely, 
  if $(\mathcal{M},g_0)$ is either Minkowski space or a member of the Kerr family and if
	\[
		F = N^{\mu\nu}(\psi, x) \partial_\mu \psi \partial_\nu \psi
	\]
	satisfies the assumption~\eqref{assumponNzero}, then there exists $k_{\rm null}>0$, and for all $k\ge k_{\rm null}$, there exists 
  $\varepsilon_{\rm null}>0$, such that we  have the following.
  
Let $\psi$ be a smooth function in $\mathcal{R}(\tau_0,\tau_1,v)$ 
satisfying~\eqref{basicbootstrapinnullcond}  for
$0<\varepsilon\le \varepsilon_{\rm null}$.  
Then for all  $\delta \leq p \leq 2 - \delta$,
	\begin{multline*}
		\sum_{|{\bf k}|\le k} \int_{\mathcal{R}(\tau_0,\tau',v)\cap \{ r \geq R \} }
		\Big(
		\vert \underline{L} \Dk \psi \vert
		+
		\vert L \Dk \psi \vert
		+
		\frac{1}{r} \vert \nablaslash \Dk \psi \vert
		+
		\frac{1}{r} \vert \Dk \psi \vert
		\Big)
		\left| \mathfrak{D}^{\bf k} F \right|
		\\
		+
		\int_{\mathcal{R}(\tau_0,\tau',v)\cap \{ r \geq R \}}(\mathfrak{D}^{\bf k} F)^2
		\lesssim
		\sqrt{\Xzeropluslesslessk_{\frac{8R}9}(\tau_0,\tau_1)}
		\Xzeroplusk_{R}(\tau_0,\tau_1)
		,
	\end{multline*}
	and moreover,
	\begin{multline*}
		\sum_{|{\bf k}|\le k}\int_{\mathcal{R}(\tau_0,\tau',v) \cap \{ r \geq R \} }
		\Big(
		r^{p-1}\vert L (r\Dk \psi) \vert
		+
		\vert \underline{L} \Dk \psi \vert
		+
		\vert L \Dk \psi \vert
		+
		\frac{1}{r} \vert \nablaslash \Dk \psi \vert
		+
		\frac{1}{r} \vert \Dk \psi \vert
		\Big)
		\left|\mathfrak{D}^{\bf k}F\right|
		\\
		+
		\int_{\mathcal{R}(\tau_0,\tau',v)\cap \{ r \geq R \} } (\mathfrak{D}^{\bf k} F)^2
		\lesssim
		\sqrt{\Xzerolesslessk_{\frac{8R}9}(\tau_0,\tau_1)}\Xpk_{R}(\tau_0,\tau_1)
		+
		\sqrt{\Xplesslessk_{R}(\tau_0,\tau_1)}\sqrt{\Xpk_{R}(\tau_0,\tau_1)}\sqrt{\Xzerok_{R}(\tau_0,\tau_1)}.
	\end{multline*}
\end{proposition}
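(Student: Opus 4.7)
The goal is to verify the null-condition assumption by directly expanding $\mathfrak{D}^{\bf k} F$ via Leibniz and exploiting a null-frame decomposition in the asymptotic region $r \ge R$. My plan is to work with the null frame $(L, \underline{L}, e_A)$ (where $e_A$ are $r^{-1}\Omega_i$-type vectors, adapted on Kerr to the principal null directions away from the ergoregion, which is irrelevant for $r \ge R$), and to decompose the quadratic form $N^{\mu\nu}\partial_\mu \psi \partial_\nu \psi$ in this frame. Klainerman's classical null condition, and its extension in Luk~\cite{MR3082240}, precisely states that the coefficient $N_{\underline{L}\underline{L}}$ of $(\underline{L}\psi)^2$ vanishes (or decays suitably in the Kerr setting); consequently, every surviving term contains at least one ``good'' factor, namely either $L\psi$ or $r^{-1}\Omega_i\psi$. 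The weighted energies $\Xpk$, $\Xzeroplusk$ are designed to carry precisely the $r$-weights these good factors enjoy.

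After applying $\mathfrak{D}^{\bf k}$, the Leibniz rule yields schematically
\begin{equation*}
  \mathfrak{D}^{\bf k}\big(N^{\alpha\beta}(\psi,x)\,\partial_\alpha \psi\,\partial_\beta \psi\big)
  = \sum_{|{\bf k}_1|+|{\bf k}_2|+|{\bf k}_3|\le |{\bf k}|}
     \mathfrak{D}^{{\bf k}_1} \widetilde{N}(\psi,x)\cdot \mathfrak{D}^{{\bf k}_2}\partial_\alpha \psi\cdot \mathfrak{D}^{{\bf k}_3}\partial_\beta \psi,
\end{equation*}
where $\widetilde{N}$ ranges over $\psi$-derivatives of $N$, and commutators $[\mathfrak{D}, L]$, $[\mathfrak{D}, \underline{L}]$, $[\mathfrak{D}, e_A]$ only produce lower-order error terms plus good factors (because the $\Omega_i$ are Killing and in Kerr the commutators with $L, \underline{L}$ are manifestly subleading). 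Crucially, in each summand one can arrange, using the null structure of $N$, that at least one of the two $\psi$-derivative factors appears either as $L(\mathfrak{D}^{{\bf k}_j}\psi)$ or as $r^{-1}\Omega_i(\mathfrak{D}^{{\bf k}_j}\psi)$.

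The next step is a standard Sobolev--H\"older distribution argument: reorder so that $|{\bf k}_1|\le |{\bf k}_2|\le |{\bf k}_3|$, so $|{\bf k}_1|,|{\bf k}_2|\ll k$ when $k$ is taken large enough. I place the two lower-order factors $\mathfrak{D}^{{\bf k}_1}\widetilde N$ and $\mathfrak{D}^{{\bf k}_2}\partial\psi$ in $L^\infty$ via the weighted Sobolev inequality on $\Sigma(\tau)\cap\{r\ge 8R/9\}$, where the pointwise bound gives a good $r^{-1}$ decay factor and is controlled by $\sqrt{\Xzeropluslesslessk_{8R/9, v}}$ (for $p=0$) or $\sqrt{\Xzerolesslessk_{8R/9, v}}$ together with $\sqrt{\Xzerok_{8R/9, v}}$ (for $p>0$), matching the two shapes on the right-hand side of the assumed inequalities. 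The top-order factor $\mathfrak{D}^{{\bf k}_3}\partial\psi$, carrying the null-condition-improved decay, is placed in weighted $L^2$ and paired against $|\mathfrak{D}^{\bf k} F|$ in the trilinear integrand, or against itself in $(\mathfrak{D}^{\bf k} F)^2$. The $r$-weights then land correctly because the good derivative $L\mathfrak{D}^{{\bf k}_3}\psi$ or $r^{-1}\slashed\nabla \mathfrak{D}^{{\bf k}_3}\psi$ is exactly what the $r^p$-weighted bulk $\Xpk$-norms control, whereas the $|r^{p-1}L(r\mathfrak{D}^{\bf k}\psi)|$ prefactor on the left of the assumption matches the boundary piece of $\Xpk$ after integration.

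The main obstacle I anticipate is the bookkeeping of the $r$-weights at the anomalous level $p=0$: there, $\Xzerok$ does not control a full $r^{-1-\delta}$-weighted bulk of the good derivative; instead one must invoke the slightly stronger $\Xzeroplusk$, whose bulk integrand $\Edeltaminusonek'$ controls $r^{\delta-1}|L\psi|^2 + r^{\delta-1}|\slashed\nabla\psi|^2$. This is precisely why the $p=0$ estimate in Assumption~\ref{nullcondassumphere} has its distinctive form with $\Xzeroplusk \sqrt{\Xzeropluslesslessk}$ on the right-hand side. The other delicate point is the Kerr case, where the null condition as formulated in~\cite{MR3082240} must be checked to produce, after null-frame decomposition, quadratic expressions of exactly the shape handled above; this reduces to verifying that the bad $N_{\underline L \underline L}(\underline L\psi)^2$ contribution is absent (or sufficiently suppressed in $r$) in the class of Luk, which for $r\ge R$ is a direct computation using that the Kerr metric approaches the Minkowski metric at the necessary rate.
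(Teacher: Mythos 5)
Your overall skeleton---null-frame decomposition of $N^{\mu\nu}$, Leibniz expansion of $\mathfrak{D}^{\bf k} F$, Sobolev--H\"older distribution of derivatives, and the distinctive $\Xzeroplusk$ handling of the $p=0$ anomaly---matches the paper's strategy, and your reduction of the Kerr case to the Minkowski case via the decay of the Kerr metric at large $r$ is also what the paper does (with the commutation relations \eqref{eq:DkdvdvDkKerr1}--\eqref{eq:DkdvdvDkKerr2} making your ``manifestly subleading'' claim precise). However, there is a genuine gap in the step where you put the lower-order factors in $L^\infty$.

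You propose to obtain the pointwise bounds ``via the weighted Sobolev inequality on $\Sigma(\tau)\cap\{r\ge 8R/9\}$.'' This cannot work when the low-order factor carries a transversal derivative $\partial_u\psi = \underline{L}\psi$. The hypersurface $\Sigma(\tau)$ is \emph{null} for $r\ge R$, and the energy $\Ezerok(\tau)$ there controls only $|L\widetilde{\mathfrak{D}}^{\bf k}\psi|^2$, $|\slashed\nabla\widetilde{\mathfrak{D}}^{\bf k}\psi|^2$, and $r^{-2}|\widetilde{\mathfrak{D}}^{\bf k}\psi|^2$---the factor $\iota_{r\le R}$ in definition \eqref{firstdefk} explicitly \emph{excludes} $|\underline{L}\widetilde{\mathfrak{D}}^{\bf k}\psi|^2$ in the asymptotic region. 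So a Sobolev inequality on $\Sigma(\tau)$ cannot produce $\sup_{r\ge R}\,r\,|\partial_u\mathfrak{D}^{{\bf k}_2}\psi|$ from any of the quantities appearing in the master energies. This is precisely why the paper's master energies $\Xpk$ contain the flux term $\sup_{v}\Fzerok(v)$ on the \emph{incoming} cones $\underline{C}_v$, which \emph{do} control $|\underline{L}\psi|^2$; and the paper accordingly derives its $L^\infty$ bounds (estimate \eqref{eq:pointwise1}, from Proposition \ref{prop:Sobolevin}) by Sobolev embedding along $\underline{C}_v$, not along $\Sigma(\tau)$. The Sobolev inequality on $\Sigma(\tau)$ (Proposition \ref{prop:Sobolevout}) is used only on the bounded corner region $\mathcal{R}_{\rm Corner}$ where $r$ is comparable to $R$ and only to bound undifferentiated $\mathfrak{D}^{\bf k}\psi$ itself.

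A secondary but essential technicality you also omit: the Sobolev constant on incoming cones degenerates like $(\tau_1-\tau_0)^{-1/2}$ as the slab length shrinks (this is transparent in Proposition \ref{prop:Sobolevin}). The paper therefore runs separate arguments for $\tau_1-\tau_0\ge 1$ and $\tau_1-\tau_0<1$; in the short-slab case it trades the degenerating Sobolev constant against the extra integration length $(\tau_1-\tau_0)^{1/2}$ in the remaining bulk integrals, or falls back to $\Ezerostar$-type quantities in place of $\intEdelta'$. Without this dichotomy the constants are not uniform in $\tau_1-\tau_0$, which would break the dyadic iteration on small slabs in the main theorem. You would need to supply this case analysis, and to replace the $\Sigma(\tau)$-Sobolev by the cone Sobolev throughout, for the argument to close.
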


The estimate of Proposition \ref{prop:appendixbmain} is only nontrivial for large $r$ and, thus, the main content appears already around Minkowski space.  Accordingly, the proof will be given in detail for the case that $(\mathcal{M},g_0)$ is Minkowski space, for the two model nonlinearities
\[
	N^{\mu\nu}(\psi, x) \partial_\mu \psi \partial_\nu \psi
	= 
	\partial_u \psi \partial_v \psi,
	\qquad
	N^{\mu\nu}(\psi, x) \partial_\mu \psi \partial_\nu \psi
	= 
	\slashed{g}^{AB} \nablaslash_A \psi \nablaslash_B \psi,
\]
where $\slashed{g} = r^2\gamma$ is the induced round metric and $\nablaslash$ its associated connection.  See Section \ref{subsec:Minkowskispace}.  The more general nonlinearities and the Kerr case are discussed in Section \ref{section:Kerrcase}.

\subsection{Two model nonlinearities on Minkowski space}
\label{subsec:Minkowskispace}

Throughout this section $(\mathcal{M},g_0)$ denotes Minkowski space (see Section \ref{Minkowexam}).
Note that in the region $r\ge \frac{8R}9$ which will be considered here,
the $r$ of~\eqref{rdefhere} coincides with the standard radial coordinate~\eqref{newrdef} used
 in Appendix~\ref{section:appedixcurrents} and thus coincides
with what was denoted by as $\tilde{r}$ in Section~\ref{Minkowexam}.
 
 Recall the $(u,v,\theta, \phi)$ coordinate system.  In the region $r \geq R$ the null vector fields take the form
\[
	L = \partial_v, \qquad \underline{L} = \partial_u.
\]
The spheres $\Sigma(\tau) \cap \underline{C}_v$ are round, $\slashed{g} = r^2\gamma$ denotes the induced round metric and $\nablaslash$ its associated connection.
Define
\[
	\Fstar := \sup_{\tau_0 \leq \tau \leq \tau_1} \accentset{\scalebox{.6}{\mbox{\tiny $(0)$}}}{\underaccent{\scalebox{.8}{\mbox{\tiny $k$}}}{\mathcal{F}}}(\tau),
	\qquad
	\Epstar := \sup_{\tau_0 \leq \tau \leq \tau_1} \accentset{\scalebox{.6}{\mbox{\tiny $(p)$}}}{\underaccent{\scalebox{.8}{\mbox{\tiny $k$}}}{\mathcal{E}}}(\tau),
	\qquad
	\intEpone' := \int_{\tau_0}^{\tau_1}\accentset{\scalebox{.6}{\mbox{\tiny $(p-1)$}}}{\underaccent{\scalebox{.8}{\mbox{\tiny $k$}}}{\mathcal{E}}}'(\tau) d \tau,
\]
and similarly for $\ll k$ replacing $k$, and with $R$ subscripts added, etc.  This notation will be used throughout this section.

The main result of this section is the following.

\begin{proposition}[Model nonlinearities on Minkowski space] \label{prop:appendixBmainMink}

Fix $\delta \leq p \leq 2-\delta$ and  $k \geq 7$ and let 
 $\psi$ be a smooth function in $\mathcal{R}(\tau_0,\tau_1,v)\cap \{r\ge 8R/9\}$
 satisfying~\eqref{basicbootstrapinnullcond}.
 
Then the nonlinearities
\[ 
		F = \partial_u \psi \partial_v \psi
		\qquad
		\text{and}
		\qquad
		F
		= 
		\slashed{g}^{AB} \nablaslash_A \psi \nablaslash_B \psi
	\]
both satisfy
	\begin{multline*}
		\sum _{|{\bf k}|\le k} \int_{\mathcal{R}(\tau_0,\tau',v)\cap \{ r \geq R \} }
		\Big(
		\vert \partial_u \Dk \psi \vert
		+
		\vert \partial_v \Dk \psi \vert
		+
		\frac{1}{r} \vert \nablaslash \Dk \psi \vert
		+
		\frac{1}{r} \vert \Dk \psi \vert
		\Big)
		\left|\mathfrak{D}^{\bf k} F\right|
		+
		\int_{\mathcal{R}(\tau_0,\tau',v) \cap \{ r \geq R \}}(\mathfrak{D}^{\bf k} F)^2
		\\
		\lesssim
		\sqrt{\Fstarll_R +\Ezerostarll_{\frac{8R}{9}}
		+ \intEdeltall_R' \ }
		\Big(
		\Fstar_R
		+
		\Ezerostar_R
		+
		\int^*\accentset{\scalebox{.6}{\mbox{\tiny $(\delta-1)$}}}{\underaccent{\scalebox{.8}{\mbox{\tiny $k$}}}{\mathcal{E}}}_R'
		\
		\Big)
		+
		\sqrt{
		\int^*\accentset{\scalebox{.6}{\mbox{\tiny $(\delta-1)$}}}{\underaccent{\scalebox{.8}{\mbox{\tiny $\ll \mkern-6mu k$}}}{\mathcal{E}}}_R'}
		\
		\bigg(
		\Fstar_R + \intEdelta_R \ 
		\bigg)
	\end{multline*}
	and
	\begin{multline*}
		\sum_{|{\bf k}|\le k} \int_{\mathcal{R}(\tau_0,\tau',v)\cap \{ r \geq R \} }
		\Big(
		r^{p-1}\vert \partial_v (r\Dk \psi) \vert
		+
		\vert \partial_u \Dk \psi \vert
		+
		\vert \partial_v \Dk \psi \vert
		+
		\frac{1}{r} \vert \nablaslash \Dk \psi \vert
		+
		\frac{1}{r} \vert \Dk \psi \vert
		\Big)
		\left|\mathfrak{D}^{\bf k} F\right|
		\\
		+
		\int_{\mathcal{R}(\tau_0,\tau',v)\cap \{ r \geq R \}} (\mathfrak{D}^{\bf k} F)^2
		\lesssim
		\sqrt{\int^*\accentset{\scalebox{.6}{\mbox{\tiny $(p-1)$}}}{\underaccent{\scalebox{.8}{\mbox{\tiny $\ll \mkern-6mu k$}}}{\mathcal{E}}}_R' \ }
		\sqrt{\Fstar_R}
		\sqrt{\intEpone_R' \ }
		+
		\sqrt{\Fstarll_R + \Ezerostarll_{\frac{8R}{9}} + \intEdeltall_R '\ }
		\Bigg(
		\Fstar_R
		+
		\Epstar_R
		+
		\intEpone_R'
		\Bigg)
		.
	\end{multline*}
\end{proposition}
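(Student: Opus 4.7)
\textbf{Proof plan for Proposition~\ref{prop:appendixBmainMink}.}

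The overall approach is to expand $\mathfrak{D}^{\bf k}F$ via the Leibniz rule into a sum of bilinear expressions in derivatives of $\psi$, and then estimate each resulting term by a systematic high-low split. In the far region $r\ge R$ of Minkowski space, the commutator structure is especially favourable: the operators $\partial_u, \partial_v$ commute with $T$ and with each other, while the angular rotations $\Omega_i$ commute with $\partial_u, \partial_v$ and with the radial weight $r$. Hence for each $\bf k$ we have schematically
\begin{equation*}
\mathfrak{D}^{\bf k}(\partial_u\psi \,\partial_v\psi) \;=\; \sum_{|\mathbf{k}_1|+|\mathbf{k}_2|\le |\mathbf{k}|} c_{\mathbf{k}_1,\mathbf{k}_2}\,\partial_u\mathfrak{D}^{\mathbf{k}_1}\psi\,\partial_v\mathfrak{D}^{\mathbf{k}_2}\psi,
\end{equation*}
and an analogous expansion for $r^{-2}|\nablaslash\psi|^2$, where each angular factor carries one explicit $r^{-1}$ weight. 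No ``bad'' terms involving $\partial_u\partial_u\psi$ or $\partial_v\partial_v\psi$ without a compensating $\partial_v$ or $\partial_u$ factor ever appear: this is the algebraic manifestation of the null condition we need to exploit.

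The next step is to split each bilinear $(\mathbf{k}_1,\mathbf{k}_2)$ contribution into a ``low--high'' piece in which $|\mathbf{k}_1|\le \ll\!k$ (with $\ll\!k$ chosen so that $k-\ll\!k\ge 3$, which is possible as $k\ge 7$) and a symmetric ``high--low'' piece. On the low-order factor I will apply a Sobolev inequality on the two-surfaces $\Sigma(\tau)\cap\underline C_v$ (equipped with the round metric $\slashed g=r^2\gamma$): commuting with up to two of $\Omega_1,\Omega_2,\Omega_3$ gains an $L^\infty_{\vartheta,\varphi}$ control costing only a factor $r^{-1}$ from $\slashed g^{-\frac12}$, and commuting once more in the radial/null direction upgrades pointwise spherical control to an $L^\infty_{u,\vartheta,\varphi}$ or $L^\infty_{v,\vartheta,\varphi}$ bound. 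The point is that the low-order factor is then controlled pointwise in $(u,\vartheta,\varphi)$ (resp.\ $(v,\vartheta,\varphi)$) by the quantity $\sqrt{\Fstarll_R+\Ezerostarll_{8R/9}+\intEdeltall_R'}$ (anomalous $p=0$ case) or by $\sqrt{\intEpone_R'}\cdot(\text{flux})^{1/2}$ after an additional Cauchy--Schwarz in $v$ (respectively $u$).

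The remaining high-order factor is then paired with the outer weight $(r^{p-1}|\partial_v(r\mathfrak{D}^{\bf k}\psi)|+|\partial_u \mathfrak{D}^{\bf k}\psi|+\ldots)$ via Cauchy--Schwarz in spacetime. Here it is essential that in every surviving bilinear term, either a $\partial_v$ or a $\nablaslash$ derivative lands on the high-order factor whenever possible, so that the outer weight $|\partial_u \mathfrak{D}^{\bf k}\psi|$ (which only sees a $\Fstar$-type control, not an $r^p$ weight) is paired against the \emph{good} $\partial_v\psi\sim r^{-1}$ or $r^{-1}|\nablaslash\psi|$ through the pointwise Sobolev bound. This is precisely the budgetary accounting of the classical null condition. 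After integration in $u,v,\vartheta,\varphi$, each product term reduces to one of the three types appearing on the right-hand side: a $\sqrt{\mathrm{low\,derivative\,data}}\cdot \mathcal{X}$ product (from low--high pairings where all derivatives of the high factor are carried by the relevant outer weight), a pure $\mathcal{X}$ quadratic product (from balanced mid--mid pairings handled via $\int F^2$ together with the trilinear weighted splits in the second estimate), or a triple $\sqrt{\cdot}\sqrt{\cdot}\sqrt{\cdot}$ product (from high--mid pairings in the $p>0$ estimate, interpolating one $L^\infty$ Sobolev at intermediate order).

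The main obstacle will be the bookkeeping of $r$-weights in the anomalous $p=0$ estimate, where the bulk $\intEpone'$ with $p=0$ fails to control $\partial_v\psi$ in a useful norm. Here the pigeon has to be placed in the term $\intEdelta_R$ (a strictly higher weight than the natural $p=0$ bulk): the low factor's $\partial_v\psi$ in $L^\infty_{v,\vartheta,\varphi}$ is extracted via Sobolev from $\Ezerostarll$ (which contains $\partial_v\psi$ on $\Sigma(\tau)$), and the paired high factor is placed in $\intEdelta$, yielding the structure $\sqrt{\Ezerostarll}\cdot \intEdelta$ on the right. The $p>0$ estimate is similar but structurally easier, as the bulk $\intEpone'$ already controls $|\partial_v(r\psi)|$ with $r^{p-3}$, and a Hardy inequality recovers $|\partial_v\psi|$ with $r^{p-1}$ for the cubic triple-$\sqrt{}$ term. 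The angular-form nonlinearity is handled by precisely the same scheme with $r^{-1}\nablaslash$ playing the role of $\partial_v$ on both factors, which in fact gives slightly more room in every estimate.
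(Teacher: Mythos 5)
Your high-level strategy is exactly the paper's: expand $\mathfrak{D}^{\bf k}F$ by the Leibniz rule (using that $\partial_u,\partial_v$ commute with the $\mathfrak{D}$ in $r\ge R$), observe that only null-compatible pairings $\partial_u\times\partial_v$ and $\nablaslash\times\nablaslash$ occur, and then split into high--low pairs using $k\ge 7$, putting the low factor in $L^\infty$ via Sobolev and the high factor in the weighted bulk or flux via Cauchy--Schwarz. Two concrete pieces of the actual argument are missing, however, and they are not cosmetic: without them the specific combination of quantities on the right-hand side of the proposition would not be obtained.

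First, you do not treat the \emph{corner region} $\mathcal{R}_{\mathrm{Corner}}(\tau_0,\tau_1,v) = \mathcal{R}(\tau_0,\tau_1,v)\cap\{r\ge R\}\cap\{v'\le v(R,\tau_0+R)\}$ separately. The Sobolev inequality on the incoming cones $\underline{C}_v$ (Proposition~\ref{prop:Sobolevin}) only applies when $v'\ge v(R,\tau_0+R)$: for smaller $v'$ the incoming cone exits the region before reaching $u=\tau_1$, and one has no usable $\partial_u$-integration length. In the corner one must instead use the $\Sigma(\tau)$ Sobolev inequality (Proposition~\ref{prop:Sobolevout}), which integrates inward to $r=8R/9$ and is precisely the origin of the $\Ezerostarll_{8R/9}$ term on the right-hand side. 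Since your proposal would not introduce this term, the estimate you could derive would have a different — and on its face insufficient — right-hand side.

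Second, you never address the degeneration of the incoming-cone Sobolev constant as $\tau_1-\tau_0\to 0$. The constant in Proposition~\ref{prop:Sobolevin} blows up like $(\tau_1-\tau_0)^{-1/2}$, which is why the paper carries out the entire argument twice, once for $\tau_1-\tau_0\ge 1$ and once for $\tau_1-\tau_0<1$. In the short-slab case one cannot afford to pay the blow-up factor against a bulk integral; instead one converts a $\sqrt{L}$ factor from the bulk into $\sqrt{\tau_1-\tau_0}$ and pairs with an energy flux $\Ezerostar$ or $\Epstar$ along $\Sigma(\tau)$. This is what produces the appearance of both the energy sup $\Ezerostar_R$ and the bulk integral $\intEdelta_R'$ (respectively $\Epstar_R$ and $\intEpone_R'$) in the bracketed factors: they are not simultaneously needed, but each is needed in one of the two regimes. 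This dichotomy is essential for the dyadic iteration scheme of the main theorem, where slab lengths range from order one to arbitrarily large. Finally, a minor point: the Hardy inequality you invoke for the $r^{p-1}|\partial_v\psi|^2$ control is superfluous, since the $p$-weighted bulk $\Epminusonek'$ of~\eqref{coeffsheretootoosmaller} already contains $r^{p-1}|L\psi|^2$ as well as $r^{p-1}|r^{-1}L(r\psi)|^2$ by definition.
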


The following properties of Minkowski space will be used in the proof of Proposition \ref{prop:appendixBmainMink}.  First, for any $\bf k$ and any function $\psi$, for $r \geq R$,
\begin{equation} \label{eq:DkdvdvDk}
	\vert \Dk L \psi \vert
	\lesssim
	\sum_{\vert \widetilde{\bf k} \vert \leq \vert \bf k \vert}
	\vert L \mathfrak{D}^{\widetilde{\bf k}} \psi \vert,
	\qquad
	\vert \Dk \underline{L} \psi \vert
	\lesssim
	\sum_{\vert \widetilde{\bf k} \vert \leq \vert \bf k \vert}
	\vert \underline{L} \mathfrak{D}^{\widetilde{\bf k}} \psi \vert
	.
\end{equation}
Indeed, \eqref{eq:DkdvdvDk} follows that $L = \partial_v$, $\underline{L} = \partial_u$, $T = N = \partial_u + \partial_v$, and $[\partial_u,\Omega_i] = [\partial_v,\Omega_i] = 0$ for $i=1,2,3$.  Thus $\partial_u$ and $\partial_v$ commute with all components of $\mathfrak{D}$.  Second, the fact that, for any $\bf k$ and any function $\psi$, for $r \geq R$,
\begin{equation} \label{eq:nablaMinkproperty}
	\vert \Dk ( \gslash^{AB} \nablaslash_A \psi \nablaslash_B \psi ) \vert
	\lesssim
	\sum_{\vert \mathbf{k}_1 \vert + \vert \mathbf{k}_2 \vert \leq \vert \mathbf{k} \vert}
	\vert \nablaslash \mathfrak{D}^{\mathbf{k}_1} \psi \vert
	\vert \nablaslash \mathfrak{D}^{\mathbf{k}_2} \psi \vert,
\end{equation}
will also be used.

In the above and what follows, note that  when the volume form is omitted, the
usual spacetime volume form is to be understood, and this 
contains in particular an omitted $r^2 \sin \vartheta$ factor.

The proof of Proposition \ref{prop:appendixBmainMink} relies on weighted Sobolev inequalities, which are discussed in Section \ref{subsec:Sobolev}.  In Section \ref{subsec:modelnonlin1} the proof of Proposition \ref{prop:appendixBmainMink} for the case of the nonlinearity $F = \partial_u \psi \partial_v \psi$ is given, followed by the proof for the nonlinearity $F = \slashed{g}^{AB} \nablaslash_A \psi \nablaslash_B \psi$ in Section \ref{subsec:modelnonlin2}.

\subsubsection{Weighted Sobolev inequalities}
\label{subsec:Sobolev}

The proof of Proposition~\ref{prop:appendixbmain} relies on certain weighted Sobolev inequalities.

Let $d\omega$ denote the unit volume form on $S^2$
\[
	d \omega = \sin \vartheta \, d\vartheta \, d\varphi.
\]
Recall (see Section \ref{Minkowexam}) that $r(u,v) = \frac{1}{2}(v-u)$.  Define $v(R) = v(R,u) = 2R + u$, so that $r(u,v(R,u)) = R$, and similarly define $u(R,v) = v-2R$, so that $r(u(R,v),v) = R$.

Recall the region $\mathcal{R}(\tau_0,\tau_1,v)$.  In order to avoid the blow up of constants in Sobolev inequalities, the region near the corner $\{u=\tau_0\} \cap \{r=R\}$ will typically be considered separately from its complement.  Accordingly, define
\begin{equation} \label{eq:Rcheck}
	\mathcal{R}_R(\tau_0,\tau_1,v) := \mathcal{R}(\tau_0,\tau_1,v) \cap \{ r \geq R\},
	\qquad
	\check{\mathcal{R}}_R(\tau_0,\tau_1,v) := \mathcal{R}_R(\tau_0,\tau_1,v) \smallsetminus \{ v'\leq v(R,\tau_0 + R)\},
\end{equation}
and
\begin{equation} \label{eq:Rcorner}
	\mathcal{R}_{\mathrm{Corner}}(\tau_0,\tau_1,v) := \mathcal{R}(\tau_0,\tau_1,v) \cap \{ r \geq R\} \cap \{ v'\leq v(R,\tau_0 + R)\}.
\end{equation}
Note that
\begin{equation} \label{eq:Rcornerbound}
	r \leq \frac{3R}{2} \qquad \text{in} \quad \mathcal{R}_{\mathrm{Corner}}(\tau_0,\tau_1,v).
\end{equation}
Define also
\begin{equation} \label{eq:T1T1prime}
	T_1 = T_1(v) = \min\{ \tau_1,u(R,v)\}, \qquad \text{and} \qquad T_1' = T_1(v'),
\end{equation}
and
\[
	\Sigma_{R}(\tau,v) = \Sigma(\tau,v) \cap \{ r \geq R\}.
\]

\begin{proposition}[Sobolev inequality on incoming cones $\underline{C}_v$] \label{prop:Sobolevin}
	For any $\tau_0 \leq u \leq \tau_1$, $v(R,\tau_0+R) \leq v' \leq v$, $(\vartheta,\varphi) \in S^2$ (i.\@e.\@ for any $(u,v',\vartheta,\varphi) \in \check{\mathcal{R}}_R(\tau_0,\tau_1,v)$), and any function $f \colon \mathcal{R}(\tau_0,\tau_1,v) \to \mathbb{R}$,
	\begin{multline*}
		r \vert f (u,v',\vartheta,\varphi) \vert
		\lesssim
		\big( 1 + (\tau_1-\tau_0)^{-\frac{1}{2}} \big)
		\sum_{k \leq 1}
		\Big(
		\int_{\tau_0}^{T_1'} \int_{S^2}
		\big( \vert \underline{L}^{k} f \vert^2
		+
		\sum_{i=1}^3  \vert \Omega_i \underline{L}^{k} f \vert^2
		\\
		+
		\sum_{i,j=1}^3  \vert \Omega_i \Omega_j \underline{L}^{k} f \vert^2
		\big) (u',v',\vartheta,\varphi)
		r^2 d\omega du'
		\Big)^{\frac{1}{2}}.
	\end{multline*}
\end{proposition}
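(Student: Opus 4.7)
The plan is to prove this weighted Sobolev inequality by combining a standard Sobolev embedding on the sphere with a one-dimensional averaging argument along the $u$-direction of the incoming cone $\underline{C}_{v'}$. Since on $\underline{C}_{v'}$ the radius $r = (v' - u)/2$ depends only on $u$, these two steps effectively decouple.

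First, I would fix $v'$ and apply the standard $H^2(S^2) \hookrightarrow L^\infty(S^2)$ Sobolev embedding on each round sphere $\underline{C}_{v'} \cap \{u = \mathrm{const}\}$. Because $\Omega_1, \Omega_2, \Omega_3$ span the tangent space to these spheres and correspond to Killing fields of the unit round metric $\gamma$ (with $\slashed{g} = r^2\gamma$), this yields the pointwise bound
\[
\vert f(u,v',\vartheta,\varphi)\vert^2 \lesssim \tilde{G}_0(u), \qquad \tilde{G}_0(u) := \int_{S^2} \Bigl(|f|^2 + \sum_i|\Omega_i f|^2 + \sum_{i,j}|\Omega_i\Omega_j f|^2\Bigr)(u,v',\cdot)\, d\omega,
\]
uniformly in $(\vartheta,\varphi)$, with a constant independent of $r$ (the $r$-weights cancel between the volume form $r^2 d\omega$ and the dual angular derivatives).

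Second, I would apply a one-dimensional averaging inequality to the function $h(u) := r(u)^2 \tilde{G}_0(u)$ on the interval $[\tau_0, T_1']$. For any $u_0 \in [\tau_0, T_1']$ the fundamental theorem of calculus gives $h(u) \leq h(u_0) + \int_{\tau_0}^{T_1'} |h'(u')|\, du'$; averaging in $u_0$ yields
\[
h(u) \leq \frac{1}{T_1' - \tau_0}\int_{\tau_0}^{T_1'} h(u_0)\, du_0 + \int_{\tau_0}^{T_1'} |h'(u')|\, du'.
\]
Since $\partial_u r = -\tfrac12$, we have $h'(u) = -r\tilde{G}_0 + r^2 \tilde{G}_0'$, and Cauchy--Schwarz applied to $\partial_u G_0 = 2 f\underline{L}f + \ldots$ gives $|\tilde{G}_0'| \leq \tilde{G}_0 + \tilde{G}_1$, where $\tilde{G}_1$ is defined analogously with each field differentiated once by $\underline{L} = \partial_u$. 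Using $r \geq R \geq 1$, this leads to $|h'| \leq C r^2 (\tilde{G}_0 + \tilde{G}_1)$.

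The main point, and the place where the excision defining $\check{\mathcal{R}}_R$ enters, is the bound on the averaging factor: if $T_1' = u(R,v')$, then the condition $v' > v(R,\tau_0+R) = 3R + \tau_0$ implies $T_1' - \tau_0 > R$, while if $T_1' = \tau_1$ we simply have $T_1' - \tau_0 = \tau_1 - \tau_0$. Hence $(T_1' - \tau_0)^{-1} \lesssim 1 + (\tau_1 - \tau_0)^{-1}$. Combining this with the two steps above gives
\[
r(u)^2 |f(u,v',\vartheta,\varphi)|^2 \lesssim h(u) \lesssim \bigl(1 + (\tau_1-\tau_0)^{-1}\bigr)\int_{\tau_0}^{T_1'}\!\!\int_{S^2} r^2 (G_0 + G_1)\, d\omega\, du',
\]
and taking square roots (using $(1+x)^{1/2} \leq 1+x^{1/2}$ and $(a+b)^{1/2} \leq a^{1/2} + b^{1/2}$) yields exactly the claimed inequality. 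The main obstacle is purely bookkeeping: one must verify that the geometric definition of $\check{\mathcal{R}}_R$ indeed forces $T_1' - \tau_0 \gtrsim \min(R, \tau_1-\tau_0)$ so that the averaging constant degenerates only with the advertised factor $1 + (\tau_1-\tau_0)^{-1/2}$; all other steps are standard.
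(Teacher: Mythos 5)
Your proof is correct and takes essentially the same approach as the paper: a one-dimensional averaging-type argument along the $u$-direction (the paper uses a cutoff function where you use a mean-value/averaging lemma, but these are interchangeable) combined with the standard $H^2(S^2)\hookrightarrow L^\infty(S^2)$ Sobolev embedding, and the same key observation that the excision defining $\check{\mathcal{R}}_R$ forces $T_1'-\tau_0 \gtrsim \min(R,\tau_1-\tau_0)$ so the averaging constant degenerates only as $1+(\tau_1-\tau_0)^{-1/2}$. The only cosmetic difference is that you apply the two steps in the opposite order (sphere Sobolev first, then $u$-averaging), which makes no essential difference.
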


\begin{proof}
	For simplicity, consider first the case that $v'\geq v(R,\tau_1)$, so that $T_1' = \tau_1$.  Suppose first that $u \geq \frac{\tau_1+\tau_2}{2}$.  Let $\phi$ be a smooth cut off function such that $\phi(\tau_1) = 0$, $\phi(\tau) = 1$ for $\tau \geq \frac{\tau_1+\tau_2}{2}$, and
	\[
		\vert \phi'(\tau) \vert \leq \frac{4}{\tau_2-\tau_1},
		\quad
		\text{for }
		\tau_1 \leq \tau \leq \frac{\tau_1+\tau_2}{2}.
	\]
	Now
	\[
		\partial_u \left( \phi (u) \vert f(u,v',\vartheta,\varphi) \vert^2 \right)
		=
		\phi' (u) \vert f(u,v',\vartheta,\varphi) \vert^2
		+
		2 \phi (u) f(u,v',\vartheta,\varphi) \partial_uf(u,v',\vartheta,\varphi).
	\]
	Integrating from $\tau_0$ to $u$ gives
	\[
		r^2 \vert f(u,v',\vartheta,\varphi) \vert^2
		\lesssim
		\big( 1 + (\tau_1-\tau_0)^{-1} \big) \int_{\tau_0}^{\tau_1} \vert f(u',v',\vartheta,\varphi) \vert^2 r^2 du'
		+
		\int_{\tau_0}^{\tau_1} \vert \partial_uf (u',v',\vartheta,\varphi) \vert^2 r^2 du'.
	\]
	The result then follows from the standard Sobolev inequality on $S^2$,
	\[
		\sup_{(\vartheta,\varphi)\in S^2} \vert  f(u',v',\vartheta,\varphi) \vert
		\lesssim
		r^{-1}
		\Big(
		\int_{S^2}
		\big( \vert f \vert^2
		+
		\sum_{i=1}^3  \vert \Omega_i f \vert^2
		+
		\sum_{i,j=1}^3  \vert \Omega_i \Omega_j f \vert^2
		\big) (u',v',\vartheta,\varphi)
		r^2 \sin \vartheta d \vartheta d \varphi
		\Big)^{\frac{1}{2}}.
	\]
	Similarly for $u \leq \frac{\tau_1+\tau_2}{2}$, and similarly for $v(R,\tau_0+R) \leq v'\leq v(R,\tau_1)$.
\end{proof}

Proposition \ref{prop:Sobolevin} in particular implies that
\begin{equation} \label{eq:pointwise1}
	\sup_{\check{\mathcal{R}}_R(\tau_0,\tau_1,v)}
	\sum_{\vert \mathbf{k} \vert \leq k-3}
	\Big(
	r \vert \partial_u \Dk \psi \vert
	+
	r \vert \nablaslash( \Dk \psi) \vert
	\Big)
	\lesssim
	\big( 1 + (\tau_1-\tau_0)^{-\frac{1}{2}} \big) \sqrt{ \Fstar },
\end{equation}
and, for any $s\geq 0$, and any $\tau_0 \leq u \leq \tau_1$, $v(R,\tau_0+R) \leq v' \leq v$, $(\vartheta,\varphi) \in S^2$,
\begin{equation} \label{eq:dvpointwisein}
	r^s \vert \partial_v \Dk \psi (u,v',\vartheta,\varphi) \vert
	\lesssim
	\big( 1 + (\tau_1-\tau_0)^{-\frac{1}{2}} \big)
	\sum_{\vert \widetilde{\mathbf{k}} \vert \leq \vert \mathbf{k} \vert+3}
	\Big(
	\int_{\tau_0}^{T_1'} \int_{S^2}
	r^{s-2}
	\vert \partial_v \Dk \psi (u',v',\vartheta,\varphi) \vert
	r^2 d\omega du'
	\Big)^{\frac{1}{2}}.
\end{equation}
It is in the forms \eqref{eq:pointwise1} and \eqref{eq:dvpointwisein} that Proposition \ref{prop:Sobolevin} will be used below.

\begin{proposition}[Sobolev inequality on $\Sigma(\tau)$] \label{prop:Sobolevout}
	For any $\tau_0 \leq \tau \leq \tau_1$, $v(R,u) \leq v' \leq v$, $(\vartheta,\varphi) \in S^2$, and any function $f \colon \mathcal{R}(\tau_0,\tau_1,v) \to \mathbb{R}$,
	\begin{multline*}
		\vert f (\tau,v',\vartheta,\varphi) \vert
		\lesssim
		\sum_{k =0}^1
		\Big(
		\int_{\Sigma_{R}(\tau,v)}
		r^{-2}
		\big( \vert L^{k} f \vert^2
		+
		\sum_{i=1}^3  \vert \Omega_i L^{k} f \vert^2
		+
		\sum_{i,j=1}^3 \vert \Omega_i \Omega_j L^{k} f \vert^2
		\big)
		r^2 d\omega dv'
		\Big)^{\frac{1}{2}}
		\\
		+
		\Big(
		\int_{\Sigma_{\frac{8R}{9}}(\tau,v)}
		r^{-2}
		\vert f \vert^2
		r^2 d\omega dv'
		\Big)^{\frac{1}{2}}.
	\end{multline*}
\end{proposition}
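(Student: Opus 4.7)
The plan is to combine the classical Sobolev embedding on the unit $2$-sphere with a one-dimensional fundamental-theorem-of-calculus argument along the outgoing null direction $L = \partial_v$ on $\Sigma(\tau) \cap \{r \ge R\}$, and then to close the boundary value at $r = R$ by averaging on the null side, with a cutoff construction available to route any spacelike remainder into the $\Sigma_{8R/9}(\tau,v)$ term.

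First, for fixed $v'$, the classical Sobolev embedding $H^2(S^2) \hookrightarrow L^\infty(S^2)$ at the single sphere $\{(\tau, v', \cdot)\}$ dominates $|f(\tau, v', \vartheta, \varphi)|^2$ by $\int_{S^2}\sum_{|\alpha|\le 2}|\Omega^\alpha f(\tau, v', \cdot)|^2\, d\omega$. This reduces the problem to bounding $\int_{S^2}|F(\tau,v',\cdot)|^2\, d\omega$ for each $F \in \{f, \Omega_i f, \Omega_i\Omega_j f\}$.

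Next, for each such $F$, working at fixed angular coordinate and using that on the null part of $\Sigma(\tau)$ one has $r = (v-\tau)/2$ and therefore $\partial_r = 2L$, the fundamental theorem of calculus gives
\[
|F(v')|^2 = |F(v(R,\tau))|^2 + 2\int_{v(R,\tau)}^{v'} F\, LF\, dv''.
\]
Cauchy--Schwarz in the form $2|F\,LF| \le |F|^2 + |LF|^2$ bounds the second term by an integrand which, after multiplication by $r^{-2}\cdot r^2$ and integration over $S^2$, matches precisely the $\Sigma_R(\tau,v)$ contribution in the statement. The boundary value $|F(v(R,\tau))|^2$ at $r = R$ is then controlled by one-sided averaging on the null side: writing $|F(v(R,\tau))|^2 = |F(v_1)|^2 - \int_{v(R,\tau)}^{v_1} 2F\,LF\, dv''$ and averaging $v_1$ over a fixed-length interval on the null side produces only further $\Sigma_R$-type contributions involving $|F|^2$ and $|LF|^2$.

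Combining these two steps for all $F = \Omega^\alpha f$ with $|\alpha|\le 2$ yields the $\Sigma_R$ piece of the stated bound (summed over $k = 0, 1$). The additional $\int_{\Sigma_{8R/9}} r^{-2}|f|^2\, r^2 d\omega dv'$ term is the natural buffer that appears if one instead bounds the base-point value from the spacelike side via a cutoff $\chi(r)$ with $\chi = 0$ for $r \le 8R/9$ and $\chi = 1$ for $r \ge R$: expanding $|F(v')|^2 = \int_{8R/9}^{r'} \partial_r(\chi|F|^2)\, dr''$ produces a $\chi'|F|^2$ piece supported in $[8R/9, R]$ that lands exactly in the $\Sigma_{8R/9}$ integrand, and a $2\chi F \partial_r F$ piece that is treated as above on the null portion. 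The main obstacle is that on the spacelike matching region one has $\partial_r = L - \underline{L}$, so a naive Cauchy--Schwarz on $\chi F \partial_r F$ there would bring in $|\underline{L} F|^2$, which the statement does not allow; this is avoided by integrating $\chi F \partial_r F$ by parts in $r$ on $[8R/9,R]$, transferring the radial derivative onto the zeroth-order weight $\chi'$, so that the spacelike contribution keeps only $|F|^2$ and a boundary term at $r=R$ which is then absorbed by the null-side averaging of the preceding paragraph.
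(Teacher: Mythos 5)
Your final step, where you try to avoid the $\underline{L}F$ derivative on the spacelike strip by integrating $\chi F\,\partial_r F$ by parts on $[8R/9,R]$, is circular: the product-rule expansion
\[
\chi(r')\vert F(r')\vert^2 \;=\; \int_{8R/9}^{r'}\chi'\vert F\vert^2\,dr'' + \int_{8R/9}^{r'}2\chi F\,\partial_r F\,dr''
\]
and the subsequent by-parts identity $\int_{8R/9}^{R}2\chi F\,\partial_r F\,dr'' = \vert F(R)\vert^2 - \int_{8R/9}^{R}\chi'\vert F\vert^2\,dr''$ are inverses of one another, so the two $\int\chi'\vert F\vert^2$ terms cancel and you are left with exactly $\vert F(v')\vert^2 = \vert F(R)\vert^2 + \int_R^{r'}2F\,\partial_r F\,dr''$: the cutoff accomplishes nothing, and the whole burden falls back on your null-side averaging for $\vert F(R)\vert^2$. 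That averaging requires a fixed-length interval inside $\Sigma_R(\tau,v)$, and such an interval need not exist: if $v$ is only slightly larger than $v(R,u)$, then $\Sigma_R(\tau,v)$ is a thin shell $\{R\le r\le R+\epsilon\}$ with $\epsilon$ arbitrarily small. In that regime a radial $f$ equal to $1$ for $r\ge R-\epsilon/2$ and $0$ for $r\le R-\epsilon$ makes both integrals on the right-hand side $O(\epsilon)$ while $\vert f(\tau,v',\cdot)\vert = 1$, so the argument cannot close (with a constant uniform in $v$) unless a radial derivative of $f$ on the strip $\{8R/9\le r\le R\}$ is retained.

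The correct moral is the opposite of what your last step attempts: bounding $\vert F(R)\vert^2$ is a one-dimensional trace estimate and must consume one radial tangent derivative of $f$ over a fixed-width neighbourhood of $\{r=R\}$ inside $\Sigma(\tau)$; since that neighbourhood must extend into the spacelike strip $\{8R/9\le r\le R\}$, where $\partial_r$ tangentially to $\Sigma(\tau)$ genuinely involves $\underline{L}$, an $\vert\underline{L}f\vert^2$ contribution there (together with the angular derivatives needed for the $S^2$ Sobolev step on the sphere $\{r=R\}$) is unavoidable and should simply be kept on the right-hand side rather than integrated away. This is harmless for how the proposition is used in \eqref{eq:pointwiseout}, since $\Ezerok$ controls $\underline{L}$-derivatives on $\{r\le R\}$, but your proof should carry the $\underline{L}f$ term explicitly.
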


\begin{proof}
	The proof follows from the fundamental theorem of calculus and the standard Sobolev inequality on $S^2$, as in the proof of Proposition \ref{prop:Sobolevin}.
\end{proof}

Proposition \ref{prop:Sobolevout} in particular implies that
\begin{equation} \label{eq:pointwiseout}
	\sup_{\mathcal{R}_{\mathrm{Corner}}(\tau_0,\tau_1,v)}
	\sum_{\vert \mathbf{k} \vert \leq k-4}
	\vert \Dk \psi \vert
	\lesssim
	\sqrt{ \Ezerostar_{\frac{8R}{9}} },
\end{equation}
where the region $\mathcal{R}_{\mathrm{Corner}}$ is defined in \eqref{eq:Rcorner}.  It is in the form \eqref{eq:pointwiseout} that Proposition \ref{prop:Sobolevout} will typically be used.

\subsubsection{The proof of Proposition \ref{prop:appendixBmainMink} for the nonlinearity $F = \partial_u \psi \partial_v \psi$}
\label{subsec:modelnonlin1}

Proposition \ref{prop:appendixBmainMink} for the nonlinearity
\[
	F = \partial_u \psi \partial_v \psi,
\] 
follows from the following proposition, whose proof is the subject of the present section.

\begin{proposition}[Nonlinear error estimates for $F = \partial_u \psi \partial_v \psi$ on Minkowski space] \label{prop:Minkowskinonlin1}
	Under the assumptions of Proposition \ref{prop:appendixBmainMink}, for each $\vert \mathbf{k} \vert \leq k$,
	\begin{multline*}
		\sum_{\vert \mathbf{k}_1 \vert + \vert \mathbf{k}_2 \vert \leq \vert \mathbf{k} \vert}
		\Bigg(
		\int_{\mathcal{R}(\tau_0,\tau_1,v)\cap \{ r \geq R \}} 
		\Big(
		\vert \partial_u \Dk \psi \vert
		+
		\vert \partial_v \Dk \psi \vert
		+
		\frac{1}{r} \vert \nablaslash \Dk \psi \vert
		+
		\frac{1}{r} \vert \Dk \psi \vert
		\Big)
		\cdot 
		\vert \Dkone \partial_u \psi \vert
		\cdot
		\vert \Dktwo \partial_v \psi \vert
		\\
		+
		\int_{\mathcal{R}(\tau_0,\tau_1,v)\cap \{ r \geq R \}}
		\vert \Dkone \partial_u \psi \vert^2
		\cdot
		\vert \Dktwo \partial_v \psi \vert^2
		\Bigg)
		\\
		\lesssim
		\sqrt{\Fstarll_R + \intEdeltall_R' \ }
		\Big(
		\Fstar_R
		+
		\Ezerostar_R
		+
		\int^*\accentset{\scalebox{.6}{\mbox{\tiny $(\delta-1)$}}}{\underaccent{\scalebox{.8}{\mbox{\tiny $k$}}}{\mathcal{E}}}_R'
		\
		\Big)
		+
		\sqrt{\Ezerostarll_{\frac{8R}{9}}
		+
		\int^*\accentset{\scalebox{.6}{\mbox{\tiny $(\delta-1)$}}}{\underaccent{\scalebox{.8}{\mbox{\tiny $\ll \mkern-6mu k$}}}{\mathcal{E}}}_R'}
		\
		\bigg(
		\Fstar_R + \intEdelta_R \ 
		\bigg)
		,
	\end{multline*}
	and
	\begin{multline*}
		\sum_{\vert \mathbf{k}_1 \vert + \vert \mathbf{k}_2 \vert \leq \vert \mathbf{k} \vert}
		\int_{\mathcal{R}(\tau_0,\tau_1,v)\cap \{ r \geq R \}} 
		r^{p-1} \vert \partial_v( r \Dk \psi) \vert
		\cdot 
		\vert \Dkone \partial_u \psi \vert
		\cdot
		\vert \Dktwo \partial_v \psi \vert
		\\
		\lesssim
		\sqrt{\int^*\accentset{\scalebox{.6}{\mbox{\tiny $(p-1)$}}}{\underaccent{\scalebox{.8}{\mbox{\tiny $\ll \mkern-6mu k$}}}{\mathcal{E}}}_R' \ }
		\sqrt{\Fstar_R}
		\sqrt{\intEpone_R' \ }
		+
		\sqrt{\Fstarll_R + \Ezerostarll_{\frac{8R}{9}} + \intEdeltall_R '\ }
		\Bigg(
		\Fstar_R
		+
		\Epstar_R
		+
		\intEpone_R'
		\Bigg)
		.
	\end{multline*}
\end{proposition}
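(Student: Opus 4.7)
\textbf{Proof proposal for Proposition \ref{prop:Minkowskinonlin1}.} The plan is to exploit the standard dichotomy for products of derivatives: in any product $\mathfrak{D}^{\mathbf{k}_1}\partial_u\psi \cdot \mathfrak{D}^{\mathbf{k}_2}\partial_v\psi \cdot \widetilde{\mathfrak{D}}^{\mathbf{k}_3}\psi$ with $|\mathbf{k}_1|+|\mathbf{k}_2|+|\mathbf{k}_3| \le 2k$ and $k\ge 7$, at most one of the multi-indices can be top order (i.e.~comparable to $k$), so the other two have $\le \ll\mkern-6mu k$ derivatives and can be placed in $L^\infty$ via the weighted Sobolev inequalities of Propositions \ref{prop:Sobolevin} and \ref{prop:Sobolevout}. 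The remaining top-order factor is estimated in $L^2$ via Cauchy--Schwarz against the appropriate energy. The gain of the null condition is encoded in the fact that the nonlinearity always carries at least one $\partial_u$ (a derivative tangential to the outgoing cones $\underline{C}_v$), which is controlled by the flux $\mathcal{F}$ and, via \eqref{eq:pointwise1}, obeys the pointwise bound $r|\mathfrak{D}^{\mathbf{k}_1}\partial_u\psi|\lesssim \sqrt{\Fstarll}$ in $\check{\mathcal{R}}_R$ whenever $|\mathbf{k}_1|\le \ll\mkern-6mu k$.

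First, I would split the integration region as in \eqref{eq:Rcheck}--\eqref{eq:Rcorner} into $\check{\mathcal{R}}_R(\tau_0,\tau_1,v)$ and the corner $\mathcal{R}_{\mathrm{Corner}}(\tau_0,\tau_1,v)$. On the corner, $r$ is comparable to $R$ by \eqref{eq:Rcornerbound}, so weights are irrelevant and the low-order factors can be placed in $L^\infty$ by \eqref{eq:pointwiseout}, yielding the $\sqrt{\Ezerostarll_{8R/9}}$ factor; the remaining two factors are controlled by Cauchy--Schwarz against the flux $\Fstar_R$ and the bulk $\intEdelta_R'$ or $\intEpone_R'$ (the latter uses $r\sim R$ to absorb powers of $r$). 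Away from the corner, on $\check{\mathcal{R}}_R$, the Sobolev bound \eqref{eq:pointwise1} gives the crucial $r^{-1}$ decay for low-order $\partial_u$ and angular factors, and this $r^{-1}$ is exactly what converts the ``bad'' $\partial_v$ or $r^{-1}\psi$ factor into an integrable bulk term: schematically, $\int r^{-2}\sqrt{\Fstarll}\cdot |\partial_v\mathfrak{D}^{\mathbf{k}}\psi|\cdot|\mathfrak{D}^{\mathbf{k}_2}\partial_v\psi|\lesssim \sqrt{\Fstarll}\cdot \intEdelta'$ after Cauchy--Schwarz, once one recognises that $r^{-2-\delta}(\partial_v\psi)^2$ is an integrand of $\Edeltaminusonek'$ and absorbs the remaining $r^{\delta}$ into the available weight.

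The main obstacle is handling the case where the $\partial_v$-factor $\mathfrak{D}^{\mathbf{k}_2}\partial_v\psi$ is itself the low-order factor and must be placed in $L^\infty$: unlike $\partial_u\psi$, $\partial_v\psi$ has no direct $r^{-1}$ decay available pointwise without invoking a weighted bulk. Here I would rewrite $\partial_v\mathfrak{D}^{\mathbf{k}_2}\psi = r^{-1}\partial_v(r\mathfrak{D}^{\mathbf{k}_2}\psi) - r^{-1}\mathfrak{D}^{\mathbf{k}_2}\psi$ and use \eqref{eq:dvpointwisein} (applied with an appropriate weight $s$ determined by $p$) to obtain a pointwise bound of the form $r^{p/2-1}|\partial_v(r\mathfrak{D}^{\mathbf{k}_2}\psi)|\lesssim \sqrt{\intEpone_R'}$ modulo lower-order corrections controlled by $\Fstarll$; this is the mechanism producing the term $\sqrt{\intEplesslessk_R'}\sqrt{\Fstar_R}\sqrt{\intEpone_R'}$ in the weighted estimate. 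The term $r^{-1}\mathfrak{D}^{\mathbf{k}_2}\psi$ is estimated by a Hardy-type manoeuvre along $\underline{C}_v$ combined with \eqref{eq:pointwiseout} near the corner.

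Finally, the quartic integral $\int |\mathfrak{D}^{\mathbf{k}_1}\partial_u\psi|^2 |\mathfrak{D}^{\mathbf{k}_2}\partial_v\psi|^2$ in the first estimate is handled by the same bookkeeping: for $k\ge 7$ at least one of $|\mathbf{k}_1|,|\mathbf{k}_2|$ is $\le \ll\mkern-6mu k$, so we place that factor in $L^\infty$ (picking up an $r^{-2}\Fstarll$ or $\Ezerostarll_{8R/9}$) and estimate the remaining squared factor in $L^2_{u,v,\omega}$ by $\Fstar$ or $\intEdelta_R'$, respectively. Summing over the decompositions of $\mathbf{k}_1+\mathbf{k}_2\le \mathbf{k}$ yields the asserted bounds, with the distinction between $\Fstarll+\intEdeltall'$ and the corner contribution $\Ezerostarll_{8R/9}$ tracking precisely which Sobolev inequality is invoked on which subregion.
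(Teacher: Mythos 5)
Your high-level strategy coincides with the paper's: split into corner and non-corner regions, observe that for $k\geq 7$ at most one of $\mathbf{k}_1,\mathbf{k}_2$ can be near top order, estimate the low-order factor in $L^\infty$ via the weighted Sobolev inequalities on incoming cones (Propositions~\ref{prop:Sobolevin} and~\ref{prop:Sobolevout}), and handle the remaining top-order factors by Cauchy--Schwarz against the appropriate energies. You also correctly identify that the crucial $r^{-1}$-decay comes from placing the low-order $\partial_u$-factor in $L^\infty$ via~\eqref{eq:pointwise1}, and that $\Ezerostarll_{8R/9}$ originates from the corner region via Proposition~\ref{prop:Sobolevout}.

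However, there is a genuine gap. You quote the pointwise bound $r\vert\mathfrak{D}^{\mathbf{k}_1}\partial_u\psi\vert\lesssim\sqrt{\Fstarll}$ on $\check{\mathcal{R}}_R$ as if it held uniformly, but~\eqref{eq:pointwise1} carries the constant $\big(1+(\tau_1-\tau_0)^{-1/2}\big)$, which blows up as $\tau_1-\tau_0\to 0$. Since the proposition must hold for slabs of arbitrary time-length (this is used in the continuation argument, cf.~Corollary~\ref{continuash}), the case $\tau_1-\tau_0<1$ cannot be dismissed. The paper treats it separately: rather than using the supremum directly, one applies Cauchy--Schwarz in $u$ first, uses that $\int_{\tau_0}^{T_1}du\leq\tau_1-\tau_0$ to compensate the divergent factor, and in exchange picks up the boundary energy $\Ezerostar_R$ instead of a bulk term. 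This is precisely the mechanism producing the $\Ezerostar_R$ contribution in the first parenthesis of the claimed right-hand side (distinct from the $\Ezerostarll_{8R/9}$ coming from the corner). Without the $\tau_1-\tau_0$ dichotomy your estimates diverge for short slabs, and there is no route to the claimed bound.

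A secondary, more cosmetic point: your claimed ``pointwise bound $r^{p/2-1}\vert\partial_v(r\mathfrak{D}^{\mathbf{k}_2}\psi)\vert\lesssim\sqrt{\intEpone_R'}$'' is not literal. What~\eqref{eq:dvpointwisein} gives is a pointwise bound in $(u,\vartheta,\varphi)$ for fixed $v'$ in terms of an integral over the single cone $\underline{C}_{v'}$; the spacetime bulk $\intEpone_R'$ only appears after integrating over $v'$ and applying Cauchy--Schwarz, as in the paper's treatment of the $r^{p-1}\partial_v(r\Dk\psi)$ term. The algebraic rewrite $\partial_v\psi=r^{-1}\partial_v(r\psi)-\tfrac12 r^{-1}\psi$ and the ``Hardy-type manoeuvre'' are unnecessary detours; one can apply~\eqref{eq:dvpointwisein} directly to $\partial_v\Dktwo\psi$ with the appropriate weight.
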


\begin{proof}
The separate terms are estimated individually.
We will use~\eqref{basicbootstrapinnullcond} to replace quartic bounds with cubic ones without further comment.
  In view of the fact \eqref{eq:DkdvdvDk}, the proof follows from the estimates \eqref{eq:cornerterms}--\eqref{eq:rperrornonlin1} below.

The notation of Section \ref{subsec:Sobolev} will be used throughout.  Recall, in particular, the regions $\check{\mathcal{R}}_R$ and $\mathcal{R}_{\mathrm{Corner}}$ defined in \eqref{eq:Rcheck} and \eqref{eq:Rcorner} respectively, the boundedness property \eqref{eq:Rcornerbound} of $r$ in $\mathcal{R}_{\mathrm{Corner}}$, and the $T_1$ and $T_1'$ notation defined in \eqref{eq:T1T1prime}.

The constant in the Sobolev inequality \eqref{eq:pointwise1} blows up as $\tau_1 - \tau_0 \to 0$.  Accordingly, the cases $\tau_1 - \tau_0 \geq 1$ and $\tau_1 - \tau_0 < 1$ are typically considered separately.  Similarly, the region $\mathcal{R}_{\mathrm{Corner}}$, near the corner $\{u=\tau_0\} \cap \{r=R\}$, is treated separately.

\noindent \textbf{Estimate in the corner region $\{ v'\leq v(R,\tau_0 + R)\}$:}
	For any $\vert \tilde{\mathbf{k}} \vert \leq k+1$,
	\begin{equation} \label{eq:cornerterms}
		\sum_{\vert \mathbf{k}_1 \vert + \vert \mathbf{k}_2 \vert \leq \vert \tilde{\mathbf{k}} \vert -1}
		\int_{\mathcal{R}_{\mathrm{Corner}}(\tau_0,\tau_1,v)} 
		\vert \mathfrak{D}^{\tilde{\mathbf{k}}} \psi \vert
		\cdot 
		\vert \partial_u \Dkone \psi \vert
		\cdot
		\vert \partial_v \Dktwo \psi \vert
		\lesssim
		\sqrt{\Ezerostarll_{\frac{8R}{9}}} \intEdelta_R'
		.
	\end{equation}
	Indeed, consider some $\vert \mathbf{k}_1 \vert + \vert \mathbf{k}_2 \vert \leq \vert \tilde{\mathbf{k}} \vert -1$.  Since $k \geq 7$, it follows that either $\vert \mathbf{k}_1 \vert \leq k-4$, or $\vert \mathbf{k}_2 \vert \leq k-4$.  In the former case,
	\begin{multline*}
		\int_{\mathcal{R}_{\mathrm{Corner}}(\tau_0,\tau_1,v)} 
		\vert \mathfrak{D}^{\tilde{\mathbf{k}}} \psi \vert
		\cdot 
		\vert \partial_u \Dkone \psi \vert
		\cdot
		\vert \partial_v \Dktwo \psi \vert
		\\
		\lesssim
		\sup_{\mathcal{R}_{\mathrm{Corner}(\tau_0,\tau_1,v)}} \vert \partial_u \Dkone \psi \vert
		\Big(
		\int_{\mathcal{R}_{\mathrm{Corner}}(\tau_0,\tau_1,v)}
		\vert \mathfrak{D}^{\tilde{\mathbf{k}}} \psi \vert
		\Big)^{\frac{1}{2}}
		\Big(
		\int_{\mathcal{R}_{\mathrm{Corner}}(\tau_0,\tau_1,v)}
		\vert \partial_v \Dktwo \psi \vert
		\Big)^{\frac{1}{2}}
		\lesssim
		\sqrt{\Ezerostarll_{\frac{8R}{9}}} \intEdelta_R',
	\end{multline*}
	by the Sobolev inequality \eqref{eq:pointwiseout}.  Similarly when $\vert \mathbf{k}_2 \vert \leq k-4$.
	
	It follows, in view of the boundedness property \eqref{eq:Rcornerbound}, that the estimates of the Proposition are trivial in the corner region.  The remainder of the proof thus concerns the region $\check{\mathcal{R}}_R(\tau_0,\tau_1,v)$.

\noindent \textbf{Estimate for $\partial_u \psi$ term:}
	First note that, for any $\vert \mathbf{k} \vert \leq k$,
	\begin{equation} \label{eq:partialuterm}
		\sum_{\vert \mathbf{k}_1 \vert + \vert \mathbf{k}_2 \vert \leq \vert \mathbf{k} \vert}
		\int_{\check{\mathcal{R}}_R(\tau_0,\tau_1,v)} 
		\vert \partial_u \Dk \psi \vert
		\cdot 
		\vert \partial_u \Dkone \psi \vert
		\cdot
		\vert \partial_v \Dktwo \psi \vert
		\lesssim
		\sqrt{\Fstarll}
		\Big(
		\Ezerostar
		+
		\int^*\accentset{\scalebox{.6}{\mbox{\tiny $(\delta-1)$}}}{\underaccent{\scalebox{.8}{\mbox{\tiny $k$}}}{\mathcal{E}}}'
		\Big)
		+
		\sqrt{\Ezerostarll
		+
		\int^*\accentset{\scalebox{.6}{\mbox{\tiny $(\delta-1)$}}}{\underaccent{\scalebox{.8}{\mbox{\tiny $\ll \mkern-6mu k$}}}{\mathcal{E}}}'}
		\
		\bigg(
		\Fstar + \intEdelta \ 
		\bigg)
		.
	\end{equation}
	Indeed, suppose $\vert \mathbf{k}_1 \vert + \vert \mathbf{k}_2 \vert \leq \vert \mathbf{k} \vert$.  Since $k \geq 7$, it follows that either $\vert \mathbf{k}_1 \vert \leq k-3$ or $\vert \mathbf{k}_2 \vert \leq k-3$.  If $\vert \mathbf{k}_1 \vert \leq k-3$, then, for $\tau_1 - \tau_0 \geq 1$,
	\begin{align*}
		&
		\int_{\check{\mathcal{R}}_R(\tau_0,\tau_1,v)} 
		\vert \partial_u \Dk \psi \vert
		\cdot 
		\vert \partial_u \Dkone \psi \vert
		\cdot
		\vert \partial_v \Dktwo \psi \vert
		\\
		&
		\qquad
		\lesssim
		\int_{v(R,\tau_0+R)}^v
		\sup_{u,\theta} r \vert \partial_u \Dkone \psi \vert
		\int_{\tau_0}^{T_1'} \int_{S^2}
		\big(
		r^{-1-\delta} \vert \partial_u \Dk \psi \vert^2
		+
		r^{-1+\delta} \vert \partial_v \Dktwo \psi \vert^2
		\big)
		r^2 d\omega du
		dv'
		\lesssim
		\sqrt{\Fstarll}
		\int^*\accentset{\scalebox{.6}{\mbox{\tiny $(\delta-1)$}}}{\underaccent{\scalebox{.8}{\mbox{\tiny $k$}}}{\mathcal{E}}}',
	\end{align*}
	by the Sobolev inequality on the incoming cones \eqref{eq:pointwise1}.  For $\tau_1 - \tau_0 < 1$,
	\begin{align*}
		&
		\int_{\check{\mathcal{R}}_R(\tau_0,\tau_1,v)} 
		\vert \partial_u \Dk \psi \vert
		\cdot 
		\vert \partial_u \Dkone \psi \vert
		\cdot
		\vert \partial_v \Dktwo \psi \vert
		\\
		&
		\lesssim
		\int_{v(R,\tau_0)}^v \sup_{\substack{\tau_0 \leq u \leq T_1' \\ \theta \in S^2}} \vert r \partial_u \Dkone \psi \vert
		\left( \int_{\tau_0}^{T_1'} \int_{S^2} r^{-1-\delta} \vert \partial_u \Dk \psi \vert r^2 d\omega du \right)^{\frac{1}{2}}
		\cdot 
		\left( \int_{\tau_0}^{T_1'} \int_{S^2} r^{-1+\delta} \vert \partial_v \Dktwo \psi \vert r^2 d\omega du \right)^{\frac{1}{2}}
		dv'
		\\
		&
		\lesssim
		\frac{\sqrt{\Fstarll}}{(\tau_1-\tau_0)^{\frac{1}{2}} }
		\left( \int_{v(R,\tau_0)}^v  \int_{\tau_0}^{T_1'} \int_{S^2} r^{-1-\delta} \vert \partial_u \Dk \psi \vert r^2 d\omega du dv' \right)^{\frac{1}{2}}
		\cdot 
		\left( \int_{\tau_0}^{T_1} \int_{v(R,u)}^v \int_{S^2} r^{-1+\delta} \vert \partial_v \Dktwo \psi \vert r^2 d\omega dv' du \right)^{\frac{1}{2}}
		\\
		&
		\lesssim
		\frac{\sqrt{\Fstarll}}{(\tau_1-\tau_0)^{\frac{1}{2}} }
		\sqrt{\intEdelta \ } \sqrt{\Ezerostar}
		\left( \int_{\tau_0}^{T_1} du \right)^{\frac{1}{2}}
		\lesssim
		\sqrt{\Ezerostar} \sqrt{\intEdelta \ } \sqrt{\Fstarll},
	\end{align*}
	by the Sobolev inequality on the incoming cones \eqref{eq:pointwise1}.
	
	If $\vert \mathbf{k}_2 \vert \leq k-2$, then
	\begin{align*}
		&
		\int_{\check{\mathcal{R}}_R(\tau_0,\tau_1,v)} 
		\vert \partial_u \Dk \psi \vert
		\cdot 
		\vert \partial_u \Dkone \psi \vert
		\cdot
		\vert \partial_v \Dktwo \psi \vert
		\\
		&
		\qquad \qquad
		\lesssim
		\int_{v(R,\tau_0+R)}^v
		\sup_{u,\theta} \vert r^{\frac{1+\delta}{2}} \partial_v \Dktwo \psi \vert
		\int_{\tau_0}^{T_1'} \int_{S^2}
		r^{-\frac{1+\delta}{2}}\vert \partial_u \Dk \psi \vert
		\vert \partial_u \Dkone \psi \vert
		r^2 d\omega du
		dv'
		\\
		&
		\qquad \qquad
		\lesssim
		\big(1+ (\tau_1-\tau_0)^{-\frac{1}{2}} \big) \sum_{\vert \tilde{\mathbf{k}} \vert \leq \vert \mathbf{k}_2\vert+3}
		\int_{v(R)}^v
		\Big(
		\int_{\tau_0}^{T_1'} \int_{S^2} r^{-1+\delta} \vert \partial_v \mathfrak{D}^{\tilde{\mathbf{k}}} \psi \vert^2
		r^2 d\omega du
		\Big)^{\frac{1}{2}}
		\\
		&
		\qquad \qquad \qquad \qquad \qquad
		\times
		\Big(
		\int_{\tau_0}^{T_1'} \int_{S^2}
		r^{-1-\delta}\vert \partial_u \Dk \psi \vert^2
		r^2 d\omega du
		\Big)^{\frac{1}{2}}
		\Big(
		\int_{\tau_0}^{T_1'} \int_{S^2}
		\vert \partial_u \Dkone \psi \vert^2
		r^2 d\omega du
		\Big)^{\frac{1}{2}}
		dv'
		\\
		&
		\qquad \qquad
		\lesssim
		\big(1+ (\tau_1-\tau_0)^{-\frac{1}{2}} \big) \sqrt{\Fstar} \sqrt{\intEdelta \ } \sum_{\vert \tilde{\mathbf{k}} \vert \leq \vert \mathbf{k}_2\vert+3}
		\Big(
		\int_{\tau_0}^{\tau_1}
		\int_{v(R,u)}^v
		\int_{S^2} r^{-1+\delta} \vert \partial_v \mathfrak{D}^{\tilde{\mathbf{k}}} \psi \vert^2
		r^2 d\omega du dv'
		\Big)^{\frac{1}{2}},
	\end{align*}
	by the Sobolev inequality on the incoming cones \eqref{eq:dvpointwisein}.  Hence, if $\tau_1-\tau_0 \geq 1$,
	\[
		\int_{\check{\mathcal{R}}_R(\tau_0,\tau_1,v)} 
		\vert \partial_u \Dk \psi \vert
		\cdot 
		\vert \partial_u \Dkone \psi \vert
		\cdot
		\vert \partial_v \Dktwo \psi \vert
		\lesssim
		\sqrt{\Fstar} \sqrt{\intEdelta \ }
		\sqrt{\int^*\accentset{\scalebox{.6}{\mbox{\tiny $(\delta-1)$}}}{\underaccent{\scalebox{.8}{\mbox{\tiny $\ll \mkern-6mu k$}}}{\mathcal{E}}}'},
	\]
	and, if $\tau_1-\tau_0 <1$,
	\[
		\int_{\check{\mathcal{R}}_R(\tau_0,\tau_1,v)} 
		\vert \partial_u \Dk \psi \vert
		\cdot 
		\vert \partial_u \Dkone \psi \vert
		\cdot
		\vert \partial_v \Dktwo \psi \vert
		\lesssim
		\sqrt{\Fstar} \sqrt{\intEdelta \ }
		\sqrt{\Ezerostarll}.
	\]

\noindent \textbf{Estimate for $\partial_v \psi$ term:}
	Next, for any $\vert \mathbf{k} \vert \leq k$,
	\begin{equation} \label{eq:partialvterm}
		\sum_{\vert \mathbf{k}_1 \vert + \vert \mathbf{k}_2 \vert \leq \vert \mathbf{k} \vert}
		\int_{\check{\mathcal{R}}_R(\tau_0,\tau_1,v)}  
		\vert \partial_v \Dk \psi \vert
		\cdot 
		\vert \partial_u \Dkone \psi \vert
		\cdot
		\vert \partial_v \Dktwo \psi \vert
		\lesssim
		\sqrt{\Fstarll + \intEdeltall' \ }
		\Big(
		\Fstar
		+
		\Ezerostar
		+
		\int^*\accentset{\scalebox{.6}{\mbox{\tiny $(\delta-1)$}}}{\underaccent{\scalebox{.8}{\mbox{\tiny $k$}}}{\mathcal{E}}}'
		\
		\Big)
		.
	\end{equation}
Indeed, if $\vert \mathbf{k}_1 \vert \leq k-2$, then
	\begin{align*}
		&
		\int_{\check{\mathcal{R}}_R(\tau_0,\tau_1,v)} 
		\vert \partial_v \Dk \psi \vert
		\cdot 
		\vert \partial_u \Dkone \psi \vert
		\cdot
		\vert \partial_v \Dktwo \psi \vert
		\\
		&
		\qquad
		\lesssim
		\sup_{u,v,\theta} \vert r \partial_u \Dkone \psi \vert
		\Big(
		\int_{\check{\mathcal{R}}_R(\tau_0,\tau_1,v)} 
		r^{-1}
		\vert \partial_v \Dk \psi \vert^2
		\Big)^{\frac{1}{2}}
		\Big(
		\int_{\check{\mathcal{R}}_R(\tau_0,\tau_1,v)} 
		r^{-1}
		\vert \partial_v \Dktwo \psi \vert^2
		\Big)^{\frac{1}{2}}
		\\
		&
		\qquad
		\lesssim
		\big(1+ (\tau_1-\tau_0)^{-\frac{1}{2}} \big)
		\sqrt{\Fstarll} \sqrt{\int^*\accentset{\scalebox{.6}{\mbox{\tiny $(\delta-1)$}}}{\underaccent{\scalebox{.8}{\mbox{\tiny $k$}}}{\mathcal{E}}}'}
		\Big(
		\int_{\check{\mathcal{R}}_R(\tau_0,\tau_1,v)} 
		r^{-1}
		\vert \partial_v \Dk \psi \vert^2
		\Big)^{\frac{1}{2}},
	\end{align*}
	by the Sobolev inequality on the incoming cones \eqref{eq:pointwise1}.  Hence, if $\tau_1-\tau_0 \geq 1$,
	\[
		\int_{\check{\mathcal{R}}_R(\tau_0,\tau_1,v)} 
		\vert \partial_v \Dk \psi \vert
		\cdot 
		\vert \partial_u \Dkone \psi \vert
		\cdot
		\vert \partial_v \Dktwo \psi \vert
		\lesssim
		\sqrt{\Fstarll} \int^*\accentset{\scalebox{.6}{\mbox{\tiny $(\delta-1)$}}}{\underaccent{\scalebox{.8}{\mbox{\tiny $k$}}}{\mathcal{E}}}',
	\]
	and, if $\tau_1-\tau_0 <1$,
	\[
		\int_{\check{\mathcal{R}}_R(\tau_0,\tau_1,v)} 
		\vert \partial_v \Dk \psi \vert
		\cdot 
		\vert \partial_u \Dkone \psi \vert
		\cdot
		\vert \partial_v \Dktwo \psi \vert
		\lesssim
		\sqrt{\Fstarll} \sqrt{\Ezerostar} \sqrt{\int^*\accentset{\scalebox{.6}{\mbox{\tiny $(\delta-1)$}}}{\underaccent{\scalebox{.8}{\mbox{\tiny $k$}}}{\mathcal{E}}}'}.
	\]

	If $\vert \mathbf{k}_2 \vert \leq \vert \mathbf{k} \vert-2$ then one similarly estimates,
	\begin{align*}
		&
		\int_{\check{\mathcal{R}}_R(\tau_0,\tau_1,v)} 
		\vert \partial_v \Dk \psi \vert
		\cdot 
		\vert \partial_u \Dkone \psi \vert
		\cdot
		\vert \partial_v \Dktwo \psi \vert
		\\
		&
		\qquad
		\lesssim
		\int_{v(R,\tau_0+R)}^v \sup_{\substack{\tau_0 \leq u \leq T_1' \\ \theta \in S^2}} \vert r^{\frac{1}{2}} \partial_v \Dktwo \psi \vert
		\left( \int_{\tau_0}^{T_1'} \int_{S^2} r^{-1} \vert \partial_v \Dk \psi \vert r^2 d\omega du \right)^{\frac{1}{2}}
		\cdot 
		\left( \int_{\tau_0}^{T_1'} \int_{S^2} \vert \partial_u \Dkone \psi \vert r^2 d\omega du \right)^{\frac{1}{2}}
		dv'
		\\
		&
		\qquad
		\lesssim
		\big(1+ (\tau_1-\tau_0)^{-\frac{1}{2}} \big)
		\sqrt{\Fstar}
		\\
		&
		\qquad \qquad \quad
		\times
		\sum_{\vert \tilde{\mathbf{k}} \vert \leq \vert \mathbf{k}_2\vert+3}
		\int_{v(R,\tau_0+R)}^v
		\left( \int_{\tau_0}^{T_1'} \int_{S^2} r^{-1-\delta} \vert \partial_v \mathfrak{D}^{\tilde{\mathbf{k}}} \psi \vert r^2 d\omega du \right)^{\frac{1}{2}}
		\left( \int_{\tau_0}^{T_1'} \int_{S^2} r^{-1+\delta} \vert \partial_v \Dk \psi \vert r^2 d\omega du \right)^{\frac{1}{2}}
		dv'
		\\
		&
		\qquad
		\lesssim
		\big(1+ (\tau_1-\tau_0)^{-\frac{1}{2}} \big)
		\sqrt{\Fstar} \sqrt{\intEdeltall'}
		\left( 
		\int_{v(R,\tau_0+R)}^v
		\int_{\tau_0}^{T_1'} \int_{S^2} r^{-1+\delta} \vert \partial_v \Dk \psi \vert r^2 d\omega du dv'
		\right)^{\frac{1}{2}},
	\end{align*}
	by the Sobolev inequality on the incoming cones \eqref{eq:dvpointwisein}.  Hence, if $\tau_1-\tau_0 \geq 1$,
	\[
		\int_{\check{\mathcal{R}}_R(\tau_0,\tau_1,v)} 
		\vert \partial_v \Dk \psi \vert
		\cdot 
		\vert \partial_u \Dkone \psi \vert
		\cdot
		\vert \partial_v \Dktwo \psi \vert
		\lesssim
		\sqrt{\Fstar} \sqrt{\intEdeltall'}
		\sqrt{\int^*\accentset{\scalebox{.6}{\mbox{\tiny $(\delta-1)$}}}{\underaccent{\scalebox{.8}{\mbox{\tiny $k$}}}{\mathcal{E}}}'}
		,
	\]
	and, if $\tau_1-\tau_0 <1$,
	\[
		\int_{\check{\mathcal{R}}_R(\tau_0,\tau_1,v)} 
		\vert \partial_v \Dk \psi \vert
		\cdot 
		\vert \partial_u \Dkone \psi \vert
		\cdot
		\vert \partial_v \Dktwo \psi \vert
		\lesssim
		\sqrt{\Fstar}
		\sqrt{\intEdeltall'}
		\sqrt{\Ezerostar}.
	\]

\noindent \textbf{Estimate for $r^{-1}\nablaslash \psi$ and $r^{-1} \psi$ terms:}
	For any $\vert \mathbf{k} \vert \leq k$,
	\begin{equation} \label{eq:nablaslashpsiterm}
		\sum_{\vert \mathbf{k}_1 \vert + \vert \mathbf{k}_2 \vert \leq \vert \mathbf{k} \vert}
		\int_{\check{\mathcal{R}}_R(\tau_0,\tau_1,v)} 
		r^{-1}\vert \nablaslash \Dk \psi \vert
		\cdot 
		\vert \partial_u \Dkone \psi \vert
		\cdot
		\vert \partial_v \Dktwo \psi \vert
		\lesssim
		\sqrt{\Fstarll + \intEdeltall' \ }
		\Big(
		\Fstar
		+
		\Ezerostar
		+
		\int^*\accentset{\scalebox{.6}{\mbox{\tiny $(\delta-1)$}}}{\underaccent{\scalebox{.8}{\mbox{\tiny $k$}}}{\mathcal{E}}}'
		\
		\Big)
		,
	\end{equation}
	and
	\begin{equation} \label{eq:psiterm}
		\sum_{\vert \mathbf{k}_1 \vert + \vert \mathbf{k}_2 \vert \leq \vert \mathbf{k} \vert}
		\int_{\check{\mathcal{R}}_R(\tau_0,\tau_1,v)} 
		r^{-1}\vert \Dk \psi \vert
		\cdot 
		\vert \partial_u \Dkone \psi \vert
		\cdot
		\vert \partial_u \Dktwo \psi \vert
		\lesssim
		\sqrt{\Fstarll + \intEdeltall' \ }
		\Big(
		\Fstar
		+
		\Ezerostar
		+
		\int^*\accentset{\scalebox{.6}{\mbox{\tiny $(\delta-1)$}}}{\underaccent{\scalebox{.8}{\mbox{\tiny $k$}}}{\mathcal{E}}}'
		\
		\Big)
		.
	\end{equation}
	Indeed, \eqref{eq:nablaslashpsiterm} and \eqref{eq:psiterm} follow as in the proof of \eqref{eq:partialvterm}, using now the fact that
	\[
		\int_{\check{\mathcal{R}}_R(\tau_0,\tau_1,v)} 
		r^{-1-\delta} \vert r^{-1} \nablaslash \psi \vert^2
		+
		r^{-1-\delta} \vert r^{-1} \psi \vert^2
		\leq
		\intEdelta'
		,
	\]
	and
	\[
		\int_{\check{\mathcal{R}}_R(\tau_0,\tau_1,v)} 
		r^{-1+\delta} \vert r^{-1} \nablaslash \psi \vert^2
		+
		r^{-1+\delta} \vert r^{-1} \psi \vert^2
		\leq
		\int^*\accentset{\scalebox{.6}{\mbox{\tiny $(\delta-1)$}}}{\underaccent{\scalebox{.8}{\mbox{\tiny $k$}}}{\mathcal{E}}}'
		,
	\qquad
		\int_{v(R,\tau)}^v \int_{S^2} \left( \vert r^{-1} \nablaslash \psi \vert^2 + \vert r^{-1} \psi \vert^2 \right) r^2 d\omega dv' \leq \Ezerostar.
	\]

\noindent \textbf{Estimate for quartic term:}
	For any $\vert \mathbf{k} \vert \leq k$,
	\begin{equation} \label{eq:quarticnonlin1}
		\sum_{\vert \mathbf{k}_1 \vert + \vert \mathbf{k}_2 \vert \leq \vert \mathbf{k} \vert}
		\int_{\check{\mathcal{R}}_R(\tau_0,\tau_1,v)} 
		\vert \partial_u \Dkone \psi \vert^2
		\cdot
		\vert \partial_v \Dktwo \psi \vert^2
		\lesssim
		\Big( \ \Ezerostar + \intEdelta' \ \Big)\Fstarll
		+
		\Big( \ \Ezerostarll + \intEdeltall' \ \Big)\Fstar
		.
	\end{equation}
	Indeed, supposing first that $\vert \mathbf{k}_1 \vert \leq k-3$,
	\begin{multline*}
		\int_{\check{\mathcal{R}}_R(\tau_0,\tau_1,v)}
		\vert \partial_u \Dkone \psi \vert^2
		\cdot
		\vert \partial_v \Dktwo \psi \vert^2
		\\
		\lesssim
		\sup_{u,\theta} \left( r \vert \partial_u \Dkone \psi \vert \right)^2
		\int_{\check{\mathcal{R}}_R(\tau_0,\tau_1,v)}
		r^{-2} \vert \partial_v \Dktwo \psi \vert^2
		\lesssim
		\Fstarll 
		\big(1+ (\tau_1-\tau_0)^{-\frac{1}{2}} \big)
		\int_{\check{\mathcal{R}}_R(\tau_0,\tau_1,v)}
		r^{-2} \vert \partial_v \Dktwo \psi \vert^2,
	\end{multline*}
	by the Sobolev inequality on the incoming cones \eqref{eq:pointwise1}.  Hence, if $\tau_1-\tau_0 \geq 1$,
	\[
		\int_{\check{\mathcal{R}}_R(\tau_0,\tau_1,v)}
		\vert \partial_u \Dkone \psi \vert^2
		\cdot
		\vert \partial_v \Dktwo \psi \vert^2
		\lesssim
		\Fstarll
		\intEdelta',
	\]
	and, if $\tau_1-\tau_0 <1$,
	\[
		\int_{\check{\mathcal{R}}_R(\tau_0,\tau_1,v)}
		\vert \partial_u \Dkone \psi \vert^2
		\cdot
		\vert \partial_v \Dktwo \psi \vert^2
		\lesssim
		\Fstarll
		\Ezerostar.
	\]
	
	Similarly, if $\vert \mathbf{k}_2 \vert \leq k-3$,
	\begin{multline*}
		\int_{\check{\mathcal{R}}_R(\tau_0,\tau_1,v)}
		\vert  \partial_u \Dkone \psi \vert^2
		\cdot
		\vert \partial_v \Dktwo \psi \vert^2
		\lesssim
		\int_{v(R,\tau_0+R)}^v
		\sup_{u,\theta}  \vert \partial_v \Dktwo \psi \vert^2
		\int_{\tau_0}^{\tau_1} \int_{S^2}
		\vert \partial_v \Dktwo \psi \vert^2
		r^2 d\omega du dv'
		\\
		\lesssim
		\Fstar
		\big(1+ (\tau_1-\tau_0)^{-\frac{1}{2}} \big)
		\sum_{\vert \widetilde{\textbf{k}}\vert \leq \vert \textbf{k}_2\vert +3}
		\int_{\check{\mathcal{R}}_R(\tau_0,\tau_1,v)}
		r^{-2} \vert \partial_v \mathfrak{D}^{\widetilde{\textbf{k}}} \psi \vert^2,
	\end{multline*}
	and so one similarly has
	\[
		\int_{\check{\mathcal{R}}_R(\tau_0,\tau_1,v)}
		\vert \partial_u \Dkone \psi \vert^2
		\cdot
		\vert \partial_v \Dktwo \psi \vert^2
		\lesssim
		\Big( \ \Ezerostarll + \intEdeltall' \ \Big)\Fstar.
	\]

\noindent \textbf{Estimate for $r^{p-1} \partial_v (r\psi)$ term:}
	Finally, for any $\vert \mathbf{k} \vert \leq k$,
	\begin{multline} \label{eq:rperrornonlin1}
		\sum_{\vert \mathbf{k}_1 \vert + \vert \mathbf{k}_2 \vert \leq \vert \mathbf{k} \vert}
		\int_{\check{\mathcal{R}}_R(\tau_0,\tau_1,v)}  
		r^{p-1} \vert \partial_v (r \Dk \psi) \vert
		\cdot 
		\vert \partial_u \Dkone  \psi \vert
		\cdot
		\vert \partial_v \Dktwo \psi \vert
		\\
		\lesssim
		\bigg(
		\Fstar
		+
		\Epstar
		+
		\intEpone'
		\bigg)
		\sqrt{\Fstarll + \Ezerostarll}
		+
		\sqrt{\Fstar}
		\sqrt{\intEpone' \ }
		\sqrt{\int^*\accentset{\scalebox{.6}{\mbox{\tiny $(p-1)$}}}{\underaccent{\scalebox{.8}{\mbox{\tiny $\ll \mkern-6mu k$}}}{\mathcal{E}}}' \ }
		.
	\end{multline}
	Indeed, consider first the case $\tau_1 - \tau_0 \geq 1$.  Suppose first that $\vert \mathbf{k}_1 \vert \leq k-3$.  
	Then, if $\tau_1-\tau_0 \geq 1$,
	\begin{align*}
		&
		\int_{\check{\mathcal{R}}_R(\tau_0,\tau_1,v)} 
		r^{p-1} \vert \partial_v (r \Dk \psi) \vert
		\cdot 
		\vert \partial_u \Dkone \psi \vert
		\cdot
		\vert \partial_v \Dktwo \psi \vert
		\\
		&
		\quad
		\lesssim
		\sup_{\check{\mathcal{R}}_R(\tau_0,\tau_1,v)} (r \vert \partial_u \Dkone \psi \vert)
		\Big(
		\int_{\mathcal{R}_R(\tau_0,\tau_1,v)}
		r^{p-3}
		\vert \partial_v (r \Dk \psi) \vert^2
		\Big)^{\frac{1}{2}}
		\Big(
		\int_{\mathcal{R}_R(\tau_0,\tau_1,v)}
		r^{p-1} 
		\vert \partial_v \Dktwo \psi \vert^2
		\Big)^{\frac{1}{2}}
		\lesssim
		\sqrt{\Fstarll}
		\intEpone ',
	\end{align*}
	by the Sobolev inequality on the incoming cones \eqref{eq:pointwise1}.  Similarly, if $\tau_1-\tau_0 < 1$,
	\begin{align*}
		&
		\int_{\check{\mathcal{R}}_R(\tau_0,\tau_1,v)} 
		r^{p-1} \vert \partial_v (r \Dk \psi) \vert
		\cdot 
		\vert \partial_u \Dkone \psi \vert
		\cdot
		\vert \partial_v \Dktwo \psi \vert
		\\
		&
		\quad
		\lesssim
		(\tau_1-\tau_0)^{\frac{1}{2}} \sqrt{\Fstarll}
		\Big(
		\int_{\mathcal{R}_R(\tau_0,\tau_1,v)}
		r^{p-2}
		\vert \partial_v (r \Dk \psi) \vert^2
		\Big)^{\frac{1}{2}}
		\Big(
		\int_{\mathcal{R}_R(\tau_0,\tau_1,v)}
		r^{p-2} 
		\vert \partial_v \Dktwo \psi \vert^2
		\Big)^{\frac{1}{2}}
		\lesssim
		\sqrt{\Fstarll}
		\sqrt{\Epstar}
		\sqrt{\Ezerostar}.
	\end{align*}
	
	Suppose now $\vert \mathbf{k}_2 \vert \leq k-3$.  Then, if $\tau_1-\tau_0 \geq 1$,
	\begin{align*}
		&
		\int_{\check{\mathcal{R}}_R(\tau_0,\tau_1,v)} 
		r^{p-1} \vert \partial_v (r \Dk \psi) \vert
		\cdot 
		\vert \partial_u \Dkone \psi \vert
		\cdot
		\vert \partial_v \Dktwo \psi \vert
		\\
		&
		\quad
		\lesssim
		\sqrt{\Fstar}
		\int_{v(R,\tau_0)}^v 
		\sup_{u,\theta} \big( r^{\frac{p+1}{2}} \vert \partial_v \Dktwo \psi \vert \big)
		\left( \int_{\tau_0}^{\tau_1} \int_{S^2}
		r^{p-3} \vert \partial_v (r \Dk \psi) \vert^2
		r^2 d\omega du \right)^{\frac{1}{2}} dv
		\\
		&
		\quad
		\lesssim
		\sqrt{\Fstar}
		\left(
		\int_{\mathcal{R}_R(\tau_0,\tau_1,v)} 
		r^{p-3} \vert \partial_v (r \Dk \psi) \vert^2
		r^2 d\omega du dv \right)^{\frac{1}{2}}
		\sum_{\vert \widetilde{\textbf{k}} \vert \leq \vert \mathbf{k}_2 \vert + 3}
		\left(
		\int_{\mathcal{R}_R(\tau_0,\tau_1,v)} 
		r^{p-1} \vert \partial_v \mathfrak{D}^{\widetilde{\mathbf{k}}} \psi \vert^2
		r^2 d\omega du dv \right)^{\frac{1}{2}}
		\\
		&
		\quad
		\lesssim
		\sqrt{\Fstar}
		\sqrt{\intEpone' \ }
		\sqrt{\int^*\accentset{\scalebox{.6}{\mbox{\tiny $(p-1)$}}}{\underaccent{\scalebox{.8}{\mbox{\tiny $\ll \mkern-6mu k$}}}{\mathcal{E}}}' \ }
		,
	\end{align*}
	using again the Sobolev inequality on the incoming cones \eqref{eq:dvpointwisein}.  Similarly, if $\tau_1 - \tau_0 < 1$,
	\begin{align*}
		&
		\int_{\check{\mathcal{R}}_R(\tau_0,\tau_1,v)} 
		r^{p-1} \vert \partial_v (r \Dk \psi) \vert
		\cdot 
		\vert \partial_u \Dkone \psi \vert
		\cdot
		\vert \partial_v \Dktwo \psi \vert
		\\
		&
		\quad
		\lesssim
		(\tau_1 - \tau_0)^{- \frac{1}{2}}\sqrt{\Fstar}
		\left(
		\int_{\mathcal{R}_R(\tau_0,\tau_1,v)} 
		r^{p-2} \vert \partial_v (r \Dk \psi) \vert^2
		\right)^{\frac{1}{2}}
		\sum_{\vert \widetilde{\textbf{k}} \vert \leq \vert \mathbf{k}_2 \vert + 3}
		\left(
		\int_{\mathcal{R}_R(\tau_0,\tau_1,v)} 
		r^{p-2} \vert \partial_v \mathfrak{D}^{\widetilde{\mathbf{k}}} \psi \vert^2
		\right)^{\frac{1}{2}}
		\\
		&
		\quad
		\lesssim
		\sqrt{\Fstar}
		\sqrt{\Epstar}
		\sqrt{\Ezerostarll}.
	\end{align*}

\end{proof}

\subsubsection{The proof of Proposition \ref{prop:appendixBmainMink} for the nonlinearity $F = \slashed{g}^{AB} \nablaslash_A \psi \nablaslash_B \psi$}
\label{subsec:modelnonlin2}

Consider now the nonlinearity
\[
	F = \slashed{g}^{AB} \nablaslash_A \psi \nablaslash_B \psi.
\]
In view of the property \eqref{eq:nablaMinkproperty}, the proof of Proposition \ref{prop:appendixBmainMink} follows from the following proposition.

\begin{proposition}[Nonlinear error estimates for $F = \slashed{g}^{AB} \nablaslash_A \psi \nablaslash_B \psi$ on Minkowski space] \label{prop:Minkowskinonlin2}
	Under the assumptions of Proposition \ref{prop:appendixBmainMink}, for each $\vert \mathbf{k} \vert \leq k$,
	\begin{multline*}
		\sum_{\vert \mathbf{k}_1 \vert + \vert \mathbf{k}_2 \vert \leq \vert \mathbf{k} \vert}
		\Bigg(
		\int_{\mathcal{R}(\tau_0,\tau_1,v)\cap \{ r \geq R \}} 
		\Big(
		\vert \partial_u \Dk \psi \vert
		+
		\vert \partial_v \Dk \psi \vert
		+
		\frac{1}{r} \vert \nablaslash \Dk \psi \vert
		+
		\frac{1}{r} \vert \Dk \psi \vert
		\Big)
		\cdot 
		\vert \nablaslash \Dkone \psi \vert
		\cdot
		\vert \nablaslash \Dktwo \psi \vert
		\\
		+
		\int_{\mathcal{R}(\tau_0,\tau_1,v)\cap \{ r \geq R \}}
		\vert \nablaslash \Dkone \psi \vert^2
		\cdot
		\vert \nablaslash \Dktwo \psi \vert^2
		\Bigg)
		\lesssim
		\sqrt{\Fstarll_R + \Ezerostarll_{\frac{8R}{9}}}
		\Big(
		\Ezerostar_R
		+
		\int^*\accentset{\scalebox{.6}{\mbox{\tiny $(\delta-1)$}}}{\underaccent{\scalebox{.8}{\mbox{\tiny $k$}}}{\mathcal{E}}}_R' \
		\Big)
	\end{multline*}
	and
	\[
		\sum_{\vert \mathbf{k}_1 \vert + \vert \mathbf{k}_2 \vert \leq \vert \mathbf{k} \vert}
		\int_{\mathcal{R}(\tau_0,\tau_1,v)\cap \{ r \geq R \}} 
		r^{p-1} \vert \partial_v( r \Dk \psi) \vert
		\cdot 
		\vert \nablaslash \Dkone \psi \vert
		\cdot
		\vert \nablaslash \Dktwo \psi \vert
		\lesssim
		\sqrt{\Fstarll_R + \Ezerostarll_{\frac{8R}{9}}}
		\Big(
		\Epstar_R
		+
		\intEpone_R' \
		\Big).
	\]
\end{proposition}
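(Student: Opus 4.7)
My plan is to mirror the proof strategy of Proposition~\ref{prop:Minkowskinonlin1}, where the overall scheme is to decompose into a corner region and its complement, then, for each term in the sums, split the derivatives as $|{\bf k}_1|+|{\bf k}_2|\le|{\bf k}|\le k$ with one multi-index of order $\le k-3$ so that the Sobolev inequalities of Section~\ref{subsec:Sobolev} apply, place the low-order factor in $L^\infty$, and use Cauchy--Schwarz in the remaining pair. The structural reason the present proposition is \emph{better} than Proposition~\ref{prop:Minkowskinonlin1} (no quartic term, no $\intEdelta$ or $\int^*\accentset{(p-1)}{\underaccent{\ll k}{\mathcal{E}}}'$ on the right-hand side) is that each factor $\nablaslash \Dk \psi$ already carries an intrinsic $r^{-1}$ from $\nablaslash = r^{-1} \Omega$, so the $L^\infty$ bound \eqref{eq:pointwise1} gives $|\nablaslash \mathfrak{D}^{\bf k}\psi| \lesssim r^{-1}(1+(\tau_1-\tau_0)^{-1/2})\sqrt{\Fstarll}$, and the remaining factor $|\nablaslash \Dktwo \psi|$ in the bulk is controlled directly by the non-degenerate part of $\accentset{(-1-\delta)}{\underaccent{k}{\mathcal{E}}}'$ or by $\accentset{(\delta-1)}{\underaccent{k}{\mathcal{E}}}'$, without needing to ``trade'' decay in $r$. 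The corner region $\mathcal{R}_{\mathrm{Corner}}$ is handled exactly as in \eqref{eq:cornerterms} using \eqref{eq:Rcornerbound} to bound $r$ and the outgoing Sobolev inequality \eqref{eq:pointwiseout} to put $\Dk\psi$ in $L^\infty$, yielding a term absorbed by $\sqrt{\Ezerostarll_{8R/9}}$ times the relevant bulk.

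For the first batch of estimates (with $|\partial_u \Dk\psi|$, $|\partial_v \Dk\psi|$, $r^{-1}|\nablaslash \Dk\psi|$, and $r^{-1}|\Dk\psi|$ on the left), the low/high split proceeds as follows. When $|{\bf k}_1|\le k-3$ I use \eqref{eq:pointwise1} to bound $\sup_{u,\theta} r\vert\nablaslash\Dkone\psi\vert \lesssim (1+(\tau_1-\tau_0)^{-1/2}) \sqrt{\Fstarll}$ and then estimate
\[
\int |\partial_u\Dk\psi|\,|\nablaslash\Dkone\psi|\,|\nablaslash\Dktwo\psi|
\lesssim (1+(\tau_1-\tau_0)^{-1/2})\sqrt{\Fstarll}\Bigl(\int r^{-1-\delta}|\partial_u\Dk\psi|^2\Bigr)^{1/2}\Bigl(\int r^{-1+\delta}|\nablaslash\Dktwo\psi|^2\Bigr)^{1/2},
\]
where the second and third factors are bounded by $\sqrt{\int^*\accentset{(-1-\delta)}{\underaccent{k}{\mathcal{E}}}'}$ and $\sqrt{\int^*\accentset{(\delta-1)}{\underaccent{k}{\mathcal{E}}}'}$, respectively, and the $(\tau_1-\tau_0)^{-1/2}$ nuisance is compensated in the small-$(\tau_1-\tau_0)$ regime by replacing one $r^{-1+\delta}$ bulk integral with an $\Ezerostar$ flux integral after integrating in the $du$ direction over an interval of length $\lesssim \tau_1-\tau_0$. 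The terms $|\partial_v\Dk\psi|$, $r^{-1}|\nablaslash\Dk\psi|$, $r^{-1}|\Dk\psi|$ are handled identically, all bounded by $\int r^{-1-\delta}|\partial_v\Dk\psi|^2 + r^{-3-\delta}|\Dk\psi|^2 \le \Ezerominusoneminusdelta'$ after integration in $\tau$. The case $|{\bf k}_2|\le k-3$ uses the alternative form \eqref{eq:dvpointwisein} of the incoming Sobolev inequality applied to $\nablaslash\Dktwo\psi$, absorbing the resulting $\sum_{|\tilde{\bf k}|\le|{\bf k}_2|+3}$ sum into $\ll\mkern-6mu k$ energies. For the quartic term, the same split gives $\int |\nablaslash\Dkone\psi|^2|\nablaslash\Dktwo\psi|^2 \lesssim \Fstarll\int r^{-2}|\nablaslash\Dktwo\psi|^2 \lesssim \Fstarll\,\Ezerostar$, so no $r^{-1-\delta}$-weighted bulk appears.

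For the $r^{p-1}$-weighted estimate, the low/high split with $|{\bf k}_1|\le k-3$ gives
\[
\int r^{p-1}|\partial_v(r\Dk\psi)|\,|\nablaslash\Dkone\psi|\,|\nablaslash\Dktwo\psi|
\lesssim \sqrt{\Fstarll}\Bigl(\int r^{p-3}|\partial_v(r\Dk\psi)|^2\Bigr)^{1/2}\Bigl(\int r^{p-1}|\nablaslash\Dktwo\psi|^2\Bigr)^{1/2},
\]
so both factors are controlled by $\sqrt{\intEpone'}$, cleanly giving the claimed bound without needing the anomalous $\int^*\accentset{(p-1)}{\underaccent{\ll k}{\mathcal{E}}}'$ factor that was unavoidable in Proposition~\ref{prop:Minkowskinonlin1}. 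The case $|{\bf k}_2|\le k-3$ uses \eqref{eq:dvpointwisein} to bound $\sup r^{(p+1)/2}|\nablaslash\Dktwo\psi|$ by a sum of fluxes $\sqrt{\Fstar}$ at orders up to $|{\bf k}_2|+3\le \ll\mkern-6mu k$ (absorbing the gain into $\sqrt{\Fstarll}$), after which Cauchy--Schwarz with $r^{p-1}|\partial_v(r\Dk\psi)|^2$ and $r^{p-1}|\nablaslash\Dkone\psi|^2$ completes the bound.

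The main obstacle, as in Proposition~\ref{prop:Minkowskinonlin1}, is the bookkeeping across the three regimes $\tau_1-\tau_0\ge 1$, $\tau_1-\tau_0<1$, and the corner: in the small-$(\tau_1-\tau_0)$ regime the inverse factor in the Sobolev inequality \eqref{eq:pointwise1} is compensated by replacing one spacetime bulk $\int r^{-1\pm\delta}\cdots$ by a flux integral on an interval of $u$-length $\lesssim\tau_1-\tau_0$, which is where $\Ezerostar$ rather than the bulk appears. One must check that in the present proposition this replacement produces exactly the terms on the right-hand side (in particular that $\Epstar$ — as opposed to $\intEpone'$ — appears only via such a factor-swap and never via an additional $\ll\mkern-6mu k$ loss), and that the corner contribution is absorbed cleanly into $\sqrt{\Ezerostarll_{8R/9}}$; the improved structure of $\nablaslash$ factors makes all these verifications strictly easier than in the $\partial_u\psi\partial_v\psi$ case, so no new ideas are required.
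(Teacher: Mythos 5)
Your plan is essentially the same as the paper's: mirror the proof of Proposition~\ref{prop:Minkowskinonlin1}, treating the corner region via \eqref{eq:pointwiseout} and the remainder via a low/high multi-index split combined with the incoming Sobolev inequality~\eqref{eq:pointwise1}. Two aspects of the proposal as written are imprecise, though, and one of them is a genuine (if easily correctable) factual error.

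\textbf{The quartic term.}  You claim $\int_{\check{\mathcal{R}}} r^{-2}\vert\nablaslash\Dktwo\psi\vert^2 \lesssim \Ezerostar$ and conclude that ``no $r^{-1-\delta}$-weighted bulk appears.'' This is false for $\tau_1-\tau_0\ge 1$: expressing the spacetime integral by the coarea formula as $\int_{\tau_0}^{\tau_1}\int_{\Sigma(\tau)}r^{-2}\vert\nablaslash\Dktwo\psi\vert^2\,d\tau$, the inner integral is $\lesssim R^{-2}\Ezero(\tau)$, so the double integral carries an uncontrolled factor of $\tau_1-\tau_0$.  When $\tau_1-\tau_0\ge1$ one must instead weight the integrand by $r^{-1-\delta}$ and bound it by $\intEdelta'$; this is exactly why the paper's estimate \eqref{eq:nonlin2quarticterm} has $\Fstarll(\Ezerostar+\intEdelta\,)$ on the right, not $\Fstarll\Ezerostar$.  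The overall proposition still holds because its right-hand side contains $\int^*\accentset{\scalebox{.6}{\mbox{\tiny $(\delta-1)$}}}{\underaccent{\scalebox{.8}{\mbox{\tiny $k$}}}{\mathcal{E}}}'_R$, which dominates $\intEdelta'_R$, but your claimed structural improvement is not real: the $r^{-1-\delta}$ bulk does appear.

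\textbf{The high/low split and symmetry.}  You treat the cases $\vert\mathbf{k}_1\vert\le k-3$ and $\vert\mathbf{k}_2\vert\le k-3$ separately, invoking \eqref{eq:dvpointwisein} in the second case.  Two remarks.  First, \eqref{eq:dvpointwisein} is a statement specifically about $\partial_v\Dk\psi$, not about $\nablaslash\Dktwo\psi$; the correct reference for an $L^\infty$ bound on $r\vert\nablaslash\Dktwo\psi\vert$ is directly \eqref{eq:pointwise1} (or Proposition~\ref{prop:Sobolevin}).  Second, and more importantly, the nonlinearity $\slashed{g}^{AB}\nablaslash_A\psi\nablaslash_B\psi$ is symmetric in $(\mathbf{k}_1,\mathbf{k}_2)$, so one may assume $\vert\mathbf{k}_1\vert\le k-3$ without loss of generality. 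This is the key simplification the paper exploits (``Assume again, without loss of generality, that $\vert\mathbf{k}_1\vert\le k-3$'' in the proof of \eqref{eq:nonlin2rpterm}); it eliminates the entire alternative case and the associated $\ll\mkern-6mu k$ weighted bulk terms that appear in Proposition~\ref{prop:Minkowskinonlin1}.  This, rather than the intrinsic $r^{-1}$ of $\nablaslash$ alone, is the mechanism by which the $\int^*\accentset{\scalebox{.6}{\mbox{\tiny $(p-1)$}}}{\underaccent{\scalebox{.8}{\mbox{\tiny $\ll \mkern-6mu k$}}}{\mathcal{E}}}'$ term on the right-hand side of the $r^{p-1}$ estimate disappears here.

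With these two corrections incorporated, the rest of your outline (Cauchy--Schwarz with $r^{\pm(1\pm\delta)}$ weights, the flux/bulk swap to compensate the $(\tau_1-\tau_0)^{-1/2}$ factor in the small-slab regime, and the treatment of the $r^{p-1}$-weighted term) is correct and matches the paper.
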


\begin{proof}

The separate terms are estimated individually, and the proof follows from the estimates \eqref{eq:nonlin2cornerterms}--\eqref{eq:nonlin2rpterm} below.

The notation of Section \ref{subsec:Sobolev} will again be used throughout.  As in the proof of Proposition \ref{prop:Minkowskinonlin1}, the cases $\tau_1 - \tau_0 \geq 1$ and $\tau_1 - \tau_0 < 1$ are typically considered separately.  Similarly, the region near the corner $\{u=\tau_0\} \cap \{r=R\}$ is treated separately.

\noindent \textbf{Estimate in the corner region $\{ v'\leq v(R,\tau_0 + R)\}$:}
	For any $\vert \tilde{\mathbf{k}} \vert \leq k+1$,
	\begin{equation} \label{eq:nonlin2cornerterms}
		\sum_{\vert \mathbf{k}_1 \vert + \vert \mathbf{k}_2 \vert \leq \vert \tilde{\mathbf{k}} \vert -1}
		\int_{\mathcal{R}_{\mathrm{Corner}}(\tau_0,\tau_1,v)} 
		\vert \mathfrak{D}^{\tilde{\mathbf{k}}} \psi \vert
		\cdot 
		\vert \nablaslash \Dkone \psi \vert
		\cdot
		\vert \nablaslash \Dktwo \psi \vert
		\lesssim
		\sqrt{\Ezerostarll_{\frac{8R}{9}}} \intEdelta_R'
		.
	\end{equation}
	Indeed, \eqref{eq:nonlin2cornerterms} follows exactly as in the proof of Proposition \ref{prop:Minkowskinonlin1}, using the Sobolev inequality \eqref{eq:pointwiseout}.

	The remainder of the proof concerns the region $\check{\mathcal{R}}_R(\tau_0,\tau_1,v)$.

\noindent \textbf{Estimate for $\partial_u \psi$ term:}
	First note that, for any $\vert \mathbf{k} \vert \leq k$,
	\begin{equation} \label{eq:nonlin2duterm}
		\sum_{\vert \mathbf{k}_1 \vert + \vert \mathbf{k}_2 \vert \leq \vert \mathbf{k} \vert}
		\int_{\check{\mathcal{R}}(\tau_0,\tau_1,v)} 
		\vert \partial_u \Dk \psi \vert
		\cdot 
		\vert \nablaslash \Dkone \psi \vert
		\cdot
		\vert \nablaslash \Dktwo \psi \vert
		\lesssim
		\sqrt{\Fstarll}
		\int^*\accentset{\scalebox{.6}{\mbox{\tiny $(\delta-1)$}}}{\underaccent{\scalebox{.8}{\mbox{\tiny $k$}}}{\mathcal{E}}}'.
	\end{equation}
	Indeed, for $\tau_1 - \tau_0 \geq 1$, assuming without loss of generality that $\vert \mathbf{k}_1 \vert \leq k-3$,
	\begin{align*}
		&
		\int_{\check{\mathcal{R}}(\tau_0,\tau_1,v)}
		\vert \partial_u \Dk \psi \vert
		\cdot 
		\vert \nablaslash \Dkone \psi \vert
		\cdot
		\vert \nablaslash \Dktwo \psi \vert
		\\
		&
		\qquad \qquad
		\lesssim
		\sup_{u,v,\theta} r \vert \nablaslash \Dkone \psi \vert
		\Big(
		\int_{\check{\mathcal{R}}(\tau_0,\tau_1,v)}
		r^{-1-\delta} \vert \partial_u \Dk \psi \vert^2
		\Big)^{\frac{1}{2}}
		\Big(
		\int_{\check{\mathcal{R}}(\tau_0,\tau_1,v)}
		r^{-1+\delta} \vert \nablaslash \Dktwo \psi \vert^2
		\Big)^{\frac{1}{2}}
		\lesssim
		\sqrt{\Fstarll}
		\int^*\accentset{\scalebox{.6}{\mbox{\tiny $(\delta-1)$}}}{\underaccent{\scalebox{.8}{\mbox{\tiny $k$}}}{\mathcal{E}}}',
	\end{align*}
	by the Sobolev inequality on the incoming cones \eqref{eq:pointwise1}, since $p \geq \delta$.  For $\tau_1 - \tau_0 < 1$,
	\[
		\int_{\check{\mathcal{R}}(\tau_0,\tau_1,v)}
		\vert \partial_u \Dk \psi \vert
		\cdot 
		\vert \nablaslash \Dkone \psi \vert
		\cdot
		\vert \nablaslash \Dktwo \psi \vert
		\lesssim
		\sqrt{\Fstarll}
		\sqrt{\Ezerostar}
		\sqrt{\Fstar}.
	\]

\noindent \textbf{Estimates for $\partial_v \psi$, $r^{-1}\nablaslash \psi$, and $r^{-1} \psi$ terms:}
	Next, for any $\vert \mathbf{k} \vert \leq k$,
	\begin{equation} \label{eq:nonlin2dvterm}
		\sum_{\vert \mathbf{k}_1 \vert + \vert \mathbf{k}_2 \vert \leq \vert \mathbf{k} \vert}
		\int_{\check{\mathcal{R}}(\tau_0,\tau_1,v)} 
		\big(
		\vert \partial_v \Dk \psi \vert
		+
		r^{-1} \vert \nablaslash \Dk \psi \vert
		+
		r^{-1} \vert \Dk \psi \vert
		\big)
		\cdot 
		\vert \nablaslash \Dkone \psi \vert
		\cdot
		\vert \nablaslash \Dktwo \psi \vert
		\lesssim
		\sqrt{\Fstarll}
		\Big(
		\Ezerostar
		+
		\int^*\accentset{\scalebox{.6}{\mbox{\tiny $(\delta-1)$}}}{\underaccent{\scalebox{.8}{\mbox{\tiny $k$}}}{\mathcal{E}}}' \
		\Big).
	\end{equation}
	Indeed, for $\tau_1-\tau_0 \geq 1$, assuming again without loss of generality that $\vert \mathbf{k}_1 \vert \leq k-3$,
	\begin{align*}
		&
		\int_{\check{\mathcal{R}}(\tau_0,\tau_1,v)} 
		\vert \partial_v \Dk \psi \vert
		\cdot 
		\vert \nablaslash \Dkone \psi \vert
		\cdot
		\vert \nablaslash \Dktwo \psi \vert
		\\
		&
		\qquad
		\lesssim
		\sup_{u,v,\theta} r \vert \nablaslash \Dkone \psi \vert
		\Big(
		\int_{\check{\mathcal{R}}(\tau_0,\tau_1,v)}
		r^{-1}
		\vert \partial_v \Dk \psi \vert^2
		\Big)^{\frac{1}{2}}
		\Big(
		\int_{\check{\mathcal{R}}(\tau_0,\tau_1,v)}
		r^{-1} \vert \nablaslash \Dktwo \psi \vert^2
		\Big)^{\frac{1}{2}}
		\lesssim
		\sqrt{\Fstarll}
		\int^*\accentset{\scalebox{.6}{\mbox{\tiny $(\delta-1)$}}}{\underaccent{\scalebox{.8}{\mbox{\tiny $k$}}}{\mathcal{E}}}',
	\end{align*}
	and similarly, for $\tau_1 - \tau_0 <1$,
	\begin{align*}
		\int_{\check{\mathcal{R}}(\tau_0,\tau_1,v)} 
		\vert \partial_v \Dk \psi \vert
		\cdot 
		\vert \nablaslash \Dkone \psi \vert
		\cdot
		\vert \nablaslash \Dktwo \psi \vert
		\lesssim
		\sqrt{\Fstarll} \ 
		\Ezerostar,
	\end{align*}
	using again the Sobolev inequality on the incoming cones \eqref{eq:pointwise1}.  Similarly for the $r^{-1} \psi$ and $r^{-1} \nablaslash \psi$ terms, using the fact that
	\[
		\int_{\check{\mathcal{R}}(\tau_0,\tau_1,v)}
		\big(
		r^{-1}
		\vert \nablaslash \Dk \psi \vert^2
		+
		r^{-3} \vert \Dk \psi \vert^2
		\big)
		\leq
		\int^*\accentset{\scalebox{.6}{\mbox{\tiny $(\delta-1)$}}}{\underaccent{\scalebox{.8}{\mbox{\tiny $k$}}}{\mathcal{E}}}'
		,
		\qquad
		\int_{v(R,u)}^v \int_{S^2} \left( \vert r^{-1} \nablaslash \psi \vert^2 + \vert r^{-1} \psi \vert^2 \right) r^2 d\omega dv'
		\leq
		\Ezerostar.
	\]

\noindent \textbf{Estimate for quartic term:}
	For any $\vert \mathbf{k} \vert \leq k$,
	\begin{equation} \label{eq:nonlin2quarticterm}
		\sum_{\vert \mathbf{k}_1 \vert + \vert \mathbf{k}_2 \vert \leq \vert \mathbf{k} \vert}
		\int_{\check{\mathcal{R}}(\tau_0,\tau_1,v)}
		\vert \nablaslash \Dkone \psi \vert^2
		\cdot
		\vert \nablaslash \Dktwo \psi \vert^2
		\lesssim
		\Fstarll
		\Big(
		\Ezerostar
		+
		\intEdelta \
		\Big)
		,
	\end{equation}
	which follows as an easy consequence of the Sobolev inequality \eqref{eq:pointwise1}.

\noindent \textbf{Estimate for $r^{p-1} \partial_v (r\psi)$ term:}
	Finally, for any $\vert \mathbf{k} \vert \leq k$,
	\begin{equation} \label{eq:nonlin2rpterm}
		\sum_{\vert \mathbf{k}_1 \vert + \vert \mathbf{k}_2 \vert \leq \vert \mathbf{k} \vert}
		\int_{\check{\mathcal{R}}(\tau_0,\tau_1,v)}
		r^{p-1} \vert \partial_v (r \Dk \psi) \vert
		\cdot 
		\vert \nablaslash \Dkone \psi \vert
		\cdot
		\vert \nablaslash \Dktwo \psi \vert
		\lesssim
		\sqrt{\Fstarll}
		\Big(
		\Epstar
		+
		\intEpone' \
		\Big).
	\end{equation}
	Assume again, without loss of generality, that $\vert \mathbf{k}_1 \vert \leq k-3$.  Then, using \eqref{eq:pointwise1}, for $\tau_1 - \tau_0 \geq 1$,
	\begin{align*}
		&
		\int_{\check{\mathcal{R}}(\tau_0,\tau_1,v)}
		r^{p-1} \vert \partial_v (r \Dk \psi) \vert
		\cdot 
		\vert \nablaslash \Dkone \psi \vert
		\cdot
		\vert \nablaslash \Dktwo \psi \vert
		\\
		&
		\qquad \qquad
		\lesssim
		\sup_{\check{\mathcal{R}}(\tau_0,\tau_1,v)} r \vert \nablaslash \Dkone \psi \vert
		\Big(
		\int_{\check{\mathcal{R}}(\tau_0,\tau_1,v)}
		r^{p-3}
		\vert \partial_v (r \Dk \psi) \vert^2
		\Big)^{\frac{1}{2}}
		\Big(
		\int_{\check{\mathcal{R}}(\tau_0,\tau_1,v)}
		r^{p-1}
		\vert \nablaslash \Dktwo \psi \vert^2
		\Big)^{\frac{1}{2}}
		\lesssim
		\sqrt{ \Fstarll}
		\intEpone'.
	\end{align*}
	Similarly, for $\tau_1 - \tau_0 <1$,
	\begin{align*}
		&
		\int_{\check{\mathcal{R}}(\tau_0,\tau_1,v)}
		r^{p-1} \vert \partial_v (r \Dk \psi) \vert
		\cdot 
		\vert \nablaslash \Dkone \psi \vert
		\cdot
		\vert \nablaslash \Dktwo \psi \vert
		\\
		&
		\qquad \qquad
		\lesssim
		(\tau_1 - \tau_0)^{-\frac{1}{2}}
		\sqrt{ \Fstarll}
		\Big(
		\int_{\check{\mathcal{R}}(\tau_0,\tau_1,v)}
		r^{p-2}
		\vert \partial_v (r \Dk \psi) \vert^2
		\Big)^{\frac{1}{2}}
		\Big(
		\int_{\check{\mathcal{R}}(\tau_0,\tau_1,v)}
		r^{p-2}
		\vert \nablaslash \Dktwo \psi \vert^2
		\Big)^{\frac{1}{2}}
		\lesssim
		\sqrt{ \Fstarll}
		\sqrt{\Epstar} \sqrt{\Ezerostar}.
	\end{align*}

\end{proof}

\subsection{More general nonlinearities and Kerr}
\label{section:Kerrcase}

For more general nonlinearities on Minkowski space of the form
\[
	F
	=
	N^{\mu \nu} (\psi,x) \partial_{\mu} \psi \partial_{\nu} \psi
	, 
\]
with $N$ satisfying a null condition of the form
\begin{equation} \label{assumponNzero}
	\sup_{\vert \xi \vert \leq 1, r\ge R} \sum_{\vert \textbf{k} \vert + s \leq k} \sum_{A,B=1,2}
	\Dk \partial_{\xi}^{s} 
	\big( 
	r N^{uu}
	+
	N^{uv}
	+
	N^{vv}
	+
	r N^{Au}
	+
	r N^{Av}
	+
	r^2 N^{AB}
	\big) (\xi,x)
	\lesssim
	D_k
	,
\end{equation}
where the $A$, $B$ indices refer to coordinates  $\vartheta,  \varphi$ (recall the $(u,v,\vartheta,\varphi)$ coordinate system of Section \ref{Minkowexam}), and $D_k$ are
arbitrary constants, the proof of 
Proposition~\ref{prop:appendixbmain} follows exactly as above, where the bound~\eqref{basicbootstrapinnullcond} is now also used
to obtain pointwise bounds on $\psi$ through an easy weighted Sobolev estimate which allow as to invoke~\eqref{assumponNzero}.

\begin{remark}
Note that, besides the nonlinearities of the classical null condition~\cite{KlNull}, 
 our class~\eqref{assumponNzero} includes for instance also nonlinearities which for large $r$ take 
the form $F= (\sin r) \partial_u\psi \partial_v\psi$. It does not, however, include  even more
general examples like $F=(\sin x) \partial_u\psi \partial_v\psi$ considered recently in~\cite{andersonzbarsky}, due to the presence of
the weighted vector fields $\Omega_i$ in our set of
commutation vector fields $\Dk$.
\end{remark}

More generally, if $g_0$ is a metric with asymptotics suitably close to those of Minkowski space, with extra terms suitably small, then the above proof again applies.

We will consider explicitly the case that $g_0$ is the Kerr metric.
Define double null $(u,v,\theta^1, \theta^2)$ coordinates on Kerr, when $0 < \vert a \vert < M$, in terms of the Boyer--Lindquist coordinates $(t,r,\vartheta,\phi)$ of
Appendix~\ref{thekerrmanifold}, by 
\[
	u = t-r_*, \qquad v = t+r_*,
\]
with $\theta^1$ defined implicitly by the relation
\begin{equation} \label{eq:theta1}
	F(\theta^1,r, \vartheta) = 0,
	\quad
	\text{where}
	\quad
	F(\theta^1,r, \vartheta) = \int_{\theta^1}^{\vartheta} \frac{1}{a \sqrt{\sin^2 \theta^1 - \sin^2 \theta'}} d\theta'
	+
	\int_r^{\infty} \frac{1}{\sqrt{((r')^2+a^2)^2 - a^2 \sin^2 \theta^1 \Delta'}} dr',
\end{equation}
and $\theta^2$ defined by
\[
	\theta^2 = \phi + h(r_*,\theta^1),
	\quad
	\text{where $h$ satisfies}
	\quad
	\partial_{r_*} h(r_*,\theta^1) = \frac{2Mar}{\Sigma \mathrm{R}^2},
	\quad
	\lim_{r_*\to \infty} h(r_*,\theta^1) = 0.
\]
See for instance~\cite{Pretorius, DafLuk1}.
Here
\begin{equation} \label{eq:rstar}
	r_*(r,\vartheta)
	=
	\int\frac{r^2+a^2}{\Delta} dr
	+
	\int_r^{\infty} \frac{(r')^2+a^2 - \sqrt{((r')^2+a^2)^2 - a^2 \sin^2 \theta^1 \Delta'}}{\Delta'} dr'
	+
	\int^{\vartheta}_{\theta^1} a \sqrt{\sin^2 \theta^1 - \sin^2 \theta'} d\theta',
\end{equation}
(note that the expression \eqref{eq:rstar} is independent of $\theta^1$ in view of \eqref{eq:theta1}), where $\int\frac{r^2+a^2}{\Delta} dr$ is a function satisfying
\[
	\partial_r \int\frac{r^2+a^2}{\Delta} dr = \frac{r^2+a^2}{\Delta},
\]
and
\[
	\Delta = r^2+a^2-2Mr,
	\qquad
	\Delta' = (r')^2+a^2-2Mr',
	\qquad
	\Sigma = r^2 + a^2\cos^2 \vartheta,
	\qquad
	\mathrm{R}^2 = r^2 + a^2 + \frac{2Ma^2 r \sin^2 \vartheta}{\Sigma}.
\]
Note in particular that $r_*$ is distinct from $r^{\star}$ defined in Section \ref{thekerrmanifold}.

In these double null coordinates the Kerr metric takes the form
\[
	g_{a,M}
	=
	- \Omega^2 du dv + \slashed{g}_{AB}(d\theta^A - b^Adv)(d\theta^B - b^B dv),
\]
where
\[
	\Omega^2 = \frac{\Delta}{\mathrm{R}^2},
	\qquad
	b^1 = 0, \qquad b^2 = \frac{4Mar}{\Sigma \mathrm{R}^2},
\]
and
\[
	\slashed{g}_{11}
	=
	\frac{a^2 (\partial_{\theta^1}F)^2 (\sin^2 \theta^1 - \sin^2 \vartheta)((r^2+a^2)^2 - a^2 \sin^2 \theta^1 \Delta)}{\mathrm{R}^2}
	+
	(\partial_{\theta^1} h)^2 \mathrm{R}^2 \sin^2 \vartheta,
\]
\[
	\slashed{g}_{22} = \mathrm{R}^2 \sin^2 \vartheta,
	\qquad
	\slashed{g}_{12} = \slashed{g}_{21} = - \mathrm{R}^2 \sin^2 \vartheta \partial_{\theta^1} h.
\]
In particular, one has
\[
	L = \partial_v + b^A \partial_{\theta^A},
	\qquad
	\underline{L} = \partial_u.
\]
Moreover $T = \partial_u + \partial_v$ and $\Omega_i = {\Omega_i}(\theta^1,\theta^2)^A \partial_{\theta^A}$ for $i=1,2,3$, for functions ${\Omega_i}(\theta^1,\theta^2)^A$, so that, in particular,
\[
	[\partial_u,T] = [\partial_u,\Omega_i] = [\partial_v,T] = [\partial_v,\Omega_i] = 0.
\]
The function $b^2 = b^2(u,v,\theta^1,\theta^2)$ satisfies, for any $k \geq 0$,
\[
	\sum_{k_1+k_2+\vert \textbf{k}_3 \vert \leq k}
	\vert \partial_u^{k_1} (r\partial_v)^{k_2} \Omega^{\textbf{k}_3} b^2 \vert
	\leq
	\frac{C_k}{r^3}
	,
\]
for large $r$, and hence the commutation relations
\begin{align} \label{eq:DkdvdvDkKerr1}
	\vert \Dk L \psi \vert
	&
	\lesssim
	\sum_{\vert \widetilde{\bf k} \vert \leq \vert \bf k \vert}
	\vert L\mathfrak{D}^{\widetilde{\bf k}} \psi \vert
	+
	\frac{1}{r^3} \sum_{i=1}^3 \sum_{\vert \widetilde{\bf k} \vert \leq \vert \bf k \vert - 1}
	\vert \Omega_i \mathfrak{D}^{\widetilde{\bf k}} \psi \vert
	,
	\\
	\vert \Dk \underline{L} \psi \vert
	&
	\lesssim
	\sum_{\vert \widetilde{\bf k} \vert \leq \vert \bf k \vert}
	\vert \underline{L} \mathfrak{D}^{\widetilde{\bf k}} \psi \vert
	+
	\frac{1}{r^3} \sum_{i=1}^3 \sum_{\vert \widetilde{\bf k} \vert \leq \vert \bf k \vert - 1}
	\vert \Omega_i \mathfrak{D}^{\widetilde{\bf k}} \psi \vert
	,
	\label{eq:DkdvdvDkKerr2}
\end{align}
hold for large $r$.  Moreover \eqref{eq:nablaMinkproperty} remains true.  The proof of Proposition \ref{prop:appendixbmain} in this case then follows just as in
 the  case where $g_0$ was the Minkowski metric, using now \eqref{eq:DkdvdvDkKerr1} and \eqref{eq:DkdvdvDkKerr2} in place of \eqref{eq:DkdvdvDk}.

We note finally that the condition~\eqref{assumponNzero} applied to Kerr indeed includes in particular the nonlinearities considered in~\cite{MR3082240}.

\section{The inhomogeneous estimate~\eqref{inhomogeneous}  on Kerr
in the full subextremal case~$|a|<M$}
\label{theinhomo}

The main theorem of~\cite{partiii} only states~\eqref{inhomogeneous}  for the homogeneous case, i.e.~the case $F=0$. 
In this section, we explicitly address the issue of 
the inhomogeneous estimate~\eqref{inhomogeneous}. We have the following:

\begin{theorem}
\label{trueforKerr}
The inhomogeneous estimate~\eqref{inhomogeneous}  holds on Kerr
in the full subextremal case $|a|<M$.
\end{theorem}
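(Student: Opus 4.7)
The plan is to derive the inhomogeneous estimate~\eqref{inhomogeneous} from the homogeneous ILED estimate of~\cite{partiii} by a combination of Duhamel's principle (for the bulk and flux terms on the left-hand side) and a direct multiplier identity (for the cross term on the right). By Remark~\ref{weakerassumption}, it suffices to establish the weaker version in which the middle term on the right of~\eqref{inhomogeneous} is replaced by~\eqref{replacedterm}.

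First I would split $\psi = \psi_H + \psi_I$, where $\psi_H$ solves $\Box_{g_{a,M}}\psi_H = 0$ with the same initial data as $\psi$ on $\Sigma(\tau_0)$, and $\psi_I$ solves $\Box_{g_{a,M}}\psi_I = F$ with vanishing data. The estimate for $\psi_H$ is immediate from the homogeneous version of~\eqref{inhomogeneous} proved in~\cite{partiii}. For $\psi_I$, I would use the Duhamel representation $\psi_I(\tau',\cdot) = \int_{\tau_0}^{\tau'}\phi^{(s)}(\tau',\cdot)\,ds$, where $\phi^{(s)}$ denotes the homogeneous solution whose Cauchy data on $\Sigma(s)$ is sourced by $F(s,\cdot)$. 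Applying the homogeneous estimate to each $\phi^{(s)}$ and combining via Minkowski's inequality in $L^2_s$, all flux and bulk contributions of $\psi_I$ are controlled by $\int_{\mathcal{R}(\tau_0,\tau)}F^2$, which accounts for the last term on the right-hand side of~\eqref{inhomogeneous}.

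The cross term $\int|(V_0^\mu\partial_\mu\psi + w_0\psi)F|$ is not naturally recovered by this splitting; instead it arises from applying the multiplier identity underlying the homogeneous ILED estimate directly to $\psi$. Specifically, the proof of~\cite{partiii} produces (implicitly) a current $J^{V_0,w_0}$ whose divergence identity is of the form $\nabla^\mu J_\mu^{V_0,w_0}[\psi] = K^{V_0,w_0}[\psi] + (V_0^\mu\partial_\mu\psi+w_0\psi)\,\Box_{g_{a,M}}\psi$ and whose integrated version produces coercive boundary and bulk terms matching the left-hand side of~\eqref{inhomogeneous}. Applied to $\Box_{g_{a,M}}\psi = F$ and integrated on $\mathcal{R}(\tau_0,\tau)$, this divergence identity delivers exactly the cross term on the right-hand side. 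Combining this with the Duhamel-based bound of the previous paragraph and splitting the cross term into its $r\le R$ and $r\ge R$ portions via Cauchy--Schwarz in $r\le R$ then produces the weaker form with~\eqref{replacedterm}, which is sufficient by Remark~\ref{weakerassumption}.

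The main obstacle will be executing the Duhamel step rigorously given that $\Sigma(\tau_0)$ is only partly spacelike (null for $r\ge R$): the classical Duhamel formula requires fully spacelike Cauchy hypersurfaces. I would handle this by deforming the null portion of $\Sigma(s)$ into a slightly spacelike hypersurface, applying the standard Duhamel formula on the deformed foliation, and passing to the limit using domain of dependence and continuous dependence on the data. A secondary technical point is extracting the precise form of $V_0$ and $w_0$ from the frequency-localised multiplier construction of~\cite{partiii} and verifying the boundedness conditions~\eqref{heresomeboundssat}; this is a bookkeeping task implicit in~\cite{partiii} that does not require new ideas.
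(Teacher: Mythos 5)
Your proposal has two genuine gaps, each of which would cause it to fail.

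\textbf{The Duhamel step loses a factor of $(\tau-\tau_0)$.} Writing $\psi_I(\tau',\cdot)=\int_{\tau_0}^{\tau'}\phi^{(s)}(\tau',\cdot)\,ds$ and applying the homogeneous ILED estimate to each $\phi^{(s)}$ controls $\mathcal{E}[\phi^{(s)}](\tau)$ and $\int\mathrm{bulk}[\phi^{(s)}]$ by $\|F(s,\cdot)\|^2_{L^2(\Sigma(s))}$. But to pass to $\psi_I$, Minkowski's inequality in $L^2_s$ gives $\mathcal{E}[\psi_I](\tau)^{1/2}\lesssim\int_{\tau_0}^\tau\|F(s,\cdot)\|_{L^2}\,ds$, which upon Cauchy--Schwarz in $s$ becomes $\mathcal{E}[\psi_I](\tau)\lesssim(\tau-\tau_0)\int_{\mathcal{R}(\tau_0,\tau)}F^2$. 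The same loss occurs for the bulk term. The entire point of~\eqref{inhomogeneous} --- and what makes it usable in the dyadic iteration over slabs of increasing length $L=2^i$ --- is that it is time-translation invariant with no $\tau$-dependent constants. Your derivation yields a weaker estimate with growth in $\tau-\tau_0$, which is not what is claimed.

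\textbf{The multiplier step assumes a physical-space coercive current that does not exist on Kerr.} You posit that the proof of~\cite{partiii} ``produces (implicitly) a current $J^{V_0,w_0}$ whose divergence identity \ldots produces coercive boundary and bulk terms matching the left-hand side of~\eqref{inhomogeneous}.'' This is precisely what is known \emph{not} to be true for Kerr with $a\neq 0$: the result of~\cite{Alinhacpert} (quoted in Section~\ref{caseiidiscussion}) shows that no physical-space quadruple $(V,w,q,\varpi)$ can give rise to the required coercive bulk estimate. Indeed this is the central structural obstacle that motivates the entire paper's ``case (iii)'' framework. The multipliers used in~\cite{partiii} are \emph{frequency-localised}: distinct currents $(f,h,y,\chi)_{m\ell}(\omega)$ for each separated mode, with no single physical-space avatar. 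The inhomogeneous terms contributed by $F$ are therefore frequency-dependent, and a direct Plancherel-to-physical-space reduction is \emph{not} automatic. The paper's actual proof does a careful case analysis: in the trapped frequency regime $\mathcal{G}_\natural$ the multiplier coefficients become frequency-independent (yielding, via Plancherel and after inserting a compensator, the $V_0^\mu\partial_\mu\psi$ cross term with $V_0=\partial_t$); in the complement $\mathcal{G}_\natural^c$ the bulk is non-degenerate and Cauchy--Schwarz absorbs the frequency-weighted error at the price of $\int F^2$; and in $r\geq R$ the frequency-dependence of the coefficients is exponentially decaying in $r$ and can again be absorbed. The ``bookkeeping task'' you defer is in fact the heart of the matter. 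Also note that the paper ultimately proves only the weaker form with~\eqref{replacedterm} in place of the middle term for general $|a|<M$, because the energy flux on $R_-\leq r\leq R_+$ (Step 3 of the paper's proof, relying on the orthogonal decomposition of Proposition~13.1 of~\cite{partiii}) cannot easily be expressed via a single physical-space cross term.
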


Though the inhomogeneous estimate~\eqref{inhomogeneous} for general $F$, precisely as stated, 
indeed can be shown to hold in the full subextremal case $|a|<M$,
it is a little bit delicate
to produce the physical space expression corresponding to the middle term
on the  right hand side of~\eqref{inhomogeneous}, except in the case $|a|\ll M$.
(This difficulty is entirely due to the presence of the term $\Ezero(\tau)$  on the left hand side.)
The weaker version of the estimate in Remark~\ref{basicblackboxestimatesec}, on the other hand, 
which is all that is actually used here,
can essentially be immediately read off  from the proof of the main
result of~\cite{partiii}. 
Since we have no real use for the stronger statement, we prefer here to
give the details of how to directly obtain 
the weaker statement (though in the case $|a|\ll M$, we note that
our argument indeed reviews the stronger statement).

\begin{proof}[Proof (of the weaker version of Remark~\ref{weakerassumption})]
 We will obtain the estimate in three steps.
The reader should refer to~\cite{partiii} to follow along.

\paragraph{Step 1.}
We first obtain~\eqref{inhomogeneous} (in its original form) for general $F$,
but without the 
future boundary terms on the left hand side, and where
the regions are all restricted to $r\ge r_+$.

For this,  note that one can assume without loss of generality that $F$ is compactly supported
in spacetime, and thus 
it is clearly sufficient to prove Proposition~9.1.1 of~\cite{partiii} now for solutions of~\eqref{linearhomogeq}, with the extra inhomogeneous
terms of~\eqref{inhomogeneous} now on the right hand side. 

We note that the cutoffs used in the proof of Proposition 9.1.1~already gave rise 
to inhomogeneous
terms 
\begin{equation}
\label{freqlocterm}
\mathcal{T}:= \int_{-\infty}^\infty \sum_{m\ell}\left(\int_{-\infty}^\infty H\cdot (f, h, y, \chi)\cdot (u, u')\right) d\omega \, dr^*
\end{equation}
(see e.g.~(165) of~\cite{partiii}). 
An inhomogeneity $F$ 
on the right hand side of~\eqref{linearhomogeq} contributes
an additional term of the form~\eqref{freqlocterm} in the proof of this proposition,
where $H$ is replaced by the Carter separated coefficients of $F$ according to formula (43)
of~\cite{partiii}. Let us denote this term as~$\mathcal{T}_F$.
One easily sees, however,  that, in view of the inclusion of our inhomogeneity $F$, 
the problem can be reduced to the case where one has trivial initial data,
in which case one may work directly with $\psi$
without  applying a cutoff. Thus we may assume  now  that we \emph{only} have $\mathcal{T}_F$ on the right hand side, and $H$ below will denote the Carter separated coefficients of $F$. We must now 
 essentially repeat the steps of the proof of the proposition in order
to produce the desired right hand side.

For this, we first immediately partition  $\mathcal{T}_F:= \mathcal{T}_{F}^1+
\mathcal{T}_{F}^2+
\mathcal{T}_{F}^3$, where the summands correspond to the integral of~\eqref{freqlocterm} over the region $R_-\le r\le R_+$ (i.e.~$R_-^*\le r^*\le R_+^*$, etc.),
$\{ r\le R_-\} \cup \{R_+\le r\le R\}$ and $r\ge R$, respectively, 
 where $R_\pm$, $R_{\infty}$ are as in~\cite{partiii}
and we require
$R\ge R_\infty$.

Considering $ \mathcal{T}^1_{F}$, we further partition
$ \mathcal{T}^1_{F}= \mathcal{T}^1_{F,  \omega} +
\mathcal{T}^1_{F, \slashed\omega}$
where $\mathcal{T}^1_{F,  \omega} $ is the contribution of the
currents in $H\cdot (f, h, y, \chi)\cdot (u, u')$ that multiply $\omega$,
and $\mathcal{T}^1_{F, \slashed\omega}$ denotes the rest. 
We note (see formula (102) of~\cite{partiii}) that the coefficients of the currents in the sum defining 
$\mathcal{T}^1_{F, \omega}$ 
are then frequency independent for  all ``trapped'' frequencies 
$(\omega, m, \ell)\in \mathcal{G}_\natural$ (the expression is simply $E\omega {\rm Im} (H\bar u)$).
Thus, by adding a suitable compensating term $\mathcal{T}^1_{\rm comp}$ 
supported
only in the untrapped frequencies, then 
 $\mathcal{T}^1_{F, \omega}+\mathcal{T}^1_{\rm comp}$
is a sum and integral over
the precise expression $E\omega {\rm Im} (H\bar u)$,
and thus this  integral and 
sum may be rewritten via the Plancherel formulae on page 820 of~\cite{partiii}
as a spacetime integral precisely
like the middle term of~\eqref{inhomogeneous}, restricted to $r\le R$, with $V_0=\partial_t$.

For the remaining terms $\mathcal{T}^1_{F, \slashed\omega}-\mathcal{T}^1_{\rm comp}$, 
note that these are supported entirely in the \emph{non-trapped} frequencies, i.e.~in the complement frequency set $\mathcal{G}_\natural^c$.
Recall now that in estimate (165), we in fact manifestly control a stronger left hand side than
what is written, namely we may substitute $\zeta$ with $(1-\frac{r_{\rm trap}}r)^2$
(which was the original form of the expression before (163) of~\cite{partiii} was invoked). 
In particular, since $r_{\rm trap}=0$  for all non-trapped frequencies, 
we control all derivatives without degeneration
for frequencies in $\mathcal{G}_{\natural}^c$.
One may 
apply Cauchy--Schwarz to the expression $H\cdot (f, h, y, \chi)\cdot (u, u')$
and to the integrand summands of $\mathcal{T}^1_{\rm comp}$, absorbing the
resulting terms proportional to $|u|^2$, together  with any frequency coefficients $\omega^2$ and $\Lambda$, and proportional to $|u'|^2$, into the bulk controlled
in view of our nondegenerate bulk, at the expense of an additional
term $|H|^2$ on the right hand side. Upon application of  the Plancherel formulae of
page 820, this  
produces a bulk term $\int F^2$ of the form already  present on the right hand side of~\eqref{inhomogeneous}.   The term $\mathcal{T}^2_{F}$ may be treated exactly in the
same way, as here the controlled bulk term is nondegenerate for all frequencies, since the integral
is supported away from trapping.

This leaves the term $\mathcal{T}^3_F$.
One notes that for $r\ge R$,  the coefficients of the currents appearing in~\eqref{freqlocterm}  
are frequency independent
\emph{up to multipliers which decay exponentially in $r$}. Again, by adding and subtracting a compensating term as above, one may thus estimate
the additional term here by the middle
term on the right hand side of~\eqref{replacedterm} (now restricted to $r\ge R$), 
again with a suitable vector field $V_0$, after applying 
Plancherel, up to an additional term exponentially decaying term,
which may be estimated by
Cauchy--Schwarz, absorbing the term containing derivatives of $\psi$ into the bulk on the left hand side just as above, and producing
an extra bulk term $\int F^2$ of the form already  present on the right hand side of~\eqref{inhomogeneous}. 

Finally, we note that to complete the proof
of Proposition~9.1.1 in the inhomogeneous case, it remains to apply the analogue of
Proposition~9.7.1 (this is where the quantitative version of mode stability~\cite{SRT} is appealed to), 
which  also however produces an additional inhomogeneous term
when applied to~\eqref{linearhomogeq}.
Examining the original~\cite{SRT}, in particular the proof of Lemma 3.3 of that paper, one sees that
this extra term may be bounded by the $\int F^2$  term on the right hand side of~\eqref{inhomogeneous}. This completes the proof of Step 1.

\paragraph{Step 2.}
We now apply a sufficiently small multiple of the red-shift multiplier and of a cutoff version of the $J_\mu^{T}$-current to obtain
the terms $\sup_{v: \tau\le \tau(v)}\Fzero(v,\tau_0,\tau)
+ \Ezero_{\mathcal{S}}(\tau)$ on the left hand side of~\eqref{inhomogeneous}, and the part of the spacetime integral supported in $r_0\le r\le r_+$, exploiting that one may always  absorb the resulting error terms in the bulk already controlled in Step 1. We add these identities to the estimate
obtained above.
As these are physical space identities, the inhomogeneity $F$ manifestly generates a term of the form of the middle term in~\eqref{inhomogeneous} which may be combined
with the previous such terms to define a new vector field $V_0$.
One obtains in this way also a bound on the restriction of the integral defining $\Ezero(\tau)$ to $\{r\le R_-\}\cup\{r\ge R_+\}$.   

Note that in the case of very slow rotation $|a|\ll M$, one may chose the
support of the gradient of the cutoff applied to  $J_\mu^T$ in the identity above to in fact be in the region $r\le R_-$,
whereas  $J_\mu^T$ is moreover coercive in the region $r\ge R_-$. Thus, the above argument in fact
yields control of the full $\Ezero(\tau)$. This would then complete the proof. 
In this case, note that we thus obtain the estimate~\eqref{inhomogeneous} in its original from.

\paragraph{Step 3.}
In the general subextremal case, however, $J^T_\mu$ is not everywhere coercive in the region 
$R_-\le r\le R_+$, and 
it remains to control the contribution of the finite region $R_-\le r\le R_+$ to the energy integral of
the missing term~$\Ezero(\tau)$ from the left hand side of~\eqref{inhomogeneous}.

 To obtain
 the missing part of the energy flux we must  thus repeat the proof of Proposition 13.1 of~\cite{partiii},
allowing now an inhomogeneous term $F$.  Recall that in this proof one considered a
cutoff version $\tilde\psi$, decomposed $\tilde\psi=\tilde\psi_1+\ldots +\tilde\psi_N$ orthogonally
using Carter's separation, for some large $N$, and 
applying a distinct current $J^{V_i}_\mu$ to each summand $\psi_i$. 
 
 Due to the presence of cutoffs in the proof, one can again read off
the additional terms arising from the new inhomogeneity.
Since  outside
the region $R_-\le r\le R_+$,,
all errors from    can be absorbed as in the previous steps,  while the  cutoffs vanish inside the region $R_-\le r\le R_+$, all inhomogeneous
terms in $R_-\le r\le R_+$ now arise from $F$, and we may use the notation $\tilde{F}_i$
of the proof of Proposition 13.1 to denote precisely these new terms. The resulting inhomogeneous
terms that must be controlled  are then $ \sum_i \int_{R_-\le r\le R_+}V_i\tilde\psi_i \cdot \tilde{F}_i$.
By orthogonality, we may indeed manifestly
bound this term by the expression in the first term of~\eqref{replacedterm}.

Since  all terms of the form of the middle term on the right hand side of~\eqref{inhomogeneous} 
supported in $r\le R$ (which we generated in the previous two steps) can
of course manifestly be bounded by the  first term of~\eqref{replacedterm},   
we obtain finally that~\eqref{inhomogeneous} indeed holds with~\eqref{replacedterm}
replacing the middle term on the right hand side.
\end{proof}

We note that obtaining~\eqref{inhomogeneous} in the $|a|<M$ case
with its middle term in its original form
requires a slightly more delicate
decomposition of $\tilde\psi$ than the one used in Proposition 13.1 of~\cite{partiii}. 
Again, since this is not used, we spare the reader the 
details of this argument.

\bibliographystyle{DHRalpha}
{\footnotesize
\bibliography{quasilinearrefs}}

\begin{thebibliography}{AMPW17}

\bibitem[AAG16]{Angelopoulos:2016moe}
Y.~Angelopoulos, S.~Aretakis, and D.~Gajic.
\newblock {A vector field approach to almost-sharp decay for the wave equation
  on spherically symmetric, stationary spacetimes}.
\newblock {\em arXiv:1612.01565, preprint}, 2016.

\bibitem[AAG20]{nonlinearextreme}
Y.~Angelopoulos, S.~Aretakis, and D.~Gajic.
\newblock Nonlinear scalar perturbations of extremal {R}eissner-{N}ordstr\"{o}m
  spacetimes.
\newblock {\em Ann. PDE}, 6(2):Paper No. 12, 124, 2020.

\bibitem[AAG21]{angelopoulos2021latetime}
Y.~Angelopoulos, S.~Aretakis, and D.~Gajic.
\newblock Late-time tails and mode coupling of linear waves on {K}err
  spacetimes.
\newblock {\em arXiv:2102.11884, preprint}, 2021.

\bibitem[AB15]{AndBlue}
L.~Andersson and P.~Blue.
\newblock Hidden symmetries and decay for the wave equation on the {K}err
  spacetime.
\newblock {\em Ann. of Math. (2)}, 182(3):787--853, 2015.

\bibitem[ABBM19]{Andersson:2019dwi}
L.~Andersson, T.~Baeckdahl, P.~Blue, and S.~Ma.
\newblock {Stability for linearized gravity on the Kerr spacetime}.
\newblock {\em arXiv:1903.03859, preprint}, 2019.

\bibitem[Ali09]{Alinhacpert}
S.~Alinhac.
\newblock Energy multipliers for perturbations of the {S}chwarzschild metric.
\newblock {\em Comm. Math. Phys.}, 288(1):199--224, 2009.

\bibitem[AMPW17]{doi:10.1063/1.4991656}
L.~Andersson, S.~Ma, C.~Paganini, and B.~F. Whiting.
\newblock Mode stability on the real axis.
\newblock {\em J. Math. Phys.}, 58(7):072501, 2017.

\bibitem[Ape22]{apetroaie2022instability}
M.~A. Apetroaie.
\newblock Instability of gravitational and electromagnetic perturbations of
  extremal {R}eissner-{N}ordstr\"om spacetime.
\newblock {\em arXiv:2211.09182, preprint}, 2022.

\bibitem[Are11]{aretakisstability}
S.~Aretakis.
\newblock Stability and instability of extreme {R}eissner-{N}ordstr\"{o}m black
  hole spacetimes for linear scalar perturbations {I}.
\newblock {\em Comm. Math. Phys.}, 307(1):17--63, 2011.

\bibitem[Are12]{Aretakis}
S.~Aretakis.
\newblock Decay of axisymmetric solutions of the wave equation on extreme
  {K}err backgrounds.
\newblock {\em J. Funct. Anal.}, 263(9):2770--2831, 2012.

\bibitem[AZ22]{andersonzbarsky}
J.~Anderson and S.~Zbarsky.
\newblock Global stability for nonlinear wave equations satisfying a
  generalized null condition.
\newblock {\em arXiv:2212.01184, preprint}, 2022.

\bibitem[Ben22]{benomio2022new}
G.~Benomio.
\newblock A new gauge for gravitational perturbations of {K}err spacetimes
  {II}: The linear stability of {S}chwarzschild revisited.
\newblock {\em arXiv:2211.00616, preprint}, 2022.

\bibitem[BP73]{bardeen1973}
J.~M. Bardeen and W.~H. Press.
\newblock {Radiation fields in the Schwarzschild background}.
\newblock {\em J. Math. Phys.}, 14(1):7--19, 1973.

\bibitem[Cha75]{Chandraschw}
S.~Chandrasekhar.
\newblock On the equations governing the perturbations of the {S}chwarzschild
  black hole.
\newblock {\em Proc. Roy. Soc. A}, 343(1634):289--298, 1975.

\bibitem[Chr86]{christonull}
D.~Christodoulou.
\newblock Global solutions of nonlinear hyperbolic equations for small initial
  data.
\newblock {\em Communications on Pure and Applied Mathematics}, 39(2):267--282,
  1986.

\bibitem[Chr87]{maththeory}
D.~Christodoulou.
\newblock A mathematical theory of gravitational collapse.
\newblock {\em Commun. Math. Phys.}, 109(4):613--647, 1987.

\bibitem[Chr00]{christodoulou2016action}
D.~Christodoulou.
\newblock {\em The action principle and partial differential equations}, volume
  146 of {\em Annals of Mathematics Studies}.
\newblock Princeton University Press, Princeton, NJ, 2000.

\bibitem[Chr09]{Chr}
D.~Christodoulou.
\newblock {\em The formation of black holes in general relativity}.
\newblock EMS Monographs in Mathematics. European Mathematical Society (EMS),
  Z\"urich, 2009.

\bibitem[Civ15]{civinthesis}
D.~Civin.
\newblock {\em Stability of charged rotating black holes for linear scalar
  perturbations}.
\newblock PhD thesis, University of Cambridge, Cambridge, 2015.

\bibitem[CK93]{CK}
D.~Christodoulou and S.~Klainerman.
\newblock {\em The global nonlinear stability of the {M}inkowski space},
  volume~41 of {\em Princeton Mathematical Series}.
\newblock Princeton University Press, Princeton, NJ, 1993.

\bibitem[DHR19]{DHRteuk}
M.~Dafermos, G.~Holzegel, and I.~Rodnianski.
\newblock Boundedness and decay for the {T}eukolsky equation on {K}err
  spacetimes {I}: {T}he case {$|a|\ll M$}.
\newblock {\em Ann. PDE}, 5(1):Paper No. 2, 118, 2019.

\bibitem[DHR]{holzstabofschw}
M.~Dafermos, G.~Holzegel, and I.~Rodnianski.
\newblock The linear stability of the {S}chwarzschild solution to gravitational
  perturbations.
\newblock {\em Acta Math.}, 222(1):1--214, 2019.

\bibitem[DHRT21]{dhrtplus}
M.~Dafermos, G.~Holzegel, I.~Rodnianski, and M.~Taylor.
\newblock The non-linear stability of the {S}chwarzschild family of black
  holes.
\newblock {\em arXiv:2104.08222, preprint}, 2021.

\bibitem[DL17]{DafLuk1}
M.~Dafermos and J.~Luk.
\newblock {The interior of dynamical vacuum black holes I: The $C^0$-stability
  of the Kerr Cauchy horizon}.
\newblock {\em arXiv:1710.01722, preprint}, 2017.

\bibitem[DR05]{DRPrice}
M.~Dafermos and I.~Rodnianski.
\newblock A proof of {P}rice's law for the collapse of self-gravitating scalar
  field.
\newblock {\em Invent. Math.}, 162:381--457, 2005.

\bibitem[DR07a]{dafrod2007note}
M.~Dafermos and I.~Rodnianski.
\newblock A note on energy currents and decay for the wave equation on a
  {S}chwarzschild background.
\newblock {\em arXiv:0710.0171, preprint}, 2007.

\bibitem[DR07b]{Dafermos:2007jd}
M.~Dafermos and I.~Rodnianski.
\newblock {The wave equation on Schwarzschild-de Sitter spacetimes}.
\newblock {\em arXiv:0709.2766, preprint}, 2007.

\bibitem[DR09]{DafRodnew}
M.~Dafermos and I.~Rodnianski.
\newblock A new physical-space approach to decay for the wave equation with
  applications to black hole spacetimes.
\newblock In P.~Exner, editor, {\em XVIth International Congress on
  Mathematical Physics}, pages 421--433. World Scientific, London, 2009.

\bibitem[DR10]{DafRodsmalla}
M.~Dafermos and I.~Rodnianski.
\newblock Decay for solutions of the wave equation on {K}err exterior
  spacetimes {I}-{II}: The cases $|a| \ll m$ or axisymmetry.
\newblock {\em arXiv:1010.5132, preprint}, 2010.

\bibitem[DR11]{DafRodKerr}
M.~Dafermos and I.~Rodnianski.
\newblock {A proof of the uniform boundedness of solutions to the wave equation
  on slowly rotating Kerr backgrounds}.
\newblock {\em {Invent. Math.}}, {185 (3)}:467--559, 2011.

\bibitem[DR13]{Mihalisnotes}
M.~Dafermos and I.~Rodnianski.
\newblock {Lectures on black holes and linear waves}.
\newblock In {\em Evolution equations, Clay Mathematics Proceedings, Vol. 17},
  pages 97--205. Amer. Math. Soc., Providence, RI, 2013.

\bibitem[DRSR16]{partiii}
M.~Dafermos, I.~Rodnianski, and Y.~Shlapentokh-Rothman.
\newblock Decay for solutions of the wave equation on {K}err exterior
  spacetimes {III}: {T}he full subextremal case {$|a|<M$}.
\newblock {\em Ann. of Math. (2)}, 183(3):787--913, 2016.

\bibitem[Fan21]{fang}
A.~J. Fang.
\newblock Nonlinear stability of the slowly-rotating {K}err-de {S}itter family.
\newblock {\em arXiv:2112.07183, preprint}, 2021.

\bibitem[FM22]{faccimetcalfe}
M.~Facci and J.~Metcalfe.
\newblock Global existence for quasilinear wave equations satisfying the null
  condition.
\newblock {\em arXiv:2208.12659, preprint}, 2022.

\bibitem[Gio20]{giorgi}
E.~Giorgi.
\newblock The linear stability of {R}eissner-{N}ordstr\"{o}m spacetime: the
  full subextremal range {$|Q|<M$}.
\newblock {\em Comm. Math. Phys.}, 380(3):1313--1360, 2020.

\bibitem[Gio21]{giorgi2021carter}
E.~Giorgi.
\newblock The {C}arter tensor and the physical-space analysis in perturbations
  of {K}err-{N}ewman spacetime.
\newblock {\em arXiv:2105.14379, preprint}, 2021.

\bibitem[GK22]{Gajic_2022}
D.~Gajic and L.~M.~A. Kehrberger.
\newblock On the relation between asymptotic charges, the failure of peeling
  and late-time tails.
\newblock {\em Classical and Quantum Gravity}, 39(19):195006, aug 2022.

\bibitem[GKS20]{giorgi2020general}
E.~Giorgi, S.~Klainerman, and J.~Szeftel.
\newblock A general formalism for the stability of {K}err.
\newblock {\em arXiv:2002.02740, preprint}, 2020.

\bibitem[GKS22]{giorgietal}
E.~Giorgi, S.~Klainerman, and J.~Szeftel.
\newblock Wave equations estimates and the nonlinear stability of slowly
  rotating {K}err black holes.
\newblock {\em arXiv:2205.14808, preprint}, 2022.

\bibitem[HHV21]{Hafner:2019kov}
D.~H\"{a}fner, P.~Hintz, and A.~Vasy.
\newblock Linear stability of slowly rotating {K}err black holes.
\newblock {\em Invent. Math.}, 223(3):1227--1406, 2021.

\bibitem[Hin22]{hintz2022sharp}
P.~Hintz.
\newblock A sharp version of {P}rice's law for wave decay on asymptotically
  flat spacetimes.
\newblock {\em Communications in Mathematical Physics}, 389(1):491--542, 2022.

\bibitem[HK20]{holzegel2020note}
G.~Holzegel and C.~Kauffman.
\newblock A note on the wave equation on black hole spacetimes with small
  non-decaying first order terms.
\newblock {\em arXiv:2005.13644, preprint}, 2020.

\bibitem[HLSW20]{holzegeletalasmyptot}
G.~Holzegel, J.~Luk, J.~Smulevici, and C.~Warnick.
\newblock Asymptotic properties of linear field equations in anti--de {S}itter
  space.
\newblock {\em Comm. Math. Phys.}, 374(2):1125--1178, 2020.

\bibitem[Hol10]{holzbiax}
G.~Holzegel.
\newblock Stability and decay rates for the five-dimensional {S}chwarzschild
  metric under biaxial perturbations.
\newblock {\em Adv. Theor. Math. Phys.}, 14(5):1245--1372, 2010.

\bibitem[HS13]{holzegel2013decay}
G.~Holzegel and J.~Smulevici.
\newblock Decay properties of {K}lein--{G}ordon fields on {K}err--{A}d{S}
  spacetimes.
\newblock {\em Comm. Pure Appl. Math.}, 66(11):1751--1802, 2013.

\bibitem[HS14]{HolzSmulevici}
G.~Holzegel and J.~Smulevici.
\newblock Quasimodes and a lower bound on the uniform energy decay rate for
  {K}err-{A}d{S} spacetimes.
\newblock {\em Anal. PDE}, 7(5):1057--1090, 2014.

\bibitem[{Hun}18]{2018arXiv180303881H}
P.-K. {Hung}.
\newblock {The linear stability of the Schwarzschild spacetime in the harmonic
  gauge: odd part}.
\newblock {\em arXiv:1803.03881, preprint}, 2018.

\bibitem[HV16]{hintzvasyglobal}
P.~Hintz and A.~Vasy.
\newblock Global analysis of quasilinear wave equations on asymptotically
  {K}err--de {S}itter spaces.
\newblock {\em Int. Math. Res. Not. IMRN}, (17):5355--5426, 2016.

\bibitem[HW14]{HOLZEGEL20142436}
G.~H. Holzegel and C.~M. Warnick.
\newblock Boundedness and growth for the massive wave equation on
  asymptotically anti-de sitter black holes.
\newblock {\em Journal of Functional Analysis}, 266(4):2436--2485, 2014.

\bibitem[Joh19]{Johnson:2018yci}
T.~W. Johnson.
\newblock The linear stability of the {S}chwarzschild solution to gravitational
  perturbations in the generalised wave gauge.
\newblock {\em Ann. PDE}, 5(2):Paper No. 13, 92, 2019.

\bibitem[Keh22]{kehrberger2022case}
L.~Kehrberger.
\newblock The case against smooth null infinity {I}: Heuristics and
  counter-examples.
\newblock {\em Annales Henri Poincar{\'e}}, 23(3):829--921, 2022.

\bibitem[Kei18]{Keir:2018qzh}
J.~Keir.
\newblock {The weak null condition and global existence using the p-weighted
  energy method}.
\newblock {\em arXiv:1808.09982, preprint}, 2018.

\bibitem[Kei20]{keirevan}
J.~Keir.
\newblock Evanescent ergosurface instability.
\newblock {\em Anal. PDE}, 13(6):1833--1896, 2020.

\bibitem[Kla86]{KlNull}
S.~Klainerman.
\newblock The null condition and global existence to nonlinear wave equations.
\newblock In {\em Nonlinear systems of partial differential equations in
  applied mathematics, {P}art 1 ({S}anta {F}e, {N}.{M}., 1984)}, volume~23 of
  {\em Lectures in Appl. Math.}, pages 293--326. Amer. Math. Soc., Providence,
  RI, 1986.

\bibitem[KS20]{klainerman2020global}
S.~Klainerman and J.~Szeftel.
\newblock Global nonlinear stability of {S}chwarzschild under polarized
  perturbations.
\newblock {\em Annals of mathematics studies}, 210(210), 2020.

\bibitem[Laf22]{lafontaine}
D.~Lafontaine.
\newblock About the wave equation outside two strictly convex obstacles.
\newblock {\em Communications in Partial Differential Equations},
  47(5):875--911, 2022.

\bibitem[LR03]{LindRodweak}
H.~Lindblad and I.~Rodnianski.
\newblock The weak null condition for {E}instein's equations.
\newblock {\em C. R. Math.}, 336(11):901--906, 2003.

\bibitem[LR10]{LindRodAnn}
H.~Lindblad and I.~Rodnianski.
\newblock The global stability of {M}inkowski space-time in harmonic gauge.
\newblock {\em Ann. of Math.}, 171(3):1401--1477, 2010.

\bibitem[LT16]{lindblad2016global}
H.~Lindblad and M.~Tohaneanu.
\newblock Global existence for quasilinear wave equations close to
  {S}chwarzschild.
\newblock {\em arXiv:1610.00674, preprint}, 2016.

\bibitem[LT20]{lindtoh}
H.~Lindblad and M.~Tohaneanu.
\newblock A local energy estimate for wave equations on metrics asymptotically
  close to {K}err.
\newblock {\em Ann. Henri Poincar\'{e}}, 21(11):3659--3726, 2020.

\bibitem[LT22]{lindblad2022weak}
H.~Lindblad and M.~Tohaneanu.
\newblock The weak null condition on {K}err backgrounds.
\newblock {\em arXiv:2210.10149, preprint}, 2022.

\bibitem[Luk13]{MR3082240}
J.~Luk.
\newblock The null condition and global existence for nonlinear wave equations
  on slowly rotating {K}err spacetimes.
\newblock {\em J. Eur. Math. Soc. (JEMS)}, 15(5):1629--1700, 2013.

\bibitem[Ma20]{Ma:2017yui2}
S.~Ma.
\newblock Uniform energy bound and {M}orawetz estimate for extreme components
  of spin fields in the exterior of a slowly rotating {K}err black hole {II}:
  {L}inearized gravity.
\newblock {\em Comm. Math. Phys.}, 377(3):2489--2551, 2020.

\bibitem[Mas22a]{masaood2020scattering}
H.~Masaood.
\newblock A scattering theory for linearised gravity on the exterior of the
  {S}chwarzschild black hole {I}: {T}he {T}eukolsky equations.
\newblock {\em Comm. Math. Phys.}, 393(1):477--581, 2022.

\bibitem[Mas22b]{masaood2022scattering}
H.~Masaood.
\newblock A scattering theory for linearised gravity on the exterior of the
  {S}chwarzschild black hole {II}: {T}he full system.
\newblock {\em arXiv:2211.07462, preprint}, 2022.

\bibitem[Mav21]{mavrogiannis}
G.~Mavrogiannis.
\newblock Quasilinear wave equations on {S}chwarzschild-de {S}itter.
\newblock {\em arXiv:2111.09495, preprint}, 2021.

\bibitem[Mav22]{mavrogiannis_2022}
G.~Mavrogiannis.
\newblock {\em Decay for quasilinear wave equations on cosmological black hole
  backgrounds}.
\newblock PhD thesis, Apollo - University of Cambridge Repository, 2022.

\bibitem[Mil23]{millet2023optimal}
P.~Millet.
\newblock {Optimal decay for solutions of the Teukolsky equation on the Kerr
  metric for the full subextremal range $|a|< M$}.
\newblock {\em arXiv: 2302.06946, preprint}, 2023.

\bibitem[MMTT10]{MR2563798}
J.~Marzuola, J.~Metcalfe, D.~Tataru, and M.~Tohaneanu.
\newblock Strichartz estimates on {S}chwarzschild black hole backgrounds.
\newblock {\em Comm. Math. Phys.}, 293(1):37--83, 2010.

\bibitem[Mor68]{morawetz1968time}
C.~S. Morawetz.
\newblock Time decay for the nonlinear {K}lein--{G}ordon equation.
\newblock {\em Proc. Roy. Soc. A}, 306(1486):291--296, 1968.

\bibitem[Mos16]{Moschnewmeth}
G.~Moschidis.
\newblock The $r^p$-weighted energy method of {D}afermos and {R}odnianski in
  general asymptotically flat spacetimes and applications.
\newblock {\em Annals of PDE}, 2(1):6, May 2016.

\bibitem[Mos18]{mosxfried}
G.~Moschidis.
\newblock A proof of {F}riedman's ergosphere instability for scalar waves.
\newblock {\em Comm. Math. Phys.}, 358(2):437--520, 2018.

\bibitem[Mos22]{moschidis2018newproof}
G.~Moschidis.
\newblock A proof of the instability of {A}d{S} for the {E}instein-massless
  {V}lasov system.
\newblock {\em Inventiones Mathematicae}, pages 1--206, 2022.

\bibitem[MS05]{metsog}
J.~Metcalfe and C.~D. Sogge.
\newblock Hyperbolic trapped rays and global existence of quasilinear wave
  equations.
\newblock {\em Inventiones Mathematicae}, 159(1):75--117, 2005.

\bibitem[MZ23]{ma2023sharp}
S.~Ma and L.~Zhang.
\newblock Sharp decay for teukolsky equation in kerr spacetimes.
\newblock {\em Communications in Mathematical Physics}, 401(1):333--434, 2023.

\bibitem[NP62]{newmanpenrose}
E.~Newman and R.~Penrose.
\newblock An approach to gravitational radiation by a method of spin
  coefficients.
\newblock {\em J. Math. Phys.}, 3(3):566--578, 1962.

\bibitem[Pas19]{Pasqualotto:2017rkh}
F.~Pasqualotto.
\newblock Nonlinear stability for the {M}axwell-{B}orn-{I}nfeld system on a
  {S}chwarzschild background.
\newblock {\em Ann. PDE}, 5(2):Paper No. 19, 172, 2019.

\bibitem[PI98]{Pretorius}
F.~Pretorius and W.~Israel.
\newblock Quasispherical light cones of the {K}err geometry.
\newblock {\em Class. Quant. Grav.}, 15:2289--2301, 1998.

\bibitem[RW57]{Regge}
T.~Regge and J.~A. Wheeler.
\newblock {Stability of a Schwarzschild Singularity}.
\newblock {\em Phys. Rev.}, 108:1063--1069, 1957.

\bibitem[Sbi15]{SbierskiGauss}
J.~Sbierski.
\newblock Characterisation of the energy of {G}aussian beams on {L}orentzian
  manifolds: with applications to black hole spacetimes.
\newblock {\em Anal. PDE}, 8(6):1379--1420, 2015.

\bibitem[SR15]{SRT}
Y.~Shlapentokh-Rothman.
\newblock Quantitative mode stability for the wave equation on the {K}err
  spacetime.
\newblock {\em Ann. Henri Poincar{\'e}}, 16(1):289--345, Jan 2015.

\bibitem[SRTdC20]{RitaShlap}
Y.~Shlapentokh-Rothman and R.~Teixeira~da Costa.
\newblock {Boundedness and decay for the Teukolsky equation on Kerr in the full
  subextremal range $|a|<M$: frequency space analysis}.
\newblock {\em arXiv:2007.07211, preprint}, 2020.

\bibitem[SRTdC23]{RitaShlap2}
Y.~Shlapentokh-Rothman and R.~Teixeira~da Costa.
\newblock {Boundedness and decay for the Teukolsky equation on Kerr in the full
  subextremal range $| a|< M $: physical space analysis}.
\newblock {\em arXiv:2302.08916, preprint}, 2023.

\bibitem[Sto17]{stogin2017nonlinear}
J.~Stogin.
\newblock {\em Nonlinear Wave Dynamics in Black Hole Spacetimes}.
\newblock PhD thesis, Princeton University, 2017.

\bibitem[TdC20]{daCosta:2019muf}
R.~Teixeira~da Costa.
\newblock Mode {S}tability for the {T}eukolsky {E}quation on {E}xtremal and
  {S}ubextremal {K}err {S}pacetimes.
\newblock {\em Comm. Math. Phys.}, 378(1):705--781, 2020.

\bibitem[Teu73]{teukolsky1973}
S.~A. Teukolsky.
\newblock Perturbations of a rotating black hole. {I}. {F}undamental equations
  for gravitational, electromagnetic, and neutrino-field perturbations.
\newblock {\em Astrophysical J.}, 185:635--648, 1973.

\bibitem[Whi89]{Whiting}
B.~F. Whiting.
\newblock {Mode Stability of the Kerr black hole}.
\newblock {\em J. Math. Phys.}, 30:1301--1306, 1989.

\bibitem[WZ11]{wunschzworski}
J.~Wunsch and M.~Zworski.
\newblock Resolvent estimates for normally hyperbolic trapped sets.
\newblock {\em Ann. Henri Poincar\'{e}}, 12(7):1349--1385, 2011.

\bibitem[Yan13]{Yang2013}
S.~Yang.
\newblock Global solutions of nonlinear wave equations in time dependent
  inhomogeneous media.
\newblock {\em Archive for Rational Mechanics and Analysis}, 209(2):683--728,
  Aug 2013.

\end{thebibliography}

\end{document}